\newcommand{\qed}{\mbox{}\hspace*{\fill}\nolinebreak\mbox{$\rule{0.6em}{0.6em}$}}
\newcommand{\expect}{{\bf \mbox{\bf E}}}
\newcommand{\prob}{{\bf \mbox{\bf Pr}}}
\definecolor{gray}{rgb}{0.5,0.5,0.5}
\newcommand{\e}{{\epsilon}}
\newcommand{\B}{{\bf B}}
\newtheorem{theorem}{Theorem}
\newtheorem{lemma}[theorem]{Lemma}
\newtheorem{claim}[theorem]{Claim}
\newtheorem{corollary}[theorem]{Corollary}
\newtheorem{definition}[theorem]{Definition}
\newtheorem{property}[theorem]{Property}
\newtheorem{remark}[theorem]{Remark}
\newenvironment{proof}{{\bf Proof:}}{$\qed$\par}
\newenvironment{proofof}[1]{\noindent{\bf Proof of #1:}}{$\qed$\par}
\newenvironment{proofsketch}{{\sc{Proof Outline:}}}{$\qed$\par}
\renewcommand{\b}{{\mbox {\bf b}}}
\newcommand{\I}{\mathbf I}
\newcommand{\bool}{\{0, 1\}}
\newcommand{\F}{{\mathcal F}}
\newcommand{\wt}[1]{\widetilde{#1}}
\newcommand{\wh}[1]{\widehat{#1}}
\newcommand{\ac}[1]{\accentset{\circ}{#1}}
 \gdef\xxxmark{%
   \expandafter\ifx\csname @mpargs\endcsname\relax 
     \expandafter\ifx\csname @captype\endcsname\relax 
       \marginpar{xxx}
     \else
       xxx 
     \fi
   \else
     xxx 
   \fi}
 \gdef\xxx{\@ifnextchar[\xxx@lab\xxx@nolab}
 \long\gdef\xxx@lab[#1]#2{{\bf [\xxxmark #2 ---{\sc #1}]}}
 \long\gdef\xxx@nolab#1{{\bf [\xxxmark #1]}}
\long\gdef\xxx@lab[#1]#2{}\long\gdef\xxx@nolab#1{}%
\renewcommand{\B}{{\mathbf B}}
\renewcommand{\j}{{\mathbf j}}
\def\delequal{\asymp}
\let\orgdescriptionlabel\descriptionlabel
\renewcommand*{\descriptionlabel}[1]{%
  \let\orglabel\label
  \let\label\@gobble
  \phantomsection
  \edef\@currentlabel{#1}%
  \let\label\orglabel
  \orgdescriptionlabel{#1}%
}
\newcommand{\lbl}{\text{label}}
\newcommand{\Sp}{\Psi}
\newcommand{\Spt}{\widetilde{\Psi}}
\newcommand{\downset}{\textsc{DownSet}}
\newcommand{\weight}{\text{wt}}
\newcommand{\Ext}{\text{Ext}}
\begin{document}
\setcounter{page}{0}
\title{\Large Space Lower Bounds for Approximating Maximum Matching in the Edge Arrival Model}

\author{Michael Kapralov\\EPFL}

\maketitle              

\begin{abstract}
The bipartite matching problem in the online and streaming settings has received a lot of attention recently. The classical vertex arrival setting, for which the celebrated Karp, Vazirani and Vazirani (KVV) algorithm achieves a $1-1/e$ approximation, is rather well understood:  the $1-1/e$ approximation is optimal in both the online and semi-streaming setting, where the algorithm is constrained to use $n\cdot \log^{O(1)} n$ space. The more challenging the edge arrival model has seen significant progress recently in the online algorithms literature. For the strictly online model (no preemption) approximations better than trivial factor $1/2$ have been ruled out [Gamlath et al'FOCS'19]. For the less restrictive online preemptive model a better than $\frac1{1+\ln 2}$-approximation [Epstein et al'STACS'12]  and even a better than $(2-\sqrt{2})$-approximation[Huang et al'SODA'19] have been ruled out. 

The recent hardness results for online preemptive matching in the edge arrival model are based on the idea of stringing together multiple copies of a KVV hard instance using edge arrivals.  In this paper, we show how to implement such constructions using ideas developed in the literature on Ruzsa-Szemer\'edi graphs. As a result, we show that any single pass streaming algorithm that approximates the maximum matching in a bipartite graph with $n$ vertices to a factor better than $\frac1{1+\ln 2}\approx 0.59$ requires $n^{1+\Omega(1/\log\log n)}\gg n \log^{O(1)} n$ space. This gives the first separation between the classical one sided vertex arrival setting and the edge arrival setting in the semi-streaming model.
\end{abstract}

\newpage
\tableofcontents
\setcounter{page}{0}
\newpage


\newcommand{\duallabel}[1]{\label{#1-intro}}
\newcommand{\dualref}[1]{\ref{#1-intro}}
\newcommand{\dualeqref}[1]{\eqref{#1-intro}}

\section{Introduction}\duallabel{sec:intro}

Large datasets are common in modern data analysis, and processing them requires algorithms with space complexity sublinear in the size of the input. The streaming model of computation, originally introduced in the seminar work of~\cite{AlonMS96}, captures this setting, has received a lot of attention in the literature recently. In this paper we study the space complexity of the bipartite matching problem in the streaming model: the edges of a bipartite graph $G=(P, Q, E)$ are presented in an adversarial order as a stream, and the algorithm must output a matching $M_{ALG}\subseteq E$ at the end of the stream such that $|M_{ALG}|\geq \gamma |M_{OPT}|$ with high constant probability, for some approximation ratio $\gamma\in (0, 1]$. The algorithm is constrained to use $n\log^{O(1)} n$ space, where $n$ is the number of vertices in the input graph. This is a common assumption, and the streaming model with this space restriction is often referred to as the {\em semi-streaming} model of computation~\cite{feigenbaum2005graph}. In this model the simple greedy algorithm, which maintains a maximal matching in the graph received so far, achieves a $\gamma=\frac1{2}$ approximation by storing $O(n)$ edges (and therefore using only $O(n\log n)$ bits of space). Despite a considerable amount of research over the past decade, it is still not known whether it is possible to achieve a better than $\frac1{2}$ approximation in this model using a single pass over the stream. The best hardness result so far is due to~\cite{Kapralov13}, ruling out a $(1-1/e+\eta)$-approximation for any constant $\eta>0$ in less than $n^{1+\Omega(1/\log\log n)}$ space, and thereby showing that no semi-streaming algorithm can do significantly better than $1-1/e$. The lower bound of~\cite{Kapralov13} applies (and is tight for) a more restricted model, where vertices on one side of the input graph $G=(P, Q, E)$ arrive in the stream in an arbitrary order and reveal their edges upon arrival. This model is inspired by the classical online matching problem studied in the seminal work of Karp, Vazirani and Vazirani~\cite{KarpVV90}, where vertices on one side of a bipartite graph $G=(P, Q, E)$ arrive online, and the algorithm must match an arriving vertex irrevocably to one of its neighbors upon arrival or discard it. The competitive ratio of $1-1/e$ is achievable and tight for the online model with one sided vertex arrivals as well. The online version of the matching problem in the edge arrival setting has recently been resolved, the work of~\cite{GamlathKMSW19} showing that no strictly online (i.e., without preemption) algorithm can do better than greedy in the edge arrival model. The same question remains open for the online model with preemption, which is close to the semi-streaming model that we are interested in. Several new hardness results for this model have been shown recently~\cite{EpsteinLSW13,WangW15,DBLP:conf/soda/0002PTTWZ19},  ruling out the possibility of a $1-1/e$ approximation in the online model with preemption. In this work we extend one of these results, due to Epstein et al~\cite{EpsteinLSW13}, to the streaming setting. Specifically, our main result is
\begin{theorem}\label{thm:main}
Any single-pass streaming algorithm that finds a $(\frac1{1+\ln 2}+\eta)$-approximate matching  in an $n$-vertex bipartite graph for a constant $\eta>0$  with probability at least $1/2$ must use $n^{1+\Omega(1/\log\log n)}\gg n\log^{O(1)} n$ bits of space.
\end{theorem}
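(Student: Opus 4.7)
My plan is to reduce from a multi-party communication problem to the semi-streaming matching problem by following the template of Kapralov~(2013), but replacing the single KVV-hard instance used there with the tower construction of Epstein et al.~that gives the $1/(1+\ln 2)$ hardness bound in the online preemptive model. The main building block is a dense Ruzsa-Szemer\'edi graph: a bipartite graph on $N$ vertices with $N^{2-o(1)}$ edges whose edge set decomposes into $r=N^{1-O(1/\log\log N)}$ induced matchings of size $N^{1-O(1/\log\log N)}$. I would let each of $t=O(1/\eta)$ players hold one such graph together with a secret index into $[r]$ specifying a ``special'' induced matching, and the algorithm's task at each level will be, implicitly, to recover a constant fraction of these secret indices.

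The hard input is obtained by chaining the $t$ players' graphs in the fashion of Epstein et al.'s tower. In the $i$-th stream segment, all edges of player $i$'s Ruzsa-Szemer\'edi graph are streamed in a random order, after which a small set of bridge edges connects the endpoints of player $i$'s special matching to the left vertices of player $i+1$'s graph. The property inherited from Epstein et al.~is that the optimum matching picks the special matching at every level, while any algorithm that fails to retain a constant fraction of special matches at a constant fraction of levels ends up with a matching of size at most a $1/(1+\ln 2)+\eta$ fraction of the optimum. Crucially, between segments the algorithm is forced to discard most edges simply because its state size is sub-polynomial in $N^2$, so it effectively preempts at the segment boundaries even though it is free to look at the full current segment at once.

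The information-theoretic core of the argument would then show that, under the uniform distribution over the secret indices, identifying the special matching of a given layer requires $\Omega(N\log r)$ bits of state. This proceeds by a standard Yao-style encoding argument that exploits the induced-matching property: the edges of non-special matchings are distributed identically to those of the special matching under the random edge ordering, so the algorithm's memory really does need to encode the secret index. Summing across the $t=O(1)$ levels and plugging in $\log r=\Omega(\log N/\log\log N)$ yields the claimed $N^{1+\Omega(1/\log\log N)}$ bound, with $n=\Theta(N)$.

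The main obstacle I anticipate is controlling the interaction between levels in the streaming model, where the algorithm may examine all edges of a single level at once rather than committing edge-by-edge as in the online preemptive analysis of Epstein et al. The streaming argument must rule out the possibility that this richer view allows the algorithm to match survivors more cleverly than the online competitive ratio permits, and must also ensure that the later bridging edges do not retroactively leak information about earlier secret indices. Handling this will likely require additional randomization -- both of the secret indices and of the vertex pairing between the survivors of one level and the left side of the next -- so that participating in the chain's optimum genuinely demands having the correct special matchings committed to memory by the moment each level's stream ends.
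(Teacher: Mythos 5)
The high-level strategy you describe—implement Epstein et al.'s tower of KVV gadgets using Ruzsa--Szemer\'edi/Kapralov-style machinery, then use an information-theoretic argument at each level—is indeed the paper's approach, and you've identified the right building blocks. However, there is a genuine gap in the proposal, and two of the places where you expect trouble are exactly where your proposed fixes would fail.

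First, the mechanism for chaining the layers is wrong. You propose adding a small set of \emph{bridge edges} between the endpoints of layer $i$'s special matching and the left side of layer $i+1$. The paper instead \emph{identifies} the arriving side $S^\ell$ with the terminal subcube $T_*^{\ell-1}$ via a glueing map $\tau^\ell$; there are no bridges. This matters: bridge edges are a small, storable set that the algorithm could simply remember and use to match layer-$i$ endpoints without retaining any hard-to-remember gadget edges, and they would expand the vertex set in a way that dilutes the fraction of vertices the algorithm fails to match. The whole point of the glueing is that the ``reward'' for having stored the right edges in layer $i$ is delivered by relabeling, not by extra edges.

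Second, you anticipate needing ``additional randomization\dots of the vertex pairing between the survivors of one level and the left side of the next.'' The paper explicitly warns that a \emph{random} bijection from $S^\ell$ to $T_*^{\ell-1}$ destroys the argument: the $\frac{1}{1+\ln 2}$ bound requires a vertex cover that is iterated through all levels, and that cover is defined by a coordinate-aligned (sub)rectangle structure. A random pairing shreds this structure, so the bound you would prove degrades back to $1-e^{-1}$ (one level's worth) or worse. What actually makes the proof go through is a very specific, deterministic glueing map $\tau^\ell = \Pi_k \circ \rho_k$ built from a densifying map $\rho_k$ and a subcube permutation map $\Pi_k$, chosen precisely so that subcubes map to subcubes and so that the ``predecessor map'' $\nu_{\ell,j}$ that traces the cover backward through levels preserves rectangle structure. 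Designing $\tau^\ell$, proving it only uses a constant number of coordinates/directions (so as not to leak information about the special direction at any level), and showing that the iterated vertex cover has size $\frac{1}{1+\ln 2}|P|+o(|P|)$ is the bulk of the paper; the proposal treats it as a routine obstacle to be ``handled with randomization,'' which goes in the wrong direction.

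Finally, your accounting is off: $r$ induced matchings with $\log r=\Omega(\log N/\log\log N)$ gives $N\log r=N\cdot\mathrm{polylog}(N)$, not $N^{1+\Omega(1/\log\log N)}$. The actual bound comes from the \emph{number} of candidate directions per phase (roughly $|\mathcal F|/(KL)=2^{\Omega(\e^2 n)}=N^{\Omega(1/\log\log N)}$ when $m=\mathrm{poly}(n)$, $N=m^n$), multiplied by $N$; the entropy argument is not about recovering the secret index but about the mutual information between the algorithm's state and an explicit Bernoulli subsampling $X^\ell_{k,\j}$ of the gadget's vertices, conditioned on the realized special direction. These are related but not the same quantity, and the encoding must be set up against the subsampling rather than against the index.
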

This gives the first separation between the classical one sided vertex arrival setting and the edge arrival setting in the semi-streaming model. We note that the best hardness result for online preemptive matching at the moment is a $2-\sqrt{2}$-hardness, due to~\cite{DBLP:conf/soda/0002PTTWZ19}. Our techniques in this paper can probably be extended to their instance, but we prefer to use the earlier instance of~\cite{EpsteinLSW13} to simplify exposition.

\subsection{Related Work}
Over the past decade, matchings have been extensively studied in the context of streaming and related settings. The prior work closest to ours is the aforementioned $1-1/e$ lower bound of~\cite{Kapralov13} (see also~\cite{GoelKK12} and~\cite{FischerLNRRS02}). Strong lower bounds for approximating matchings in the sketching model have been proposed in~\cite{AssadiKLY16,AssadiK17}. Multipass lower bounds for exact matching computation are given in~\cite{GuruswamiO16}.

Good approximations using a small number of passes have been presented in~\cite{KaleT17}, and algorithmic results on the weighted version of the problem are given in~\cite{CrouchS14,PazS17}. Besides the most stringent adversarial edge arrival model, the relaxed random order streaming model has seen a lot of attention, where small space approximations to matching size have been given~\cite{KapralovKS14,CormodeJMM17,MonemizadehMPS17,KapralovMNT20}. The problem of approximating the size of the maximum matching in adversarially ordered streams has also received significant attention in the literature:\cite{EsfandiariHLMO15,BuryS15,AssadiKL17,V18,BuryGMMSVZ19,McGregorV16,ChitnisCEHMMV16,EsfandiariHM16}.

\section{Technical overview}\label{sec:tech-overview}

We start by defining $\frac1{1+\ln 2}$-hard instance from~\cite{EpsteinLSW13}. We first define an $\alpha$-KVV gadget $G=(S, T, E)$. 

\begin{definition}[$\alpha$-KVV gadget] We define a $\alpha$-KVV gadget as a bipartite graph $G=(S, T, E)$ with $|T|=N$ vertices on the $T$ side of the bipartition and $|S|=\alpha\cdot N$ vertices on the $S$ side of the biparitition as follows. We think of vertices in $T$ as being numbered with integers in $[N]=\{0, 1, \ldots, N-1\}$ and vertices in $S$ as being numbered with integers in $[\alpha\cdot N]=\{0, 1,\ldots, \alpha\cdot N-1\}$. The graph $G$ is parameterized by a permutation $\pi: [N]\to [N]$ of vertices in $T$: every vertex $j\in S$ is connected to all vertices $i\in T$ such that $\pi(i)\geq j$.
\end{definition} 

Note that the original $(1-1/e)$-hard instance of Karp, Vazirani and Vazirani~\cite{KarpVV90} is a $1$-KVV instance as above with the permutation $\pi$ chosen uniformly at random. A version of this construction was implemented using techniques from the literature on Ruzsa-Szemer\'edi  graphs in~\cite{Kapralov13}, showing that any algorithm that finds a better than $(1-1/e)$-approximation using a single pass over the input stream with high constant probability must use $n^{1+\Omega(1/\log\log n)}$ bits of space. In this work we show how to combine such implementations of a KVV-gadget to achieve the stronger hardness result of $\frac1{1+\ln 2}$ for single pass streaming algorithms in the more general edge arrival model. To achieve our result, we implement the construction of~\cite{EpsteinLSW13}, which we now describe.

\paragraph{Combining $1/2$-KVV gadgets: the hard instance of~\cite{EpsteinLSW13}}.  The hard instance of~\cite{EpsteinLSW13} uses a combination of $L$ independent copies of the $1/2$-KVV gadget for a large constant $L$ as follows. For $\ell\in [L]=\{0, 1,\ldots, L-1\}$\footnote{We use the notation $[a]=\{0,1,\ldots, a-1\}$ throughout the paper.} let $G^\ell=(S^\ell, T^\ell, E^\ell)$ be an independent $1/2$-KVV gadget. Let $\pi^\ell$ denote the $\ell$-th permutation, selected independently and uniformly at random.  For every $\ell\in [L]$ define the {\em terminal subset} of $T^\ell$ as the set of vertices that are assigned the largest values by $\pi^\ell$. Namely, let
$$
T_*^\ell=\{i\in T^\ell: \pi(i)\geq N/2\}.
$$
Note that $|T_*^\ell|=N/2$ for every $\ell$ (we assume that $N$ is even).  The actual input graph $\wh{G}=(P, Q, \wh{E})$ of ~\cite{EpsteinLSW13} is defined as follows. First, one lets 
$$
P=\bigcup_{\text{even~}\ell\in [L]} T^\ell
$$
and 
$$
Q=S^0\cup\bigcup_{\text{odd~}\ell\in [L]} T^\ell.
$$
For every $\ell\in [L], \ell>1$ one lets 
$$
\tau^\ell: S^\ell \to T_*^{\ell-1}
$$
denote an arbitrary bijective mapping between the $S^\ell$ side of the biparition of $G^\ell$ and the terminal subset $T_*^{\ell-1}$ of $G^{\ell-1}$ -- we refer to such maps as {\em glueing maps}.    See Fig.~\dualref{fig:g-ell} for an illustration. Let $\tau^0$ denote the identity map for convenience. The mapping $\tau^\ell, \ell\in [L], \ell>0,$ is naturally extended to edges $e=(u, v)\in E^\ell$, where $u\in S^\ell$ and $v\in T^\ell$ by letting
$$
\tau^\ell(e)=(\tau^\ell(u), v).
$$
In other words, one simply applies the map $\tau^\ell$ to the $S^\ell$ endpoint of $e$.  The edge set $\wh{E}$ of $\wh{G}$ is now defined as follows: for every $\ell\in [L]$ one adds, for every edge $e\in E^\ell$, the edge $\tau^\ell(e)$ to $\wh{E}$. In other words, for every $\ell\in [L], \ell>0,$ one simply grows the $1/2$-KVV instance $G^\ell$ with the $S^\ell$ side of the bipartition identified with the terminal subset $T_*^\ell$ of the previous instance $G^{\ell-1}$ -- see Fig.~\dualref{fig:g-hat} for an illustration. In~\cite{EpsteinLSW13} the authors show that no online preemptive algorithm can find a better than $(\frac1{1+\ln 2}+o(1))$-competitive matching on this instance in expectation (and with nontrivial probability).

\begin{figure}[H]
	\begin{center}
		\begin{tikzpicture}[scale=1.2]

		\draw (-6, 0) rectangle (-3, 1);
		\draw (-2, 0) rectangle (+1, 1);
		\draw (+2, 0) rectangle (+5, 1);
		
		\draw (-6, -2) rectangle (-4.5, -1);
		\draw (-2, -2) rectangle (-0.5, -1);
		\draw (+2, -2) rectangle (+3.5, -1);

		\draw[fill=gray!20] (-4.5, 0) rectangle (-3, +1);
		\draw[fill=gray!20] (-0.5, 0) rectangle (1, +1);
		\draw[fill=gray!20] (+3.5, 0) rectangle (+5.0, +1);
		
		\draw (-5.75, -2+0.5) -- (-5.75, 0.5);
		\draw (-5.25, -2+0.5) -- (-5.25, 0.5);
		\draw (-4.75, -2+0.5) -- (-4.75, 0.5);		

		\draw (-5.75, -2+0.5) -- (-5.75+1, 0.5);
		\draw (-5.25, -2+0.5) -- (-5.25+1, 0.5);
		\draw (-4.75, -2+0.5) -- (-4.75+1, 0.5);		
		
		\draw (-5, 1.5) node {\large $T^{\ell-1}$};
		\draw (-3.5, 1.5) node {\large $T_*^{\ell-1}$};		
		\draw (-5, -2.5) node {\large $S^{\ell-1}$};		

		\draw[->] (-6.5, -1.2) -- (-7, -0.2);
		\draw (-7, -1.2) node {\large $\tau^{\ell-1}$};

		\draw (-5.75+4, -2+0.5) -- (-5.75+4, 0.5);
		\draw (-5.25+4, -2+0.5) -- (-5.25+4, 0.5);
		\draw (-4.75+4, -2+0.5) -- (-4.75+4, 0.5);		

		\draw (-5.75+4, -2+0.5) -- (-5.75+1+4, 0.5);
		\draw (-5.25+4, -2+0.5) -- (-5.25+1+4, 0.5);
		\draw (-4.75+4, -2+0.5) -- (-4.75+1+4, 0.5);		

		\draw (-5+4, 1.5) node {\large $T^{\ell}$};
		\draw (-3.5+4, 1.5) node {\large $T_*^{\ell}$};
		\draw (-5+4, -2.5) node {\large $S^{\ell}$};		

		\draw[->] (-6.5+4, -1.2) -- (-7+4, -0.2);
		\draw (-7+4, -1.2) node {\large $\tau^{\ell}$};

		\draw (-5.75+8, -2+0.5) -- (-5.75+8, 0.5);
		\draw (-5.25+8, -2+0.5) -- (-5.25+8, 0.5);
		\draw (-4.75+8, -2+0.5) -- (-4.75+8, 0.5);		

		\draw (-5.75+8, -2+0.5) -- (-5.75+1+8, 0.5);
		\draw (-5.25+8, -2+0.5) -- (-5.25+1+8, 0.5);
		\draw (-4.75+8, -2+0.5) -- (-4.75+1+8, 0.5);		

		\draw (-5+8, 1.5) node {\large $T^{\ell+1}$};
		\draw (-3.5+8, 1.5) node {\large $T_*^{\ell+1}$};
		\draw (-5+8, -2.5) node {\large $S^{\ell+1}$};		
		
		\draw[->] (-6.5+8, -1.2) -- (-7+8, -0.2);
		\draw (-7+8, -1.2) node {\large $\tau^{\ell+1}$};	
		\draw[->] (-6.5+12, -1.2) -- (-7+12, -0.2);
		\draw (-7+12, -1.2) node {\large $\tau^{\ell+2}$};

\if 0
		\draw (-6, 0) rectangle (+6, 1);		
		\draw[fill=gray!20] (-6, 0) rectangle (-4, 1);		
		\draw[fill=gray!20] (0, 0) rectangle (+2, 1);
		\draw (-7, 0.5) node {\Large $x=$};		
		\draw (-5, 0.5) node {\Large $a$};
		\draw (1, 0.5) node {\Large $b$};
		\draw (3, 0.5) node {\Large $c$};
		
		\draw (-5, 1.5) node {\Large $\overbrace{\phantom{a+b+c}}^{I'}$};
		\draw (+1, 1.5) node {\Large $\overbrace{\phantom{a+b+c}}^{I}$};		

		\draw (-6, 0-3) rectangle (+6, 1-3);
		\draw[fill=gray!20] (-6, 0-3) rectangle (-4, 1-3);
		\draw[fill=gray!20] (0, 0-3) rectangle (+2, 1-3);
		\draw (-7, 0.5-3) node {\Large $\Pi_k(x)=$};		
		\draw (-5, 0.5-3) node {\Large $b$};
		\draw (1, 0.5-3) node {\Large $M(a)$};
		\draw (3, 0.5-3) node {\Large $c$};
		
		\draw (-5, -3-0.5) node {\Large $\underbrace{\phantom{a+b+c}}_{I'}$};
		\draw (+1, -3-0.5) node {\Large $\underbrace{\phantom{a+b+c}}_{I}$};		
		
		\draw[->] (-3, -0.2) -- (-3, -1.8);
		\draw[->] (-2, -0.2) -- (-2, -1.8);
		\draw[->] (-1, -0.2) -- (-1, -1.8);		
		\draw[->] (+3, -0.2) -- (+3, -1.8);		
		\draw[->] (+4, -0.2) -- (+4, -1.8);	
		\draw[->] (+5, -0.2) -- (+5, -1.8);	

		\draw[->] (+0, -0.2) -- (+0-6, -1.8);	
		\draw[->] (+1, -0.2) -- (+1-6, -1.8);		
		\draw[->] (+2, -0.2) -- (+2-6, -1.8);	
		
		\draw (-5, -1) node {bijection $\eta$};
\fi 
		\end{tikzpicture}
		\caption{Illustration of the basic gadgets $G^\ell=(S^\ell, T^\ell, E^\ell)$ and the glueing maps $\tau^\ell: S^\ell\to T_*^{\ell-1}$. The terminal sets in each gadget are shaded.} 	\duallabel{fig:g-ell}
	\end{center}
	
\end{figure}
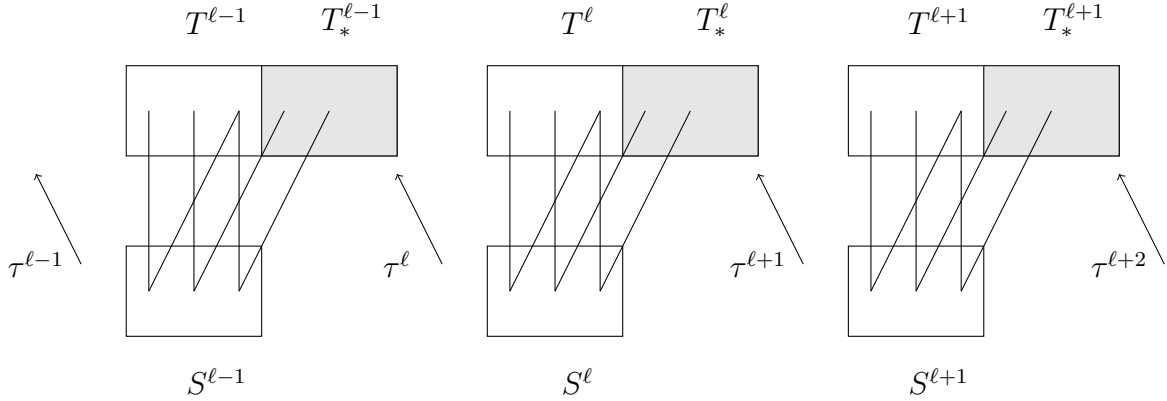

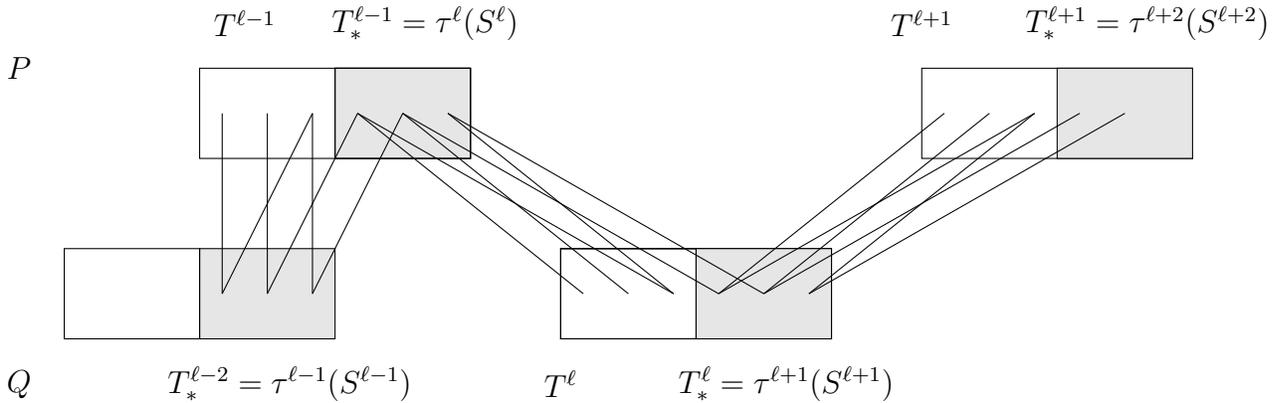
\begin{figure}[H]
	\begin{center}
		\begin{tikzpicture}[scale=1.2]

 \begin{scope}[shift={(0,-2)}]


		\draw (-6, 0) rectangle (-3, 1);
		\draw[fill=gray!20] (-6, -2) rectangle (-4.5, -1);
		\draw (-7.5, -2) rectangle (-6, -1);
		\draw[fill=gray!20] (-4.5, 0) rectangle (-3, +1);
		\draw (-5.75, -2+0.5) -- (-5.75, 0.5);
		\draw (-5.25, -2+0.5) -- (-5.25, 0.5);
		\draw (-4.75, -2+0.5) -- (-4.75, 0.5);		

		\draw (-5.75, -2+0.5) -- (-5.75+1, 0.5);
		\draw (-5.25, -2+0.5) -- (-5.25+1, 0.5);
		\draw (-4.75, -2+0.5) -- (-4.75+1, 0.5);		
		
		\draw (-5.5, 1.5) node {\large $T^{\ell-1}$};
		\draw (-3.5, 1.5) node {\large $T_*^{\ell-1}=\tau^{\ell}(S^\ell)$};	
		\draw (-5, -2.5) node {\large $T_*^{\ell-2}=\tau^{\ell-1}(S^{\ell-1})$};

		\draw (-2-2.5, 0) rectangle (-0.5-2.5, +1);
		
		\draw (-2, -2) rectangle (+1, -1);
		\draw (-2, -2) rectangle (-0.5, -1);		
		\draw[fill=gray!20] (-0.5, -2) rectangle (1, -1);
		
		\draw (-5.75+4-2.5, +0.5) -- (-5.75+4, -2+0.5);
		\draw (-5.25+4-2.5, +0.5) -- (-5.25+4, -2+0.5);
		\draw (-4.75+4-2.5, +0.5) -- (-4.75+4, -2+0.5);		

		\draw (-5.75+4-2.5, +0.5) -- (-5.75+1+4, -2+0.5);
		\draw (-5.25+4-2.5, +0.5) -- (-5.25+1+4, -2+0.5);
		\draw (-4.75+4-2.5, +0.5) -- (-4.75+1+4, -2+0.5);

		\draw (-5+4-1, -2.5) node {\large $T^{\ell}$};
		\draw (-3.5+4, -2.5) node {\large $T_*^{\ell}=\tau^{\ell+1}(S^{\ell+1})$};

		\draw (+2, 0) rectangle (+5, 1);
		\draw (+2-4, -2) rectangle (+3.5-4, -1);		

		\draw[fill=gray!20] (+3.5, 0) rectangle (+5.0, +1);
		
		\draw (-5.75+8-2.5, -2+0.5) -- (-5.75+8, 0.5);
		\draw (-5.25+8-2.5, -2+0.5) -- (-5.25+8, 0.5);
		\draw (-4.75+8-2.5, -2+0.5) -- (-4.75+8, 0.5);		

		\draw (-5.75+8-2.5, -2+0.5) -- (-5.75+1+8, 0.5);
		\draw (-5.25+8-2.5, -2+0.5) -- (-5.25+1+8, 0.5);
		\draw (-4.75+8-2.5, -2+0.5) -- (-4.75+1+8, 0.5);		

		\draw (-5+8-1, 1.5) node {\large $T^{\ell+1}$};
		\draw (-3.5+8, 1.5) node {\large $T_*^{\ell+1}=\tau^{\ell+2}(S^{\ell+2})$};
		\draw (-8, -2.5) node {\large $Q$};
		\draw (-8, +1) node {\large $P$};		
		
            \end{scope}		

		\end{tikzpicture}
		\caption{Illustration of the final graph $\wh{G}=(P, Q, \wh{E})$ obtained by gluing together basic gadgets $G^\ell=(S^\ell, T^\ell, E^\ell)$ from Fig.~\dualref{fig:g-ell} using maps $\tau^\ell$.}	\duallabel{fig:g-hat}
	\end{center}
	
\end{figure}

\paragraph{Our construction: a geometric implementation of the instance of~\cite{EpsteinLSW13}.} We present our lower bound construction in two steps. First, two illustrate our construction, we consider a simpler model than streaming, namely the {\em generalized online} model that we define below. The intuition behind the model is simple. In this model the edges of the graph are presented to the algorithm as a stream, and the algorithm must output a large matching at the end of the stream. Upon receiving an edge in the stream the algorithm can arbitrarily choose to either remember the edge or discard it (in this case the algorithm may not use the edge as part of the final output matching), and is constrained to remember at most $s$ edges overall, i.e. the algorithm can only use edges that it remember when they arrived. The algorithm is not allowed to forget edges, i.e. the budget of $s$ bounds the total number of edges remembered upon their arrival. The formal definition of the model is given in

\begin{definition}[Generalized online algorithms]\duallabel{def:gen-online} In the {\em generalized online} setting the algorithm is presented with edges of a graph $G=(P, Q, E)$ as a stream of edges, and at every point must either commit to remembering the edge that has been presented to it, or discard the edge irrevocably. The total number of edges that the algorithm can remember is bounded by a parameter $s$. At the end of the stream the algorithm must output a matching $M_{ALG}$ in the subset of edges that it remembered upon their arrival.
\end{definition}
This setting is easier than streaming, where the algorithm may maintain any small state. On the other hand, this setting is quite a bit more general than the online model, as the the algorithm may maintain significantly more edges than are needed to find a matching. It is not hard to see that this power renders standard hard instances for online algorithms very easy. In particular, 
\begin{lemma}\duallabel{lm:alg-gen-online}
For every constant $\e\in (0, 1)$ there exists a generalized online algorithm that remembers $s=O(n/\e^2)$ edges and achieves a $(1-\e)$-approximation on the instance above.
\end{lemma}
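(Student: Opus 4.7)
The plan is to give a per-gadget edge-sampling algorithm that exploits the known block structure of $\wh{G}$. The algorithm knows the gluing maps $\tau^\ell$ and the sets $T_*^\ell$ from the fixed vertex labels, so upon arrival of an edge $e$ it identifies the gadget index $\ell$ of $e$ and its $T^\ell$-endpoint. If that endpoint lies in $T_*^\ell$ the edge is discarded; this is safe because the natural maximum matching $M_{OPT}$ of $\wh{G}$ pairs each $j\in S^\ell$ with $(\pi^\ell)^{-1}(j)\in T^\ell\setminus T_*^\ell$ and hence uses no edge incident to any $T_*^\ell$, so $|M_{OPT}|=LN/2$ is realized using only the subgraphs $H^\ell:=G^\ell[S^\ell,\,T^\ell\setminus T_*^\ell]$. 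Among the surviving edges, the algorithm remembers each independently with probability $p=C/(\e^2 N)$ for a sufficiently large absolute constant $C$, and at the end of the stream outputs a maximum matching of the stored edges.

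Storage bookkeeping is immediate. After relabeling $T^\ell\setminus T_*^\ell$ by $\pi^\ell$-rank, $H^\ell$ is isomorphic to the triangular bipartite graph on $[m]\times[m]$ with $m=N/2$ (left vertex $j$ adjacent to right vertex $i$ iff $i\le j$), which has at most $m(m+1)/2$ edges; hence the expected total storage is at most $L\cdot p\,m(m+1)/2 = O(n/\e^2)$, and a Chernoff bound gives the same bound with high probability (the algorithm aborts on the rare overflow). The key claim is that the $p$-sampled subgraph $\tilde H^\ell$ contains a matching of size $\ge (1-\e)m$ with probability $\ge 1-1/L^2$. I would prove this via the greedy procedure that processes left vertices in order $j=0,1,\dots,m-1$ and matches each $j$ to any still-unmatched sampled neighbor in $[0,j]$. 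Writing $U_j$ for the number of unmatched left vertices after step $j-1$ (equivalently, the number of unmatched right vertices in $[0,j-1]$), the process satisfies $U_0=0$ and $U_{j+1}=U_j+\mathbf{1}\{j\text{ not matched}\}$ with $\mathbb{E}[U_{j+1}-U_j\mid U_j]=(1-p)^{U_j+1}$, since the available set at step $j$ has size exactly $U_j+1$ and the sampled neighborhood of $j$ is independent of the past. The ODE relaxation $dU/dj=(1-p)^{U+1}$ integrates to $U(m)=O\!\left(\tfrac{\log(1+pm)}{p}\right)=O(\e^2 m\log(1/\e))\le \tfrac12\e m$, and Azuma's inequality applied to the bounded increments of $U_j$ upgrades this to $U_m\le \e m$ with probability $\ge 1-1/L^2$ (valid once $N\gg \log L/\e^2$). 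A union bound over $\ell\in[L]$ yields the matching in every gadget simultaneously.

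The per-gadget matchings $M^\ell$ combine into a single matching of $\wh{G}$: $M^\ell$ touches only $\tau^\ell(T_*^{\ell-1})\cup(T^\ell\setminus T_*^\ell)$, a vertex set disjoint from the one used by $M^{\ell\pm 1}$ (because $T^\ell\setminus T_*^\ell$ is disjoint from $T_*^\ell=\tau^{\ell+1}(S^{\ell+1})$), so $\bigcup_\ell M^\ell$ is a valid matching of size $\ge L(1-\e)m=(1-\e)|M_{OPT}|$ consisting entirely of remembered edges. The main obstacle I anticipate is the matching-existence step for $\tilde H^\ell$: because the effective sampled degree of left vertex $j$ is $p(j+1)=O(1/\e)$ for the first $\Theta(\e^2 m)$ vertices, one has to argue carefully that the inevitable $\Theta(\e m)$ loss stays localized at the low-index tail of the triangle and does not cascade via Hall-type deficits deeper into the bulk of the matching, which is precisely what the self-correcting drift $(1-p)^{U+1}$ of the greedy analysis above is designed to capture.
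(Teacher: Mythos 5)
Your approach is genuinely different from the paper's. The paper avoids any per-gadget structure in the argument: it simply samples every edge at rate $C/(\e^2 N)$ and then shows, via a direct union bound over pairs $(A,B)$ with $A\subseteq S^\ell$, $B\subseteq T^\ell\setminus T_*^\ell$, that no small vertex cover for $E'\cap(S^\ell\times(T^\ell\setminus T_*^\ell))$ can survive the sampling — every violating Hall witness set is hit by $\Omega((\e N)^2)$ edges, so a union bound over $2^{2N}$ subsets suffices. Your greedy-drift analysis is a more ``local'' route that reaches the same conclusion; what the paper's Hall/union-bound argument buys is robustness (it is order-oblivious and gives a one-line union bound), while your drift argument would give a cleaner quantitative picture of where the $\e m$ loss sits in the triangle, at the price of a more delicate concentration step.

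There are two points you need to repair. First, and more substantively, you cannot discard edges incident to $T_*^\ell$: the set $T_*^\ell$ is not determined by the fixed vertex labels. In the toy instance $T_*^\ell$ is exactly $\{i\in T^\ell:\pi^\ell(i)\ge N/2\}$ (equivalently, the rectangle cut out by the random coordinate vector $J^\ell$), and $\pi^\ell$ (resp.\ $J^\ell$) is the hidden randomness of the input distribution — it is only revealed \emph{after} $G^\ell$ has finished arriving, via the edges of $G^{\ell+1}$ incident to $\tau^{\ell+1}(S^{\ell+1})=T_*^{\ell-1}$. So the filtering step is not implementable in the generalized online model. Fortunately it is also unnecessary: sampling \emph{all} of $E^\ell$ at rate $p$ still stores $O(N/\e^2)$ edges per gadget (since $|E^\ell|=\Theta(N^2)$), and the restriction of the sample to the triangular subgraph $H^\ell$ has exactly the same $p$-sampling law, so your matching-existence analysis is unaffected once you delete the discarding step from the algorithm and keep it only in the analysis.

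Second, the concentration step is under-argued. The process $U_j$ is not a martingale, so ``Azuma applied to the bounded increments of $U_j$'' does not directly give a bound on $U_m$; the Doob decomposition $U_m=M_m+\sum_{j<m}(1-p)^{U_j+1}$ leaves the predictable part bounded only by $m(1-p)$, which is useless. The clean fix is the stopping-time version of your own observation: let $T=\lceil c\log(1/\e)/p\rceil$ and $\sigma=\min\{j:U_j\ge T\}$; for $j\ge\sigma$ the conditional increment probability is at most $(1-p)^{T}\le\e^{c}$, so $\sum_{j\ge\sigma}(U_{j+1}-U_j)-\e^{c}(j-\sigma)$ is a supermartingale with $\{0,1\}$ increments, and Azuma applied to \emph{that} gives $U_m\le T+\e^{c}m+O(\sqrt{m\log L})\le\e m$ with probability $1-1/L^2$ once $c$ is a suitable constant and $N$ is large. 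This is exactly the ``self-correcting drift'' intuition you wrote down, but it needs the stopping-time restart to become a valid supermartingale argument.
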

The algorithm is simple -- one simply maintains a random sample of $O(n/\e^2)$ edges of the input graph and outputs a maximum matching in the sample at the end. We include the proof of Lemma~\dualref{lm:alg-gen-online} in Appendix~\ref{app:gen-online} for completeness. A similar claim is true for the $1$-KVV instance -- simply maintaining a uniform sample of $O(n/\e^2)$ edges of the input graph will result in a $(1-\e)$-approximation.

To illustrate the techniques that lead to a proof of our main result (Theorem~\ref{thm:main}) in an easier setting, in Section~\ref{sec:toy-construction} we prove the following: 

\begin{theorem}\label{thm:main-simple}
There exists a distribution $\mathcal{D}$ on input graphs $G=(V, E)$ with $n$ vertices such that any generalized algorithm that finds a $(\frac1{1+\ln 2}+\eta)$-approximation to the maximum matching in a graph $G$ sampled from $\mathcal{D}$ with probability at least $0.9$ must remember $\Omega(n\log n)$ edges.
\end{theorem}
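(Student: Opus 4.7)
The plan is to construct a hard distribution $\mathcal{D}$ by modifying the Epstein instance of $L = \Theta(\log n)$ $1/2$-KVV gadgets in a way that forces $\Omega(N \log N)$ remembered edges per gadget, then summing across the $L$ levels to obtain the $\Omega(n \log n)$ bound. I would set $L$ to be the smallest value of $\Theta(\log n)$ such that the Epstein amortization yields a $(\frac{1}{1+\ln 2} + \eta)$-hardness, and let $N = \Theta(n/L)$ be the per-gadget size.

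I would construct $\mathcal{D}$ as follows. For each $\ell \in [L]$, I draw a uniformly random permutation $\pi^\ell$ on $[N]$ and a uniformly random bijective glueing map $\tau^\ell : S^\ell \to T_*^{\ell-1}$, independently across levels. The gadgets are glued together exactly as in the Epstein construction, and the edges are presented gadget by gadget in order $\ell = 0, 1, \ldots, L-1$. Critically, I would modify each $1/2$-KVV gadget to make each $S$-vertex indistinguishable from a "fan" of $\Theta(\log N)$ potential matching partners in the non-terminal part of $T^\ell$ — e.g.\ by padding each $S$-vertex's KVV neighborhood with random decoy edges — so that the $O(n/\eps^2)$-edge uniform sampling strategy guaranteed by Lemma~\dualref{lm:alg-gen-online} for the plain Epstein instance is broken. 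Within each gadget, I present edges in an adversarial KVV-style order that guarantees the algorithm cannot tell, at the time of commitment, which of a vertex's candidate partners will be ``active'' in OPT.

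The heart of the argument is a per-gadget lemma: conditioned on the algorithm's state at the start of gadget $G^\ell$, any strategy that remembers $o(N \log N)$ edges of $G^\ell$ must output a matching that covers at most $(1 - \Omega(1)) \cdot N/2$ of the $S^\ell$ vertices in expectation. I would prove this by an anti-concentration argument that exploits the independence of $\pi^\ell$ and the decoy randomness from the algorithm's prior state: the ``active'' OPT partner of each $S^\ell$-vertex is effectively chosen independently and uniformly among $\Theta(\log N)$ candidates, so the probability that the algorithm retains at least one edge to this active partner is linear in the ratio of stored-to-total edges at that vertex. A coupon-collector-style calculation then shows that covering $(1-\eps) \cdot N/2$ of the $S^\ell$-vertices requires $\Omega(\log N)$ stored edges per vertex on average, totaling $\Omega(N \log N)$ per gadget.

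Finally, given the per-gadget lemma, I would use the Epstein-style amortized analysis to compose per-level matching losses into a global $\frac{1}{1+\ln 2}$ approximation barrier, and sum the $\Omega(N \log N)$ edges over the $L$ gadgets to get $\Omega(NL \log N) = \Omega(n \log n)$, using $n = \Theta(NL)$ and $L = \Theta(\log n)$. The main obstacle will be designing the decoy structure so that the per-gadget lemma holds uniformly in the algorithm's prior state: because a generalized online algorithm is strictly stronger than a purely online one — it may maintain many candidates per vertex and defer matching decisions to the end of the stream — we must ensure that no accumulated state from earlier gadgets can circumvent the randomness of $\pi^\ell$ at level $\ell$, and we must verify that the decoy edges do not create matching opportunities that offset the forced per-level losses when we propagate them through the Epstein amortization.
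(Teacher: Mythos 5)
Your high-level plan — modify the Epstein instance so that each $S$-vertex's ``true'' partner is hidden among $\Theta(\log N)$ indistinguishable candidates, bound the algorithm's useful stored edges by a pigeonhole argument, and compose across gadgets — is the right shape, but two of your concrete design choices are ones the paper explicitly identifies and engineers around, and your proposal does not resolve them.

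First, the \emph{random bijective glueing maps} $\tau^\ell$ will not work, and the paper says so explicitly in Section~\ref{sec:tech-overview}: a uniformly random bijection from $S^\ell$ to $T_*^{\ell-1}$ ``will certainly destroy the delicate coordinate structure of the good vertex cover.'' The reason is that the upper bound on $|M_{ALG}|$ comes from a vertex cover built from (closures of) the terminal subcubes and their downsets, pulled back through the glueing maps by the predecessor map $\nu$ (Definition~\ref{def:nu}). For this to give the $\frac{1}{1+\ln 2}$ bound, $\tau^\ell$ must map the structured sets of one gadget to structured sets of the next — concretely, it must map subsampled subcubes of $S'_k$ to subcubes of $T_*$, which is Lemma~\ref{lm:pi-prop}. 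The paper constructs $\tau^\ell$ as a composition of a \emph{densifying map} $\rho_k$ and a \emph{subcube permutation map} $\Pi_k$ that touches only $O(K)$ designated coordinates, precisely so that the vertex-cover structure survives composition over all $L$ levels. A random bijection destroys this, and without it there is no way to show that the decoy edges you add do not themselves create a large alternative matching; that is exactly the ``main obstacle'' you flag but do not address.

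Second, your ``decoy edges padding each $S$-vertex's neighborhood'' are doing the job that the paper's geometric hypercube encoding does for free: the gadget's edge set $E_k$ is the union of $|\ac{\B}_k| = \Theta(\log N)$ edge sets $E_{k,j}$, one per coordinate direction $j$, which are pairwise disjoint (Lemma~\ref{lm:disjoint}), all a priori symmetric from the algorithm's viewpoint, and of which only the one in direction $J_k$ ever crosses from $\downset(T_*)$ to $T\setminus T_*$ (Lemma~\ref{lm:key-intro}). The $\Theta(\log N)$ factor comes from the dimension $n=\Theta(\log N)$ of the hypercube, not from ad hoc padding. Your anti-concentration/coupon-collector argument is morally the same averaging step as~\eqref{eq:293ht92hg-intro}, but you are left with no mechanism to prove the decoys are ``inert'' with respect to matching size, which is the whole content of Lemmas~\ref{lm:special-edges} and~\ref{lm:vertex-cover}. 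Also, a minor but telling parameter issue: the paper takes $L$ to be a large \emph{constant} (so $N = \Theta(n)$ and $\log N = \Theta(\log n)$ directly), whereas you take $L = \Theta(\log n)$ — the arithmetic happens to still give $\Omega(n \log n)$, but the approximation ratio $\frac{1}{1+\ln 2} + O(1/L)$ only requires $L$ to exceed a constant depending on $\eta$, so this inflation buys you nothing and is a symptom of a different (incorrect) mental model of where the $\log$ factor originates.
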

Note that even though the lower bound of $\Omega(n\log n)$ edges is not very strong from the standpoint of streaming algorithms, the result is interesting in light of Lemma~\dualref{lm:alg-gen-online}. A major advantage of this setting is that it {\bf (a)} allows for a rather clean construction and {\bf (b)} illustrates all the central ideas of our main construction that leads to a proof of Theorem~\ref{thm:main}.

In what follows we give an outline of the proof of Theorem~\ref{thm:main-simple}. The construction consists of two pieces. First, we define a construction of a basic gadget $G^\ell=(S^\ell, T^\ell, E^\ell)$, which is a geometric version of the $\alpha$-KVV gadgets defined above. Second, we define maps $\tau^\ell$ that glue together these gadgets to obtain the final input instance. Finally, we prove the $\frac1{1+\ln 2}$-hardness result in the generalized online model we defined above. The main ideas behind the first step are implicit in~\cite{Kapralov13}, but we are able to present the construction in a different and arguably cleaner way (in particular, all bounds on the sizes of various sets are exact in our construction, which significantly simplifies presentation). The main contribution of the present paper lies in the second and third steps.

\paragraph{A geometric version of the $\alpha$-KVV gadget $G=(S, T, E)$.}  Let $K, m\geq 1$ be large constant integers. Let 
$$
T=[m]^n
$$ 
i.e. vertices in $T$ are vectors of dimension $n$, with each co-ordinate taking values in $[m]=\{0, 1, 2,\ldots, m-1\}.$ This way we have $N:=|T|=m^n$, so $n=\Omega(\log N)$ for every constant $m$. The vertices on the $S$ side of the bipartition will also be associated with points on the hypercube $[m]^n$, as we define below. We often treat vertices in $T$ or $S$ and points in $[m]^n$ interchangeably where this does not create confusion. The set $S$ will consist of $\alpha K$ disjoint sets (we will use $\alpha=1/2$ for our main result here, since we are implementing an instance that uses $1/2$-KVV gadgets). We will have 
$$
S=S_0\uplus S_1\uplus \ldots \uplus S_{\alpha K-1}.
$$
\begin{remark}[Use of $\uplus$ instead of $\cup$]
Note that we use the $\uplus$ as opposed to $\cup$ above. The reason for this is as follows. It is convenient to view vertices in $G$ as points in the hypercube $[m]^n$. Formally, this means that our vertices are labeled by points in the hypercube. For example, the set $T$ contains all of the hypercube $[m]^n$, and there is no confusion in using points in $[m]^n$ and vertices in $T$ interchangeably. Vertices in $S$ are also labeled by points in the hypercube, as we define below, but the labels are not distinct -- there can be two vertices, say one in $S_i$ and one in $S_j$ for $i\neq j$, whose labels are the same (but labels are distinct within one set $S_i, i\in [\alpha\cdot K]$).  Thus, we use the $\uplus$ sign to stress the fact that the union above is disjoint, even if different sets $S_i$ may contain vertices with the same labels, to avoid confusion. Also, for two vertices $x, y\in S\cup T$ we write $x\delequal y$ to denote the relation `the label of $x$ equals the label of $y$' -- see definition of $S_k$ in~\dualeqref{eq:def-sk}.
\end{remark}

 Before we define $S$, however, recall that an $\alpha$-KVV gadget is parameterized by a permutation $\pi:[N]\to [N]$, and then every vertex  $j\in S=\{0, 1,\ldots, N-1\}$ has an edge to vertices $i\in T$ such that $\pi(i)\geq j$. In our basic gadget the role of this permutation $\pi$ is played by a nested sequence of subsets of $T$ that we denote by 
$$
T=T_0\supset T_1\supset\ldots \supset T_{\alpha K}=T_*,
$$
where the outermost set in the nested sequence is the entire $T$ side of the bipartition, and the innermost set is the {\em terminal subset $T_*$}, which we refer to as the {\em terminal subcube} for reasons that will become clear shortly. These $\alpha K+1$ nested sets will correspond to $\alpha K$ phases over which the gadget will be revealed to the algorithm (nothing is revealed in the last phase for certain technical reasons). In every phase $k\in [\alpha K]$ the algorithm will receive a carefully crafted subset of edges in $S_k\times T_k$, i.e. a subgraph induced by the $k$-th set $S_k$ and the $k$-th set $T_k$ in the nested sequence above. Once $\alpha K$ rounds are done, the next gadget will be presented, with vertices in the terminal subcube $T_*=T_{\alpha K}$ serving as the $S$ side of the new gadget, as in the~\cite{EpsteinLSW13} construction outlined above.  

We now define the nested sequence $T_0\supset T_1\supset\ldots \supset T_{\alpha K}=T_*$.  Choose a subset $\B\subseteq [n]$ of coordinates to be used by our basic $\alpha$-KVV gadget $G=(S, T, E)$ (we need to reserve other coordinate blocks for the other $L-1$ gadgets -- see below). Partition $\B$ into $\alpha K+1$ disjoint roughly equal size subsets as 
$$
\B=\B_0\cup\ldots\cup \B_{\alpha K},
$$ 
where $|\B_k|=\frac{|\B|}{(\alpha K+1)}$. The nested sequence in $T$ is parameterized by a vector
$$
J\in \B_0\times \ldots\times  \B_{\alpha K}.
$$
In other words, for every $k\in [\alpha K+1]$ we have $J_k\in \B_k$.
We use the notation $J_{<k}:=(J_0,\ldots, J_{k-1})$ and $J_{\geq k}:=(J_k,\ldots, J_{K/2}).$ Let $T_0=T$, and for every $k\in [\alpha K]$ let 
\begin{equation}\duallabel{eq:def-tk}
\begin{split}
T_{k+1}=\left\{y\in T_k: y_{J_k}/m\in \left[0, 1-\frac{1}{K-k}\right)\right\},\\
\end{split}
\end{equation}
so that
\begin{equation}\duallabel{eq:def-tk-allconstraints}
\begin{split}
T_k=\left\{y\in [m]^n: y_{J_s}/m\in \left[0, 1-\frac{1}{K-s}\right)\text{~~for all~}s\in \{0, 1, \ldots, k-1\}\right\}.\\
\end{split}
\end{equation}
One can show that $|T_k|=(1-\frac{k}{K})\cdot |T_0|$ for every $k\in [\alpha K+1]$, i.e. the sizes of $T_k$ decrease linearly in $k$ -- see Lemma~\ref{lm:size-bounds}\footnote{We note that lemma proved in Section~\ref{sec:toy-construction} are presented for the setting of $\alpha=1/2$. However, all of these bounds extend to other settings of $\alpha$ that are bounded away from $1$.}.
The set $S$ of vertices is naturally partitioned into disjoint subsets 
\begin{equation}\duallabel{eq:def-s}
S=S_0\uplus S_1\uplus \ldots \uplus S_{\alpha K-1}
\end{equation}
as follows. For every $k\in [\alpha K]$ we let
\begin{equation}\duallabel{eq:def-sk}
\begin{split}
S_k\delequal \{x\in T_k: &\weight(x) \in \left[0, \frac{1}{K-k}\right)\cdot W \pmod{ W}\}
\end{split}
\end{equation}
In the definition above $W$ is an integer parameter that we choose so that $W$ divides $m$ and $\weight(x)=\sum_{j\in [n]} x_j$.  Recall that for a pair of vertices $x, y\in P\cup Q$ we write $x\delequal y$ if their labels (vertices of the hypercube $[m]^n$ assigned to them) are the same. The notation in~\dualeqref{eq:def-sk} above stands for $S_k$ being a copy of the set of vertices on the rhs.  Intuitively, the set $S_k$ is a {\em subsample} of the set $T_k$ that contains a $\frac1{K-k}$ fraction of points in $T_k$. A similar effect was achieved in~\cite{Kapralov13} by sampling vertices in $T_k$ independently, but we find this deterministic construction cleaner to present.  The key reason why we include vertices in $S_k$ depending on the residue class of their weight modulo $W$ is that we need this `sampling mechanism' to `accept' exactly a $\frac1{K-k}$ fraction of vertices along every coordinate aligned line as defined in~\dualeqref{eq:line-def} below; this property is crucial for establishing the existence of a large matching in our gadget -- see Lemma~\ref{lm:matching}. Using the weight of a point ensures that this property is satisfied. Another important observation is that $|S_k|=\frac1{K}\cdot |T_0|$ for every $k\in [\alpha K]$ (see Lemma~\ref{lm:size-bounds} in Section~\ref{sec:toy-construction}). Intuitively, this means that in every round $k\in [\alpha K]$ the number of vertices arriving on the $S$ side of the bipartition and revealing their edges to vertices in $T_k$ is the same.    We also define, for every $k\in [\alpha K]$ and $j\in \B_k$ 
\begin{equation}\duallabel{eq:def-tkj}
\begin{split}
T_k^j&=\left\{y\in T_k: y_j/m\in \left[0, 1-\frac{1}{K-k}\right)\right\}\\
S_k^j&=\left\{x\in S_k: x_j/m\in \left[0, 1-\frac{1}{K-k}\right)\right\}.\\
\end{split}
\end{equation}
Note that $T_k^{J_k}=T_{k+1}$ as per~\dualeqref{eq:def-tk}. The intuition behind these sets is that during phase $k$, i.e. when the edge set induced by $S_k$ and $T_k$ is revealed to the algorithm, the algorithm is presented with several possible options for the next set $T_{k+1}$ in the nested sequence defined above. In other words, given the edge set presented in round $k$ and before (we define the edge set below), any of the sets $T_k^j$ for all $j\in \B_k$ (rather, any $j$ in a subset $\ac{\B}_j$ of $\B_j$ of comparale size)  look like perfectly valid continuations for the nested sequence. The algorithm does not know which of them is important and hence most likely misses important edges in phase $k$ -- see below for more details.

\paragraph{Edges of $G$.}   Fix $k\in [\alpha K]$. For each coordinate $j\in \B_k$ for each $x\in [m]^n$ we denote the line in direction $j$ going through $x$ by 
\begin{equation}\duallabel{eq:line-def}
\text{line}_j(x)=\{x'\in [m]^n: x'_{-j}=x_{-j}\},
\end{equation}
where we write $x_{-j}$ to denote the restriction of $x$ to coordinates $[n]\setminus \{j\}$. Note that for every $y, y'$ and every $j$ one has either $\text{line}_j(y)=\text{line}_j(y')$ or $\text{line}_j(y)\cap \text{line}_j(y')=\emptyset$, i.e. lines in direction $j$ partition $T_k$, and consequently also partition $S_k$.  We now define the edges of $G$ incident on $S_k$ for every $k\in [\alpha K]$. For that we first need

\begin{definition}[Line cover in direction $j$]\duallabel{def:line-cover}
For every $j\in \B_k$ a collection $C_k^j\subseteq T_k$ of representative points is called a {\em line cover of $T_k$ in direction $j$} if 
$T_k=\bigcup_{y\in C_k^j} \text{line}_j(y)$
and $\text{line}_j(y)\cap \text{line}_j(y')=\emptyset$ for every $y, y'\in C_k^j$, $y\neq y'$.
\end{definition}
Note that a line cover of $T_k$ in direction $j$ can be constructed by picking points $y\in T_k$ greedily until the union of lines in direction $j$ through these points covers $T_k$. 

The edge set induced by $S_k\cup T_k$ is defined as follows. First, we leave out a few coordinates from the current coordinate block $\B_k$, letting  
$$
\ac{\B}_k\subset \B_k.
$$ 
The reason for this will be clear once we define the glueing maps $\tau^\ell$ below. For now it is only important that $|\ac{\B}_k|\approx |\B_k|$, i.e. we did not lose too many coordinates by passing to $\ac{\B}_k$. Now for every $j\in \ac{\B}_k$ fix a line cover $C_k^j$ of $T_k$ in direction $j$ (as per Definition~\dualref{def:line-cover}). 

\begin{center}
\fbox{\parbox{0.8\textwidth}{
\begin{center}
For every $y\in C_k^j$  we include a complete bipartite graph between $\text{line}_j(y)\cap S_k^j$ and $\text{line}_j(y)\cap (T_k\setminus T_k^j)$. 
\end{center}
}}
\end{center}

In other words, let
\begin{equation}\duallabel{eq:edges-ei-k-def}
E_k=\bigcup_{j\in \ac{\B}_k} E_{k, j},
\end{equation}
where
\begin{equation}\duallabel{eq:edges-ei-def}
E_{k, j}=\bigcup_{y\in C_k^j} (\text{line}_j(y)\cap S_k^j) \times (\text{line}_j(y) \cap (T_k\setminus T_k^j)).
\end{equation}
One can show that the edge sets $E_{k, j}$ are disjoint (see Lemma~\ref{lm:disjoint}).  Note that $E_k$ is fully determined by the first $k-1$ values of $J$, namely by the prefix $J_{<k}$.  At this point we note that for our hard input distribution we choose 
$$
J\sim UNIF(\ac{\B}_0\times\ldots\times \ac{\B}_{\alpha K}).
$$
Crucially, conditioned on all edges received up to phase $k$, i.e. on $\bigcup_{s=0}^{k} E_s$, one has $J_k\sim UNIF(\ac{\B}_k)$. This property is important for a key structural lemma, Lemma~\dualref{lm:key} below. 

For every choice of $J\in \ac{\B}_0\times\ldots\times \ac{\B}_{\alpha K}$ the edge set of $G$ that we defined satisfies
\begin{lemma}[Matching of $S$ to $T\setminus T_*$; see Lemma~\ref{lm:matching} in Section~\ref{sec:toy-construction}]\duallabel{lm:matching}
There exists a matching of a $(1-O(1/K))$ fraction of $S$ to $T\setminus T_*$ in $E$.
\end{lemma}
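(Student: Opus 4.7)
The plan is to build the matching phase by phase. For each $k\in[\alpha K]$ I will construct a matching $M_k\subseteq E_{k,J_k}$ that saturates $S_k^{J_k}=S_k\cap T_{k+1}$ and sends it into $T_k\setminus T_{k+1}$; the matching promised by the lemma will then be $M=\bigcup_{k\in[\alpha K]}M_k$. Vertex-disjointness across phases is automatic: the $S_k$ are disjoint by~\dualeqref{eq:def-s}, the sets $T_k\setminus T_{k+1}$ are disjoint because $T_0\supset T_1\supset\cdots\supset T_{\alpha K}$ is nested, and $\bigcup_k(T_k\setminus T_{k+1})=T\setminus T_*$.

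To construct $M_k$, I would exploit the fact that, by~\dualeqref{eq:edges-ei-def}, $E_{k,J_k}$ is the disjoint union over $y\in C_k^{J_k}$ of complete bipartite graphs between $\text{line}_{J_k}(y)\cap S_k^{J_k}$ and $\text{line}_{J_k}(y)\cap(T_k\setminus T_k^{J_k})$, where $T_k\setminus T_k^{J_k}=T_k\setminus T_{k+1}$ by~\dualeqref{eq:def-tk}. Because $J_k\in\B_k$ is disjoint from $\B_0,\ldots,\B_{k-1}$, the defining constraints of $T_k$ never touch the $J_k$-coordinate, so every line in $C_k^{J_k}$ contributes all $m$ of its points to $T_k$. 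A direct count then gives $|\text{line}_{J_k}(y)\cap(T_k\setminus T_{k+1})|=m/(K-k)$. The weight-modulo-$W$ rule in~\dualeqref{eq:def-sk}, together with $W\mid m$, accepts a shifted length-$W/(K-k)$ window of residues, each occurring $m/W$ times along the line, so $|\text{line}_{J_k}(y)\cap S_k|=m/(K-k)$. Thus $|\text{line}_{J_k}(y)\cap S_k^{J_k}|\le|\text{line}_{J_k}(y)\cap S_k|=m/(K-k)=|\text{line}_{J_k}(y)\cap(T_k\setminus T_{k+1})|$, Hall's condition is trivial on each complete-bipartite component, and $M_k$ with $|M_k|=|S_k^{J_k}|$ exists.

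It remains to bound the loss. The only $S_k$-vertices left unmatched by $M_k$ lie in $S_k\setminus S_k^{J_k}$, which has size $|S_k|/(K-k)=|T_0|/(K(K-k))$ by Lemma~\ref{lm:size-bounds}. Summing over phases,
\[
\sum_{k=0}^{\alpha K-1}\frac{|T_0|}{K(K-k)}=\frac{|T_0|}{K}\sum_{j=(1-\alpha)K+1}^{K}\frac{1}{j}=O\!\left(\frac{|T_0|}{K}\right)=O(|S|/K),
\]
where the $O$-constant depends only on $\alpha\in(0,1)$, which is bounded away from $1$. Hence $M$ matches a $(1-O(1/K))$-fraction of $S$ into $T\setminus T_*$, as claimed. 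The one step that requires care is the exact equality $|\text{line}_{J_k}(y)\cap S_k|=m/(K-k)$ on every single line in direction $J_k$; this is precisely why the construction replaces the random subsample of $T_k$ used in~\cite{Kapralov13} by the deterministic weight-modulo-$W$ rule, which preserves the $S_k$-density along every coordinate-aligned line and makes the per-line Hall condition automatic. Everything else is bookkeeping.
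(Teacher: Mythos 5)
Your proof is correct and follows essentially the same route as the paper's: decompose $T\setminus T_*$ as the disjoint union $\bigcup_k (T_k\setminus T_{k+1})$, match $S_k$ into $T_k\setminus T_{k+1}$ phase by phase using the complete bipartite graphs on lines in direction $J_k$, and sum the small per-phase losses. Your bookkeeping is marginally cleaner than the paper's in that you observe $|\text{line}_{J_k}(y)\cap S_k^{J_k}|\le|\text{line}_{J_k}(y)\cap S_k|=m/(K-k)=|\text{line}_{J_k}(y)\cap(T_k\setminus T_{k+1})|$ so each complete-bipartite component trivially saturates $S_k^{J_k}$, and then charge the entire loss to $\sum_k |S_k\setminus S_k^{J_k}|$; this sidesteps the paper's use of Lemma~\ref{lm:line-properties}, item (4) (which the paper's own proof of Lemma~\ref{lm:matching} cites with a typo, labeling it (3) and flipping a sign to $1+1/(K-k)$), but is otherwise the same argument.
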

This is a very natural property since the instance that we are defining is a version of the $\alpha$-KVV gadget defined at the beginning of this section: those gadgets admitted a perfect matching of $S$ to $T\setminus T_*$ (the optimal matching in that instance).

We now define the key property that underlies our lower bound (and, similarly, that of~\cite{Kapralov13}). For that we need the definition of a downset of a subset $U$ of $T$:
\begin{definition}[Down-set of a set in $T$]\duallabel{def:downset}
For every $U\subseteq T$, $k\in [\alpha K]$, we define the {\em downset of $U$ in $S_k$} by 
\begin{equation*}
\begin{split}
\downset_k(U)=\{x\in S_k: \exists y\in U: y\delequal x\}
\end{split}
\end{equation*}
and  define
$$
\downset(U)=\bigcup_{k\in [\alpha K]} \downset_k(U).
$$
\end{definition}
Note that a given point in $U$ has anywhere between $0$ and $\alpha K$ images under the $\downset$ map: indeed, the downset of a point $x\in U$ is simply the set of points in the vertex sets $S_0, S_1,\ldots, S_{\alpha K-1}$ whose labels match the label of $x$.  There can be up to $\alpha K$ such points, since labels are distinct within every single $S_i, i\in [\alpha K]$. It is also good to note that $S_k=\downset_k(T_k)$ for every $k\in [\alpha K]$. Finally, it is important to note that for appropriately `nice' subsets $U\subseteq T$  (see Lemma~\ref{lm:u-subsampling} in Section~\ref{sec:toy-construction}) one has 
$$
|\downset_k(U)|=\frac1{K-k}|U|,
$$
which is consistent with the idea that our weight condition in the definition of $S_k$ (see~\dualeqref{eq:def-sk}) essentially `samples' points at rate $\frac1{K-k}$. The more important property of the $\downset$ map and the terminal subcube $T_*$ is
\begin{lemma}[Key structural property]\duallabel{lm:key}
For $G=(S, T, E)$ defined as above, for every $E'\subset E$ one has 
$$
E'\cap (\downset(T_*) \times (T\setminus T_*)) \subseteq \bigcup_{k\in [\alpha K]} E'\cap E_{k, J_k}.
$$ 
\end{lemma}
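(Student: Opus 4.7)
The plan is to take an arbitrary edge $e=(x,z)\in E' \cap (\downset(T_*)\times (T\setminus T_*))$ and show it must land in $E_{k, J_k}$ for some $k\in [\alpha K]$. Since $E'\subseteq E=\bigcup_{k}\bigcup_{j\in \ac{\B}_k} E_{k,j}$ by~\dualeqref{eq:edges-ei-k-def}, we already know $e\in E_{k,j}$ for some $k\in [\alpha K]$ and some $j\in \ac{\B}_k$. It thus suffices to prove that $j = J_k$; the inclusion then follows by taking the union over $k$.

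Two ingredients drive the argument. First, by the definition of $E_{k,j}$ in~\dualeqref{eq:edges-ei-def}, the labels of $x$ and $z$ lie on a common line in direction $j$, so $x_i = z_i$ for every coordinate $i \neq j$. Second, by the characterization in~\dualeqref{eq:def-tk-allconstraints}, a label $y\in [m]^n$ lies in $T_* = T_{\alpha K}$ iff $y_{J_s}/m \in [0, 1 - 1/(K-s))$ for every $s \in \{0, 1, \ldots, \alpha K - 1\}$. Since $x \in \downset(T_*)$, the label of $x$ lies in $T_*$ and therefore satisfies all these inequalities.

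Now I argue by contradiction: suppose $j \neq J_k$. The coordinate blocks $\B_0, \ldots, \B_{\alpha K}$ are pairwise disjoint by construction, and $j \in \ac{\B}_k \subseteq \B_k$ while $J_s \in \ac{\B}_s \subseteq \B_s$ for every $s$. Hence $j \neq J_s$ for every $s \neq k$, and by assumption $j \neq J_k$ as well, so $j \notin \{J_0, \ldots, J_{\alpha K - 1}\}$. Combined with $x_i = z_i$ for $i \neq j$, this yields $z_{J_s} = x_{J_s}$ for every $s \in \{0, \ldots, \alpha K - 1\}$, so $z$ inherits every defining inequality of $T_*$ from $x$. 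Thus $z \in T_*$, contradicting $z \in T \setminus T_*$. Therefore $j = J_k$ and $e \in E_{k, J_k}$, which establishes the desired containment.

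The substantive point behind the proof is that the blocks $\B_k$ partition $\B$, so the single coordinate $j$ that distinguishes the two endpoints of $e$ is disjoint from every $J_s$ with $s \neq k$; the defining inequalities of $T_*$ then transfer from $x$ to $z$ unless $j$ is forced to coincide with $J_k$. I do not anticipate any real obstacle here: the lemma is essentially a bookkeeping consequence of the nested-set definition~\dualeqref{eq:def-tk-allconstraints} together with the block-partition structure of $\B$.
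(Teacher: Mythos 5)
Your proof is correct and takes essentially the same approach as the paper's: both arguments hinge on (i) the endpoints of an edge in $E_{k,j}$ agreeing on every coordinate except $j$, and (ii) the disjointness of the blocks $\B_s$ forcing the one distinguishing coordinate to be $J_k$. You run it as a contradiction (if $j\neq J_k$ then $z$ inherits every $T_*$ constraint from $x$), whereas the paper argues directly (there is some violated index $r$, so $j=J_r$, and then block disjointness forces $r=k$); these are the same argument reorganized.
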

Note that intuitively the lemma above shows that only very special edges in $E$, namely the ones in $E_{k, J_k}$, cross from $\downset(T_*)$ to the complement of $T_*$ in $T$. This is intuitively useful since, as we verify below, for the right setting of parameters the cardinality of $\downset(T_*)$ is quite a bit higher than that of $T_*$, meaning that if $E'\cap E_{k, J_k}$ is small, one gets a Hall's theorem witness set certifying that $E'$ does not contain a large matching (as without these special edges all neighbors of $\downset(T_*)$ are in $T_*$, a set of size significantly smaller than $\downset(T_*)$). 

\begin{proofof}{Lemma~\dualref{lm:key}}
Consider an edge $(a, b)\in E'\cap E_k$ with $a\in \downset_k(T_*)$ for some $k\in [\alpha K]$ and $b\in T\setminus T_*$.  Recalling that $T_*=T_{\alpha K+1}$ and using~\dualeqref{eq:def-tk-allconstraints}, we get
\begin{equation*}
\begin{split}
T_*=\left\{y\in [m]^n: y_{J_s}/m\in \left[0, 1-\frac{1}{K-s}\right)\text{~~for all~}s\in \{0, 1, \ldots, \alpha K\}\right\}.\\
\end{split}
\end{equation*}
Further, since by Definition~\dualref{def:downset} the set $\downset_k(T_*)$ is the set of vertices in $S_k$  whose label matches the label of some vertex in $T_*$, the assumption that $a\in \downset_k(T_*)$ implies
$$
a_{J_s}/m\in \left[0, 1-\frac{1}{K-s}\right)\text{~~for all~}s\in \{0, 1, \ldots, \alpha K\}.
$$
On the other hand, since $b\in T\setminus T_*$ by assumption, there exists an index $r\in [\alpha K+1]$ such that
$$
b_{J_r}/m\in \left[1-\frac{1}{K-r}, 1\right).
$$
Thus, $a$ and $b$ differ on coordinate $J_r$. At the same time, we have $(a, b)\in E_k$ by assumption, which means by~\dualeqref{eq:edges-ei-k-def} that there exists a point $y\in C_k^j$ (a line cover of $T_k$ in direction $j$) such that  $a\in \text{line}_j(y)$ and $b\in \text{line}_j(y)$, which by definition of a line in direction $j$ (see~\dualeqref{eq:line-def}) implies that  $a_{-j}=y_{-j}=b_{-j}.$
Since $a$ and $b$ differ on coordinate $J_r$, we now get that $j=J_r$. It remains to note that edges in $E_k$ are all generated by lines in directions $j\in \B_k$. Since the blocks $\B_i$ are disjoint for different $i$ by construction, we get that $J_r\in \B_k$. Thus, $r=k$ and $(a, b)\in E_{k, J_k}$, as required.
\end{proofof}

We now explain the significance of the key structural property above. Suppose that $E'$ is the set of edges maintained by a generalized online algorithm that remembers at most  $s=o(N\log N)$ edges. Taking the expectation of the rhs in Lemma~\dualref{lm:key} above with respect to $J_k$ and conditioning $J_{<k}$, we get
\begin{equation}\duallabel{eq:293ht92hg}
\begin{split}
\expect_{J_k\sim UNIF(\ac{\B}_k)} \left[|E'\cap E_{k, J_k}|\right]&=\sum_{j\in \ac{\B}_k} |E'\cap E_{k, j}|\cdot \prob[J_k=j]\\
&=\frac1{|\ac{\B}_k|}\sum_{j\in \ac{\B}_k} |E'\cap E_{k, j}|\\
&\leq \frac{s}{|\ac{\B}_k|},\\
\end{split}
\end{equation}
where we used the fact that $\sum_{j\in \ac{\B}_k} |E'\cap E_{k, j}|\leq |E'|\leq s$ for an algorithm that remembers at most $s$ edges (since $E_{k, j}$ are disjoint), as well as the fact that $J_k$ is independent of $J_{<k}$. At the same time we have $|\ac{\B}_k|\geq n/2K$ or so, since we have $n$ coordinates altogether, and $\alpha K\leq K$ phases to present the gadget to the algorithm over. If the parameter $m$ is a constant (which it is by our parameter setting), we get $|\ac{\B}_k|\geq n/2K=\Omega_K(\log N)$.  Substituting into~\dualeqref{eq:293ht92hg}, we thus get
\begin{equation*}
\begin{split}
\expect_{J_k\sim UNIF(\ac{\B}_k)} \left[|E'\cap E_{k, J_k}|\right]=O_K(s/\log N).
\end{split}
\end{equation*}
Summing over all $\alpha K\leq K$ phases, we get that any generalized algorithm that remembers at most $s$ edges can remember at most 
$$
O_K(s/\log N)=o(N)
$$ 
edges from $E'\cap (\downset(T_*)\times (T\setminus T_*))$ (the lhs in Lemma~\dualref{lm:key}) since $s=o(N\log N)$ by assumption. Now we can upper bound the size of the matching that the set $E'$ contains by exhibiting a vertex cover as follows:
\begin{lemma}[Small vertex cover]\duallabel{lm:vertex-cover-simple}
The size of the maximum matching in $E'\subseteq E$ is upper bounded by 
$$
|E'\cap (\downset(T_*) \times (T\setminus T_*))|+|S\setminus \downset(T_*)|+|T_*|.
$$
\end{lemma}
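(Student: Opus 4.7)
The plan is to exhibit an explicit vertex cover of $E'$ whose size matches the claimed upper bound, and then invoke the weak LP duality between matchings and vertex covers (any matching has size at most the size of any vertex cover).

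First, I would take the base cover $C_0 := (S\setminus \downset(T_*))\cup T_*$. I would check which edges of $E'$ are left uncovered by $C_0$: an edge $(a,b)\in E'$ with $a\in S$ and $b\in T$ fails to be covered by $C_0$ precisely when $a\notin S\setminus \downset(T_*)$ and $b\notin T_*$, i.e.\ when $a\in \downset(T_*)$ and $b\in T\setminus T_*$. Hence the uncovered edges form exactly the set $E'\cap(\downset(T_*)\times (T\setminus T_*))$.

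Next, I would augment $C_0$ by picking, for each uncovered edge, one of its endpoints (say, its $S$-endpoint) and adding it to the cover. This produces a vertex cover $C$ of $E'$ of size
\[
|C|\leq |S\setminus \downset(T_*)|+|T_*|+|E'\cap(\downset(T_*)\times (T\setminus T_*))|.
\]
Since the maximum matching in any graph is upper bounded by the size of any vertex cover, this immediately gives the stated bound on the size of the maximum matching contained in $E'$.

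There is no real obstacle here: the statement is essentially a bookkeeping observation about which edges escape the natural cover $(S\setminus \downset(T_*))\cup T_*$. The only point worth being careful about is that $\downset(T_*)\subseteq S$ (so that the complement $S\setminus \downset(T_*)$ makes sense and is disjoint from $T_*\subseteq T$), which is immediate from Definition~\dualref{def:downset}, and that the bipartition of $\wh G$ places $S$ and $T$ on opposite sides so that every edge of $E'\subseteq E$ has exactly one endpoint in $S$ and one in $T$.
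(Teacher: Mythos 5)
Your proof is correct and takes essentially the same approach as the paper: construct the vertex cover $(S\setminus \downset(T_*))\cup T_*$ together with one endpoint of each edge in $E'\cap(\downset(T_*)\times(T\setminus T_*))$, and invoke the fact that any vertex cover upper bounds the maximum matching. The paper's proof is identical in substance, just phrased in a single step rather than as a base cover plus augmentation.
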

\begin{proof}
We construct a vertex cover by first adding one endpoint of every edge $e\in (\downset(T_*) \times (T\setminus T_*))$. Then add $T_*$ and $S\setminus \downset(T_*)$. This is indeed a vertex cover: every edge $(u, v)\in E'$ either belongs to the first edge set, or has one endpoint in at least one of the other two. The lemma now follows.
\end{proof}

Since, as we established above, 
\begin{equation}\duallabel{eq:low-order}
\expect_J[|E'\cap (\downset(T_*) \times (T\setminus T_*))|]=o(N),
\end{equation}
it suffices to upper bound $|S\setminus \downset(T_*)|$ and $|T_*|$. Since $T_*=T_{\alpha K}$, and we have $|T_k|=(1-\frac{k}{K})|T_0|$ (see Lemma~\ref{lm:size-bounds} in Section~\ref{sec:toy-construction}), we get that $|T_*|=(1-\alpha) |T_0|$. We also have
\begin{equation*}
\begin{split}
|S\setminus \downset(T_*)|&=|S|-|\downset(T_*)|\\
&=\sum_{k\in [\alpha K]} |S_k|-\sum_{k\in [\alpha K]} |\downset_k(T_*)|\\
&=\alpha K\cdot \frac1{K}|T_0|-\sum_{k\in [\alpha K]} \frac1{K-k} |T_*|\\
&=\left(\alpha-\int_0^\alpha \frac1{1-x} dx \cdot (1-\alpha)\right)|T_0|+O(1/K)|T_0|.\\
\end{split}
\end{equation*}
We used the fact that $|S_k|=\frac1{K}|T_0|$ (see Lemma~\ref{lm:size-bounds}) and $|\downset_k(T_*)|=\frac1{K-k}|T_*|$ (see Lemma~\ref{lm:u-subsampling}). Letting $\alpha=1-e^{-1}$ and using the fact that $\int_0^\alpha \frac1{1-x}dx=\ln \frac1{1-\alpha}=1$, we get
$$
|S\setminus \downset(T_*)|=(2\alpha-1+O(1/K))|T_0|,
$$
and our upper bound on the size of the matching constructed by the algorithm becomes
\begin{equation*}
\begin{split}
|E'\cap (\downset(T_*)\times (T\setminus T_*))|+|S\setminus \downset(T_*)|+|T_*|&\approx \left((2\alpha-1)+(1-\alpha)\right)|T_0|\\
&=\alpha |T_0|\\
&=(1-e^{-1})|T_0|.
\end{split}
\end{equation*}
Thus, by Markov's inequality applied to $|E'\cap (\downset(T_*)\times (T\setminus T_*))|$ no generalized online online algorithm that remembers $s=o(N\log N)$ edges can construct a matching of size larger than $(1-e^{-1})|T_0|$ with any nontrivial probability. At the same time, if the terminal set $T_*$ is perfectly matched to a separate set of vertices by an extra matching that arrives last in the stream, the size of the maximum matching in the graph is $\approx |T_0|$. This is exactly what happens in the $1-e^{-1}$ hardness result of~\cite{Kapralov13}, and brings us to our main challenge: how does one ensure that the terminal subset $T_*$ is not merely matched to an unstructured set of vertices by a perfect matching, like in~\cite{Kapralov13}, but rather that we are able to attach another $\alpha$-KVV instance (in this case, for $\alpha=1/2$) and continue?

\paragraph{Our contribution: glueing maps $\tau^\ell$ and vertex cover construction in the $\frac1{1+\ln 2}$-hard instance.} First, it is useful to observe that the construction of gluing maps $\tau^\ell$ that associate different instances is nontrivial. For example, simply choosing $\tau^\ell$ to be a random bijection from $S^\ell$ to $T_*^\ell$ will not work, as it will certainly destroy the delicate coordinate structure of the good vertex cover that we defined above.  From now on we let $\alpha=1/2$, as this discussion directly corresponds to our construction in Section~\ref{sec:toy-construction}.

We use $L$ basic gadgets $G^\ell=(S^\ell, T^\ell, E^\ell)$, $\ell\in [L]$, and assume that $L$ is even throughout the paper. We now define maps $\tau^\ell$ identifying vertices in $S^\ell$ in $G^\ell=(S^\ell, T^\ell, E^\ell)$ with vertices in the terminal subcube $T_*^{\ell-1}$ of $G^{\ell-1}=(S^{\ell-1}, T^{\ell-1}, E^{\ell-1})$. For simplicity of notation we let $G'=(S', T', E')$ denote $G^\ell=(S^\ell, T^\ell, E^\ell)$, let $G=(S, T, E)$ denote $G^{\ell-1}=(S^{\ell-1}, T^{\ell-1}, E^{\ell-1})$, and adopt similar notation for all other relevant quantities. Specifically, let the disjoint coordinate blocks dedicated to these two gadgets be denoted by $\B':=\B^\ell, \B:=\B^{\ell-1}$, and let the coordinate vectors be denoted by $J\in \B_0\times \B_1\times \ldots\times \B_{K/2}$, and $J'=\B'_0\times \B'_1\times \ldots \times \B'_{K/2}$ respectively. Thus, we define a bijection $\tau$ from $S'$ to $T_*$:
$$
\tau: S'\to T_*.
$$

We start by defining $\tau$ on the sets $S'_k$ for $k\in [K/2]$ (recall that $S'=S_0'\uplus \ldots \uplus S_{K/2-1}'$ as per~\dualeqref{eq:def-s}). The restriction of $\tau$ to $S'_k$ is denoted by $\tau_k$:
$$
\tau_k: S'_k\to T_*.
$$ 
The images of $\tau_k$ that we define will be disjoint for different $k\in [K/2]$, i.e. these maps extend naturally to an injective map from the union of $S'_k$ over all $k\in [K/2]$ to $T_*$.  At this point our task is to map {\em subsampled} cubes $S'_k$ (as per~\dualeqref{eq:def-sk}) to a {\em non-sampled} terminal subcube $T_*$ (as per~\dualeqref{eq:def-tk}).  Our first step is to design an intermediate mapping $\rho$ that maps $S'_k$ to a {\em non-sampled} subcube.  We ensure that such a map uses only one special coordinate direction -- for every $k\in [K/2]$ we denote this coordinate by $q_k\in \B_k$ and refer to it as the {\em compression index}.  We refer to the corresponding map as the {\em densifying map} $\rho$, defined below. Note that in order to `densify' $S'_k$ we need to set the densification parameter to $\lambda=K-k$ (i.e., the inverse of the subsampling rate).

\begin{definition}[$(\lambda, r)$-densifying map]\duallabel{def:rho}
For $r\in [n]$ and integer $\lambda>0$ the $(\lambda, r)$-densifying map $\rho:[m]^n\to [m]^n$ is defined as follows.  We let $x\in [m]^n$, and write $x=(x', x''), x'\in [m]^{[n]\setminus \{r\}}, x''\in [m]$. Write $x''=aW+b W/\lambda+c,$
where $a\in \{0, 1,\ldots, m/W-1\}, b\in \{0, 1,\ldots,\lambda-1\}$ and $c\in \{0, 1,\ldots, W/\lambda-1\}$. We define $\rho(x)$ by letting, for $j\in [n]$:
\begin{equation*}
\begin{split}
(\rho(x))_j:=\left\lbrace
\begin{array}{ll}
a W/\lambda+c&\text{~if~}j=r\\
x_j&\text{~o.w.~}
\end{array}
\right.
\end{split}
\end{equation*}
\end{definition}

The following lemma formalizes the densification property:
\begin{lemma}[Densification of a subsampled set; see Lemma~\ref{lm:u-subsampling} in Section~\ref{sec:toy-construction}]\duallabel{lm:densification}
For every integer $\lambda\geq 2$, every $r\in [n]$, every $U\subseteq [m]^n$ that does not depend on coordinate $r$ the $(\lambda, r)$-densifying map $\rho$ (see Definition~\dualref{def:rho}) maps 
$$
\left\{x\in U: \weight(x) \in \left[0, 1/\lambda\right)\cdot W \pmod{ W} \right\}
$$ bijectively to $\left\{x\in U: x_r/m\in \left[0, 1/\lambda\right)\right\}$.
\end{lemma}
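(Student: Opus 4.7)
The plan is to exploit the hypothesis that $U$ is independent of coordinate $r$ in order to reduce the statement to a one-dimensional claim on the $r$-th coordinate. Write $U = U' \times [m]$ with $U' \subseteq [m]^{[n]\setminus\{r\}}$. Since $\rho$ only modifies coordinate $r$ and otherwise acts as the identity, both the source set $A := \{x \in U : \weight(x) \bmod W \in [0, W/\lambda)\}$ and the target set $B := \{x \in U : x_r/m \in [0, 1/\lambda)\}$ fiber over $U'$, and it suffices to exhibit, for each fixed $x' \in U'$, a bijection $A_{x'} \to B_{x'}$ realized by the action of $\rho$ on the $r$-th coordinate, where $A_{x'} := \{x_r \in [m] : \weight(x') + x_r \bmod W \in [0, W/\lambda)\}$ and $B_{x'} := [0, m/\lambda)$.

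Next I would use the mixed-radix decomposition from Definition~\dualref{def:rho}: every $x_r \in [m]$ writes uniquely as $x_r = aW + bW/\lambda + c$ with $a \in \{0, \ldots, m/W - 1\}$, $b \in \{0, \ldots, \lambda - 1\}$, $c \in \{0, \ldots, W/\lambda - 1\}$, which is well-defined because $\lambda \mid W \mid m$ by construction. Under this parameterization, $\rho$ acts by $(a, b, c) \mapsto aW/\lambda + c$, and the map $(a, c) \mapsto aW/\lambda + c$ is already a bijection from $\{0, \ldots, m/W - 1\} \times \{0, \ldots, W/\lambda - 1\}$ onto $B_{x'} = [0, m/\lambda)$. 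Thus it remains to show that $A_{x'}$, viewed through this parameterization, consists precisely of the triples $(a, b, c)$ in which $(a, c)$ ranges freely and $b$ is uniquely determined by the pair $(x', c)$; once this is in place, $\rho$ is exactly the map that forgets $b$, and the composition with the parameterization of $B_{x'}$ yields the desired bijection.

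The uniqueness of $b$ is the only technical step and the place I would focus attention. Writing $\weight(x') + x_r \equiv (c + \weight(x')) + bW/\lambda \pmod W$ and noting that $aW \equiv 0 \pmod W$, the membership condition for $A_{x'}$ is independent of $a$ and reduces to requiring $(c + \weight(x')) + bW/\lambda \bmod W \in [0, W/\lambda)$. Since the values $\{bW/\lambda \bmod W : 0 \leq b < \lambda\}$ enumerate each coset of $(W/\lambda)\mathbb{Z}$ in $\mathbb{Z}/W\mathbb{Z}$ exactly once, exactly one choice of $b$ shifts the fixed residue $(c + \weight(x')) \bmod W$ into the window $[0, W/\lambda)$. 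This modular-arithmetic observation is the crux of the argument; everything else is unpacking the definitions.
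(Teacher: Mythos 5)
Your proof is correct, and it takes a genuinely cleaner route than the paper. The paper proves that $\rho$ maps the source set into the target set, proves injectivity (by showing $a_1=a_2$, $c_1=c_2$ force $b_1=b_2$), and then closes the argument by appealing separately to Lemma~\ref{lm:u-subsampling} for the cardinality count $|A|=\frac{1}{\lambda}|U|=|B|$. You instead fiber both sets over $U'=U_{-r}$ and exhibit the bijection explicitly on each fiber: the key observation is that in the mixed-radix decomposition $x_r=aW+bW/\lambda+c$, the weight constraint is independent of $a$ and, for each fixed $c$, is satisfied by exactly one $b\in[\lambda]$ (since $\{bW/\lambda \bmod W\}_{b\in[\lambda]}$ hits every coset of $(W/\lambda)\mathbb{Z}$ in $\mathbb{Z}/W\mathbb{Z}$ once, and precisely one representative of the relevant coset lands in $[0,W/\lambda)$). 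Because $(a,c)\mapsto aW/\lambda+c$ is itself a bijection from $[m/W]\times[W/\lambda]$ onto $[m/\lambda]$, the forgetting map $(a,b,c)\mapsto aW/\lambda+c$ restricts to a bijection on the fiber. This buys you a self-contained proof that establishes surjectivity constructively rather than by cardinality matching, removing the dependence on the subsampling lemma; the paper's route, while slightly less explicit, is shorter given that Lemma~\ref{lm:u-subsampling} is already proved and needed elsewhere.
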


Letting $\rho_k$ be a $(K-k, q_k)$-densifying map, we get by Lemma~\dualref{lm:densification} that
\begin{equation}\duallabel{eq:s-prime-k}
\rho_k(S'_k)=\left\{x\in T'_k: x_{q_k}/m\in \left[0, \frac1{K-k}\right) \right\}.
 \end{equation}
 This is progress, since now we need to design a map that maps the subcube $\rho_k(S'_k)$ above to $T_*$, which is also a subcube. We would like to design a mapping from $S'_k$ to $T_*$ that {\bf (a)} `uses' as few coordinates as possible and {\bf (b)} maps entire subcubes of $\rho_k(S'_k)$ to subcubes of $T_*$. This second property {\bf (b)} ensures that the structure of the good vertex cover we defined in Lemma~\dualref{lm:vertex-cover} above can be translated from one instance of a basic gadget to another. A basic issue that we are facing now is that $\rho_k(S'_k)$ and $T_*$ are subcubes, but have a different number of `active dimensions': the former constrains variables in $J'_{<k}$ and $q_k$, while the latter constrains variables in $J$. To equalize this number let  $\Ext_k\subseteq \B'_k$ is a subset of size $K/2+1-k$ referred to as the {\em extension indices} in phase $k$ -- we will artificially add them to the index set $J'_{<k}$ and $q_k$ to equalize the number of `active' coordinates.
 $q_k\in \B'_k\setminus \Ext_k$ referred to as the {\em compression index} for the $k$-th phase of round $\ell$. Now we can define the set $\ac{\B}_k$ that we already used formally: 
$\ac{\B}'_k=\B'_k\setminus (\Ext_k\cup \{q_k\})$. We also need an index $r\in \B_{K/2}$ that we refer to as the compression index for the terminal subcube $T_*$. Define index sets 
\begin{equation}\duallabel{eq:i-def}
I=J\cup \{r\}\subset [n]
\end{equation}
and 
\begin{equation}\duallabel{eq:i-prime-def}
I'=\{J'_0,\ldots, J'_{k-1}\}\cup \Ext_k\cup \{q_k\}.
\end{equation}
Note that  $|I|=|I'|$ -- this is exactly why we defined the relevant compression and extension indices. This allows us to write 
\begin{equation}\duallabel{eq:t-star-product}
T_*\delequal A\times [m]^{[n]\setminus I},
\end{equation}
where
$$
A=\left\{x\in [m]^I: x_{J_s}/m\in \left[0, \frac1{K-i}\right) \text{~for all~}s\in [K/2]\right\}.
$$
Similarly, we write 
\begin{equation}\duallabel{eq:rho-s-prime-k}
\rho_k(S'_k) \delequal D_k\times [m]^{[n]\setminus I'},
\end{equation}
where as per~\dualeqref{eq:s-prime-k}
\begin{equation*}
\begin{split}
D_k&=\left\{x\in [m]^{I'}: x_{i_s}/m\in \left[0, 1-\frac1{K-s}\right)\text{~for all~}s\in [k]\text{~and~}x_{q_k}/m\in \left[0, \frac1{K-k}\right) \right\}.
\end{split}
\end{equation*}

We choose a bijection 
\begin{equation}\duallabel{eq:m-def}
M: \biguplus_{k\in [K/2]} D_k \to A,
\end{equation}
which exists since the cardinalities of these sets are indeed equal (see derivation after~\eqref{eq:m-def} in Section~\ref{sec:toy-construction}). This lets us define another auxiliary transformation, referred to as the {\em subcube permutation map} $\Pi_k$. The composition of the densifying map $\rho_k$ and the subcube permutation map $\Pi_k$ gives us the glueing map $\tau$. 
\begin{definition}[Subcube permutation map $\Pi_k$]\duallabel{def:pi-k}
Define an injective map
\begin{equation}\duallabel{eq:pi-k-def}
\Pi_k: \rho_k(S'_k) \to T_*
\end{equation}
as follows. First, let $\eta: I\to I'$ be an arbitrary bijection. Given $x\in \rho_k(S'_k)$, write 
$$
x=(a, b, c),
$$ 
where $a=x_{I'}\in D_k\subseteq [m]^{I'}$, $b=x_I\in [m]^I$ and $c=x_{[n]\setminus (I\cup I')}\in [m]^{[n]\setminus (I\cup I')}$. We let $
\Pi_k(x):=z,$ 
where 
\begin{equation*}
z_j=\left\lbrace
\begin{array}{ll}
b_{\eta^{-1}(j)}&\text{~if~}j\in I'\\
(M(a))_{\eta(j)}&\text{~if~}j\in I\\
c_j&\text{o.w.}
\end{array}
\right.
\end{equation*}

In other words, $\Pi_k(x)$ replaces $x_I$ with $x_{I'}$, replaces $x_{I'}$ with $M(x_I)$ and leaves coordinates outside of $I\cup I'$ untouched, so that 
$$
\Pi_k(z)=(b, M(a), c).
$$ 
See Fig.~\dualref{fig:pi-k}  for an illustration.
\end{definition}

A key property of the map $\Pi_k$ is Lemma~\ref{lm:pi-prop} (see Section~\ref{sec:toy-construction}). Intuitively, this lemma says that $\Pi$ maps entire subspace to subspaces, which is a key property that we need our glueing maps to satisfy. This is because, as described in Section~\ref{sec:tech-overview}, if we were to upper bound the size of the maximum matching constructed by algorithm on a single gadget (like~\cite{Kapralov13} does), we would need to consider a vertex cover that is defined by the terminal subcube $T_*$ and its downset. Our construction of a vertex cover in the concatenation of basic gadgets will use this approach, and we need (the downset of) the terminal subcube in one gadget to have `nice structure' when mapped to another gadget using the glueing map $\tau$. Our mapping $\Pi_k$ is useful for this purpose, because the terminal subcube of a subsequent gadget is a subcube defined by coordinates  in $[n]\setminus (I\cup I')$ (these coordinates are the set $\Lambda$ above), and Lemma~\ref{lm:pi-prop} shows that  this set is still a subcube after an application of $\Pi_k$.

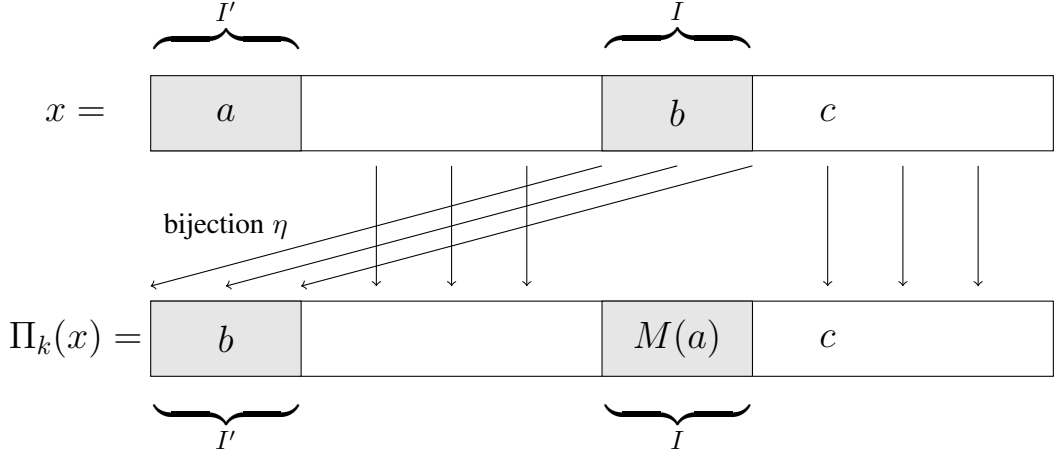
\begin{figure}
	\begin{center}
		\begin{tikzpicture}
		
		\draw (-6, 0) rectangle (+6, 1);		
		\draw[fill=gray!20] (-6, 0) rectangle (-4, 1);		
		\draw[fill=gray!20] (0, 0) rectangle (+2, 1);
		\draw (-7, 0.5) node {\Large $x=$};		
		\draw (-5, 0.5) node {\Large $a$};
		\draw (1, 0.5) node {\Large $b$};
		\draw (3, 0.5) node {\Large $c$};
		
		\draw (-5, 1.5) node {\Large $\overbrace{\phantom{a+b+c}}^{I'}$};
		\draw (+1, 1.5) node {\Large $\overbrace{\phantom{a+b+c}}^{I}$};		

		\draw (-6, 0-3) rectangle (+6, 1-3);
		\draw[fill=gray!20] (-6, 0-3) rectangle (-4, 1-3);
		\draw[fill=gray!20] (0, 0-3) rectangle (+2, 1-3);
		\draw (-7, 0.5-3) node {\Large $\Pi_k(x)=$};		
		\draw (-5, 0.5-3) node {\Large $b$};
		\draw (1, 0.5-3) node {\Large $M(a)$};
		\draw (3, 0.5-3) node {\Large $c$};
		
		\draw (-5, -3-0.5) node {\Large $\underbrace{\phantom{a+b+c}}_{I'}$};
		\draw (+1, -3-0.5) node {\Large $\underbrace{\phantom{a+b+c}}_{I}$};		
		
		\draw[->] (-3, -0.2) -- (-3, -1.8);
		\draw[->] (-2, -0.2) -- (-2, -1.8);
		\draw[->] (-1, -0.2) -- (-1, -1.8);		
		\draw[->] (+3, -0.2) -- (+3, -1.8);		
		\draw[->] (+4, -0.2) -- (+4, -1.8);	
		\draw[->] (+5, -0.2) -- (+5, -1.8);	

		\draw[->] (+0, -0.2) -- (+0-6, -1.8);	
		\draw[->] (+1, -0.2) -- (+1-6, -1.8);		
		\draw[->] (+2, -0.2) -- (+2-6, -1.8);	
		
		\draw (-5, -1) node {bijection $\eta$};

		\end{tikzpicture}
		\caption{Illustration of the map $\Pi_k$. Note that $\Pi_k$ simply leaves coordinates in $[n]\setminus (I\cup I')$, denoted by $c$, unchanged, copies coordinates in $I$ to coordinates in $I'$ using an arbitrarily chosen but fixed bijection $\eta$, and applies the map $M$ to coordinates in $I'$, assigning the result to coordinates in $I$.}	\duallabel{fig:pi-k}
	\end{center}
	
\end{figure}

We can now define

\begin{definition}[Glueing map $\tau$]\duallabel{def:tau}
For every $k\in [K/2]$ we define $\tau_k: S'_k\to T_*$ by letting $\tau_k(x):=\Pi_k(\rho_k(x))$.
Define 
$$
\tau: \biguplus_{k\in [K/2]} S'_k\to T_*
$$ by letting $\tau(x)=\tau_k(x)$ for $x\in S'_k$. 
\end{definition}

This completes the definition of the glueing map $\tau$. Note that so far we defined, for every $\ell\in [L]$, as basic gadget $G^\ell=(S^\ell, T^\ell, E^\ell)$ that is a geometric version of a $1/2$-KVV gadget used in~\cite{EpsteinLSW13}. The gadgets are parameterized by sequences $J^\ell\in \ac{\B}^\ell_0\times \dots \times \B^\ell_{K/2}$, where 
$$
[n]=\B^0\cup \B^1\cup \ldots\cup \B^{L-1}
$$
is a partition of $[n]$ into coordinate blocks, one for each round $\ell\in [L]$, which are in turn partitioned into disjoint subblocks $\B^\ell=\B^\ell_0\cup \ldots \cup \B^\ell_{K/2}$ corresponding to phases in which a given gadget is revealed to the algorithm. The sets $\ac{\B}^\ell_k$ are subsets of $\B^\ell_k$ of comparable size, equal to $\B^\ell_k$ minus the extension and compression indices for the corresponding phase.  We also defined bijections
$$
\tau^\ell: S^\ell \to T_*^{\ell-1}
$$
that we refer to as glueing maps. We now define our hard input distribution $\mathcal{D}$ on graphs $\wh{G}=(P, Q, \wh{E})$. A graph $\wh{G} \sim \mathcal{D}$ is sampled as follows.

First, for every round $\ell\in [L]$ and phase $k\in [K/2]$ one {\bf arbitrarily} selects the extension indices and compression indices appropriately -- we do not dwell on this here and refer the reader to Section~\ref{sec:simple-lb} for details. More importantly, one selects, for every $\ell\in [L]$ and $k\in [K/2]$,
$J^\ell_k\sim UNIF(\ac{\B}^\ell_k).$ Note that the vectors $J^\ell$ are the only random variables in the construction.

\paragraph{Edge set of $\wh{G}=(P, Q, \wh{E})$.} We first define
\begin{equation}\duallabel{eq:tau-star-def}
\tau_*(x)=\left\lbrace
\begin{array}{ll}
\tau^\ell(x)&\text{~if~}x\in S^\ell\text{~for~}\ell>0\\
x&\text{o.w.}.
\end{array}
\right.
\end{equation}
and  for every edge $e=(u, v)\in E^\ell, u\in S^\ell, v\in T^\ell,\ell\in [L]$ define $\tau_*(e)=(\tau_*(u), v)$. We now let 
\begin{equation}\duallabel{eq:g-hat-edges}
\wh{E}=\bigcup_{\ell\in [L]} \wh{E}^\ell,
\end{equation}
where 
\begin{equation}\duallabel{eq:g-hat-edges-ell}
\wh{E}^\ell=\bigcup_{k\in [K/2]}\bigcup_{j\in \ac{\B}^\ell_k} \tau_*(E^\ell_{k, j}),
\end{equation}
and $E^\ell_{k, j}$ is as in~\dualeqref{eq:edges-ei-def}.

\paragraph{Ordering of edges of $\wh{G}$ in the stream.} The graph $G^\ell=(S^\ell, T^\ell, E^\ell)$ is presented in the stream over $L$ {\em rounds} and $K/2$ {\em phases} as follows. For every $\ell\in \{1, \ldots, L-1\}$, for every $k\in [K/2]$, the edges in $\tau^\ell(E^\ell_k)$ are presented in the stream; the ordering within $\tau^\ell(E^\ell_k)$ is arbitrary.

The graph $\wh{G}$ contains a nearly perfect matching (intuitively, this is because in our gadgets the set $S^\ell$ can always be nearly perfectly matched to $T^\ell\setminus T_*^\ell$):
\begin{lemma}\duallabel{lm:large-matching-ghat}
The graph $\wh{G}=(P, Q, \wh{E})$ contains a matching of size $(1-O(1/L))|P|$.
\end{lemma}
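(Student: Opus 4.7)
The plan is to chain together the within-gadget matchings provided by Lemma~\dualref{lm:matching} via the gluing maps $\tau^\ell$. For each $\ell\in[L]$, Lemma~\dualref{lm:matching} produces a matching $M^\ell\subseteq E^\ell$ covering a $(1-O(1/K))$ fraction of $S^\ell$ by edges to $T^\ell\setminus T^\ell_*$. I would push each $M^\ell$ forward under $\tau_*$ (as in~\dualeqref{eq:tau-star-def}) to obtain a set of edges $\tau_*(M^\ell)\subseteq\wh{E}^\ell$: the ``$S$-side'' endpoint lands in $T_*^{\ell-1}$ for $\ell\geq 1$ (since $\tau^\ell$ is a bijection $S^\ell\to T_*^{\ell-1}$ by Definition~\dualref{def:tau}) or in $S^0$ for $\ell=0$ (where $\tau_*$ is the identity), while the ``$T$-side'' endpoint stays in $T^\ell\setminus T_*^\ell$. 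My candidate matching is $\wh M:=\bigcup_{\ell\in[L]}\tau_*(M^\ell)$.

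The main task is to verify that $\wh M$ is genuinely a matching in $\wh G$. Within a single gadget this is immediate, since $\tau^\ell$ is a bijection and $M^\ell$ is a matching. For distinct $\ell<\ell'$, the vertices touched by $\tau_*(M^\ell)$ lie in $(T^\ell\setminus T_*^\ell)\cup T_*^{\ell-1}$ (read $T_*^{-1}$ as $S^0$ when $\ell=0$), so an overlap with $\tau_*(M^{\ell'})$ is only possible when $\ell'=\ell+1$, and only inside $T^\ell$. At this potential overlap, the $T$-side endpoints of $\tau_*(M^\ell)$ sit in $T^\ell\setminus T_*^\ell$ while the $S$-side endpoints of $\tau_*(M^{\ell+1})$ sit in $\tau^{\ell+1}(S^{\ell+1})=T_*^\ell$, and these two subsets of $T^\ell$ are disjoint by construction. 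This disjointness is the crux of the argument and is exactly the reason the gluing map was designed to land inside the terminal subcube rather than elsewhere in $T^{\ell-1}$.

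For the count, $|S^\ell|=\alpha N=N/2$ (with $\alpha=1/2$), so each $\tau_*(M^\ell)$ contributes $(1-O(1/K))\cdot N/2$ edges, giving $|\wh M|=L\cdot(1-O(1/K))\cdot N/2=(1-O(1/K))|P|$, where $|P|=(L/2)N$ follows from the disjointness of the $T^\ell$ across gadgets (each gadget uses its own coordinate block $\B^\ell$, even though different gadgets may assign the same hypercube labels). Provided $K$ is taken at least proportional to $L$, this yields the claimed bound $(1-O(1/L))|P|$. I do not anticipate any serious obstacle beyond the disjointness check above; the rest is a direct count.
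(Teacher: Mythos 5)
Your proof is correct and follows essentially the same approach as the paper: both construct the matchings $M^\ell$ from the within-gadget matching lemma, push them forward under the gluing maps, and verify disjointness by noting that $\tau^\ell$ sends $S^\ell$ into $T_*^{\ell-1}$ while $M^{\ell-1}$ only touches $T^{\ell-1}\setminus T_*^{\ell-1}$. The only cosmetic difference is that you include $\tau_*(M^0)$ in the union while the paper restricts to $\ell>0$, which does not affect the asymptotic bound.
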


The central part of our analysis is consists of designing a convenient vertex cover that lets us upper bound the size of matching constructed by a low space algorithm.  The key concept underlying our analysis here is a map $\nu$ that we refer to as the {\em predecessor map}. The intuition for this map is a combination of the analysis of iterated KVV constructions in~\cite{EpsteinLSW13,DBLP:conf/soda/0002PTTWZ19}, which essentially amount to a fixed point computation (the one in \cite{EpsteinLSW13} is not phrased this way, but it appears that for our purposes this view is more useful). 

\begin{definition}[Predecessor map $\nu$]\duallabel{def:nu}
We define the map $\nu_{\ell, j}$ mapping {\em subsets $U\subseteq T^\ell$} to subsets of $T^{\ell-j}$ by induction on $j\geq 0$ as follows. For $j=0$ let
$\nu_{\ell, 0}(U):=U$. For $j>0$ let
$$
\nu_{\ell, j}(U):=\tau^{\ell-(j-1)}(\downset^{\ell-(j-1)}(\nu_{\ell, j-1}(U))).
$$
We define the {\em closure} map $\nu_{\ell, *}$ by
$$
\nu_{\ell, *}(U):=\bigcup_{\substack{j=0\\j \text{~even}}}^\ell \nu_{\ell, j}(U).
$$

We define the map $\mu_{\ell, j}$ mapping {\em subsets $U\subseteq T^\ell$} to subsets of $S^{\ell-j}$ by letting
$$
\mu_{\ell, j}(U):=\downset^{\ell-j}(\nu_{\ell, j}(U))
$$
for $j=0,\ldots, \ell$. We let
$$
\mu_{\ell, *}(U):=\bigcup_{\substack{j=0\\j \text{~even}}}^\ell \mu_{\ell, j}(U).
$$
\end{definition}
In the definition above we write $\downset^\ell$ to denote the downset map of the $\ell$-th basic gadget, and $\tau^\ell$ the glueing map of the $\ell$-th basic gadget.

We note that $\nu_{\ell, j}(U)$ is the set of vertices that the set $U$ can be traced back to through $j$ applications of the glueing maps $\tau$, interleaved with applications of the $\downset$ map (which is the reason we refer to $\nu$ as the predecessor map). Intuitively, this map is useful for our purposes because it allows us to find a vertex cover similar  to what we obtained in Lemma~\dualref{lm:vertex-cover} above, but at the same time consistent with the fixed point type argument implicit in~\cite{EpsteinLSW13} and explicit in~\cite{DBLP:conf/soda/0002PTTWZ19}. First let
\begin{equation}\duallabel{eq:apm-def}
\begin{split}
A_P&=\bigcup_{\substack{\ell\in [L]\\ \ell \text{~even}}} \nu_{\ell, *}(T^\ell\setminus T_*^\ell)\\
A_Q&=\bigcup_{\substack{\ell\in [L]\\ \ell\text{~odd}}} \nu_{\ell, *}(T^\ell\setminus T_*^\ell).
\end{split}
\end{equation}
and
\begin{equation}\duallabel{eq:bpm-def}
\begin{split}
B_Q&=\bigcup_{\substack{\ell\in [L]\\ \ell \text{~even}}} \tau_*(\mu_{\ell, *}(T^\ell\setminus T_*^\ell))\\
B_P&=\bigcup_{\substack{\ell\in [L]\\ \ell\text{~odd}}} \tau_*(\mu_{\ell, *}(T^\ell\setminus T_*^\ell)).
\end{split}
\end{equation}
We prove that $A_P\cap B_P=\emptyset$ and $A_P\cup B_P\approx P$ (similarly for $A_Q$ and $B_Q$), as well as prove the following upper bounds on the cardinality of $B_P$ and $B_Q$ (which translates to the size of our vertex cover; recall that $N$ is the number of vertices in $T^\ell$):
\begin{lemma}[See Lemma~\ref{lm:sizeof-ab} in Section~\ref{sec:toy-construction}]\duallabel{lm:sizeof-ab}
One has $|B_P|\leq (1+O(1/L))\cdot \frac{L}{2}\cdot \frac{N}{2}\cdot\frac1{1+\ln 2}$ and 
$|B_Q|\leq (1+O(1/L))\cdot \frac{L}{2}\cdot \frac{N}{2}\cdot\frac1{1+\ln 2}$.
\end{lemma}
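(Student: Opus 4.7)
The plan is to first re-express $B_P$ in terms of the maps $\nu$ in a form amenable to counting. Unfolding the definitions gives the key identity
\[
\tau_*(\mu_{\ell, j}(U)) \;=\; \tau^{\ell-j}\bigl(\downset^{\ell-j}(\nu_{\ell, j}(U))\bigr) \;=\; \nu_{\ell, j+1}(U),
\]
so that $\tau_*(\mu_{\ell,*}(U^\ell)) = \bigcup_{j' \text{ odd},\, 1 \le j' \le \ell} \nu_{\ell, j'}(U^\ell)$, where $U^\ell := T^\ell \setminus T_*^\ell$. A straightforward induction on $j'$ shows that $\nu_{\ell, j'}(U^\ell) \subseteq T_*^{\ell - j'}$ for every $j' \ge 1$, since each glueing map $\tau^m$ lands inside $T_*^{m-1}$. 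Reorganizing the union defining $B_P$ by the target layer $m := \ell - j'$ (which is even whenever both $\ell$ and $j'$ are odd) yields
\[
B_P \;=\; \bigcup_{\substack{m \in [L]\\ m \text{ even}}}\; \bigcup_{\substack{\ell \in [L],\, \ell \text{ odd}\\ \ell > m}} \nu_{\ell, \ell-m}(U^\ell),
\]
and hence $|B_P| \le \sum_{m \text{ even}} \sum_{\ell \text{ odd},\, \ell > m} |\nu_{\ell, \ell-m}(U^\ell)|$ by the union bound.

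The second step is to establish, by induction on $j$, the size bound
\[
|\nu_{\ell, j}(U^\ell)| \;\le\; (1 - \ln 2)\,(\ln 2)^{j-1} \cdot \tfrac{N}{2}\cdot (1 + O(1/K)) \qquad \text{for every } j \ge 1.
\]
For the base case $j = 1$, write $\nu_{\ell, 1}(U^\ell) = \tau^\ell(\downset^\ell(U^\ell)) = \tau^\ell(S^\ell \setminus \downset^\ell(T_*^\ell))$, and apply Lemma~\ref{lm:u-subsampling} together with the harmonic estimate $\sum_{k=0}^{K/2-1} 1/(K-k) = H_K - H_{K/2} = \ln 2 + O(1/K)$ to get $|\downset^\ell(T_*^\ell)| = (\ln 2 + O(1/K))\cdot N/2$, and hence $|\nu_{\ell,1}(U^\ell)| = (1-\ln 2)\cdot N/2 \cdot (1 + O(1/K))$. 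For the inductive step, the key point is that $\nu_{\ell, j}(U^\ell) \subseteq T_*^{\ell-j}$ inherits a union-of-subcubes structure from the combined action of the densifying map $\rho_k$ (which by Lemma~\dualref{lm:densification} converts the weight-based sampling in the definition of $S_k$ into a coordinate-aligned subcube constraint) and the subcube permutation map $\Pi_k$ of Definition~\dualref{def:pi-k} (which is designed to map subcubes to subcubes). This structure is exactly what is needed for Lemma~\ref{lm:u-subsampling} to yield $|\downset^{\ell-j}(\nu_{\ell,j}(U^\ell))| \le (\ln 2 + O(1/K)) \cdot |\nu_{\ell,j}(U^\ell)|$; combined with the bijectivity of $\tau^{\ell-j}$ this advances the induction.

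Finally, for each fixed even $m \in [L]$ we sum the resulting geometric series:
\[
\sum_{\ell \text{ odd},\, \ell > m} |\nu_{\ell, \ell-m}(U^\ell)| \;\le\; \frac{N}{2}\sum_{\substack{j \ge 1\\ j \text{ odd}}} (1-\ln 2)(\ln 2)^{j-1} \;=\; \frac{N}{2}\cdot \frac{1-\ln 2}{1 - (\ln 2)^2} \;=\; \frac{N}{2}\cdot \frac{1}{1+\ln 2},
\]
up to a $(1 + O(1/K + 1/L))$ factor absorbing the tail of the geometric series and the error terms from Lemma~\ref{lm:u-subsampling}. Since there are exactly $L/2$ even values of $m$ in $[L]$, multiplying through gives the claimed bound on $|B_P|$. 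The bound on $|B_Q|$ follows by the same argument with the parities of $\ell$ and $m$ swapped throughout; the boundary case $\ell - j = 0$ (where $\tau_*$ acts as the identity on $S^0 \subseteq Q$) fits into the induction without modification.

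The main obstacle is the inductive step: one needs $\nu_{\ell, j}(U^\ell)$ to retain enough subcube-like structure inside $T_*^{\ell-j}$ for Lemma~\ref{lm:u-subsampling} to yield the clean factor of $\ln 2$ at every iteration. Verifying this requires careful bookkeeping of which coordinates in each block $\B^{\ell-j+1}$ have been consumed by the extension and compression indices at each stage, so that the hypothesis of Lemma~\ref{lm:u-subsampling} (that the set under consideration does not depend on the coordinate being sampled) continues to hold after every composition of $\rho$'s and $\Pi$'s. Once this structural invariant is in place, the remainder of the proof is the telescoping geometric-series computation above.
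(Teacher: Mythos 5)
Your outer argument---re-expressing $B_P$ via the $\nu$ maps, regrouping by the target layer $m=\ell-j'$, and summing a geometric series---arrives at the same arithmetic as the paper, but by a more roundabout route. The paper's proof simply bounds $|B_Q|$ by the sum over even $\ell$ of $|\mu_{\ell,*}(T^\ell\setminus T_*^\ell)|$, expands each term as a sum over even $j$, invokes Lemma~\ref{lm:mu-ell-j} for the per-term estimate $|\mu_{\ell,j}(T^\ell\setminus T_*^\ell)|\le\frac12(\ln 2+C/K)^j(1-\ln 2)|T^\ell|$, sums the geometric series in $j$, and multiplies by the $L/2$ values of $\ell$. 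Your regrouping by $m$ is not needed; the inner sums can be computed per $\ell$.

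The genuine gap is in your second step. You set up an induction on $j$ to prove $|\nu_{\ell,j}(T^\ell\setminus T_*^\ell)|\le(1-\ln 2)(\ln 2)^{j-1}\frac{N}{2}(1+O(1/K))$, handle the base case $j=1$ correctly, and then assert the inductive step follows because $\nu_{\ell,j}(T^\ell\setminus T_*^\ell)$ inherits a ``union-of-subcubes structure'' so that Lemma~\ref{lm:u-subsampling} gives $|\downset^{\ell-j}(\nu_{\ell,j}(T^\ell\setminus T_*^\ell))|\le(\ln 2+O(1/K))|\nu_{\ell,j}(T^\ell\setminus T_*^\ell)|$. You correctly identify this structural invariant as the crux, and you even flag it as the main obstacle, noting it requires bookkeeping that the hypothesis of Lemma~\ref{lm:u-subsampling} survives each composition of $\rho_k$ and $\Pi_k$. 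But you never actually verify this: you name the obstacle and declare it resolvable. That verification is precisely the content of Lemma~\ref{lm:mu-ell-j} (equivalently, via Claim~\ref{cl:mu-prop}, a bound on $|\nu_{\ell,j}|$), which is established in the paper through the rectangle-preservation Lemma~\ref{lm:rect-nu-j}; neither is trivial, and their proofs are where the real work of this section lives. Lemma~\ref{lm:mu-ell-j} is stated earlier in the paper and is available to you; it should simply be cited. As written, your argument has an unfilled hole at exactly the place where the paper appeals to that lemma.
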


\begin{lemma}[See Lemma~\ref{lm:vertex-cover} in Section~\ref{sec:toy-construction}]\duallabel{lm:vertex-cover}
For every matching $M$ in $G$ one has 
$$
|M\cap (A_P\times (Q\setminus B_Q))|+\frac1{1+\ln 2}|P|+O(|P|/L).
$$
\end{lemma}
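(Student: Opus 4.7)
My plan is to interpret the bound as an upper bound on $|M|$ and produce it via a König-style vertex cover argument: take $C:=B_P\cup B_Q$ as a partial vertex cover of $\wh{G}=(P,Q,\wh{E})$, control the residual uncovered edges using the near-partition properties of $(A_P,B_P)$ and $(A_Q,B_Q)$, and then plug in the size bounds of Lemma~\ref{lm:sizeof-ab}.

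First I would split $|M|=|M\cap F|+|M\setminus F|$ where
$$F:=\wh{E}\cap \bigl((P\setminus B_P)\times (Q\setminus B_Q)\bigr)$$
is the set of edges of $\wh{G}$ with neither endpoint in $C$. Every edge in $M\setminus F$ has at least one endpoint in $C$, and since $M$ is a matching, distinct edges of $M\setminus F$ hit distinct vertices of $C$; hence $|M\setminus F|\le |C|=|B_P|+|B_Q|$.

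Next I would use the structural facts asserted just before Lemma~\ref{lm:sizeof-ab}, namely $A_P\cap B_P=\emptyset$ with $|P\setminus (A_P\cup B_P)|=O(|P|/L)$ (and the analogous statement on the $Q$-side). These give $P\setminus B_P\subseteq A_P\cup R_P$ with $|R_P|=O(|P|/L)$, so
$$F\subseteq \bigl(A_P\times (Q\setminus B_Q)\bigr)\cup \bigl(R_P\times Q\bigr).$$
Because $M$ is a matching, $|M\cap (R_P\times Q)|\le |R_P|=O(|P|/L)$, and therefore
$$|M\cap F|\le |M\cap (A_P\times (Q\setminus B_Q))|+O(|P|/L).$$

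Finally, Lemma~\ref{lm:sizeof-ab} together with $|P|=\tfrac{L}{2}\cdot N$ gives $|B_P|+|B_Q|\le (1+O(1/L))\cdot \tfrac{|P|}{1+\ln 2}$. Combining the three steps,
$$|M|\le |M\cap (A_P\times (Q\setminus B_Q))|+\frac{|P|}{1+\ln 2}+O(|P|/L),$$
which is the stated bound. The main obstacle is not the present lemma but the near-partition claims $A_P\cap B_P=\emptyset$ and $|P\setminus (A_P\cup B_P)|=O(|P|/L)$: these require unfolding the predecessor map $\nu_{\ell,*}$ through $\tau^\ell=\Pi_k\circ \rho_k$ and using the densification Lemma~\ref{lm:densification} together with the subcube-preservation property of $\Pi_k$ noted after Definition~\ref{def:pi-k}. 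Once those facts are in hand, the argument above is a short cover-based bookkeeping step.
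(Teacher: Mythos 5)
Your proposal is correct and follows essentially the same route as the paper: build a vertex cover for $M$ out of $B_P$, $B_Q$ (plus one endpoint of each edge in $A_P\times(Q\setminus B_Q)$), absorb the residual $P\setminus(A_P\cup B_P)$ into the $O(|P|/L)$ term using Lemma~\ref{lm:partition}, and finish with Lemma~\ref{lm:sizeof-ab}. The paper phrases the cover as $(P\setminus A_P)\cup B_Q$ and uses $|P\setminus A_P|\le |B_P|+O(N)$, which is the same bookkeeping you perform with $R_P$; you also correctly identify Claim~\ref{cl:ab-disjoint} and Lemma~\ref{lm:partition} as the genuine inputs that do the real work.
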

\begin{proofsketch}
Similarly to our analysis above with a single $\alpha$-KVV gadget, we exhibit a vertex cover for $M$. Specifically, we add to the vertex cover one endpoint of every edge in
$$
M\cap (A_P\times (Q\setminus B_Q)),
$$
as well as all vertices in $P\setminus A_P\approx B_P$ and $B_Q$ to the vertex cover. Note that this is indeed a vertex cover: $A_P\cap B_P=\emptyset$ and $A_Q\cap B_Q=\emptyset$, so every edge of $M$ either has an endpoint in $P\setminus A_P$, or belongs to $A_P\times (Q\setminus B_Q)$, or belongs to $A_P\times B_Q$, in which case it has an endpoint in $B_Q$.
The size of the vertex cover is 
\begin{equation}\duallabel{eq:23t77Gy8uGBBYVF}
\begin{split}
&|M\cap (A_P\times (Q\setminus B_Q))|+|P\setminus A_P|+|B_Q|\\
\approx &|M\cap (A_P\times (Q\setminus B_Q))|+|B_P|+|B_Q|,
\end{split}
\end{equation}
where we used the fact that $P\setminus A_P\approx B_P$ (see~Lemma~\dualref{lm:vertex-cover} for the precise version of this statement). By Lemma~\dualref{lm:sizeof-ab} we have 
\begin{equation*}
\begin{split}
|B_P|&\leq \frac{L}{2}\cdot \frac{N}{2}\frac1{1+\ln 2}(1+O(1/L))\\
&\text{and}\\
|B_Q|&\leq \frac{L}{2}\cdot \frac{N}{2}\frac1{1+\ln 2}(1+O(1/L)).
\end{split}
\end{equation*}
Putting the above together with~\dualeqref{eq:23t77Gy8uGBBYVF} and recalling that
$$
|P|=\left|\bigcup_{\substack{\text{even~}\ell\in [L]}} T^\ell\right|=L\cdot N/2
$$
gives the result.
\end{proofsketch}

The equivalent of our key lemma(Lemma~\dualref{lm:key} in the simple analysis above) is given by
\begin{lemma}[See Lemma~\ref{lm:special-edges} in Section~\ref{sec:toy-construction}]\duallabel{lm:special-edges}
For every matching $M\subseteq \wh{E}$ one has 
$$
M\cap (A_P\times (Q\setminus B_Q))\subseteq \bigcup_{\ell\in [L], k\in [K/2]} \tau^\ell(E^\ell_{k, J^\ell_k}).
$$
\end{lemma}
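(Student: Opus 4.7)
The plan is to prove the contrapositive: assume $(a,b)\in M$ satisfies $a\in A_P$ and $b\in Q\setminus B_Q$, and write $(a,b)=\tau_*(u,v)$ with $(u,v)\in E^\ell_{k,j}$ for some $\ell\in[L]$, $k\in[K/2]$, $j\in\ac{\B}^\ell_k$; the goal is to show $j=J^\ell_k$. First I would reformulate $A_P$ and $B_Q$ using the elementary identity $\tau_*(\mu_{\ell',s}(U))=\nu_{\ell',s+1}(U)$, immediate from Definition~\dualref{def:nu} (with the convention that $\tau^0$ is the identity, so the identity extends to $\ell'-s=0$). This exhibits the two sets as the even- and odd-depth slices of one predecessor family indexed by even $\ell'$:
$$A_P=\bigcup_{\ell'\text{ even}}\bigcup_{\substack{s\geq 0\\s\text{ even}}}\nu_{\ell',s}(T^{\ell'}\setminus T_*^{\ell'}),\qquad B_Q=\bigcup_{\ell'\text{ even}}\bigcup_{\substack{s\geq 1\\s\text{ odd}}}\nu_{\ell',s}(T^{\ell'}\setminus T_*^{\ell'}).$$

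The main obstacle is a coordinate-invariance claim: if $j\neq J^\ell_k$, then for every even $\ell'>\ell$ and every $s\geq 1$ with $\ell'-s=\ell$, the set $\nu_{\ell',s}(T^{\ell'}\setminus T_*^{\ell'})\subseteq T_*^\ell$ is invariant under changes of coordinate $j$. To prove this I would unfold the defining chain, which applies $\tau^{\ell'},\tau^{\ell'-1},\ldots,\tau^{\ell+1}$ interleaved with $\downset$ operations. Each $\downset$ preserves labels, and by the structure of the glueing map each $\tau^{s'}$ touches only coordinates in $\B^{s'-1}\cup\B^{s'}$ (via the subcube permutation $\Pi$ and the densifier $\rho$). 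Within $\B^\ell$ only $\tau^{\ell+1}$ can act, and the indices it touches there form $I^{(\ell+1)}=J^\ell\cup\{r^\ell\}$; using $r^\ell\in\B^\ell_{K/2}$ and $k<K/2$ gives $I^{(\ell+1)}\cap\B^\ell_k=\{J^\ell_k\}$. Since $j\in\ac{\B}^\ell_k=\B^\ell_k\setminus(\Ext^\ell_k\cup\{q^\ell_k\})$ and $j\neq J^\ell_k$, coordinate $j$ is untouched throughout the chain, yielding the invariance.

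With the invariance in hand I would split on the parity of $\ell$, still assuming $j\neq J^\ell_k$ for contradiction. For $\ell$ even ($a=v$, $b=\tau^\ell(u)$): $b\notin B_Q$ combined with $\nu_{\ell,1}(T^\ell\setminus T_*^\ell)=\tau^\ell(\downset^\ell(T^\ell\setminus T_*^\ell))\subseteq B_Q$ and bijectivity of $\tau^\ell$ on $S^\ell$ forces $u\in\downset^\ell(T_*^\ell)$ via the disjoint decomposition $S^\ell=\downset^\ell(T^\ell\setminus T_*^\ell)\uplus\downset^\ell(T_*^\ell)$. From $a\in A_P$ there exist an even $s^*\geq 0$ and an even $\ell^*=\ell+s^*$ with $v\in\nu_{\ell^*,s^*}(T^{\ell^*}\setminus T_*^{\ell^*})$. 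If $s^*=0$ then $v\in T^\ell\setminus T_*^\ell$ and Lemma~\dualref{lm:key} forces $(u,v)\in E^\ell_{k,J^\ell_k}$, contradicting $j\neq J^\ell_k$. If $s^*\geq 2$ then $\lbl(u)$ and $v$ agree outside coordinate $j$ (they lie on a common line in direction $j$), so by invariance $\lbl(u)\in\nu_{\ell^*,s^*}(T^{\ell^*}\setminus T_*^{\ell^*})$; hence $u\in\mu_{\ell^*,s^*}(T^{\ell^*}\setminus T_*^{\ell^*})$ and $b=\tau^\ell(u)\in\nu_{\ell^*,s^*+1}(T^{\ell^*}\setminus T_*^{\ell^*})\subseteq B_Q$ (as $\ell^*$ is even and $s^*+1$ is odd), contradicting $b\notin B_Q$.

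The odd $\ell$ case is symmetric: $a=\tau^\ell(u)\in A_P\cap T_*^{\ell-1}$ forces $a\in\nu_{\ell^*,s^*}(T^{\ell^*}\setminus T_*^{\ell^*})$ for some even $\ell^*=\ell-1+s^*$ and even $s^*\geq 2$ ($s^*=0$ would place $a$ in $T^{\ell-1}\setminus T_*^{\ell-1}$). Unwinding one $\tau^\ell$ layer via bijectivity gives $\lbl(u)\in\nu_{\ell^*,s^*-1}(T^{\ell^*}\setminus T_*^{\ell^*})\subseteq T_*^\ell$, so $u\in\downset^\ell(T_*^\ell)$. If $v\in T^\ell\setminus T_*^\ell$ then Lemma~\dualref{lm:key} yields the desired conclusion; if $v\in T_*^\ell$ then by invariance $v\in\nu_{\ell^*,s^*-1}(T^{\ell^*}\setminus T_*^{\ell^*})\subseteq B_Q$ (as $\ell^*$ is even and $s^*-1$ is odd), contradicting $b=v\in Q\setminus B_Q$.
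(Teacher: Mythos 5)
Your proposal is correct and follows essentially the same strategy as the paper's proof: fix an edge $(a,b)\in M\cap(A_P\times(Q\setminus B_Q))$, write it as $\tau_*(u,v)$ for $(u,v)\in E^\ell_{k,j}$, assume for contradiction that $j\neq J^\ell_k$, and exploit the fact that membership in the predecessor family $\nu_{\ell',s}(T^{\ell'}\setminus T_*^{\ell'})$ is preserved under changes of the non-special coordinate $j$ to force one endpoint into $B_Q$. Your ``invariance under changes of coordinate $j$'' is precisely Lemma~\ref{lm:cut-structure} (via Corollary~\ref{cor:cut-structure}) applied to the special case where $x,y$ agree on all of $\Gamma$, except you re-derive it by informally unfolding the chain rather than citing the formal inductive argument; the rigorous version of ``$\tau^{\ell+1}$ touches only $J^\ell\cup\{r^\ell\}$ inside $\B^\ell$, and the preimages stay in the same phase and agree on $\Gamma$'' is Lemma~\ref{lm:tau-inverse-prop}. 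The minor organizational differences — you split on the parity of the gadget index $\ell$ where the edge $(u,v)$ lives rather than on whether $v\in T^{\ell-1}$ or $v\in T^{\ell+1}$, and you invoke the single-gadget Lemma~\dualref{lm:key} for the $s^*=0$ (resp. $v\notin T_*^\ell$) base cases instead of running Corollary~\ref{cor:cut-structure} uniformly — are cosmetic; both base cases are in fact vacuous under the contradiction hypothesis $j\neq J^\ell_k$, since $j\notin\J^\ell$ then forces $v$ and $\lbl(u)$ to lie on the same side of $T_*^\ell$. The reformulation of $A_P$ and $B_Q$ into even/odd depth slices via $\tau_*(\mu_{\ell',s}(U))=\nu_{\ell',s+1}(U)$ is also what the paper uses in Claim~\ref{cl:ab-disjoint}, so the bookkeeping matches.
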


The proof relies on the fact that our glueing maps $\tau^\ell$ only use a small number of coordinates, and map entire (sampled) subspaces in $S'_k$ to subspaces in $T_*$.

\paragraph{Comparison of our vertex cover with that from Lemma~\dualref{lm:vertex-cover-simple}.} We note that, naturally, there are similarities between the vertex cover that we use to obtain the $\frac1{1+\ln 2}$ hardness and the vertex cover from Lemma~\dualref{lm:vertex-cover-simple}. Indeed, as per the proof (sketch) of Lemma~\dualref{lm:vertex-cover-simple} the vertex cover contains one endpoint of every edge in 
\begin{equation}\duallabel{eq:239yt9t}
M\cap (A_P\times (Q\setminus B_Q)),
\end{equation}
as well as all vertices in $P\setminus A_P\approx B_P$ and $B_Q$.  Similarly to~\dualeqref{eq:low-order} above, we show that using Lemma~\dualref{lm:special-edges} that the contribution of~\dualeqref{eq:239yt9t} can essentially be ignored. Thus, up to lower order terms, the vertex cover is the union of $B_P$ and $B_Q$. Recall that as per~\dualeqref{eq:bpm-def}
\begin{equation*}
\begin{split}
B_Q&=\bigcup_{\substack{\ell\in [L]\\ \ell \text{~even}}} \tau_*(\mu_{\ell, *}(T^\ell\setminus T_*^\ell))\\
&\text{and}\\
B_P&=\bigcup_{\substack{\ell\in [L]\\ \ell\text{~odd}}} \tau_*(\mu_{\ell, *}(T^\ell\setminus T_*^\ell)).
\end{split}
\end{equation*}
The application of $\tau_*$ above can be ignored for intuition, and we consider terms of the form 
$$
\mu_{\ell, *}(T^\ell\setminus T_*^\ell)=\bigcup_{\substack{j=0\\j \text{~even}}}^\ell \mu_{\ell, j}(T^\ell\setminus T_*^\ell)
$$
in the definition of $B_Q$ above, where as per Definition~\dualref{def:nu} one has
$$
\mu_{\ell, j}(T^\ell\setminus T_*^\ell):=\downset^{\ell-j}(\nu_{\ell, j}(T^\ell\setminus T_*^\ell))
$$
for $j=0,\ldots, \ell$.  Note that the $j=0$ term above in particular gives 
\begin{equation*}
\begin{split}
\mu_{\ell, 0}(T^\ell\setminus T_*^\ell)&=\downset^\ell(\nu_{\ell, 0}(T^\ell\setminus T_*^\ell))\\
&=\downset^\ell(T^\ell\setminus T_*^\ell)\\
&=\downset^\ell(T^\ell)\setminus \downset^\ell(T_*^\ell)\\
&=S^\ell\setminus \downset^\ell(T_*^\ell),\\
\end{split}
\end{equation*}
where we used the fact that $\nu_{\ell, 0}$ is the identity map and the fact that $S^\ell=\downset^\ell(T^\ell)$. Note that the last term in the equation above matches the second term in Lemma~\dualref{lm:vertex-cover-simple} (the first term is neglible, as we established). The other term in Lemma~\dualref{lm:vertex-cover-simple} is the terminal subcube itself, and is not present in our vertex cover since it is carefully split into different subsets, only some of which are added to the vertex cover -- see  Lemma~\ref{lm:t-star-ell-rec} for a formal statement supporting this intuition (only a subset of the terms on the rhs of that lemma contribute to the vertex cover that we defined above, due to parity constraints).

\paragraph{Overview of the main construction.} Our main construction, presented in Section~\ref{sec:main-result-full} onwards, basically follows the logic outlined above and made precise in Section~\ref{sec:toy-construction}. However, instead of using orgthogonal directions, we use nearly orthogonal vectors, which gives the stronger lower bound of $N^{1+\Omega(1/\log\log N)}$ even for the streaming model of computation (as opposed to just our stylized generalized online algorithms model from Section~\ref{sec:toy-construction}). We made an effort to make the exposition of the main construction follow quite closely the simple model we present in Section~\ref{sec:toy-construction}. Still, the setting is different and new technical ideas are needed, mostly revolving around the fact that in the real construction we lose product structure, which leads to multiple error terms that need to be handled carefully. At a high level, we resolve this issue by defining various relevant maps `locally'. Specifically, the definition of the {\em local permutation map} $\Pi$ is roughly equivalent to (a concatenation of) our maps $\Pi_k$ above, but works by first partitioning the space into appropriately defined low dimensional `subspaces' and defining the map on every such subspace (see Section~\ref{sec:loc-pi-full} for details). Intuitively, the reason for this is the fact that we only use nearly orthogonal vectors to define the edge set of the graph, and therefore all our maps need to be performing rather local operations, in order to avoid a degradation in the amount of orthogonality that we have.

\paragraph{Organization.} The rest of the paper is organized as follows. In Section~\ref{sec:toy-construction} we prove Theorem~\ref{thm:main-simple}. Then main construction is then presented in Sections~\ref{sec:main-result-full} onwards. We have invested effort into ensuring that the structure of the proof in Section~\ref{sec:toy-construction} follows quite closely the structure of the main proof. As a consequence, subsections of Section~\ref{sec:toy-construction} are in rather good correspondence with sections~\ref{sec:main-result-full} onwards of the main paper.

\section{Warm-up: a toy construction for the generalized online model}\label{sec:toy-construction}

In this section we provide a toy version of our lower bound instance that shows that no generalized online algorithm (see Definition~\ref{def:gen-online-intro}) with space $s=o(|P|\log |P|)$ can obtain a better (by an absolute constant) than $\frac1{1+\ln 2}$ approximation. Formally, we prove Theorem~\ref{thm:main-simple}, restated here for convenience of the reader:

\noindent{\em {\bf Theorem~\ref{thm:main-simple}}
There exists a distribution $\mathcal{D}$ on input graphs $G=(V, E)$ with $n$ vertices such that any generalized algorithm that finds a $(\frac1{1+\ln 2}+\eta)$-approximation to the maximum matching in $G$ with probability at least $0.9$ must remember $\Omega(n\log n)$ edges.}

\newcommand{\E}{{\mathcal E}}

We start by defining basic gadget graphs $G^\ell=(S^\ell, T^\ell, E^\ell)$ for $\ell\in [L]$ for an even integer $L$. Then input graph $G=(P, Q, E)$ is then an edge disjoint (but not vertex disjoint) union of graphs $G^\ell$. Specifically, the $P$ side of the bipartition will be 
\begin{equation}\label{eq:p-def}
P=\bigcup_{\text{even~}\ell\in [L]} T^\ell
\end{equation}
and the $Q$ side of the bipartition will be
\begin{equation}\label{eq:q-def}
Q=S^0\cup\bigcup_{\text{odd~}\ell\in [L]} T^\ell.
\end{equation}
Note that among the sets $S^\ell$ only the set $S^0$ belongs to the vertex set of $G$.This is because we obtain $G$ by glueing together instances of $G^\ell, \ell\in [L]$, using carefully designed maps $\tau^\ell$. For every $\ell=1,\ldots, L/2-1$ the map $\tau^\ell$ maps $S^\ell$ bijectively to a special subset $T^{\ell-1}_*$ of $T^{\ell-1}$ that we refer to as the {\em terminal subcube}.  Thus we have
$$
\tau^\ell: S^\ell\to T^{\ell-1}_*.
$$

\paragraph{Organization.} In what follows we first set up basic notation in Section~\ref{sec:notation}, then specify global parameter setting in Section~\ref{sec:params}. We then define our basic gadgets $G^\ell$ in Section~\ref{sec:basic-gadgets}. We then define auxiliary transformations, namely sparsification and densification operations, in Section~\ref{sec:densification}. We then define the glueing maps $\tau^\ell$ in Section~\ref{sec:tau-ell}. Another key object in our analysis, the predecessor map $\nu$, is defined in Section~\ref{sec:nu}  -- this map is key to defining a good upper bound for the matching $M_{ALG}$ constructed by a small space algorithm. Finally we put the pieces together and give a proof of Theorem~\ref{thm:main-simple} in Section~\ref{sec:simple-lb}.

\subsection{Notation and preliminaries}\label{sec:notation}

We start by setting up notation for the construction of basic gadgets $G^\ell=(S^\ell, T^\ell, E^\ell)$.  For every $\ell \in [L]$ we have $|T^\ell|=N=m^n$, and have $|S^\ell|=N/2$. For every $T^\ell$ we select a subset (referred to as the terminal subcube of $T^\ell)$, denoted by $T^\ell_*$, and for $\ell>0$ carefully map vertices of $S^\ell$ bijectively to $T^{\ell-1}_*$.  For every $\ell$ every vertex in $T^\ell$ and $S^\ell$ is equipped with a label from $[m]^n$ that we denote by 
$$
\lbl: P\cup Q\to [m]^n.
$$
For a pair of vertices $x\in P$ and $y\in Q$ we write $x\delequal y$ if $\lbl(x)=\lbl(y)$. For every $\ell\in [L]$ vertices in $T^\ell$ have distinct labels. The set $S^\ell$ will be partitioned into disjoint sets $S^\ell=S^\ell_0\cup\ldots\cup S^\ell_{K/2-1}$, and for every $k\in [K/2]$ vertices in $S^\ell_k$ also have distinct labels (their labels are a subset of the labels of $T^\ell$).  Thus, we will often think of vertices in $G^\ell$ as points in the hypercube when we think of vertices in $T^\ell$, or vertices in $S^\ell_k$ and $k$ is fixed. Throughout the paper we use the notation $[a]=\{0, 1,\ldots, a-1\}$ for a positive integer $a$. We partition $[n]$ into disjoint subsets 
$$
[n]=\B^0\cup \B^1\cup \ldots \cup \B^{L-1}
$$ of equal size, i.e. $|\B^\ell|=n/L$ for every $\ell \in [L]$. For every $\ell$ we further partition $\B^\ell$ as 
$$
\B^\ell=\B^\ell_0\cup\ldots\cup \B^\ell_{K/2},
$$ where $|\B^\ell_k|=\frac{n}{L(K/2+1)}$, corresponding to $K/2$ {\em phases} in which the graph $G^\ell$ will be presented in the stream.

\paragraph{Special indices.} The $\ell$-th graph $G^\ell$ is parameterized by a vector 
$$
J^\ell\in \B^\ell_0\times \ldots\times \B^\ell_{K/2}
$$ of indices.  For $\ell\in [L]$ and $k\in [K]$ we use the notation 
$J^\ell_{<k}:=(J^\ell_0,\ldots, J^\ell_{k-1})$ and $J^\ell_{\geq k}:=(J^\ell_k,\ldots, J^\ell_{K/2}).$

\begin{definition}[Compression and extension indices]\label{def:cext} For every $\ell\in [L], \ell>0,$ the map $\tau^\ell$ is parameterized by index $r^\ell\in \B^\ell_{K/2}$, referred to as the compression index for the terminal subcube $T^\ell_*$, as well as a collection of auxiliary coordinates for every $k\in [K/2]$:
\begin{itemize}
\item $\Ext^\ell_k\subseteq \B^\ell_k$ is a subset of size $K/2+1-k$ referred to as the {\em extension indices} in phase $k$ of round $\ell$;
\item $q^\ell_k\in \B^\ell_k\setminus \Ext^\ell_k$ referred to as the {\em compression index} for the $k$-th phase of round $\ell$.
\end{itemize}
We let $\ac{\B}^\ell_k=\B^\ell_k\setminus (\Ext^\ell_k\cup \{q^\ell_k\})$ and let $\ac{\B}^\ell_{K/2}=\B^\ell_{K/2}\setminus \{r^\ell\}$.
\end{definition}

\begin{property}\label{prop:q-k}
We will ensure that for every $\ell$ and $k\in [K/2+1]$ one has $J^\ell_k\in \ac{\B}^\ell_k$, and in particular $J^\ell\cap \left(\{r^\ell\}\cup \bigcup_{k\in [K/2]} \Ext^\ell_k\cup \{q^\ell_k\}\right)=\emptyset$.
\end{property}

For convenience of notation we introduce 
\begin{definition}[Special coordinates]\label{def:special-coords}
For every $\ell$ we define the {\em special coordinates} in $\B^\ell$ by $\Sp(\B^\ell):=J^\ell\cup \{r^\ell\}.$
We also let $\Sp(\B^{\geq \ell}):=\bigcup_{j\geq \ell} \Sp(\B^\ell)$. 

\end{definition}

\begin{definition}[Weight of $x\in {[m]^n}$]\label{def:weight}
For every $x\in [m]^n$ we define $\weight(x)=\sum_{j\in [n]} x_j$.
\end{definition}

We will use 
\begin{claim}\label{cl:sum-int}
There exists an absolute constant $C>0$ such that for every integer $K>0$ greater than an absolute constant one has 
$\ln 2-1/K\leq \sum_{k\in [K/2]} \frac1{K-k}\leq \ln 2.$
\end{claim}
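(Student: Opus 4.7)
The plan is to recognize the sum as a difference of harmonic numbers and then sandwich it between appropriate integrals.

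First I would change variables by setting $j = K - k$. As $k$ ranges over $[K/2] = \{0, 1, \ldots, K/2 - 1\}$, the index $j$ ranges over $\{K/2+1, K/2+2, \ldots, K\}$, so
\[
\sum_{k\in [K/2]} \frac{1}{K-k} \;=\; \sum_{j=K/2+1}^{K} \frac{1}{j} \;=\; H_K - H_{K/2}.
\]
This reformulation makes it transparent that the quantity is approximately $\ln 2$ and reduces the claim to a standard harmonic tail estimate.

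For the upper bound I would use that $1/x$ is decreasing, so for each integer $j \geq 1$ one has $1/j \leq \int_{j-1}^{j} dx/x$. Summing over $j$ from $K/2+1$ to $K$ gives
\[
\sum_{j=K/2+1}^{K} \frac{1}{j} \;\leq\; \int_{K/2}^{K} \frac{dx}{x} \;=\; \ln 2,
\]
which is exactly the upper half of the claim (and holds for all $K\geq 2$ without any asymptotic assumption).

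For the lower bound I would use the opposite integral comparison $1/j \geq \int_{j}^{j+1} dx/x$, summing to obtain
\[
\sum_{j=K/2+1}^{K} \frac{1}{j} \;\geq\; \int_{K/2+1}^{K+1} \frac{dx}{x} \;=\; \ln\frac{K+1}{K/2+1} \;=\; \ln 2 + \ln\!\left(1 - \frac{1}{K+2}\right).
\]
Applying the elementary inequality $\ln(1-x) \geq -x/(1-x)$ (valid for $x \in [0,1)$, proved by observing the derivative of $\ln(1-x) + x/(1-x)$ is nonnegative) with $x = 1/(K+2)$ yields $\ln(1 - 1/(K+2)) \geq -1/(K+1) \geq -1/K$. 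Combining gives the lower bound $\ln 2 - 1/K$. There is essentially no obstacle here — the argument is a routine integral test, and taking $K$ larger than some fixed constant (say $K \geq 2$) is enough to make every inequality used legitimate.
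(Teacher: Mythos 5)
Your proof is correct and uses the same core technique as the paper, namely an integral comparison: you work with $1/x$ on $[K/2, K]$ after recognizing the sum as a harmonic tail, while the paper works with $1/(1-x)$ on $[0,1/2]$ after writing $\frac{1}{K-k} = \frac{1}{K}\cdot\frac{1}{1-k/K}$, which is the same test under a linear change of variables. The only real difference is in how the lower-bound error term is extracted: the paper performs an explicit index shift to land exactly on $\int_0^{1/2}\frac{dx}{1-x} - \frac1K$, whereas you obtain $\ln\frac{K+1}{K/2+1}$ and then invoke $\ln(1-x)\geq -x/(1-x)$ to get the same $-1/K$; both are valid and of comparable effort.
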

\begin{proof}
One has for every integer $k\geq 0$, $\frac1{K}\cdot \frac1{1-(k+1)/K} \leq \int_{k/K}^{(k+1)/K} \frac1{1-x} dx\leq \frac1{K}\cdot \frac1{1-k/K},$
and hence 
$\sum_{k\in [K/2]} \frac1{K-k}=\frac1{K}\sum_{k\in [K/2]} \frac1{1-k/K}\leq \sum_{k\in [K/2]} \int_{k/K}^{(k+1)/K} \frac1{1-x} dx=\int_0^{1/2} \frac1{1-x}dx=\ln 2,$
establishing the upper bound. Similarly, 
\begin{equation*}
\begin{split}
\sum_{k\in [K/2]} \frac1{K-k}&=\frac1{K}-\frac1{K/2}+\frac1{K}\sum_{k=1}^{K/2} \frac1{1-k/K}\\
&\geq -\frac1{K}+\sum_{k\in [K/2]} \int_{k/K}^{(k+1)/K} \frac1{1-x} dx\\
&=-\frac1{K}+\int_0^{1/2} \frac1{1-x}dx=\ln 2-\frac1{K},
\end{split}
\end{equation*}
\end{proof}

\subsection{Parameter setting}\label{sec:params}
We assume throughout this section that parameters $m$, $W$, $K$ and $L$ satisfy the following properties:
\begin{description}
\item[(p0)\label{p0}] $(K-s) \mid W$ for all $s\in [K/2]$
\item[(p1)\label{p1}] $W \mid m/(K-s)$ for all $s\in [K/2]$
\item[(p2)\label{p2}] $L=K$
\end{description}
In the above we write $a \mid b$ if $b/a$ is an integer.

Such a setting is possible:
\begin{lemma}
For every constant $K$ there exists a setting of parameters $W,  L$ and $m$ that satisfies~\ref{p0}-\ref{p2}.
\end{lemma}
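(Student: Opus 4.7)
The plan is to exhibit an explicit choice of $W$, $m$, and $L$ for every constant $K$ and then verify each of the three divisibility/equality conditions in turn. The divisibility constraints (p0) and (p1) pull $W$ and $m$ in opposite directions: (p0) forces $W$ to be divisible by each $K-s$ with $s \in [K/2]$, whereas (p1) forces $m/(K-s)$ to be an integer multiple of $W$, i.e., $m$ to be divisible by $W \cdot (K-s)$. So the natural route is to first fix $W$ as a common multiple of all the relevant denominators and then take $m$ large enough to absorb an extra factor of $K-s$.

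Concretely, I would set
\[
W := \operatorname{lcm}(K/2+1, K/2+2, \ldots, K),
\]
so that $(K-s) \mid W$ for every $s \in [K/2]$, which is exactly (p0). Then I would set $m := W^2$. To verify (p1), fix any $s \in [K/2]$; since $(K-s) \mid W$ by construction, the quotient $W/(K-s)$ is an integer, and hence
\[
\frac{m}{K-s} \;=\; \frac{W^2}{K-s} \;=\; W \cdot \frac{W}{K-s}
\]
is an integer multiple of $W$, which is (p1). Finally, take $L := K$; this is (p2) by definition.

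The only mildly subtle point is the interplay between (p0) and (p1): one must not pick $W$ to be so small that $m/(K-s)$ fails to contain a factor of $W$ after cancellation. Taking $m = W^2$ sidesteps this entirely, since the factor $K-s$ is absorbed by one copy of $W$ and a full copy of $W$ remains. No obstacle arises from $K$ being constant; in fact the construction gives finite explicit values. One could also replace $W$ by any positive multiple of $\operatorname{lcm}(K/2+1,\ldots,K)$ and $m$ by any positive multiple of $W^2$, which yields the flexibility needed later in the paper when auxiliary divisibility conditions on $m$ (e.g., evenness or divisibility by $L$) may be required.
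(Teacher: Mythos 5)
Your proof is correct and takes essentially the same route as the paper: set $W$ to be a least common multiple ensuring~(p0) and then take $m = W^2$ so that one copy of $W$ absorbs the factor $K-s$ and the other survives, giving~(p1). The only cosmetic difference is that the paper uses $W=\operatorname{lcm}(K,K-1,\ldots,1)$ while you trim to $W=\operatorname{lcm}(K/2{+}1,\ldots,K)$, which is the tightest lcm needed; both work identically.
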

\begin{proof}
Let $m=(\text{lcm}(K, K-1, \ldots, 3, 2, 1))^2$ and $W=\text{lcm}(K, K-1, \ldots, 3, 2, 1)$, where $\text{lcm}$ stands for the least common multiple.
\end{proof}

In what follows we define the individual instances $G^\ell$ and state their main properties in Section~\ref{sec:basic-gadgets}, then define the maps $\tau^\ell$ in Section~\ref{sec:tau-ell}. We then give the proof of the lower bound in Section~\ref{sec:simple-lb}. 

\subsection{Basic gadgets $G^\ell$}\label{sec:basic-gadgets}
We give the construction of $G^\ell=(S^\ell, T^\ell, E^\ell)$ in this section. Since $\ell\in [L]$ is fixed, we write $T=T^\ell, S=S^\ell, E=E^\ell$ to simplify notation. We let $\B=\B^\ell$ and $\B=\B_0\cup \ldots\cup \B_{K/2}$ denote the partition of $\B$.

\paragraph{Vertices of $G$: the $T$ side of the bipartition.} Let $K\geq 1$ be a large constant integer, let $m\geq 1$ be a large integer. Let 
$$
T=[m]^n
$$ 
i.e. vertices in $T$ are vectors of dimension $n$, with each co-ordinate taking values in $[m]=\{0, 1, 2,\ldots, m-1\}.$ This way we have $N:=|T|=m^n$, so $n=\Omega(\log N)$ for every constant $m$. The vertices on the $S$ side of the bipartition will also be associated with points on the hypercube $[m]^n$, as defined below. 

\renewcommand{\I}{{\mathcal I}}

Let $T_0=T$, and for every $k\in [K/2]$ let 
\begin{equation}\label{eq:def-tk}
\begin{split}
T_{k+1}=\left\{y\in T_k: y_{J_k}/m\in \left[0, 1-\frac{1}{K-k}\right)\right\},\\
\end{split}
\end{equation}
so that
\begin{equation}\label{eq:def-tk-allconstraints}
\begin{split}
T_k=\left\{y\in [m]^n: y_{J_s}/m\in \left[0, 1-\frac{1}{K-s}\right)\text{~~for all~}s\in \{0, 1, \ldots, k-1\}\right\}.\\
\end{split}
\end{equation}

\paragraph{Vertices of $G$: the $S$ side of the bipartition.} The set $S$ of vertices is naturally partitioned into disjoint subsets 
\begin{equation}\label{eq:def-s}
S=S_0\uplus S_1\uplus \ldots \uplus S_{K/2-1}
\end{equation}
as follows. For every $k\in [K/2]$ we let
\begin{equation}\label{eq:def-sk}
\begin{split}
S_k\delequal \{x\in T_k: &\weight(x) \in \left[0, \frac{1}{K-k}\right)\cdot W \pmod{ W}\}
\end{split}
\end{equation}
In the definition above $W$ is an integer parameter that we choose so that $W \mid m$ as per~\ref{p1}, and $\weight(x)=\sum_{j\in [n]} x_j$ as per Definition~\ref{def:weight}.

\begin{definition}[Down-set of a set in $T$]\label{def:downset}
For every $U\subseteq T$, $k\in [K/2]$, we define the {\em downset of $U$ in $S_k$} by 
\begin{equation*}
\begin{split}
\downset_k(U)=\{x\in S_k: \exists y\in U: y\delequal x\}
\end{split}
\end{equation*}
and  define
$$
\downset(U)=\bigcup_{k\in [K/2]} \downset_k(U).
$$
\end{definition}
We note that in the definition above the union on the rhs is a union of disjoint sets.
\begin{remark}
We note that a given point in $U$ has anywhere between $0$ and $K/2$ images under the $\downset$ map.
\end{remark}
\begin{remark}
Note that $S_k=\downset_k(T_k)$ for every $k\in [K/2]$.
\end{remark}
\begin{remark}\label{rm:truncated-downset}
Note that if $U\subset T_k\setminus T_{k+1}$ for some $k\in [K/2]$, then $\downset_s(U)=\emptyset$ for all $s\in \{k+1, \ldots, K/2-1\}$. Thus, in that case we have
$$
\downset(U)=\bigcup_{s=0}^{k} \downset_s(U).
$$
\end{remark}

We also let, for every $k\in [K/2]$ and $j\in \B_k$ 
\begin{equation}\label{eq:def-tkj}
\begin{split}
T_k^j&=\left\{y\in T_k: y_j/m\in \left[0, 1-\frac{1}{K-k}\right)\right\}\\
S_k^j&=\left\{x\in S_k: x_j/m\in \left[0, 1-\frac{1}{K-k}\right)\right\}.\\
\end{split}
\end{equation}

\begin{definition}[Terminal subcube]\label{def:terminal-subcube}
We refer to $T_*:=T_{K/2}$ as the {\em terminal subcube} of $T$.
\end{definition}

We gather basic bounds on the size of $T_k$'s and $S_k$'s in 
\begin{lemma}\label{lm:size-bounds}
One has
\begin{itemize}
\item[{\bf (1)}] $|T_k|=(1-\frac{k}{K})\cdot |T_0|$ for every $k\in [K/2+1]$;
\item[{\bf (2)}] $|S_k|=\frac1{K}\cdot |T_0|$ for every $k\in [K/2]$.
\end{itemize}
\end{lemma}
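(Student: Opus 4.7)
The plan is to prove both parts by a direct fiber-counting argument that exploits the product structure of the constraints defining $T_k$ and $S_k$, together with the divisibility hypotheses~\ref{p0} and~\ref{p1}.

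For part~(1), I would observe that by~\eqref{eq:def-tk-allconstraints} membership in $T_k$ is determined by constraints on the $k$ coordinates $J_0,\ldots,J_{k-1}$, which are distinct because the blocks $\B_0,\ldots,\B_{K/2}$ are disjoint and each $J_s\in\B_s$. Hypothesis~\ref{p1} (and in particular $W\mid m/(K-s)$) ensures that $m/(K-s)$ is an integer, so for each $s\in\{0,\ldots,k-1\}$ the allowed range for $y_{J_s}$ has exactly $m(K-s-1)/(K-s)$ elements, while the remaining $n-k$ coordinates are unconstrained. Thus
$$
|T_k|=m^{n-k}\prod_{s=0}^{k-1}m\cdot\frac{K-s-1}{K-s}=m^n\cdot\frac{K-k}{K},
$$
where the product telescopes. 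This yields $|T_k|=(1-k/K)|T_0|$, as claimed.

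For part~(2), I would fix any assignment to the $k$ constrained coordinates $J_{<k}$ consistent with $T_k$ and let $w_0$ denote its weight. The number of free-coordinate assignments $y\in[m]^{[n]\setminus J_{<k}}$ with $w_0+\weight(y)\in[0,W/(K-k))\pmod W$ then determines $|S_k|$ via summation over the constrained fibers. Since $W\mid m$ (use~\ref{p1} at $s=0$), each residue class modulo $W$ is hit by exactly $m/W$ values of a single free coordinate; summing $n-k$ such independent choices shows that each residue mod $W$ is achieved by exactly $m^{n-k}/W$ free assignments. By~\ref{p0} the target set $[0,W/(K-k))$ has integer cardinality $W/(K-k)$, so $(W/(K-k))\cdot(m^{n-k}/W)=m^{n-k}/(K-k)$ free assignments pass the weight test. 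Multiplying by the number $|T_k|/m^{n-k}$ of constrained fibers and invoking part~(1),
$$
|S_k|=\frac{|T_k|}{m^{n-k}}\cdot\frac{m^{n-k}}{K-k}=\frac{|T_k|}{K-k}=\frac{(1-k/K)|T_0|}{K-k}=\frac{|T_0|}{K}.
$$

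There is no real obstacle; the only thing one must be careful about is to check at each step that the relevant quantities ($m/(K-s)$, $W/(K-k)$, $m/W$) are integers, so that the fiber counts are exact rather than approximate. This is precisely what the divisibility conditions~\ref{p0} and~\ref{p1} are designed to guarantee, and once they are invoked the computation is just arithmetic.
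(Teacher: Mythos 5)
Your proof is correct, and the approach is essentially the same as the paper's: both reduce part (1) to a telescoping product over the $k$ constrained coordinates $J_{<k}$ (exact by the divisibility hypotheses), and both establish part (2) by showing that the weight-residue condition has exact density $\frac{1}{K-k}$ on each constrained fiber, then combining with part (1). The only cosmetic difference is that the paper phrases the density argument probabilistically and singles out one free coordinate $r\in\B_k$ to argue uniformity of $\weight(x)\pmod W$, whereas you count over all $n-k$ free coordinates at once; both ultimately rest on the same fact, that conditioning on all but one free coordinate and letting that coordinate range over $[m]$ (with $W\mid m$) equidistributes the weight modulo $W$.
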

The proof is given in Appendix~\ref{app:size-bounds}. We now define the edge set of $G$.

\paragraph{Edges of $G$.}   Fix $k\in [K/2]$. For each coordinate $j\in \B_k$ for each $x\in [m]^n$ we denote the line in direction $j$ going through $x$ by 
\begin{equation}\label{eq:line-def}
\text{line}_j(x)=\{x'\in [m]^n: x'_{-j}=x_{-j}\},
\end{equation}
where we write $x_{-j}$ to denote the restriction of $x$ on coordinates $[n]\setminus \{j\}$.
We have 
\begin{lemma}\label{lm:line-properties}
For all $s\in [K/2]$,  for every $k\in [K/2]$, every $J_{<k} \in \B_{<k}$ for each $y\in T_k$ one has for each $j\in \B_k$
\begin{itemize}
\item[{\bf (1)}] $|\text{line}_j(y)|=m$ and $\text{line}_j(y)\subseteq T_k$;
\item[{\bf (2)}] $|\text{line}_j(y)\setminus T_k^j|=\frac1{K-k}\cdot|\text{line}_j(y)|$;
\item[{\bf (3)}] for every $y\in T_k$ one has $|\text{line}_j(y)\cap S_k|=\frac1{K-k}\cdot |\text{line}_j(y)|$;
\item[{\bf (4)}] for every $y\in T_k$ one has $|\text{line}_j(y)\cap S_k^j|=\frac1{K-k}\cdot |\text{line}_j(y)|\cdot (1-1/(K-k))$.
\end{itemize}
\end{lemma}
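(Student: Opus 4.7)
The plan is to prove each of the four claims by direct counting, relying on the disjointness of the coordinate blocks $\B_0,\dots,\B_{K/2}$ and on the divisibility properties \ref{p0}--\ref{p1} to ensure all the relevant fractional counts come out to integers.

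For \textbf{(1)}, I will use the disjointness of the coordinate blocks. Since $j\in \B_k$ and the blocks $\B_s$ are disjoint, $j\neq J_s$ for every $s\in [k]$ (recall $J_s\in \B_s$). Looking at the characterization~\eqref{eq:def-tk-allconstraints}, membership of a point in $T_k$ depends only on coordinates $J_0,\dots,J_{k-1}$, none of which equals $j$. Hence varying $x_j$ over $[m]$ while holding the other coordinates fixed at $y_{-j}$ keeps the point in $T_k$, giving $\mathrm{line}_j(y)\subseteq T_k$ and $|\mathrm{line}_j(y)|=m$.

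For \textbf{(2)}, by~\eqref{eq:def-tkj} the set $T_k\setminus T_k^j$ is characterized on the line $\mathrm{line}_j(y)$ (which is already inside $T_k$ by (1)) by $x_j/m\in [1-1/(K-k),1)$, i.e.\ $x_j\in[m-m/(K-k),m)$. By \ref{p1} we have $W\mid m/(K-k)$, so $m/(K-k)$ is an integer; this range contains exactly $m/(K-k)$ values of $x_j$, giving the claim. For \textbf{(3)}, observe that on $\mathrm{line}_j(y)$ one has $\weight(x)=\weight(y_{-j})+x_j$, so as $x_j$ ranges over the $m$ consecutive integers $\{0,\dots,m-1\}$, $\weight(x)$ ranges over $m$ consecutive integers. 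Properties \ref{p0} and \ref{p1} together imply $W\mid m$ (since $W\mid m/(K-k)$ and $(K-k)\mid W\mid m$ after multiplying), so each residue class mod $W$ is hit exactly $m/W$ times. The number of residues in $[0,W/(K-k))$ is $W/(K-k)$, which is an integer by \ref{p0}. Hence $|\mathrm{line}_j(y)\cap S_k|=(m/W)\cdot(W/(K-k))=m/(K-k)$, as claimed.

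For \textbf{(4)}, I combine the analyses of (2) and (3). On $\mathrm{line}_j(y)$, the set $S_k^j$ is characterized by the weight condition defining $S_k$ together with $x_j\in[0,m-m/(K-k))$, a range of $m-m/(K-k)=m(1-1/(K-k))$ consecutive integers. Since $W\mid m$ and $W\mid m/(K-k)$, this length is a multiple of $W$, so each residue class mod $W$ appears exactly $m(1-1/(K-k))/W$ times in this range; multiplying by the $W/(K-k)$ admissible residues gives $m(1-1/(K-k))/(K-k)=|\mathrm{line}_j(y)|\cdot (1-1/(K-k))/(K-k)$, as required. No step is a real obstacle; the only thing to be careful about is to invoke the divisibility properties \ref{p0} and \ref{p1} at the right moments so that every count is a nonnegative integer, and to use disjointness of the blocks $\B_s$ to justify that varying coordinate $j\in\B_k$ does not disturb membership in $T_k$.
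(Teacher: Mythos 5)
Your proof is correct and follows essentially the same approach as the paper's: for (1) use disjointness of the coordinate blocks $\B_s$ to conclude membership in $T_k$ is unaffected by varying coordinate $j\in\B_k$; for (2)–(4) do a direct count over the $m$ values of $x_j$ on a line, using $\weight(x)=\weight(x_{-j})+x_j$ and the divisibility conditions \dualref{p0}--\dualref{p1} to ensure the relevant fractional counts are exact.
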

The proof of the lemma is given in Appendix~\ref{app:line-properties}.

\begin{remark}\label{lm:lines-intersection-structure}
Note that for every $y, y'$ and every $j$ one has either $\text{line}_j(y)=\text{line}_j(y')$ or $\text{line}_j(y)\cap \text{line}_j(y')=\emptyset$, i.e. lines in direction $j$ partition $T_k$, and consequently also partition $S_k$. 
\end{remark}

We now define the edges of $G$ incident on $S_k$ for every $k\in [K/2]$. 

\begin{definition}[Line cover in direction $j$]\label{def:line-cover}
For every $j\in \B_k$ a collection $C_k^j\subseteq T_k$ of representative points is called a {\em line cover of $T_k$ in direction $j$} if 
$$
T_k=\bigcup_{y\in C_k^j} \text{line}_j(y)
$$
and $\text{line}_j(y)\cap \text{line}_j(y')=\emptyset$ for every $y, y'\in C_k^j$, $y\neq y'$.
\end{definition}
Note that a line cover of $T_k$ in direction $j$ can be constructed by picking points $y\in T_k$ greedily until the union of lines in direction $j$ through these points covers $T_k$. Every such line belongs to $T_k$ by Lemma~\ref{lm:line-properties}, {\bf (1)}, and every two lines either are disjoint or coincide as per Remark~\ref{lm:lines-intersection-structure}.

Now for every $j$ {\em except} the extension indices $\Ext_k$ or the compression index $q_k$ (see Definition~\ref{def:cext}), i.e. for all 
$$
j\in \ac{\B}_k=\B_k\setminus \left(\Ext_k\cup \{q_k\}\right),
$$
for every $y\in C_k^j$ for a line cover $C_k^j$ of $T_k$ in direction $j$ (as per Definition~\ref{def:line-cover}), we include a complete bipartite graph between $\text{line}_j(y)\cap (T_k\setminus T_k^j)$ and $\text{line}_j(y)\cap S_k^j$. In other words, let $E=\bigcup_{k\in [K/2]} E_k$, where
\begin{equation}\label{eq:edges-ei-k-def}
E_k=\bigcup_{j\in \ac{\B}_k} E_{k, j}
\end{equation}
and
\begin{equation}\label{eq:edges-ei-def}
E_{k, j}=\bigcup_{y\in C_k^j} (\text{line}_j(y)\cap S_k^j) \times (\text{line}_j(y) \cap (T_k\setminus T_k^j)).
\end{equation}
Note that $E_k$ is fully determined by the first $k-1$ values of $J$, namely by the prefix $J_{<k}$.

We have 
\begin{lemma}\label{lm:disjoint}
For every $k \in [K/2]$, every $i, j\in \B_k, i\neq j$, every $x\in C_k^i, y\in C_k^j$, where $C_k^i$ and $C_k^j$ are minimal line covers of $T_k$ in direction $i$ and $j$ respectively, the edge sets
$$
(\text{line}_i(x)\cap S_k^i) \times (\text{line}_i(x) \cap (T_k\setminus T_k^i))
$$
and 
$$
(\text{line}_j(y)\cap S_k^j) \times (\text{line}_j(y) \cap (T_k\setminus T_k^j))
$$
are disjoint.
\end{lemma}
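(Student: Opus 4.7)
The plan is to argue by contradiction: assume some edge $(s,t)$ belongs to both of the edge sets and then show that the constraints on $s$ and $t$ coming from the two different line directions are incompatible.

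First I would unpack the definitions. An edge in $(\text{line}_j(y)\cap S_k^j) \times (\text{line}_j(y) \cap (T_k\setminus T_k^j))$ has the form $(s,t)$ where $s,t \in \text{line}_j(y)$, so $s_{-j}=t_{-j}=y_{-j}$ (they agree on all coordinates outside $j$), while on coordinate $j$ itself the membership conditions $s \in S_k^j$ and $t \in T_k \setminus T_k^j$ force $s_j/m \in [0, 1-1/(K-k))$ and $t_j/m \in [1-1/(K-k), 1)$. In particular $s_j \ne t_j$, i.e. the endpoints differ \emph{exactly} on coordinate $j$ (and agree elsewhere).

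Now suppose toward contradiction that the same edge $(s,t)$ also lies in $(\text{line}_i(x)\cap S_k^i) \times (\text{line}_i(x) \cap (T_k\setminus T_k^i))$ for some $i \in \B_k$, $i \ne j$, and some $x \in C_k^i$. From membership in the direction-$j$ set we have $s_{-j}=t_{-j}$, and since $i \ne j$ this in particular yields $s_i = t_i$. On the other hand, membership in the direction-$i$ set, by the same unpacking as above, gives $s_i/m \in [0, 1-1/(K-k))$ and $t_i/m \in [1-1/(K-k), 1)$, and hence $s_i \ne t_i$. This contradiction shows that no edge can lie in both sets, so they are disjoint.

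No step here looks like a real obstacle; the whole proof is a one-line coordinate argument once the definitions of $\text{line}_j(\cdot)$, $S_k^j$, and $T_k^j$ are spelled out. The only thing worth being careful about is making explicit that $s_{-j}=t_{-j}$ forces agreement on \emph{every} coordinate other than $j$, including coordinate $i$, which is exactly where the direction-$i$ constraints then cause the contradiction.
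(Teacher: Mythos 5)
Your proof is correct and follows essentially the same argument as the paper's: assume an edge $(s,t)$ lies in both products, use $s,t\in\text{line}_j(y)$ to conclude $s_i=t_i$, and then observe that membership in $S_k^i$ versus $T_k\setminus T_k^i$ places $s_i$ and $t_i$ in disjoint intervals. The paper phrases this via a point $a$ in the intersection of the two $S$-sides and a point $b$ in the intersection of the two $T$-sides, but the content and the contradiction on coordinate $i$ are identical.
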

\begin{proof}
We argue by contradiction. Note that the complete graphs above have a nonempty intersection if and only if there exist $a, b$ such that
\begin{equation}\label{eq:0293yt88gfdSF}
a\in (\text{line}_i(x)\cap S_k^i)\cap (\text{line}_j(y)\cap S_k^j)
\end{equation}
and 
\begin{equation}\label{eq:92yt99gfef}
b\in (\text{line}_i(x) \cap (T_k\setminus T_k^i))\cap (\text{line}_j(y) \cap (T_k\setminus T_k^j)).
\end{equation}
Since $a, b\in \text{line}_j(y)$, we have
\begin{equation}\label{eq:abi-290gh9hg}
b_{-j}=a_{-j}
\end{equation}

On the other hand, since $a\in \text{line}_i(x)\cap S_k^i \subseteq S_k^i$, we have by~\eqref{eq:def-tkj} that $a_i/m\in \left[0, 1-\frac{1}{K-k}\right)$, and since $b\in  \text{line}_i(x) \cap (T_k\setminus T_k^i)\subseteq T_k\setminus T_k^i$, we have by~\eqref{eq:def-tk} that $b_i/m\in \left[1-\frac{1}{K-k}, 1\right)$.  On the other hand, we have $a_i=b_i$ by~\eqref{eq:abi-290gh9hg}, a contradiction.
\end{proof}

\begin{lemma}[Matching of $S$ to $T\setminus T_*$]\label{lm:matching}
There exists a matching of a $(1-O(1/K))$ fraction of $S$ to $T\setminus T_*$ in $E$.
\end{lemma}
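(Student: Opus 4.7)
The plan is to construct the matching by working separately on each phase $k \in [K/2]$, exploiting the fact that $J_k \in \ac{\B}_k$ by Property~\ref{prop:q-k}, so that the edge set $E_{k, J_k}$ is actually part of $E_k$. First I would observe that the sets $T_k \setminus T_{k+1} = T_k \setminus T_k^{J_k}$ partition $T \setminus T_*$ as $k$ ranges over $[K/2]$, and the sets $S_k$ partition $S$. Hence it suffices to construct, for each $k$, a matching inside $E_{k, J_k}$ that covers a $(1-O(1/(K-k)))$ fraction of $S_k$ using only vertices of $T_k \setminus T_k^{J_k}$; the global matching is then obtained by taking the (vertex-disjoint) union.

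For a fixed $k$, I would exploit the product structure of $E_{k, J_k}$. By~\eqref{eq:edges-ei-def}, this set is a disjoint union, over $y \in C_k^{J_k}$, of complete bipartite graphs between $\text{line}_{J_k}(y) \cap S_k^{J_k}$ and $\text{line}_{J_k}(y) \cap (T_k \setminus T_k^{J_k})$. Since the lines in direction $J_k$ partition both $S_k$ and $T_k$ (Remark~\ref{lm:lines-intersection-structure}), each line-by-line matching can be assembled into a matching of all of $S_k^{J_k}$ into $T_k \setminus T_k^{J_k}$, provided each line's bipartite graph has enough capacity on the $T$-side. Lemma~\ref{lm:line-properties} parts {\bf (2)} and {\bf (4)} give $|\text{line}_{J_k}(y) \cap (T_k \setminus T_k^{J_k})| = m/(K-k)$ and $|\text{line}_{J_k}(y) \cap S_k^{J_k}| = (m/(K-k))(1-1/(K-k))$, so the $S$-side is strictly smaller than the $T$-side (as long as $K-k \geq 2$, which holds throughout since $k \leq K/2-1$). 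A complete bipartite graph with $|S|\le |T|$ trivially has a matching saturating the $S$-side, so we get a matching $M_k \subseteq E_{k, J_k}$ of size exactly $|S_k^{J_k}|$.

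To bound what we miss, note that $|S_k \setminus S_k^{J_k}|$ is a $1/(K-k)$ fraction of $|S_k|$ by Lemma~\ref{lm:line-properties} {\bf (3)}--{\bf (4)}, and $|S_k| = |T_0|/K$ by Lemma~\ref{lm:size-bounds}. Summing the losses over $k \in [K/2]$ and using Claim~\ref{cl:sum-int},
\[
\sum_{k \in [K/2]} |S_k \setminus S_k^{J_k}| = \frac{|T_0|}{K} \sum_{k \in [K/2]} \frac{1}{K-k} \leq \frac{|T_0|\ln 2}{K} = O(|S|/K),
\]
so $\left|\bigsqcup_k M_k\right| \geq (1-O(1/K))|S|$. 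Finally, the $M_k$ are pairwise vertex-disjoint: the $S$-endpoints lie in the disjoint sets $S_k$, and the $T$-endpoints lie in the disjoint sets $T_k \setminus T_{k+1} \subseteq T \setminus T_*$. The union is therefore a matching of the required size, all of whose edges are in $E$.

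I do not expect a serious obstacle here; the only subtlety is making sure $J_k \in \ac{\B}_k$ so that $E_{k,J_k} \subseteq E_k$ (guaranteed by Property~\ref{prop:q-k}), and being careful that the per-line $T$-side is large enough to absorb the $S$-side — which reduces to the clean inequality $(K-k-1)/(K-k) < 1$.
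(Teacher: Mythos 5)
Your proof is correct and takes essentially the same approach as the paper: construct phase-by-phase matchings of $S_k^{J_k}$ into $T_k\setminus T_k^{J_k}$ line by line using the product structure of $E_{k,J_k}$, then take the disjoint union. Your version is actually slightly more careful in the per-line bookkeeping (the paper has a sign typo in the quantity $|\text{line}_j(y)\cap S_k^j|$ and cites part (3) of Lemma~\ref{lm:line-properties} where it means part (4)), but the decomposition, the key lemmas invoked, and the summation $\sum_k 1/(K-k) = O(1)$ are identical.
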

\begin{proof}
For every $k\in [K/2]$ and $j=J_k$ we match almost all of $S_k$ to $T_k\setminus T_k^j$ as follows. First note that for every $y, y'\in T_k$ one has either $\text{line}_j(y)=\text{line}_j(y')$ or $\text{line}_j(y)\cap \text{line}_j(y')=\emptyset$, i.e. lines in direction $j$ partition $T_k$, and consequently also partition $S_k$. Thus, it suffices to define the matching on all lines in direction $j=J_k$ for each $y\in T_k$. By Lemma~\ref{lm:line-properties}, {\bf (2)}, we have 
$$
|\text{line}_j(y)\setminus T_k^j|=\frac1{K-k}\cdot|\text{line}_j(y)|
$$
and by Lemma~\ref{lm:line-properties}, {\bf (3)}, one has 
$$
|\text{line}_j(y)\cap S_k^j|=\frac1{K-k}\cdot |\text{line}_j(y)|(1+1/(K-k)).
$$
We match $\text{line}_j(y)\cap S_k$ to $\text{line}_j(y)\setminus T_k^j$ using the edges 
$$
(\text{line}_j(y)\cap S_k^j) \times (\text{line}_j(y) \cap (T_k\setminus T_k^j)),
$$
which belong to $E_k^j$ as per~\eqref{eq:edges-ei-def}. This defines a matching of a $1-O(1/K)$ fraction of $S_k$ to $T_k^j$, where $j=J_k$.  

Equipped with a matching of a $1-O(1/K)$ fraction of $S_k$ to $T_k^j$, where $j=J_k$, we now note that
$T_{k+1}=T_k^j$ when $j=J_k$ for every $k\in [K/2]$ (see~\eqref{eq:def-tkj} and~\eqref{eq:def-tk}), and therefore
$$
T\setminus T_*=T\setminus T_{K/2}=\bigcup_{k\in [K/2]} T_k\setminus T_{k+1}=\bigcup_{k\in [K/2]} T_k\setminus T^{J_k}_k,
$$
where the union on the right hand side contains disjoint sets. Since $S_0, S_1,\ldots, S_{K/2-1}$ are disjoint, the union of constructed matchings is a matching of a $1-O(1/K)$ fraction of $S$ to $T\setminus T_*$, as required.
\end{proof}

\subsection{Subsampling and densification}\label{sec:densification}

\begin{definition}\label{def:u-dep}
For a subset $F\subseteq [m]^n$ and a coordinate $r\in [n]$ we say that {\em $F$ does not depend on $r$} if for every $x=(x_r, x_{-r})\in F$ one has $(y, x_{-r})\in F$ for every $y\in [m]$. Equivalently, $F$ does not depend on $r$ if 
$$
F=\{z\in [m]^{[n]\setminus \{r\}}: z=x_{-r} \text{~for some~}x\in F\}\times [m].
$$
\end{definition}

\begin{lemma}[Subsampling]\label{lm:u-subsampling}
For every $U\subseteq [m]^n$ and $r\in [n]$ such that $U$ does not depend on $r$ (as per Definition~\ref{def:u-dep}), every integer $\lambda$ such that $\lambda \mid W$, as long as $W \mid m$, one has
$$
\left|\left\{x\in U: \weight(x) \in \left[0, 1/\lambda\right)\cdot W \pmod{ W} \right\}\right|=\frac1{\lambda} |U|.
$$
\end{lemma}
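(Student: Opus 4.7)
The plan is to exploit the fact that $U$ is a product set in coordinate $r$ to reduce the count to a one-dimensional counting problem on $[m]$.

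First I would use the hypothesis that $U$ does not depend on $r$ to write
\[
U \;=\; U' \times [m], \qquad U' \;=\; \{x_{-r} : x \in U\} \;\subseteq\; [m]^{[n]\setminus\{r\}},
\]
so that $|U|=m\cdot|U'|$. Every $x\in U$ can be uniquely decomposed as $x=(x_{-r},x_r)$ with $x_{-r}\in U'$ and $x_r\in[m]$, and the weight splits additively as $\weight(x)=\weight(x_{-r})+x_r$. Thus, for each fixed $x_{-r}\in U'$ the desired condition $\weight(x)\bmod W\in [0,W/\lambda)$ becomes
\[
x_r \;\equiv\; t - \weight(x_{-r}) \pmod{W} \quad \text{for some } t\in\{0,1,\ldots,W/\lambda-1\},
\]
i.e.\ $x_r \bmod W$ lies in a fixed translate (depending only on $x_{-r}$) of the set $\{0,1,\ldots,W/\lambda-1\}\subseteq \mathbb{Z}/W\mathbb{Z}$. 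Note that $W/\lambda$ is an integer by the hypothesis $\lambda \mid W$, so this translate is a well-defined subset of $\mathbb{Z}/W\mathbb{Z}$ of cardinality exactly $W/\lambda$.

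Next I would count, for each such $x_{-r}$, the number of $x_r\in[m]$ whose residue mod $W$ falls in this translate. Since $W\mid m$ (by the assumption and Property \ref{p1}), the map $[m]\to\mathbb{Z}/W\mathbb{Z}$ given by $x_r\mapsto x_r\bmod W$ is exactly $(m/W)$-to-one: every residue class mod $W$ is hit by precisely $m/W$ elements of $[m]$. Hence the number of good $x_r$ for a fixed $x_{-r}$ equals
\[
\frac{m}{W}\cdot \frac{W}{\lambda} \;=\; \frac{m}{\lambda}.
\]

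Finally, summing over $x_{-r}\in U'$ yields
\[
\left|\left\{x\in U : \weight(x)\in[0,1/\lambda)\cdot W \pmod{W}\right\}\right| \;=\; |U'|\cdot \frac{m}{\lambda} \;=\; \frac{1}{\lambda}|U|,
\]
which is the claim. There is no real obstacle here; the only subtlety is making sure that the divisibility hypotheses $\lambda \mid W$ and $W \mid m$ are invoked exactly where needed, namely to guarantee that the target residue set has integer cardinality $W/\lambda$ and that the reduction $[m]\to\mathbb{Z}/W\mathbb{Z}$ is uniformly $(m/W)$-to-one.
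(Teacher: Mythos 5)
Your proof is correct and takes essentially the same approach as the paper's: both use the decomposition $U = U'\times[m]$, the additivity $\weight(x)=\weight(x_{-r})+x_r$, and the fact that $W\mid m$ makes the residues of $x_r$ mod $W$ equidistributed. The only difference is cosmetic — the paper phrases the inner count probabilistically (drawing $x_r$ uniformly from $[m]$ and observing the residue of $\weight(x_{-r})+x_r$ mod $W$ is uniform on $[W]$), while you count the $(m/W)$-to-one fibers of $[m]\to\mathbb{Z}/W\mathbb{Z}$ directly.
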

\begin{proof}
Let 
$U_{-r}:=\{x_{-r}: x\in U\},$
where $x_{-r}\in [m]^{[n]\setminus \{r\}}$ stands for the projection of $x$ to $[n]\setminus \{r\}$, and note that $
U=U_{-r}\times [m]$. Furthermore, we have 
\begin{equation*}
\begin{split}
&\left\{x\in U: \weight(x) \in \left[0, 1/\lambda\right)\cdot W \pmod{ W} \right\}\\
&=\left\{(x', x'')\in U_{-r}\times  [m]: \weight(x')+x''\in \left[0, 1/\lambda\right)\cdot W \pmod{ W} \right\}\\
\end{split}
\end{equation*}

This in turn implies
\begin{equation}\label{eq:8g8g8gFYGSYFG}
\begin{split}
&\left|\left\{(x', x'')\in U_{-r}\times  [m]: \weight(x')+x''\in \left[0, 1/\lambda\right)\cdot W \pmod{ W} \right\}\right|\\
&=\sum_{x'\in U_{-r}} m\cdot \prob_{x''\sim UNIF([m])}[\weight(x')+x''\in \left[0, 1/\lambda\right)\cdot W \pmod{ W}]\\
\end{split}
\end{equation}
Since $W \mid m$ by assumption of the lemma, when $x''$ is uniformly random in $[m]$, $(\weight(x')+x'')\pmod{W}$ is uniformly random in $[W]$. Thus, for any $x'\in [m]^{[n]\setminus \{r\}}$ one has
$$
\prob_{x''\sim UNIF([m])}[\weight(x')+x''\in \left[0, 1/\lambda\right)\cdot W \pmod{ W}]=\frac1{\lambda}.
$$
Substituting this into~\eqref{eq:8g8g8gFYGSYFG}, we get
\begin{equation*}
\begin{split}
&\left|\left\{(x', x'')\in U_{-r}\times  [m]: \weight(x')+x''\in \left[0, 1/\lambda\right)\cdot W \pmod{ W} \right\}\right|\\
&=\sum_{x'\in U_{-r}} m\cdot \frac1{\lambda}\\
&=\frac1{\lambda} |U_{-r}|\cdot m\\
&=\frac1{\lambda} |U|, 
\end{split}
\end{equation*}
as required.
\end{proof}

\begin{definition}[$(\lambda, r)$-densifying map]\label{def:rho}
For $r\in [n]$ and integer $\lambda>0$ the $(\lambda, r)$-densifying map $\rho:[m]^n\to [m]^n$ is defined as follows.  We let $x\in [m]^n$, and write $x=(x', x''), x'\in [m]^{[n]\setminus \{r\}}, x''\in [m]$. Write 
$$
x''=aW+b W/\lambda+c,
$$ 
where $a\in \{0, 1,\ldots, m/W-1\}, b\in \{0, 1,\ldots,\lambda-1\}$ and $c\in \{0, 1,\ldots, W/\lambda-1\}$. We define $\rho(x)$ by letting, for $j\in [n]$:
\begin{equation*}
\begin{split}
(\rho(x))_j:=\left\lbrace
\begin{array}{ll}
a W/\lambda+c&\text{~if~}j=r\\
x_j&\text{~o.w.~}
\end{array}
\right.
\end{split}
\end{equation*}
\end{definition}
\begin{remark}
Note that equivalently, one lets, for $j\in [n]$,
\begin{equation*}
\begin{split}
(\rho(x))_j:=\left\lbrace
\begin{array}{ll}
(x'' \pmod{W/\lambda})+\frac{W}{\lambda}\cdot \lfloor x''/W)\rfloor&\text{~if~}j=r\\
x_j&\text{~o.w.~}
\end{array}
\right.
\end{split}
\end{equation*}

\end{remark}

We have 
\begin{lemma}[Densification of a subsampled set]\label{lm:densification}
For every integer $\lambda\geq 2$, every $r\in [n]$, every $U\subseteq [m]^n$ that does not depend on coordinate $r$ (as per Definition~\ref{def:u-dep}) the $(\lambda, r)$-densifying map $\rho$ (see Definition~\ref{def:rho}) maps 
$$
\left\{x\in U: \weight(x) \in \left[0, 1/\lambda\right)\cdot W \pmod{ W} \right\}
$$ bijectively to $\left\{x\in U: x_r/m\in \left[0, 1/\lambda\right)\right\}$.
\end{lemma}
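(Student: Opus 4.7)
The plan is to prove Lemma~\ref{lm:densification} by separately establishing that $\rho$ maps the source set into the target set, that both sets have the same cardinality, and that $\rho$ is injective on the source; any two of these imply the bijection. The map $\rho$ only modifies coordinate $r$, and $U$ is assumed to be independent of $r$, so $\rho$ automatically preserves membership in $U$. What is left is a short modular-arithmetic calculation that makes essential use of the divisibility $\lambda \mid W \mid m$ implicit in Definition~\ref{def:rho}.

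First, I would verify that the decomposition $x_r = aW + bW/\lambda + c$ with $a \in [m/W]$, $b \in [\lambda]$, $c \in [W/\lambda]$ is well defined and unique under $\lambda \mid W \mid m$, and that $(\rho(x))_r = aW/\lambda + c$ lies in $[0, m/\lambda)$, so $\rho(x)_r/m \in [0, 1/\lambda)$ and $\rho(x)$ indeed belongs to the target set. The cardinality of the source set equals $|U|/\lambda$ by Lemma~\ref{lm:u-subsampling}, and the cardinality of the target set equals $|U|/\lambda$ as well, since $U$ does not depend on $r$ and for each $x_{-r}\in U_{-r}$ there are exactly $m/\lambda$ admissible values of $x_r$.

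The main step is injectivity. Suppose $\rho(x) = \rho(y)$ for $x, y$ in the source. Then $x_{-r} = y_{-r}$, and since the $c$-part of the decomposition satisfies $c < W/\lambda$, reading off $c$ and $a$ from $\rho(x)_r = aW/\lambda + c$ is unique, giving $a_x = a_y$ and $c_x = c_y$. Hence $x_r - y_r = (b_x - b_y)W/\lambda$, and since $\weight(x), \weight(y) \in [0, W/\lambda) \pmod W$ and the weights agree on all coordinates except $r$, we get
\[
(b_x - b_y)\,W/\lambda \;\equiv\; \weight(x) - \weight(y) \pmod{W},
\]
with the right-hand side in $(-W/\lambda, W/\lambda)$. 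The left-hand side is a multiple of $W/\lambda$ modulo $W$, so the only consistent value is $0$, forcing $b_x = b_y$ and hence $x = y$. The main obstacle, though minor, is precisely keeping track of the three-part decomposition and verifying that the divisibility assumptions make every step literal rather than approximate; once that bookkeeping is in place the proof is a direct counting-plus-injectivity argument.
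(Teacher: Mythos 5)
Your proof is correct and follows essentially the same approach as the paper's: establish image containment via the uniqueness of the decomposition $x_r = aW + bW/\lambda + c$, establish injectivity via the modular-arithmetic argument that forces $b_x = b_y$, and combine with the cardinality count from Lemma~\ref{lm:u-subsampling} to conclude bijectivity. The one small thing worth spelling out when you write this up is the step where $(b_x - b_y)W/\lambda \pmod W$, a multiple of $W/\lambda$ in $\{0, W/\lambda, \dots, (\lambda-1)W/\lambda\}$, is compared against the residue of a number in $(-W/\lambda, W/\lambda)$, which lies in $[0, W/\lambda) \cup (W - W/\lambda, W)$; the only common value is $0$, and that is exactly what forces $b_x = b_y$.
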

\begin{proof}
We first prove that 
\begin{equation}\label{eq:932ghg8hUGFUGFshfmlwhladsdf}
\rho\left(\left\{x\in U: \weight(x) \in \left[0, 1/\lambda\right)\cdot W \pmod{ W} \right\}\right)\subseteq \left\{x\in U: x_r/m\in \left[0, 1/\lambda\right)\right\}.
\end{equation}
We let $x\in [m]^n$, and write $x=(x', x''), x'\in [m]^{[n]\setminus \{r\}}, x''\in [m]$. Write 
$$
x''=aW+b W/\lambda+c,
$$ 
where $a\in \{0, 1,\ldots, m/W-1\}, b\in \{0, 1,\ldots,\lambda-1\}$ and $c\in \{0, 1,\ldots, W/\lambda-1\}$. As per Definition~\ref{def:rho} one has for $j\in [n]$
\begin{equation*}
\begin{split}
(\rho(x))_j:=\left\lbrace
\begin{array}{ll}
(a W/\lambda+c) \pmod{m/\lambda}&\text{~if~}j=r\\
x''_j&\text{~o.w.~}
\end{array}
\right.
\end{split}
\end{equation*}
Since $U$ does not depend on $r$ by assumption and
$$
0\leq a (W/\lambda)+c\leq (m/W-1) (W/\lambda)+(W/\lambda-1)=m/\lambda-1,
$$
we have $\rho(x)\in  \left\{x\in U: x_r/m\in \left[0, 1/\lambda\right)\right\}$, as required. This establishes~\eqref{eq:932ghg8hUGFUGFshfmlwhladsdf}.

We now establish injectivity. Let  $U_{-r}:=\{x_{-r}: x\in U\},$ where $x_{-r}\in [m]^{[n]\setminus \{r\}}$ stands for the projection of $x$ to $[n]\setminus \{r\}$, and note that $
U=U_{-r}\times [m]$ since $U$ does not depend on $r$ by assumption. Furthermore, we have 
\begin{equation}
\begin{split}
&\left\{x\in U: \weight(x) \in \left[0, 1/\lambda\right)\cdot W \pmod{ W} \right\}\\
&=\left\{(x', x'')\in U_{-r}\times  [m]: \weight(x')+x''\in \left[0, 1/\lambda\right)\cdot W \pmod{ W} \right\}\\
\end{split}
\end{equation}

Now pick $x''_1, x''_2\in [m]$ such that 
\begin{equation}\label{eq:290g9hhihfsf}
\left\{(x', x''_i)\in U_{-r}\times  [m]: \weight(x')+x''_i\in \left[0, 1/\lambda\right)\cdot W \pmod{ W} \right\}\text{~for~}i\in \{1, 2\}.
\end{equation}
Write 
$$
x''_1=a_1W+b_1 W/\lambda+c_1\text{~~and~~}x''_2=a_2W+b_2 W/\lambda+c_2,
$$ 
where $a_i\in \{0, 1,\ldots, m/W-1\}, b_i\in \{0, 1,\ldots,\lambda-1\}$ and $c_i\in \{0, 1,\ldots, W/\lambda-1\}$, $i\in \{1, 2\}$.  Suppose towards a contradiction that $\rho((x', x''_1))=\rho((x', x''_2))$, i.e., that $a_1=a_2$ and $c_1=c_2$. We show that $b_1=b_2$. We  have
\begin{equation}
\begin{split}
((\weight(x')+x''_1)-(\weight(x')+x''_2)) \pmod {W}&=(x''_1-x''_2) \pmod {W}\\
&=((b_1-b_2) W/\lambda) \pmod {W}\\
\end{split}
\end{equation}
Since $|b_1-b_2|<\lambda$,  $b_1\neq b_2$ would contradict~\eqref{eq:290g9hhihfsf}. Thus, we have $b_1=b_2$, and the map $\rho$ is injective. 

Finally, note that by Lemma~\ref{lm:u-subsampling} one has 
$$
\left|\left\{x\in U: \weight(x) \in \left[0, 1/\lambda\right)\cdot W \pmod{ W} \right\}\right|=\frac1{\lambda}\cdot |U|=\left|\left\{x\in U: x_r/m\in \left[0, 1/\lambda\right)\right\}\right|,
$$
and hence $\rho$ is a bijection.
\end{proof}

\newcommand{\q}{\mathbf{q}}
\newcommand{\Int}{\text{Int}}

\subsection{Maps $\tau^\ell$ identifying the basic gadgets}\label{sec:tau-ell}
In this section we define the map $\tau^\ell$ identifying vertices in $S^\ell$ in $G^\ell=(S^\ell, T^\ell, E^\ell)$ with vertices in $T_*^{\ell-1}$ of $G^{\ell-1}=(S^{\ell-1}, T^{\ell-1}, E^{\ell-1})$ for every $\ell\in [L], \ell>0$:
$$
\tau^\ell: S^\ell\to T_*^{\ell-1}.
$$

Since $\ell$ is fixed for most of the section, we omit the superscript $\ell$. We let $G'=(S', T', E')$ denote $G^\ell=(S^\ell, T^\ell, E^\ell)$, let $G=(S, T, E)$ denote $G^{\ell-1}=(S^{\ell-1}, T^{\ell-1}, E^{\ell-1})$, adopting similar notation for all other relevant quantities.
Specifically, let $\B':=\B^\ell, \B:=\B^{\ell-1}$, and let the special coordinate vectors be denoted by $J\in \B_0\times \B_1\times \ldots\times \B_{K/2}$, and $J'=\B'_0\times \B'_1\times \ldots \times \B'_{K/2}$ respectively. Thus, we define a bijection $\tau$ from $S'$ to $T_*$:
$$
\tau: S'\to T_*.
$$

We start by defining $\tau$ on the sets $S'_k$ for $k\in [K/2]$ (recall that $S'=S_0'\uplus \ldots \uplus S_{K/2-1}'$). The restriction of $\tau$ to $S'_k$ is denoted by $\tau_k$:
$$
\tau_k: S'_k\to T_*.
$$ 
The images of $\tau_k$ that we define will be disjoint for different $k\in [K/2]$, i.e. these maps extend naturally to an injective map from the union of $S'_k$ over all $k\in [K/2]$ to $T_*$. 

\paragraph{Defining $\tau_k$.} Fix $k\in [K/2]$. Let $r\in \B$ denote the compression index of the terminal subcube $T_*$. 
Let $\Ext_k\subseteq \B'_k$ and $q_k\in \B'_k$ denote the $k$-th extension and compression indices (see Definition~\ref{def:cext}). Let $\rho_k$ be a $(K-k, q_k)$-densifying map as per Definition~\ref{def:rho}.  Now note that $T'_k$ does not depend on $q_k$ by Property~\ref{prop:q-k}.  We thus have by Lemma~\ref{lm:densification} that
\begin{equation}\label{eq:s-prime-k}
\rho_k(S'_k)=\left\{x\in T'_k: x_{q_k}/m\in \left[0, \frac1{K-k}\right) \right\}
 \end{equation}
 and $\rho_k$ maps $S'_k$ to the set on the rhs of~\eqref{eq:s-prime-k} bijectively.

Now define index sets 
\begin{equation}\label{eq:i-def}
I=J\cup \{r\}\subset [n]
\end{equation}
and 
\begin{equation}\label{eq:i-prime-def}
I'_k=\{J'_0,\ldots, J'_{k-1}\}\cup \Ext_k\cup \{q_k\}.
\end{equation} 
We sometimes write $I'$ instead of $I'_k$ when the value of $k$ is clear from context.
Note that $I=\Sp(\B)$ and  for every $k$

\begin{equation}\label{eq:i-i-prime-size}
|I|=|I'_k|=K/2+2.
\end{equation}
This allows us to write 
\begin{equation}\label{eq:t-star-product}
T_*\delequal A\times [m]^{[n]\setminus I},
\end{equation}
where
$$
A=\left\{x\in [m]^I: x_{J_s}/m\in \left[0, \frac1{K-s}\right) \text{~for all~}s\in [K/2]\right\}.
$$
Similarly, we write 
\begin{equation}\label{eq:rho-s-prime-k}
\rho_k(S'_k) \delequal D_k\times [m]^{[n]\setminus I'_k},
\end{equation}
where as per~\eqref{eq:s-prime-k}
\begin{equation*}
\begin{split}
D_k&=\left\{x\in [m]^{I'_k}: x_{i_s}/m\in \left[0, 1-\frac1{K-s}\right)\text{~for all~}s\in [k]\text{~and~}x_{q_k}/m\in \left[0, \frac1{K-k}\right) \right\}.
\end{split}
\end{equation*}

Choose a bijection 
\begin{equation}\label{eq:m-def}
M: \biguplus_{k\in [K/2]} D_k \to A.
\end{equation}
This is possible because 
\begin{equation*}
\begin{split}
\sum_{k\in [K/2]} |D_k|&=\sum_{k\in [K/2]} \frac{|\rho_k(S'_k)|}{m^{n-|I'_k|}}\text{~~~~~~~~~~~~~(by~\eqref{eq:rho-s-prime-k})}\\
&=\sum_{k\in [K/2]} \frac{|S'_k|}{m^{n-|I'_k|}}\text{~~~~~~~~~~~~~(since $\rho_k$ is a bijection)}\\
&=\frac1{K}\sum_{k\in [K/2]} \frac{|T_0|}{m^{n-|I'_k|}}\text{~~~~~~~(since $|S'_k|=\frac1{K}|T_0|$ for all $k\in [K/2]$ by Lemma~\ref{lm:size-bounds}, {\bf (2)})}\\
&=\frac1{K}\sum_{k\in [K/2]} \frac{|T_0|}{m^{n-|I|}}\text{~~~~~~~~(since $|I|=|I'_k|$ by ~\eqref{eq:i-i-prime-size})}\\
&=\frac1{2}\frac{|T_0|}{m^{n-|I|}}\\ 
&=\frac{|T_*|}{m^{n-|I|}}\text{~~~~~~~~~~~~~~~~~~~~~~~~(since $|T_*|=|T_{K/2}|=\frac1{2}|T_0|$ by Lemma~\ref{lm:size-bounds}, {\bf (1)})}\\ 
&=|A|.\text{~~~~~~~~~~~~~~~~~~~~~~~~~~~~~~(by~\eqref{eq:t-star-product})}
\end{split}
\end{equation*}

\begin{figure}[H]
	\begin{center}
		\begin{tikzpicture}
		
		\draw (-6, 0) rectangle (+6, 1);		
		\draw[fill=gray!20] (-6, 0) rectangle (-4, 1);		
		\draw[fill=gray!20] (0, 0) rectangle (+2, 1);
		\draw (-7, 0.5) node {\Large $x=$};		
		\draw (-5, 0.5) node {\Large $a$};
		\draw (1, 0.5) node {\Large $b$};
		\draw (3, 0.5) node {\Large $c$};
		
		\draw (-5, 1.5) node {\Large $\overbrace{\phantom{a+b+c}}^{I'}$};
		\draw (+1, 1.5) node {\Large $\overbrace{\phantom{a+b+c}}^{I}$};		

		\draw (-6, 0-3) rectangle (+6, 1-3);
		\draw[fill=gray!20] (-6, 0-3) rectangle (-4, 1-3);
		\draw[fill=gray!20] (0, 0-3) rectangle (+2, 1-3);
		\draw (-7, 0.5-3) node {\Large $\Pi_k(x)=$};		
		\draw (-5, 0.5-3) node {\Large $b$};
		\draw (1, 0.5-3) node {\Large $M(a)$};
		\draw (3, 0.5-3) node {\Large $c$};
		
		\draw (-5, -3-0.5) node {\Large $\underbrace{\phantom{a+b+c}}_{I'}$};
		\draw (+1, -3-0.5) node {\Large $\underbrace{\phantom{a+b+c}}_{I}$};		
		
		\draw[->] (-3, -0.2) -- (-3, -1.8);
		\draw[->] (-2, -0.2) -- (-2, -1.8);
		\draw[->] (-1, -0.2) -- (-1, -1.8);		
		\draw[->] (+3, -0.2) -- (+3, -1.8);		
		\draw[->] (+4, -0.2) -- (+4, -1.8);	
		\draw[->] (+5, -0.2) -- (+5, -1.8);	

		\draw[->] (+0, -0.2) -- (+0-6, -1.8);	
		\draw[->] (+1, -0.2) -- (+1-6, -1.8);		
		\draw[->] (+2, -0.2) -- (+2-6, -1.8);	
		
		\draw (-5, -1) node {bijection $\eta$};

		\end{tikzpicture}
		\caption{Illustration of the map $\Pi_k$. Note that $\Pi_k$ simply leaves coordinates in $[n]\setminus (I\cup I')$, denoted by $c$, unchanged, copies coordinates in $I$ to coordinates in $I'$ using an arbitrarily chosen but fixed bijection $\eta$, and applies the map $M$ to coordinates in $I'$, assigning the result to coordinates in $I$.}	\label{fig:pi-k}
	\end{center}
	
\end{figure}
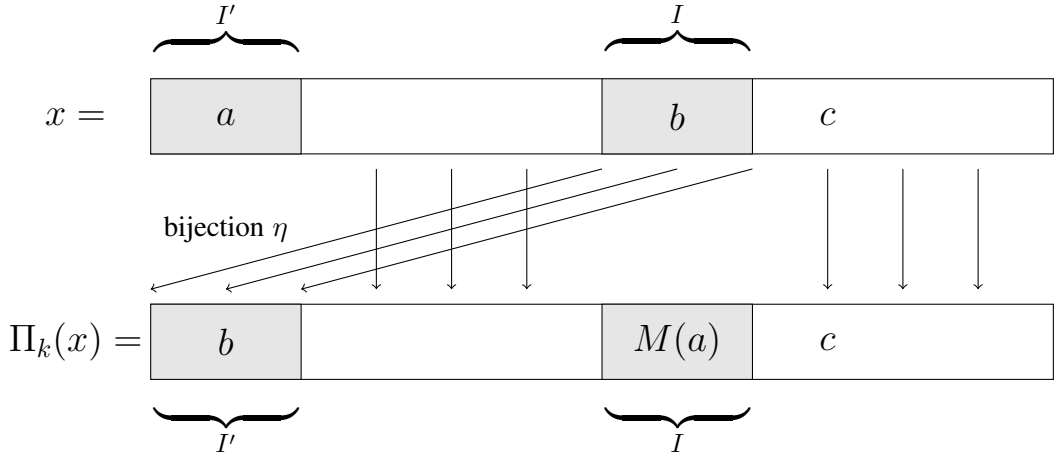

\begin{remark}\label{rm:tau-prefix}
Note that for every $k\in [K/2]$ the set $D_k$ is determined by $I$ and $I'_k$, and $A$ is determined by $I$. Thus, we can construct the map $M$ incrementally, by fixing  $M|_{D_k}: D_k \to A$ as soon as $I'_k$ becomes known. The latter in fact amounts to knowing $\{J'_0, J'_1,\ldots, J'_{k-1}\}$, since we fix $Ext_k$ and $q_k$ for our hard input distribution.
\end{remark}

\begin{definition}[Subcube permutation map $\Pi_k$]\label{def:pi-k}
Define an injective map
\begin{equation}\label{eq:pi-k-def}
\Pi_k: \rho_k(S'_k) \to T_*
\end{equation}
by letting $\Pi_k(x)$ replace $x_I$ with $x_{I'}$ (where $I'=I'_k)$, replace $x_{I'}$ with $M(x_I)$ and leave coordinates outside of $I\cup I'$ untouched, so that 
$$
\Pi_k(z)=(b, M(a), c).
$$ 
See Fig.~\ref{fig:pi-k}  for an illustration.

Formally, we first let $\eta: I\to I'$ be an arbitrary bijection. Given $x\in \rho_k(S'_k)$, write 
$$
x=(a, b, c),
$$ 
where $a=x_{I'}\in D_k\subseteq [m]^{I'}$, $b=x_I\in [m]^I$ and $c=x_{[n]\setminus (I\cup I')}\in [m]^{[n]\setminus (I\cup I')}$. We let
$$
\Pi_k(x):=z,
$$ 
where 
\begin{equation*}
z_j=\left\lbrace
\begin{array}{ll}
b_{\eta^{-1}(j)}&\text{~if~}j\in I'\\
(M(a))_{\eta(j)}&\text{~if~}j\in I\\
c_j&\text{o.w.}
\end{array}
\right.
\end{equation*}
\end{definition}

We will use
\begin{definition}[Rectangle]\label{def:rectangle}
We say that a set $R$ is a rectangle in $I\subset [n]$ if $R\subseteq [m]^I=\prod_{i\in I} A_i$, where $A_i\subseteq [m]$ for every $i\in I$ (i.e., $R$ is the direct product of the sets $A_i, i\in I$).
\end{definition}
The following lemma establishes a key property of the map $\Pi_k$:

\begin{lemma}[Basic properties of the permutation maps $\Pi_k$]\label{lm:pi-prop}
For every $k\in [K/2]$, every $x\in [m]^{I'}$ (where $I'=I'_k$), every rectangle $R$ in $\Lambda\subseteq [n]\setminus (I\cup I')$ the rectangle 
$$
F=\{x\}\times R\times [m]^{[n]\setminus (\Lambda\cup I')}
$$ 
satisfies 
$$
\Pi_k(F)=\{M(x)\}\times R\times [m]^{[n]\setminus (\Lambda\cup I)},
$$
where $M(x)\in [m]^I$ (as per~\eqref{eq:m-def}).
\end{lemma}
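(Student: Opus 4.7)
The plan is to unpack both sides of the claimed identity block by block and verify that $\Pi_k$ carries the source rectangle $F$ bijectively onto the target rectangle. The first step is to note that $[n]$ decomposes into four pairwise disjoint blocks $I$, $I'$, $\Lambda$, and $[n]\setminus(I\cup I'\cup\Lambda)$: indeed $I\cap I'=\emptyset$ because $I\subseteq\B=\B^{\ell-1}$ and $I'\subseteq\B'=\B^\ell$ lie in disjoint coordinate blocks, while $\Lambda\subseteq[n]\setminus(I\cup I')$ by assumption. Consequently the free block $[n]\setminus(\Lambda\cup I')$ in the definition of $F$ splits as $I\uplus([n]\setminus(I\cup I'\cup\Lambda))$, and a generic point $y\in F$ is parametrized by $y_{I'}=x$ (forced), $y_I=:b\in[m]^I$ (free), $y_\Lambda\in R$, and $y_{[n]\setminus(I\cup I'\cup\Lambda)}$ free.

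Next I would compute $z:=\Pi_k(y)$ block by block directly from Definition~\ref{def:pi-k}. On the $I'$ block, $z_{I'}$ is the image of $b=y_I$ under the bijection induced by $\eta^{-1}:I'\to I$; because $b$ ranges freely over $[m]^I$, $z_{I'}$ ranges bijectively over $[m]^{I'}$. On the $I$ block, $z_I$ depends only on $a=y_{I'}=x$ via $M$ and the relabelling $\eta$, so $z_I$ is identically equal to the fixed vector $M(x)\in A\subseteq[m]^I$. On $[n]\setminus(I\cup I')$ the map $\Pi_k$ is the identity, so $z_\Lambda=y_\Lambda\in R$ and $z_{[n]\setminus(I\cup I'\cup\Lambda)}$ ranges freely.

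Assembling the three observations, every $z\in\Pi_k(F)$ satisfies $z_I=M(x)$, $z_\Lambda\in R$, and is unconstrained on $[n]\setminus(\Lambda\cup I)=I'\uplus([n]\setminus(I\cup I'\cup\Lambda))$, giving $\Pi_k(F)\subseteq\{M(x)\}\times R\times[m]^{[n]\setminus(\Lambda\cup I)}$. For the reverse inclusion, given any candidate point $z$ on the right-hand side, one recovers a preimage $y\in F$ by setting $y_{I'}=x$, letting $y_I$ be the unique $b\in[m]^I$ with $b_{\eta^{-1}(j)}=z_j$ for all $j\in I'$, and copying the $\Lambda$ and outside coordinates of $z$ back into $y$; by construction $\Pi_k(y)=z$ and $y\in F$, so the two rectangles coincide.

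I do not expect a serious obstacle: the lemma is a bookkeeping statement about a coordinate permutation, and the only subtleties are verifying the disjointness of the four coordinate blocks and observing that on the $I$ block of $z$ the image $M(a)=M(x)$ is fully determined by the fixed $I'$ block of $y$, independent of the free $I$ coordinates. Both facts are immediate from the definitions of $I$, $I'$, $\Lambda$ and of $\Pi_k$.
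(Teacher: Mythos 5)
Your proof is correct and follows essentially the same route as the paper: decompose $[n]$ into the four disjoint blocks $I$, $I'$, $\Lambda$, and the remainder, observe that $\Pi_k$ fixes the $I'$-coordinate $a=x$ so that $z_I=M(x)$ is constant, permutes the free $I$-block onto $I'$, and leaves $\Lambda$ and the remainder untouched. The only cosmetic difference is that you spell out the reverse inclusion by constructing an explicit preimage, whereas the paper gets equality in one pass by noting that $b$ and $c$ range over all of $[m]^I$ and $R\times[m]^{[n]\setminus(\Lambda\cup I'\cup I)}$ respectively; this is the same argument.
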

\begin{remark}
Intuitively, this lemma says that $\Pi$ maps entire subspace to subspaces, which is a key property that we need our glueing maps to satisfy. This is because, as described in Section~\ref{sec:tech-overview}, if we were to upper bound the size of the maximum matching constructed by algorithm on a single gadget (like~\cite{Kapralov13} does), we would need to consider a vertex cover that is defined by the terminal subcube $T_*$ and its downset. Our construction of a vertex cover in the concatenation of basic gadgets will use this approach, and we need (the downset of) the terminal subcube in one gadget to have `nice structure' when mapped to another gadget using the glueing map $\tau$. Our mapping $\Pi_k$ is useful for this purpose, because the terminal subcube of a subsequent gadget is a subcube defined by coordinates  in $[n]\setminus (I\cup I')$ (these coordinates are the set $\Lambda$ above), and Lemma~\ref{lm:pi-prop} shows that  this set is still a subcube after an application of $\Pi_k$.

This lemma is crucially used in Lemma~\ref{lm:rect-nu-j}, and the rectangle $R$ in question there is the following. We first take some rectangle $F$ in $\Sp(\B^\ell)$ for some $\ell>0$. Then we apply $j$ iterations of the predecessor map $\nu$ to it, namely apply the map $\nu_{\ell, j}$. This results in a rectangle in some subset $\Lambda$ of coordinates with $\Lambda\subseteq \Sp(\B^{\ell+j})$ -- this rectangle essentially contains information about the trajectory of $F$ through repeated invocations of the predecessor map $\nu$. Lemma~\ref{lm:pi-prop} essentially shows that the permutation map $\Pi_k$ does not interfere with this information as long as $\Lambda$ does not overlap with $I'$, which is the case in the application in Lemma~\ref{lm:rect-nu-j}.
\end{remark}
\begin{proof}
We first rewrite the input rectangle $F$ as
$$
F=\{x\}\times [m]^I \times \left(R\times [m]^{[n]\setminus (\Lambda\cup I' \cup I)}\right).
$$
We can thus express every 
$$
z\in \{x\}\times [m]^I \times \left(R\times [m]^{[n]\setminus (\Lambda\cup I' \cup I)}\right)
$$
as 
$$
z=(a, b, c),
$$
where $a\in [m]^{I'}, b\in [m]^I$ and $c\in [m]^{[n]\setminus (I'\cup I)}$, and get by Definition~\ref{def:pi-k}
$$
\Pi_k(z)=(b, M(a), c).
$$
Thus, as $z$ ranges over $\{x\}\times [m]^I \times \left(R\times [m]^{[n]\setminus (\Lambda\cup I' \cup I)}\right)$, the parameter $a$ always equals $x$, $b$ ranges over $[m]^I$ and $c$ ranges over $R\times [m]^{[n]\setminus (\Lambda\cup I' \cup I)}$.
Hence,
\begin{equation*}
\begin{split}
\Pi_k(F)&=\{M(a)\}\times [m]^I\times R\times [m]^{[n]\setminus (\Lambda\cup I' \cup I)}=\{M(a)\}\times R\times [m]^{[n]\setminus (\Lambda\cup I')},
\end{split}
\end{equation*}
as required. \end{proof}

We can now define

\begin{definition}[Glueing map $\tau$]\label{def:tau}
For every $k\in [K/2]$ we define $\tau_k: S'_k\to T_*$ by letting
\begin{equation}\label{eq:tau-def}
\tau_k(x):=\Pi_k(\rho_k(x)).
\end{equation}
Define 
$$
\tau: \biguplus_{k\in [K/2]} S'_k\to T_*
$$ by letting $\tau(x)=\tau_k(x)$ for $x\in S'_k$. 
\end{definition}

\begin{lemma}[Basic properties of $\tau$]\label{lm:tau-prop}
The map $\tau$ is a bijective map from $\biguplus_{k\in [K/2]} S'_k$ to $T_*$.
\end{lemma}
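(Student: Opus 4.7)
The plan is to verify bijectivity by composing the two ingredients of $\tau_k$ and then using the disjointness of the $D_k$'s together with the counting identity $\sum_k |S'_k| = |T_*|$ already derived in the text. First I will establish that for each $k \in [K/2]$, the map $\tau_k = \Pi_k \circ \rho_k$ is an injection from $S'_k$ into $T_*$. Injectivity of $\rho_k$ on $S'_k$ follows immediately from Lemma~\ref{lm:densification} (with $\lambda = K-k$, $r = q_k$, and $U = T'_k$, noting that $T'_k$ does not depend on $q_k$ by Property~\ref{prop:q-k}). Injectivity of $\Pi_k$ on $\rho_k(S'_k)$ follows from Definition~\ref{def:pi-k}: given $\Pi_k(x) = z$, one recovers $c = z_{[n]\setminus (I\cup I')}$ directly, $b = z_I$ directly (up to the fixed bijection $\eta$), and $a = M^{-1}(z_{I})$ using that $M$ is a bijection on $\biguplus_k D_k$.

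Next I will show that $\tau_k(S'_k) \subseteq T_*$. Using the product representation $\rho_k(S'_k) \delequal D_k \times [m]^{[n]\setminus I'_k}$ from~\eqref{eq:rho-s-prime-k}, I will apply Lemma~\ref{lm:pi-prop} with $\Lambda = \emptyset$ (or by decomposing $\rho_k(S'_k) = \bigcup_{a \in D_k} \{a\} \times [m]^{[n]\setminus I'_k}$ and applying the lemma to each slice). Each slice $\{a\} \times [m]^{[n]\setminus I'_k}$ is mapped by $\Pi_k$ to $\{M(a)\} \times [m]^{[n]\setminus I}$, and taking the union over $a \in D_k$ gives
\[
\tau_k(S'_k) \;=\; \Pi_k(\rho_k(S'_k)) \;\delequal\; M(D_k) \times [m]^{[n]\setminus I}.
\]
Since $M(D_k) \subseteq A$ and $T_* \delequal A \times [m]^{[n]\setminus I}$ by~\eqref{eq:t-star-product}, this verifies $\tau_k(S'_k) \subseteq T_*$.

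Third, I will establish that the images $\tau_k(S'_k)$ are pairwise disjoint across $k \in [K/2]$. This is where the choice of $M$ as a bijection on the \emph{disjoint} union $\biguplus_{k\in [K/2]} D_k$ pays off: for $k_1 \neq k_2$, the sets $M(D_{k_1})$ and $M(D_{k_2})$ are disjoint subsets of $A$, and hence $M(D_{k_1}) \times [m]^{[n]\setminus I}$ and $M(D_{k_2}) \times [m]^{[n]\setminus I}$ are disjoint subsets of $T_*$. Combined with the per-$k$ injectivity, this shows $\tau$ is injective on $\biguplus_{k\in [K/2]} S'_k$.

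Finally, surjectivity follows from counting. The chain of equalities already displayed after~\eqref{eq:m-def} gives $\sum_{k\in [K/2]} |D_k| = |A|$, and multiplying by $m^{n-|I|} = m^{n-|I'_k|}$ (using~\eqref{eq:i-i-prime-size}) together with $|\rho_k(S'_k)| = |D_k|\cdot m^{n-|I'_k|}$ and $|S'_k| = |\rho_k(S'_k)|$ (bijectivity of $\rho_k$) yields $\sum_{k\in [K/2]} |S'_k| = |A|\cdot m^{n-|I|} = |T_*|$. Since $\tau$ is an injection from a set of cardinality $|T_*|$ into $T_*$, it is a bijection. I do not anticipate a serious obstacle here; the only subtlety is matching up the product structures correctly in the second step, which Lemma~\ref{lm:pi-prop} is tailored to handle.
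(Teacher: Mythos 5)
Your proof is correct and follows essentially the same route as the paper's: injectivity of $\rho_k$ from Lemma~\ref{lm:densification}, injectivity of $\Pi_k$ from injectivity of $M$, disjointness of images across $k$ from $M$ being a bijection on $\biguplus_k D_k$, and surjectivity from the counting identity $\sum_k |D_k| = |A|$. You spell out the intermediate steps (in particular the containment $\tau_k(S'_k)\subseteq T_*$ via Lemma~\ref{lm:pi-prop}) in more detail than the paper, but the structure of the argument is identical.
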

\begin{proof}
The map $\rho_k$ is bijective by Lemma~\ref{lm:densification} (see discussion after~\eqref{eq:s-prime-k} for more details) . The map $\Pi_k$ is injective, since $M$ is injective. Bijectivity of $\tau$ follows from the fact that images of $D_k$ under $M$ are disjoint, and since $\sum_{k\in [K/2]} |D_k|=|A|$ by~\eqref{eq:m-def}.
\end{proof}
\begin{definition}[Basic coordinates $\Gamma$]\label{def:basic-coordinates}
We define the set of {\em basic coordinates} as
$$
\Gamma=\bigcup_{\ell \in [L]} \left(\Sp(\B^\ell)\cup \bigcup_{k\in [K/2]} \left(\Ext_k^\ell\cup \{q_k^\ell\}\right)\right).
$$
\end{definition}
\begin{remark}
Note that $\Gamma\subseteq [n]$ contains all coordinates that densifying maps (Definition~\ref{def:rho}) and permutation maps (Definition~\ref{def:pi-k}) use across all $L$ gadgets $G^\ell$. This fact is crucial for the following lemma (Lemma~\ref{lm:cut-structure}).
\end{remark}

\begin{lemma}\label{lm:tau-inverse-prop}
For every $\ell\in [L], \ell<L-1$, every $x, y\in T_*^\ell$ such that $x_i=y_i$ for all $i\in \Gamma$ there exists $a\in [K/2]$ and $u, v\in S_a^{\ell+1}$ such that
\begin{description}
\item[(1)] $x=\tau^{\ell+1}(u)$ and $y=\tau^{\ell+1}(v)$;
\item[(2)] $u_\Gamma=v_\Gamma$.
\end{description}
\end{lemma}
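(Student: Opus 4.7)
The plan is to define $u := (\tau^{\ell+1})^{-1}(x)$ and $v := (\tau^{\ell+1})^{-1}(y)$, which are uniquely determined by the bijectivity in Lemma~\ref{lm:tau-prop}, and then verify (1) and (2) by tracking how $\tau^{\ell+1}$ acts on individual coordinates. First I would identify the common summand $S_a^{\ell+1}$ containing both $u$ and $v$. Since $\tau^{\ell+1}$ is the disjoint union over $k\in[K/2]$ of $\tau_k^{\ell+1} = \Pi_k^{\ell+1}\circ\rho_k^{\ell+1}$, and Definition~\ref{def:pi-k} writes the $M$-image of the $I'_k$-coordinates of the input into the $I = \Sp(\B^\ell)$-coordinates of the output, the partition of $A$ into the disjoint images $M(D_k)$ implies that $z_I$ alone determines the index $k$ with $z\in \tau_k^{\ell+1}(S_k^{\ell+1})$. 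Since $I\subseteq\Sp(\B^\ell)\subseteq\Gamma$, the hypothesis $x_\Gamma = y_\Gamma$ yields $x_I = y_I$, producing the desired common $a$.

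To verify $u_\Gamma = v_\Gamma$, set $I' = I'_a$ as in~\eqref{eq:i-prime-def}, note that $I\cup I'\subseteq\Gamma$ by Definition~\ref{def:basic-coordinates}, and split coordinates of $\Gamma$ into three batches. For $j\in\Gamma\setminus(I\cup I')$, the map $\tau_a^{\ell+1}$ is the identity on $j$, so $u_j = x_j = y_j = v_j$. For $j\in I$, inverting $\Pi_a^{\ell+1}$ per Definition~\ref{def:pi-k} yields $u_j = x_{\eta(j)}$, a coordinate of $x_{I'}$; since $I'\subseteq\Gamma$, this equals $v_j$. For $j\in I'\setminus\{q_a^{\ell+1}\}$, the same inversion expresses $u_j$ as a coordinate of $M^{-1}(x_I)$ (and $\rho_a^{\ell+1}$ is the identity on $j$), which matches $v_j$ because $I\subseteq\Gamma$.

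The hard part will be the last coordinate $j = q_a^{\ell+1}$, where $\rho_a^{\ell+1}$ genuinely discards information. By Lemma~\ref{lm:densification} the preimage under $\rho_a^{\ell+1}$ inside $S_a^{\ell+1}$ is recovered by choosing the unique block index $b\in\{0,\ldots,K-a-1\}$ in the decomposition $u_{q_a^{\ell+1}} = a_0 W + b\,W/(K-a) + c_0$ of Definition~\ref{def:rho} for which $\weight(u)\in[0,W/(K-a))\pmod W$. My plan is to set $w = \rho_a^{\ell+1}(u)$, $w' = \rho_a^{\ell+1}(v)$, use the previous paragraph to conclude that $w$ and $w'$ agree on all of $\Gamma$ (so in particular $a_0$ and $c_0$ match on the $q_a^{\ell+1}$ coordinate), and then reduce the equality of the recovered $b$-values to $\weight(w)\equiv\weight(w')\pmod W$. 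Decomposing each weight into contributions from $I$, $I'$ and $[n]\setminus(I\cup I')$ and noting that $\Pi_a^{\ell+1}$ reshuffles the $I$- and $I'$-contributions in the same way for $x$ and for $y$, the required congruence collapses to $\sum_{j\notin I\cup I'}(x_j - y_j)\equiv 0\pmod W$. Establishing this residue identity is the main obstacle; I expect to resolve it by exploiting the divisibility properties~\ref{p0}--\ref{p1} and, since Remark~\ref{rm:tau-prefix} permits incremental construction of the bijection $M$ in~\eqref{eq:m-def}, by choosing $M$ block-by-block so that it preserves weight modulo $W$, which then forces the tail contribution to vanish mod $W$ on any two points of $T_*^\ell$ whose $\Gamma$-coordinates agree.
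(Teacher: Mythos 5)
Your setup is the same as the paper's: take the unique preimages $u=(\tau^{\ell+1})^{-1}(x)$, $v=(\tau^{\ell+1})^{-1}(y)$, deduce a common block index $a$ from $x_I=y_I$ together with the disjointness of the images $M(D_k)$, and then check agreement on $\Gamma$ batch by batch. Your treatment of the three easy batches $\Gamma\setminus(I\cup I')$, $I$, and $I'\setminus\{q_a\}$ (where $q_a:=q_a^{\ell+1}$) is correct, and your identification of $q_a$ as the crux is exactly right. In fact on this last point your proposal is \emph{more} careful than the paper's own proof: the paper establishes $\rho_f(u)_\Gamma=\rho_f(v)_\Gamma$ (via the chain $a_u=a_v$, $b_u=b_v$, $(c_u)_\Gamma=(c_v)_\Gamma$) and then concludes in a single line ``Thus, we have $u_\Gamma=v_\Gamma$,'' never addressing the fact that $\rho_f$ modifies precisely the coordinate $q_f\in\Gamma$, so that passing from $\rho_f(u)_\Gamma=\rho_f(v)_\Gamma$ to $u_\Gamma=v_\Gamma$ requires recovering the discarded block index $b$ of Definition~\ref{def:rho} --- the very step you flag as hard.

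Unfortunately, your proposed resolution cannot close this gap, and indeed nothing along these lines can. You correctly reduce the $q_a$ equality to whether $\weight(\rho_a(u))$ and $\weight(\rho_a(v))$ lie in the same sub-block of $[0,W)$ of width $W/\lambda$ (with $\lambda=K-a$), which is what determines the recovered $b$. Since $\Pi_a$ is a fixed coordinate permutation on $I\cup I'$, and $x_I=y_I$, $x_{I'}=y_{I'}$ (both are subsets of $\Gamma$), the $I$- and $I'$-contributions to this weight are already identical for $u$ and $v$ \emph{regardless of $M$} --- so the idea of making $M$ weight-preserving buys nothing. One is left with
\begin{equation*}
\weight(\rho_a(u))-\weight(\rho_a(v))=\sum_{j\notin I\cup I'}(x_j-y_j)=\sum_{j\notin\Gamma}(x_j-y_j),
\end{equation*}
and this sum is completely unconstrained: $T_*^\ell$ only constrains coordinates in $J^\ell\subseteq\Gamma$, and the hypothesis $x_\Gamma=y_\Gamma$ says nothing about the non-$\Gamma$ coordinates. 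Concretely, pick any $j_0\notin\Gamma$ and any $x\in T_*^\ell$ with $x_{j_0}+W/\lambda<m$, and let $y$ agree with $x$ everywhere except $y_{j_0}=x_{j_0}+W/\lambda$. Then $y\in T_*^\ell$, $x_\Gamma=y_\Gamma$, but $\weight(\rho_a(v))-\weight(\rho_a(u))=W/\lambda$, which shifts the block index $\lfloor(\weight\bmod W)/(W/\lambda)\rfloor$ by exactly $1\pmod\lambda$; since $\lambda\geq K/2+1\geq 2$, the recovered $b$-values differ and $u_{q_a}\neq v_{q_a}$.

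So your plan cannot be completed as written, and the paper's own proof of this lemma leaves the same step unaddressed. The way out can be read off from the paper's main construction (Section~\ref{sec:main-result-full} onward), where the analogous statement is deliberately weaker: Lemma~\ref{lm:traceback-full} does not claim pointwise agreement on $\Gamma$ but only that the preimages $\wt{x},\wt{y}$ differ by a sparse shift $\wt{y}=\wt{x}+\lambda\u+\sum_\i t_\i\i$ with bounded $\|t\|_\infty$, and the set of shifting directions explicitly includes the compression vectors $\mathbf{q}_k$. The toy lemma should be weakened to $u_{\Gamma\setminus\{q_a\}}=v_{\Gamma\setminus\{q_a\}}$ together with a bounded discrepancy at $q_a$, and the downstream induction in Lemma~\ref{lm:cut-structure} adjusted to tolerate and propagate a bounded number of such compression-coordinate discrepancies.
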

\begin{proof} We let $\tau:=\tau^{\ell+1}$, $T_*:=T_*^\ell$, $S':=S^{\ell+1}$. We also let $J:=J^\ell, J':=J^{\ell+1}$, $r:=r^{\ell+1}$, and $q_i:=q^{\ell+1}_i$ and $\Ext_i:=\Ext_i^{\ell+1}$ for $i\in [K/2]$ to simplify notation. 

Since $\tau$ maps $S'=\bigcup_{f\in [K/2]} S_f'$ bijectively to $T_*$, there exist $f, g\in [K/2]$ and $u\in S_f', v\in S_g'$ such that $x=\tau(u)$ and $y=\tau(v)$.
Define
\begin{equation*}
\begin{split}
I&:=J^\ell\cup \{r^\ell\}\\
I'_f&:=J'_{<f} \cup \Ext_f \cup \{q_f\}\\
I'_g&:=J'_{<g} \cup \Ext_g \cup \{q_g\}.
\end{split}
\end{equation*}

 By Definition~\ref{def:tau} this means that 
$$
x=\Pi_f(\rho_f(u))\text{~~and~~}y=\Pi_g(\rho_g(v)),
$$
where $\Pi_f, \Pi_g$ are subcube permutation maps as per Definition~\ref{def:pi-k} and $\rho_f$ and $\rho_g$ are $(K-f, q_f)$ and $(K-g, q_g)$-densifying maps as per Definition~\ref{def:rho} respectively. Now write
$$
\rho_f(u)=(a_u, b_u, c_u), \text{~where~}a_u\in D_f\subseteq [m]^{I'_f}, b_u\in [m]^I\text{~and~}c_u\in [m]^{[n]\setminus (I\cup I'_f)}.
$$
Similarly, write 
$$
\rho_g(v)=(a_v, b_v, c_v), \text{~where~}a_v\in D_g\subseteq [m]^{I'_g}, b_v\in [m]^I\text{~and~}c_v\in [m]^{[n]\setminus (I\cup I'_g)}.
$$

Then we have by~Definition~\ref{def:pi-k}
\begin{equation}\label{eq:932yt23y-1}
x=\Pi_f(\rho_f(u))=\Pi_f((a_u, b_u, c_u))=(b_u, M(a_u), c_u),
\end{equation}
where the partition of coordinates on the right hand side is $I'_f\cup I\cup ([n]\setminus (I'_f\cup I))$ and
\begin{equation}\label{eq:932yt23y-2}
y=\Pi_g(\rho_g(v))=\Pi_g((a_v, b_v, c_v))=(b_v, M(a_v), c_v),
\end{equation}
where the partition on the right hand side is $I'_g\cup I\cup ([n]\setminus (I'_g\cup I))$. Here $M: \biguplus_{k\in [K/2]} D_k\to A$ is the bijective map from~\eqref{eq:m-def}.  Since $x_\Gamma=y_\Gamma$ and $I=\Sp(\B^\ell)\subseteq \Gamma$ (as per Definition~\ref{def:basic-coordinates}), we have $M(a_u)=M(a_v)$, and since $M$ is a bijection from $\biguplus_{k\in [K/2]} D_k$ to $A$, this means that $g=f$ and $a_u=a_v$. This in turn means that $I'_f=I'_g$, and we let $I':=I'_f=I'_g$ to simplify notation. In particular, we now have that the partition of coordinates on the right hand side of~\eqref{eq:932yt23y-1} and~\eqref{eq:932yt23y-2} is the same. Since $I'\subseteq \Sp(\B^\ell)\cup \Ext'_f\cup \{q_f\}\subseteq \Gamma$ and $x_\Gamma=y_\Gamma$ by assumption, we have $b_u=b_v$. The assumption $x_\Gamma=y_\Gamma$ also implies 
$$
(c_u)_\Gamma=x_{\Gamma\cap ([n]\setminus (I\cup I'))}=y_{\Gamma\cap ([n]\setminus (I\cup I'))}=(c_v)_\Gamma.
$$
Thus, we have $u_\Gamma=v_\Gamma$, as required.
\end{proof}

\subsection{The predecessor map $\nu$ and its properties}\label{sec:nu}
The predecessor map $\nu$, defined below, is our main tool in defining a vertex cover that lets us bound the size of the matching constructed by a small space algorithm. Intuitively, the predecessor map $\nu_{\ell, j}$ maps a subset of $T^\ell$ for some $\ell\in [L]$ through $j$ repeated applications of the glueing map $\tau^\ell$ interleaved with applications of the $\downset$ map. This is a natural object, since our construction is motivated by the fact that for appropriately defined `nice' subsets $U\subseteq T^\ell$, namely for appropriately defined rectangles (see Lemma~\ref{lm:key-intro} in Section~\ref{sec:tech-overview}), the edge boundary of the set $U\cup \downset^\ell(U)$ is very sparse, which is the basis of our hard input instance.

\begin{definition}[Predecessor map $\nu$]\label{def:nu}
We define the map $\nu_{\ell, j}$ mapping {\em subsets $U\subseteq T^\ell$} to subsets of $T^{\ell-j}$ by induction on $j\geq 0$ as follows. For $j=0$ let
$\nu_{\ell, 0}(U):=U$. For $j>0$ let
$$
\nu_{\ell, j}(U):=\tau^{\ell-(j-1)}(\downset^{\ell-(j-1)}(\nu_{\ell, j-1}(U))).
$$
We define the {\em closure} map $\nu_{\ell, *}$ by
$$
\nu_{\ell, *}(U):=\bigcup_{\substack{j=0\\j \text{~even}}}^\ell \nu_{\ell, j}(U).
$$

We define the map $\mu_{\ell, j}$ mapping {\em subsets $U\subseteq T^\ell$} to subsets of $S^{\ell-j}$ by letting
$$
\mu_{\ell, j}(U):=\downset^{\ell-j}(\nu_{\ell, j}(U))
$$
for $j=0,\ldots, \ell$. We let
$$
\mu_{\ell, *}(U):=\bigcup_{\substack{j=0\\j \text{~even}}}^\ell \mu_{\ell, j}(U).
$$
\end{definition}
\begin{remark}
We stress that the maps $\nu_{\ell, j}$ as well as $\mu_{\ell, j}$ are defined as mapping {\em subsets} of $T^\ell$ to subsets of $T^{\ell-j}$ (resp. $S^{\ell-j}$). This is somewhat more convenient, as otherwise they would not be one to one maps from elements of $T^\ell$ to elements of $T^{\ell-j}$ (resp. $S^{\ell-j}$), because the $\downset$ function is not one to one as per Definition~\ref{def:downset}).
\end{remark}
\begin{remark}
Note that the closure map $\nu_{\ell, *}$ takes a set $U$ to a union of sets $\nu_{\ell, j}(U)$ for even $j$. The significance of the parity constraint on $j$ lies, in particular, in the fact that if $U$ is entirely contained in either the $P$ or the $Q$ side of the bipartition defined in~\eqref{eq:p-def} and~\eqref{eq:q-def}, the closure of $U$, namely $\nu_{\ell, *}(U)$, belongs to the same side of the bipartition. At the same time, the set $\mu_{\ell, *}(U)$ belongs to the other side of the bipartition due to the application of the $\downset$ map in Definition~\ref{def:nu} above. 
\end{remark}

The main results of this section are the following two lemmas. 

The first lemmas is central to establishing the required upper bound on the size of the vertex cover (see Lemma~\ref{lm:vertex-cover}) that bounds the performance of a small space algorithm in Section~\ref{sec:simple-lb}.
\begin{lemma}\label{lm:mu-ell-j}
For every $\ell \in [L]$, every $j=0,\ldots, \ell$, one has 
$$
(\ln 2-C/K)^j \frac1{2}(1-\ln 2)|T^\ell|\leq |\mu_{\ell, j}(T^\ell\setminus T_*^\ell)|\leq (\ln 2+C/K)^j \frac1{2}(1-\ln 2)|T^\ell|.
$$
\end{lemma}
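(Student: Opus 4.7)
I would prove the statement by induction on $j$, with the inductive recursion built around the identity $|\mu_{\ell,j}(U)| \approx (\ln 2) \cdot |\mu_{\ell, j-1}(U)|$ for $U = T^\ell \setminus T^\ell_*$. The base case $j=0$ is $\mu_{\ell,0}(T^\ell \setminus T^\ell_*) = \downset^\ell(T^\ell \setminus T^\ell_*)$, and I would decompose
\[
\downset^\ell(T^\ell \setminus T^\ell_*) = \bigsqcup_{s\in [K/2]} \downset^\ell_s(T^\ell \setminus T^\ell_*).
\]
Since labels in $T^\ell$ are distinct and $T^\ell_* \subseteq T^\ell_s$ for every $s\in [K/2]$, the $s$-th summand is (up to labels) $S^\ell_s \setminus T^\ell_*$. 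Using $|S^\ell_s| = |T^\ell|/K$ from Lemma~\ref{lm:size-bounds} together with Lemma~\ref{lm:u-subsampling} applied to $T^\ell_*$ (which is independent of the compression coordinate $r^\ell$) with $\lambda = K-s$, I would conclude $|S^\ell_s \cap T^\ell_*| = \frac{1}{K-s}|T^\ell_*| = \frac{1}{2(K-s)}|T^\ell|$. Summing gives
\[
|\mu_{\ell,0}(T^\ell \setminus T^\ell_*)| = |T^\ell|\cdot \frac{1}{2}\Bigl(1 - \sum_{s\in [K/2]} \tfrac{1}{K-s}\Bigr),
\]
and Claim~\ref{cl:sum-int} bounds the inner sum within $[\ln 2 - 1/K, \ln 2]$, yielding the base case up to $O(1/K)|T^\ell|$ error.

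\textbf{Inductive step.} Assuming the bound for $j-1$, observe that by Definition~\ref{def:nu}
\[
\nu_{\ell,j}(T^\ell \setminus T^\ell_*) = \tau^{\ell-j+1}\bigl(\mu_{\ell,j-1}(T^\ell \setminus T^\ell_*)\bigr),
\]
so by bijectivity of $\tau^{\ell-j+1}: S^{\ell-j+1} \to T^{\ell-j}_*$ (Lemma~\ref{lm:tau-prop}) we have $|\nu_{\ell,j}(T^\ell \setminus T^\ell_*)| = |\mu_{\ell,j-1}(T^\ell \setminus T^\ell_*)|$, and the set $V := \nu_{\ell,j}(T^\ell \setminus T^\ell_*)$ lies entirely in $T^{\ell-j}_*$. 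For each $s\in [K/2]$, since $V \subseteq T^{\ell-j}_* \subseteq T^{\ell-j}_s$, the set $\downset^{\ell-j}_s(V)$ is $V$ further filtered by the $S^{\ell-j}_s$ weight condition. Assuming $V$ is independent of a coordinate (which I address below), Lemma~\ref{lm:u-subsampling} with $\lambda = K-s$ yields $|\downset^{\ell-j}_s(V)| = \frac{1}{K-s}|V|$, and summing over $s\in [K/2]$ together with Claim~\ref{cl:sum-int} gives $|\mu_{\ell,j}(T^\ell\setminus T^\ell_*)| = |V|\cdot(\ln 2 \pm O(1/K))$. Multiplying the inductive bound for $j-1$ by this factor and absorbing the multiplicative $O(1/K)$ into a slightly larger constant $C$ closes the induction (for both the upper and lower estimates).

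\textbf{Main obstacle.} The nontrivial step is verifying that $V = \nu_{\ell,j}(T^\ell \setminus T^\ell_*)$ satisfies the independence hypothesis of Lemma~\ref{lm:u-subsampling} — i.e., that $V$ can be written as a union of rectangles, each independent of a coordinate on which subsampling can be performed with rate $1/(K-s)$. The natural candidate coordinate is the compression index $r^{\ell-j}$ associated with $T^{\ell-j}_*$, and the needed structural fact is that each application of the glueing map $\tau^{\ell-j+1}$ preserves rectangular structure outside $I\cup I'$, which is exactly the content of Lemma~\ref{lm:pi-prop} applied to the images of the densifying map $\rho_k$. I would therefore prove, by a parallel induction on $j$, that $\nu_{\ell,j}(T^\ell\setminus T^\ell_*)$ decomposes as a disjoint union of rectangles whose free coordinates include $r^{\ell-j}$ (this is the companion statement hinted at by Lemma~\ref{lm:rect-nu-j}). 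With this structural invariant established, one applies Lemma~\ref{lm:u-subsampling} piece by piece and then aggregates cardinalities, completing the proof.
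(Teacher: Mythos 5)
Your approach is genuinely different in organization from the paper's. The paper first proves a standalone rectangle lemma (Lemma~\ref{lm:rect-nu-j}: $|\nu_{\ell,j}(F)| = (\ln 2 \pm C/K)^j|F|$ for rectangles $F$ consistent with $T_*^\ell$, proved by induction on $j$), and then derives Lemma~\ref{lm:mu-ell-j} \emph{non-inductively}: it decomposes $T^\ell\setminus T_*^\ell = \bigcup_k (T_k\setminus T_{k+1})$, applies one step of $\tau\circ\downset$ to each annulus to land in $T_*^{\ell-1}$-consistent rectangles, applies the rectangle lemma, and sums $\frac{1}{K}\sum_{k}\sum_{s\le k}\frac{1}{K-s} = \frac12 - \frac12\sum_s\frac1{K-s}$ to produce the $\tfrac12(1-\ln 2)$ constant. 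You instead run the induction on $j$ at the level of the cardinality $|\mu_{\ell,j}|$ itself, using the clean recursion $|\mu_{\ell,j}| = (\sum_s\frac{1}{K-s})\,|\mu_{\ell,j-1}|$ together with the explicitly computed base case $|\mu_{\ell,0}| = \tfrac12(1-\sum_s\frac1{K-s})|T^\ell|$. This separates the cardinality tracking from the structural tracking and is, in my view, a cleaner way to see where the $(\ln 2)^j$ factor comes from. The two derivations are computationally equivalent; the paper's rectangle lemma is still needed elsewhere (Corollary~\ref{cor:nu-star-ub}, Lemma~\ref{lm:partition}), which is probably why the paper states it independently.

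There is one concrete error in your sketch of the structural invariant. You propose to show that $\nu_{\ell,j}(T^\ell\setminus T_*^\ell)$ decomposes into rectangles ``whose free coordinates include $r^{\ell-j}$.'' This is false for $j\ge 1$. The subcube permutation map $\Pi_k$ (Definition~\ref{def:pi-k}, Lemma~\ref{lm:pi-prop}) sends a rectangle $\{x\}\times R\times[m]^{[n]\setminus(\Lambda\cup I')}$ to $\{M(x)\}\times R\times[m]^{[n]\setminus(\Lambda\cup I)}$ with $M(x)\in[m]^I$ and $I = J^{\ell-j}\cup\{r^{\ell-j}\}$; that is, the image \emph{fixes} $r^{\ell-j}$ to a single value. (By contrast, $T_*^{\ell-j}$ as a whole is free on $r^{\ell-j}$, which is what makes $r^{\ell-j}$ look like a ``natural candidate'' --- but the individual rectangles comprising $\nu_{\ell,j}(T^\ell\setminus T_*^\ell)$ are not.) The argument is rescued because Lemma~\ref{lm:u-subsampling} only needs the rectangle to be independent of \emph{some} coordinate, and there are plenty of genuinely free ones: for instance $q_s^{\ell-j}$ for any phase $s$ (these lie in $\B^{\ell-j}\setminus\Sp(\B^{\ell-j})$ and are released, not captured, by $\Pi_k$), or anything in $\B^{<\ell-j}$. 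With that correction, your parallel structural induction becomes essentially the ``structure-only'' half of Lemma~\ref{lm:rect-nu-j}, and the proof closes as you describe.
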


The next lemma establishes the key structural property analogous to Lemma~\ref{lm:key-intro} in Section~\ref{sec:tech-overview}. The lemma is crucially used to upper bound the size of the matching that a low space algorithm can construct in Lemma~\ref{lm:special-edges} in Section~\ref{sec:simple-lb} below. 

\begin{lemma}\label{lm:cut-structure}
For every $\ell \in [L]$, every $j=0,\ldots, \ell$ the following conditions hold. For every 
$$
x\in \nu_{\ell+j, j}(T^{\ell+j}\setminus T_*^{\ell+j})\subset T^\ell,
$$
if $y\in T^\ell$ is such $y_i=x_i$ for all $i\in \Gamma$ (the set of basic coordinates as per Definition~\ref{def:basic-coordinates}), then 
$$
y\in \nu_{\ell+j, j}(T^{\ell+j}\setminus T_*^{\ell+j}).
$$
\end{lemma}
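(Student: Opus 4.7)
My plan is to proceed by induction on $j$, with the intuition that the set $\nu_{\ell+j,j}(T^{\ell+j}\setminus T_*^{\ell+j})\subseteq T^\ell$ should be a cylinder over the basic coordinates $\Gamma$ -- its indicator depends only on the $\Gamma$-coordinates of the query point. The base case $j=0$ says that $\nu_{\ell,0}(T^\ell\setminus T_*^\ell)=T^\ell\setminus T_*^\ell$ has this property, which I will read off from~\eqref{eq:def-tk-allconstraints}: membership in $T_*^\ell=T_{K/2}^\ell$ is determined by constraints on the coordinates $\{J_s^\ell\}_{s\in [K/2]}\subseteq \Sp(\B^\ell)\subseteq \Gamma$, so altering coordinates outside $\Gamma$ preserves membership.

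For the inductive step, I will write $V:=\nu_{\ell+j,j-1}(T^{\ell+j}\setminus T_*^{\ell+j})\subseteq T^{\ell+1}$ and apply the inductive hypothesis with $\ell$ replaced by $\ell+1$ and $j$ replaced by $j-1$: this yields that $V$ is closed under changes of coordinates outside $\Gamma$. By Definition~\ref{def:nu},
$$
\nu_{\ell+j,j}(T^{\ell+j}\setminus T_*^{\ell+j})=\tau^{\ell+1}(\downset^{\ell+1}(V))\subseteq T_*^\ell.
$$
Given $x$ in this image and $y\in T^\ell$ with $y_\Gamma=x_\Gamma$, I will first note that $y\in T_*^\ell$ by the same $\Gamma$-dependence argument used in the base case. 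I will then invoke Lemma~\ref{lm:tau-inverse-prop} to obtain $a\in [K/2]$ and $u,v\in S_a^{\ell+1}$ with $x=\tau^{\ell+1}(u)$, $y=\tau^{\ell+1}(v)$, and $u_\Gamma=v_\Gamma$. Bijectivity of $\tau^{\ell+1}$ (Lemma~\ref{lm:tau-prop}) pins $u$ down as the unique preimage of $x$ in $\bigcup_k S_k^{\ell+1}$, so $u\in \downset^{\ell+1}(V)$, which -- identifying vertices of $S_a^{\ell+1}$ with their labels in $[m]^n$ -- is equivalent to $u\in V$. The equality $u_\Gamma=v_\Gamma$ together with the closure of $V$ under non-$\Gamma$ coordinate changes then gives $v\in V$, hence $v\in \downset_a^{\ell+1}(V)\subseteq \downset^{\ell+1}(V)$, and so $y=\tau^{\ell+1}(v)$ lies in the desired image.

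The main obstacle I would expect is ensuring that the glueing map $\tau^{\ell+1}$ truly manipulates only $\Gamma$-coordinates and produces preimages that preserve $\Gamma$-agreement -- and this is exactly what Lemma~\ref{lm:tau-inverse-prop} provides. That lemma in turn rests on the fact that both the compression index $q_k^{\ell+1}$ and the extension indices $\Ext_k^{\ell+1}$, and hence the full permutation index set $I'_k=J'_{<k}\cup \Ext_k\cup\{q_k\}$ used by the subcube permutation $\Pi_k$, together with $I=\Sp(\B^\ell)$, all lie inside $\Gamma$ by Definition~\ref{def:basic-coordinates}; consequently neither the densifying map $\rho_k$ nor $\Pi_k$ ever touches coordinates outside $\Gamma$. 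Once this non-interference is isolated, the induction goes through, and the only remaining bookkeeping is to verify that the range conditions ($\ell+j\in [L]$ and $j\leq \ell$) remain valid throughout the recursion.
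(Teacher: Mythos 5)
Your proposal is correct and follows essentially the same path as the paper's own proof: induction on $j$, with the base case reading off that $T^\ell\setminus T_*^\ell$ is determined by coordinates in $J^\ell\subseteq\Gamma$, and the inductive step invoking Lemma~\ref{lm:tau-inverse-prop} to produce $u,v\in S^{\ell+1}_a$ with $u_\Gamma=v_\Gamma$, then passing to the unique label-matching points $x',y'\in T^{\ell+1}$ and applying the inductive hypothesis at level $\ell+1$. Your phrasing ``$u\in\downset^{\ell+1}(V)$ is equivalent to $u\in V$'' is a slight abuse (one lives in $S^{\ell+1}$, the other in $T^{\ell+1}$ -- the paper makes the identification explicit by naming $x'\delequal u$), but the intended meaning is clear and the argument is sound.
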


\begin{corollary}\label{cor:cut-structure}
For every $\ell \in [L]$, every $j=0,\ldots, \ell$ for every 
$$
x\in \nu_{\ell+j, j}(T^{\ell+j}\setminus T_*^{\ell+j})\subset T^\ell,
$$
if $y\in S^\ell$ is such that $y_i=x_i$ for all $i\in \Gamma$ (the set of basic coordinates as per Definition~\ref{def:basic-coordinates}), then
$$
y\in \mu_{\ell+j, j}(T^{\ell+j}\setminus T_*^{\ell+j}).
$$
\end{corollary}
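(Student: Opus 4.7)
\textbf{Proof proposal for Corollary~\ref{cor:cut-structure}.} The plan is to reduce directly to Lemma~\ref{lm:cut-structure} by exploiting the labelling convention: every vertex of $S^\ell$ carries a label in $[m]^n$ and that label is also the label of a (unique) vertex of $T^\ell$ because, as noted just after the definition of $S^\ell$, for every $k\in [K/2]$ the set $S^\ell_k$ is obtained from a subset of $T^\ell_k$ by duplication of labels (see~\eqref{eq:def-sk}).  So given $y\in S^\ell$ I will let $y^*$ denote the unique element of $T^\ell$ with $y^*\delequal y$, i.e.\ the point of $T^\ell$ whose label equals that of $y$; in particular $y^*_i=y_i$ for every coordinate $i\in[n]$, when both are viewed as points of $[m]^n$.

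With this notation, first I would observe that $y^*_\Gamma=y_\Gamma$, and the hypothesis $y_i=x_i$ for all $i\in\Gamma$ therefore gives $y^*_i=x_i$ for all $i\in\Gamma$.  Since $x\in \nu_{\ell+j, j}(T^{\ell+j}\setminus T_*^{\ell+j})\subseteq T^\ell$ and $y^*\in T^\ell$, Lemma~\ref{lm:cut-structure} applies and yields
$$
y^*\in \nu_{\ell+j, j}(T^{\ell+j}\setminus T_*^{\ell+j}).
$$

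Next I would invoke the definitions of $\mu_{\ell+j, j}$ and of the downset map.  By Definition~\ref{def:nu} one has
$$
\mu_{\ell+j, j}(T^{\ell+j}\setminus T_*^{\ell+j})=\downset^\ell\!\left(\nu_{\ell+j, j}(T^{\ell+j}\setminus T_*^{\ell+j})\right),
$$
and by Definition~\ref{def:downset} a vertex $y\in S^\ell$ lies in $\downset^\ell(U)$ whenever there exists $z\in U$ with $z\delequal y$.  Taking $z=y^*$ and $U=\nu_{\ell+j, j}(T^{\ell+j}\setminus T_*^{\ell+j})$ (whose membership of $y^*$ was just established), this shows
$$
y\in \mu_{\ell+j, j}(T^{\ell+j}\setminus T_*^{\ell+j}),
$$
which is the desired conclusion.

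There is essentially no obstacle here; the only thing to be careful about is the distinction between a vertex of $S^\ell$ and its label in $[m]^n$, so that one applies Lemma~\ref{lm:cut-structure} to the $T^\ell$-lift $y^*$ (where labels are unique) rather than to $y$ itself.  Since the hypothesis $y_\Gamma=x_\Gamma$ is a statement about labels, and $\Gamma\subseteq[n]$, this lift is well defined and the rest of the argument is a one-line invocation of Lemma~\ref{lm:cut-structure} followed by unwinding the definitions of $\mu$ and $\downset$.
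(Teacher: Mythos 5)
Your proof is correct and follows essentially the same route as the paper's: lift $y\in S^\ell$ to $y'\in T^\ell$ with the same label, note that the hypothesis $y_\Gamma=x_\Gamma$ is inherited by $y'$, apply Lemma~\ref{lm:cut-structure} to $x$ and $y'$, and then push back down through $\downset^\ell$ using $\mu_{\ell+j,j}=\downset^\ell\circ\nu_{\ell+j,j}$. The uniqueness of the lift $y^*$ that you emphasize is true but not needed; existence alone suffices, exactly as the paper states.
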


In what follows we start by establishing some basic properties of the predecessor map in Section~\ref{sec:predecessor-basic},  then prove Lemma~\ref{lm:rect-nu-j} and Lemma~\ref{lm:mu-ell-j} in Section~\ref{sec:rect-nu} and finally prove the key structural property provided by Lemma~\ref{lm:cut-structure} in Section~\ref{sec:cut-structure}.

\subsubsection{Basic properties of the predecessor map}\label{sec:predecessor-basic}

\begin{claim}\label{cl:nu-prop-basic}
For every $\ell\in [L]$, every $j=1,\ldots, \ell$ and every $U\subseteq T^\ell$ one has {\bf (1)} $\nu_{\ell, j}(U)=\nu_{\ell-1, j-1}(\tau^\ell(\downset^\ell(U))$
 and {\bf (2)}  $\mu_{\ell, j}(U)=\mu_{\ell-1, j-1}(\tau^\ell(\downset^\ell(U)))$. Furthermore, for every $\ell\in [L]$, every $j=0,\ldots, \ell$, $a=0,\ldots, j$ and every $U\subseteq T^\ell$ one has
 {\bf (3)} $\nu_{\ell, j}(U)=\nu_{\ell-a, j-a}(\nu_{\ell, a}(U))$.
\end{claim}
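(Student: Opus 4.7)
The plan is to prove part~(1) first by induction on $j$, then derive part~(2) as an immediate consequence, and finally obtain part~(3) by iterating part~(1). All three statements amount to bookkeeping in the recursive definition of $\nu_{\ell, j}$ from Definition~\ref{def:nu}, so no new ideas are required beyond carefully tracking the indices. There is no real obstacle here; the only subtlety is making sure the indices $\ell-(j-1)$, $\ell-1-(j-2)$, etc., line up correctly so that the same expression $\tau^{\ell-j+1}(\downset^{\ell-j+1}(\cdot))$ is being applied on both sides of the claimed identity.

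For part~(1), the base case $j=1$ reduces to the identity $\nu_{\ell, 1}(U) = \tau^\ell(\downset^\ell(U)) = \nu_{\ell-1, 0}(\tau^\ell(\downset^\ell(U)))$, which is immediate from the definition $\nu_{\ell-1,0}(V)=V$. For the inductive step, write $V := \tau^\ell(\downset^\ell(U)) \subseteq T^{\ell-1}$ (this containment uses $\tau^\ell : S^\ell \to T_*^{\ell-1} \subseteq T^{\ell-1}$) and unfold both sides:
\begin{equation*}
\nu_{\ell, j}(U) = \tau^{\ell-j+1}\bigl(\downset^{\ell-j+1}(\nu_{\ell, j-1}(U))\bigr),
\qquad
\nu_{\ell-1, j-1}(V) = \tau^{\ell-j+1}\bigl(\downset^{\ell-j+1}(\nu_{\ell-1, j-2}(V))\bigr).
\end{equation*}
The inductive hypothesis applied to $j-1$ gives $\nu_{\ell, j-1}(U) = \nu_{\ell-1, j-2}(V)$, and substituting this into the first expression yields the second, completing the induction.

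For part~(2), apply $\downset^{\ell-j} = \downset^{(\ell-1)-(j-1)}$ to both sides of part~(1):
\begin{equation*}
\mu_{\ell, j}(U) = \downset^{\ell-j}(\nu_{\ell, j}(U)) = \downset^{(\ell-1)-(j-1)}\bigl(\nu_{\ell-1, j-1}(\tau^\ell(\downset^\ell(U)))\bigr) = \mu_{\ell-1, j-1}(\tau^\ell(\downset^\ell(U))),
\end{equation*}
using the definition $\mu_{\ell', j'}(W) = \downset^{\ell'-j'}(\nu_{\ell', j'}(W))$ with $\ell' = \ell-1$ and $j' = j-1$.

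For part~(3), I will induct on $a$. The case $a=0$ is trivial since $\nu_{\ell, 0}$ is the identity; the case $a=1$ is exactly part~(1). For the inductive step, assume the identity for some $a < j$ and observe
\begin{equation*}
\nu_{\ell, j}(U) = \nu_{\ell-a, j-a}(\nu_{\ell, a}(U))
= \nu_{\ell-a-1, j-a-1}\bigl(\tau^{\ell-a}(\downset^{\ell-a}(\nu_{\ell, a}(U)))\bigr)
= \nu_{\ell-a-1, j-a-1}(\nu_{\ell, a+1}(U)),
\end{equation*}
where the first equality is the inductive hypothesis, the second applies part~(1) at level $\ell-a$ (valid since $j-a \geq 1$), and the third uses the defining recursion $\nu_{\ell, a+1}(U) = \tau^{\ell-a}(\downset^{\ell-a}(\nu_{\ell, a}(U)))$. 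This completes the induction and the proof.
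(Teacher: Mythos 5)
Your proof is correct and takes essentially the same approach as the paper: the paper unfolds the recursion in Definition~\ref{def:nu} to a telescoping expression and reads off both groupings, while you make the same bookkeeping precise via explicit induction on $j$ (resp.\ $a$), which amounts to the same computation.
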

\begin{proof}
For {\bf (1)} we have by Definition~\ref{def:nu} 
\begin{equation*}
\begin{split}
\nu_{\ell, j}(U)&=\tau^{\ell-(j-1)}(\downset^{\ell-(j-1)}(\nu_{\ell, j-1}(U)))\\
&=\tau^{\ell-(j-1)}(\downset^{\ell-(j-1)}(\tau^{\ell-(j-2)}(\downset^{\ell-(j-2)}(\nu_{\ell, j-2}(U)))))\\
&=\tau^{\ell-(j-1)}(\downset^{\ell-(j-1)}(\tau^{\ell-(j-2)}(\downset^{\ell-(j-2)}(\ldots \tau^{\ell}(\downset^\ell(U))\ldots))))\\
&=\nu_{\ell-1, j-1}(\tau^{\ell}(\downset^\ell(U)))\\
\end{split}
\end{equation*}

For {\bf (2)} we have by Definition~\ref{def:nu} and using {\bf (1)}
\begin{equation*}
\begin{split}
\mu_{\ell, j}(U)&=\downset^{\ell-j}(\nu_{\ell, j}(U))\\
&=\downset^{\ell-j}(\nu_{\ell-1, j-1}(\tau^{\ell}(\downset^\ell(U))))\\
&=\mu_{\ell-1, j-1}(\tau^{\ell}(\downset^\ell(U))).
\end{split}
\end{equation*}

Finally, {\bf (3)} follows since by Definition~\ref{def:nu} 
\begin{equation*}
\begin{split}
\nu_{\ell, j}(U)&=\tau^{\ell-(j-1)}(\downset^{\ell-(j-1)}(\nu_{\ell, j-1}(U)))\\
&=\tau^{\ell-(j-1)}(\downset^{\ell-(j-1)}(\tau^{\ell-(j-2)}(\downset^{\ell-(j-2)}(\nu_{\ell, j-2}(U)))))\\
&=\tau^{\ell-(j-1)}(\downset^{\ell-(j-1)}(\tau^{\ell-(j-2)}(\downset^{\ell-(j-2)}(\ldots \tau^{\ell}(\downset^\ell(U))\ldots))))\\
&=\nu_{\ell-a, j-a}(\nu_{\ell, a}(U)))\\
\end{split}
\end{equation*}

\end{proof}

The following claim will help simplify our notation:
\begin{claim}\label{cl:mu-prop}
For every $\ell\in [L]$, every $j\in 1,\ldots, \ell-1$ and every $U\subseteq T^\ell$ one has $|\mu_{\ell, j}(U)|=|\nu_{\ell, j+1}(U)|$.
\end{claim}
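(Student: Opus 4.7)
The plan is to simply unfold the definitions and invoke the bijectivity of the glueing maps $\tau^\ell$ established in Lemma~\ref{lm:tau-prop}.

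By Definition~\ref{def:nu}, for $j+1 \geq 1$ we have
\begin{equation*}
\nu_{\ell, j+1}(U) = \tau^{\ell-j}\left(\downset^{\ell-j}(\nu_{\ell, j}(U))\right).
\end{equation*}
On the other hand, Definition~\ref{def:nu} also gives
\begin{equation*}
\mu_{\ell, j}(U) = \downset^{\ell-j}(\nu_{\ell, j}(U)),
\end{equation*}
so combining the two identities yields $\nu_{\ell, j+1}(U) = \tau^{\ell-j}(\mu_{\ell, j}(U))$.

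Since $j \leq \ell-1$, we have $\ell-j \geq 1$, so the map $\tau^{\ell-j}$ is defined, and by Lemma~\ref{lm:tau-prop} it is a bijection from $\biguplus_{k\in [K/2]} S^{\ell-j}_k = S^{\ell-j}$ to $T_*^{\ell-j-1}$. By Definition~\ref{def:downset}, $\mu_{\ell, j}(U) = \downset^{\ell-j}(\nu_{\ell, j}(U)) \subseteq S^{\ell-j}$, so $\tau^{\ell-j}$ is injective on $\mu_{\ell, j}(U)$, and therefore
\begin{equation*}
|\nu_{\ell, j+1}(U)| = |\tau^{\ell-j}(\mu_{\ell, j}(U))| = |\mu_{\ell, j}(U)|,
\end{equation*}
as required. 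The only step with any subtlety is checking that we are in the regime where $\tau^{\ell-j}$ is defined and applied to a subset of its domain $S^{\ell-j}$; both are ensured by the hypothesis $1 \leq j \leq \ell-1$ and the fact that $\downset^{\ell-j}$ maps into $S^{\ell-j}$.
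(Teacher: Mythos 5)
Your proof is correct and follows exactly the paper's argument: unfold the recursive definition of $\nu_{\ell,j+1}$, identify the inner expression with $\mu_{\ell,j}$, and conclude via injectivity of $\tau^{\ell-j}$ from Lemma~\ref{lm:tau-prop}. The extra care you take to check that $\ell-j\geq 1$ and that $\mu_{\ell,j}(U)\subseteq S^{\ell-j}$ is a slight elaboration of what the paper leaves implicit, but the argument is the same.
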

\begin{proof}
One has by Definition~\ref{def:nu} 
\begin{equation*}
\begin{split}
\nu_{\ell, j+1}(U)&=\tau^{\ell-j}(\downset^{\ell-j}(\nu_{\ell, j}(U)))\\
&=\tau^{\ell-j}(\mu_{\ell, j}(U))),\\
\end{split}
\end{equation*}
and the claim follows since $\tau^{\ell-j}$ is injective by Lemma~\ref{lm:tau-prop}.
\end{proof}

We note that the increment of the index $j$ on the right hand side in the claim above is crucial, as in general $|\mu_{\ell, j}(U)|$ is very different from $|\nu_{\ell, j}(U)|$ (since $\downset$ is not a one to one map).

We also need 
\begin{lemma}[Basic properties of the maps $\nu_{\ell, j}$ and $\mu_{\ell, j}$]\label{lm:nu-prop}
The following conditions hold for the maps $\nu$ and $\mu$ defined above:
\begin{description}
\item[(1)] for every $\ell\in [L]$ and every $0\leq j\leq \ell$ the maps $\nu_{\ell, j}$ and $\mu_{\ell, j}$ are injective;
\item[(2)] every $\ell, \ell'\in [L]$ every $0\leq j\leq \ell$, $0\leq j'\leq \ell'$ one has 
$$
\nu_{\ell, j}(T^\ell\setminus T_*^\ell)\cap \nu_{\ell', j'}(T^{\ell'}\setminus T_*^{\ell'})=\emptyset
$$ unless $\ell=\ell'$ and $j=j'$.
\item[(3)] every $\ell, \ell'\in [L]$ every $0\leq j\leq \ell$, $0\leq j'\leq \ell'$ one has 
$$
\mu_{\ell, j}(T^\ell\setminus T_*^\ell)\cap \mu_{\ell', j'}(T^{\ell'}\setminus T_*^{\ell'})=\emptyset
$$ unless $\ell=\ell'$ and $j=j'$.
\end{description}
\end{lemma}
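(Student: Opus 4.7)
\textbf{Plan for Lemma~\ref{lm:nu-prop}.} The approach is to establish one key disjointness fact about $\downset$, then prove (1) and (2) by induction, with (3) reducing to (2) via the same fact.

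The key fact is the following. For every $\ell' \in [L]$, the vertices of $T^{\ell'}$ carry pairwise distinct labels (as stated at the top of Section~\ref{sec:notation}), and by Definition~\ref{def:downset} each $\downset_k^{\ell'}$ picks out at most one vertex of $S_k^{\ell'}$ per label. Hence for disjoint $A, B \subseteq T^{\ell'}$ we have $\downset^{\ell'}(A) \cap \downset^{\ell'}(B) = \emptyset$. As a companion observation, in $\wh{G}$ the vertex sets $\{T^\ell\}_{\ell\in[L]}$ are pairwise disjoint (otherwise the count $|P| = LN/2$ used in Lemma~\ref{lm:vertex-cover} would fail), and likewise the sets $\{S^\ell\}_{\ell\in[L]}$ are pairwise disjoint (since for $\ell\geq 1$ the set $S^\ell$ is identified via $\tau^\ell$ with $T_*^{\ell-1}\subseteq T^{\ell-1}$, and $S^0$ is a separate vertex set on the $Q$ side of $\wh{G}$).

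For (1), I interpret injectivity in the set-valued sense: disjoint $U_1, U_2 \subseteq T^\ell$ have disjoint images. I induct on $j$. The base $j=0$ is immediate since $\nu_{\ell,0}$ is the identity. For $j\geq 1$, Definition~\ref{def:nu} gives $\nu_{\ell,j}(U_i) = \tau^{\ell-j+1}(\downset^{\ell-j+1}(\nu_{\ell,j-1}(U_i)))$; the inductive hypothesis makes the innermost images disjoint, the key fact promotes disjointness through $\downset^{\ell-j+1}$, and bijectivity of $\tau^{\ell-j+1}$ from Lemma~\ref{lm:tau-prop} preserves it through the final step. The statement for $\mu_{\ell,j}(U) = \downset^{\ell-j}(\nu_{\ell,j}(U))$ then follows by a single further invocation of the key fact.

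For (2), I induct on $j+j'$. If $\ell-j \neq \ell'-j'$ the two images live in the disjoint ambient sets $T^{\ell-j}$ and $T^{\ell'-j'}$, so there is nothing to prove. Otherwise $\ell-j = \ell'-j'$ combined with $(\ell,j)\neq(\ell',j')$ forces $\ell\neq\ell'$; WLOG $j<j'$, so $j'\geq 1$. When $j=0$, the set $\nu_{\ell,0}(T^\ell\setminus T_*^\ell) = T^\ell\setminus T_*^\ell$ is disjoint from $T_*^\ell$, while $\nu_{\ell',j'}(T^{\ell'}\setminus T_*^{\ell'})$ is the image of $\tau^{\ell'-j'+1}=\tau^{\ell+1}$, hence contained in $T_*^\ell$ by Lemma~\ref{lm:tau-prop}. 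When $j\geq 1$, both images arise from the same $\tau^{\ell-j+1}$, so bijectivity and the key fact reduce disjointness to that of $\nu_{\ell,j-1}(T^\ell\setminus T_*^\ell)$ and $\nu_{\ell',j'-1}(T^{\ell'}\setminus T_*^{\ell'})$, and the inductive hypothesis applies since $(\ell,j-1)\neq(\ell',j'-1)$ and $(j-1)+(j'-1)<j+j'$. Finally (3) follows from (2): $\mu_{\ell,j}(U)\subseteq S^{\ell-j}$, so if $\ell-j\neq\ell'-j'$ the two $S$'s are disjoint by the companion observation, and if $\ell-j=\ell'-j'$ the key fact reduces the claim to (2). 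The main bookkeeping obstacle is the subcase of (2) with $\ell\neq\ell'$ but $\ell-j=\ell'-j'$: the two images then live in the same $T^{\ell-j}$, and it is precisely the terminal/non-terminal dichotomy $\nu_{\ell',j'}(T^{\ell'}\setminus T_*^{\ell'})\subseteq T_*^{\ell'-j'}$ (valid for $j'\geq 1$) that allows the induction to close.
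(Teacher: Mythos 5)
Your proof is correct and takes essentially the same route as the paper: both reduce part (2) to the case $\ell-j=\ell'-j'$ and then exploit that $\nu_{\ell',j'}(T^{\ell'}\setminus T_*^{\ell'})$ lands inside the terminal subcube $T_*^{\ell'-j'}$ because the range of $\tau^{\ell+1}$ is $T_*^\ell$, concluding by the set-valued injectivity of $\nu$. The only difference in bookkeeping is that the paper obtains $\nu_{\ell',j'}(T^{\ell'}\setminus T_*^{\ell'})\subseteq\nu_{\ell,j}(T_*^\ell)$ in one step via the composition identity (Claim~\ref{cl:nu-prop-basic}, \textbf{(3)}), whereas you unroll the same fact by induction on $j+j'$.
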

\begin{proof}
\noindent {\bf (1)}  follows since $\tau$ is injective by Lemma~\ref{lm:tau-prop} and $\downset$ is injective by construction (Definition~\ref{def:downset}).

We now show {\bf (2)}. First note that  $\nu_{\ell, j}(T^\ell\setminus T_*^\ell)\subseteq T^{\ell-j}$ and $\nu_{\ell', j'}(T^{\ell'}\setminus T_*^{\ell'})\subseteq T^{\ell'-j'}$, and hence the two sets are disjoint if $\ell-j\neq \ell'-j'$. Now suppose that $\ell-j=\ell'-j'$ and assume without loss of generality that $\ell\leq \ell'$. Furthermore, we can assume that $\ell<\ell'$, since if $\ell=\ell'$, one must have $j=j'$ as otherwise the sets are disjoint by the previous argument. Now note that 
\begin{equation*}
\begin{split}
\nu_{\ell', \ell'-\ell}(T^{\ell'}\setminus T_*^{\ell'})&=\tau^{\ell+1}(\downset^{\ell+1}(\nu_{\ell', \ell'-\ell-1}(T^{\ell'}\setminus T_*^{\ell'}))\subseteq T_*^\ell,
\end{split}
\end{equation*}
since the range of $\tau^{\ell+1}$ is $T_*^\ell$ by~Definition~\ref{def:tau}. 
This means that
\begin{equation*}
\begin{split}
\nu_{\ell', j'}(T^{\ell'}\setminus T_*^{\ell'})&=\nu_{\ell, j}(\nu_{\ell', \ell'-\ell}(T^{\ell'}\setminus T_*^{\ell'}))\text{~~~~~~~~(by Claim~\ref{cl:nu-prop-basic}, {\bf (3)}, and using $\ell-j=\ell'-j'$)}\\
&\subseteq \nu_{\ell, j}(T_*^\ell),
\end{split}
\end{equation*}
and we get that
$$
\nu_{\ell, j}(T^\ell\setminus T_*^\ell)\cap  \nu_{\ell', j'}(T^{\ell'}\setminus T_*^{\ell'})\subseteq \nu_{\ell, j}(T^\ell\setminus T_*^\ell)\cap \nu_{\ell, j}(T_*^\ell)=\emptyset
$$
since $\nu_{\ell, j}$ is injective by {\bf (1)}.

We now prove {\bf (3)}. First note that by Definition~\ref{def:nu}
$$
\mu_{\ell, j}(T^\ell\setminus T_*^\ell)\subseteq S^{\ell-j}
$$
and 
$$
\mu_{\ell', j'}(T^{\ell'}\setminus T_*^{\ell'})\subseteq S^{\ell'-j'},
$$
and hence similarly to above the two sets are disjoint unless $\ell-j=\ell'-j'$. {\bf (3)} now follows by noting that, again using Definition~\ref{def:nu}, we get, since $\ell-j=\ell'-j'$ and $\downset$ is injective,
\begin{equation*}
\begin{split}
\mu_{\ell, j}(T^\ell\setminus T_*^\ell)\cap \mu_{\ell', j'}(T^{\ell'}\setminus T_*^{\ell'})&=\downset^{\ell-j}(\nu_{\ell, j}(T^\ell\setminus T_*^\ell))\cap \downset^{\ell'-j'}(\nu_{\ell', j'}(T^{\ell'}\setminus T_*^{\ell'}))\\
&=\downset^{\ell-j}(\nu_{\ell, j}(T^\ell\setminus T_*^\ell) \cap \nu_{\ell', j'}(T^{\ell'}\setminus T_*^{\ell'}))\\
&=\emptyset,
\end{split}
\end{equation*}
where we used {\bf (2)} in the last transition.
\end{proof}

We will use
\begin{lemma}\label{lm:t-star-ell-rec}
For every $\ell \in [L]$ one has
\begin{equation*}
T_*^\ell=\nu_{L-1, L-1-\ell}(T_*^{L-1})\cup \bigcup_{j=1}^{L-1-\ell} \nu_{\ell+j, j}(T^{\ell+j}\setminus T_*^{\ell+j})
\end{equation*}
and 
\begin{equation*}
T^\ell=\nu_{L-1, L-1-\ell}(T_*^{L-1})\cup \bigcup_{j=0}^{L-1-\ell} \nu_{\ell+j, j}(T^{\ell+j}\setminus T_*^{\ell+j}).
\end{equation*}
\end{lemma}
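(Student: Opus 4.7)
\begin{proofsketch}
The plan is to prove the first identity by reverse induction on $\ell$, from $\ell = L-1$ down to $\ell = 0$, and then obtain the second identity by adding the piece $T^\ell \setminus T_*^\ell$ to both sides of the first one.

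For the base case $\ell = L-1$, the union over $j=1,\ldots,L-1-\ell$ is empty and $\nu_{L-1,0}$ is the identity map (by Definition~\ref{def:nu}), so the right-hand side collapses to $T_*^{L-1}$, matching the left-hand side. For the inductive step, the key algebraic fact I will use is
$$
T_*^\ell = \tau^{\ell+1}(S^{\ell+1}) = \tau^{\ell+1}(\downset^{\ell+1}(T^{\ell+1})),
$$
which holds because $\tau^{\ell+1}$ is a bijection from $S^{\ell+1}$ to $T_*^\ell$ (Lemma~\ref{lm:tau-prop}) and because $\downset^{\ell+1}(T^{\ell+1}) = S^{\ell+1}$ (every vertex of $S^{\ell+1}$ shares a label with some vertex of $T^{\ell+1}$, cf.\ the remark after Definition~\ref{def:downset}).

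Splitting $T^{\ell+1} = (T^{\ell+1}\setminus T_*^{\ell+1}) \cup T_*^{\ell+1}$ and pushing the set-valued maps $\downset^{\ell+1}$ and $\tau^{\ell+1}$ through the union yields
$$
T_*^\ell = \tau^{\ell+1}(\downset^{\ell+1}(T^{\ell+1}\setminus T_*^{\ell+1})) \cup \tau^{\ell+1}(\downset^{\ell+1}(T_*^{\ell+1})) = \nu_{\ell+1,1}(T^{\ell+1}\setminus T_*^{\ell+1}) \cup \tau^{\ell+1}(\downset^{\ell+1}(T_*^{\ell+1})),
$$
where the last equality uses Definition~\ref{def:nu}. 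Now I apply the inductive hypothesis to $T_*^{\ell+1}$, distribute $\tau^{\ell+1}\circ\downset^{\ell+1}$ over the union, and use the identity $\tau^{\ell+1}(\downset^{\ell+1}(\nu_{\ell+1+j,j}(X))) = \nu_{\ell+1+j,j+1}(X)$, which follows directly from Definition~\ref{def:nu} since $(\ell+1+j)-j = \ell+1$. This converts $\nu_{L-1,L-2-\ell}(T_*^{L-1})$ into $\nu_{L-1,L-1-\ell}(T_*^{L-1})$ and converts each $\nu_{\ell+1+j,j}(T^{\ell+1+j}\setminus T_*^{\ell+1+j})$ into $\nu_{\ell+1+j,j+1}(T^{\ell+1+j}\setminus T_*^{\ell+1+j})$; after the substitution $j' = j+1$ these become $\nu_{\ell+j',j'}(T^{\ell+j'}\setminus T_*^{\ell+j'})$ for $j' = 2,\ldots,L-1-\ell$. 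Combining with the $\nu_{\ell+1,1}(T^{\ell+1}\setminus T_*^{\ell+1})$ piece (the $j'=1$ term) gives exactly the claimed expression.

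The second identity is immediate from the first: write $T^\ell = (T^\ell\setminus T_*^\ell) \cup T_*^\ell$, observe that $T^\ell\setminus T_*^\ell = \nu_{\ell,0}(T^\ell\setminus T_*^\ell)$ is precisely the $j=0$ term, and substitute the expansion of $T_*^\ell$. The only mildly subtle step in the whole argument is the bookkeeping of indices in the inductive step, together with the verification that the parity structure built into the closure maps $\nu_{\ell,*}$ (which restricts $j$ to even values there) plays no role here since Lemma~\ref{lm:t-star-ell-rec} ranges over all $j$.
\end{proofsketch}
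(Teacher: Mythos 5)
Your proof is correct and follows essentially the same route as the paper's: a reverse induction on $\ell$ whose inductive step rewrites $T_*^\ell$ as $\tau^{\ell+1}(\downset^{\ell+1}(T^{\ell+1}))$, splits $T^{\ell+1}$ into its terminal and non-terminal parts, applies the inductive hypothesis to the terminal part, and uses the recursive definition of $\nu$ to shift each index by one. The index bookkeeping ($j' = j+1$) and the derivation of the second identity by adjoining the $j=0$ term both match the paper's argument.
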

\begin{proof}
We start by establishing the first result of the lemma, namely
\begin{equation}\label{eq:t-star-2ug2438}
T_*^\ell=\nu_{L-1, L-1-\ell}(T_*^{L-1})\cup \bigcup_{j=1}^{L-1-\ell} \nu_{\ell+j, j}(T^{\ell+j}\setminus T_*^{\ell+j})
\end{equation}
by induction on $\ell=L-1,\ldots, 0$.

\noindent{\bf Base: $\ell=L-1$.} One has $T_*^\ell=\nu_{L-1, 0}(T_*^L)$, as required, since $\nu_{L-1, 0}$ is the identity map by definition (see Definition~\ref{def:nu}).

\noindent{\bf Inductive step: $\ell\to \ell-1$.}   By the inductive hypothesis we have 
$$
T_*^\ell=\nu_{L-1, L-1-\ell}(T_*^{L-1})\cup\bigcup_{\substack{j=1}}^{L-1-\ell} \nu_{\ell+j, j}(T^{\ell+j}\setminus T_*^{\ell+j}).
$$
Applying $\tau^\ell(\downset^\ell(\cdot))$ to both sides of the equation above, we get, letting $Q=\nu_{L-1, L-1-\ell}(T_*^{L-1})$ to simplify notation,
\begin{equation}\label{eq:o4329hg394hg}
\begin{split}
\tau^\ell(\downset^\ell(T_*^\ell))&=\tau^\ell\left(\downset^\ell\left(Q\cup \bigcup_{\substack{j=1}}^{L-1-\ell} \nu_{\ell+j, j}(T^{\ell+j}\setminus T_*^{\ell+j})\right)\right)\\
&= \tau^\ell\left(\downset^\ell\left(Q\right)\right)\cup \bigcup_{\substack{j=1}}^{L-1-\ell} \tau^\ell\left(\downset^\ell\left(\nu_{\ell+j, j}(T^{\ell+j}\setminus T_*^{\ell+j})\right)\right)\\
&= \nu_{L-1, L-\ell}(T_*^{L-1})\cup \bigcup_{\substack{j=1}}^{L-1-\ell} \nu_{\ell+j, j+1}(T^{\ell+j}\setminus T_*^{\ell+j})\\
&= \nu_{L-1, L-\ell}(T_*^{L-1})\cup \bigcup_{\substack{j=1}}^{L-1-\ell} \nu_{(\ell-1)+(j+1), j+1}(T^{(\ell-1)+(j+1)}\setminus T_*^{(\ell-1)+(j+1)})\\
&= \nu_{L-1, L-1-(\ell-1)}(T_*^{L-1})\cup \bigcup_{\substack{j=2}}^{L-\ell} \nu_{(\ell-1)+j, j}(T^{(\ell-1)+j}\setminus T_*^{(\ell-1)+j})
\end{split}
\end{equation}
We also have
\begin{equation}\label{eq:9h429gh23g}
\tau^\ell(\downset^\ell(T^\ell\setminus T_*^\ell))=\nu_{\ell, 1}(T^\ell\setminus T_*^\ell)=\nu_{(\ell-1)+1, 1}(T^\ell\setminus T_*^\ell).
\end{equation}

We now recall that $\tau^\ell$ maps $S^\ell$ bijectively to $T_*^{\ell-1}$ and $S^\ell=\downset(T^\ell)$ (this follows by putting together the fact  that $S^\ell=\biguplus_{k\in [K/2]} S^\ell_k$ with~\eqref{eq:def-sk} and Definition~\ref{def:downset}), which implies
\begin{equation}\label{eq:239hg92hg}
\begin{split}
T_*^{\ell-1}&=\tau^\ell(S^\ell)\text{~~~~~~~~~~~~~~~(since $\tau^\ell$ bijectively maps $S^\ell$ to $T_*^{\ell-1}$)}\\
&=\tau^\ell(\downset^\ell(T^\ell))\\
&=\tau^\ell(\downset^\ell(T^\ell\setminus T_*^\ell))\cup \tau^\ell(\downset^\ell(T_*^\ell))\\
\end{split}
\end{equation}

Substituting~\eqref{eq:o4329hg394hg} and~\eqref{eq:9h429gh23g} into~\eqref{eq:239hg92hg}, we get
$$
T_*^{\ell-1}=\nu_{L-1, L-\ell}(T_*^{L-1})\cup\bigcup_{\substack{j\geq 1}} \nu_{\ell-1+j, j}(T^{\ell-1+j}\setminus T_*^{\ell-1+j}), 
$$
as required. This completes the inductive claim and establishes the first result of the lemma.

Now in order to obtain the second result of the lemma we take the union of both sides of~\eqref{eq:t-star-2ug2438} with $T^\ell\setminus T_*^\ell$, writing $T^\ell\setminus T_*^\ell=\nu_{\ell+0, 0}(T^{\ell+0}\setminus T_*^{\ell+0})$ on the rhs. This results in
$$
T^\ell=(T^\ell\setminus T_*^\ell)\cup T_*^\ell=\nu_{L-1, L-1-\ell}(T_*^{L-1})\cup\bigcup_{\substack{j\geq 0}} \nu_{\ell+j, j}(T^{\ell+j}\setminus T_*^{\ell+j}),
$$
as required.
\end{proof}

\subsubsection{Proof of Lemma~\ref{lm:mu-ell-j}}\label{sec:rect-nu}

We start with
\begin{definition}[Rectangle consistent with a terminal subcube]\label{def:consistent}
For every $\ell\in [L]$, every $I\subseteq [n]$, every fixing $f$ of coordinates in $I$ we say that $f$ is {\em consistent with $T_*^\ell$} if 
$$
\{f\}\times [m]^{[n]\setminus I}\subseteq T_*^\ell.
$$ We say that a rectangle $R$ in $I\subseteq [n]$ (as per Definition~\ref{def:rectangle}) is consistent with $T_*^\ell$ if 
$$
R\times [m]^{[n]\setminus I}\subseteq T_*^\ell.
$$
\end{definition}

We first prove an auxiliary
\begin{claim}\label{cl:fixing}
For every $\ell\in [L]$, every $I\subseteq [n]$, every rectangle $R$ in $I$ that is consistent with the terminal subcube $T_*^\ell$ the following conditions hold. If $I'=I\cap \Sp(\B^\ell)$ (see Definition~\ref{def:special-coords}) and $R=R_0\times R_1$, where $R_0$ is a rectangle in $I'$ and $R$ is a rectangle in $I\setminus I'$, then for every $f_0\in R_0$ one has that $\{f_0\}\times R_1$ is consistent with the terminal subcube $T_*^\ell$.
\end{claim}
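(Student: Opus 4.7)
My plan is to reduce the claim to the observation that membership in $T_*^\ell$ is determined by constraints that only involve coordinates in $\Sp(\B^\ell)$. By the definition in~\eqref{eq:def-tk-allconstraints} applied at $k=K/2$, we have
$$
T_*^\ell = \left\{y \in [m]^n : y_{J^\ell_s}/m \in \left[0, 1 - \tfrac{1}{K-s}\right) \text{ for all } s \in [K/2]\right\},
$$
so whether a point $y$ lies in $T_*^\ell$ depends only on the coordinates $J^\ell_0, \ldots, J^\ell_{K/2-1}$, each of which is a member of $\Sp(\B^\ell)$ by Definition~\ref{def:special-coords}.

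Next, I would argue that each constraint coordinate $J^\ell_s$ for $s \in [K/2]$ must in fact lie in $I$. Indeed, suppose for contradiction that $J^\ell_s \in [n] \setminus I$. Since the constraint window $[0, 1-1/(K-s))$ is a proper subinterval of $[0,1)$, we could choose some $c \in [m]$ with $c/m \in [1-1/(K-s), 1)$ and extend any $(r_0, r_1) \in R$ to a point $y \in R \times [m]^{[n] \setminus I}$ with $y_{J^\ell_s} = c$. This $y$ would then violate the $s$-th defining constraint of $T_*^\ell$, contradicting the hypothesis $R \times [m]^{[n] \setminus I} \subseteq T_*^\ell$. Hence $\{J^\ell_0, \ldots, J^\ell_{K/2-1}\} \subseteq I \cap \Sp(\B^\ell) = I'$.

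Now fix $f_0 \in R_0$ and pick an arbitrary point $(f_0, r_1, z) \in \{f_0\} \times R_1 \times [m]^{[n] \setminus I}$. For each $s \in [K/2]$, since $J^\ell_s \in I'$ and $R = R_0 \times R_1$ is split along the $I' \mid I \setminus I'$ partition, the $J^\ell_s$-coordinate of this point equals $(f_0)_{J^\ell_s}$. Because $f_0 \in R_0$, the hypothesis $R \times [m]^{[n] \setminus I} \subseteq T_*^\ell$ already guarantees that there exists a point in $T_*^\ell$ whose $I'$-restriction equals $f_0$; in particular the constraint $(f_0)_{J^\ell_s}/m \in [0, 1 - 1/(K-s))$ holds. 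Thus every constraint defining $T_*^\ell$ is satisfied at $(f_0, r_1, z)$, so $\{f_0\} \times R_1 \times [m]^{[n] \setminus I} \subseteq T_*^\ell$, as required.

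There is no real obstacle here; this is a short bookkeeping argument whose only content is identifying that the $T_*^\ell$-defining constraints live entirely in $\Sp(\B^\ell)$ and therefore, given the hypothesis, entirely inside $I'$. The rest is a direct substitution. The main thing to be careful about in writing it out is the case where $R_0$ itself could be empty (in which case the statement is vacuous) and keeping the indexing of $J^\ell$ consistent with the convention in~\eqref{eq:def-tk-allconstraints}.
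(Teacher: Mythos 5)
Your proof is correct, but it is considerably more elaborate than necessary, and in fact more elaborate than the paper's argument. The paper's proof is a one-line set-theoretic observation: since $R = R_0\times R_1$ and $\{f_0\}\times R_1 \subseteq R_0\times R_1$ for every $f_0\in R_0$, one has
$$
\{f_0\}\times R_1 \times [m]^{[n]\setminus I} \;\subseteq\; R_0\times R_1\times [m]^{[n]\setminus I} \;=\; R\times [m]^{[n]\setminus I} \;\subseteq\; T_*^\ell,
$$
which is exactly the definition of $\{f_0\}\times R_1$ being consistent with $T_*^\ell$. No structural information about $T_*^\ell$ is needed.

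You instead open up the definition of $T_*^\ell$, establish the (true but unnecessary) intermediate fact that all constraint coordinates $J^\ell_s$ lie in $I'$, and then re-verify each constraint coordinate-by-coordinate. That intermediate fact requires $R$ to be nonempty, which forces you to track the vacuous edge case separately; the paper's inclusion argument handles it with no case split. Nothing you wrote is wrong, but you missed that the claim is a direct consequence of the containment $\{f_0\}\times R_1\subseteq R$ and does not depend on what $T_*^\ell$ looks like internally. When a claim of this form appears in a proof, it is worth pausing to see whether it is purely set-theoretic before reaching for the defining formulas.
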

\begin{proof}
Since $R$ is consistent with $T_*^\ell$ by assumption, we have $R\times [m]^{[n]\setminus I}=R_0\times R_1\times [m]^{[n]\setminus I}\subseteq T_*^\ell$, and hence for every $f_0\in R_0$ one has $\{f_0\}\times R_1\times [m]^{[n]\setminus I}$, i.e. $\{f_0\}\times R_1$ is consistent with $T_*^\ell$.
\end{proof}

The lemma below is an important tool that we will use in the actual proof of Lemma~\ref{lm:mu-ell-j}. The lemma bounds the size of a subset of the terminal subcube under the predecessor map:
\begin{lemma}\label{lm:rect-nu-j}
For every $\ell \in [L]$, every $j=0,\ldots, \ell$, every fixing $f$ of coordinates in $\Sp(\B^\ell)$ consistent with $T_*^\ell$ (as per Definition~\ref{def:consistent}), every rectangle $R$ in $\Sp(\B^{>\ell})$  the rectangle 
$$
F:=\{f\}\times R\times [m]^{[n]\setminus \Sp(\B^{\geq \ell})},
$$ 
satisfies
$$
(\ln 2-C/K)^j |F|\leq |\nu_{\ell, j}(F)|\leq (\ln 2)^j |F|
$$
for an absolute constant $C>0$.
\end{lemma}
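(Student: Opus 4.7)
The plan is to induct on $j$. The base case $j=0$ is immediate since $\nu_{\ell,0}(F)=F$ by Definition~\ref{def:nu}.

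For the inductive step, Definition~\ref{def:nu} combined with injectivity of $\tau^{\ell-(j-1)}$ (Lemma~\ref{lm:tau-prop}) gives
\[
|\nu_{\ell,j}(F)| \;=\; |\downset^{\ell-(j-1)}(U)|,\qquad U:=\nu_{\ell,j-1}(F).
\]
The inclusion $U\subseteq T_*^{\ell-(j-1)}$ holds by the hypothesis on $F$ when $j=1$, and because $\tau^{\ell-(j-2)}$ maps into $T_*^{\ell-(j-1)}$ when $j\geq 2$ (Lemma~\ref{lm:tau-prop}). Consequently, using Definition~\ref{def:downset} and Eq.~\eqref{eq:def-sk}, the label set of $\downset^{\ell-(j-1)}_k(U)$ equals $U\cap W_k$ with $W_k:=\{x:\weight(x)\in[0,W/(K-k))\pmod W\}$, and the disjoint union across $k\in[K/2]$ yields
\[
|\downset^{\ell-(j-1)}(U)| \;=\; \sum_{k=0}^{K/2-1}|U\cap W_k|.
\]

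I would then apply Lemma~\ref{lm:u-subsampling} with $\lambda=K-k$ (valid by Properties~\ref{p0} and~\ref{p1}) to each summand to obtain $|U\cap W_k|=|U|/(K-k)$; summing and invoking Claim~\ref{cl:sum-int} produces a multiplicative factor in $[\ln 2-1/K,\,\ln 2]$, and combining with the inductive hypothesis $|U|\in[(\ln 2-C/K)^{j-1}|F|,\,(\ln 2)^{j-1}|F|]$ while taking $C=1$ closes the induction.

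The main obstacle is verifying the precondition of Lemma~\ref{lm:u-subsampling}, and preemptively of Lemma~\ref{lm:densification} needed for the next step, namely that $U$ does not depend on some coordinate. I would maintain the side invariant that $\nu_{\ell,a}(F)$ does not depend on any coordinate in
\[
H_a:=\B^{<\ell-a}\cup(\B^{\ell-a}\setminus\Sp(\B^{\ell-a})),
\]
a set that is nonempty for every $0\leq a\leq\ell-1$ and that contains every compression index $q_{k''}^{\ell-a}$ used by $\tau^{\ell-a}$. At $a=0$ this holds since $F$ does not depend on $[n]\setminus\Sp(\B^{\geq\ell})\supseteq H_0$. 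For the propagation step $a-1\to a$, the component maps $\rho_{k''}$ and $\Pi_{k''}$ comprising $\tau^{\ell-(a-1)}$ (Definitions~\ref{def:rho} and~\ref{def:pi-k}) together alter labels only in coordinates inside $\Sp(\B^{\ell-a})\cup\B^{\ell-a+1}$, a set disjoint from $H_a$. The subtle point is that the intervening $\downset^{\ell-(a-1)}$ step momentarily destroys independence on every coordinate via the modular weight condition; however Lemma~\ref{lm:densification}---whose precondition of independence at $q_{k''}^{\ell-a+1}$ is supplied by $H_{a-1}\ni q_{k''}^{\ell-a+1}$ via the inductive invariant at $a-1$---shows that $\rho_{k''}$ collapses the weight condition into a single restriction on $q_{k''}^{\ell-a+1}$, restoring independence on every other coordinate, after which $\Pi_{k''}$ leaves coordinates outside $\Sp(\B^{\ell-a})\cup\B^{\ell-a+1}$ untouched. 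Independence on $H_a$ therefore propagates, closing the induction on the invariant and hence on the size bound.
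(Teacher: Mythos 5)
Your proof is correct, and it takes a genuinely different route from the paper's. The paper decomposes the recursion as $\nu_{\ell,j}(F)=\nu_{\ell-1,j-1}\bigl(\tau^\ell(\downset^\ell(F))\bigr)$, i.e.\ it first applies the glueing-plus-downset step at level $\ell$, carefully shows via $\rho_k$ and Lemma~\ref{lm:pi-prop} that the image $\tau_k(\downset_k(F))$ decomposes into sub-rectangles $\wh F(a)$ of exactly the form required by the inductive hypothesis at $(\ell-1,j-1)$, and then sums the resulting contributions over $k$ and $a$. You instead decompose the recursion the other way, $\nu_{\ell,j}(F)=\tau^{\ell-j+1}\bigl(\downset^{\ell-j+1}(\nu_{\ell,j-1}(F))\bigr)$: you apply the inductive hypothesis directly to the \emph{same} rectangle $F$ to control $|\nu_{\ell,j-1}(F)|$, and then compute the effect of the final $\downset$ in one stroke via Lemma~\ref{lm:u-subsampling}, dropping $\tau^{\ell-j+1}$ by injectivity. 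Both are legitimate. Your approach avoids having to track the explicit rectangle structure of $\nu_{\ell,j}(F)$ (you only maintain the weaker invariant that it does not depend on the nonempty coordinate set $H_a$, which is all that Lemma~\ref{lm:u-subsampling} needs), making the cardinality bookkeeping noticeably shorter; your invariant-propagation argument via Lemma~\ref{lm:densification} and the fact that $\Pi_{k''}$ only touches $\Sp(\B^{\ell-a})\cup\B^{\ell-a+1}$ is sound, including the observation that $q_{k''}^{\ell-a+1}\in H_{a-1}$ supplies the precondition of the densification lemma. What the paper's heavier decomposition buys is a template that carries over more directly to the full construction in Section~\ref{sec:main-result-full}, where rectangles are no longer literal product sets and one needs to track them (together with $\delta$-interiors and error terms) explicitly through the permutation map; the elementary subsampling count you use here would not port as cleanly because the exact equality $|U\cap W_k|=|U|/(K-k)$ is replaced by approximate versions requiring the finer rectangle structure.
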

\begin{proof}
The proof is by induction on $j$. The inductive claim is that
for every $\ell \in [L]$, every fixing $f$ of coordinates in $\Sp(\B^\ell)$ consistent with $T_*^\ell$, every rectangle $Y$ in $\Sp(\B^{>\ell})$ the rectangle 
$$
F=\{f\}\times Y\times [m]^{[n]\setminus \Sp(\B^{\geq \ell})},
$$ satisfies

$$
(\ln 2-C/K)^j |F|\leq |\nu_{\ell, j}(F)|\leq (\ln 2)^j |F|
$$
for an absolute constant $C>0$.

\noindent{\bf Base: $j=0$.} We have $|\nu_{\ell, j}(F)|=|\nu_{\ell, 0}(F)|=|F|$, as required.

\noindent{\bf Inductive step: $j \to j+1$.} Fix $\ell\in [L]$ and fix $k\in [K/2]$. We write $T:=T^\ell, T_*:=T_*^\ell$, as well as $\tau:=\tau^\ell$, $\downset:=\downset^\ell$ to simplify notation.
Let $J:=J^{\ell-1}$, let $J':=J^\ell$. Let $q_k:=q^\ell_k$ denote the $k$-th compression index in $\B^\ell$, and let $r'=r^\ell$ and $r=r^{\ell-1}$ denote the compression indices for $\B^\ell$ and $\B^{\ell-1}$ respectively.  Note that $\ell>0$, since otherwise we must have $j=0$.

Let $\rho_k$ be the $(K-k, q_k)$-compressing map as per Definition~\ref{def:rho}.  Since $f$ is consistent with $T_*$, we have $F\subseteq T_k$. 
Furthermore, since $q_k\not \in \Sp(\B^{\geq \ell})$ (see Definition~\ref{def:cext} and Property~\ref{prop:q-k}),  we have that the rectangle $F$ does not depend on coordinate $q_k$ (as per Definition~\ref{def:u-dep}). This means that by Lemma~\ref{lm:densification}
 the map $\rho_k$ maps
$$
\downset_k(F)\delequal \left\{x\in F: \weight(x) \in \left[0, \frac1{K-k}\right)\cdot W \pmod{ W} \right\}
$$ 
bijectively to 
\begin{equation}\label{eq:fwt-def}
\left\{x\in F: x_{q_k}/m\in \left[0, \frac1{K-k}\right)\right\},
\end{equation}
which in particular implies
\begin{equation}\label{eq:fk-downset}
|\rho(\downset_k(F))|=\frac1{K-k}|F|.
\end{equation}

Let $f_0$ denote the restriction of $f$ to $J'_{<k}\subseteq \Sp(\B^\ell)$ and let $f_1$ denote the restriction of $f$ to $\Sp(\B^\ell)\setminus J'_{<k}=J'_{\geq k}\cup \{r'\}$. Recall the definitions of the index set $I'$ (see~\eqref{eq:i-prime-def})
$$
I'=J'_{<k}\cup \Ext_k\cup \{q_k\}
$$
and index set $I$ (see~\eqref{eq:i-def}) 
$$
I=J\cup \{r\}.
$$

We let $H=\Ext_k\cup \{q_k\}$ for convenience, and define for $a\in [m]^H$
\begin{equation}\label{eq:93yty9uFGHsnXN}
F(a):=\{(f_0, a)\}\times \{f_1\}\times Y\times [m]^{[n]\setminus (\Sp(\B^{\geq \ell})\cup H)}.
\end{equation}
We note that $(f_0, a)\in [m]^{I'}$ -- this property makes it convenient to reason about the image of $F(a)$ under $\tau_k$, as we show below.
Also note that rectangles $F(a)$ defined above are disjoint for distinct choices of $a$ and 
\begin{equation}\label{eq:union234fgzdF}
\rho_k(\downset_k(F))=\bigcup_{a\in Q} F(a),
\end{equation}
where $Q=[m]^{\Ext_k}\times \left\{0, 1,\ldots, \frac{m}{K-k}-1\right\}$ by~\eqref{eq:fwt-def}.  We now apply Lemma~\ref{lm:pi-prop} to rectangle $F(a)$ for $a\in Q$. We invoke Lemma~\ref{lm:pi-prop} with $x=(f_0, a)\in [m]^{I'}$, rectangle $R=\{f_1\} \times Y$ and 
$$
\Lambda=\Sp(\B^{>\ell})\cup J'_{\geq k}\cup \{r'\}.
$$
We note that $[n]\setminus (\Lambda \cup I')=[n]\setminus (\Sp(\B^{\geq \ell})\cup H)$, which is consistent with~\eqref{eq:93yty9uFGHsnXN}. Also note that $\Lambda\subset \Sp(\B^{\geq \ell})$. By Lemma~\ref{lm:pi-prop} we get
$$
\Pi_k\left(\{x\}\times R\times [m]^{[n]\setminus (\Lambda\cup I')}\right)=\left\{M(x)\right\}\times R\times [m]^{[n]\setminus (\Lambda\cup I)},
$$
where $M(x)\in [m]^I$ is consistent with the terminal subcube $T_*$ by definition of $M$ (see~\eqref{eq:m-def}). Substituting the setting of $x$ and $R$, we get $\Pi_k(F(a))=\wh{F}(a)$, where 
$$
\wh{F}(a)=\left\{M(x)\right\}\times \{f_1\}\times Y\times [m]^{[n]\setminus (\Lambda\cup I)}.
$$
This together with~\eqref{eq:union234fgzdF} implies
\begin{equation}\label{eq:ihigg43gujgdjg}
\Pi_k(\rho_k(\downset_k(F)))=\bigcup_{a\in Q} \wh{F}(a).
\end{equation}

We now apply the inductive hypothesis to $\wh{F}(a)$ with fixing $M(x)\in [m]^{\Sp(\B^{\ell-1})}$ of coordinates and rectangle $R= \{f_1\}\times Y\in [m]^\Lambda$. The preconditions of the lemma are satisfied since $\Lambda\subseteq \Sp(\B^{\geq \ell})$. The inductive hypothesis gives
\begin{equation}\label{eq:ind-uwg8tg}
(\ln 2-C/K)^{j-1} |\wh{F}(a)|\leq |\nu_{\ell-1, j-1}(\wh{F}(a))|\leq (\ln 2)^{j-1} |\wh{F}(a)|.
\end{equation}
Applying the function $\nu_{\ell-1, j-1}$ to both sides of~\eqref{eq:ihigg43gujgdjg}, and using~\eqref{eq:ind-uwg8tg}, the fact that $\nu_{\ell-1, j-1}$ is injective as well as the fact that $\wh{F}(a)$ are disjoint for different $a\in [m]^H$ we have
\begin{equation}\label{eq:union234fgzdF-lb}
\begin{split}
|\nu_{\ell-1, j-1}(\Pi_k(\rho_k(\downset_k(F)))|&=\left|\bigcup_{a\in Q} \nu_{\ell-1, j-1}(\wh{F}(a))\right|\\
&=\sum_{a\in Q} \left|\nu_{\ell-1, j-1}(\wh{F}(a))\right|\\
&\geq \sum_{a\in Q} (\ln 2-C/K)^{j-1}  \left|\wh{F}(a)\right|\\
&=(\ln 2-C/K)^{j-1}  \sum_{a\in Q} \left|\wh{F}(a)\right|\\
&=(\ln 2-C/K)^{j-1}  \sum_{a\in Q} \left|F(a)\right|\\
&=(\ln 2-C/K)^{j-1} \left|\bigcup_{a\in Q} F(a)\right|\\
&=(\ln 2-C/K)^{j-1} \left|\rho_k(\downset_k(F))\right|\\
&=(\ln 2-C/K)^{j-1} \frac1{K-k}\left|F\right|.
\end{split}
\end{equation}
In the fifth transition we used the fact that $|F(a)|=|\wh{F}(a)|$, which follows by Lemma~\ref{lm:pi-prop} together with the fact that $\Pi_k$ is injective. In the seventh transition we used~\eqref{eq:union234fgzdF}. The final transition uses~\eqref{eq:fk-downset}.

For the upper bound we similarly have,  applying the function $\nu_{\ell-1, j-1}$ to both sides of~\eqref{eq:ihigg43gujgdjg}, and using~\eqref{eq:ind-uwg8tg}, the fact that $\nu_{\ell-1, j-1}$ is injective as well as the fact that $\wh{F}(a)$ are disjoint for different $a\in [m]^H$ we have
\begin{equation}\label{eq:union234fgzdF-ub}
\begin{split}
|\nu_{\ell-1, j-1}(\Pi_k(\rho_k(\downset_k(F)))|&=\left|\bigcup_{a\in Q} \nu_{\ell-1, j-1}(\wh{F}(a))\right|\\
&=\sum_{a\in Q} \left|\nu_{\ell-1, j-1}(\wh{F}(a))\right|\\
&\leq \sum_{a\in Q} (\ln 2)^{j-1}  \left|\wh{F}(a)\right|\\
&=(\ln 2)^{j-1}  \sum_{a\in Q} \left|\wh{F}(a)\right|\\
&=(\ln 2)^{j-1}  \sum_{a\in Q} \left|F(a)\right|\\
&=(\ln 2)^{j-1} \left|\bigcup_{a\in Q} F(a)\right|\\
&=(\ln 2)^{j-1} \left|\rho_k(\downset_k(F))\right|\\
&=(\ln 2)^{j-1} \frac1{K-k}\left|F\right|.
\end{split}
\end{equation}
In the fifth transition we used the fact that $|F(a)|=|\wh{F}(a)|$, which follows by Lemma~\ref{lm:pi-prop} together with the fact that $\Pi_k$ is injective. In the seventh transition we used~\eqref{eq:union234fgzdF}. The final transition uses~\eqref{eq:fk-downset}.

We now get, summing the above over $k\in [K/2]$ 
\begin{equation}\label{eq:092hg92hgc23rfhFUHF-lb}
|\nu_{\ell, j}(F)|=\sum_{k\in [K/2]} |\nu_{\ell-1, j-1}(\tau(\downset_k(F)))|\geq \left(\sum_{k\in [K/2]} \frac1{K-k}\right)\cdot (\ln 2-C/K)^{j-1} |F|
\end{equation}
and 
\begin{equation}\label{eq:092hg92hgc23rfhFUHF-ub}
|\nu_{\ell, j}(F)|=\sum_{k\in [K/2]} |\nu_{\ell-1, j-1}(\tau(\downset_k(F)))|\leq \left(\sum_{k\in [K/2]} \frac1{K-k}\right)\cdot (\ln 2)^{j-1} |F|
\end{equation}

At the same time one has by Claim~\ref{cl:sum-int} 
$$
\ln 2-1/K\leq \sum_{k\in [K/2]} \frac1{K-k}\leq \ln 2
$$
Putting this together with~\eqref{eq:092hg92hgc23rfhFUHF-lb} and~\eqref{eq:092hg92hgc23rfhFUHF-ub} completes the proof of the inductive step (we assume that $C\geq 1$), and completes the proof of the lemma.\end{proof}

\begin{corollary}\label{cor:nu-star-ub}
For every $\ell \in [L]$, every $j=0,\ldots, \ell$, every rectangle  $R$ in $\Sp(\B^{\geq \ell})$ consistent with the terminal subcube $T_*^\ell$ (as per Definition~\ref{def:consistent}) the extended rectangle 
$$
F=R\times [m]^{[n]\setminus \Sp(\B^{\geq \ell})}
$$
satisfies
$$
(\ln 2-C/K)^j |F|\leq |\nu_{\ell, j}(F)|\leq (\ln 2)^j |F|
$$
for an absolute constant $C>0$. 
\end{corollary}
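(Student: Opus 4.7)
The plan is to reduce this corollary to Lemma~\ref{lm:rect-nu-j} by decomposing the rectangle $R$ along the split $\Sp(\B^{\geq \ell}) = \Sp(\B^\ell) \cup \Sp(\B^{>\ell})$, and then invoking the injectivity of $\nu_{\ell, j}$ to sum the bounds over each fiber.

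First I would write $R = R_0 \times R_1$, where $R_0$ is a rectangle in $\Sp(\B^\ell)$ and $R_1$ is a rectangle in $\Sp(\B^{>\ell})$, which is possible since $R$ is a rectangle in $\Sp(\B^{\geq \ell})$ and these two index sets are disjoint. Since $R$ is consistent with $T_*^\ell$ by assumption, Claim~\ref{cl:fixing} implies that for every $f_0 \in R_0$ the rectangle $\{f_0\} \times R_1$ is consistent with $T_*^\ell$. Then for every $f_0 \in R_0$ the set
$$
F(f_0) := \{f_0\} \times R_1 \times [m]^{[n]\setminus \Sp(\B^{\geq \ell})}
$$
falls exactly into the setting of Lemma~\ref{lm:rect-nu-j} (with fixing $f_0$ of $\Sp(\B^\ell)$ consistent with $T_*^\ell$ and rectangle $R_1$ in $\Sp(\B^{>\ell})$), so
$$
(\ln 2 - C/K)^j |F(f_0)| \leq |\nu_{\ell, j}(F(f_0))| \leq (\ln 2)^j |F(f_0)|.
$$

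Next I would observe that $F = \bigcup_{f_0 \in R_0} F(f_0)$ is a disjoint union, and that $\nu_{\ell, j}$ is injective by Lemma~\ref{lm:nu-prop}, {\bf (1)}. Therefore
$$
|\nu_{\ell, j}(F)| = \sum_{f_0 \in R_0} |\nu_{\ell, j}(F(f_0))|,
$$
and summing the per-fiber bounds above, together with $|F| = \sum_{f_0 \in R_0} |F(f_0)|$, yields
$$
(\ln 2 - C/K)^j |F| \leq |\nu_{\ell, j}(F)| \leq (\ln 2)^j |F|,
$$
as required.

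I do not foresee a real obstacle here: the whole statement is a clean rectangle-fibering argument on top of Lemma~\ref{lm:rect-nu-j}, and the only subtlety is making sure that each fiber $\{f_0\} \times R_1$ remains consistent with $T_*^\ell$, which is exactly the content of Claim~\ref{cl:fixing}. The injectivity of $\nu_{\ell, j}$ is what allows turning the union into a sum without any correction terms, so the same constant $C$ that appears in Lemma~\ref{lm:rect-nu-j} carries over verbatim.
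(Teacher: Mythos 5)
Your proposal is correct and follows essentially the same argument as the paper's proof: factor $R = R_0 \times R_1$ across $\Sp(\B^\ell)$ and $\Sp(\B^{>\ell})$, use Claim~\ref{cl:fixing} to check each fiber $\{f_0\}\times R_1$ is consistent with $T_*^\ell$, apply Lemma~\ref{lm:rect-nu-j} fiberwise, and sum via injectivity of $\nu_{\ell,j}$ from Lemma~\ref{lm:nu-prop}.
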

\begin{proof}
Write $R=R_0\times R_1$, where $R_0$ is a rectangle in $\Sp(\B^\ell)$ and $R_1$ is a rectangle in $\Sp(\B^{>\ell})$ (this is possible by Definition~\ref{def:rectangle} of a rectangle). We have 
\begin{equation}\label{eq:923ytyxSyHDH}
F=\bigcup_{a\in R_0} F(a),
\end{equation}
where 
$$
F(a)=\{a\}\times R_1\times [m]^{[n]\setminus \Sp(\B^{\geq \ell})}.
$$
Note that by Claim~\ref{cl:fixing} every $f\in R_0$ is consistent with the terminal subcube since $R$ is consistent with the terminal subcube by assumption. Thus, the preconditions of Lemma~\ref{lm:rect-nu-j} are satisfied, and we have 
\begin{equation}\label{eq:rguugUGGF}
(\ln 2-C/K)^j |F(a)|\leq |\nu_{\ell, j}(F(a))|\leq (\ln 2)^j |F(a)|.
\end{equation}
Applying $\nu_{\ell, j}$ to~\eqref{eq:923ytyxSyHDH}, combining with~\eqref{eq:rguugUGGF} and using the fact that $\nu_{\ell, j}$ is injective by Lemma~\ref{lm:nu-prop}, we get
\begin{equation*}
\begin{split}
|\nu_{\ell, j}(F)|&=\sum_{a\in R_0} |\nu_{\ell, j}(F(a))|\\
&\leq (\ln 2)^j \sum_{a\in R_0} |F(a)|\\
&= (\ln 2)^j |F|.
\end{split}
\end{equation*}
Similarly, we get
\begin{equation*}
\begin{split}
|\nu_{\ell, j}(F)|&=\sum_{a\in R_0} |\nu_{\ell, j}(F(a))|\\
&\geq (\ln 2-C/K)^j \sum_{a\in R_0} |F(a)|\\
&= (\ln 2-C/K)^j |F|.
\end{split}
\end{equation*}
\end{proof}

We now give 

\begin{proofof}{Lemma~\ref{lm:mu-ell-j}}
We write $T:=T^\ell, T_*:=T_*^\ell$, as well as $\tau:=\tau^\ell$, $\downset:=\downset^\ell$ to simplify notation. We start by writing 
\begin{equation*}
\mu_{\ell, j}(T\setminus T_*)=\mu_{\ell, j}(T\setminus T_{K/2})=\bigcup_{k\in [K/2]} \mu_{\ell, j}(T_k\setminus T_{k+1}).
\end{equation*}
Since $\mu_{\ell, j}$ is injective by Lemma~\ref{lm:nu-prop}, {\bf (1)}, one has $\mu_{\ell, j}(T_k\setminus T_{k+1}) \cap \mu_{\ell, j}(T_{k'}\setminus T_{k'+1})=\emptyset$ for distinct $k, k'\in [K/2]$ (indeed, as the sets $T_k$ are nested, $T_k\setminus T_{k+1}$ are disjoint for distinct $k$). Thus, 
\begin{equation}\label{eq:923h92ht12Wgugf}
|\mu_{\ell, j}(T\setminus T_*)|=\sum_{k\in [K/2]} |\mu_{\ell, j}(T_k\setminus T_{k+1})|,
\end{equation}
and in order to bound $|\mu_{\ell, j}(T\setminus T_*)|$ it suffices to bound $|\mu_{\ell, j}(T_k\setminus T_{k+1})|$ for every $k\in [K/2]$.  Fix $k\in [K/2]$. We have 
\begin{equation*}
\begin{split}
\mu_{\ell, j}(T_k\setminus T_{k+1})&=\mu_{\ell-1, j-1}(\tau(\downset(T_k\setminus T_{k+1})))\\
&=\bigcup_{s=0}^k \mu_{\ell-1, j-1}(\tau(\downset_s(T_k\setminus T_{k+1}))),\\
\end{split}
\end{equation*}
where the first transition uses Claim~\ref{cl:nu-prop-basic}, {\bf (2)}, and the second transition is by Definition~\ref{def:downset} and Remark~\ref{rm:truncated-downset}.
We bound $|\mu_{\ell, j}(T_k\setminus T_{k+1})|$ by bounding the size of individual terms on the rhs of the equation above. This suffices since $\mu_{\ell-1, j-1}(\tau(\downset_s(T_k\setminus T_{k+1})))$ are disjoint for different $s$ -- this follows by noting that $\downset_s(T_k\setminus T_{k+1}))$ are disjoint for different $s$ by construction, $\tau$ is bijective by Lemma~\ref{lm:tau-prop} and $\mu_{\ell-1, j-1}$ is injective by Lemma~\ref{lm:nu-prop}, {\bf (1)}. Formally,
\begin{equation*}
\begin{split}
|\mu_{\ell, j}(T_k\setminus T_{k+1})|=\sum_{s=0}^k |\mu_{\ell-1, j-1}(\tau(\downset_s(T_k\setminus T_{k+1})))|.
\end{split}
\end{equation*}
Furthermore, since for every set $U\subseteq T^{\ell-1}$ one has 
$$
|\mu_{\ell-1, j-1}(U)|=|\nu_{\ell-1, j}(U)|,
$$
by Claim~\ref{cl:mu-prop}, 
 we have
\begin{equation}\label{eq:823gt8g8GDGNKNFx}
\begin{split}
|\mu_{\ell, j}(T_k\setminus T_{k+1})|&=\sum_{s=0}^k |\nu_{\ell, j-1}(\tau(\downset_s(T_k\setminus T_{k+1})))|.
\end{split}
\end{equation}

\paragraph{Bounding the rhs of~\eqref{eq:823gt8g8GDGNKNFx}.} We now bound the terms on the rhs of~\eqref{eq:823gt8g8GDGNKNFx}. Let $J=J^\ell$ and $r=r^\ell$ to simplify notation. For $s\in \{0, 1,\ldots, k\}$ let $q_s:=q^\ell_s$ and let $\rho_s$ be the $(K-s, q_s)$-densifying map as per Definition~\ref{def:rho}.  Define
$$
I':=J_{<s}\cup \Ext_s\cup \{q_s\}.
$$

Since $T_k\setminus T_{k+1}$ does not depend on $q_s$ (by Property~\ref{prop:q-k}; see also Definition~\ref{def:u-dep}), we have by Lemma~\ref{lm:densification} 
\begin{equation}
\begin{split}
\rho_s(\downset_s(T_k\setminus T_{k+1}))&\delequal \left\{x\in [m]^n: x_{J_t}/m\in \left[0, 1-\frac1{K-t}\right)\text{~for all~}t=0,\ldots, k-1\right.\\
&\text{~~~~~~~~~~~~~~~~~~~~~~~~~~~and~}\\
&\text{~~~~~~~~~~~~~~~~~~~~~~~~~~}x_{J_k}/m\in \left(1-\frac1{K-k}, 1\right]\\
&\text{~~~~~~~~~~~~~~~~~~~~~~~~~~~and~}\\
&\text{~~~~~~~~~~~~~~~~~~~~~~~~~~}\left.x_{q_s}/m\in \left[0, \frac1{K-s}\right)\right\}.
\end{split}
\end{equation}

Define 

\begin{equation}
\begin{split}
Q&=\left\{x\in [m]^{I'}: x_{J_t}/m\in \left[0, 1-\frac1{K-t}\right)\text{~for all~}t=0,\ldots, s-1\right\}\\
&\text{~~~~~~~~~~~~~~~~~~~~~~~~~~~and~}\\
&\text{~~~~~~~~~~~~~~~~~~~~~~~~~~}\left.x_{q_s}/m\in \left[0, \frac1{K-s}\right)\right\}
\end{split}
\end{equation}
and, letting  $\Lambda:=\Sp(\B^{\geq \ell})\setminus J_{<s}$,
\begin{equation}
\begin{split}
R&=\left\{x\in [m]^{\Lambda}: x_{J_t}/m\in \left[0, 1-\frac1{K-t}\right)\text{~for all~}t=s,\ldots, k-1\right.\\
&\text{~~~~~~~~~~~~~~~~~~~~~~~~~~~and~}\\
&\text{~~~~~~~~~~~~~~~~~~~~~~~~~~}\left.x_{J_k}/m\in \left(1-\frac1{K-k}, 1\right]\right\}.
\end{split}
\end{equation}
so that 
\begin{equation}\label{eq:rho-tk-diff}
\begin{split}
\rho_s(\downset_s(T_k\setminus T_{k+1}))&=Q\times R\times [m]^{[n]\setminus (\Lambda\cup I')}=\bigcup_{a\in Q} F(a).\\
\end{split}
\end{equation} Further, for $a\in Q\subseteq [m]^{I'}$ let
$$
F(a):=\{a\} \times R\times [m]^{[n]\setminus (\Lambda\cup I')}.
$$
We note that $F(a)\cap F(a')=\emptyset$ for $a\neq a'$. By Lemma~\ref{lm:pi-prop} we have 
\begin{equation}\label{eq:93g7GFGSHkjLMXasS}
\begin{split}
\Pi_s(F(a))&=\Pi_s\left(\{a\}\times R\times [m]^{[n]\setminus (\Lambda\cup I')}\right)\\
&=\left\{M(a)\right\}\times R\times [m]^{[n]\setminus (\Lambda\cup I)}:=\wh{F}(a).
\end{split}
\end{equation}
Since $M(a)\in [m]^I$ (see~\eqref{eq:i-def} and~\eqref{eq:m-def}) is consistent with the terminal subcube $T_*^\ell$, we get by Lemma~\ref{lm:rect-nu-j} 
\begin{equation}\label{eq:823gtgCGGFK}
(\ln 2-C/K)^j |\wh{F}(a)|\leq |\nu_{\ell-1, j}(\wh{F}(a))|\leq (\ln 2)^j |\wh{F}(a)|.
\end{equation}
We now apply $\nu_{\ell-1, j}(\Pi_s(\cdot))$ to both sides of~\eqref{eq:rho-tk-diff}, obtaining 
\begin{equation}\label{eq:iehfihefh9efKPCO}
\begin{split}
|\nu_{\ell-1, j}(\Pi_s(\rho_s(\downset_s(T_k\setminus T_{k+1}))))|&=\sum_{a\in Q} |\nu_{\ell-1, j}(\Pi_s(F(a)))|\\
&=\sum_{a\in Q} |\nu_{\ell-1, j}(\wh{F}(a))|,
\end{split}
\end{equation}
where the last transition uses the definition of $\wh{F}(a)$ in~\eqref{eq:93g7GFGSHkjLMXasS}. At the same time we have by~\eqref{eq:823gtgCGGFK}
\begin{equation*}
\begin{split}
\sum_{a\in Q} \left|\nu_{\ell-1, j}(\wh{F}(a))\right|&\geq \sum_{a\in Q} (\ln 2-C/K)^j\left|\wh{F}(a)\right|\\
&=(\ln 2-C/K)^j \sum_{a\in Q}\left|\wh{F}(a)\right|\\
&=(\ln 2-C/K)^j \sum_{a\in Q}\left|F(a)\right|\\
&=(\ln 2-C/K)^j \left|\rho_s(\downset_s(T\setminus T_*))\right|\\
&= (\ln 2-C/K)^j \frac1{K-s}|T_k\setminus T_{k+1}|.
\end{split}
\end{equation*}
and 
\begin{equation*}
\begin{split}
\sum_{a\in Q} \left|\nu_{\ell-1, j}(\wh{F}(a))\right|&\leq \sum_{a\in Q} (\ln 2)^j\left|\wh{F}(a)\right|\\
&=(\ln 2)^j \sum_{a\in Q}\left|\wh{F}(a)\right|\\
&=(\ln 2)^j \sum_{a\in Q}\left|F(a)\right|\\
&=(\ln 2)^j \left|\rho_s(\downset_s(T_k\setminus T_{k+1}))\right|\\
&= (\ln 2)^j \frac1{K-s}|T_k\setminus T_{k+1}|.
\end{split}
\end{equation*}
In both cases above the last transition uses the fact that 
$$
\left|\rho_s(\downset_s(T_k\setminus T_{k+1}))\right|=\frac1{K-s}|T_k\setminus T_{k+1}|,
$$
which follows by noting that $T_k\setminus T_{k+1}$ does not depend on $q_s$ (by Property~\ref{prop:q-k}) and using Lemma~\ref{lm:densification}.  Putting the above bounds together with~\eqref{eq:iehfihefh9efKPCO} gives
\begin{equation*}
(\ln 2-C/K)^j \frac1{K-s}|T_k\setminus T_{k+1}| \leq |\nu_{\ell-1, j}(\tau_s(\downset_s(T_k\setminus T_{k+1})))| \leq (\ln 2)^j \frac1{K-s}|T_k\setminus T_{k+1}|
\end{equation*}

We now get by~\eqref{eq:823gt8g8GDGNKNFx}
\begin{equation*}
\begin{split}
|\mu_{\ell, j}(T_k\setminus T_{k+1})|&=\sum_{s=0}^k |\nu_{\ell-1, j}(\tau_s(\downset_s(T_k\setminus T_{k+1})))|\\
&\geq (\ln 2-C/K)^j \left(\sum_{s=0}^k \frac1{K-s}\right)|T_k\setminus T_{k+1}|
\end{split}
\end{equation*}
and 
\begin{equation*}
\begin{split}
|\mu_{\ell, j}(T_k\setminus T_{k+1})|&=\sum_{s=0}^k |\nu_{\ell-1, j}(\tau_s(\downset_s(T_k\setminus T_{k+1})))|\\
&\leq (\ln 2)^j \left(\sum_{s=0}^k \frac1{K-s}\right)|T_k\setminus T_{k+1}|
\end{split}
\end{equation*}
Now using~\eqref{eq:923h92ht12Wgugf} and the fact that
$$
|T_k\setminus T_{k+1}|=\left(1-\frac{k}{K}\right)\cdot |T_0^\ell|-\left(1-\frac{k+1}{K}\right)\cdot |T_0^\ell|=\frac1{K}|T_0^\ell|
$$
 for every $k\in [K/2]$ by Lemma~\ref{lm:size-bounds}, {\bf (1)},  we get
\begin{equation}\label{eq:92g8g8nnfifiwebfninICN}
\begin{split}
(\ln 2-C/K)^j \gamma |T_0| \leq |\mu_{\ell, j}(T\setminus T_*)|\leq (\ln 2)^j \gamma |T_0|
\end{split}
\end{equation}
for 
$$
\gamma=\frac1{K} \sum_{k\in [K/2]} \left(\sum_{s=0}^k \frac1{K-s}\right).
$$

Finally, we note that
\begin{equation*}
\begin{split}
\frac1{K}\sum_{k\in [K/2]} \left(\sum_{s=0}^k \frac1{K-s}\right)&=\sum_{s=0}^{K/2-1} \sum_{k=s}^{K/2-1} \frac1{K-s}\\
&=\frac1{K}\sum_{s=0}^{K/2-1} \frac{K/2-s}{K-s}\\
&=\frac1{K}\sum_{s=0}^{K/2-1} \left(1-\frac{K/2}{K-s}\right)\\
&=\frac1{2}-\frac1{2}\sum_{s=0}^{K/2-1} \frac1{K-s}, 
\end{split}
\end{equation*}
and thus by Claim~\ref{cl:sum-int}
$$
\frac1{2} (1-\ln 2)\leq \gamma\leq \frac1{2}(1-\ln 2+1/K).
$$
Combining this with~\eqref{eq:92g8g8nnfifiwebfninICN} gives 
$$
(\ln 2-C/K)^j \frac1{2}(1-\ln 2) |T_0| \leq |\mu_{\ell, j}(T\setminus T_*)|\leq (\ln 2+C/K)^j \frac1{2}(1-\ln 2) |T_0|
$$
as required.
\end{proofof}

\subsubsection{Proof of key structural property (Lemma~\ref{lm:cut-structure})}\label{sec:cut-structure}
We now present 

\begin{proofof}{Lemma~\ref{lm:cut-structure}} Our proof is by induction on $j$. The inductive claim is

\vspace{0.05in}

\fbox{
\begin{minipage}{0.99\textwidth}
For every $\ell \in [L]$, for every $x\in \nu_{\ell+j, j}(T^{\ell+j}\setminus T_*^{\ell+j})\subseteq T^\ell$, every $y\in T^\ell$, if $y_i=x_i$ for all $i\in \Gamma$ (see Definition~\ref{def:basic-coordinates}),  then  $y\in \nu_{\ell+j, j}(T^{\ell+j}\setminus T_*^{\ell+j})$.
\end{minipage}}

\vspace{0.05in}

\noindent{\bf Base: $j=0$.} Recall that $\nu_{\ell, 0}$ is the identity map. Letting $J:=J^\ell$, we have
\begin{equation*}
T^\ell\setminus T_*^\ell=\left\{z\in T^\ell: z_{J_k}/m\in \left[1-\frac1{K-k}, 1\right)\text{~for some~}k\in [K/2+1]\right\}.
\end{equation*}
Let $k\in [K/2+1]$ be such that $x_{J_k}/m\in \left[1-\frac1{K-k}, 1\right)$. Since $y_i=x_i$ for all $i\in \Gamma$, and in particular for $i\in \Sp(\B^\ell)$ (which includes $J_0,\ldots, J_{K/2}$ and in particular $J_{k}$), we get $y_{J_k}/m\in \left[1-\frac1{K-k}, 1\right)$ and therefore $y\in T^\ell\setminus T_*^\ell=\nu_{\ell, 0}(T^{\ell}\setminus T_*^{\ell})$ as required.

\noindent{\bf Inductive step: $j-1\to j$.} By Lemma~\ref{lm:tau-inverse-prop}  there exists $a\in [K/2]$ as well as $u, v\in S^{\ell+1}_a$ such that $x= \tau^{\ell+1}(u)$, $y=\tau^{\ell+1}(v)$ and $u_\Gamma=v_\Gamma$ (the set of basic coordinates as per Definition~\ref{def:basic-coordinates}). Let $x'\in T^{\ell+1}, y'\in T^{\ell+1}$ be such that $x'\delequal u$ and $y'\delequal v$, and note that $x'_\Gamma=y'_\Gamma$. Now recall that by Definition~\ref{def:nu}
$$
\nu_{\ell+j, j}(T^{\ell+j}\setminus T_*^{\ell+j})=\tau^{\ell+1}(\downset^{\ell+1}(\nu_{(\ell+1)+(j-1), j-1}(T^{(\ell+1)+(j-1)}\setminus T_*^{(\ell+1)+(j-1)}))).
$$
Since $x\in \nu_{\ell+j, j}(T^{\ell+j}\setminus T_*^{\ell+j})$ by assumption, we get that 
$$
x'\in \nu_{(\ell+1)+(j-1), j-1}(T^{(\ell+1)+(j-1)}\setminus T_*^{(\ell+1)+(j-1)}),
$$
and therefore by the inductive hypothesis, using the fact that $x'_\Gamma=y'_\Gamma$, we get 
$$
y'\in \nu_{(\ell+1)+(j-1), j-1}(T^{(\ell+1)+(j-1)}\setminus T_*^{(\ell+1)+(j-1)}).
$$
As a consequence $y\in \tau^{\ell+1}(\downset^{\ell+1}(\{y'\}))\subseteq \nu_{\ell+j, j}(T^{\ell+j}\setminus T_*^{\ell+j})$, as required.
\end{proofof}

We also give

\begin{proofof}{Corollary~\ref{cor:cut-structure}}
Let $y'\in T^\ell$ be such that $y\delequal y'$ -- such a $y'$ exists by definition of $S^\ell$, and note that $y'_\Gamma=x_\Gamma$ since $y_\Gamma=x_\Gamma$ by assumption of the corollary. We have 
$$
y'\in \nu_{\ell+j, j}(T^{\ell+j}\setminus T_*^{\ell+j})
$$
by Lemma~\ref{lm:cut-structure}. Since $\mu_{\ell+j, j}(T^{\ell+j}\setminus T_*^{\ell+j})=\downset^\ell(\nu_{\ell+j, j}(T^{\ell+j}\setminus T_*^{\ell+j}))$, we get $y\in \mu_{\ell+j, j}(T^{\ell+j}\setminus T_*^{\ell+j})$, as required.
\end{proofof}

\subsection{Proof of Theorem~\ref{thm:main-simple}}\label{sec:simple-lb}
We now define the hard input distribution $\mathcal{D}$ on graphs $\wh{G}=(P, Q, \wh{E})$. A graph $\wh{G} \sim \mathcal{D}$ is sampled as follows. First, for every round $\ell\in [L]$ and phase $k\in [K/2]$ one {\bf arbitrarily} selects 
\begin{enumerate}
\item the extension indices $\Ext^\ell_k$ from $\B^\ell_k$;
\item a compression index $q^\ell_k$ in $\B^\ell_k\setminus \Ext^\ell_k$.
\end{enumerate}
One also selects $r^\ell\in \B^\ell_{K/2}$ arbitrarily.
Recall that for $k\in [K/2]$ we let (see Definition~\ref{def:cext})
$$
\ac{\B}^\ell_k=\B^\ell_k\setminus (\Ext^\ell_k\cup \{q^\ell_k\})
$$
and $\ac{\B}^\ell_{K/2}=\B^\ell_{K/2}\setminus \{r^\ell\}$.
Finally, one selects, for every $\ell\in [L]$ and $k\in [K/2]$,
$$
J^\ell_k\sim UNIF(\ac{\B}^\ell_k)
$$
independently.

\paragraph{Edge set of $\wh{G}=(P, Q, \wh{E})$.} We first define
\begin{equation}\label{eq:tau-star-def}
\tau_*(x)=\left\lbrace
\begin{array}{ll}
\tau^\ell(x)&\text{~if~}x\in S^\ell\text{~for~}\ell>0\\
x&\text{o.w.}
\end{array}
\right.
\end{equation}
and define for every edge $e=(u, v)\in E^\ell, u\in S^\ell, v\in T^\ell,\ell\in [L]$
\begin{equation}\label{eq:tau-of-edge}
\tau_*(e)=(\tau_*(u), v).
\end{equation}

We now let 
\begin{equation}\label{eq:g-hat-edges}
\wh{E}=\bigcup_{\ell\in [L]} \wh{E}^\ell,
\end{equation}
where 
\begin{equation}\label{eq:g-hat-edges-ell}
\wh{E}^\ell=\bigcup_{k\in [K/2]}\bigcup_{j\in \ac{\B}^\ell_k} \tau_*(E^\ell_{k, j}),
\end{equation}
and $E^\ell_{k, j}$ is defined by~\eqref{eq:edges-ei-def}.

\paragraph{Ordering of edges of $\wh{G}$ in the stream.} The graph $G^\ell=(S^\ell, T^\ell, E^\ell)$ is presented in the stream over $L$ {\em rounds} and $K/2$ {\em phases} as follows. For every $\ell\in \{1, \ldots, L-1\}$, for every $k\in [K/2]$, the edges in $\tau^\ell(E^\ell_k)$ are presented in the stream; the ordering within $\tau^\ell(E^\ell_k)$ is arbitrary.

We have 
\begin{lemma}\label{lm:large-matching-ghat}
The graph $\wh{G}=(P, Q, \wh{E})$ contains a matching of size $(1-O(1/L))|P|$.
\end{lemma}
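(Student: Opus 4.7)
\begin{proofsketch}
The plan is to lift the per-gadget matchings from Lemma~\ref{lm:matching} into $\wh{G}$ via the map $\tau_*$, and to check that the resulting collection is vertex-disjoint and covers almost all of $P$.

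First, for each $\ell \in [L]$ apply Lemma~\ref{lm:matching} to $G^\ell=(S^\ell, T^\ell, E^\ell)$ to obtain a matching $M^\ell \subseteq E^\ell$ between $S^\ell$ and $T^\ell \setminus T^\ell_*$ of size at least $(1-O(1/K))|S^\ell| = (1-O(1/K))N/2$. The proof of Lemma~\ref{lm:matching} selects these edges from $E^\ell_{k,J^\ell_k}$ for $k\in [K/2]$, and since $J^\ell_k \in \ac{\B}^\ell_k$ by Property~\ref{prop:q-k}, each edge of $M^\ell$ lies in one of the sets $E^\ell_{k,j}$ with $j\in \ac{\B}^\ell_k$. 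Consequently $\wh{M}^\ell := \tau_*(M^\ell) \subseteq \wh{E}^\ell$ by~\eqref{eq:g-hat-edges-ell}. Because $\tau_*$ is injective on $S^\ell$ (it is the identity for $\ell=0$ and equals $\tau^\ell$, a bijection $S^\ell\to T_*^{\ell-1}$, for $\ell\geq 1$ by Lemma~\ref{lm:tau-prop}) and leaves the $T^\ell$ endpoints untouched, $\wh{M}^\ell$ is a matching in $\wh{G}$ of cardinality $|M^\ell|$, with endpoints in $\tau_*(S^\ell) \cup (T^\ell\setminus T^\ell_*)$.

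Next, verify that the matchings $\wh{M}^\ell$ are pairwise vertex-disjoint. For $\ell\geq 1$ the endpoints of $\wh{M}^\ell$ lie in $T_*^{\ell-1} \cup (T^\ell\setminus T_*^\ell)$, while for $\ell=0$ they lie in $S^0 \cup (T^0\setminus T_*^0)$. These sets are subsets of the vertex strata $T^0, T^1,\ldots, T^{L-1}, S^0$, which are pairwise disjoint in $\wh{G}$. If $|\ell-\ell'|\geq 2$, the associated strata are fully disjoint. The only nontrivial case is $\ell'=\ell+1$, in which both matchings touch the layer $T^\ell$: $\wh{M}^\ell$ uses $T^\ell\setminus T_*^\ell$, while $\wh{M}^{\ell+1}$ uses $\tau_*(S^{\ell+1}) = T_*^\ell$, and these two subsets of $T^\ell$ are disjoint by Definition~\ref{def:terminal-subcube}. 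Hence $\wh{M} := \bigcup_{\ell\in[L]} \wh{M}^\ell$ is a matching in $\wh{G}$.

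Finally, bound the size: using Property~\ref{p2} ($K=L$) and $|P|=(L/2)N$, one gets
\[
|\wh{M}| = \sum_{\ell\in[L]}|M^\ell| \geq L\cdot(1-O(1/K))\cdot\tfrac{N}{2} = (1-O(1/L))|P|.
\]
There is no real obstacle here beyond checking that the bipartition-conventions of~\eqref{eq:p-def}--\eqref{eq:q-def} make each $\wh{M}^\ell$ a matching between $P$ and $Q$ (which follows because consecutive layers have opposite parity), and that the odd-$\ell$ and even-$\ell$ matchings together cover $T_*^\ell$ and $T^\ell\setminus T_*^\ell$ for every even $\ell$, leaving only an $O(1/K)$ fraction of each layer unmatched in $P$.
\end{proofsketch}
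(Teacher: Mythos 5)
Your proof follows essentially the same approach as the paper's: apply Lemma~\ref{lm:matching} to each gadget, push the resulting matchings through $\tau^\ell$/$\tau_*$, observe that $\tau^\ell(M^\ell)$ lands in $T_*^{\ell-1}$ while $\tau^{\ell-1}(M^{\ell-1})$ uses only $T^{\ell-1}\setminus T_*^{\ell-1}$ so the union stays a matching, and count using $|P|=LN/2$ and $L=K$. The only cosmetic difference is that the paper takes the union over $\ell\geq 1$ (dropping $G^0$, which still gives $(1-O(1/L))|P|$), whereas you also include $\ell=0$; both are correct.
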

\begin{proof}
By Lemma~\ref{lm:matching} for every $\ell\in [L],\ell>0$ there exists a matching $M^\ell$ in $E^\ell$ that matches a $(1-O(1/K))$ fraction of $S^\ell$ to $T^\ell\setminus T_*^\ell$.
Since $\tau^\ell$ is injective by Lemma~\ref{lm:tau-prop}, we have that $\tau^\ell(M^\ell)$ is also a matching. Furthermore, since $\tau^\ell$ maps $S^\ell$ to $T_*^{\ell-1}$, avoiding vertices in $T^{\ell-1}\setminus T_*^{\ell-1}$, which may be matched by $\tau^{\ell-1}(M^{\ell-1}
)$, we have that the union of edges
$$
\bigcup_{\ell\in [L], \ell>0} \tau^\ell(M^\ell)
$$
forms a matching. For every $\ell$ we have $|M^\ell|=(1-O(1/K))|S^\ell|$, and by Lemma~\ref{lm:size-bounds}, {\bf (2)}, one has $|S^\ell|=\sum_{k\in [K/2]} |S^\ell_k|=\frac1{2}|T^\ell|=\frac1{2}N$. Since by Lemma~\ref{lm:size-bounds}, {\bf (1)}, with $k=K/2$ one has $|T^\ell_k|=\frac1{2}|T^\ell|$, we have by~\eqref{eq:p-def} 
\begin{equation*}
\begin{split}
|P|&=\left|\left(\bigcup_{\text{even~}\ell\in [L]} T^\ell\right)\right|\\
&=\sum_{\text{even~}\ell\in [L]} |T^\ell|\\
&=(L/2)\cdot N\\
&=L\cdot N/2.
\end{split}
\end{equation*}
This means that $\bigcup_{\ell\in [L], \ell>0} \tau^\ell(M^\ell)$ is a matching of size $(L-1)\cdot (1-O(1/K))\cdot N/2=(1-O(1/L)) |P|$, since $L\leq K$ by~\ref{p2}.
\end{proof}

\paragraph{Upper bounding size of matching constructed by a low space algorithm.}
The following sets of vertices are hard to match well, as we show below: 
\begin{equation}\label{eq:apm-def}
\begin{split}
A_P&=\bigcup_{\substack{\ell\in [L]\\ \ell \text{~even}}} \nu_{\ell, *}(T^\ell\setminus T_*^\ell)\\
A_Q&=\bigcup_{\substack{\ell\in [L]\\ \ell\text{~odd}}} \nu_{\ell, *}(T^\ell\setminus T_*^\ell).
\end{split}
\end{equation}
To show that $A_P$ and $A_Q$ are hard to match well, we show that the subset of edges of $G$ retained by a small space generalized online algorithm typically admits a small vertex cover that avoids $A_P$ and $A_Q$. The two sets below (and some other vertices that contribute lower order terms to the size of the vertex cover) will be included:
\begin{equation}\label{eq:bpm-def}
\begin{split}
B_Q&=\bigcup_{\substack{\ell\in [L]\\ \ell \text{~even}}} \tau_*(\mu_{\ell, *}(T^\ell\setminus T_*^\ell))\\
B_P&=\bigcup_{\substack{\ell\in [L]\\ \ell\text{~odd}}} \tau_*(\mu_{\ell, *}(T^\ell\setminus T_*^\ell)).
\end{split}
\end{equation}

We have 
\begin{claim}\label{cl:ab-disjoint}
$A_P\cap B_P=\emptyset$ and $A_Q\cap B_Q=\emptyset$.
\end{claim}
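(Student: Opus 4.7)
The plan is to exploit parity tracking of indices to see that $A_P, B_P \subseteq P$ and $A_Q, B_Q \subseteq Q$, and then to show that when restricted to each $T^{\ell'}$ on one side of the bipartition, both sets decompose into pieces of the uniform form $\nu_{\ell'+k, k}(T^{\ell'+k} \setminus T_*^{\ell'+k})$ indexed by disjoint parity classes of $k$. Disjointness will then follow immediately from the injectivity statement Lemma~\ref{lm:nu-prop}, {\bf (2)}. To set this up I first note that $\nu_{\ell, j}(U) \subseteq T^{\ell-j}$, $\mu_{\ell, j}(U) \subseteq S^{\ell-j}$, and that $\tau^{\ell-j}$ sends $S^{\ell-j}$ into $T_*^{\ell-j-1}$, shifting parity by one. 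Combined with the restriction of $j$ to even values in the closures $\nu_{\ell, *}$ and $\mu_{\ell, *}$, this confirms both containment statements.

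To handle $A_P \cap B_P$, fix an even $\ell' \in [L]$ and restrict both sets to $T^{\ell'}$. On the $A_P$ side the contributing terms are $\nu_{\ell, j}(T^\ell \setminus T_*^\ell)$ with $\ell$ even, $j$ even, and $\ell - j = \ell'$; writing $k = \ell - \ell'$ these are exactly $\nu_{\ell'+k, k}(T^{\ell'+k} \setminus T_*^{\ell'+k})$ for even $k$. On the $B_P$ side I use the telescoping identity $\tau^{\ell-j}(\mu_{\ell, j}(U)) = \nu_{\ell, j+1}(U)$, which is immediate from Definition~\ref{def:nu} together with the fact that $\tau_*$ equals $\tau^{\ell-j}$ on $S^{\ell-j}$ whenever $\ell - j > 0$. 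The contributing terms are $\tau_*(\mu_{\ell, j}(T^\ell \setminus T_*^\ell))$ with $\ell$ odd, $j$ even, and $\ell - j - 1 = \ell'$; after the rewriting and setting $k = \ell - \ell' = j+1$, these become $\nu_{\ell'+k, k}(T^{\ell'+k} \setminus T_*^{\ell'+k})$ for odd $k$.

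Applying Lemma~\ref{lm:nu-prop}, {\bf (2)}, the sets $\nu_{\ell'+k, k}(T^{\ell'+k} \setminus T_*^{\ell'+k})$ are pairwise disjoint as $k$ varies, so since the $A_P$ and $B_P$ decompositions use disjoint parity classes of $k$, their intersection on $T^{\ell'}$ is empty. One can separate the $k=0$ term even more simply by noting that it equals $T^{\ell'}\setminus T_*^{\ell'}$, while every $B_P$ contribution on $T^{\ell'}$ lies inside $T_*^{\ell'}$. Summing over even $\ell'$ yields $A_P \cap B_P = \emptyset$.

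The argument for $A_Q \cap B_Q$ mirrors the above, the only additional subtlety being that $B_Q$ contains an extra piece $\mu_{\ell, \ell}(T^\ell \setminus T_*^\ell) \subseteq S^0$ (arising from even $\ell$ with $j=\ell$, where $\tau_*$ acts as the identity on $S^0$). This piece is automatically disjoint from $A_Q \subseteq \bigcup_{\ell \text{ odd}} T^\ell$ since $S^0$ and the odd-indexed $T^\ell$ are distinct vertex sets in the construction of $Q$. The only real care required anywhere in the proof is the parity bookkeeping and the consistent application of the rewriting $\tau_* \circ \mu_{\ell, j} = \nu_{\ell, j+1}$; once this is in place Lemma~\ref{lm:nu-prop}, {\bf (2)} closes the argument.
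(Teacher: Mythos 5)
Your proposal is correct and follows essentially the same route as the paper's proof: rewrite $B_P$ (resp. $B_Q$) via the identity $\tau_*\circ\mu_{\ell,j}=\nu_{\ell,j+1}$ and then invoke Lemma~\ref{lm:nu-prop}, \textbf{(2)} on the disjoint parity classes of $(\ell,j)$ pairs. The only differences are presentational — you carry out the parity bookkeeping one level $\ell'$ at a time and explicitly flag the $\ell-j=0$ corner case for $B_Q$ where $\tau_*$ acts as the identity on $S^0$, both of which the paper leaves implicit in its ``disjoint ranges'' formulation — and those refinements are accurate.
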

\begin{proof}
We prove the first claim (the proof of the second is analogous).
One has by ~\eqref{eq:apm-def}
\begin{equation}\label{eq:9239ggdsgpffdf}
\begin{split}
A_P&=\bigcup_{\substack{\ell\in [L]\\ \ell \text{~even}}} \nu_{\ell, *}(T^\ell\setminus T_*^\ell)\\
&=\bigcup_{\substack{\ell\in [L]\\ \ell \text{~even}}} \bigcup_{\substack{j=0\\j\text{~even}}}^\ell \nu_{\ell, j}(T^\ell\setminus T_*^\ell)
\end{split}
\end{equation}
and by~\eqref{eq:bpm-def}
\begin{equation}\label{eq:9239ggdsgpffdfsf}
\begin{split}
B_P&=\bigcup_{\substack{\ell\in [L]\\ \ell\text{~odd}}} \tau_*(\mu_{\ell, *}(T^\ell\setminus T_*^\ell))\\
&=\bigcup_{\substack{\ell\in [L]\\ \ell\text{~odd}}} \bigcup_{\substack{j=0\\j\text{~even}}}^\ell\tau_*(\mu_{\ell, j}(T^\ell\setminus T_*^\ell))\\
&=\bigcup_{\substack{\ell\in [L]\\ \ell\text{~odd}}} \bigcup_{\substack{j=0\\j\text{~even}}}^\ell\tau^{\ell-j}(\mu_{\ell, j}(T^\ell\setminus T_*^\ell))\\
&=\bigcup_{\substack{\ell\in [L]\\ \ell\text{~odd}}} \bigcup_{\substack{j=0\\j\text{~even}}}^\ell\nu_{\ell, j+1}(T^\ell\setminus T_*^\ell),\\
\end{split}
\end{equation}
where we used the definition of $\tau_*$ (see~\eqref{eq:tau-star-def}) in the third transition and Definition~\ref{def:nu} in the forth transition. Disjointness now follows by Lemma~\ref{lm:nu-prop}, {\bf (2)}, since the range of $(\ell, j)$ pairs in~\eqref{eq:9239ggdsgpffdf} is disjoint from the range of $(\ell, j+1)$ pairs in~\eqref{eq:9239ggdsgpffdfsf}.
\end{proof}

Before exhibiting the vertex cover, we show that $A_P\cup B_P$ is almost all of $P$, and $A_Q\cup B_Q$ is almost all of $Q$:
\begin{lemma}[Almost partition of $P$ and $Q$]\label{lm:partition}
One has $|P\setminus (A_P\cup B_P)|=O(N)$ and  $|Q\setminus (A_Q\cup B_Q)|=O(N)$ for sets $A_P, A_Q, B_P, B_Q$ defined in~\eqref{eq:apm-def} and~\eqref{eq:bpm-def}.
\end{lemma}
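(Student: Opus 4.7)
The plan is to handle the $P$ and $Q$ sides in parallel, using Lemma~\ref{lm:t-star-ell-rec} as the main decomposition tool. For every $\ell$ contributing to $P$ (or $Q$), Lemma~\ref{lm:t-star-ell-rec} expresses $T^\ell$ as the union of pieces $\nu_{\ell+j, j}(T^{\ell+j}\setminus T_*^{\ell+j})$ for $j=0,\ldots, L-1-\ell$ plus a residual $\nu_{L-1, L-1-\ell}(T_*^{L-1})$. A parity argument will place each $\nu_{\ell+j,j}$ piece inside $A_P\cup B_P$ (resp.\ $A_Q\cup B_Q$), so only the residuals need to be bounded, together with an additional analysis for $S^0\subseteq Q$.

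For the $P$ side, fix an even $\ell\in [L]$. When $j$ is even, $\ell+j$ is even and $\nu_{\ell+j,j}(T^{\ell+j}\setminus T_*^{\ell+j})\subseteq A_P$ directly from~\eqref{eq:apm-def}. When $j$ is odd, set $\ell':=\ell+j$ (odd) and $j':=j-1$ (even); using the identity $\tau^{\ell'-j'}(\mu_{\ell',j'}(\cdot))=\nu_{\ell',j'+1}(\cdot)$ established in the proof of Claim~\ref{cl:ab-disjoint}, the piece equals $\nu_{\ell',j'+1}(T^{\ell'}\setminus T_*^{\ell'})$ and hence lies in $B_P$. Therefore $T^\ell\setminus (A_P\cup B_P)\subseteq \nu_{L-1,L-1-\ell}(T_*^{L-1})$. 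Since $T_*^{L-1}$ is a rectangle in $\Sp(\B^{\geq L-1})=\Sp(\B^{L-1})$ consistent with itself, Corollary~\ref{cor:nu-star-ub} yields $|\nu_{L-1,L-1-\ell}(T_*^{L-1})|\leq (\ln 2)^{L-1-\ell}\cdot N/2$, and summing the resulting geometric series (in $\ln 2 < 1$) over even $\ell\in [L]$ gives $|P\setminus (A_P\cup B_P)|=O(N)$.

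For the $Q$ side, the argument for $T^\ell$ with $\ell$ odd is identical with parities swapped and contributes $O(N)$. The extra ingredient is $S^0\subseteq Q$. Applying $\downset^0$ to Lemma~\ref{lm:t-star-ell-rec} at $\ell=0$ (and using $S^0=\downset^0(T^0)$) yields
$$S^0\subseteq \downset^0(\nu_{L-1,L-1}(T_*^{L-1}))\cup \bigcup_{j=0}^{L-1}\mu_{j,j}(T^j\setminus T_*^j).$$
For even $j$, $\mu_{j,j}(T^j\setminus T_*^j)\subseteq B_Q$ by taking $\ell=j$ in~\eqref{eq:bpm-def} and noting that $\tau_*$ is the identity on $S^0$ (see~\eqref{eq:tau-star-def}). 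The residual $\downset^0(\nu_{L-1,L-1}(T_*^{L-1}))$ is bounded by $(K/2)\cdot |\nu_{L-1,L-1}(T_*^{L-1})|\leq (K/2)(\ln 2)^{L-1}\cdot N/2=O(N)$ using the trivial estimate $|\downset^0(U)|\leq (K/2)|U|$ (a point in $U\subseteq T^0$ has at most $K/2$ preimages across $S^0_0,\ldots,S^0_{K/2-1}$) together with Corollary~\ref{cor:nu-star-ub}. For odd $j$, Lemma~\ref{lm:mu-ell-j} gives $|\mu_{j,j}(T^j\setminus T_*^j)|\leq (\ln 2+C/K)^j\cdot \frac{1}{2}(1-\ln 2)\cdot N$; for $K$ above an absolute constant the base is strictly less than $1$, so the geometric sum over odd $j$ is $O(N)$. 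The main bookkeeping challenge is carefully aligning parities between the decomposition index $j$ in Lemma~\ref{lm:t-star-ell-rec} and the even-$j$ restrictions in the definitions of $A_P,A_Q,B_P,B_Q$; no new tools beyond those already proved are required.
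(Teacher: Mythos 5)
Your proposal is correct and follows essentially the same route as the paper: decompose each $T^\ell$ via Lemma~\ref{lm:t-star-ell-rec}, place the pieces $\nu_{\ell+j,j}(T^{\ell+j}\setminus T_*^{\ell+j})$ into $A_P, B_P, A_Q, B_Q$ by a parity argument on $j$ (using the identity $\tau^{\ell'-j'}(\mu_{\ell',j'}(\cdot))=\nu_{\ell',j'+1}(\cdot)$), and bound the residual $\nu_{L-1,L-1-\ell}(T_*^{L-1})$ terms via Corollary~\ref{cor:nu-star-ub} and a geometric series. The only real difference is your treatment of $S^0$: you push the $\downset^0$ decomposition through Lemma~\ref{lm:t-star-ell-rec} and invoke Lemma~\ref{lm:mu-ell-j} plus the bound $|\downset^0(U)|\leq (K/2)|U|$ to show most of $S^0$ lies in $B_Q$; that is correct but unnecessary extra work, since $|S^0|=N/2$ by Lemma~\ref{lm:size-bounds}{\bf (2)} is already $O(N)$ and can be absorbed directly into the error term, which is what the paper does.
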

\begin{proof} 
Recall that by~\eqref{eq:p-def} and~\eqref{eq:q-def}  $P\cup Q=S^0\cup \bigcup_{\ell\geq 0} T^\ell$. We have by Lemma~\ref{lm:t-star-ell-rec}
\begin{equation*}
T_*^\ell=\nu_{L-1, L-1-\ell}(T_*^{L-1})\cup \bigcup_{j=1}^{L-1-\ell} \nu_{\ell+j, j}(T^{\ell+j}\setminus T_*^{\ell+j}).
\end{equation*}

Putting these two equalities together, and letting $D=\bigcup_{j=0}^{L-1} \nu_{L-1, L-1-\ell}(T_*^{L-1})$  to simplify notation, we get
\begin{equation}\label{eq:49h9g4g}
\begin{split}
P\cup Q&=S^0\cup \bigcup_{\ell\geq 0} T^\ell\\
&=S^0\cup D\cup \left( \bigcup_{\ell\geq 0} \bigcup_{\substack{j\geq 0}} \nu_{\ell+j, j}(T^{\ell+j}\setminus T_*^{\ell+j})\right)\\
&=S^0\cup D\cup \bigcup_{\ell=0}^{L-1} \bigcup_{\substack{j=0}}^\ell \nu_{\ell, j}(T^\ell\setminus T_*^\ell)\\
&=S^0\cup D\cup  \left(\bigcup_{\substack{\ell\geq 0\\ \ell\text{~even}}} \bigcup_{\substack{j=0}}^\ell \nu_{\ell, j}(T^\ell\setminus T_*^\ell)\right)\cup \left(\bigcup_{\substack{\ell\geq 0\\ \ell\text{~odd}}} \bigcup_{\substack{j=0}}^\ell \nu_{\ell, j}(T^\ell\setminus T_*^\ell)\right)\\
\end{split}
\end{equation}
Note that it follows from Corollary~\ref{cor:nu-star-ub} that $|D|=O(N)$. Indeed, 
\begin{equation*}
\begin{split}
|D|&=\left|\bigcup_{j=0}^{L-1} \nu_{L-1, L-1-\ell}(T_*^{L-1})\right|\\
&\leq \sum_{j\geq 0} |\nu_{\ell, j}(T_*^{L-1})|\\
&\leq \sum_{j\geq 0} (\ln 2)^j |T_*^{L-1}|\\
&=\frac1{1-\ln 2}|T_*^{L-1}|\\
&=\frac1{2(1-\ln 2)}N\\
&=O(N).
\end{split}
\end{equation*}
Thus, since $|S^0|=\sum_{k\in [K/2]} |S^0_k|=N/2$ by Lemma~\ref{lm:size-bounds}, {\bf (2)},  it suffices to show that the union of the third and forth terms above equals $A_P\cup A_Q\cup B_P\cup B_Q$. To that effect we note that for every $\ell=0,\ldots, L-1$ and $j=0,\ldots, \ell$
$$
\nu_{\ell, j+1}(T^\ell\setminus T_*^\ell)=\tau^{\ell-j}(\downset^{\ell-j}(\nu_{\ell, j}(T^\ell\setminus T_*^\ell))).
$$

This means that the third term on the last line of~\eqref{eq:49h9g4g} can be rewritten as
\begin{equation*}
\begin{split}
&\bigcup_{\substack{\ell\geq 0\\ \ell\text{~even}}} \bigcup_{\substack{j=0}}^\ell \nu_{\ell, j}(T^\ell\setminus T_*^\ell)\\
&= \bigcup_{\substack{\ell\geq 0\\ \ell\text{~even}}} \bigcup_{\substack{j=0\\j~\text{even}}}^\ell \left(\nu_{\ell, j}(T^\ell\setminus T_*^\ell)\cup \tau^{\ell-j}(\downset^{\ell-j}(\nu_{\ell, j}(T^\ell\setminus T_*^\ell)))\right)\\
&= \bigcup_{\substack{\ell\geq 0\\ \ell\text{~even}}} \bigcup_{\substack{j=0\\j~\text{even}}}^\ell \left(\nu_{\ell, j}(T^\ell\setminus T_*^\ell)\cup \tau_*(\downset^{\ell-j}(\nu_{\ell, j}(T^\ell\setminus T_*^\ell)))\right)\\
&= \left(\bigcup_{\substack{\ell\geq 0\\ \ell\text{~even}}} \nu_{\ell, *}(T^\ell\setminus T_*^\ell)\right)\cup \left(\bigcup_{\substack{\ell\geq 0\\ \ell\text{~even}}} \tau_*(\mu_{\ell, *}(T^\ell\setminus T_*^\ell))\right)\\
&=A_P\cup B_Q,\\
\end{split}
\end{equation*}
where $\tau_*$ is as defined in~\eqref{eq:tau-star-def}, and we let $\tau^0(U)=\emptyset$ for every $U$ for convenience to simplify notation.
Similarly, we get for the forth term on the last line of~\eqref{eq:49h9g4g}
\begin{equation*}
\begin{split}
&\bigcup_{\substack{\ell\geq 0\\ \ell\text{~odd}}} \bigcup_{\substack{j=0}}^\ell \nu_{\ell, j}(T^\ell\setminus T_*^\ell)\\
&= \left(\bigcup_{\substack{\ell\geq 0\\ \ell\text{~odd}}} \bigcup_{\substack{j=0\\j~\text{even}}}^\ell \nu_{\ell, j}(T^\ell\setminus T_*^\ell)\right)\cup \left(\bigcup_{\substack{\ell\geq 0\\ \ell\text{~odd}}} \bigcup_{\substack{j=0\\j~\text{even}}}^\ell \tau^{\ell-j}(\downset^{\ell-j}(\nu_{\ell, j}(T^\ell\setminus T_*^\ell)))\right)\\
&=\left(\bigcup_{\substack{\ell\geq 0\\ \ell\text{~odd}}} \nu_{\ell, *}(T^\ell\setminus T_*^\ell)\right)\cup \left(\bigcup_{\substack{\ell\geq 0\\ \ell\text{~odd}}} \tau_*(\mu_{\ell, *}(T^\ell\setminus T_*^\ell))\right)\\
&=\left(\bigcup_{\substack{\ell\geq 0\\ \ell\text{~odd}}} \nu_{\ell, *}(T^\ell\setminus T_*^\ell)\right)\cup \left(\bigcup_{\substack{\ell\geq 0\\ \ell\text{~odd}}} \tau_*(\mu_{\ell, *}(T^\ell\setminus T_*^\ell))\right)\\
&=A_Q\cup B_P,
\end{split}
\end{equation*}
as required. 
\end{proof}

The next lemma upper bounds the cardinality of $B_P$ and $B_Q$, which later leads to our upper bound on the size of the constructed vertex cover.
\begin{lemma}\label{lm:sizeof-ab}
One has 
\begin{equation*}
\begin{split}
|B_P|&\leq (1+O(1/L))\cdot \frac{L}{2}\cdot \frac{N}{2}\cdot\frac1{1+\ln 2}\\
&\text{and}\\
|B_Q|&\leq (1+O(1/L))\cdot \frac{L}{2}\cdot \frac{N}{2}\cdot\frac1{1+\ln 2}.
\end{split}
\end{equation*}
\end{lemma}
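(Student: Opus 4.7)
\bigskip

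\noindent\textbf{Proof plan.} The two bounds are symmetric, so I focus on $|B_P|$. The plan is to reduce the bound to a geometric sum controlled by Lemma~\ref{lm:mu-ell-j}. The first step is to strip off the outer $\tau_*$: since $\mu_{\ell, j}(T^\ell\setminus T_*^\ell)\subseteq S^{\ell-j}$ and $\tau_*$ restricted to $S^{\ell-j}$ is either $\tau^{\ell-j}$ (injective by Lemma~\ref{lm:tau-prop}) or the identity on $S^0$, the map $\tau_*$ is injective on each $\mu_{\ell, j}(T^\ell\setminus T_*^\ell)$. Combined with the union bound over odd $\ell$, this gives
\begin{equation*}
|B_P|\leq \sum_{\substack{\ell\in [L]\\ \ell \text{~odd}}} |\mu_{\ell, *}(T^\ell\setminus T_*^\ell)|.
\end{equation*}

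\noindent Next, I would use Lemma~\ref{lm:nu-prop}, {\bf (3)}, which ensures that for a fixed $\ell$ the sets $\mu_{\ell, j}(T^\ell\setminus T_*^\ell)$ are pairwise disjoint across $j$. This turns the union in Definition~\ref{def:nu} into an exact sum:
\begin{equation*}
|\mu_{\ell, *}(T^\ell\setminus T_*^\ell)|=\sum_{\substack{j=0\\ j \text{~even}}}^\ell |\mu_{\ell, j}(T^\ell\setminus T_*^\ell)|.
\end{equation*}
Applying the upper bound half of Lemma~\ref{lm:mu-ell-j} (with $|T^\ell|=N$) to each term yields
\begin{equation*}
|\mu_{\ell, *}(T^\ell\setminus T_*^\ell)|\leq \frac{(1-\ln 2)\,N}{2}\sum_{\substack{j=0\\ j \text{~even}}}^\ell (\ln 2+C/K)^j.
\end{equation*}

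\noindent The core computation is the geometric sum. Since $\ln 2<1$, for $K$ exceeding an absolute constant the ratio $(\ln 2+C/K)^2<1$ and
\begin{equation*}
\sum_{\substack{j=0\\ j \text{~even}}}^\ell (\ln 2+C/K)^j\leq \frac{1}{1-(\ln 2+C/K)^2}=\frac{1}{(1-\ln 2)(1+\ln 2)}\bigl(1+O(1/K)\bigr).
\end{equation*}
Substituting back, the factor $(1-\ln 2)/2$ cancels against one factor in the denominator, leaving
\begin{equation*}
|\mu_{\ell, *}(T^\ell\setminus T_*^\ell)|\leq \frac{N}{2}\cdot \frac{1}{1+\ln 2}\bigl(1+O(1/K)\bigr).
\end{equation*}

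\noindent Finally, I would sum over the $L/2$ odd values of $\ell\in [L]$ and invoke the parameter equality $L=K$ from~\ref{p2} to replace $O(1/K)$ by $O(1/L)$, obtaining
\begin{equation*}
|B_P|\leq \frac{L}{2}\cdot \frac{N}{2}\cdot \frac{1}{1+\ln 2}\bigl(1+O(1/L)\bigr),
\end{equation*}
and the bound on $|B_Q|$ follows by the same argument after replacing ``odd'' with ``even'' everywhere. I do not anticipate a substantive obstacle here: Lemma~\ref{lm:mu-ell-j} already did the hard work of tracking the $(\ln 2)^j$-type decay across rounds, and Lemma~\ref{lm:nu-prop}, {\bf (3)}, supplies the disjointness; the remaining work is just a geometric-series computation and a union bound, with the one mildly delicate point being the verification that injectivity of $\tau_*$ really reduces $|\tau_*(U)|$ to $|U|$ on the relevant sets.
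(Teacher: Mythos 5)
Your argument is correct and follows essentially the same route as the paper: strip the outer $\tau_*$, split the union over even $j$ into a sum, apply the upper bound from Lemma~\ref{lm:mu-ell-j}, and evaluate the resulting geometric series, using $1-(\ln 2)^2=(1-\ln 2)(1+\ln 2)$ and $L=K$ from~\ref{p2} to get the clean constant. One small simplification you can make: for the first step you only need $|\tau_*(U)|\le|U|$, which holds for any function, so the injectivity of $\tau_*$ (your ``mildly delicate point'') is not actually required for this upper bound; likewise the paper replaces your exact-disjointness equality with a plain union bound, which suffices here.
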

\begin{proof} We prove the bound for $B_Q$ (the bound for $B_P$ is analogous). Using~\eqref{eq:bpm-def} we get
\begin{equation*}
\begin{split}
|B_Q|&=\left|\bigcup_{\substack{\ell\in [L]\\ \ell \text{~even}}} \tau_*(\mu_{\ell, *}(T^\ell\setminus T_*^\ell))\right|\\
&\leq \left|\bigcup_{\substack{\ell\in [L]\\ \ell \text{~even}}} \mu_{\ell, *}(T^\ell\setminus T_*^\ell)\right|\\
&\leq \sum_{\substack{\ell\in [L]\\ \ell \text{~even}}} |\mu_{\ell, *}(T^\ell\setminus T_*^\ell)|,\\
\end{split}
\end{equation*}
so it suffices to upper bound the summands above. For every $\ell \in [L]$ by Definition~\ref{def:nu}
\begin{equation}\label{eq:923yt8hihgfygFYGSFG}
\left|\mu_{\ell, *}(T^\ell\setminus T_*^\ell)\right|=\left|\bigcup_{\substack{0\leq j\leq \ell\\j\text~{even}}} \mu_{\ell, j}(T^\ell\setminus T_*^\ell)\right|\leq \bigcup_{\substack{0\leq j \leq \ell\\j\text~{even}}} \left|\mu_{\ell, j}(T^\ell\setminus T_*^\ell)\right|
\end{equation}
and by Lemma~\ref{lm:mu-ell-j} we have for an absolute constant $C>0$
$$
|\mu_{\ell, j}(T^\ell\setminus T_*^\ell)|\leq  \frac1{2}(\ln 2+C/K)^j (1-\ln 2)|T^\ell|.
$$
Summing over all even $j$ as per~\eqref{eq:923yt8hihgfygFYGSFG}, we get 
\begin{equation*}
\begin{split}
\left|\mu_{\ell, *}(T^\ell\setminus T_*^\ell)\right|&\leq \sum_{\substack{0\leq j\leq \ell\\j\text~{even}}} \frac1{2}(\ln 2+C/K)^j (1-\ln 2)|T^\ell|\\
&\leq \frac1{2}(1-\ln 2)|T^\ell| \sum_{\substack{j\geq 0}} (\ln 2+C/K)^{2j}\\
&=\frac1{2}(1-\ln 2)|T^\ell| \frac1{1-(\ln 2+C/K)^2}\\
&\leq (1+O(1/K))\frac1{2}(1-\ln 2)|T^\ell| \frac1{1-(\ln 2)^2}\\
&= (1+O(1/K))\frac1{2}\frac1{1+\ln 2}|T^\ell|.
\end{split}
\end{equation*}

Summing the above over all even $\ell\in [L]$ and recalling that $|T^\ell|=N$ and using the fact that $L\leq K$ by~\ref{p2} gives the required bound.
\end{proof}

\begin{lemma}\label{lm:vertex-cover}
For every matching $M$ in $G$ one has 
$$
|M|\leq |M\cap (A_P\times (Q\setminus B_Q))|+\frac1{1+\ln 2}|P|+O(|P|/L).
$$
\end{lemma}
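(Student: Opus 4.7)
The plan is to exhibit an explicit vertex cover $C$ for $M$ whose size matches the right-hand side, and then invoke the trivial bound $|M|\leq |C|$ (which holds for any matching and any vertex cover, bipartite or not). Specifically, I would take
\[
C = X \cup (P\setminus A_P) \cup B_Q,
\]
where $X$ is any set formed by selecting one endpoint of each edge in $M\cap (A_P\times (Q\setminus B_Q))$; clearly $|X|\leq |M\cap (A_P\times (Q\setminus B_Q))|$.

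The first step is to verify that $C$ really is a vertex cover of $M$. Consider any edge $(p,q)\in M$ with $p\in P$ and $q\in Q$. If $p\notin A_P$, then $p\in P\setminus A_P\subseteq C$. Otherwise $p\in A_P$, and by Claim~\ref{cl:ab-disjoint} we have $A_P\cap B_P=\emptyset$, so $p\notin B_P$. If $q\in B_Q$, then $q\in C$; if $q\notin B_Q$, then $(p,q)\in A_P\times (Q\setminus B_Q)$, so by construction of $X$ at least one of its endpoints lies in $X\subseteq C$. In all cases some endpoint of the edge is in $C$.

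The second step is to bound $|C|$. Clearly
\[
|C|\leq |M\cap(A_P\times (Q\setminus B_Q))| + |P\setminus A_P| + |B_Q|.
\]
By Lemma~\ref{lm:partition} we have $|P\setminus (A_P\cup B_P)|=O(N)$, and combined with Claim~\ref{cl:ab-disjoint} this gives $|P\setminus A_P|\leq |B_P|+O(N)$. By Lemma~\ref{lm:sizeof-ab},
\[
|B_P|,\,|B_Q|\leq (1+O(1/L))\cdot\frac{L}{2}\cdot\frac{N}{2}\cdot\frac{1}{1+\ln 2}.
\]
Therefore $|B_P|+|B_Q|\leq (1+O(1/L))\cdot \frac{LN/2}{1+\ln 2}=(1+O(1/L))\cdot \frac{|P|}{1+\ln 2}$, where in the last step we use $|P|=L\cdot N/2$ (this equality was already derived in the proof of Lemma~\ref{lm:large-matching-ghat}). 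The $O(N)$ slack coming from Lemma~\ref{lm:partition} is absorbed into the $O(|P|/L)$ term since $N=2|P|/L$. Putting everything together yields
\[
|M|\leq |C|\leq |M\cap(A_P\times (Q\setminus B_Q))|+\frac{1}{1+\ln 2}|P|+O(|P|/L),
\]
which is the claimed bound.

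The argument is essentially bookkeeping on top of the substantive lemmas already proved: the combinatorial content is entirely in Claim~\ref{cl:ab-disjoint} (disjointness of the $A$'s and $B$'s), Lemma~\ref{lm:partition} (near-partition property), and Lemma~\ref{lm:sizeof-ab} (size bound on $B_P,B_Q$). There is no real obstacle here; the only thing to be a little careful about is the accounting of the $O(\cdot)$ error terms, making sure that the $O(N)$ from the near-partition and the $O(N/K)$ slack from the $(1+O(1/L))$ factor both collapse into $O(|P|/L)$ under the parameter setting $L=K$ from property~\ref{p2}.
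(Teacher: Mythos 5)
Your proposal is correct and follows essentially the same route as the paper's proof: you build the identical vertex cover (one endpoint per edge of $M\cap(A_P\times(Q\setminus B_Q))$, together with $P\setminus A_P$ and $B_Q$), check it covers $M$ via the same case split, and then bound its size using Lemma~\ref{lm:partition}, Lemma~\ref{lm:sizeof-ab}, and $|P|=LN/2$. The only cosmetic remark is that the observation $p\notin B_P$ (from Claim~\ref{cl:ab-disjoint}) is stated but not actually used in your covering argument, and the parenthetical ``$O(N/K)$'' should read $O(N)$; neither affects correctness since $O(N)=O(|P|/L)$ absorbs everything.
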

\begin{proof}
We exhibit a vertex cover of appropriate size for $M$. Specifically, we add to the vertex cover one endpoint of every edge in
$$
M\cap (A_P\times (Q\setminus B_Q)),
$$
as well as all vertices in $P\setminus A_P\approx B_P$ and $B_Q$. Note that this is indeed a vertex cover: $A_P\cap B_P=\emptyset$ and $A_Q\cap B_Q=\emptyset$ by Claim~\ref{cl:ab-disjoint}, so every edge of $M$ either has an endpoint in $P\setminus A_P$, or belongs to $A_P\times (Q\setminus B_Q)$, or belongs to $A_P\times B_Q$, in which case it has an endpoint in $B_Q$.

The size of the vertex cover is 
\begin{equation}\label{eq:23t77Gy8uGBBYVF}
\begin{split}
&|M\cap (A_P\times (Q\setminus B_Q))|+|P\setminus A_P|+|B_Q|\\
\leq &|M\cap (A_P\times (Q\setminus B_Q))|+|B_P|+|B_Q|+O(N),
\end{split}
\end{equation}
where we used Lemma~\ref{lm:partition} to conclude that 
$$
|P\setminus A_P|\leq |B_P|+|P\setminus (A_P\cup B_P)|=|B_P|+O(N).
$$ 

By Lemma~\ref{lm:sizeof-ab} we have 
\begin{equation*}
\begin{split}
|B_P|&\leq \frac{L}{2}\cdot \frac{N}{2}\frac1{1+\ln 2}(1+O(1/L))\\
&\text{and}\\
|B_Q|&\leq \frac{L}{2}\cdot \frac{N}{2}\frac1{1+\ln 2}(1+O(1/L)).
\end{split}
\end{equation*}
Putting the above together with~\eqref{eq:23t77Gy8uGBBYVF} and recalling that by~\eqref{eq:p-def}
$$
|P|=\left|\bigcup_{\text{even~}\ell\in [L]} T^\ell\right|=L\cdot N/2
$$
gives the result.
\end{proof}

We now prove 
\begin{lemma}\label{lm:special-edges}
For every matching $M\subseteq \wh{E}$ one has 
$$
M\cap (A_P\times (Q\setminus B_Q))\subseteq \bigcup_{\ell\in [L], k\in [K/2]} \tau^\ell(E^\ell_{k, J^\ell_k}).
$$
\end{lemma}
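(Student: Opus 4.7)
The plan is to prove the contrapositive: for any edge $e\in M$ arising as $e=\tau_*((u,v))$ with $(u,v)\in E^{\ell_0}_{k,j}$ and $j\in\ac{\B}^{\ell_0}_k\setminus\{J^{\ell_0}_k\}$, we show $e\notin A_P\times(Q\setminus B_Q)$. The first step is coordinate bookkeeping: the assumption $j\ne J^{\ell_0}_k$ forces $j\notin\Gamma$, where $\Gamma$ is the basic coordinate set of Definition~\ref{def:basic-coordinates}. Indeed, $j\in\B^{\ell_0}_k$ and pairwise disjointness of the blocks $\B^\ell_{k'}$ restricts the $\Gamma$-candidates inside $\B^{\ell_0}_k$ to $\Sp(\B^{\ell_0})\cup\Ext^{\ell_0}_k\cup\{q^{\ell_0}_k\}$; the definition of $\ac{\B}^{\ell_0}_k$ excludes the last two sets, and the constraint $k<K/2$ together with the per-phase block structure of $J^{\ell_0}$ reduces the remainder of $\Sp(\B^{\ell_0})$ to $\{J^{\ell_0}_k\}$. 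Since the endpoints of every edge in $E^{\ell_0}_{k,j}$ lie on a common line in direction $j$ and hence agree on $[n]\setminus\{j\}$, this yields $u_\Gamma=v_\Gamma$, which is precisely the hypothesis needed for the structural lemmas of Section~\ref{sec:cut-structure}.

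The parity of $\ell_0$ determines the orientation of the edge in $P\times Q$. When $\ell_0$ is even, $p=v\in T^{\ell_0}\subseteq P$ and $q=\tau_*(u)\in Q$; from $p\in A_P$ one extracts even $\ell',j'$ with $\ell'-j'=\ell_0$ and $v\in\nu_{\ell',j'}(T^{\ell'}\setminus T_*^{\ell'})$. Then Corollary~\ref{cor:cut-structure} applied with $x=v$ and $y=u$ (permissible because $u_\Gamma=v_\Gamma$) yields $u\in\mu_{\ell',j'}(T^{\ell'}\setminus T_*^{\ell'})$, so $q=\tau_*(u)\in\tau_*(\mu_{\ell',*}(T^{\ell'}\setminus T_*^{\ell'}))\subseteq B_Q$, contradicting $q\in Q\setminus B_Q$.

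The odd case is the main obstacle, because $p=\tau^{\ell_0}(u)$ now sits inside $T_*^{\ell_0-1}$ and the predecessor witness for $p\in A_P$ must be unwound by one $\tau$-step before Lemma~\ref{lm:cut-structure} can be applied, and then re-wound afterwards so as to land in a summand of $B_Q$ with the correct parity. Pick even $\ell',j'$ with $\ell'-j'=\ell_0-1$ and $p\in\nu_{\ell',j'}(T^{\ell'}\setminus T_*^{\ell'})$; the subcase $j'=0$ is impossible because $\nu_{\ell_0-1,0}(T^{\ell_0-1}\setminus T_*^{\ell_0-1})=T^{\ell_0-1}\setminus T_*^{\ell_0-1}$ is disjoint from $T_*^{\ell_0-1}\ni p$, so $j'\ge 2$. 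Unfolding Definition~\ref{def:nu} once, together with injectivity of $\tau^{\ell_0}$ from Lemma~\ref{lm:tau-prop}, pulls the witness down to $u\in\mu_{\ell',j'-1}(T^{\ell'}\setminus T_*^{\ell'})$; a label-preserving lift produces $w\in\nu_{\ell',j'-1}(T^{\ell'}\setminus T_*^{\ell'})\subseteq T^{\ell_0}$ with $w\delequal u$, hence $w_\Gamma=u_\Gamma=v_\Gamma$, and Lemma~\ref{lm:cut-structure} transfers membership laterally to $v\in\nu_{\ell',j'-1}(T^{\ell'}\setminus T_*^{\ell'})$. A second unfolding of Definition~\ref{def:nu} rewrites this as $v\in\tau_*(\mu_{\ell',j'-2}(T^{\ell'}\setminus T_*^{\ell'}))\subseteq B_Q$; the parity bookkeeping works out precisely because $j'-2\ge 0$ is still even, so the term sits inside $\tau_*(\mu_{\ell',*}(T^{\ell'}\setminus T_*^{\ell'}))\subseteq B_Q$. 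The boundary subcase $\ell_0=L-1$ is vacuous since the range and parity constraints leave no admissible $(\ell',j')$, so $p\in A_P$ already fails.
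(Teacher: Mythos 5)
Your proposal is correct and takes essentially the same approach as the paper's proof: the contrapositive framing is cosmetic, and the core steps match. Specifically, the paper also (i) argues that for the underlying gadget edge in direction $j\in\ac{\B}^{\ell_0}_k$, the only element of $\Gamma$ in $\ac{\B}^{\ell_0}_k$ is $J^{\ell_0}_k$, so $j\ne J^{\ell_0}_k$ forces the endpoints to agree on $\Gamma$; (ii) splits on the parity of the gadget index (the paper's Case~1 is your even $\ell_0$, Case~2 is your odd $\ell_0$); (iii) applies Corollary~\ref{cor:cut-structure} directly in the even case and Lemma~\ref{lm:cut-structure} after unwinding one application of $\tau$ in the odd case, rewinding afterward to land in the right parity class of $B_Q$. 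One small improvement worth noting: your explicit observation that $j'=0$ is impossible (because $p$ lies in a terminal subcube, disjoint from $\nu_{\ell_0-1,0}(T^{\ell_0-1}\setminus T_*^{\ell_0-1})$), and consequently that $\ell_0=L-1$ yields no admissible witness, cleanly removes the need for the paper's footnoted device of ``artificially adding a gadget'' $\tau^L$; the boundary case is simply vacuous, exactly as you say.
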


\begin{proofof}{Lemma~\ref{lm:special-edges}} Consider an edge $(u, v)\in\wh{E}$ such that
$u\in A_P, v\in Q\setminus B_Q$. Let $\ell\in [L]$ be an even integer such that  $u\in T^\ell$. Such an $\ell$ exists because by~\eqref{eq:apm-def} one has
\begin{equation*}
A_P=\bigcup_{\substack{\ell\in [L]\\ \ell \text{~even}}} \nu_{\ell, *}(T^\ell\setminus T_*^\ell)\\
\end{equation*}
and by Definition~\ref{def:nu} one has 
$$
\nu_{\ell, *}(T^\ell\setminus T_*^\ell):=\bigcup_{\substack{i=0\\i \text{~even}}}^\ell \nu_{\ell, i}(T^\ell\setminus T_*^\ell),
$$
so that 
$$
\nu_{\ell, *}(T^\ell\setminus T_*^\ell)\subseteq \bigcup_{\text{even~}\ell\in [L]} T_*^\ell.
$$
To summarize, we have
\begin{equation}\label{eq:u-apm}
u\in \nu_{\ell+i, i}(T^{\ell+i}\setminus T_*^{\ell+i})
\end{equation}
for a unique choice of even
 $\ell\in [L]$ and even $i$ (uniqueness follows by Lemma~\ref{lm:nu-prop}, {\bf (2)}). We now consider two cases: depending on whether $v\in T^{\ell-1}$ ({\bf case 1}) or $v\in T^{\ell+1}$ ({\bf case 2}).

\paragraph{\bf Case 1.} In this case there exists a unique $y\in S^\ell$ such that $\tau^\ell(y)=v$. Indeed, otherwise the edge $(u, v)$ would not be in the graph $\wh{G}$ as per~\eqref{eq:g-hat-edges}.  Let $x=u$ for convenience. We now show using Corollary~\ref{cor:cut-structure}  that 
$$
y\in \mu_{\ell+i, i}(T^{\ell+i}\setminus T_*^{\ell+i}),
$$
which implies, by~\eqref{eq:bpm-def} together with the definition of $\mu_{\ell, *}$ (Definition~\ref{def:nu}), that $v=\tau^\ell(y)\in B_Q$. We first verify that preconditions of Corollary~\ref{cor:cut-structure} are satisfied. Let $k\in [K/2]$ be the unique index such that both $x\in T^\ell_k$ and $y\in S^\ell_k$ (uniqueness follows since $(x, y)\in E^\ell_k$ due to $(u, v)\in \wh{E}$, and the edge sets in~\eqref{eq:edges-ei-def} are disjoint by Lemma~\ref{lm:disjoint}). We have $(u, v)\in \wh{E}$ by assumption, which means that $(x, y)\in E^\ell_k$, and therefore $y_i=x_i$ for all $i\in [n], i\neq j$ for some $j\in \ac{\B}^\ell_k$ by~\eqref{eq:line-def} and~\eqref{eq:edges-ei-def}. We assume towards a contradiction that $j\neq J^\ell_k$. Since 
$$
\ac{\B}^\ell_k\cap \Gamma=\{J^\ell_k\},
$$
we thus get that $x_\Gamma=y_\Gamma$, and preconditions of Corollary~\ref{cor:cut-structure} are indeed satisfied. We thus get that
$x\in \nu_{\ell+i, i}(T^{\ell+i}\setminus T_*^{\ell+i})$,
implies
$y\in \mu_{\ell+i, i}(T^{\ell+i}\setminus T_*^{\ell+i}).$ 
At the same time by Definition~\ref{def:nu} for every $\ell\in [L]$ 
$$
\mu_{\ell, *}(T^\ell\setminus T_*^\ell):=\bigcup_{\substack{i=0\\i \text{~even}}}^\ell \mu_{\ell, i}(T^\ell\setminus T_*^\ell),
$$
which means that $y\in \mu_{\ell, *}(T^\ell\setminus T_*^\ell)$ (recall that $i$ is even) and thus $v=\tau^\ell(y)\in \tau_*(\mu_{\ell, *}(T^\ell\setminus T_*^\ell)) \subseteq B_Q$, as required.

\paragraph{Case 2.} In this case there exists a unique $x'\in S^{\ell+1}$ such that $\tau^{\ell+1}(x')=u$. Indeed, otherwise the edge $(u, v)$ would not be in the graph $\wh{G}$ as per~\eqref{eq:g-hat-edges}; uniqueness follows by injectivity of $\tau^\ell$ (by Lemma~\ref{lm:tau-prop}).  Let $y=v$. Let $k\in [K/2]$ be the unique index such that both $x'\in S^{\ell+1}_k$ and $y\in T^{\ell+1}_k$ (uniqueness follows since $(x', y)\in E^{\ell+1}_k$ due to $(u, v)\in \wh{E}$, and the edge sets in~\eqref{eq:edges-ei-def} are disjoint by Lemma~\ref{lm:disjoint}).  Let $x\in T^{\ell+1}$ be such that $x\delequal x'$ -- such a vertex exists by definition of $S^{\ell+1}$ -- see~\eqref{eq:def-s}.

We have $(u, v)\in \wh{E}$ by assumption, which means that $(x, y)\in E^{\ell+1}$, and therefore $y_i=x_i$ for all $i\in [n], i\neq j$ for some $j\in \ac{\B}^{\ell+1}_k$ by~\eqref{eq:line-def} and~\eqref{eq:edges-ei-def}.  We assume towards a contradiction that $j\neq J^{\ell+1}_k$. Since 
$$
\ac{\B}^{\ell+1}_k\cap \Gamma=\{J^{\ell+1}_k\},
$$
we thus get that $x_\Gamma=y_\Gamma$, and we can apply Lemma~\ref{lm:cut-structure} to $x$ and $y$. By~\eqref{eq:u-apm} we have
\begin{equation*}
\begin{split}
u&=\tau^{\ell+1}(x')\\
&\in \nu_{\ell+i, i}(T^{\ell+i}\setminus T_*^{\ell+i})\\
&\subseteq \tau^{\ell+1}\left(\mu_{\ell+i, i-1}(T^{\ell+i}\setminus T_*^{\ell+i})\right),
\end{split}
\end{equation*}
and therefore 
$$
x'\in \mu_{\ell+i, i-1}(T^{\ell+i}\setminus T_*^{\ell+i}).
$$
Since 
$$
\mu_{\ell+i, i-1}(T^{\ell+i}\setminus T_*^{\ell+i})=\downset^{\ell+1}(\nu_{\ell+i, i-1}(T^{\ell+i}\setminus T_*^{\ell+i})),
$$
we have 
\begin{equation*}
\begin{split}
x&\in \nu_{\ell+i, i-1}(T^{\ell+i}\setminus T_*^{\ell+i})\\
&= \nu_{(\ell+1)+(i-1), i-1}(T^{(\ell+1)+(i-1)}\setminus T_*^{(\ell+1)+(i-1)})\\
\end{split}
\end{equation*}
By Lemma~\ref{lm:cut-structure}\footnote{Note that we are applying the lemma with $\ell+1$ as opposed to $\ell$ here, since $x, y\in T^{\ell+1}$.} we thus have \footnote{When $\ell+1=L-1$, we have $\ell+2=L$, which does not technically correspond to a gadget in our input graph. However, we think of artifically adding such a gadget here to handle this corner case for simplicity.}
\begin{equation*}
\begin{split}
y&\in \nu_{(\ell+1)+(i-1), i-1}(T^{(\ell+1)+(i-1)}\setminus T_*^{(\ell+1)+(i-1)})\\
&=\tau^{\ell+2}(\mu_{\ell+i, i-2}(T^{\ell+i}\setminus T_*^{\ell+i})).
\end{split}
\end{equation*}
At the same time by Definition~\ref{def:nu} for every $\ell\in [L]$ 
$$
\mu_{\ell, *}(T^\ell\setminus T_*^\ell):=\bigcup_{\substack{j=0\\j \text{~even}}}^\ell \mu_{\ell, j}(T^\ell\setminus T_*^\ell),
$$
which means that $y=v\in B_Q$, as required.
\end{proofof}

\begin{definition}[Ordering on $(\ell, k)$ pairs]
We write $(\ell', k')<(\ell, k)$ iff $\ell'<\ell$ or $\ell'=\ell$ but $k'<k$. We write $(\ell', k')\leq (\ell, k)$ iff $\ell'<\ell$ or $\ell'=\ell$ but $k'<k$. 
\end{definition}

\begin{definition}
For $\ell\in [L]$ and $k\in [K/2]$ we write 
$$
\wh{G}_{(\ell, k)}=(P, Q, \wh{E}^\ell_k),
$$ 
and write 
$$
\wh{G}_{\leq (\ell, k)}=\left(P, Q, \bigcup_{\substack{\ell'\in [L], k'\in [K/2]\\ (\ell', k')\leq (\ell, k)}} \wh{E}^{\ell'}_{k'}\right).
$$
\end{definition}

\begin{definition}
For every $\ell\in [L], k\in [K/2+1]$ define $\Lambda_{\ell, k}:=(J^\ell_k)$. We write $\Lambda_{<(\ell, k)}=\left(\Lambda_{\ell', k'}\right)_{(\ell', k')<(\ell, k)}$.
\end{definition}
Note that $\wh{G}_{\leq (\ell, k)}$ is fully determined by $\Lambda_{<(\ell, k)}$. Here it is important to note that the restriction of the map $\tau^\ell$ onto $S^\ell_{\leq k}$ is indeed determined by $\Lambda_{<(\ell, k)}$ -- see Remark~\ref{rm:tau-prefix}.

We prove 
\begin{theorem}\label{thm:main-simple-prime}
For any sufficiently large constant $K$, any generalized online algorithm ALG with space budget $s=o(|P|\log |P|)$ cannot output a matching $M_{ALG}$ satisfying 
$$
|M_{ALG}|\geq \left(\frac1{1+\ln 2}+O(1/K)\right) |M_{OPT}|
$$
with probability more than $1/10$.
\end{theorem}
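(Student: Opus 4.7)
The plan is to combine the vertex-cover bound of Lemma~\ref{lm:vertex-cover}, the structural restriction of Lemma~\ref{lm:special-edges}, and a phase-by-phase averaging argument over $J^\ell_k$, exactly analogous to the calculation in~\eqref{eq:293ht92hg} in the single-gadget warm-up. Let $E' \subseteq \wh{E}$ denote the set of edges that ALG remembers, so $|E'|\le s$ and $M_{ALG}\subseteq E'$. Lemma~\ref{lm:vertex-cover} applied to $M_{ALG}$ gives
\[
|M_{ALG}|\ \le\ |M_{ALG}\cap (A_P\times (Q\setminus B_Q))|\ +\ \tfrac{1}{1+\ln 2}\,|P|\ +\ O(|P|/L),
\]
and Lemma~\ref{lm:special-edges} in turn gives $M_{ALG}\cap (A_P\times(Q\setminus B_Q))\ \subseteq\ \bigcup_{\ell,k}\tau^\ell(E^\ell_{k,J^\ell_k})$. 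Hence it suffices to show that the expected number of retained edges of the form $\tau^\ell(E^\ell_{k,J^\ell_k})$ is $o(|P|)$.

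The key fact is that the set of edges presented in phase $(\ell,k)$, namely $\tau^\ell(E^\ell_k)=\bigcup_{j\in\ac{\B}^\ell_k}\tau^\ell(E^\ell_{k,j})$, depends only on $\Lambda_{<(\ell,k)}$: $E^\ell_k$ is determined by $J^\ell_{<k}$ (see the comment after~\eqref{eq:edges-ei-def}), and the restriction of $\tau^\ell$ to $S^\ell_k$ depends only on $J^{\ell-1}$ together with $J^\ell_{<k}\cup\Ext^\ell_k\cup\{q^\ell_k\}$ by Remark~\ref{rm:tau-prefix}. Consequently the subset $E'_{\ell,k}\subseteq E'$ of phase-$(\ell,k)$ edges retained by ALG is, conditional on $\Lambda_{<(\ell,k)}$ and ALG's internal coins, a deterministic function that is independent of $J^\ell_k$ (which is uniform on $\ac{\B}^\ell_k$). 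Since the sets $E^\ell_{k,j}$ are disjoint for distinct $j\in\ac{\B}^\ell_k$ by Lemma~\ref{lm:disjoint} and $\tau^\ell$ is injective by Lemma~\ref{lm:tau-prop},
\[
\expect_{J^\ell_k}\!\left[\,|E'\cap\tau^\ell(E^\ell_{k,J^\ell_k})|\,\big|\,\Lambda_{<(\ell,k)}\right]\ =\ \frac{1}{|\ac{\B}^\ell_k|}\sum_{j\in\ac{\B}^\ell_k}|E'_{\ell,k}\cap\tau^\ell(E^\ell_{k,j})|\ \le\ \frac{s}{|\ac{\B}^\ell_k|}.
\]

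Since $|P|=L\cdot m^n/2$ and $m,L,K$ are constants with $L=K$ by~\ref{p2}, we have $n=\Theta(\log|P|)$ and $|\ac{\B}^\ell_k|=\Theta(n/(LK))=\Omega_{L,K}(\log|P|)$. Summing over the $L\cdot K/2$ phases and taking total expectation,
\[
\expect\!\left[|M_{ALG}\cap(A_P\times(Q\setminus B_Q))|\right]\ \le\ \sum_{\ell,k}\frac{s}{|\ac{\B}^\ell_k|}\ =\ O_{L,K}\!\left(\frac{s}{\log|P|}\right)\ =\ o(|P|),
\]
using the hypothesis $s=o(|P|\log|P|)$. By Markov's inequality, with probability at least $9/10$ the same quantity is $o(|P|)$, and combined with Lemma~\ref{lm:vertex-cover} this yields $|M_{ALG}|\le \bigl(\tfrac{1}{1+\ln 2}+O(1/K)\bigr)|P|$ in that event. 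Lemma~\ref{lm:large-matching-ghat} gives $|M_{OPT}|\ge (1-O(1/L))|P|$, so dividing and using $L=K$ completes the proof.

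The only subtlety — and the step most worth double-checking — is the independence claim: one must verify that ALG's state just after phase $(\ell,k)$ is a function of $\Lambda_{<(\ell,k)}$ and ALG's private coins alone, so that the averaging over $J^\ell_k$ truly acts only on the selector $j\in\ac{\B}^\ell_k$ inside $\tau^\ell(E^\ell_{k,j})$. Once this independence is in hand, the rest is essentially the one-gadget calculation of~\eqref{eq:293ht92hg} iterated over all $L\cdot K/2$ phases, together with Lemma~\ref{lm:special-edges}, which guarantees that the only edges contributing to $M_{ALG}\cap(A_P\times(Q\setminus B_Q))$ are precisely those in the randomly chosen $\tau^\ell(E^\ell_{k,J^\ell_k})$.
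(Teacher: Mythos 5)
Your proposal is correct and follows essentially the same route as the paper's proof: apply Lemma~\ref{lm:vertex-cover} to reduce to bounding $|M_{ALG}\cap(A_P\times(Q\setminus B_Q))|$, apply Lemma~\ref{lm:special-edges} to restrict to the special edge sets $\tau^\ell(E^\ell_{k,J^\ell_k})$, then do the phase-by-phase conditional averaging over $J^\ell_k\sim\text{UNIF}(\ac{\B}^\ell_k)$ using disjointness (Lemma~\ref{lm:disjoint}) and injectivity of $\tau^\ell$, sum over the $O_{L,K}(1)$ phases, invoke Markov, and finish with Lemma~\ref{lm:large-matching-ghat}. The one cosmetic difference is that the paper first invokes Yao's minimax principle to make ALG deterministic, whereas you condition on the private coins directly — both handle the independence subtlety you flag, which is indeed the crux and which the paper resolves via Remark~\ref{rm:tau-prefix} exactly as you observe.
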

\begin{proof} 
Since we are evaluating the performance of the algorithm with respect to a distribution, by Yao's minimax principle we may assume that ALG is deterministic. 

We have by Lemma~\ref{lm:vertex-cover} that the size of the maximum matching $M_{ALG}$ in $G$ is upper bounded by  
\begin{equation}\label{eq:m-alg-bound}
|M_{ALG}\cap A_P\times (Q\times B_Q)|+\left(\frac1{1+\ln 2}+O(1/K)\right)|P|,
\end{equation}
where we used the fact that $L=K$ as per~\ref{p2}.

Recall that in every round $\ell\in [L]$ and every phase $k\in [K/2]$ of round $\ell$ the algorithm is presented with edges in 
$$
\wh{E}^\ell_k=\bigcup_{j\in \ac{\B}^\ell_k} \wh{E}^\ell_{k, j},
$$
as per~\eqref{eq:g-hat-edges} and~\eqref{eq:g-hat-edges-ell}.  Let $\text{ALG}^\ell_k\subseteq \wh{E}^\ell_k$  denote the subset of $\wh{E}^\ell_k$ remembered by ALG (recall the definition of the generalized online model -- see Definition~\ref{def:gen-online-intro}). Note that since we are assuming that ALG is deterministic, the set $\text{ALG}^\ell_k$ is fully determined by $\Lambda_{<(\ell, k)}$ (which determines $\wh{G}_{\leq (\ell, k)}$). At the same time, recall that conditioned on $\Lambda_{<(\ell, k)}$, the index $J^\ell_k$ is uniformly random in $\ac{\B}^\ell_k$:
$$
J_k^\ell\sim UNIF(\ac{\B}^\ell_k).
$$
Thus, one has, for any $\Lambda_{<(\ell, k)}$,  
\begin{equation}\label{eq:expect-lk}
\begin{split}
\expect_{\wh{G}\sim \mathcal{D}}\left[|\text{ALG}^\ell_k\cap \wh{E}^\ell_{k, J^\ell_k}| |\Lambda_{<(\ell, k)}\right]&=\sum_{j\in \ac{\B}^\ell_k} |\text{ALG}^\ell_k\cap \wh{E}^\ell_{k, j}|\cdot \prob[J^\ell_k=j|\Lambda_{<(\ell, k)}]\\
&=\frac1{|\ac{\B}^\ell_k|}\sum_{j\in \ac{\B}^\ell_k} |\text{ALG}^\ell_k\cap \wh{E}^\ell_{k, j}|\\
&=\frac1{|\ac{\B}^\ell_k|}|\text{ALG}^\ell_k|\\
&\leq \frac1{n/(2KL)}|\text{ALG}^\ell_k|\\
&\leq \frac1{n/(2KL)}s\\
\end{split}
\end{equation}
In the third transition we used the fact that $E^\ell_{k, j}$ are disjoint for different $j\in \ac{\B}^\ell_k$ by Lemma~\ref{lm:disjoint}, and therefore $\wh{E}^\ell_{k, j}$ are also disjoint for different $j$ since $\tau_*$ is injective (in turn, because individual maps $\tau^\ell$ are injective by Lemma~\ref{lm:tau-prop} and have disjoint ranges). In the forth transition we used the fact that 
$$
|\ac{\B}^\ell_k|\geq |\B^\ell_k|-|\Ext^\ell_k\cup \{q^\ell_k\}|\geq n/(KL)-K\geq n/(2KL)
$$
since $n$ is sufficiently large as a function of $K$ and $L$. In the forth transition we used the assumption that the total number of edges remembered by ALG is bounded by $s$. Now by Lemma~\ref{lm:special-edges} one has
$$
M_{ALG}\cap (A_P\times (Q\setminus B_Q))\subseteq \bigcup_{\ell\in [L], k\in [K/2]} \tau^\ell(E^\ell_{k, J^\ell_k}),
$$
and therefore
\begin{equation*}
\begin{split}
|M_{ALG}\cap (A_P\times (Q\setminus B_Q))|&\leq \sum_{\ell\in [L], k\in [K/2]} |\text{ALG}^\ell_k\cap \tau^\ell(E^\ell_{k, J^\ell_k})|\\
&=\sum_{\ell\in [L], k\in [K/2]} |\text{ALG}^\ell_k\cap \wh{E}^\ell_{k, J^\ell_k}|,
\end{split}
\end{equation*}
since ALG can only output edges that it remembered as per model definition (Definition~\ref{def:gen-online-intro}). Taking expectations of both sides and using~\eqref{eq:expect-lk}, we get
\begin{equation*}
\begin{split}
\expect_{\wh{G}\sim \mathcal{D}} \left[|M_{ALG}\cap (A_P\times (Q\setminus B_Q))|\right]&\leq \sum_{\ell\in [L], k\in [K/2]} \expect_{\wh{G}\sim \mathcal{D}} \left[|\text{ALG}^\ell_k\cap \wh{E}^\ell_{k, J^\ell_k}|\right]\\
&\leq \sum_{\ell\in [L], k\in [K/2]} \expect_{\wh{G}\sim \mathcal{D}} \left[|\text{ALG}^\ell_k\cap \wh{E}^\ell_{k, J^\ell_k}|\right]\\
&\leq LK\cdot \frac1{n/(2KL)}s\\
&\leq 2L^2K^2\cdot \frac{s}{n}\\
&=O\left(\frac{s}{\log |P|}\right),\\
\end{split}
\end{equation*}
where the last transition uses the fact that $n=\log_m N=\Omega(\log n)=\Omega(\log |P|)$. Since $s=o(|P|\log |P|)$ by assumption, we get 
$$
\expect_{\wh{G}\sim \mathcal{D}} \left[|M_{ALG}\cap (A_P\times (Q\setminus B_Q))|\right]=o(|P|),
$$
and therefore by Markov's inequality 
$$
\prob_{\wh{G}\sim \mathcal{D}} \left[|M_{ALG}\cap (A_P\times (Q\setminus B_Q))|>(1/L)|P|\right]=o(1).
$$
Finally, we note that the graph $\wh{G}$ contains a matching $M_{OPT}$ satisfying  $|M_{OPT}|\geq (1-O(1/L))|P|$ by Lemma~\ref{lm:large-matching-ghat}.  Combining the above bounds with~\eqref{eq:m-alg-bound}, we get
$$
\prob_{\wh{G}\sim \mathcal{D}} \left[|M_{ALG}|>\left(\frac1{1+\ln 2}+O(1/L)\right)|M_{OPT}|\right]=o(1),
$$
 as required.
\end{proof}

\begin{proofof}{Theorem~\ref{thm:main-simple}}
Follows directly by Theorem~\ref{thm:main-simple-prime} by setting $K$ to be a sufficiently large constant.
\end{proofof}


\newpage

\renewcommand{\duallabel}[1]{\label{#1-full}}
\renewcommand{\dualref}[1]{\ref{#1-full}}
\renewcommand{\dualeqref}[1]{\eqref{#1-full}}
\renewcommand{\F}{{\mathcal{F}}}
\renewcommand{\u}{{\mathbf{u}}}
\renewcommand{\v}{{\mathbf{v}}}
\renewcommand{\r}{{\mathbf{r}}}

\newcommand{\block}{\text{block}}
\newcommand{\subspace}{\text{subspace}}
\renewcommand{\Int}{\text{Int}}
\renewcommand{\I}{\mathbf{I}}
\newcommand{\Q}{\mathbf{Q}}
\renewcommand{\i}{\mathbf{i}}
\renewcommand{\j}{\mathbf{j}}
\renewcommand{\k}{\mathbf{k}}
\newcommand{\h}{\mathbf{h}}
\newcommand{\J}{\mathbf{J}}
\renewcommand{\a}{\mathbf{a}}
\renewcommand{\d}{\mathbf{d}}
\newcommand{\f}{\mathbf{f}}
\renewcommand{\l}{\mathbf{l}}
\renewcommand{\H}{\mathbf{H}}
\renewcommand{\u}{\mathbf{u}}

\renewcommand{\c}{\mathbf{c}}
\renewcommand{\d}{\mathbf{d}}

\newcommand{\deltagrid}{{\Delta\cdot \mathbb{Z}\cap [0, 1]}}
\newcommand{\deltagridInt}{{\Delta\cdot \mathbb{Z}\cap [0, 1)}}
\newcommand{\Dom}{\text{Dom}}

\newcommand{\rect}{\textsc{Rect}}

\section{Main result}\duallabel{sec:main-result}
In the rest of the paper we prove our main result, i.e. Theorem~\ref{thm:main}.
We define the individual instances $G^\ell$, establish their main properties and define the glueing map $\tau^\ell$ in Section~\dualref{sec:main-construction}. We then define the predecessor map $\nu$ and establish its main properties in Section~\dualref{sec:nu}. We then give the proof of the lower bound in Section~\dualref{sec:main-theorem}.

\section{Basic gadgets and the glueing map $\tau$}\duallabel{sec:main-construction}

 The input graph $G=(P, Q, E)$ is a edge disjoint (but not vertex disjoint) union of graphs $G^\ell=(S^\ell, T^\ell, E^\ell), \ell\in [L],$ that we define below. For every $\ell\in [L]$ we have $|T^\ell|=N=m^n$, and have $|S^\ell|\approx N/2$. The instances $G^\ell$ are then tied together via carefully designed maps $\tau^\ell$:
$$
\tau^\ell: S^\ell\to T_*^{\ell-1},
$$
where $T_*^{\ell-1}$ is a special subset of $T^{\ell-1}$ that we refer to as the {\em terminal subcube} of $T^{\ell-1}$. The maps $\tau^\ell$ are injective, but not defined on the entirety of $S^\ell$: a small fraction of vertices are left unmapped, and contribute to various error terms in our analyisis. Overall, this mapping ensures that the bipartition $P\cup Q$ of the graph $G$ satisfies
\begin{equation*}
\begin{split}
P&\approx \bigcup_{\text{even~}\ell\in [L/2]} T^\ell\\
\end{split}
\end{equation*}
and 
$$
Q\approx S^0\cup \bigcup_{\text{odd~}\ell\in [L/2]} T^\ell.
$$
The $\approx$ sign in the equations above reflects a small fraction of vertices in $S^\ell, \ell\in [L], \ell>0,$ that the corresponding map $\tau^\ell$ is not defined on -- see~\dualeqref{eq:p-def} and~\dualeqref{eq:q-def} in Section~\dualref{sec:main-theorem} below.

 \subsection{Basic definitions and notation}\duallabel{sec:defs}
 
Throughout the paper we use the notation $[a]=\{0, 1,\ldots, a-1\}$ for a positive integer $a$.

\paragraph{Associating vertices with points in the hypercube $[m]^n$.} Every vertex in $P$ and $Q$ is equipped with a label from $[m]^n$ which we denote by 
$$
\lbl: P\cup Q\to [m]^n.
$$
For a pair of vertices $x\in P$ and $y\in Q$ we write $x\delequal y$ if $\lbl(x)=\lbl(y)$. For every $\ell\in [L]$ vertices in $T^\ell$ have distinct labels, and for every $k\in [K/2]$ vertices in $S^\ell_k$ also have distinct labels (their labels are a subset of the labels of $T^\ell$).  Thus, we will often think of vertices in $G^\ell$ as points in the hypercube when we think of vertices in $T^\ell$, or vertices in $S^\ell_k$ and $k$ is fixed.

\begin{definition}[Weight of a vertex (or point in the hypercube)]
For every $x\in [m]^n$ we define 
\begin{equation*}
\weight(x)=\sum_{j\in [n]} x_j.
\end{equation*}
We will routinely apply the weight function to vertices of $G$. For a vertex $x$ of $G$ we write $\weight(x)$ to denote $\weight(\lbl(x))$.
\end{definition}

\begin{definition}[Boundary points]\duallabel{def:boundary}
We define the set $B\subset [m]^n$ of boundary points by 
$$
B=\{x\in [m]^n: x_i<n^2\text{~or~}x_i>m-n^2\text{~for some~}i\in [n]\}.
$$
\end{definition}

We have
\begin{claim}\duallabel{cl:boundary-points}
The fraction of boundary points in $[m]^n$ is bounded by $1/n^{10}$  long as $m\geq n^{20}$ and $n>2$, which we assume throughout the paper.
\end{claim}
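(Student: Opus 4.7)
The plan is to apply a straightforward union bound over the $n$ coordinates. First, I would rewrite the boundary set as
\[
B = \bigcup_{i \in [n]} B_i, \quad \text{where } B_i = \{x \in [m]^n : x_i < n^2 \text{ or } x_i > m - n^2\},
\]
so that it suffices to bound $|B_i|/m^n$ for each individual $i$ and sum.

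Next, I would count the number of "bad" values for a single coordinate: the set $\{v \in [m] : v < n^2 \text{ or } v > m - n^2\}$ has size at most $2n^2$ (the two intervals each contain at most $n^2$ integers in $[m] = \{0,1,\ldots,m-1\}$). Since the condition only restricts coordinate $i$, this yields $|B_i| \leq 2n^2 \cdot m^{n-1}$, and therefore $|B_i|/m^n \leq 2n^2/m$.

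Finally, applying the union bound across all $n$ coordinates gives
\[
\frac{|B|}{m^n} \leq \sum_{i \in [n]} \frac{|B_i|}{m^n} \leq n \cdot \frac{2n^2}{m} = \frac{2n^3}{m}.
\]
Plugging in the assumption $m \geq n^{20}$ yields $|B|/m^n \leq 2/n^{17}$, and since $n > 2$ (so $n^7 \geq 128 \geq 2$) we get $2/n^{17} \leq 1/n^{10}$, as required. There is no real obstacle here — the entire argument is a single-step union bound, and the slack between $m \geq n^{20}$ and the desired $1/n^{10}$ bound is very comfortable.
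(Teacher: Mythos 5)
Your proof is correct and follows the same union-bound argument as the paper (the paper states it in probabilistic language — picking a uniformly random point and bounding the probability of a bad coordinate — but this is the same counting). You also supply the final numerical comparison $2/n^{17} \leq 1/n^{10}$ explicitly, which the paper leaves implicit.
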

\begin{proof}
This follows by a union bound. Pick a point $x\in [m]^n$ uniformly at random. The probability that a a fixed coordinate is smaller than $n^2$ of larger than $m-n^2$ is at most $2n^2/m$. Thus, the probability that at least one coordinate of $x$ is at most $n^2$ or at least $m-n^2$ is bounded by $2n^3/m$ by a union bound. Since $m\geq n^{20}$ by assumption, the result follows.
\end{proof}

We note that the assumption of Claim~\dualref{cl:boundary-points} above is satisfied by property~\dualref{p0} of parameter setting that we ensure throughout this section.

\paragraph{Family of fixed weight vectors $\mathcal{F}$ with small pairwise dot products.} We let $\mathcal{F}$ be a family of vectors in $\{0, 1\}^n$ of Hamming weight $w=(\e/2) n$ such that for every $\u, \v\in \F$ one has 
$$
\langle \u, \v\rangle \leq \e w.
$$
Fix such a family $\F$ with $|\F|=2^{\Omega(\e^2 n)}$. The existence of such a family can be established by the probabilistic method -- we include the proof in Appendix~\dualref{app:F-construction} for completeness.
We partition $\mathcal{F}$ into disjoint subsets  of equal size, letting 
$$\mathcal{F}=\B^0\cup \B^1\cup \ldots \cup \B^{L-1},
$$ 
where $\B^i\cap \B^j=\emptyset$ if $i\neq j$.
For every $\ell\in [L]$ the set of vectors $\B^\ell$ will be used to define a corresponding graph $G^\ell$, and these graphs will be presented to the algorithm in the stream sequentially for $\ell\in [L]$. Every set $\B^\ell$ is partitioned as 
\begin{equation}\duallabel{eq:b-def}
\B^\ell=\B^\ell_0\cup\ldots\cup \B^\ell_{K/2},
\end{equation}
where $|\B^\ell_k|=\frac1{L(K/2+1)}\cdot |\mathcal{F}|$ for $k\in [K/2+1]$, and $\B^\ell_k\cap \B^\ell_{k'}=\emptyset$ for $k\neq k'$. The $\ell$-th graph $G^\ell$ is  mainly parameterized by a sequence 
\begin{equation}\duallabel{eq:j-def}
\J^\ell\in \B^\ell_0\times \ldots\times \B^\ell_{K/2},
\end{equation}
i.e., $\J^\ell_k\in \B^\ell_k$ for $k\in [K/2+1]$, as well as a vector $\r^\ell\in \B^\ell_{K/2}$ that we refer to as the $\ell$-th compression vector (see Definition~\dualref{def:ext} below).

\begin{definition}[Special vectors of the $\ell$-th instance]\duallabel{def:special-vectors}
We refer to $\J^\ell$ and $\r^\ell$ as the {\em special vectors} of instance $G^\ell$, and let
$$
\Sp(\B^\ell):=(\J^\ell, \r^\ell).
$$
\end{definition}

We also define the extended special coordinates
\begin{equation}\duallabel{eq:sp-ext}
\Spt(\B^\ell):=\J^\ell\cup \{\r^\ell\}\cup \bigcup_{k\in [K]} (\Ext^\ell_k\cup \{\q^\ell_k\}).
\end{equation}

For every $\ell\in [L], \ell>0,$ the map $\tau^\ell$ is parameterized by vector $\r^\ell\in \B^\ell$, referred to as the {\em compression vector} for the terminal subcube $T_*^\ell$, as well as a collection of {\em extension vectors} for every $k\in [K/2]$.
\begin{definition}[Compression vectors and extension vectors] \duallabel{def:ext}
 For every $k\in [K/2]$\footnote{Note that we only define the extension vectors $\Ext^\ell_k$ and the compression vector $\q^\ell_k$ for $k\in [K/2]=\{0, 1,2,\ldots, K/2-1\}$, even though the sequence $\J^\ell$ is of length $K/2+1$. This is for convenience in defining the glueing map $\tau^\ell$ -- see Section~\dualref{sec:tau-ell} for more details.} let 
$$
\Ext^\ell_k\subseteq \B^\ell_k
$$ 
denote a set of $K/2+1-k$ vectors referred to as the {\em extension vectors} and 
$$
\q^\ell_k\in \B^\ell_k\setminus \Ext^\ell_k
$$ 
denote the {\em compression vector} for the $k$-th phase of the graph $G^\ell$. Let $\r^\ell\in \B^\ell_{K/2}$ denote the {\em $\ell$-th compression vector.}
\end{definition}

Define for $k\in [K/2]$
\begin{equation}\duallabel{eq:int-b}
\ac{\B}^\ell_k=\B^\ell_k\setminus (\{\q^\ell_k\}\cup \Ext^\ell_k)
\end{equation}
and let
\begin{equation}\duallabel{eq:int-b-k2}
\ac{\B}^\ell_{K/2}=\B^\ell_{K/2}\setminus \{\r^\ell\}.
\end{equation}

For every $\ell\in [L]$ and $k\in [K/2+1]$ select 
$$
\J^\ell_k\in \ac{\B}^\ell_k.
$$

\subsection{Parameter setting}\duallabel{sec:params}
We choose parameters $\e, \delta$, $M, W$, $K$ and $L$ so that $\e, \delta, L$ only depend on $K$ and the following properties are satisfied:
\begin{description}
\item[(p0)\duallabel{p0}] $m=n^{20}$ 
\item[(p1)\duallabel{p1}] $W/w=\text{lcm}(K, K-1,\ldots, 2, 1)$
\item[(p2)\duallabel{p2}] $\delta^{-1} \cdot \text{lcm}(K, K-1,\ldots, 2, 1)\cdot W/w \mid M/w$.
\item[(p3)\duallabel{p3}] $\Delta=\frac1{\text{lcm}(K, K-1, K-2,\ldots, 2, 1)}$; note that $\Delta\leq 1/K$ and that $\Delta\cdot M/w$ is an integer by \dualref{p2}.
\item[(p4)\duallabel{p4}] $L=\sqrt{K}$
\item[(p5)\duallabel{p5}] $\delta\leq \Delta^{100K^2}$ 
\item[(p6)\duallabel{p6}] $\e \leq  \delta^2$
\item[(p7)\duallabel{p7}] $\e \geq w/M$
\end{description}
In the above we write $\text{lcm}(a_1,a_2,\ldots, a_s)$ to denote the least common multiple of $a_1,a_2,\ldots, a_s$. For $a>0, b>0$ we write $a \mid b$ if $b/a$ is an integer.

\begin{lemma}\duallabel{lm:params}
For every constant $K$ there exists a setting of $\e$ as a function of $K$ and a setting of parameters $M, W, \Delta, L, \delta$ and $m=\text{poly}(n)$ that satisfies~\dualref{p0}-\dualref{p7}.
\end{lemma}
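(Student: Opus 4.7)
The plan is to set every parameter explicitly as a function of $K$, starting from those determined by equality constraints and then choosing $M$ with enough slack to accommodate the divisibility~\dualref{p2} together with the lower bound~\dualref{p7}. Once the finite constants are pinned down, only $n$ (and hence $m = n^{20}$) grows, and the integrality conditions are handled by taking $n$ to be a suitable multiple of a constant depending on $K$.

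Concretely, I will fix $K$ and set $\Delta = 1/\mathrm{lcm}(K, K-1, \ldots, 2, 1)$, $L = \sqrt{K}$, $\delta = \Delta^{100K^2}$, and $\e = \delta^2$. These immediately satisfy~\dualref{p3},~\dualref{p4},~\dualref{p5} (with equality), and~\dualref{p6} (with equality). Writing $\ell := \mathrm{lcm}(K, K-1, \ldots, 2, 1)$ and $D := \delta^{-1} = \ell^{100K^2}$, both are positive integers depending only on $K$. I then set $W/w = \ell$, which gives~\dualref{p1}, and take $M/w = D^2 \ell^2$. This latter value is a multiple of $D\ell^2 = \delta^{-1} \cdot \ell \cdot (W/w)$, giving~\dualref{p2}, and it yields $w/M = 1/(D^2\ell^2) \leq 1/D^2 = \e$, giving~\dualref{p7}. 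The choice $m = n^{20}$ satisfies~\dualref{p0}, and the divisibility in the remark accompanying~\dualref{p3} reduces to the identity $\Delta \cdot M/w = D^2 \ell$, which is an integer.

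To see that $w = (\e/2)n = n/(2D^2)$ and hence $W = w\ell$ and $M = wD^2\ell^2$ can be taken to be integers, I require $n$ to range over multiples of $2D^2$, which is a constant once $K$ is fixed; under this convention all of $w, W, M$ are integers and $m = n^{20}$ remains polynomial in $n$, as demanded. The lemma really poses no genuine obstacle; the only step that is not forced is choosing $M/w$ simultaneously divisible by $D\ell^2$ and at least $1/\e = D^2$, and this is handled by the single choice $M/w = D^2 \ell^2$ (or indeed any common multiple of $D\ell^2$ and $D^2$).
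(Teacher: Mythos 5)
Your proposal is correct and follows essentially the same route as the paper's own proof: fix $\Delta$, $L$, $\delta$, $\e$ as explicit functions of $\mathrm{lcm}(K,\ldots,1)$ and $K$, then choose $M/w$ (equivalently $1/\e$) large enough to satisfy both the divisibility constraint~\dualref{p2-full} and the lower bound~\dualref{p7-full}, with integrality handled by letting $n$ range over multiples of a $K$-dependent constant. The paper chooses a slightly different (larger) power of the lcm for $1/\e$ and sets $M = w/\e$ so that~\dualref{p7-full} holds with equality, but these are cosmetic differences; your explicit handling of the integrality of $w$, $W$, $M$ is in fact a small clarification over the paper's ``$n$ sufficiently large'' shorthand.
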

\begin{proof}
For any $\e\in (0, 1)$ such that $(1/\e)^{1/3}$ is an integer, let $w=(\e/2) n$, as required by the construction of the set $\F$, and let 
$$
\delta=\e^{1/2},
$$
ensuring that~\dualref{p6} holds with equality. Let 
$$
M=w/\e=n/2,
$$ 
so that~\dualref{p7} is satisfied with equality. Let $W=\text{lcm}(K, K-1,\ldots, 2, 1)\cdot w$, as per~\dualref{p1}. Note that in order to satisfy~\dualref{p2}, it suffices to ensure that
\begin{equation*}
\begin{split}
\frac{M/w}{\delta^{-1}\cdot \text{lcm}(K, K-1,\ldots, 2, 1)\cdot W/w}
&=\frac{1/\e}{\e^{-1/2}\cdot (\text{lcm}(K, K-1,K-2,\ldots, 2, 1))^2}\\
&=\frac{(1/\e)^{1/2}}{(\text{lcm}(K, K-1,K-2,\ldots, 2, 1))^2}\\
\end{split}
\end{equation*}
is an integer. We let 
$$
1/\e=(\text{lcm}(K, K-1,K-2,\ldots, 2, 1))^{400K^2}, 
$$
ensuring that 
\begin{equation*}
\begin{split}
\frac{(1/\e)^{1/2}}{(\text{lcm}(K, K-1,K-2,\ldots, 2, 1))^2}&=\frac{\left((\text{lcm}(K, K-1,K-2,\ldots, 2, 1))^{400K^2}\right)^{1/2}}{(\text{lcm}(K, K-1,K-2,\ldots, 2, 1))^2}\\
&=(\text{lcm}(K, K-1,K-2,\ldots, 2, 1))^{200K^2-2},
\end{split}
\end{equation*}
ensuring that~\dualref{p2} holds.

We set $\Delta=\frac1{\text{lcm}(K, K-1, K-2,\ldots, 2, 1)}$ as per~\dualref{p3}, and verify that 
\begin{equation*}
\begin{split}
\delta&=(1/\e)^{1/2}\\
&=\left((\text{lcm}(K, K-1,K-2,\ldots, 2, 1))^{-400K^2}\right)^{1/2}\\
&=\text{lcm}(K, K-1,K-2,\ldots, 2, 1))^{-200K^2}\\
&\leq \Delta^{100K^2}, 
\end{split}
\end{equation*}
so~\dualref{p5} is satisfied.

Finally, we let $L=\sqrt{K}$, satisfying~\dualref{p4}. Letting $K$ be a sufficiently large constant and $n$ sufficiently large as a function of $K$ and letting $m=n^{20}$ to satisfy~\dualref{p0} completes the setting of parameters.
\end{proof}

\subsection{Basic gadgets $G^\ell$: vertex set and main definitions}\duallabel{sec:g-ell}
In this section we define our gadgets $G^\ell=(S^\ell, T^\ell, E^\ell)$. Since $\ell$ is fixed throughout this section, we omit the superscript and let $S=S^\ell, T=T^\ell, E=E^\ell$.

\paragraph{Vertices of $G$ and their labels.}  Let $m\geq 1$ be a sufficiently large integer. We have $|T|=m^n$, and label vertices in $T$ with points in the hypercube $[m]^n$, where $[m]=\{0, 1, 2,\ldots, m-1\}.$  The labelling defines a bijective mapping from the vertex set $T$ to $[m]^n$, and we hence sometimes refer to vertices in $T$ as simply points in $[m]^n$. The vertices on the $S$ side of the bipartition will also be labelled with points on the hypercube $[m]^n$, as defined below. The average degree of a vertex in our construction will be $2^{\Omega_\e(n)}$, which translates to average degree $N^{\Omega_\e(1/\log \log N)}$ when $m=\text{poly}(n)$ (this is how we set $m$ as per~Lemma~\dualref{lm:params}).

  The set $S$ of vertices is partitioned into disjoint subsets $S=S_0\uplus S_1\uplus \ldots \uplus S_{K/2-1}$ whose vertices are also labeled with elements of $[m]^n$. We now define $S_k$ for $k\in [K/2]$. Let $\B:=\B^\ell$ as per~\dualeqref{eq:b-def}, so that $\B=\B_0 \cup \ldots \cup \B_{K/2}$, and let $\J:=\J^\ell$ as per~\dualeqref{eq:j-def}.  For every vertex $y\in S\cup T$ and vector $\u\in \F$ we use the notation
  $$
  \langle y, \u\rangle=\sum_{s\in [n]} y_s\cdot \u_s,
  $$
  where $y_s$ stands for the $s$-th coordinate of the label of $y$.  In what follows we often write, for two vertices $x, y\in S\cup T$ and a vector $\u\in \mathbb{Z}^n$
  $$
  x=y+\u
  $$
if the label of $x$ can be obtained by adding $\u$ to the label of $y$, i.e. $x_s=y_s+\u_s$ for every $s\in [n]$. Similarly, we often write $x=y+\u$ when $x\in S\cup T$ and $y\in [m]^n$ if the label of $x$ is the sum of $y$ and $\u$. In other words, we treat vertices of $G$  and points in $[m]^n$  where this does not lead to confusion (see Remark~\dualref{rm:sk-vertex-identification}).

\paragraph{Nested sequence $T=T_0\supset T_1\supset \ldots\supset T_{K/2}$ and downsets $S_0,S_1,\ldots, S_{K/2-1}$.} We let $T_0=T$, i.e. every $x\in T_0$ is labeled with an element of $[m]^n$. For every $k\in [K/2]$ let 
\begin{equation}\duallabel{eq:def-tk}
\begin{split}
T_{k+1}:=\left\{y\in T_k: \langle y, \j_k\rangle \pmod{M}\in \left[0, 1-\frac{1}{K-k}\right)\cdot M\right\}.\\
\end{split}
\end{equation}
Note that $T_0\supset T_1\supset\ldots\supset T_{K/2}$ form a nested sequence. Also note that for every $k\in [K/2]$ one has 
\begin{equation}\duallabel{eq:def-tk-all-constraints}
\begin{split}
T_k:=\left\{y\in T_0: \langle y, \j_s\rangle \pmod{M}\in \left[0, 1-\frac{1}{K-s}\right)\cdot M\text{~for all~}s\in \{0, 1,\ldots, k-1\}\right\}.\\
\end{split}
\end{equation}
The innermost set in this sequence is a central object of our construction:
\begin{definition}[Terminal subcube]\duallabel{def:terminal-subcube}
We refer to $T_*:=T_{K/2}$ as the {\em terminal subcube}.
\end{definition}

Recall that for a pair of vertices $x, y\in S\cup T$ the relation $x\delequal y$ stands for `the label of $x$ equals the label of $y$'. We extend this relation to sets in the natural way, writing $A\delequal B$ for $A, B\subseteq S\cup T$ if there exists a bijective map $\pi: A\to B$ such that for every $x\in A$ one has $x\delequal \pi(x)$.  With this notation we define
\begin{equation}\duallabel{eq:def-sk}
\begin{split}
S_k&:\delequal \left\{x\in T_k: \weight(x) \in \left[0, \frac{1}{K-k}\right)\cdot W \pmod{ W} \right\},
\end{split}
\end{equation}
The above stands for $S_k$ being a set of vertices such that $S_k\delequal \wt{T}_k$, where
\begin{equation*}
\wt{T}_k:=\left\{x\in T_k: \weight(x) \in \left[0, \frac{1}{K-k}\right)\cdot W \pmod{ W} \right\}
\end{equation*}
is the set of vertices in $T_k$ whose weight modulo $W$ belongs to a certain range. We stress here that unlike the collection of sets $T_k$, the sets $S_k$ are disjoint. 

\begin{remark}\duallabel{rm:sk-vertex-identification}
The labels of vertices in $S_k$ for any $k\in [K/2]$ are distinct, the labels of vertices in $S_k$ are a subset of the labels of vertices in $S_l$ for $k>l$. Thus, while a vertex in $T$ is uniquely identified by its label, a vertex in $S$ is not. However, a vertex in $S$ is uniquely identified by its label together with the index $k\in [K/2]$ of the set $S_k$ that it belongs to.
\end{remark}

We also let, for every $k\in [K/2]$ and $\j\in \B_k$ 
\begin{equation}\duallabel{eq:skj}
\begin{split}
T_k^\j&=\left\{y\in T_k: \langle y, \j\rangle \pmod{M}\in \left[0, 1-\frac{1}{K-k}\right)\cdot M\right\}\\
S_k^\j&=\left\{x\in S_k: \langle x, \j\rangle \pmod{M}\in \left[0, 1-\frac{1}{K-k}\right)\cdot M\right\}.\\
\end{split}
\end{equation}
We gather basic bounds on the sizes of the sets $T_k, S_k$ in 
\begin{lemma}\duallabel{lm:size-bounds}
One has
\begin{itemize}
\item[{\bf (1)}] For every $k\in [K/2+1]$ one has $|T_k|=(1\pm \sqrt{\e})\cdot |T_0|(1-k/K)$;
\item[{\bf (2)}] For every $k\in [K/2]$ one has  $|S_k|=(1\pm \sqrt{\e}) \cdot |T_0|/K$;
\item[{\bf (3)}] For every $k\in [K/2]$, every $\j\in \B_k$ one has $|S^\j_k|=(1\pm \sqrt{\e})(1-\frac1{K-k})|T_0|/K$.
\item[{\bf (4)}] For every $k\in [K/2]$, every $\j\in \B_k$ one has $|T^\j_k|=(1\pm \sqrt{\e})(1-\frac{k+1}{K})|T_0|$.
\end{itemize}
\end{lemma}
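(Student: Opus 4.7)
My approach rests on the near-orthogonality of vectors in $\F$: for distinct $\j, \j' \in \F$, $\langle \j, \j'\rangle \leq \e w$, and since the vectors are $0/1$-valued, this inner product equals the size of the intersection of their supports. Fix the $k$ vectors $\j_0, \ldots, \j_{k-1}$ (one per block $\B_s$) that parameterise $T_k$, and partition $[n]$ into the shared coordinates $O = \{i : (\j_s)_i = 1 \text{ for at least two } s\}$, the private coordinates $U_s = \{i : (\j_s)_i = 1\}\setminus O$ for each $s$, and the free coordinates $F = [n]\setminus(O\cup \bigcup_s U_s)$. By pairwise near-orthogonality, $|O|\leq \binom{k}{2}\e w$, $|U_s|\geq w(1-k\e)\geq w/2$ for each $s$, and $|F|\geq n - kw$. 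Crucially, the private sets $U_s$ are pairwise disjoint.

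For part~\textbf{(1)}, I would condition on $x_O$. The condition $\langle x, \j_s\rangle \pmod M \in [0, 1-1/(K-s))\cdot M$ then becomes $c_s + \sum_{i\in U_s} x_i \in [0, 1-1/(K-s))\cdot M\pmod M$ for a constant $c_s$ depending only on $x_O$. Since $M\mid m$ and $M(K-s-1)/(K-s)$ is an integer (by properties \dualref{p0} and \dualref{p2}, after choosing $n$ to be an appropriate multiple so that the divisibility chain holds), each $x_i$ with $i\in U_s$ is uniform on $[M]$ modulo $M$, hence $\sum_{i\in U_s} x_i \pmod M$ is uniform on $[M]$ (note $|U_s|\geq 1$). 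The conditional probability of the $s$-th condition is therefore exactly $1-1/(K-s)$, and the conditions factor independently across $s$ because the $U_s$'s are disjoint. Taking expectation over $x_O$ and telescoping gives
\begin{equation*}
|T_k| \;=\; |T_0| \prod_{s=0}^{k-1}\Bigl(1 - \tfrac{1}{K-s}\Bigr) \;=\; |T_0|\cdot \frac{K-k}{K},
\end{equation*}
which comfortably lies within the stated $(1\pm \sqrt\e)$ slack.

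For part~\textbf{(2)} I augment the analysis with the weight condition $\weight(x)\in [0, 1/(K-k))\cdot W \pmod W$ defining $S_k$. Decomposing $\weight(x) = \weight(x_{O\cup\bigcup_s U_s}) + \weight(x_F)$ and conditioning on all coordinates outside $F$, the first summand is fixed while $\weight(x_F)\pmod W$ is uniform on $[W]$, since $W\mid M\mid m$ (by \dualref{p1}, \dualref{p2}) and $|F|\geq 1$. Hence the weight condition is statistically independent of $T_k$-membership and holds with probability exactly $1/(K-k)$, yielding $|S_k| = |T_k|/(K-k) = |T_0|/K$, again within $(1\pm\sqrt\e)$.

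Parts~\textbf{(3)} and~\textbf{(4)} follow by the same argument applied to the augmented collection $\j_0, \ldots, \j_{k-1}, \j$: the pairwise near-orthogonality of $\j$ with the earlier vectors contributes an extra independent factor of $1-1/(K-k)$, giving $|T^\j_k| = (1-1/(K-k))|T_k| = (1-(k+1)/K)|T_0|$ and $|S^\j_k| = (1-1/(K-k))|S_k|$. The main obstacle in writing out the details is merely the bookkeeping required to keep the divisibility chain \dualref{p0}--\dualref{p2} intact under each reduction so that every modular sum really is uniform; with that in place, the remaining calculation is a product of elementary conditional probabilities.
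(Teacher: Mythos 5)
Your proof is correct, and it takes a genuinely different route from the paper's. The paper reduces each claim to Lemma~\ref{lm:rect-size-full} (the general rectangle-size estimate), which in turn is built on the subspace-cover machinery of Lemmas~\ref{lm:cube-subspace-size-full} and~\ref{lm:rect-subspace-size-full}; the $(1\pm\sqrt{\e})$ slack there arises from boundary points in Definition~\ref{def:boundary-full} and from the $\e^{2/3}$ error inherent in estimating $|\subspace_\I(x)\cap R|$. You instead partition $[n]$ into shared, private, and free coordinates, condition on the shared ones, and observe that the membership constraints for $T_k$ then factor as independent modular events over the disjoint private blocks $U_s$, each exactly uniform modulo $M$ because $M\mid m$ and $(K-s)\mid M$ (both guaranteed by the parameter chain \dualref{p0-full}--\dualref{p2-full}); the weight condition factors off via the free block $F$ since $W\mid M\mid m$. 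This buys you \emph{exact} equality $|T_k|=(1-k/K)|T_0|$, $|S_k|=|T_0|/K$, etc., which is strictly stronger than the stated $(1\pm\sqrt\e)$ bound, and the argument is shorter and self-contained. The trade-off is that your route exploits that every interval endpoint used in the construction (the values $1-\frac{1}{K-s}$ and $\frac{1}{K-k}$) scales to an integer multiple of $M$ (respectively $W$), whereas Lemma~\ref{lm:rect-size-full} is stated for arbitrary $\Delta$-grid rectangles and feeds into several other places in the paper, so the authors get more mileage out of the heavier machinery even at the cost of a harmless multiplicative slack here. One small point worth tightening in your write-up: for parts~(3)--(4) the augmented partition changes ($F$ shrinks to $F'$ when $\j$ is added), so the conditioning in the weight step should be on $x_{[n]\setminus F'}$ rather than $x_{[n]\setminus F}$, and you need $|F'|\geq 1$ (true since $|F'|\geq n-(k+1)w$) and $|U'_\j|\geq 1$ (true since $|U'_\j|\geq w(1-(k+1)\e)$) — but these are the same bookkeeping you already flagged.
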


\begin{remark}
Note that the sets $S_k$ are defined for $k\in [K/2]$, whereas $T_k$ is defined for $k\in [K/2+1]$ -- this is to ensure that the number of vertices in the terminal subcube $T_*$ can be made arbitrarily close to the total size of $\biguplus_{k\in [K/2]} S_k$ for any fixed $K$ by choosing $\e$ sufficiently small, simplifying the definition and analysis of the glueing maps $\tau^\ell$ (see Section~\dualref{sec:tau-ell}) that map sets of the latter type to sets of the former type.
\end{remark}

Since per~\dualeqref{eq:def-sk} for every $k\in [K/2]$ the set $S_k$ is essentially a subsampling of the corresponding set $T_k$, for every $U\subseteq T_k$ we define the projection of $U$ to $S_k$, denoted by $\downset_k(U)$, as the set of vertices in $S_k$ whose labels match the labels of vertices in $U$:
\begin{definition}[Downset of a subset of $T$]\duallabel{def:downset}
For every $U\subseteq T$ and $k\in [K/2]$ we define the {\em downset of $U$ in $S_k$} by 
$$
\downset_k(U)=\{x\in S_k: \exists y\in U \text{~s.t.~} x\delequal y\}.
$$
We define
$$
\downset(U)=\bigcup_{k\in [K/2]} \downset_k(U).
$$
\end{definition}

\begin{remark} \duallabel{rm:downset} We note that $\downset$ is defined as a map from subsets of $T$ to subsets of $S$. This certainly defines a natural mapping from elements of $T$: element $x\in T$ is mapped to $\downset(\{x\})$, i.e. the downset of the singleton set containing $x$. However, this map is not one to one: $\downset(\{x\})$ may be a set of size up to $K/2$ (note, however, that for every $k\in [K/2]$ one has $|\downset_k(\{x\})|\leq 1$).
\end{remark}

\begin{remark}\duallabel{rm:truncated-downset}
Note that if $U\subset T_k\setminus T_{k+1}$ for some $k\in [K/2]$, then $\downset_s(U)=\emptyset$ for all $s\in \{k+1, \ldots, K/2-1\}$. Thus, in that case we have
$$
\downset(U)=\bigcup_{s=0}^{k} \downset_s(U).
$$
\end{remark}

\subsection{Edges of $G$}
Similarly to our construction in Section~\ref{sec:toy-construction}, we define the edge set of $G$ to be a union of constant size complete bipartite subgraphs, where for every $k\in [K/2]$ and every direction $\j\in \B_k$ the edge set $E_k\subseteq T_k\times S_k$ consists of a disjoint union of small bipartite subgraphs for every `line' in direction $\j$. Unlike the construction of Section~\ref{sec:toy-construction}, it takes more care to define lines appropriately when $\j$ is not just a coordinate direction, but rather a general binary vector in $\F$, and different directions are not necessarily orthogonal, but rather just have small dot products. For that we first need

\begin{definition}[Block of $x$ with respect to a vector $\j$]
For $\j\in \F$ we define $\block_\j(x):=\lfloor \langle x, \j\rangle/M\rfloor$. 
\end{definition}

We can now define
 \begin{definition}[Line through $x$ in direction $\j$]\duallabel{def:line}
 For each  $\j\in \B_k$ for each $x\in [m]^n$ we denote the line in direction $\j$ going through $x$ by 
$$
\text{line}_\j(x)=\left\{x'\in [m]^n: x'=x+\lambda\cdot \j\text{~for~}\lambda\in \mathbb{Z}\text{~s.t.~}\block_\j(x')=\block_\j(x)\right\}.
$$
\end{definition}

Some basic properties of lines are given in 
\begin{claim}[Basic bounds on lines]\duallabel{cl:line-size}
For every $\j\in \F$:
\begin{description}
\item[(1)] for every $x\in [m]^n$ one has  $|\text{line}_\j(x)|\leq M/w$ and for every $x\in [m]^n\setminus B$ one has $|\text{line}_\j(x)|=M/w$; furthermore, for every $x\in [m]^n$ and $y\in \text{line}_\j(x)$ one has $y=x+\lambda \cdot \j$ for some integer $\lambda$ satisfying $|\lambda|\leq 2M/w$.
\item[(2)] for every $x\in [m]^n\setminus B$, for every $c\in \Delta\cdot \mathbb{Z}\cap [0, 1)$ one has 
$$
\left|\{y\in \text{line}_\j(x): \langle y, \j\rangle \pmod{M}\in [c, c+\delta)\cdot M\}\right|=\delta\cdot M/w.
$$
\item[(3)] for every $x\in [m]^n\setminus B$, for every $c\in \Delta\cdot \mathbb{Z}\cap (0, 1]$ one has
$$
\left|\{y\in \text{line}_\j(x): \langle y, \j\rangle \pmod{M}\in [c-\delta, c)\cdot M\}\right|=\delta\cdot M/w.
$$
\end{description}
\end{claim}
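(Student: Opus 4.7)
The plan is to exploit the fact that $\j \in \{0,1\}^n$ has exactly $w$ ones, which makes $\langle \cdot, \j\rangle$ increase by exactly $w$ whenever we add $\j$ to a point. Concretely, any $y \in \text{line}_\j(x)$ has the form $y = x + \lambda \j$ with $\lambda \in \mathbb{Z}$, so $\langle y, \j\rangle = \langle x, \j\rangle + \lambda w$. The block constraint $\block_\j(y) = \block_\j(x)$ pins $\langle y, \j\rangle$ to a single half-open interval $[bM, (b+1)M)$ of length $M$, which translates into $\lambda$ lying in an interval of length $M/w$ (and, since $\lambda = 0$ always satisfies this, the interval contains $0$, giving $|\lambda| \leq 2M/w$ automatically).

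For part (1), the count $|\text{line}_\j(x)| \leq M/w$ follows immediately: $\lambda$ takes at most $\lfloor M/w \rfloor = M/w$ integer values in an interval of length $M/w$ (recall $M/w = 1/\e$ is an integer by the parameter setting in Section~\ref{sec:params-full}). To upgrade this to equality for non-boundary $x$, I need to verify that every such $\lambda$ yields a valid point of $[m]^n$. Since $|\lambda| \leq M/w = 1/\e$ is a constant (depending only on $K$) and every coordinate of a non-boundary point lies in $[n^2, m - n^2]$, for $s$ with $\j_s = 1$ we have $x_s + \lambda \in [n^2 - 1/\e,\, m - n^2 + 1/\e] \subset [0, m)$ (using $n$ sufficiently large, as per~\ref{p0-full}). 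Hence all $M/w$ choices of $\lambda$ are valid and the line has exactly $M/w$ elements.

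For parts (2) and (3), the key observation is the exact description of $\langle y, \j\rangle$ as $y$ ranges over $\text{line}_\j(x)$: writing $v := \langle x, \j\rangle \pmod w$ and $b := \block_\j(x)$, and using that $w \mid M$ so that $bM \equiv 0 \pmod w$, the set $\{\langle y, \j\rangle : y \in \text{line}_\j(x)\}$ is exactly the arithmetic progression $\{bM + v + kw : k = 0, 1, \ldots, M/w - 1\}$. Consequently $\langle y, \j\rangle \pmod M$ sweeps out precisely the $M/w$ residues in $[0, M)$ that are $\equiv v \pmod w$, spaced by $w$. For (2), counting the $k$'s with $v + kw \in [cM, (c+\delta)M)$ amounts to counting integers in a half-open interval of length exactly $\delta M/w$; since $\delta M/w$ is an integer by~\ref{p2-full}, this count is exactly $\delta M/w$ regardless of whether the interval endpoints are integers. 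Part (3) is symmetric with the interval shifted to $[(c-\delta)M, cM)$.

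The proof is essentially a careful bookkeeping exercise, so the main obstacle is not conceptual but rather keeping the edge cases straight: verifying that the $\lambda$-interval fully reaches both endpoints of the block (using $w \mid M$), that boundary avoidance indeed holds because $1/\e \ll n^2$, and that the counts in (2) and (3) do not suffer off-by-one errors at the block boundary when $c = 0$ or $c + \delta = 1$. Each of these reduces to invoking the appropriate divisibility relation from~\ref{p1-full}--\ref{p3-full} (namely $w \mid M$, $\delta^{-1} \mid M/w$, and $\Delta \cdot M/w \in \mathbb{Z}$), so the proof should fit on a single page once the parametrization $y = x + \lambda\j$ is set up.
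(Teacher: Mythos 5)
Your proposal is correct and follows essentially the same route as the paper: parametrize the line by $y=x+\lambda\j$, observe $\langle y,\j\rangle=\langle x,\j\rangle+\lambda w$, and reduce (1) to counting $\lambda$ in an interval of length $M/w$ and (2)--(3) to counting integers in a half-open interval of length $\delta M/w$, invoking the divisibilities $w\mid M$ and $\delta^{-1}\mid M/w$. The only cosmetic difference is that the paper explicitly introduces integers $\alpha=cM/w$ and $\beta=(c+\delta)M/w$ and computes $\beta-\alpha$, whereas you appeal to the general fact that a half-open interval of integer length contains exactly that many integers.
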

\begin{proof} We start by proving some useful basic facts, and the proceed to prove {\bf (1), (2)} and {\bf (3)}.
First note that for $x\in [m]^n$ and $x'=x+\lambda\cdot \j\in \mathbb{Z}^n$ (but not necessarily in $[m]^n$) one has
\begin{equation}\duallabel{eq:923hg98h9fqrwh9ggygfdygd}
\langle x', \j\rangle=\langle x+\lambda\cdot \j, \j\rangle=\langle x, \j\rangle+\lambda\cdot w.
\end{equation}
This means that $|\lambda|\leq 2M/w$ for all such $x'\in \text{line}_\j(x)$, as otherwise $\block_\j(x')\neq \block_\j(x)$. By Definition~\dualref{def:boundary} we have $n^2\leq x_i \leq m-n^2$ for all $x\in [m]^n\setminus B$ and $i\in [n]$. Thus, for all such $x$ we have, since $\j\in \{0, 1\}^n$,
$$
n^2-2M/w\leq x_i+\lambda\cdot \j_i\leq m-n^2+2M/w
$$
for all $i\in [n]$. By~\dualref{p1} and \dualref{p2} together with the fact that $n$ is sufficiently large as a function of $M/w, W/w, K, L, \Delta$, and $\delta$, we get
$$
0\leq x_i+\lambda\cdot \j_i\leq m-1
$$
for all $i\in [n]$. Thus, 
\begin{equation}\duallabel{eq:934ht9hGADBF}
x'=x+\lambda \cdot \j\in [m]^n. 
\end{equation}

We now prove {\bf (1)}. Let $q=\langle x, \j\rangle \pmod{M}$ to simplify notation, so that  $\langle x, \j\rangle= \block_\j(x)\cdot M+q$. Further, let $a=\lfloor \frac1{w}q\rfloor$ and $b=q \pmod{w}$. With this notation in place we have
\begin{equation}\duallabel{eq:923hg98h9fd}
\langle x', \j\rangle=\langle x+\lambda\cdot \j, \j\rangle=\langle x, \j\rangle+\lambda\cdot w=\block_\j(x)\cdot M+(a+\lambda)\cdot w+b.
\end{equation}
We thus have
$$
\block_\j(x')=\lfloor (\block_\j(x)\cdot M+(a+\lambda)\cdot w+b)/M\rfloor=\block_\j(x)+\lfloor((a+\lambda)\cdot w+b)/M\rfloor,
$$
and hence $\block_\j(x')=\block_\j(x)$ if and only if $((a-\lambda)\cdot w+b)/M\in [0, 1)$. On the other hand, since $b\in \{0, 1,\ldots, w-1\}$, we have
$$
\left\{\lambda\in \mathbb{Z}: ((a+\lambda)\cdot w+b)/M\in [0, 1)\right\}=\{-a,\ldots, -a+M/w-1\},
$$
which is a set of size $M/w$ since $w \mid M$ by \dualref{p2}. This proves the upper bound in {\bf (1)}. For the lower bound we note that if $x\in [m]^n\setminus B$, then every $x'=x+\lambda\cdot \j$ such that $\block_\j(x')=\block_\j(x)$ one has $|\lambda|\leq 2M/w$, and therefore  $x'\in [m]^n$ by~\dualeqref{eq:934ht9hGADBF} (see argument above for more details).  This implies the lower bound, and hence the equality in {\bf (1)}. In particular, we get for $x\in [m]^n\setminus B$
\begin{equation}\duallabel{eq:823gf8g823fggiv}
\begin{split}
\text{line}_\j(x)=\{x+\lambda\cdot \j: \lambda\in \{-a,\ldots, -a+M/w-1\}.
\end{split}
\end{equation}

We now prove {\bf (2)}. First note that by~\dualeqref{eq:823gf8g823fggiv} we have 
\begin{equation*}
\begin{split}
&\left|\{y\in \text{line}_\j(x): \langle y, \j\rangle \pmod{M}\in [c, c+\delta)\cdot M\}\right|\\
&=\left|\{\lambda\in \{-a,\ldots, -a+M/w-1\}: \langle x+\lambda\cdot \j, \j\rangle \pmod{M}\in [c, c+\delta)\cdot M\}\right|\\
&=\left|\{\lambda\in \{-a,\ldots, -a+M/w-1\}: (a+\lambda)\cdot w+b\in [c, c+\delta)\cdot M\}\right|.
\end{split}
\end{equation*}
Since $ \delta^{-1}  \mid M/w$ by \dualref{p2}, $c\in \Delta\cdot \mathbb{Z}\cap [0, 1)$ by assumption of the claim and $\Delta \mid M/w$ by \dualref{p3}, we can write $c\cdot M=\alpha\cdot w$ and $c+\delta=\beta\cdot w$ for integers $\alpha, \beta\in \{0, 1,\ldots, M/w-1\}, \alpha<\beta$ (here we used the fact that $\delta<\Delta$ by~\dualref{p5}). The last line of the equation above can thus be rewritten as 
\begin{equation*}
\begin{split}
&\left|\{\lambda\in \{-a,\ldots, -a+M/w-1\}: (a+\lambda)\cdot w+b\in [c, c+\delta)\cdot M\}\right|\\
&=\left|\{\lambda\in \{-a,\ldots, -a+M/w-1\}: \alpha\cdot w\leq (a+\lambda)\cdot w+b<\beta\cdot w\}\right|\\
&=\beta-\alpha\\
&=(\beta\cdot w-\alpha\cdot w)/w\\
&=((c+\delta)\cdot M-c\cdot M)/w\\
&=\delta\cdot M/w,
\end{split}
\end{equation*}
where the second equality holds because $b\in \{0,\ldots, w-1\}$ and the fourth equality is by definition of $\alpha$ and $\beta$. This proves {\bf (2)}.
The proof of {\bf (3)} is analogous.
\end{proof}

\begin{lemma}[Lines form a partition]\duallabel{lm:line-partition}
For every $\j\in \F$, every $x, x'\in [m]^n$  one has either $\text{line}_\j(x)=\text{line}_\j(x')$ or $\text{line}_\j(x)\cap \text{line}_\j(x')=\emptyset$.
\end{lemma}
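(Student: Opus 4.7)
The plan is to show that the relation ``$x \sim_\j x'$ iff $x \in \text{line}_\j(x')$'' is symmetric and transitive on $[m]^n$, which immediately gives the partition property. The key observation is that, by Definition~\dualref{def:line}, $\text{line}_\j(x)$ consists of those points $y = x + \lambda \j$ (for integer $\lambda$) lying in $[m]^n$ whose $\block_\j$-value coincides with $\block_\j(x)$, so the nontrivial content is that equality of blocks propagates along the equivalence class.

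Suppose $\text{line}_\j(x) \cap \text{line}_\j(x') \neq \emptyset$ and pick some $y$ in the intersection. By Definition~\dualref{def:line} we can write
\[
y = x + \lambda \j \quad\text{and}\quad y = x' + \lambda' \j
\]
for integers $\lambda, \lambda'$, with $\block_\j(y) = \block_\j(x)$ and $\block_\j(y) = \block_\j(x')$. In particular $\block_\j(x) = \block_\j(x')$. Subtracting the two expressions for $y$ gives $x' - x = (\lambda - \lambda')\j$, i.e.\ the two base points already differ by an integer multiple of $\j$.

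Now I would verify $\text{line}_\j(x) \subseteq \text{line}_\j(x')$ (the reverse containment is symmetric). Let $z \in \text{line}_\j(x)$, so $z = x + \nu \j \in [m]^n$ for some integer $\nu$ with $\block_\j(z) = \block_\j(x)$. Substituting $x = x' + (\lambda - \lambda')\j$ yields
\[
z = x' + (\nu + \lambda - \lambda')\j,
\]
which is a shift of $x'$ by an integer multiple of $\j$ lying in $[m]^n$, and moreover satisfies $\block_\j(z) = \block_\j(x) = \block_\j(x')$. By Definition~\dualref{def:line} this places $z$ in $\text{line}_\j(x')$, completing the inclusion. By symmetry we conclude $\text{line}_\j(x) = \text{line}_\j(x')$.

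There is no real obstacle here; the lemma is essentially a bookkeeping statement confirming that the ``same block'' condition in Definition~\dualref{def:line} is preserved along the one-dimensional $\j$-orbit through any base point. The only thing to be slightly careful about is that $z = x' + (\nu + \lambda - \lambda')\j$ automatically lies in $[m]^n$, but this holds trivially since we already know $z = x + \nu\j \in [m]^n$ from $z \in \text{line}_\j(x)$.
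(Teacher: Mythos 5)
Your proof is correct. The core algebraic manipulation — picking $y$ in the intersection, extracting $x' - x = (\lambda - \lambda')\j$, and then translating any $z \in \text{line}_\j(x)$ into the form $z = x' + (\nu + \lambda - \lambda')\j$ with the same block value — is exactly the argument the paper uses. The route, however, is genuinely different in one respect: the paper does not argue about lines directly, but proves a more general statement, Lemma~\dualref{lm:subspace-partition} about $\I$-subspaces for arbitrary $\I \subseteq \F$, and then derives the line statement via Claim~\dualref{cl:line-vs-subspace}, which identifies $\text{line}_\j(x)$ with $\subspace_{\{\j\}}(x)$. The general subspace lemma is harder: the definition of $\subspace_\I(x)$ imposes an explicit $\ell_\infty$ bound $\|t\|_\infty \le 2M/w$ on the integer coefficients, and one must show this bound is preserved under the translation, which takes a near-orthogonality calculation and needs the extra hypothesis $\e < 1/(10|\I|)$. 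Your argument avoids all of that because the definition of a line (Definition~\dualref{def:line}) has no such constraint, so the translation step is immediate. This is precisely the content of the remark after Claim~\dualref{cl:line-vs-subspace}: the $\ell_\infty$ constraint is redundant when $|\I|=1$. So your proof is simpler and self-contained for this particular lemma, while the paper's approach yields the more general subspace-partition fact that it needs elsewhere (e.g.\ in defining the local permutation maps). Your final remark about why $z \in [m]^n$ is also correct — $z$ is the same element of $[m]^n$ in both expressions, so membership is not in question.
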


The proof of the lemma follows from a more general statement about subspaces (see Claim~\dualref{cl:line-vs-subspace} and Lemma~\dualref{lm:subspace-partition}) and its proof is given in Section~\dualref{sec:subspaces}.
\begin{definition}[Minimal $\j$-line cover]\duallabel{def:line-cover}
We say that a set  $C\subset [m]^n$ is a minimal $\j$-line cover if 
 $\bigcup_{x\in C} \text{line}_\j(x)=[m]^n$ and $\text{line}_\j(x)\cap \text{line}_\j(x')=\emptyset$ for $x, x'\in C$, $x\neq x'$. 
\end{definition}

We now define the edges of $G$ incident on $S_k$ for every $k\in [K/2]$. For every $\j\in \ac{\B}_k$ 
let 
\begin{equation}\duallabel{eq:c-j-def}
C_\j\subset [m]^n
\end{equation}
be a minimal $\j$-line cover as per Definition~\dualref{def:line-cover}.
For every $y\in C$, we include a complete bipartite graph between $\text{line}_\j(y)\cap \Int_\delta(S_k^\j)$ and $\text{line}_\j(y)\cap (T_k\setminus T_k^\j)$: let $E=\bigcup_{k\in [K/2]} E_k$, where
\begin{equation}\duallabel{eq:edges-ei-def}
E_k=\bigcup_{\j\in \ac{\B}_k} E_{k, \j}
\end{equation}
and
\begin{equation}\duallabel{eq:edges-eij-def}
E_{k, \j}=\bigcup_{y\in C_\j} (\text{line}_\j(y)\cap \Int_\delta(S_k^\j)) \times (\text{line}_\j(y) \cap (T_k\setminus T_k^\j)).
\end{equation}
In the equation above we use the notation $\Int_\delta(S_k^\j)$ for the {\em $\delta$-interior} of the set $S_k^\j$, which we now define. First recall that by~\dualeqref{eq:def-tk-all-constraints} and definition of $S_k$ in~\dualeqref{eq:def-sk} we have
\begin{equation*}
\begin{split}
S_k\delequal &\left\{y\in [m]^n: \langle y, \j_s\rangle \pmod{M}\in \left[0, 1-\frac{1}{K-s}\right)\cdot M\text{~for all~}s\in \{0, 1,\ldots, k-1\}\right.\\
&\text{~and~}\\
&\left.\weight(x) \in \left[0, \frac{1}{K-k}\right)\cdot W \pmod{ W}\right\}\\
\end{split}
\end{equation*}
and 
\begin{equation*}
\begin{split}
S_k^\j\delequal &\left\{y\in [m]^n: \langle y, \j_s\rangle \pmod{M}\in \left[0, 1-\frac{1}{K-s}\right)\cdot M\text{~for all~}s\in \{0, 1,\ldots, k-1\}\right.\\
&\text{~and~}\\
&\langle y, \j\rangle \pmod{M}\in \left[0, 1-\frac{1}{K-k}\right)\cdot M\\
&\text{~and~}\\
&\left.\weight(x) \in \left[0, \frac{1}{K-k}\right)\cdot W \pmod{ W}\right\}\\
\end{split}
\end{equation*}

The interior of $S_k^\j$, denoted by $\Int_\delta(S_k^\j)$, is simply the set of points in $S_k^\j$ that satisfy all the constraints above (except the subsampling constraint) with a margin of $\delta$:
\begin{equation}\duallabel{eq:283g8gausbfuagf}
\begin{split}
\Int_\delta(S_k^\j)\delequal &\left\{y\in [m]^n: \langle y, \j_s\rangle \pmod{M}\in \left[\delta, 1-\frac{1}{K-s}-\delta\right)\cdot M\text{~for all~}s\in \{0, 1,\ldots, k-1\}\right.\\
&\text{~and~}\\
&\langle y, \j\rangle \pmod{M}\in \left[\delta, 1-\frac{1}{K-k}-\delta\right)\cdot M\\
&\text{~and~}\\
&\left.\weight(x) \in \left[0, \frac{1}{K-k}\right)\cdot W \pmod{ W}\right\}\\
\end{split}
\end{equation}

\begin{remark}
We note that our definition of the interior $\Int_\delta(S_k^\j)$ of $S_k^\j$ is a special case of Definition~\dualref{def:int} below. We prefer to present it here first before presenting the more general version to alleviate notation in the definition of the basic gadgets $G^\ell$.
\end{remark}

\begin{remark}
Note that the edge set $E_k$ is fully defined by the prefix $\J_{<k}$ (note that we consider the compression indices and extension indices fixed and $\J$ variable; this is useful since in the actual hard input distribution we will fix the compression and extension indices arbitrarily, and select $\J$ uniformly at random from $\ac{\B}_0\times \ac{\B}_1\times \ldots\times \ac{\B}_{K/2}$ -- see Section~\dualref{sec:main-theorem}).
\end{remark}

\begin{remark}
We note that the edge set defined in~\dualeqref{eq:edges-eij-def} does not depend on the specific choice of a cover $C_\j$ used, i.e. any minimal $\j$-line cover produces the same edge set as per~\dualeqref{eq:edges-eij-def}.
\end{remark}

The following lemma shows that the complete bipartite graphs defined above are disjoint (this will be useful for analyzing a subsampling of the gadgets $G^\ell$ later in Section~\dualref{sec:main-theorem})
\begin{lemma}\duallabel{lm:disjoint}
For every $k\in [K/2]$, every $\i, \j\in \ac{\B}_k, \i\neq \j$, every $x\in C_\i, y\in C_\j$, where $C_\i$ and $C_\j$ are minimal $\i$- and $\j$-line covers respectively, the edge sets
$$
(\text{line}_\i(x)\cap \Int_\delta(S_k^\i)) \times (\text{line}_\i(x) \cap (T_k\setminus T_k^\i))
$$
and 
$$
(\text{line}_\j(y)\cap \Int_\delta(S_k^\j)) \times (\text{line}_\j(y) \cap (T_k\setminus T_k^\j))
$$
are disjoint.
\end{lemma}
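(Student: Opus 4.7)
The plan is to mirror the proof of Lemma~\ref{lm:disjoint} from Section~\ref{sec:toy-construction}, but with the exact orthogonality of coordinate directions replaced by the near-orthogonality $\langle \i, \j\rangle \leq \e w$ guaranteed by the family $\F$. The $\delta$-interior of $S_k^\i$ built into the definition of $E_{k,\j}$ in~\dualeqref{eq:edges-eij-def} is precisely what will absorb the small error introduced by dropping exact orthogonality.

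I would argue by contradiction: suppose the two complete bipartite subgraphs share an edge, so that there exist labels $a, b \in [m]^n$ with $a \in \text{line}_\i(x)\cap \Int_\delta(S_k^\i)$ and $a \in \text{line}_\j(y)\cap \Int_\delta(S_k^\j)$, while $b \in \text{line}_\i(x)\cap (T_k\setminus T_k^\i)$ and $b \in \text{line}_\j(y) \cap (T_k\setminus T_k^\j)$. Since $a, b \in \text{line}_\j(y)$, Claim~\dualref{cl:line-size}, {\bf (1)}, gives $b = a + \lambda\,\j$ for an integer $\lambda$ with $|\lambda| \leq 2M/w$. Taking the inner product with $\i$ and using that $\i\neq \j$ both lie in $\F$, hence $\langle \i, \j\rangle \leq \e w$, I get
$$
|\langle b, \i\rangle - \langle a, \i\rangle| \;=\; |\lambda|\cdot \langle \j, \i\rangle \;\leq\; (2M/w)\cdot(\e w) \;=\; 2\e M.
$$

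Next I would write $\langle a, \i\rangle = q_a M + r_a$ and $\langle b, \i\rangle = q_b M + r_b$ with $r_a, r_b \in [0, M)$. The $\Int_\delta$ condition~\dualeqref{eq:283g8gausbfuagf} forces $r_a \in [\delta,\, 1 - 1/(K-k) - \delta)\cdot M$, while $b \in T_k\setminus T_k^\i$ combined with~\dualeqref{eq:skj} forces $r_b \in [1 - 1/(K-k),\, 1)\cdot M$. Since the integer difference is at most $2\e M \ll M$, the quotients $q_a$ and $q_b$ differ by at most one, and I would case-split. The case $q_b = q_a + 1$ is impossible because it would force $r_b - r_a < 0$, whereas the interval containments give $r_b > r_a$. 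In the case $q_a = q_b$ one has $|r_b - r_a| \leq 2\e M$ but the interval containments give $r_b - r_a > \delta M$, contradicting $2\e < \delta$. In the case $q_b = q_a - 1$ one has $r_b - r_a \geq M - 2\e M$ but the interval containments give $r_b - r_a < M - \delta M$, again contradicting $2\e < \delta$. The inequality $2\e < \delta$ follows from property~\dualref{p6} ($\e \leq \delta^2$) together with $\delta < 1/2$, which is ensured by property~\dualref{p5}.

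The only point requiring real care is keeping the modular wrap-around straight when relating the residues $r_a, r_b$ to the integer inner products $\langle a, \i\rangle, \langle b, \i\rangle$, which drives the three-case split above. Once this is set up the argument is short: property~\dualref{p6} supplies exactly the slack needed for the $\delta$-margin of $\Int_\delta$ to beat the $2\e M$ error introduced by the near-orthogonality of $\F$.
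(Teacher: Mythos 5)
Your proof is correct and follows essentially the same route as the paper's: both argue by contradiction, write $b = a + \lambda\j$ with $|\lambda|\leq 2M/w$ via Claim~\ref{cl:line-size-full}, {\bf (1)}, bound $|\langle b,\i\rangle - \langle a,\i\rangle|\leq 2\e M$ using the near-orthogonality of $\F$, and then use the $\delta$-margin of $\Int_\delta(S_k^\i)$ to derive a contradiction with $b\in T_k\setminus T_k^\i$, invoking \dualref{p6} for $2\e<\delta$. The only difference is that the paper manipulates the residues $\langle a,\i\rangle\pmod M$, $\langle b,\i\rangle\pmod M$ directly (implicitly noting that the $\delta$-margin rules out wrap-around), whereas you spell out the three-way case split on the quotients $q_a, q_b$ — a more careful but equivalent treatment of the same step.
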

\begin{proof}
We argue by contradiction. Note that the edge sets above intersect if and only if there exist $a, b$ such that
\begin{equation}\duallabel{eq:0293yt88gfdSF}
a\in (\text{line}_\i(x)\cap \Int_\delta(S_k^\i))\cap (\text{line}_\j(y)\cap \Int_\delta(S_k^\j))
\end{equation}
and 
\begin{equation}\duallabel{eq:92yt99gfef}
b\in (\text{line}_\i(x) \cap (T_k\setminus T_k^\i))\cap (\text{line}_\j(y) \cap (T_k\setminus T_k^\j)).
\end{equation}
Since $a, b\in \text{line}_\j(y)$, we have by Claim~\dualref{cl:line-size}, {\bf (1)}, that 
\begin{equation*}
b=a+\lambda\cdot \j
\end{equation*}
for some integer $\lambda$ with $|\lambda|\leq 2M/w$. This in particular means that 
\begin{equation}\duallabel{eq:ba-proj}
|\langle b, \i\rangle-\langle a, \i\rangle |=|\lambda| \langle \j, \i\rangle\leq |\lambda|\cdot \e w\leq 2\e M.
\end{equation}

On the other hand, since $a\in \Int_\delta(S_k^\i)$, we have by~\dualeqref{eq:skj} together with~\dualeqref{eq:283g8gausbfuagf} (see also Definition~\dualref{def:int})
\begin{equation*}
\langle a, \i\rangle \pmod{M}\in \left[\delta, 1-\frac{1}{K-k}-\delta\right)\cdot M.
\end{equation*}

Putting this together with~\dualeqref{eq:ba-proj} yields
\begin{equation*}
(\delta -2\e) M\leq \langle b, \i\rangle \pmod{M}\leq (1-\frac{1}{K-k}-\delta+2\e)\cdot M,
\end{equation*}
and thus since $\e\leq \delta^2<2\delta$ by~\dualref{p6}, we get
\begin{equation}
\langle b, \i\rangle \pmod{M}\in \left[0, 1-\frac{1}{K-k}\right)\cdot M,
\end{equation}
a contradiction with the assumption that $b\in T_k\setminus T_k^\i$ by ~\dualeqref{eq:92yt99gfef}.
\end{proof}

\subsection{Rectangles and their properties}\duallabel{sec:subspaces}
Our construction in this section is at a high level quite similar to the construction from Section~\ref{sec:toy-construction}. Unfortunately, however, it is more complicated, mainly due to the fact that we cannot rely on clean product structure of naturally defined rectangles (see Definition~\ref{def:rectangle}). However, our analysis is still based on a concept of a rectangle, which we define below -- see Definition~\dualref{def:rect}. While this is no longer a product set since our vectors in $\F$ are not orthogonal, but merely have small dot product, rectangles as per Definition~\dualref{def:rect} still behave is rather similar way to product sets. This section is devoted to proving some basic properties of rectangles that facilitate later analysis.

For two vectors $\a, \b$ of the same dimension we use the notation $\a<\b$ for $\a$ being coordinate-wise smaller than $\b$. We often index coordinates of a vector by elements of some set. For example, $\a\in [0, 1)^\I$ stands for $\a$ being a vector of length $|\I|$ whose entries are $\a_\i, \i\in \I$, and for a subset $\H\subset \I$ we write $\a_\H$ to denote the restriction of $\a$ to elements of $\H$.

\begin{definition}[Rectangles]\duallabel{def:rect}
For every $\I\subseteq \F$, every $\c, \d\in [0, 1]^\I, \c< \d$ the set 
$$
\rect(\I, \c, \d):=\{y\in [m]^n: \langle y, \i\rangle \pmod{M}\in \left[\c_\i, \d_\i\right)\cdot M \text{~for all~}\i \in \I\}
$$
is called a rectangle. 
\end{definition}

It is useful to introduce a more lightweight intermediate definition of rectangles with all side lengths equal to a parameter $\Delta$ -- see Definition~\dualref{def:cubes} below. This definition is useful since we can express every rectangle with coordinates divisible by $\Delta$ as a disjoint union of cubes, and at the same time cubes are somewhat more compact to represent, and will serve as our basic building blocks in what follows.

\begin{definition}[Cubes]\duallabel{def:cubes}
For every $\I\subseteq \F$, every $\a\in \deltagridInt^\I$ we let
$$
\rect(\I, \a)=\{y\in [m]^n: \langle y, \i\rangle \pmod{M}\in  [\a_\i, \a_\i+\Delta)\cdot M \text{~for all~}\i \in \I\}
$$
denote a rectangle with all side lengths equal to $\Delta$. 
\end{definition}

\begin{claim}[Decomposition into subcubes]\duallabel{cl:subcubes-decomp}
For every $\I, \H\subseteq \F, \I\cap \H=\emptyset$, every $\a, \b\in (\deltagrid)^{\I\cup \H}, \a<\b,$  the rectangle $F=\rect(\I\cup \H, \a, \b)$ satisfies
$$
F=\bigcup_{\f\in Q} \rect(\I\cup \H, (\f, \a_\H), (\f+\Delta\cdot \mathbf{1}_\I, \b_\H)),
$$
where 
$$
Q=\{0, \Delta,2 \Delta, \ldots, 1-\Delta\}^\I\cap \prod_{\i\in \I} [\a_\i, \b_\i).
$$ In particular, $|Q|=\Delta^{-|\I|}\prod_{\i\in \I} (\b_\I-\a_\I)$.
\end{claim}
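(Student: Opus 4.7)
The plan is to unpack the definition of $\rect(\I\cup \H, \a, \b)$ and partition the constraint intervals along each coordinate $\i\in \I$ into length-$\Delta$ pieces, relying on the assumption that $\a,\b\in (\Delta\cdot\mathbb{Z}\cap [0,1])^{\I\cup \H}$ so that everything aligns with the grid. First, I would recall that by Definition~\dualref{def:rect},
$$F=\rect(\I\cup \H,\a,\b)=\{y\in [m]^n : \langle y,\i\rangle\!\!\pmod M\in [\a_\i,\b_\i)\cdot M \text{ for all }\i\in \I\cup \H\}.$$
The coordinates indexed by $\H$ will be held fixed throughout the decomposition, so the core of the argument is the observation that for each $\i\in \I$, since $\a_\i,\b_\i$ are multiples of $\Delta$,
$$[\a_\i,\b_\i)=\bigsqcup_{\substack{f\in \Delta\cdot \mathbb{Z}\\ \a_\i\le f<\b_\i}} [f,f+\Delta),$$
i.e., the interval splits into exactly $(\b_\i-\a_\i)/\Delta$ disjoint sub-intervals, each of length $\Delta$.

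Taking the product of these one-dimensional partitions over all $\i\in \I$ yields a disjoint partition of $\prod_{\i\in \I}[\a_\i,\b_\i)$ into boxes indexed by $\f\in Q$, where $Q=\{0,\Delta,\ldots,1-\Delta\}^\I\cap \prod_{\i\in \I}[\a_\i,\b_\i)$. Translating this back into the rectangle language, a point $y\in [m]^n$ belongs to $F$ iff there is a (unique) $\f\in Q$ such that $\langle y,\i\rangle\!\!\pmod M\in [f_\i,f_\i+\Delta)\cdot M$ for every $\i\in \I$ and $\langle y,\i\rangle\!\!\pmod M\in [\a_\i,\b_\i)\cdot M$ for every $\i\in \H$. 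The first condition is exactly $y\in \rect(\I\cup \H,(\f,\a_\H),(\f+\Delta\cdot\mathbf{1}_\I,\b_\H))$ restricted to the $\I$-coordinates, and the second is exactly its $\H$-coordinates. Hence
$$F=\bigcup_{\f\in Q}\rect(\I\cup \H,(\f,\a_\H),(\f+\Delta\cdot\mathbf{1}_\I,\b_\H)),$$
and the union is in fact disjoint, since the boxes $[\f_\i,\f_\i+\Delta)$ are pairwise disjoint in each coordinate $\i\in \I$.

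For the count $|Q|=\Delta^{-|\I|}\prod_{\i\in \I}(\b_\i-\a_\i)$, the number of grid points of $\Delta\cdot\mathbb{Z}$ inside $[\a_\i,\b_\i)$ is $(\b_\i-\a_\i)/\Delta$, which is an integer because both endpoints lie on $\Delta\cdot\mathbb{Z}$; multiplying across $\i\in \I$ gives the claim. There is no real obstacle here---the only subtlety is making sure to exploit property~\dualref{p3}, which guarantees $\Delta\mid M/w$, so that the intervals of the form $[f,f+\Delta)\cdot M$ correspond to genuine ranges of values attained by $\langle y,\i\rangle\!\!\pmod M$ on the vertices of the hypercube, and thus the subcubes $\rect(\I\cup \H,(\f,\a_\H),(\f+\Delta\cdot\mathbf{1}_\I,\b_\H))$ are well-defined as per Definition~\dualref{def:cubes} (with the appropriate extension to the $\H$-coordinates).
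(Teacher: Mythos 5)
Your proof is correct and follows essentially the same route as the paper: unpack Definition~\ref{def:rect-full}, split each coordinate-$\i$ constraint ($\i\in\I$) into disjoint length-$\Delta$ subintervals using that $\a_\i,\b_\i$ lie on the $\Delta$-grid, take the product over $\i\in\I$, and recognize each piece as a cube. You additionally spell out the cardinality count and note why \dualref{p3} matters, which the paper leaves implicit but which is a harmless bit of extra detail.
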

\begin{proof}
Recall that by Definition~\dualref{def:rect} one has 
$$
\rect(\I\cup \H, \a,  \b)=\{y\in [m]^n: \langle y, \i\rangle \pmod{M}\in \left[\a_\i, \b_\i\right)\cdot M \text{~for all~}\i \in \I\cup \H\},
$$
which means that 
\begin{equation*}
\begin{split}
\{y\in [m]^n:& \langle y, \i\rangle \pmod{M}\in \left[\a_\i, \b_\i\right)\cdot M \text{~for all~}\i \in \I\cup \H\}\\
=\{y\in [m]^n:& \langle y, \i\rangle \pmod{M}\in \left[\a_\i, \b_\i\right)\cdot M \text{~for all~}\i \in \I\\
&\text{and}\\
& \langle y, \i\rangle \pmod{M}\in \left[\a_\i, \b_\i\right)\cdot M \text{~for all~}\i \in \H\}\\
=\bigcup_{\f\in Q} \{y\in [m]^n:& \langle y, \i\rangle \pmod{M}\in \left[\f_\i, \f_\i+\Delta\right)\cdot M \text{~for all~}\i \in \I\\
&\text{and}\\
& \langle y, \i\rangle \pmod{M}\in \left[\a_\i, \b_\i\right)\cdot M \text{~for all~}\i \in \H\},\\
\end{split}
\end{equation*}
where 
$$
Q=\{0, \Delta,2 \Delta, \ldots, 1-\Delta\}^\I\cap \prod_{\i\in \I} [\a_\i, \b_\i).
$$ It remains to note that for every $\f\in Q$ one has
\begin{equation*}
\begin{split}
\{y\in [m]^n:& \langle y, \i\rangle \pmod{M}\in \left[\f_\i, \f_\i+\Delta\right)\cdot M \text{~for all~}\i \in \I\\
&\text{and}\\
& \langle y, \i\rangle \pmod{M}\in \left[\a_\i, \b_\i\right)\cdot M \text{~for all~}\i \in \H\}\\
&=\rect(\I\cup \H, (\f, \a_\H), (\f+\Delta\cdot \mathbf{1}_\I, \b_\H)).
\end{split}
\end{equation*}
\end{proof}

As mentioned below, cubes will serve as our basic building blocks. For example, the local permutation map $\Pi_{R'\to R}$ (see Definition~\dualref{def:pi} in Section~\dualref{sec:loc-pi} below) is defined on individual cubes and then extended to a global map $\Pi^*$ (see Definition~\dualref{def:pi-global}), ultimately letting us define the glueing map $\tau$ (see Definition~\dualref{def:tau} below).

\begin{lemma}[Bounds on sizes of rectangles]\duallabel{lm:rect-size}
For every $\I\subseteq \F$ such that $|\I|\leq K^2$, for every $\c, \d\in (\deltagrid)^\I, \c< \d$, 
$$
\gamma=\prod_{\i\in \I} (\d_\i-\c_\i)
$$
and 
$$
R=\rect(\I, \c, \d), 
$$
the following conditions hold:

\begin{description}
\item[(1)] the cardinality of $R$ is bounded as
$$
(1-\sqrt{\e})\gamma \leq |R|/m^n\leq (1+\sqrt{\e})\cdot \gamma
$$
\item[(2)] for every positive integer $\lambda\leq K$, if 
$$
R'=\{x\in R: \weight(x)\pmod{W}\in [0, 1/\lambda)\cdot W\},
$$ 
then the cardinality of $R'$ is bounded as
$$
\frac1{\lambda}\cdot (1-\sqrt{\e})\gamma \leq \left|R'\right|/m^n\leq \frac1{\lambda}\cdot (1+\sqrt{\e}) \gamma.
$$
\end{description}
\end{lemma}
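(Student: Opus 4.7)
The plan is to reduce statement (1) to the case of a single cube (i.e., $\d_\i-\c_\i=\Delta$ for every $\i\in\I$) using Claim~\dualref{cl:subcubes-decomp} applied with $\H=\emptyset$: any rectangle $\rect(\I,\c,\d)$ with $\c,\d$ on the $\Delta$-grid decomposes disjointly into $\gamma/\Delta^{|\I|}$ such cubes, so a multiplicative $(1\pm\sqrt{\e})$ bound on the size of each cube integrates to the same bound on $R$. Part (2) will be handled by treating the weight constraint as an additional, near-independent event that factors out with probability $1/\lambda$; I describe this at the end.

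For a single cube $R=\rect(\I,\a)$, I view $y$ as drawn uniformly from $[m]^n$ and estimate $\Pr[y\in R]$ by exploiting the near-orthogonality of $\F$. Define the set of ``shared'' coordinates
\[
A:=\bigl\{s\in [n]:|\{\i\in \I:\i_s=1\}|\geq 2\bigr\}.
\]
Since pairwise inner products in $\F$ are at most $\e w$, a union bound over pairs gives $|A|\leq \binom{|\I|}{2}\e w\leq K^4\e w$. Outside $A$ the supports of distinct $\i\in \I$ are disjoint, so conditional on $y_A$ the residuals $Z_\i:=\sum_{s\in\text{supp}(\i)\setminus A}y_s$ are mutually independent, each a sum of $w_\i\geq w(1-K^2\e)$ i.i.d.\ uniforms on $[m]$. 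By the standard Fourier bound on $\mathbb{Z}_M$---each nonzero character of a single $y_s\pmod M$ has magnitude $O(M/m)=n^{-\Omega(1)}$ and characters multiply across independent sums---the total variation distance between $Z_\i\pmod M$ and the uniform distribution on $[M]$ is $n^{-\Omega(n)}$. Since $\Delta M/w\in\mathbb{Z}$ by~\dualref{p2}--\dualref{p3}, the target interval $[\a_\i,\a_\i+\Delta)\cdot M$ has measure exactly $\Delta$ under the uniform distribution on $[M]$. Therefore the conditional probability that $\langle y,\i\rangle\pmod M\in [\a_\i,\a_\i+\Delta)\cdot M$ is $\Delta\pm n^{-\Omega(n)}$; multiplying over $\i\in\I$ and averaging over $y_A$ yields $|R|/m^n=\Delta^{|\I|}(1\pm n^{-\Omega(n)})$, comfortably inside $(1\pm\sqrt{\e})$ given~\dualref{p5}--\dualref{p7}.

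For (2), I use the same $y_A$-conditioning and treat the weight constraint as an additional, near-independent event. Since $W\mid M$ (a consequence of~\dualref{p2}), the event $\weight(y)\pmod W\in [0,1/\lambda)\cdot W$ translates into $\weight(y)\pmod M$ lying in a union of $M/W$ intervals of total mass $1/\lambda$. Crucially, $\weight(y)=\sum_s y_s$ is a sum over all $n$ coordinates, and in particular it draws contribution from the $\Theta(n)$ ``free'' coordinates in $[n]\setminus(A\cup\bigcup_{\i\in\I}\text{supp}(\i))$ that do not appear in any $\i$ (note $|\I|\,w\leq K^2\e n\ll n$). Conditional on all coordinates that do appear in some $\i$, the free-coordinate contribution to $\weight(y)$ is a sum of $\Theta(n)$ independent uniforms on $[m]$, whose distribution mod $W$ is uniform on $[W]$ up to Fourier error $n^{-\Omega(n)}$. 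Hence the weight constraint is satisfied with conditional probability $1/\lambda\pm n^{-\Omega(n)}$, independently of the rectangle constraints, which yields $|R'|/m^n=(1/\lambda)\,\gamma(1\pm\sqrt{\e})$.

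The main obstacle is bookkeeping the Fourier error estimates across the two distinct moduli $M$ and $W$, but in both cases the estimate reduces to the standard fact that a sum of many independent uniforms on $[m]$, reduced modulo a target that is polynomially smaller than $m=n^{20}$, equidistributes exponentially fast in the number of summands. The slack $\sqrt{\e}$ allowed by the statement, combined with the extreme smallness $\e\leq\Delta^{200K^2}$ from~\dualref{p5}--\dualref{p6}, means that no careful optimization of constants is needed; any $n^{-\Omega(n)}$-type error bound automatically fits well within the claimed $(1\pm\sqrt{\e})$ window.
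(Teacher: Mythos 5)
Your proof is correct, but it takes a genuinely different route from the paper's. The paper derives the bound by decomposing $R$ over a minimal $\I$-subspace cover, invoking Lemma~\ref{lm:rect-subspace-size-full} (proved by explicit counting of bounded integer shifts $t\in\mathbb{Z}^{\I}$) for the per-subspace intersection counts, and controlling the boundary points in $B$ separately; the $\sqrt{\e}$ slack in the statement is inherited from the $\e^{2/3}$-type errors in those subspace counts, which arise from the interplay between multiple nearly orthogonal shift directions inside one $\subspace_\I(x)$. You instead compute $|R|/m^n$ directly as a probability for uniform $y\in[m]^n$: conditioning on the (few) shared coordinates in $A$ makes the residuals $Z_\i$ exactly independent, and equidistribution of each $Z_\i\pmod M$ over a $\Delta$-aligned interval follows from a character-sum bound driven by $m\gg M$; the weight constraint factors out similarly via the free coordinates. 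Your version is self-contained, avoids the subspace machinery entirely, and in fact gives much sharper error bounds ($n^{-\Omega(n)}$ rather than $\sqrt{\e}$) — indeed since $M\mid m$ under the paper's parameter setting (\dualref{p0-full} together with $M=n/2$) the per-cube probability is exactly $\Delta^{|\I|}$, and the $\Delta$-cube decomposition (Claim~\dualref{cl:subcubes-decomp-full}) is not even strictly needed. What the Fourier route does not give you, though, is the subspace intersection count of Lemma~\ref{lm:rect-subspace-size-full}, which the paper needs independently (e.g., in Lemma~\ref{lm:large-matching-full}); the paper obtains this lemma essentially as a by-product of that larger investment, whereas you would need a separate argument for the subspace version. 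Two small book-keeping points worth tightening if you write this up: your bound on $|A|$ should read $|A|\le\binom{|\I|}{2}\e w\le K^4\e w$ (you wrote $w(1-K^2\e)$ for $w_\i$, which should be $w(1-K^4\e)$), and for part (2) you should state explicitly that $\lambda\mid W$ (automatic from $\lambda\le K$ and \dualref{p1-full}) so that $[0,W/\lambda)$ is an exact $1/\lambda$ fraction of $[W]$.
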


We now prove Lemma~\dualref{lm:size-bounds}, restated here for convenience of the reader:

\noindent{{\em {\bf Lemma~\dualref{lm:size-bounds}} (Restated)
One has
\begin{itemize}
\item[{\bf (1)}] For every $k\in [K/2+1]$ one has $|T_k|=(1\pm \sqrt{\e})\cdot |T_0|(1-k/K)$;
\item[{\bf (2)}] For every $k\in [K/2]$ one has  $|S_k|=(1\pm \sqrt{\e}) \cdot |T_0|/K$;
\item[{\bf (3)}] For every $k\in [K/2]$, every $\j\in \B_k$ one has $|S^\j_k|=(1\pm \sqrt{\e})(1-\frac1{K-k})|T_0|/K$.
\item[{\bf (4)}] For every $k\in [K/2]$, every $\j\in \B_k$ one has $|T^\j_k|=(1\pm \sqrt{\e})(1-\frac{k+1}{K})|T_0|$.
\end{itemize}
}
\begin{proof}
We start with {\bf (1)}. Let $R=\rect(\J, \c, \d)$, where $\J=\J_{< k}$ and for every $s=0,\ldots, k-1$ one has $\c_{\j_s}=0$ and $\d_{\j_s}=1-\frac1{K-s}$, and note that $R=T_k$ by~\dualeqref{eq:def-tk-all-constraints} . By Lemma~\dualref{lm:rect-size}, {\bf (1)}, one has 
$$
(1-\sqrt{\e})\gamma \leq |\rect(\I, \c, \d)|/m^n\leq (1+\sqrt{\e})\cdot \gamma,
$$
where 
\begin{equation*}
\begin{split}
\gamma&=\prod_{\i\in \I} (\d_\i-\c_\i)\\
&=\prod_{s=0}^{k-1} \left(1-\frac1{K-s}\right)\\
&=\prod_{s=0}^{k-1} \frac{K-s-1}{K-s}\\
&=\frac{K-(k-1)-1}{K}\\
&=1-k/K,
\end{split}
\end{equation*}
as required.  The proof of {\bf (4)} is analogous.

We now prove {\bf (2).} Let $R=T_k=\rect(\I, \c, \d)$, where $\I=\J_{< k}$ and for every $s=0,\ldots, k-1$ one has $\c_{\j_s}=0$ and $\d_{\j_s}=1-\frac1{K-s}$.  Let
$$
R':=\{x\in R: \weight(x)\pmod{W}\in [0, 1/(K-k))\cdot W\},
$$
and note that $R'=S_k$ by~\dualeqref{eq:def-sk}.  Then by Lemma~\dualref{lm:rect-size}, {\bf (2)}, with $\lambda=K-k$ and 
$$
\gamma=\prod_{s=0}^{k-1} (\d_{\j_s}-\c_{\j_s})=\prod_{s=0}^{k-1} \left(1-\frac1{K-s}\right)=1-\frac{k}{K}
$$
we have
\begin{equation*}
\frac1{K-k}\cdot (1-\sqrt{\e})\left(1-\frac{k}{K}\right)\leq \left|R'\right|/m^n\leq \frac1{K-k}\cdot (1+\sqrt{\e}) \left(1-\frac{k}{K}\right).
\end{equation*}
Simplifying, we get
\begin{equation*}
(1-\sqrt{\e})\frac1{K}\leq \left|R'\right|/m^n\leq  (1+\sqrt{\e}) \frac1{K},
\end{equation*}
as required.

We now prove {\bf (3)}. Similarly to {\bf (2)}, let $R=T_k=\rect(\I, \c, \d)$, where $\I=\J_{< k}\cup \{\j\}$. For every $s=0,\ldots, k-1$ one has $\c_{\j_s}=0$ and $\d_{\j_s}=1-\frac1{K-s}$.  Also let $\c_\j=0$  and $\d_\j=1-\frac1{K-k}$. Let
$$
R':=\{x\in R: \weight(x)\pmod{W}\in [0, 1/(K-k))\cdot W\},
$$
and note that $R'=S_k^\j$ by~\dualeqref{eq:skj}.  Then by Lemma~\dualref{lm:rect-size}, {\bf (2)}, with $\lambda=K-k$ and 
$$
\gamma=(\d_\j-\c_\j)\cdot \prod_{s=0}^{k-1} (\d_{\j_s}-\c_{\j_s})=\left(1-\frac{1}{K-k}\right)\prod_{s=0}^{k-1} \left(1-\frac1{K-s}\right)=\left(1-\frac{1}{K-k}\right)\left(1-\frac{k}{K}\right)
$$
we have
\begin{equation*}
\frac1{K-k}\cdot (1-\sqrt{\e})\left(1-\frac{1}{K-k}\right)\left(1-\frac{k}{K}\right)\leq \left|R'\right|/m^n\leq \frac1{K-k}\cdot (1+\sqrt{\e}) \left(1-\frac{1}{K-k}\right)\left(1-\frac{k}{K}\right).
\end{equation*}
Simplifying, we get
\begin{equation*}
(1-\sqrt{\e})\left(1-\frac{1}{K-k}\right)\frac1{K}\leq \left|R'\right|/m^n\leq  (1+\sqrt{\e}) \left(1-\frac{1}{K-k}\right)\frac1{K},
\end{equation*}
as required.

\end{proof}

\subsection{Interior and exterior of a rectangle}
The main difference between our main construction in this section and the toy construction from Section~\ref{sec:toy-construction} is the fact that vectors in $\F$ are not orthogonal, but merely have small dot products. As a consequence, we generally need to introduce some `padding' to our construction to obtain the same induced properties as we did in the original construction. For example, note that for the basic Lemma~\ref{lm:disjoint} that shows that edge sets $E_{k, j}$ defined in~\eqref{eq:edges-ei-def} are disjoint for distinct $j\in \B_k$ it was sufficient to ensure that we have introduce a complete bipartite graph  between $\text{line}_j(y)\cap S_k^j$ and $\text{line}_j(y) \cap (T_k\setminus T_k^j)$ -- the fact that (the downset of) $T_k^j$ is subtracted in the second set was enough to guarantee disjointness. To ensure similar property with nearly orthogonal vectors, however, one must include some `margin of error' in the construction -- this is why the corresponding definition in our main construction (see~\dualeqref{eq:edges-eij-def}) uses the interior $\Int_\delta(S_k^\j)$ as opposed to just $S_k^\j$. We define the interior now.

\begin{definition}[$\delta$-interior of (a downset of) a rectangle]\duallabel{def:int}
For $\I\subseteq \F$, $\c, \d \in (\deltagrid)^\I, \c<\d$, the $\delta$-interior $\Int_\delta(F)$ of the rectangle $F=\rect(\I, \c, \d)$ is defined as
$$
\Int_\delta(F)=\{y\in [m]^n\setminus B: \langle y, \i\rangle \pmod{M}\in \left[\c_\i+\delta, \d_\i-\delta\right)\cdot M \text{~for all~}\i\in \I\},
$$
where the set $B$ of boundary points is as in Definition~\dualref{def:boundary}.
For every $k\in [K/2]$ we define 
\begin{equation*}
\begin{split}
\Int_\delta(\downset_k(F))=\left\{y\in \Int_\delta(F): \weight(x) \in \left[0, \frac{1}{K-k}\right)\cdot W \pmod{ W}\right\}.
\end{split}
\end{equation*}

\end{definition}

The following simple claim is the rationale behind our definition of the interior of a rectangle:
\begin{lemma}[Vertex neighborhood of $\Int_\delta(R)$ is contained in $R$]\duallabel{lm:shift-int}
If $\e<\delta$, for every rectangle $R\subseteq [m]^n$, $R=(\I, \a, \b)$, $\a, \b\in (\deltagrid)^\I, \a<\b,$ every $\r\in \F\setminus \I$  for every integer $\lambda$ such that $|\lambda|\leq M/w$, for every $x\in \Int_\delta(R)$ one has $x+\lambda \r\in R$.
\end{lemma}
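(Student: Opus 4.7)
The plan is to take $x \in \Int_\delta(R)$ and $y = x + \lambda \r$, and verify both that $y \in [m]^n$ and that $\langle y, \i\rangle \bmod M$ lies in the correct range $[\a_\i, \b_\i)\cdot M$ for every $\i \in \I$. The proof is essentially an $\e$-vs-$\delta$ margin argument exploiting the near-orthogonality of $\F$.

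First I would check that $y \in [m]^n$. Since $x \in \Int_\delta(R)$ implies $x \notin B$ (by Definition~\dualref{def:int}), every coordinate of $x$ lies in $[n^2, m-n^2]$. Since $\r \in \{0,1\}^n$ and $|\lambda|\leq M/w$, each coordinate of $y$ differs from that of $x$ by at most $M/w$. By the parameter setting~\dualref{p0}--\dualref{p2} (exactly as in the argument following~\dualeqref{eq:923hg98h9fqrwh9ggygfdygd} in the proof of Claim~\dualref{cl:line-size}), $M/w \ll n^2$ once $n$ is sufficiently large as a function of the constants $K, L, \Delta, \delta, M/w$, so every coordinate of $y$ remains in $[0, m-1]$.

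Next, fix $\i \in \I$. Then
\[
\langle y, \i\rangle \;=\; \langle x, \i\rangle + \lambda \langle \r, \i\rangle.
\]
Because $\r, \i \in \F$ and $\r \neq \i$ (as $\r \notin \I$), the defining property of $\F$ gives $\langle \r, \i\rangle \leq \e w$. Combined with $|\lambda|\leq M/w$ this yields $|\lambda \langle \r, \i\rangle| \leq \e M$. Writing $\langle x, \i\rangle = qM + s$ with $s \in [\a_\i+\delta, \b_\i-\delta)\cdot M$, we have $\langle y, \i\rangle = qM + s + \lambda\langle\r,\i\rangle$ with $s + \lambda\langle\r,\i\rangle \in [\a_\i+\delta-\e, \b_\i-\delta+\e)\cdot M$. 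Since $\e < \delta$, this interval is contained in $[\a_\i, \b_\i)\cdot M \subseteq [0, M)$, so no wrap-around occurs when reducing modulo $M$, and
\[
\langle y, \i\rangle \bmod M \;\in\; [\a_\i, \b_\i)\cdot M.
\]
As this holds for every $\i \in \I$, we conclude $y \in \rect(\I, \a, \b) = R$ by Definition~\dualref{def:rect}.

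The only nontrivial step is ensuring the $\pmod M$ reduction behaves properly; this is exactly where the $\delta$-margin in the definition of the interior is used, and why the hypothesis $\e < \delta$ is needed. Everything else is bookkeeping with the parameter setting. Note that this lemma is the geometric substitute for the trivial observation (in the toy model of Section~\ref{sec:toy-construction}) that shifting along a coordinate direction $j \notin I$ leaves the other coordinates unchanged; here the ``other coordinates'' are replaced by the projections $\langle \cdot, \i\rangle$, which are only approximately preserved, and the $\delta$-interior absorbs the $\e M$ error.
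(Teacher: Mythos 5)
Your proof is correct and follows essentially the same route as the paper's: bound the perturbation $|\langle x+\lambda\r,\i\rangle-\langle x,\i\rangle|=|\lambda|\langle\r,\i\rangle\leq(M/w)\cdot\e w=\e M<\delta M$ using near-orthogonality of $\F$, absorb it into the $\delta$-margin of $\Int_\delta(R)$, and separately check $x+\lambda\r\in[m]^n$ via the non-boundary condition $x\notin B$. Your extra remark about why the $\bmod\,M$ reduction causes no wrap-around is an explicit spelling-out of a step the paper leaves implicit, but the argument is the same.
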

\begin{proof}
For every $\i\in \I$ one has 
$$
\left|\langle x+\lambda\cdot \r, \i\rangle-\langle x, \i\rangle\right|=|\lambda|\cdot \langle \r, \i\rangle\leq (M/w)\cdot \e \cdot w\leq \e M<\delta M
$$
since $\r\in \F\setminus \I$ by assumption of the lemma. Since 
$$
x\in \Int_\delta(F)=\{y\in [m]^n\setminus B: \langle y, \i\rangle \pmod{M}\in \left[\c_\i+\delta, \d_\i-\delta\right)\cdot M \text{~for all~}\i\in \I\}
$$
by assumption, we get that 
$$
x+\lambda \cdot \r\in \{y\in [m]^n: \langle y, \i\rangle \pmod{M}\in \left[\c_\i, \d_\i\right)\cdot M \text{~for all~}\i\in \I\}=F,
$$
as required. Note that the assumption that $x\in \Int_\delta(F)\subseteq [m]^n\setminus B$ is used to ensure that for every $j\in [n]$ one has $0\leq (x+\lambda \cdot \r)_j <m$, and therefore $x+\lambda \cdot \r\in [m]^n$. Indeed, we have
$$|(x+\lambda \cdot \r)_j-x_j|\leq |\lambda|\leq M/w,$$
and therefore since $n^2\leq x_j\leq m-n^2$ by assumption that $x\not \in B$, together with the fact that $M/w$ is a constant depending on $K$ (by~\dualref{p0}, \dualref{p1} and~\dualref{p2}) and $n$ is sufficiently large, we get that $0\leq (x+\lambda \cdot \r)_j <m$.
\end{proof}

We also define
\begin{definition}[$\delta$-exterior of a rectangle]\duallabel{def:exterior}
For $\I\subseteq \F$, $\c, \d\in (\deltagrid)^\I, \c<\d$, the $\delta$-exterior $\Ext_\delta(F)$ of the rectangle $F=\rect(\I, \c, \d)$ is defined as follows. 

If $\c_\i>\delta$, then 
$$
\Ext_\delta(F)=\{y\in [m]^n: \langle y, \i\rangle \pmod{M}\in \left[\c_\i-\delta, \d_\i+\delta\right)\cdot M \},
$$
and
$$
\Ext_\delta(F)=\{y\in [m]^n: \langle y, \i\rangle \pmod{M}\in \left[0, \d_\i+\delta\right)\cdot M\cup \left[1-\delta+\c_\i, 1\right)\cdot M \}
$$
otherwise.
\end{definition}

The interior (resp. exterior) of a rectangle is quite close to the rectangle itself in terms of size, i.e. there are few points on the boundary (under appropriate conditions):
\begin{lemma}\duallabel{lm:rect-int-size}
For every $\I\subseteq \F, |\I|\leq K^2$, for every $\c, \d\in (\deltagrid)^\I, \c<\d$, if $R=\rect(\I, \c, \d)$, one has
$$
|R\setminus \Int_\delta(R)|\leq \sqrt{\delta} |R|
$$
and 
$$
|\Ext_\delta(R)\setminus R|\leq \sqrt{\delta}|R|
$$
\end{lemma}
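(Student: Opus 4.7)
The plan is to express $R\setminus \Int_\delta(R)$ and $\Ext_\delta(R)\setminus R$ as unions, over the at most $K^2$ constraint directions $\i\in \I$, of a bounded number of thin ``boundary slabs.'' Each slab differs from $R$ only in that the allowed range $[\c_\i,\d_\i)$ along direction $\i$ is replaced by a sub-interval of length $\delta$. I then estimate the size of each slab directly and finish by a union bound, using the parameter setting \dualref{p5} ($\delta\leq \Delta^{100K^2}$) to pay off the $|\I|/\Delta$ loss incurred by the union bound.

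More concretely, by Definition~\dualref{def:int}, a point $y\in R$ fails to lie in $\Int_\delta(R)$ iff either $y\in B$ or there exists $\i\in \I$ with $\langle y,\i\rangle \pmod{M}\in [\c_\i,\c_\i+\delta)\cdot M$ or $[\d_\i-\delta,\d_\i)\cdot M$. For each such slab $S_\i^\pm$ I argue that
\[
|S_\i^{\pm}|\leq (1+\sqrt{\e})\cdot \delta\cdot \prod_{\j\in \I\setminus\{\i\}} (\d_\j-\c_\j)\cdot m^n,
\]
which follows by the same line of reasoning as Lemma~\dualref{lm:rect-size}, namely by partitioning $[m]^n$ into lines in direction $\i$ (using Lemma~\dualref{lm:line-partition}) and applying Claim~\dualref{cl:line-size}, parts {\bf (2)} and {\bf (3)}, which give exactly $\delta\cdot M/w$ points in the slab on each non-boundary line. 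Combined with the lower bound $|R|\geq (1-\sqrt{\e})\prod_{\j\in \I}(\d_\j-\c_\j)\cdot m^n$ from Lemma~\dualref{lm:rect-size}, {\bf (1)}, this gives $|S_\i^\pm|/|R|\leq \tfrac{1+\sqrt{\e}}{1-\sqrt{\e}}\cdot \delta/(\d_\i-\c_\i)\leq 2\delta/\Delta$, using $\d_\i-\c_\i\geq \Delta$ and $\sqrt{\e}\leq \delta\leq 1/3$ from \dualref{p6} and \dualref{p5}.

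Taking a union bound over the $2|\I|\leq 2K^2$ slabs and adding the negligible boundary contribution $|B\cap R|/|R|\leq (|B|/m^n)/((1-\sqrt{\e})\Delta^{K^2})\leq n^{-10}\cdot \Delta^{-K^2}$ (from Claim~\dualref{cl:boundary-points}) gives
\[
|R\setminus \Int_\delta(R)|\leq \left(\frac{4K^2\delta}{\Delta}+\frac{1}{n^{10}\Delta^{K^2}}\right)|R|.
\]
By \dualref{p5} we have $\delta\leq \Delta^{100K^2}$, so $4K^2\delta/\Delta \ll \sqrt{\delta}$ for $K$ above an absolute constant, and the boundary term is also much smaller for $n$ large enough. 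This establishes the first bound. The second bound on $|\Ext_\delta(R)\setminus R|$ is entirely analogous: this set is the union, over $\i\in \I$, of the two outer slabs $\{\langle y,\i\rangle\pmod M \in [\c_\i-\delta,\c_\i)\cup [\d_\i,\d_\i+\delta)\}\cdot M$ (with the modular wrap-around near $0$ handled by the case split in Definition~\dualref{def:exterior}), each of which admits the identical line-based size estimate.

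The one technical point to watch is that the slab endpoints $\c_\i+\delta$ and $\d_\i-\delta$ are not on the $\Delta$-grid required in Lemma~\dualref{lm:rect-size}, so one cannot apply that lemma as a black box to each slab. This is the main obstacle, and I would circumvent it by the direct line-partition computation sketched above: Claim~\dualref{cl:line-size}, parts {\bf (2)} and {\bf (3)}, applies with $c\in \Delta\cdot \mathbb{Z}$ and interval width exactly $\delta$, which is precisely the structure of each slab. The remaining work is bookkeeping: one counts the lines in direction $\i$ that meet $\Int_\delta$ of the $(|\I|-1)$-dimensional ``slice rectangle'' in the orthogonal directions and argues, as in the proof of Lemma~\dualref{lm:rect-size}, that the number of such lines is $(1\pm\sqrt{\e})\prod_{\j\neq\i}(\d_\j-\c_\j)\cdot m^n/(M/w)$, up to the negligible boundary set $B$.
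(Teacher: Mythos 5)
Your proposal follows the same high-level route as the paper: decompose $R\setminus\Int_\delta(R)$ into per-direction boundary slabs indexed by $\i\in\I$, estimate each slab via the partition into lines in direction $\i$ together with Claim~\dualref{cl:line-size} parts \textbf{(2)} and \textbf{(3)}, union bound over $\i$, handle the boundary set $B$ via Claim~\dualref{cl:boundary-points}, and pay off the remaining loss using \dualref{p5}. The one substantive difference is that you try to retain the $(|\I|-1)$-dimensional cross-section of $R$ in the slab estimate, aiming for the tighter bound $|S_\i^\pm|\leq(1+\sqrt{\e})\delta\prod_{\j\neq\i}(\d_\j-\c_\j)m^n$, whereas the paper simply dominates the slab by the full one-dimensional boundary band $R_\i\setminus\Int_\delta(R_\i)$ with $R_\i=\rect(\{\i\},\c_\i,\d_\i)$ --- i.e., it drops all constraints for $\j\neq\i$, gets the cruder bound $|R_\i\setminus\Int_\delta(R_\i)|\leq 3\delta m^n$, and pays the extra factor $2\Delta^{-|\I|}$ only once at the very end via Lemma~\dualref{lm:rect-size}, \textbf{(1)}.

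Your extra refinement is unnecessary (\dualref{p5} comfortably absorbs even $\Delta^{-|\I|}$), and it is exactly where your sketch has a gap. The bookkeeping step, ``count the lines in direction $\i$ that meet $\Int_\delta$ of the $(|\I|-1)$-dimensional slice rectangle,'' produces a \emph{lower} bound on the number of relevant lines, not the upper bound you need to bound $|S_\i^\pm|$ from above. Concretely, if $y\in S_\i^\pm\subseteq R$ and $x\in C_\i$ is the representative of $\text{line}_\i(y)$, then $x=y-\lambda\i$ with $|\lambda|\leq 2M/w$ by Claim~\dualref{cl:line-size}, \textbf{(1)}, so for $\j\in\I\setminus\{\i\}$ one has $|\langle x,\j\rangle-\langle y,\j\rangle|\leq (2M/w)\cdot\e w=2\e M<\delta M$ by \dualref{p6}. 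Thus $x$ lands in the $\delta$-\emph{exterior} of the slice rectangle, not its interior, and lines whose representative lies outside $\Int_\delta$ of the slice can still meet $S_\i^\pm$. To upper bound $|S_\i^\pm|$ you would have to count lines meeting $\Ext_\delta$ of the slice (and then absorb an extra $(1+O(\sqrt{\delta}))$ factor), which is more work. Since you already flag the grid misalignment as ``the one technical point to watch,'' the cleanest resolution is precisely what the paper does: discard the cross-section, bound the slab by $|R_\i\setminus\Int_\delta(R_\i)|\leq 3\delta m^n$ (which has none of these issues because $R_\i$ only constrains direction $\i$), and absorb the $\Delta^{-|\I|}$ loss via \dualref{p5}.
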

\begin{proof}
We start by proving {\bf (1)}. We have
\begin{equation}\duallabel{eq:9hgh2ttDFE}
\begin{split}
|R\setminus \Int_\delta(R)|&=\left|\left\{x\in R: \langle x, \i\rangle \pmod{M}\in \left(\left[\c_\i, \c_\i+\delta) \cup [\d_\i-\delta, \d_\i\right)\right)\cdot M\text{~for some~}\i\in \I\right\}\right|\\
&\leq \sum_{\i\in \I} \left|\left\{x\in R: \langle x, \i\rangle \pmod{M}\in \left(\left[\c_\i, \c_\i+\delta) \cup [\d_\i-\delta, \d_\i\right)\right)\cdot M\right\}\right|\\
&\leq \sum_{\i\in \I} |R_\i\setminus \Int_\delta(R_\i)|,
\end{split}
\end{equation}
where we let $R_\i:=\rect(\{\i\}, \c_\i, \d_\i)$ to simplify notation.

We now fix $\i\in \I$ and upper bound $|R_\i\setminus \Int_\delta(R_\i)|$.
Let $C\subset [m]^n$ be a minimal $\{\i\}$-subspace cover (see Definition~\dualref{def:subspace-cover}). Fix $x\in C$. Recall that
$$
\text{line}_\i(x)=\left\{x'\in [m]^n: x'=x+\lambda\cdot \i\text{~for~some~integer~}\lambda\text{~s.t.~}\left\lfloor\langle x', \i\rangle/M\right\rfloor =\left\lfloor \langle x, \i\rangle/M \right\rfloor \right\}.
$$
By Claim~\dualref{cl:line-size}, {\bf (2)} and {\bf (3)}, we have for $x\in C\setminus B$
\begin{equation}
\begin{split}
|\text{line}_\i(x)\cap (R_\i\setminus \Int_\delta(R_\i))|&=\left|\{y\in \text{line}_\i(x): \langle y, \i\rangle \pmod{M}\in [\c_\i, \c_\i+\delta)\cdot M\}\right|\\
&+\left|\{y\in \text{line}_\i(x): \langle y, \i\rangle \pmod{M}\in [\d_\i-\delta, \d_\i)\cdot M\}\right|\\
&=2\delta\cdot M/w,\\
\end{split}
\end{equation}
where we used the fact that $\text{line}_\i(x)\subset [m]^n$ for all $x\in [m]^n\setminus B$.

Summing over all $x\in C$, we thus get
\begin{equation}\duallabel{eq:923hg9hg9hHfsf}
\begin{split}
|R_\i\setminus \Int_\delta(R_\i)|&=\sum_{x\in C} |\text{line}_\i(x)\cap  (R_\i\setminus \Int_\delta(R_\i))|\\
&=\sum_{x\in C\setminus B} |\text{line}_\i(x)\cap  (R_\i\setminus \Int_\delta(R_\i))|+\sum_{x\in B} |\text{line}_\i(x)\cap  (R_\i\setminus \Int_\delta(R_\i))|\\
&\leq 2\delta (M/w)\cdot |C\setminus B|+\sum_{x\in B} |\text{line}_\i(x)\cap  (R_\i\setminus \Int_\delta(R_\i))|.
\end{split}
\end{equation}

We now note that since $|\text{line}_\i(x)|=M/w$ for every $x\in [m]^n \setminus B$  by Claim~\dualref{cl:line-size}, {\bf (1)}, we have
\begin{equation*}
\begin{split}
|C\setminus B|&=(M/w)^{-1} \sum_{x\in C\setminus B} |\text{line}_\i(x)|\\
&\leq (M/w)^{-1} \sum_{x\in C} |\text{line}_\i(x)|\\
&\leq (M/w)^{-1} m^n.
\end{split}
\end{equation*}

Substituting this into~\dualeqref{eq:923hg9hg9hHfsf}, we get
\begin{equation*}
\begin{split}
|R_\i\setminus \Int_\delta(R_\i)|&\leq 2\delta (M/w)\cdot |C\setminus B|+\sum_{x\in B} |\text{line}_\i(x)\cap  (R_\i\setminus \Int_\delta(R_\i))|\\
&\leq 2\delta (M/w)\cdot |C\setminus B|+\sum_{x\in B} |\text{line}_\i(x)|\\
&\leq 2\delta m^n+(M/w)\cdot \frac1{n^{10}} \cdot m^n\\
&\leq 3\delta m^n,
\end{split}
\end{equation*}
where the third transition uses the fact that 
$|\text{line}_\i(x)|\leq M/w$ for every $x\in [m]^n \setminus B$ by Claim~\dualref{cl:line-size}, {\bf (1)}. 

Combining the above with~\dualeqref{eq:9hgh2ttDFE}, we get
\begin{equation*}
\begin{split}
|R\setminus \Int_\delta(R)|&\leq \sum_{\i\in \I} |R_\i\setminus \Int_\delta(R_\i)|\\
&\leq 3\delta \cdot |\I|\cdot m^n\\
&\leq 3\delta \cdot |\I|\cdot 2\Delta^{-|\I|} |R|\\
&\leq \sqrt{\delta} |R|,\\
\end{split}
\end{equation*}
as required. The third transition use the fact that by Lemma~\dualref{lm:rect-size}, {\bf (1)} one has
$(1-\sqrt{\e}) \Delta^{|\I|} \leq |R|/m^n\leq (1+\sqrt{\e}) \Delta^{|\I|}$
as well as the assumption that $\e$ is smaller than an absolute constant (smaller than $1/4$ suffices here). The forth transition uses the assumption that $|\I|\leq K^2$ together with the assumption that $\delta<\Delta^{100K^2}$ by \dualref{p5}.

The proof of {\bf (2)} is similar and we omit the details.
\end{proof}

\subsection{Subspaces and their properties}

We now introduce the notion of subspaces, our main tool in defining the local permutation map $\Pi$, and ultimately the map $\tau$ glueing together two basic gadgets (see Section~\dualref{sec:loc-pi} and Section~\dualref{sec:tau} below). We first introduce

\begin{definition}[Block of $x$ with respect to a sequence of vectors $\J$]
For a subset $\J\subset \F$ we let $\block_\J(x):=(\lfloor \langle x, \j\rangle/M\rfloor)_{\j\in \J}$.
\end{definition}

\begin{definition}[Subspace of $x$]\duallabel{def:subspace}
For every subset $\I\subseteq \F$ for every $x\in [m]^n$ define
\begin{equation*}
\begin{split}
\subspace_\I(x)&:=\left\{x'\in [m]^n: x'=x+\sum_{\i\in \I} t_\i \cdot \i\text{~~for~}t\in \mathbb{Z}^\I\right.\\
&\left.\text{~~~~~~s.t.~}\block_\I(x)=\block_\I(x')\text{~and~}||t||_\infty\leq 2M/w\right\}.
\end{split}
\end{equation*}
\end{definition}

The more lightweight definition of lines used in Section~\dualref{sec:g-ell} to define the edge set $E^\ell$ of our basic gadget $G$ in fact coincides with a one-dimensional subspace as per Definition~\dualref{def:subspace}, as we show below. This lets us reuse claims about subspaces:
\begin{claim}\duallabel{cl:line-vs-subspace}
For every $\j\in \F$, then for every $x\in [m]^n$ one has $\text{line}_\j(x)=\subspace_{\{\j\}}(x)$, where $\text{line}_\j(x)$ is as per Definition~\dualref{def:line}.
\end{claim}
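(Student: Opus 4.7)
The plan is to prove the two inclusions separately and note that the nontrivial direction reduces directly to a bound already established in Claim~\dualref{cl:line-size}.

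The inclusion $\subspace_{\{\j\}}(x)\subseteq \text{line}_\j(x)$ is immediate from the two definitions: taking $\I=\{\j\}$ in Definition~\dualref{def:subspace}, every $x'\in \subspace_{\{\j\}}(x)$ has the form $x'=x+t\cdot \j$ for some $t\in \mathbb{Z}$ with $\block_\j(x')=\block_\j(x)$ (the constraint $|t|\leq 2M/w$ is an extra restriction, not a weakening), which is exactly the defining condition of $\text{line}_\j(x)$ in Definition~\dualref{def:line}.

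For the reverse inclusion $\text{line}_\j(x)\subseteq \subspace_{\{\j\}}(x)$, I would appeal to Claim~\dualref{cl:line-size}, {\bf (1)}, which has already been established earlier in the excerpt: it says that any $y\in \text{line}_\j(x)$ can be written as $y=x+\lambda\cdot \j$ for some integer $\lambda$ with $|\lambda|\leq 2M/w$. Combined with the block-equality condition inherited from membership in the line, this places $y$ in $\subspace_{\{\j\}}(x)$.

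I do not anticipate any real obstacle here, since this claim is really just verifying that the more general notion of a subspace specialises correctly in the one-dimensional case; the only subtle point is the $\ell_\infty$ bound on $t$, and that is precisely the content of Claim~\dualref{cl:line-size}, {\bf (1)}. The proof should be two short paragraphs, one for each inclusion.
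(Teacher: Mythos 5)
Your proof is correct and follows essentially the same logic as the paper's: the only content of the claim is that the $|\lambda|\leq 2M/w$ constraint in the subspace definition is redundant once the block-equality constraint is imposed. The paper establishes this by an inline one-line calculation showing that $|\lambda|>2M/w$ forces $\lfloor\langle x',\j\rangle/M\rfloor\neq\lfloor\langle x,\j\rangle/M\rfloor$, whereas you cite the ``furthermore'' clause of Claim~\ref{cl:line-size-full},~\textbf{(1)}, which records exactly this bound (and whose proof is the same calculation). Both are valid; yours is slightly more economical by reusing an already-proved fact, and there is no circularity since Claim~\ref{cl:line-size-full} precedes this claim and does not depend on the subspace machinery.
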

\begin{proof}
We have by Definition~\dualref{def:line}
$$
\text{line}_\j(x)=\left\{x'\in [m]^n: x'=x+\lambda\cdot \j\text{~for~some~integer~}\lambda\text{~s.t.~}\left\lfloor\langle x', \j\rangle/M\right\rfloor =\left\lfloor \langle x, \j\rangle /M\right\rfloor \right\}.
$$
and by Definition~\dualref{def:subspace}
\begin{equation*}
\begin{split}
\subspace_{\{\j\}}(x)&:=\left\{x'\in [m]^n: x'=x+ \lambda \cdot \j\text{~~for~}\lambda\in \mathbb{Z}\right.\\
&\left.\text{~~~~~~s.t.~}\block_{\{\j\}}(x)=\block_{\{\j\}}(x')\text{~and~}|\lambda|\leq 2M/w\right\}.
\end{split}
\end{equation*}

At the same time if $x'=x+\lambda\cdot \j$ for an integer $\lambda$, one has 
\begin{equation*}
\begin{split}
\langle x', \j\rangle=\langle x+\lambda\cdot \j, \j\rangle=\langle x, \j\rangle+\lambda\cdot w, 
\end{split}
\end{equation*}
so if $|\lambda|>2M/w$ (for example, when $\lambda>2M/w$; the other case is similar), one has 
\begin{equation*}
\begin{split}
\block_{\{\j\}}(x')&=\lfloor\langle x', \j\rangle/M\rfloor=\lfloor (\langle x, \j\rangle+2M)/M\rfloor=\lfloor \langle x, \j\rangle/M+2\rfloor\geq \lfloor \langle x, \j\rangle/M\rfloor+1=\block_{\{\j\}}(x)+1.
\end{split}
\end{equation*}
Thus, the constraint $|\lambda|\leq 2M/w$ is implied by the constraint $\block_{\{\j\}}(x')=\block_{\{\j\}}(x)$, and thus $\text{line}_\j(x)=\block_{\{\j\}}(x)$, as required.\end{proof}
\begin{remark}
We note that while Claim~\dualref{cl:line-vs-subspace} shows that the $\ell_\infty$ constraint in Definition~\dualref{def:subspace} is redundant when $|\I|=1$, it is not redundant for general $\I$, since the vectors in $\F$ are only nearly orthogonal.
\end{remark}

We show that subspaces partition $[m]^n$. This fact is key, and lets us define various maps (e.g., the local permutation map $\Pi$, see~Section~\dualref{sec:loc-pi}), locally on subspaces, and then naturally extend them to the full space.

\begin{lemma}[Subspaces form a partition]\duallabel{lm:subspace-partition}
For every $\I\subset \F$, every $\e\in (0, 1/(10 |\I|))$,  every $x, x'\in [m]^n$  one has either $\subspace_\I(x)=\subspace_\I(x')$ or $\subspace_\I(x)\cap \subspace_\I(x')=\emptyset$.
\end{lemma}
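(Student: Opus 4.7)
The plan is to introduce the binary relation $\sim$ on $[m]^n$ by $x\sim x'$ if and only if $x' \in \subspace_\I(x)$, and to show that $\sim$ is an equivalence relation. Once this is done, $\subspace_\I(x)$ is precisely the $\sim$-equivalence class of $x$, and the dichotomy ``$\subspace_\I(x) = \subspace_\I(x')$ or $\subspace_\I(x)\cap \subspace_\I(x') = \emptyset$'' is the standard partition property of equivalence classes.

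Reflexivity is immediate from Definition~\dualref{def:subspace} via the zero coefficient vector $t = 0$, which has $\|t\|_\infty = 0 \leq 2M/w$ and trivially preserves $\block_\I$. Symmetry is equally easy: if $x' = x + \sum_{\i\in\I} t_\i\,\i$ witnesses $x\sim x'$, then $x = x' + \sum_{\i\in\I}(-t_\i)\,\i$ witnesses $x'\sim x$, since $\|-t\|_\infty = \|t\|_\infty$ and the block constraint is symmetric in the two endpoints.

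The main work is transitivity. Suppose $x\sim x'$ with shift $t$ and $x'\sim x''$ with shift $s$; then $x'' = x + \sum_{\i\in\I}(t_\i + s_\i)\,\i$ and $\block_\I(x) = \block_\I(x'')$. A priori only $\|t+s\|_\infty \leq 4M/w$, so the key step is to upgrade this to $\|t+s\|_\infty \leq 2M/w$. Fix $\j\in \I$; using $\langle \j,\j\rangle = w$ (since $\j\in\{0,1\}^n$ has Hamming weight $w$) and $|\langle \i,\j\rangle| \leq \e w$ for $\i\in \I\setminus\{\j\}$ (the near-orthogonality built into $\F$), expand
$$\langle x'' - x,\,\j\rangle \;=\; (t_\j + s_\j)\,w \;+\; E_\j, \qquad |E_\j|\;\leq\;(|\I|-1)\cdot (4M/w)\cdot \e w \;\leq\; 4(|\I|-1)\,\e\, M.$$
The matching of blocks gives $|\langle x''-x,\j\rangle| < M$, so combining yields $|t_\j + s_\j|\cdot w < M + 4(|\I|-1)\e M$. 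Since $\e < 1/(10|\I|)$ by hypothesis, the right-hand side is strictly less than $2M$, so $|t_\j+s_\j| < 2M/w$. Taking the maximum over $\j\in\I$ gives $\|t+s\|_\infty < 2M/w$, and $x \sim x''$ follows.

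I expect this transitivity step to be the only real obstacle: the $\ell_\infty$ cap of $2M/w$ is baked into Definition~\dualref{def:subspace} so that subspaces stay inside $[m]^n$ and cannot be naively relaxed, and the reason it survives composition of two shifts is precisely the interplay between the block-matching constraint (which bounds a dot product by $M$) and the near-orthogonality of vectors in $\F$ (which makes the off-diagonal cross-terms lower order). The hypothesis $\e < 1/(10|\I|)$ is essentially tight for this argument; everything else in the proof is routine bookkeeping.
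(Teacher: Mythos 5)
Your proof is correct and uses the same key technique as the paper — the interplay between the near-orthogonality bound $|\langle\i,\j\rangle|\le\e w$ and the block-matching constraint $|\langle x''-x,\j\rangle|<M$ to cap the composite shift back below $2M/w$. The paper proves the dichotomy directly by showing that a common element $y\in\subspace_\I(x)\cap\subspace_\I(x')$ forces $\subspace_\I(x)\subseteq\subspace_\I(x')$, which amounts to a chain of length three (hence an error budget of $6\e|\I|M$); your equivalence-relation packaging isolates transitivity with a chain of length two (error budget $4(|\I|-1)\e M$), but the computation and the role of $\e<1/(10|\I|)$ are the same.
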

\begin{proof}
Consider an element $y\in \subspace_\I(x)\cap \subspace_\I(x')$. There exist integer coefficients $(t_\i)_{\i\in \I}$ and $(t'_\i)_{\i\in \I}$ such that
$$
x+\sum_{\i\in \I} t_\i \cdot \i=y=x'+\sum_{\i\in \I} t'_\i \cdot \i,
$$
so that 
$$
x'-x=\sum_{\i\in \I} (t'_\i-t_\i) \cdot \i.
$$
At the same time for every $z\in \subspace_\I(x)$ one has $\block_\I(z)=\block_\I(x)$, and there exists integer coefficients $(s_\i)_{\i\in \I}$ such that
$z=x+\sum_{\i\in \I} s_\i \cdot \i$. Combining this with the equation above, we get
$$
z=x+\sum_{\i\in \I} s_\i \cdot \i=x'+\sum_{\i\in \I} (s_\i+t'_\i-t_\i) \cdot \i.
$$ 
The existence of $y\in \subspace_\I(x)\cap \subspace_\I(x')$ also implies that $\block_\I(x)=\block_\I(y)=\block_\I(x')$, and hence $\block_\I(z)=\block_\I(x)=\block_\I(x')$. Thus, in order to show that $z\in \subspace_\I(x')$, it suffices to prove that $|s_\i+t'_\i-t_\i|\leq 2M/w$ for all $\i\in \I$, i.e. $||s+t'-t||_\infty\leq 2M/w$. Suppose not, and let $\j\in \I$ be such that $|s_\j+t'_\j-t_\j|>2M/w$. Then we have, recalling that $\langle \i, \i\rangle=w$ for all $\i\in \F$ and $\langle \i, \i'\rangle\leq \e w$ for $\i, \i'\in \F$, $\i\neq \i'$,
\begin{equation*}
\begin{split}
\langle z, \j\rangle=\langle x, \j\rangle+(s_\j+t'_\j-t_\j)\cdot w+\sum_{\i\in \I\setminus \{\j\}} (s_\i+t'_\i-t_\i)\cdot \langle \i, \j\rangle,
\end{split}
\end{equation*}
so 
\begin{equation*}
\begin{split}
\left|\langle z, \j\rangle-\langle x, \j\rangle-(s_\j+t'_\j-t_\j)\cdot w\right|&\leq \e |\I|\cdot ||s+t'-t||_\infty\cdot w\\
&\leq \e |\I| (||s||_\infty+||t'||_\infty+||t||_\infty)\cdot w\\
&\leq 6\e |\I|\cdot M,
\end{split}
\end{equation*}
where in the last transition we used the fact that $||s||_\infty\leq 2M/w$, $||t||_\infty\leq 2M/w$ and $||t'||_\infty\leq 2M/w$. We thus have, since $\e<1/(10 |\I|)$ by assumption of the lemma,  
\begin{equation*}
\begin{split}
\left|\langle z, \j\rangle-\langle x, \j\rangle\right|\geq |s_\j+t'_\j-t_\j|\cdot w-6\e |\I|\cdot M>(2M/w) \cdot w-6\e |\I|\cdot M> M.
\end{split}
\end{equation*}
This means that $\lfloor\langle z, \j\rangle/M\rfloor\neq \lfloor\langle x, \j\rangle/M\rfloor$, and hence $\block_\I(z)\neq \block_\I(x)$, which is a contradiction. We thus get that $||s+t'-t||_\infty\leq 2M/w$, and hence $z\in \subspace_\I(x')$, as required.

\end{proof}

Since subspaces partition $[m]^n$, we often select a minimal number of representative points subspaces through which cover the entire space, and define, e.g., the local permutation map $\Pi$ (see~Section~\dualref{sec:loc-pi}), on subspaces through these representative points.

\begin{definition}[Minimal $\I$-subspace cover]\duallabel{def:subspace-cover}
We say that a set  $C\subset [m]^n$ is a minimal $\I$-subspace cover if 
$$
\bigcup_{x\in C} \subspace_\I(x)=[m]^n
$$ and $\subspace_\I(x)\cap \subspace_\I(x')=\emptyset$ for $x, x'\in C$, $x\neq x'$. 
\end{definition}

It follows from Lemma~\dualref{lm:subspace-partition} that for every $\I\subseteq \F$ there exists a minimal $\I$-subspace cover $C$: start with $C$ being the empty set and iteratively add $x\in [m]^n$ to $C$ if $\subspace_\I(x)\cap \subspace_\I(x')=\emptyset$ for every $x'\in C$.

\begin{lemma}[Intersection of a rectangle with a subspace]\duallabel{lm:rect-subspace-size}
For every $\I, \J\subset \F, |\I|, |\J|\leq K^2$, every $\a, \b\in \deltagrid^\J, \a<\b$, if 
$$
\gamma=\prod_{\i\in \I\cap \J} (\b_\i-\a_\i)
$$
and 
$$
R=\rect(\J, \a, \b),
$$
 the following conditions hold.

\begin{description}
\item[(1)] For every $x\in [m]^n\setminus B$ one has
$$
(1-\e^{2/3}) \cdot \gamma\cdot G \leq \left|\subspace_\I(x)\cap R\right|\leq (1+\e^{2/3})\cdot \gamma\cdot G,
$$
where $G=(M/w)^{|\I|}$. 

\item[(2)] For every positive integer $\lambda\leq K$ such that $\lambda \mid W/w$, if 
$$
R'=\{x\in R: \weight(x)\pmod{W}\in [0, 1/\lambda)\cdot W\},
$$
one has for every $x\in [m]^n\setminus B$
$$
(1-\e^{2/3})\cdot \frac1{\lambda}\cdot \gamma\cdot G \leq \left|\subspace_\I(x)\cap R'\right|\leq (1+\e^{2/3})\cdot \frac1{\lambda}\cdot \gamma\cdot G,
$$
where $G=(M/w)^{|\I|}$. 
\end{description}
\end{lemma}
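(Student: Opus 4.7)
The plan is to parametrise $\subspace_\I(x) \cap R$ by the integer vectors $t \in \mathbb{Z}^\I$ that appear in Definition~\dualref{def:subspace} and count those that yield a point $y = x + \sum_{\i \in \I} t_\i \cdot \i$ satisfying all of the rectangle constraints of $R$. An analysis identical to the one used in Claim~\dualref{cl:line-size}, extended from one direction to $|\I|$ near-orthogonal directions, will show that once $x \notin B$ the combined constraints $\block_\I(y) = \block_\I(x)$ and $\|t\|_\infty \leq 2M/w$ admit $(1 \pm \e^{3/4})\,(M/w)^{|\I|} = (1 \pm \e^{3/4}) G$ admissible integer tuples $t$ and that the map $t \mapsto y$ is injective on this range. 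The main task is then to decompose the rectangle constraint via $\J = (\J \cap \I) \uplus (\J \setminus \I)$ and to show that the $(\J \setminus \I)$ constraints are essentially determined by $x$, while each $(\J \cap \I)$ constraint carves out a multiplicative factor $(\b_\i - \a_\i)$.

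The key identity driving both parts is
\begin{equation*}
\langle y, \j \rangle - \langle x, \j \rangle \;=\; \sum_{\i \in \I} t_\i \langle \i, \j \rangle \;=\; t_\j \cdot w \cdot \mathbf{1}_{\j \in \I} \;+\; \xi(t,\j),
\end{equation*}
where $|\xi(t,\j)| \leq |\I| \cdot \|t\|_\infty \cdot \e w \leq 2\e\,|\I|\,M$, using $\langle \i, \i \rangle = w$ and $\langle \i, \i' \rangle \leq \e w$ for distinct $\i, \i' \in \F$. For $\j \in \J \setminus \I$ the term $t_\j \cdot w \cdot \mathbf{1}_{\j \in \I}$ vanishes, so the constraint $\langle y, \j\rangle \bmod M \in [\a_\j, \b_\j) M$ matches the analogous constraint on $x$ up to a boundary band of relative width $O(\e |\I| / (\b_\j - \a_\j)) \leq O(\e K^2 / \Delta)$; by~\dualref{p5}--\dualref{p6} this is absorbed into the final $\e^{2/3}$ slack. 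For $\j \in \J \cap \I$, the same identity restricts $t_\j$ to an arithmetic progression of length $(1 \pm O(\e^{2/3}))(\b_\j - \a_\j) M/w$, using $w \mid M$ (from~\dualref{p2}) and $\a_\j, \b_\j \in \deltagrid$. Multiplying over $\j \in \J \cap \I$ and leaving the remaining $|\I \setminus \J|$ coordinates of $t$ free (each contributing a factor of $M/w$) yields exactly $(1 \pm \e^{2/3}) \gamma G$, proving part~(1).

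For part~(2), recall that every $\i \in \F$ has Hamming weight $w$, so $\weight(y) = \weight(x) + w \sum_{\i \in \I} t_\i$. The additional condition $\weight(y) \bmod W \in [0, W/\lambda)\cdot W$ therefore becomes a condition on $\sum_\i t_\i \bmod{(W/w)}$, namely membership in a fixed set of cardinality exactly $(W/w)/\lambda$, using $\lambda \mid W/w$ together with $w \mid W$ from~\dualref{p1}. Partitioning the $t$'s counted in part~(1) according to $\sum_\i t_\i \bmod{(W/w)}$ and noting that the admissible range of each coordinate $t_\i$ is itself a multiple of $W/w$ (again by~\dualref{p1}--\dualref{p2}) shows that each residue class contributes an equal fraction of the total, up to relative error $O(\e^{2/3})$; this delivers the extra factor $1/\lambda$.

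The main technical obstacle is the boundary accounting for the $(\J \setminus \I)$ constraints: strictly, the lower bound requires $x$ to sit far enough from every hyperplane $\{\langle \cdot, \j\rangle \equiv \a_\j M\}$ and $\{\langle \cdot, \j\rangle \equiv \b_\j M\}$ with $\j \in \J \setminus \I$ that no $O(\e\,|\I|\,M)$ perturbation crosses a boundary, which matches the paper's applications, where the lemma is only invoked for $x$ in a $\delta$-interior (cf.~Lemma~\dualref{lm:shift-int}). The $\e^{2/3}$ slack is calibrated precisely so that, once we take the union over at most $|\J| \leq K^2$ constraints and combine with the $\|t\|_\infty$ truncation effects of Claim~\dualref{cl:line-size}, every boundary contribution fits inside it.
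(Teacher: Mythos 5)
Your proposal is correct and follows essentially the same route as the paper: the paper first proves a cube version (Lemma~\ref{lm:cube-subspace-size-full}) by exactly the $t$-vector counting you sketch, including the same expansion $\langle y, \j\rangle - \langle x, \j\rangle = t_\j w\cdot\mathbf{1}_{\j \in \I} + \xi(t,\j)$ with $|\xi| \le 2\e |\I| M$ and the separate treatment of the $\J \cap \I$ versus $\J \setminus \I$ constraints, and then derives the rectangle version by decomposing $R$ into $\Delta$-cubes via Claim~\ref{cl:subcubes-decomp-full}. Your observation that the lower bound tacitly presumes $x$ lies well inside $R$ in the $\J \setminus \I$ directions is a fair reading of an imprecision the cube lemma also carries; in the paper's applications both lemmas are invoked only on points in suitable $\delta$-interiors (or summed over subspace covers where the boundary slices are charged to the $B$-points), so this gap does not propagate to the downstream results.
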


\subsection{Large matchings in individual gadgets}

We prove that the basic gadget $G=(S, T)$ contains a matching of most of $S$ to $T\setminus T_*$:
\begin{lemma}\duallabel{lm:large-matching}
There exists a matching of a $(1-O(1/K))$ fraction of vertices in $S$ to $T\setminus T_*$.
\end{lemma}
\begin{proof}
The proof proceeds in two steps. In {\bf step 1} we show that for every $k\in [K/2]$, every $x\in [m]^n\setminus B$ one has 
$$
\left|\text{line}_\j(x)\cap S_k^\j\right|=(1\pm O(1/K)) \left|\text{line}_\j(x) \cap (T_k\setminus T_k^\j)\right|,
$$
which in particular implies that a complete bipartite graph between these two sets of vertices contains a matching of required size. In {\bf step 2} we use this fact to conclude the result of the lemma, in particular taking care of the fact that the actual edge set of $G^\ell$ only contains a complete graph between $\text{line}_\j(x)\cap \Int_\delta(S_k^\j)$ and $\text{line}_\j(x) \cap (T_k\setminus T_k^\j)$.

\paragraph{Step 1: defining the matching on lines.} Fix $k\in [K/2]$. Let $\j=\J_k$, and recall that for every $x\in [m]^n$ one has $\text{line}_\j(x)=\subspace_{\{\j\}}(x)$  by Claim~\dualref{cl:line-vs-subspace}.  Let $R=T_k=\rect(\J, \c, \d)$, where $\J=\J_{<k}$ and for every $s=0,\ldots, k-1$ one has $\c_{\j_s}=0$ and $\d_{\j_s}=1-\frac1{K-s}$. For every $x\in [m]^n\setminus B$ by Lemma~\dualref{lm:rect-subspace-size}, {\bf (1)},  one has 
\begin{equation}\duallabel{eq:line-tk}
(1-\sqrt{\e})\cdot (M/w)  \leq \left|\text{line}_\j(x)\cap T_k\right|\leq (1+\sqrt{\e})\cdot (M/w),
\end{equation}
where $G=M/w$. Note that the error term in the lemma is $\e^{2/3}<\sqrt{\e}$ since $\e\in (0, 1)$. Also note that in the application of the lemma we have $\gamma=1$, since $\j\not \in \J_{<k}$.

Now let $R=T_k^\j=T_{k+1}=\rect(\J, \c, \d)$(since $\j=\J_k$), where $\J=\J_{\leq k}$ and for every $s=0,\ldots, k$ one has $\c_{\j_s}=0$ and $\d_{\j_s}=1-\frac1{K-s}$ . For every $x\in [m]^n\setminus B$ by Lemma~\dualref{lm:rect-subspace-size}, {\bf (1)}, one has 
\begin{equation}\duallabel{eq:line-tkj}
(1-\sqrt{\e})\cdot \left(1-\frac1{K-k}\right)\cdot (M/w)  \leq \left|\text{line}_\j(x)\cap T_k^\j\right|\leq (1+\sqrt{\e})\cdot \left(1-\frac1{K-k}\right)\cdot (M/w)
\end{equation}
Note that in the application of the lemma we have $\gamma=\d_\j-\c_\j=1-\frac1{K-k}$, since $\{\j\} \cap \J_{\leq k}=\{\j\}$. Putting~\dualeqref{eq:line-tk},~\dualeqref{eq:line-tkj} together, we get 
\begin{equation}\duallabel{eq:9y39h9jgsgsadf}
|\text{line}_\j(x) \cap (T_k\setminus T_k^\j)|=(1+O(K\sqrt{\e}))\cdot \frac1{K-k}\cdot (M/w)
\end{equation}

We now bound $|\text{line}_\j(x)\cap S_k^\j|$. To that effect let 
$$
R':=\left\{x\in R: \weight(x)\pmod{W}\in \left[0, \frac1{K-k}\right)\cdot W\right\},
$$
and note that $R'=S_k^\j$ by~\dualeqref{eq:skj}. For every $x\in [m]^n\setminus B$ by Lemma~\dualref{lm:rect-subspace-size}, {\bf (2}), one has 
\begin{equation}\duallabel{eq:line-skj}
(1-\sqrt{\e}) \frac1{K-k}\left(1-\frac1{K-k}\right)\cdot (M/w)\leq |\text{line}_\j(x)\cap S_k^\j|\leq (1+\sqrt{\e}) \frac1{K-k}\left(1-\frac1{K-k}\right)\cdot (M/w).
\end{equation}

Now recall that by~\dualeqref{eq:edges-ei-def} for every $\j\in \B_k$ and every $y\in C_\j$ (for a minimal $\j$-line cover $C_\j$) the edge set $E_k$ contains all edges in the set
\begin{equation}\duallabel{eq:923y523tSFIHF}
(\text{line}_\j(x)\cap \Int_\delta(S_k^\j)) \times (\text{line}_\j(x) \cap (T_k\setminus T_k^\j)).
\end{equation}

Putting ~\dualeqref{eq:9y39h9jgsgsadf} together with ~\dualeqref{eq:line-skj}, using the fact that $O(K\sqrt{\e})=O(1/K)$ by~\dualref{p3},\dualref{p5} and \dualref{p6}, and recalling that $0\leq k\leq K/2-1$, we get that for every $x\in [m]^n\setminus B$ there exists a matching of a $1-O(1/K)$ fraction of $\text{line}_\j(x)\cap S_k^\j$ to $\text{line}_\j(x) \cap (T_k\setminus T_k^\j)$ using edges in 
\begin{equation*}
(\text{line}_\j(x)\cap S_k^\j) \times (\text{line}_\j(x) \cap (T_k\setminus T_k^\j)).
\end{equation*}
We show in {\bf step 2} below that taking the union of these matchings over $y\in C_\j$ and restricting the resulting matching to edges that do not touch $S_k^\j\setminus \Int_\delta(S_k^\j)$ reduces the size of the matching only slightly, and ensures that the matching uses only the edges that are present in the graph, i.e. edges in~\dualeqref{eq:923y523tSFIHF}, as required.

\paragraph{Step 2: defining the global matching.} Let $C_\j\subseteq [m]^n$ denote a minimal $\j$-line cover (one can think of this cover as the one used to defined the corresponding edge set of $G$; however, one notes that the actual edge set does not depend on the specific choice of a cover). In {\bf step 1} we showed the existence of a matching of a $1-O(1/K)$ fraction of $\text{line}_\j(x)\cap S_k^\j$ to $\text{line}_\j(x) \cap (T_k\setminus T_k^\j)$ for every $k\in [K/2]$ and every $x\in [m]^n\setminus B$ using edges in~\dualeqref{eq:923y523tSFIHF}.

We now note that for every $k\in [K/2]$
\begin{equation}\duallabel{eq:9h234g92h4g}
\begin{split}
\left|S_k\setminus \bigcup_{x\in C\cap B} (\text{line}_\j(x)\cap S_k^\j)\right|&\leq \left|S_k^\j\setminus \bigcup_{x\in C\cap B} (\text{line}_\j(x)\cap S_k^\j)\right|+|S_k\setminus S^\j_k|\\
&\leq |B|\cdot (M/w)+|S_k\setminus S_k^\j|,\\
\end{split}
\end{equation}
where we used the fact that $|\text{line}_\j(x)|\leq M/w$ by Claim~\dualref{cl:line-size}, {\bf (1)}, for all $x\in [m]^n$ and all $\j\in \F$.

We now bound the second term in~\dualeqref{eq:9h234g92h4g}. By Lemma~\dualref{lm:size-bounds}, {\bf (2)} and Lemma~\dualref{lm:size-bounds}, {\bf (3)}, one has
$$
|S_k|/m^n=(1\pm \sqrt{\e})  \cdot |T_0|/K
$$
and
$$
|S_k^\j|/m^n=(1\pm \sqrt{\e})  \left(1-\frac1{K-k}\right) \cdot |T_0|/K.
$$

This means that the second term in~\dualeqref{eq:9h234g92h4g} is upper bounded by 
$$
\frac1{K}\left(2\sqrt{\e}+\frac1{K-k}\right)|T_0|=O(1/K^2)\cdot |T_0|,
$$
where we used the fact that  
\begin{equation*}
\begin{split}
\sqrt{\e}&\leq \delta\text{~~~~~~~~~~~~~~~~~~~~(by~\dualref{p6})}\\
&\leq \Delta^{100K^2}\text{~~~~~~~~~~~(by~\dualref{p5})}\\
&\leq K^{-100K^2}\text{~~~~~~~~(by~\dualref{p3})}\\
&\leq K^{-4}.
\end{split}
\end{equation*}
We now bound the first term in~\dualeqref{eq:9h234g92h4g} by noting that by Claim~\dualref{cl:boundary-points}
$$
|B|\cdot (M/w)\leq n^{-9}\cdot m^n=O(1/K^2) |T_0|=O(1/K)|S_k|
$$ 
for every $k\in [K/2]$.

Putting the above bounds together, we get that for every $k\in [K/2]$ there exists a matching of all but a $O(1/K)$ fraction of $S_k$ to $T_k\setminus T_k^\j$, where $\j=\J_k$, using edges in~\dualeqref{eq:923y523tSFIHF}. It remains to remove from this matching edges incident on vertices in $S_k^\j\setminus \Int_\delta(S_k^\j)$. The matching is reduced by at most 
\begin{equation*}
\begin{split}
|S_k^\j\setminus \Int_\delta(S_k^\j)|&\leq |T_k^\j\setminus \Int_\delta(T_k^\j)|\\
&\leq \sqrt{\delta} |T_k^\j|\\
&\leq 2K\sqrt{\delta} |S_k^\j|\\
&=O(1/K) |S_k^\j|.
\end{split}
\end{equation*}
The first transition above is by definition of $S_k^\j$ and $S_k$ (see~\dualref{eq:skj} and~\dualref{eq:def-sk}). The second transition is by Lemma~\dualref{lm:rect-int-size}. The third transition is due to the fact that by Lemma~\dualref{lm:size-bounds}, {\bf (3)} and {\bf (4)}, one has $|S_k^\j|\geq (1/K) |T_k^\j|$. The forth transition is by~\dualref{p3} and~\dualref{p5}.

In other words, for every $k\in [K/2]$ there exists a matching of all but $O(1/K)$ fraction of $S_k$ to $T_k\setminus T_{k+1}$. Since the sets $T_k$ form a nested sequence, the sets $T_k\setminus T_{k+1}$ are disjoint, similarly to the sets $S_k$. Thus, the matchings extend to a matching of a $1-O(1/K)$ fraction of
$$
S=S_0\uplus S_1\uplus \ldots\uplus S_{K/2-1}
$$ 
to 
$$
\bigcup_{k\in [K/2]} T_k\setminus T_{k+1}=T_0\setminus T_{K/2}=T\setminus T_*.
$$
Since $\sum_{k\in [K/2]} |S_k|=\frac1{2}(1+O(1/K))\cdot |T_0|=(1+O(1/K))|T\setminus T_*|$ by Lemma~\dualref{lm:size-bounds}, {\bf (1)} and {\bf (2}) together with the choice of $\e$ (as per~\dualref{p3}, \dualref{p5} and~\dualref{p6}), the result of the lemma follows.
\end{proof}

\subsection{$1-e^{-1}$ hardness using basic gadgets}
We show how the $1-e^{-1}$ hardness from~\cite{Kapralov13} follows using our basic gadgets in Appendix~\ref{app:D}.

\subsection{Maps $\tau^\ell$ identifying the basic gadgets}\duallabel{sec:tau-ell}
The main result of this section is the definition of maps 
$$
\tau^\ell: S^\ell\to T_*^{\ell-1}
$$
mapping the $S$ side of the bipartition (the `arriving vertices') of the $\ell$-th gadget $G^\ell$ to the terminal subcube $T_*^{\ell-1}$ of the previous gadget $G^{\ell-1}$.

Fix $\ell\in [L], \ell>0$. To simplify notation, let $\B=\B^{\ell-1}, \B'=\B^\ell$, and recall that both sets are partitioned into $K/2$ disjoint equal size sets 
\begin{equation*}
\begin{split}
\B&=\B_0\cup \B_1\cup \ldots \cup \B_{K/2}\\
\B'&=\B'_0\cup \B'_1\cup \ldots \B'_{K/2}.
\end{split}
\end{equation*} 
Let $G=(S, T, E)=G^{\ell-1}, G'=(S', T', E')=G^\ell$. Let $\J=\J^{\ell-1}$,  $\J'=\J^\ell$, $\r=\r^{\ell-1}, \r'=\r^\ell$, and recall that 
\begin{equation*}
\begin{split}
\J&\in \B_0\times \B_1\times \ldots\times \B_{K/2}\\
\J'&\in \B'_0\times \B'_1\times \ldots \times \B'_{K/2}.
\end{split}
\end{equation*}

With this notation in place, we will define the map
$$
\tau: S'\to T^*\cup \{\bot\},
$$
where for a vertex $x\in S'$ we write $\tau(x)=\bot$ to denote the fact that $\tau$ is not defined on $x$. Thus, in essence $\tau$ is a partial map.
We later use $\tau$ to identify basic gadgets $G^\ell, \ell\in [L],$ arriving in the stream.  We start by defining an auxiliary map $\rho$ that we refer to as the {\em densifying map} (see Section~\dualref{sec:rho} below). The map $\rho$ maps a subsampled rectangle such as a set $S_k, k\in [K/2]$, to a regular rectangle. The map $\tau$ is then defined by composing $\rho$ with another auxiliary transformation that we refer to as the {\em local permutation map} defined in Section~\dualref{sec:loc-pi}. The map $\tau$ is then defined in Section~\dualref{sec:tau}.

\subsubsection{Densifying map $\rho$}\duallabel{sec:rho}

The densifying map is defined as follows:
\begin{definition}[$(\alpha, \r)$-densifying map]\duallabel{def:rho}
For a positive integer $\alpha$ and $\r\in \F$ the $(\alpha, \r)$-densifying map $\rho:[m]^n\setminus B\to [m]^n$ is defined as follows. For $x\in [m]^n$ and $\r\in \F$ we first let
$$
\langle x, \r\rangle \pmod{M}=aW+b (W/\alpha)+c,
$$
where $a\in [M/W]$,  $b\in [\alpha]$ and $c\in [W/\alpha]$. Then define 
\begin{equation*}
\begin{split}
\rho(x):=x-\r\cdot \left(\frac{W}{w}(1-1/\alpha)\cdot a+\frac{W}{\alpha w} b\right).
\end{split}
\end{equation*}
\end{definition}

We note that the map $\rho$ is well defined since for every $i\in [n]$ one has $(\rho(x))_i\leq x_i<m$ and 
\begin{equation*}
\begin{split}
(\rho(x))_i&=x_i-\r_i\cdot \left(\frac{W}{w}(1-1/\alpha)\cdot a+\frac{W}{\alpha w} b\right)\\
&\geq x_i-\left(\frac{W}{w}(1-1/\alpha)\cdot (M/W-1)+\frac{W}{\alpha w} (\alpha-1)\right)\\
&= x_i-\left(\frac{W}{w}(1-1/\alpha)\cdot (M/W-1)+\frac{W}{w} (1-1/\alpha)\right)\\
&= x_i-\left(\frac{W}{w}\cdot (M/W-1)+\frac{W}{w}\right)(1-1/\alpha)\\
&= x_i-\frac{M}{w}(1-1/\alpha)\\
&\geq 0
\end{split}
\end{equation*}
for all $x\in [m]^n\setminus B$ since $n$ is sufficiently large as a function of $M/w$, $W/w$, $K, \Delta, \delta,$ and $L$, and in particular $n>M/w$.

The next lemma summarizes the relevant properties of the map $\rho$:
\begin{lemma}[Densification of a subsampled set]\duallabel{lm:densification}
For every integer $\alpha\geq 2$, every $\r\in \F$, every rectangle $U\subseteq [m]^n$, $U=(\I, \a, \b)$, $\a, \b\in (\deltagrid)^\I, \a<\b,$ such that $\r\not \in \I$, the following conditions hold for the $(\alpha, \r)$-densifying map $\rho$ (see Definition~\dualref{def:rho}):
\begin{description}
\item[(1)] $\rho$ is injective;
\item[(2)] $\rho$ maps 
$$
\left\{x\in \Int_\delta(U): \weight(x) \in \left[0, 1/\alpha\right)\cdot W \pmod{ W} \right\}
$$
to 
$$
\left\{x\in U: \langle x, \r\rangle \pmod{M}\in \left[0, 1/\alpha\right)\cdot M\right\}
$$
\item[(3)] for every $x\in [m]^n$ one has $\rho(x)=x+\lambda\cdot \r$ for an integer $\lambda$ satisfying $|\lambda|\leq M/w$.
\end{description}
\end{lemma}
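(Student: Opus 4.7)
\textbf{Proof plan for Lemma~\dualref{lm:densification}.} The plan is to verify the three properties in the order (3), (2), (1), since the earlier parts feed into the later. First I would handle (3): by definition $\rho(x) = x - \lambda\,\r$ where $\lambda = \tfrac{W}{w}(1-1/\alpha)\,a + \tfrac{W}{\alpha w}\,b$ is a nonnegative integer (an integer by~\dualref{p2}), and $a \in [M/W]$, $b\in [\alpha]$ come from the decomposition $\langle x, \r\rangle \pmod{M} = aW + b(W/\alpha) + c$. A direct bound gives
$$\lambda \;\leq\; \tfrac{W}{w}(1-1/\alpha)(M/W-1) + \tfrac{W}{\alpha w}(\alpha-1) \;=\; \tfrac{M}{w}(1-1/\alpha) \;<\; M/w,$$
which is property (3).

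Next I would prove (2) by a short computation combined with Lemma~\dualref{lm:shift-int}. Writing $\langle x, \r\rangle = qM + aW + b(W/\alpha) + c$ with $q = \block_\r(x)$ and using $\langle \r, \r\rangle = w$, I get
$$\langle \rho(x), \r\rangle \;=\; \langle x, \r\rangle - \lambda w \;=\; qM + a(W/\alpha) + c,$$
and since $a(W/\alpha) + c \leq (M/W - 1)(W/\alpha) + (W/\alpha - 1) = M/\alpha - 1$, the reduction modulo $M$ lies in $[0, 1/\alpha)\cdot M$ as required. To verify $\rho(x) \in U$ when $x \in \Int_\delta(U)$, I would invoke Lemma~\dualref{lm:shift-int}: the hypotheses $\r \notin \I$, $|\lambda| \leq M/w$, and $\e < \delta$ (from~\dualref{p6}) imply that shifting $x$ by an integer multiple of $\r$ of magnitude at most $M/w$ keeps it inside~$U$.

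Finally I would establish (1). Suppose $\rho(x) = \rho(y)$ with $x,y$ in the source set. Equating $\langle \rho(x), \r\rangle = \langle \rho(y), \r\rangle$ via the identity from (2) yields
$$q_x M + a_x(W/\alpha) + c_x \;=\; q_y M + a_y(W/\alpha) + c_y.$$
Since $a(W/\alpha) + c \in [0, M/\alpha)$ is uniquely determined by the pair $(a,c)$ (as $c < W/\alpha$), this forces $q_x = q_y$, $a_x = a_y$, $c_x = c_y$, leaving $x - y = \tfrac{W}{\alpha w}(b_x - b_y)\r$ and hence $\weight(x) - \weight(y) = (W/\alpha)(b_x - b_y)$. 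Here I would exploit the weight constraint: since $\weight(x), \weight(y) \in [0, W/\alpha) \pmod{W}$, the value $(W/\alpha)(b_x - b_y) \pmod{W}$ must lie in $(-W/\alpha, W/\alpha)$, and a case analysis over $b_x - b_y \in \{-(\alpha-1), \ldots, \alpha - 1\}$ rules out every nonzero choice (each such value lies in the forbidden interval $[W/\alpha,\, W - W/\alpha]$ modulo $W$). Hence $b_x = b_y$ and $x = y$. The main (if mild) obstacle is precisely this modular pinning of $b$: without the weight restriction on the source set, two points differing by $\tfrac{W}{\alpha w}\r$ would be identified by $\rho$, so the weight condition is essential rather than incidental.
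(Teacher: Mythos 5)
Your proof is correct and follows essentially the same route as the paper: the same algebraic identity $\langle \rho(x), \r\rangle = qM + a(W/\alpha)+c$ (the paper computes this modulo $M$, you keep the exact value, which is a harmless refinement), the same appeal to Lemma~\dualref{lm:shift-int} to place $\rho(x)\in U$, and the same use of the weight constraint modulo $W$ to pin down $b$ and conclude injectivity. Your closing remark that the weight restriction on the source set is load-bearing (since two points differing by $\tfrac{W}{\alpha w}\r$ would otherwise collide) matches the intent of the paper's argument and correctly interprets (1) as injectivity on the subsampled set. One very minor slip: integrality of $\lambda$ follows from~\dualref{p1} (which forces $\alpha \mid W/w$), not from~\dualref{p2}.
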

\begin{proof} 
We start by proving the {\bf (3)}. One has by Definition~\dualref{def:rho} 
$\rho(x)=x+\lambda\cdot \r$, where $\lambda=-\left(\frac{W}{w}(1-1/\alpha)\cdot a+\frac{W}{\alpha w} b\right)$ for $a\in [M/W]$ and $b\in [\alpha]$. We thus have
\begin{equation*}
\begin{split}
|\lambda|&=\left|\frac{W}{w}(1-1/\alpha)\cdot a+\frac{W}{\alpha w} b\right|\\
&\leq \left|\frac{W}{w}(1-1/\alpha)\cdot (M/W-1)+\frac{W}{\alpha w} (\alpha-1)\right|\\
&\leq \left|\frac{W}{w}(1-1/\alpha)\cdot (M/W-1)+\frac{W}{w}\cdot (1-1/\alpha)\right|\\
&\leq \left|(1-1/\alpha)\left(\frac{W}{w}\cdot (M/W-1)+\frac{W}{w}\right)\right|\\
&\leq \left|(1-1/\alpha)\cdot M/w\right|\\
&\leq M/w\\
\end{split}
\end{equation*}
as required. 

We now prove {\bf (2)}. By Definition~\dualref{def:rho} one has, letting 
$$
\langle x, \r\rangle \pmod{M}=aW+b (W/\alpha)+c,
$$
where $a\in [M/W]$,  $b\in [\alpha]$ and $c\in [W/\alpha]$,
\begin{equation}\duallabel{eq:93h4g9gewfdsfD}
\begin{split}
\rho(x):=x-\r\cdot \left(\frac{W}{w}(1-1/\alpha)\cdot a+\frac{W}{\alpha w} b\right).
\end{split}
\end{equation}
We have by~\dualeqref{eq:93h4g9gewfdsfD},
\begin{equation}\duallabel{eq:8g8ewgfBJBD}
\begin{split}
\langle \rho(x), \r\rangle \pmod{M}&=\left[\langle x, \r\rangle \pmod{M}-\langle \r, \r\rangle \cdot \left(\frac{W}{w}(1-1/\alpha)\cdot a+\frac{W}{\alpha w} b\right)\right]\pmod{M}\\
&=\left[(a W+b (W/\alpha)+c)-\langle \r, \r\rangle\cdot \left(\frac{W}{w}(1-1/\alpha)\cdot a+\frac{W}{\alpha w} b\right)\right]\pmod{M}\\
&=\left[(a W+b (W/\alpha)+c)-W(1-1/\alpha)\cdot a-\frac{W}{\alpha} b\right]\pmod{M}\\
&=(W/\alpha)a+c\in [M/\alpha],
\end{split}
\end{equation}
as required. In the last transition we used the fact that $a\in [M/W]$ and $c\in [W/\alpha]$ by definition of $a$ and $b$.

We now argue injectivity, i.e., prove {\bf (1)}. Suppose that $\rho(x)=\rho(y)$ for some $y\neq x$. Specifically, let 
\begin{equation*}
\begin{split}
\langle x, \r\rangle \pmod{M}&=aW+b (W/\alpha)+c\\
\langle y, \r\rangle \pmod{M}&=a'W+b' (W/\alpha)+c'
\end{split}
\end{equation*}
with $a, a'\in [M/W]$, $b, b'\in [\alpha]$ and $c, c'\in [W/\alpha]$. Then $\rho(x)=\rho(y)$ means that
\begin{equation}\duallabel{eq:823ghewfwef-wewe}
x-\r\cdot \left(\frac{W}{w}(1-1/\alpha)\cdot a+\frac{W}{\alpha w} b\right)=y-\r\cdot \left(\frac{W}{w}(1-1/\alpha)\cdot a'+\frac{W}{\alpha w} b'\right).
\end{equation}
First note that that by~\dualeqref{eq:8g8ewgfBJBD}
\begin{equation*}
\begin{split}
\left\langle x-\r\cdot \left(\frac{W}{w}(1-1/\alpha)\cdot a+\frac{W}{\alpha w} b\right), \r\right\rangle \pmod{M}&=a(W/\alpha)+c\\
\end{split}
\end{equation*}
and similarly 
\begin{equation*}
\begin{split}
\left\langle y-\r\cdot \left(\frac{W}{w}(1-1/\alpha)\cdot a'+\frac{W}{\alpha w} b'\right), \r\right\rangle \pmod{M}&=a'(W/\alpha)+c'.\\
\end{split}
\end{equation*}
Combining the two equations above with~\dualeqref{eq:823ghewfwef-wewe}, we get $a=a'$ and $c=c'$, and it remains to show that $b=b'$. To that effect recall that 
\begin{equation*}
\begin{split}
\weight(x)&=\sum_{i\in [n]} x_i\in [0, 1/\alpha)\cdot W \pmod{W}\\
\weight(y)&=\sum_{i\in [n]} y_i\in [0, 1/\alpha)\cdot W \pmod{W}.
\end{split}
\end{equation*}
Applying the $\weight(\cdot)$ function to both sides of ~\dualeqref{eq:823ghewfwef-wewe}, using the fact that $|\r|=w$ and rearranging terms, we get
\begin{equation}\duallabel{eq:023gjrgGDUGFefeeff2}
\begin{split}
\weight(x)-\weight(y)&=\left(W(1-1/\alpha)\cdot a+\frac{W}{\alpha} b\right)-\left(W(1-1/\alpha)\cdot a'+\frac{W}{\alpha} b'\right)\\
&=\frac{W}{\alpha} (b-b'),
\end{split}
\end{equation}
where in the last transition we used the fact that $a=a'$, as established above. Now recalling that $\weight(x) \pmod{W}\in [0, 1/\alpha)\cdot W$ and $\weight(y) \pmod{W}\in [0, 1/\alpha)\cdot W$ by assumption, we get that 
$$
(\weight(x) \pmod{W})-(\weight(y)\pmod{W})\in (-1/\alpha, 1/\alpha)\cdot W,
$$
and hence $b=b'$, which implies that $x=y$. This establishes injectivity of $\rho$, proving {\bf (1)}.

We now prove {\bf (2)}. For every $x\in \Int_\delta(U)$ we have by~\dualeqref{eq:93h4g9gewfdsfD} that $\rho(x):=x-\lambda\cdot \r$, 
where $\lambda$ is an integer satisfying $|\lambda|\leq M/w$, as established above. We thus have $\rho(x)\in U$ by Lemma~\dualref{lm:shift-int}. 
\end{proof}

\subsubsection{Local permutation map $\Pi$}\duallabel{sec:loc-pi}
We now define our local permutation map $\Pi$. 

\begin{definition}[Local permutation map $\Pi$]\duallabel{def:pi}
For two cubes $R=(\I, \a)$, $R'=(\I', \a')$ such that $\I, \I'\subset \F$, $\I\cap \I'=\emptyset$, the (partial) map 
$$
\Pi_{R'\to R}:[m]^n\to [m]^n
$$
 is defined as follows. Let $C\subseteq [m]^n$ denote a minimal $\I\cup \I'$-subspace cover (Definition~\dualeqref{def:subspace-cover}). 

For every $x\in C$ we define the mapping as follows. Let 
\begin{equation*}
\begin{split}
s&=|\subspace_{\I\cup \I'}(x)\cap R|\\
s'&=|\subspace_{\I\cup \I'}(x)\cap R'|.
\end{split}
\end{equation*}
Define  $\Pi_{R'\to R}$ on $\subspace_{\I\cup \I'}(x)$ as an arbitrary bijective mapping from 
a subset of $\subspace_{\I\cup \I'}(x)\cap R'$ of size $\min\{s, s'\}$ to a subset of $\subspace_{\I\cup \I'}(x)\cap R$ of size $\min\{s, s'\}$. 

\end{definition}

\begin{remark}
We show later (see Lemma~\dualref{lm:tau-of-rect} below)  that $s$ is quite close to $s'$ for $x\in C\setminus B$. Thus the map $\Pi$ is defined on almost all of $|\subspace_{\I\cup \I'}(x)\cap R|$ and almost all of $|\subspace_{\I\cup \I'}(x)\cap R'|$ for most choices of $x\in C$.
\end{remark}

The next lemma shows that the permutation map $\Pi_{R'\to R}$ performs sparse bounded shifts, i.e. that $\Pi_{R'\to R}(x)$ can be expresses as the sum of $x$ with a small number of vectors in $\F$, each with rather small coefficients:
\begin{lemma}[Local permutation map performs sparse bounded shifts]\duallabel{lm:t-infty}
For two cubes $R=(\I, \a)$, $R'=(\I', \a')$, $\a\in (\deltagrid)^\I, \a'\in (\deltagrid)^{\I'},$ such that $\I, \I'\subset \F$, $\I\cap \I'=\emptyset$, the following is true for the (partial) map 
$$
\Pi_{R'\to R}:[m]^n\to [m]^n.
$$
For every $z\in [m]^n$ such that $\Pi:=\Pi_{R'\to R}$ is defined on $z$ one has $\Pi(z)=z+\sum_{\i\in \I\cup \I'} t_\i \cdot \i$ with $||t||_\infty\leq 4M/w$.
\end{lemma}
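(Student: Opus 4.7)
The plan is to unwind Definitions~\dualref{def:pi} and~\dualref{def:subspace} and then use the triangle inequality. By Definition~\dualref{def:pi}, the map $\Pi := \Pi_{R'\to R}$ acts one subspace at a time: fixing a minimal $\I\cup \I'$-subspace cover $C\subseteq [m]^n$, the map $\Pi$ restricted to each $\subspace_{\I\cup \I'}(x)$ for $x\in C$ is a bijection between a subset of $\subspace_{\I\cup \I'}(x)\cap R'$ and a subset of $\subspace_{\I\cup \I'}(x)\cap R$. In particular, if $z$ lies in the domain of $\Pi$, then by Lemma~\dualref{lm:subspace-partition} there is a unique $x\in C$ with $z\in \subspace_{\I\cup \I'}(x)$, and moreover $\Pi(z)\in \subspace_{\I\cup \I'}(x)$ as well; this is the single observation I would extract first.

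Next, I would expand what membership in $\subspace_{\I\cup \I'}(x)$ means via Definition~\dualref{def:subspace}. There exist integer vectors $s, s'\in \mathbb{Z}^{\I\cup \I'}$ such that
\begin{equation*}
z = x + \sum_{\i \in \I\cup \I'} s_\i \cdot \i, \qquad \Pi(z) = x + \sum_{\i \in \I\cup \I'} s'_\i \cdot \i,
\end{equation*}
with the crucial bounds $\|s\|_\infty \leq 2M/w$ and $\|s'\|_\infty \leq 2M/w$ built into Definition~\dualref{def:subspace}. Subtracting these two representations gives
\begin{equation*}
\Pi(z) - z = \sum_{\i\in \I\cup \I'} (s'_\i - s_\i) \cdot \i,
\end{equation*}
so setting $t_\i := s'_\i - s_\i$ yields the desired expression $\Pi(z) = z + \sum_{\i} t_\i \cdot \i$, and by the triangle inequality $\|t\|_\infty \leq \|s'\|_\infty + \|s\|_\infty \leq 4M/w$, as claimed.

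There is no real obstacle to this proof: the lemma is essentially a bookkeeping statement that the subspace-local nature of $\Pi$ (from Definition~\dualref{def:pi}) combined with the $\ell_\infty$ bound baked into the definition of a subspace (Definition~\dualref{def:subspace}) automatically forces bounded sparse shifts. The only thing to be slightly careful about is that the coefficient vectors $s, s'$ representing a point of $\subspace_{\I\cup\I'}(x)$ relative to $x$ need not be unique (since the vectors in $\F$ are only nearly orthogonal); however, the lemma only asserts existence of some coefficient vector $t$ with $\|t\|_\infty \leq 4M/w$, so picking any valid $s, s'$ guaranteed by Definition~\dualref{def:subspace} and taking their difference suffices. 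This proof is essentially three lines and will be used downstream whenever we need that $\Pi$ perturbs coordinates only by a bounded number of vectors from $\F$ (e.g., to propagate small-dot-product arguments through the glueing map $\tau$).
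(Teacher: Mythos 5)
Your proof is correct and matches the paper's own argument step for step: you identify the subspace through the cover representative $x$ containing both $z$ and $\Pi(z)$, expand each via Definition~\dualref{def:subspace-full} with coefficient vectors bounded by $2M/w$ in $\ell_\infty$, subtract, and apply the triangle inequality. The only difference from the paper is notational ($s, s'$ in place of the paper's $t^a, t^b$), and your added remark about non-uniqueness of the coefficient representation is a harmless clarification.
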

\begin{proof}
This follows by Definition~\dualref{def:subspace} and Definition~\dualref{def:pi}. Indeed, recall that for a minimal $\I\cup \I'$-subspace cover $C$ and $x\in C$
the map $\Pi$ maps points $a\in \subspace_{\I\cup \I'}(x)\cap R'$ to points $b\in \subspace_{\I\cup \I'}(x)\cap R$. By definition of $\subspace_{\I\cup \I'}(x)$ (Definition~\dualref{def:subspace}) there exist coefficients $\{t^a_\i\}_{\i\in \I\cup \I'}$ and $\{t^b_\i\}_{\i\in \I\cup \I'}$ such that 
$$
a=x+\sum_{\i\in \I\cup \I'} t^a_\i \cdot \i\text{~~~and~~~}b=x+\sum_{\i\in \I\cup \I'} t^b_\i \cdot \i
$$
with $||t^a||_\infty\leq 2M/w$ and $||t^b||_\infty\leq 2M/w$. Putting the above bounds together, we get 
$$
b=a+\sum_{\i\in \I\cup \I'} (t^b_\i-t^a_\i) \cdot \i
$$
with $||t^b-t^a||_\infty\leq 4M/w$ for every $b\in \subspace_{\I\cup \I'}(x)\cap R$, as required.
\end{proof}

 While $\Pi_{R'\to R}$ is defined with respect to two cubes $R'$ and $R$, we often need to know where $\Pi$ maps an extended rectangle, namely a rectangle that beyond constraints imposed by $R'$ has further constraints -- see $R'_{ext}$ below. We show that if the additional constraints inherent in $R'_{ext}$ are nearly orthogonal (which they are since all our vectors come from the family $\F$), then at least the interior of an extended rectangle $R'_{ext}$ is mapped to an appropriate extended rectangle $R_{ext}$:
 \begin{lemma}[Action of permutation map on extended rectangles]\duallabel{lm:tau-of-rect}
For every pair of cubes $R=(\I, \a)$, $R'=(\I', \a')$, $\a\in (\deltagrid)^\I, \a'\in (\deltagrid)^{\I'},$ such that $\I, \I'\subset \F$, $|\I|=|\I'|$, $\I\cap \I'=\emptyset$, $|\I\cup \I'|\leq K^2$, if $\e<\delta/(4|\I\cup \I'|)$, the following conditions hold for the corresponding (partial) map $\Pi:=\Pi_{R'\to R}:[m]^n\setminus B\to [m]^n$ (see Definition~\dualref{def:pi}).

For every $\J\subset \F\setminus (\I\cup \I')$ and every $\mathbf{c}, \mathbf{d}\in (\deltagrid)^\J, \c<\d,$ if 
$$
R_{ext}=(\I\cup \J, (\a, \c), (\a+\Delta\cdot \mathbf{1}, \d))\text{~and~} R'_{ext}=(\I'\cup \J, (\a', \c), (\a'+\Delta\cdot \mathbf{1}, \d)),
$$
then 
\begin{description}
\item[(1)] $\Pi$ maps the interior of $R'_{ext}$ to $R_{ext}$, i.e. 
 $$
 \Pi(\Int_\delta(R'_{ext}))\subseteq R_{ext}.
 $$
 \item[(2)] the number of points in $R'$ that $\Pi$ is not defined on is bounded by $8\sqrt{\e}|R'|$.
 \end{description}
\end{lemma}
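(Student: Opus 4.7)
The plan is to prove (1) by direct verification and (2) by a counting argument that splits the domain by whether the subspace representative is a boundary point.

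For part (1), I would start with an arbitrary $z \in \Int_\delta(R'_{ext})$ and apply Lemma~\dualref{lm:t-infty} to write $\Pi(z) = z + \sum_{\i \in \I \cup \I'} t_\i \cdot \i$ with $\|t\|_\infty \leq 4M/w$. The constraints of $R_{ext}$ split into two groups. For the constraints indexed by $\i \in \I$, the map $\Pi$ sends $z \in R'$ into $R$ by Definition~\dualref{def:pi}, so $\langle \Pi(z), \i\rangle \pmod{M} \in [\a_\i, \a_\i + \Delta) \cdot M$. For the constraints indexed by $\j \in \J$, I would use near-orthogonality: since $\j \not\in \I \cup \I'$, each $\langle \i, \j\rangle \leq \e w$, so
$$\left|\langle \Pi(z), \j\rangle - \langle z, \j\rangle\right| \leq \sum_{\i \in \I \cup \I'} |t_\i| \cdot \e w \leq 4 \e |\I \cup \I'| M < \delta M$$
by the hypothesis $\e < \delta/(4|\I \cup \I'|)$. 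Since $z \in \Int_\delta(R'_{ext})$ gives $\langle z, \j\rangle \pmod{M} \in [\c_\j + \delta, \d_\j - \delta) \cdot M$, shifting by less than $\delta M$ yields $\langle \Pi(z), \j\rangle \pmod{M} \in [\c_\j, \d_\j) \cdot M$, as needed.

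For part (2), the number of undefined points in $R'$ equals
$$\sum_{x \in C} \left(|\subspace_{\I \cup \I'}(x) \cap R'| - \min\{s, s'\}\right) = \sum_{x \in C} \max\{0, s'-s\},$$
where $s, s'$ are as in Definition~\dualref{def:pi}. I would split the sum according to whether $x \in B$. For $x \in C \setminus B$, Lemma~\dualref{lm:rect-subspace-size}, {\bf (1)}, gives $s, s' \in [(1-\e^{2/3})V, (1+\e^{2/3})V]$ with $V = \Delta^{|\I|}(M/w)^{|\I \cup \I'|}$ (using that $|\I| = |\I'|$), so $\max\{0, s'-s\} \leq 2\e^{2/3} V$. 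The number of such representatives is at most $2m^n/(M/w)^{|\I \cup \I'|}$ (since the subspaces partition $[m]^n$ by Lemma~\dualref{lm:subspace-partition} and each one away from $B$ has size $(1 \pm \e^{2/3}) (M/w)^{|\I \cup \I'|}$), and multiplying yields a bound of $4\e^{2/3} \Delta^{|\I|} m^n \leq 8\e^{2/3}|R'|$ after recalling $|R'| \geq \tfrac12 \Delta^{|\I'|} m^n$ by Lemma~\dualref{lm:rect-size}.

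The main obstacle will be bounding the contribution of boundary subspaces, where Lemma~\dualref{lm:rect-subspace-size} does not apply directly. For this piece I would use the crude upper bound $|\subspace_{\I \cup \I'}(x)| \leq (5M/w)^{K^2}$ from Definition~\dualref{def:subspace}, combined with the boundary estimate $|B| \leq m^n/n^{10}$ from Claim~\dualref{cl:boundary-points}. Since $n$ is polynomially large while $M/w$, $K$, and $\Delta$ are constants by our parameter setting, this gives $\sum_{x \in C \cap B} |\subspace_{\I \cup \I'}(x) \cap R'| \leq |B| \cdot (5M/w)^{K^2} \leq \sqrt{\e}|R'|$. Combining with the previous bound and observing $8\e^{2/3} + \sqrt{\e} \leq 8\sqrt{\e}$ for $\e$ sufficiently small completes the argument.
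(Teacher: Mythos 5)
Your proposal is correct and matches the paper's own argument in all essential respects: part (1) applies Lemma~\ref{lm:t-infty-full} plus near-orthogonality against directions in $\J$ exactly as the paper does, and part (2) splits the undefined points into non-boundary subspaces (controlled via Lemma~\ref{lm:rect-subspace-size-full}, which here specializes to Lemma~\ref{lm:cube-subspace-size-full} used by the paper) and boundary subspaces (controlled via the crude $|B|\cdot(5M/w)^{|\I\cup\I'|}$ bound). The only cosmetic difference is that the paper bounds the non-boundary contribution by summing $4\sqrt{\e}\,s'$ over representatives directly rather than multiplying a per-subspace deficit by a count of subspaces, but the two routes give the same $O(\sqrt{\e})|R'|$ total.
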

\begin{proof}
We start by proving {\bf (1)}. 
Pick $x\in \Int_\delta(R'_{ext})$ such that $\Pi(x)$ is defined. We need to verify that {\bf (a)} for every $\i\in \I$ one has $\langle \Pi(x), \i\rangle \pmod{M} \in [\a_\i, \a_\i+\Delta\cdot \mathbf{1})\cdot M$ and {\bf (b)} for every $\k\in \J$ one has $\langle \Pi(x), \k\rangle \pmod{M} \in [\c_\k, \d_\k)\cdot M$.

Condition {\bf (a)} is satisfied by construction of $\Pi$ since $\Pi$ maps points in $R'=(\I, \a')$ to points in $R=(\I, \a)$ and $R'_{ext}\subseteq R'$. We now establish {\bf (b)}. By Lemma~\dualref{lm:t-infty} one has $\Pi(x)=x+\sum_{\i\in \I\cup \I'} t_\i\cdot \i$, where $||t||_\infty\leq 4M/w$.   We thus have that for every $\k\in \J$
\begin{equation}\duallabel{eq:xp-dot}
\begin{split}
\left|\langle \Pi(x), \k\rangle-\langle x, \k\rangle\right|&\leq \left|\sum_{\substack{\i\in \I\cup \I'}} t_\i\cdot \langle \i,\k\rangle\right|\leq 4\e|\I\cup \I'|\cdot M< \delta M,
\end{split}
\end{equation}
where we used the fact that $\langle \i, \k\rangle \leq \e w$ for all $\i\in \I\cup \I'$ as $\J\subset \F\setminus (\I\cup \I')$ by assumption, as well as the fact that 
\begin{equation*}
\begin{split}
\e&<\delta^2\text{~~~~~~~~~~~~~~~~~~~~~~~~~~~~(by~\dualref{p6})}\\
 &\leq \delta \cdot \Delta^{200 K^2}\text{~~~~~~~~~~~~~~(by~\dualref{p5})}\\
 &\leq \delta \cdot K^{-200 K^2}\text{~~~~~~~~~~~(by~\dualref{p3})}\\
 &\leq \delta/(4|\I\cup \I'),
 \end{split}
 \end{equation*}
 where the last transition is due to the fact that $|\I\cup \I'|\leq K^2$ by assumption, and $K$ is larger than an absolute constant. Since $x\in \Int_\delta(R_{ext})$ by assumption, we have 
$$
\langle x, \k\rangle \pmod{M} \in [\c_\k+\delta, \d_\k-\delta)\cdot M
$$
for every $\k\in \J$. Putting this together with~\dualeqref{eq:xp-dot} gives
$$
\langle \Pi(x), \k\rangle \pmod{M} \in [\c_\k, \d_\k)\cdot M,
$$
as required.

We now prove {\bf (2)}. Let $C\subseteq [m]^n$ be the minimal $\I\cup \I'$-subspace cover used in the definition of $\Pi$.  Recall that for every $x\in C\setminus B$ one has by  Lemma~\dualref{lm:cube-subspace-size}, {\bf (1)},
$$
(1-\sqrt{\e}) \Delta^{|\I|}\cdot G \leq \left|\subspace_{\I\cup \I'}(x)\cap R\right|\leq (1+\sqrt{\e})\Delta^{|\I|}\cdot G,
$$
where $G=(M/w)^{|\I|}$. Similarly, one has  
$$
(1-\sqrt{\e}) \Delta^{|\I'|}\cdot G \leq \left|\subspace_{\I\cup \I'}(x)\cap R'\right|\leq (1+\sqrt{\e})\Delta^{|\I'|}\cdot G,
$$
since $|\I|=|\I'|$.  We thus get for every $x\in [m]^n\setminus B$, letting $s=\left|\subspace_{\I\cup \I'}(x)\cap R\right|$ and $s'=\left|\subspace_{\I\cup \I'}(x)\cap R'\right|$,
$$
\max\{s, s'\}-\min\{s, s'\}\leq 4\sqrt{\e} \cdot s'
$$
as long as $\e$ is smaller than a constant. Thus, the number of points in $\subspace_{\I\cup \I'}(x)\cap R'$ that $\Pi$ is not defined on is bounded by $4\sqrt{\e} |\subspace_{\I\cup \I'}(x)\cap R'|$.
The number of points that $\Pi$ is not defined on is bounded by 
\begin{equation*}
\begin{split}
&4\sqrt{\e}\cdot \sum_{x\in C\setminus B}  |\subspace_{\I\cup \I'}(x)\cap R'|+|B|\cdot (5M/w)^{|\I\cup \I'|}\\
&\leq 4\sqrt{\e}\cdot \sum_{x\in C\setminus B}  |\subspace_{\I\cup \I'}(x)\cap R'|+\frac1{n}|R'|\cdot (5M/w)^{|\I\cup \I'|}\\
&\leq 8\sqrt{\e}\cdot \sum_{x\in C}  |\subspace_{\I\cup \I'}(x)\cap R'|\\
&=8\sqrt{\e}\cdot |R'|,\\
\end{split}
\end{equation*}
where the first transition uses the fact that for every $x\in [m]^n$ one has $|\subspace_{\I\cup \I'}(x)|\leq (5M/w)^{|\I\cup \I'|}$ (since coordinates of $t$ are bounded by $2M/w$ in absolute value in Definition~\dualref{def:subspace}) and the second transition uses the fact that $|R'|\geq \Delta^{|\I\cup \I'|}\geq \Delta^{K^2}$, and the third transition uses the assumption that $n$ is sufficiently large as a function of $M/w, W/w, K, L, \Delta, \delta$. 
\end{proof}

 \subsubsection{Defining the glueing map $\tau$}\duallabel{sec:tau}
We define the glueing map $\tau$ in this section. To do that, first for every $k\in [K/2]$ we define a map 
$$
\tau_k: S'_k\to T_*\cup \{\bot\},
$$ 
where for a vertex $x\in S'_k$ we write $\tau(x)=\bot$ to denote the fact that $\tau$ is not defined on $x$. Thus, in essence $\tau$ is a partial map.
We ensure that
\begin{enumerate}
\item $\tau_k$ is injective on elements of $S'_k$ that it does not map to $\bot$, i.e.,  if $\tau_k(x)\neq \bot$ and $\tau_k(y)\neq \bot$, then $\tau_k(x)\neq \tau_k(y)$ for $x\neq y$.
\item the images of $\tau_k$ are disjoint for different $k$, i.e. these maps extend naturally to an injective partial map from the union of $S'_k$ over all $k\in [K/2]$ to $T_*$ that is defined on almost all of $S'$.
\end{enumerate}
Then the map $\tau$ is defined as mapping an element in $x\in S_k$ to $\tau_k(x)$ for every $k\in [K/2]$.

The map $\tau_k$ is parameterized by the compression vector $\r\in \B_{K/2}, \r\not \in \J_k,$ for the terminal subcube $T_*$, as well as the extension and compression vectors for every $k\in [K/2]$ (see Definition~\dualref{def:ext})
$\Ext_k\subseteq \B'_k$ and $\q_k\in \B'_k$. Define sets 
\begin{equation}\duallabel{eq:i}
\I=\J\cup \{\r\}\subset \F
\end{equation}
and 
\begin{equation}\duallabel{eq:i-prime}
\I'_k=\J'_{<k}\cup \Ext_k\cup \{\q_k\} \subset \F
\end{equation}
We sometimes write $\I'$ when $k$ is fixed and clear from context. Let  $\rho_k$ be the $(K-k, \q_k)$-densifying map as per Definition~\dualref{def:rho}.  By Lemma~\dualref{lm:densification} we have
\begin{equation}\duallabel{eq:image-rho-sk}
\rho_k(\Int_\delta(S'_k))\subseteq \left\{x\in T'_k: \langle x, \q_k\rangle \pmod{M} \in \left[0, \frac1{K-k}\right)\cdot M \right\},
\end{equation}
Indeed, we invoke the lemma with $U=T'_k$, since $S'_k=\downset_k(T'_k)$ so that 
\begin{equation*}
\begin{split}
\Int_\delta(S'_k)&=\downset_k(\Int_\delta(T'_k))\\
&=\left\{x\in \Int_\delta(T'_k): \weight(x) \in \left[0, \frac{1}{K-k}\right)\cdot W \pmod{ W} \right\},
\end{split}
\end{equation*}
 by definition of a $\delta$-interior (see Definition~\dualref{def:int}). Recall that  $T'_k$ is indeed a rectangle, as required by Lemma~\dualref{lm:densification}, since $T'_k=\rect(\J'_{<k}, \c, \d)$ with $\c_{\j'_s}=0$,  $\d_{\j'_s}=1-\frac1{K-s}$ for all $s\in [k]$. Note that the preconditions of Lemma~\dualref{lm:densification} are satisfied since $T'_k$ is indeed a rectangle (see~\dualeqref{eq:def-tk-all-constraints}) and $\q_k\not \in \J'_{<k}$ (note that $\Delta \mid \frac1{K-s}$ for all $s\in [k]$ by~\dualref{p3}, so rectangle boundaries are indeed in $\deltagrid$, as required by Lemma~\dualref{lm:densification}).

\begin{definition}\duallabel{def:dk}
For $k\in [K/2]$ let $\mathbb{D}_k\subseteq (\deltagrid)^{\I'_k}$ be such that 
\begin{equation*}
\begin{split}
\left\{x\in T'_k: \langle x, \q_k\rangle \pmod{M} \in \left[0, \frac1{K-k}\right)\cdot M \right\}&=\bigcup_{\d\in \mathbb{D}_k} \rect(\I'_k, \d).
\end{split}
\end{equation*}
\end{definition}

Note that such a set $\mathbb{D}_k$ exists since $T'_k$ is a rectangle in $\I'_k$. Indeed, let $\c_{\j'_s}=0, \d_{\j'_s}=1-\frac1{K-s}$ for $s\in [k]$, let $\c_{\q_k}=0, \d_{\q_k}=\frac1{K-k},$ and $\c_\i=0, \d_\i=1$ for $\i\in \Ext_k$. Then 
$$
\left\{x\in T'_k: \langle x, \q_k\rangle \pmod{M} \in \left[0, \frac1{K-k}\right)\cdot M \right\}=\rect(\I'_k, \c, \d),
$$
and by Claim~\dualref{cl:subcubes-decomp}, we get that the set $\mathbb{D}_k$ from Definition~\dualref{def:dk} exists and satisfies
\begin{equation}\duallabel{eq:dk-def}
\mathbb{D}_k=(\deltagridInt)^{\I'_k}\cap \prod_{\i \in \I'_k} [\c_\i, \d_\i).
\end{equation}
Combining the definition above with~\dualeqref{eq:image-rho-sk}, we get
\begin{equation}\duallabel{eq:92h3t9h2gt9h3gzffdfdfFDF}
\rho_k(\Int_\delta(S'_k))\subseteq \bigcup_{\d\in \mathbb{D}_k} \rect(\I'_k, \d).
\end{equation}
Similarly let $\mathbb{A}\subseteq \deltagrid^\I$ be such that 
\begin{equation}\duallabel{eq:m-a-def}
T_*=\bigcup_{\a\in \mathbb{A}} \rect(\I, \a).
\end{equation}
Note that such $\mathbb{A}$ exists by Claim~\dualref{cl:subcubes-decomp} since $T^*_k$ is a rectangle in $\I$. The latter holds because $\J\subseteq \I$,  $T^*_k=\rect(\J, \c, \d)$ with $\c_{\j_s}=0$,  $\d_{\j_s}=1-\frac1{K-s}$ for all $s\in [K/2]$, and $\Delta \mid  \frac1{K-s}$ for all $s\in [K/2]$.  Now let
\begin{equation}\duallabel{eq:m-def}
\mathsf{M}: \bigcup_{k\in [K/2]} \mathbb{D}_k \to \mathbb{A}.
\end{equation}
be a bijective map. Such a map exists since $\left|\sum_{k\in [K/2]} \mathbb{D}_k\right|=\sum_{k\in [K/2]} |\mathbb{D}_k|=|\mathbb{A}|$. Indeed, by Definition~\dualref{def:dk} one has for every $k\in [K/2]$
\begin{equation*}
\begin{split}
|\mathbb{D}_k|=\frac1{\Delta^{|\I'_k|}}\cdot \frac1{K-k}\cdot \prod_{s=0}^{k-1} \left(1-\frac1{K-s}\right)=\frac1{\Delta^{|\I'_k|}}\cdot \frac1{K}
\end{split}
\end{equation*}
and by~\dualeqref{eq:m-a-def} one has
$$
|\mathbb{A}|=\frac1{\Delta^{|\I|}}\cdot\prod_{s=0}^{K/2} \left(1-\frac1{K-s}\right)=\frac1{\Delta^{|\I|}}\cdot\frac1{2}, 
$$ 
and therefore 
$$
\sum_{k\in [K/2]} |\mathbb{D}_k|=\frac1{\Delta^{|\I'_k|}}\cdot \frac1{K}\cdot (K/2)=\frac1{\Delta^{|\I'_k|}}\cdot\frac1{2}=|\mathbb{A}|,
$$ 
as required (since $|\I|=|\I'_k|$ for every $k\in [K/2]$).

\begin{remark}\duallabel{rm:tau-prefix}
Note that for every $k\in [K/2]$ the set $\mathbb{D}_k$ is determined by $\I$ and $\I'_k=\J'_{<k}\cup \Ext_k\cup \{\q_k\}$, and $\mathbb{A}$ is determined by $\I$. Thus, we can construct the map $\mathsf{M}$ incrementally, by fixing  $\mathsf{M}|_{\mathbb{D}_k}: \mathbb{D}_k \to \mathbb{A}$ as soon as $\I'_k$ becomes known. The latter in fact amounts to knowing $\J'_{<k}$, since we fix $\Ext_k$ and $\q_k$ for our hard input distribution.
\end{remark}

\begin{remark}We note that while the terminal subcube is defined as $T_*=T_{K/2}$, the parameter $k$ ranges over $[K/2]=\{0, 1,2\ldots, K/2-1\}$, i.e. not including $k=K/2$. This is exactly in order to ensure that $\sum_{k\in [K/2]} |S'_k|$ equals $|T_*|$ up to lower order terms that can be made small as a function of $\e$, and in particular can be made arbitrarily smaller than $K^K$ -- this allows us to control the number of vertices left out by the glueing map $\tau$ in Lemma~\dualref{lm:t-star-minus-tau-sp}.
\end{remark}

For convenience of notation, we first define a map $\Pi^*_k$  for each $k\in [K/2]$ that pieces together local permutation maps $\Pi_{R'\to R}$. We refer to these maps as {\em global permutation maps}:
\begin{definition}[Global permutation maps $\Pi^*_k$]\duallabel{def:pi-global}
For every $k\in [K/2]$ and every 
$$
z\in \left\{x\in T'_k: \langle x, \q_k\rangle \pmod{M} \in \left[0, \frac1{K-k}\right)\cdot M \right\}
$$ 
we let $\d\in \mathbb{D}_k$ be such that $z\in \Int_\delta(R')$, where $R'=\rect(\I'_k, \d)$, if such $\d$ exists (otherwise leave $\Pi^*_k$ undefined on $z$). Let $R=\rect(\I, \mathsf{M}(\d))$, where $\mathsf{M}$ is as per~\dualeqref{eq:m-def}. We then define 
$$
\Pi^*_k(z):=\Pi_{R'\to R}(z)
$$ 
if $\Pi_{R'\to R}(z)$ is defined (otherwise leave $\Pi^*_k$ undefined on $z$).
\end{definition}

Finally, we define

\begin{definition}[Glueing map $\tau$]\duallabel{def:tau}
For every $x\in S$, if $k\in [K/2]$ is such that $x\in S'_k$, we let 
\begin{equation}\duallabel{eq:tau-def}
\tau(x):=\Pi^*_k(\rho_k(x)) \in T_*
\end{equation}
if $\Pi^*_k(\rho_k(x))$ is defined, and leave $\tau(x)$ undefined otherwise.

For a subset $U\subseteq S'$ we define
$$
\tau(U)=\bigcup_{x\in U} \{\tau(x)\},
$$
where we think of $\{\tau(x)\}$ as the empty set if $\tau(x)$ is not defined.
\end{definition}

We gather some basic properties of the global permutation maps in 
\begin{claim}[Injectivity of $\Pi^*_k$ and $\tau$]\duallabel{cl:pi-star-injective}
For every $k\in [K/2]$ the global permutation map $\Pi^*_k$ is injective, and the ranges of $\Pi^*_k$ are disjoint for $k\in [K/2]$. Furthermore, the map $\tau$ is injective.
\end{claim}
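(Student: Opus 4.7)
The plan is to reduce each of the three assertions to injectivity or bijectivity facts already established for the building blocks $\Pi_{R'\to R}$, $\rho_k$, and $\mathsf{M}$, exploiting the product/cube structure of $\mathbb{D}_k$ and $\mathbb{A}$. The entire claim is essentially a bookkeeping argument: all the work has been done in constructing $\mathsf{M}$ as a bijection on the disjoint union $\biguplus_{k\in [K/2]} \mathbb{D}_k \to \mathbb{A}$, and in Definition~\dualref{def:pi} which makes each local permutation map $\Pi_{R'\to R}$ a bijection between subsets of $R'$ and $R$.

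For injectivity of a single $\Pi^*_k$, I would first observe that, by Definition~\dualref{def:pi-global}, the map is defined only on points $z$ that lie in the $\delta$-interior of some cube $\rect(\I'_k, \d)$ with $\d \in \mathbb{D}_k$, and such a $\d$ is unique when it exists (since distinct $\d, \d' \in \mathbb{D}_k$ give disjoint cubes in $[m]^n$: two points in different cubes differ in some constraint $\langle z, \i\rangle \pmod M \in [\d_\i, \d_\i + \Delta)\cdot M$ for $\i\in \I'_k$). On each such cube $\Pi^*_k$ acts as $\Pi_{\rect(\I'_k,\d)\to \rect(\I, \mathsf{M}(\d))}$, which is injective by construction. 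To conclude injectivity of $\Pi^*_k$ globally, I need the image cubes $\rect(\I, \mathsf{M}(\d))$ to be pairwise disjoint as $\d$ ranges over $\mathbb{D}_k$; this is immediate because $\mathsf{M}$ restricted to $\mathbb{D}_k$ is injective (as $\mathsf{M}$ is a bijection from $\biguplus_k \mathbb{D}_k$ to $\mathbb{A}$ by~\dualeqref{eq:m-def}), and cubes $\rect(\I, \a)$ with distinct $\a \in \mathbb{A}\subset (\deltagrid)^\I$ are disjoint.

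For disjointness of ranges across $k$, I would again invoke the fact that $\mathsf{M}$ is a bijection from $\biguplus_{k\in [K/2]} \mathbb{D}_k$ to $\mathbb{A}$, which implies $\mathsf{M}(\mathbb{D}_k) \cap \mathsf{M}(\mathbb{D}_{k'}) = \emptyset$ for $k\neq k'$. The range of $\Pi^*_k$ is contained in $\bigcup_{\d\in \mathbb{D}_k} \rect(\I, \mathsf{M}(\d)) = \bigcup_{\a \in \mathsf{M}(\mathbb{D}_k)} \rect(\I, \a)$, and the analogous statement for $k'$ gives a union over $\mathsf{M}(\mathbb{D}_{k'})$; disjointness in $\mathbb{A}$ together with disjointness of distinct $\rect(\I, \a)$'s then yields disjointness of the ranges.

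Finally, for injectivity of $\tau$, recall that $\tau(x) = \Pi^*_k(\rho_k(x))$ where $k$ is the unique index such that $x\in S'_k$ (the sets $S'_k$ are disjoint by~\dualeqref{eq:def-s}). If $x, y$ are in the same $S'_k$ with $\tau(x) = \tau(y)$, then injectivity of $\Pi^*_k$ gives $\rho_k(x) = \rho_k(y)$, and injectivity of $\rho_k$ (Lemma~\dualref{lm:densification}, {\bf (1)}) forces $x = y$. If $x\in S'_k$ and $y\in S'_{k'}$ with $k\neq k'$, then $\tau(x)=\tau(y)$ would contradict range-disjointness of $\Pi^*_k$ and $\Pi^*_{k'}$ just shown. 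The only mild subtlety I anticipate is keeping track of partial-definedness: one must explicitly restrict attention to $x$ on which $\tau$ is defined (equivalently, on which $\Pi^*_k\circ \rho_k$ is defined), but this does not affect any of the disjointness or injectivity arguments above.
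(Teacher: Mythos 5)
Your proof is correct and follows essentially the same route as the paper: both reduce single-$k$ injectivity to the combination of injectivity of the local maps $\Pi_{R'\to R}$ plus disjointness of the image cubes $\rect(\I, \mathsf{M}(\d))$ (which comes from injectivity of $\mathsf{M}$), derive range-disjointness across $k$ from $\mathsf{M}$ being a bijection on the disjoint union, and then combine with injectivity of $\rho_k$ to conclude injectivity of $\tau$. The only cosmetic difference is that you argue via uniqueness of the containing cube $\rect(\I'_k,\d)$ for a given $z$, while the paper phrases the same point as a case split on whether $\d_0 = \d_1$ for two points $z_0, z_1$; these are equivalent.
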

\begin{proof}
Fix $k\in [K/2]$, let $\I=\I_k$ and let
\begin{equation*}
Q'_k=\left\{x\in T'_k: \langle x, \q_k\rangle \pmod{M} \in \left[0, \frac1{K-k}\right)\cdot M \right\}
\end{equation*}
for convenience.  We show that for every $z_0, z_1\in Q'_k$ such that $\Pi^*$ is defined on both one has $\Pi^*(z_0)\neq \Pi^*(z_1)$. For $z_0\in Q'_k$ we let $\d_0\in \mathbb{D}_k$ be such that $z_0\in \Int_\delta(R'_0)$, where $R'_0=\rect(\I', \d_0)$, if such $\d_0$ exists (otherwise there is nothing to prove since $\Pi^*_k$ undefined on $z_0$).  Let $R_0=\rect(\I, \mathsf{M}(\d_0))$. For $z_1\in Q'_k$ we let $\d_1\in \mathbb{D}_k$ be such that $z_1\in \Int_\delta(R'_1)$, where $R'_1=\rect(\I', \d_1)$, if such $\d_1$ exists (otherwise there is nothing to prove since $\Pi^*_k$ undefined on $z_1$).  Let $R_1=\rect(\I, \mathsf{M}(\d_1))$. 

Recall that $\Pi^*_k(z_0)=\Pi_{R'_0\to R_0}(z_0)$ and $\Pi^*_k(z_1)=\Pi_{R'_1\to R_1}(z_1)$. If either of these maps is undefined on $z_0$ and $z_1$ respectively, there is nothing to prove. Now suppose that both of them are defined.
By definition of $\Pi^*_k$ one has 
$$
\Pi^*_k(z_0)\in R_0\text{~~and~~}\Pi^*_k(z_1)\in R_1.
$$
We thus get that if $\d_0\neq \d_1$, then $\Pi^*_k(z_0)\neq \Pi^*_k(z_1)$ since $R_0\cap R_1=\rect(\I, \mathsf{M}(\d_0))\cap \rect(\I, \mathsf{M}(\d_1))=\emptyset$ when $\d_0\neq \d_1$. On the other hand, if $\d_0=\d_1$, then $\Pi^*_k(z_0)\neq \Pi^*_k(z_1)$ because the map $\Pi_{R'_0\to R_0}=\Pi_{R'_1\to R_1}$ is injective by construction. This proves that $\Pi^*_k$ is injective. Injectivity of $\Pi^*$ follows from the fact that the map $\mathsf{M}$ (see~\dualeqref{eq:m-def}) is injective, as well as the fact that for every $\a_0, \a_1\in \mathbb{A}$ one has $\rect(\I, \a_0)\cap \rect(\I, \a_1)=\emptyset$ when $\a_0\neq \a_1$. 

Finally, we prove injectivity of $\tau$. Pick two distinct vertices $x, y\in S'$. Let $a, b\in [K/2]$ be such that $x\in S'_a$ and $y\in S'_b$. If $a\neq b$, then $\tau(x)\neq \tau(y)$ since the images of $\Pi^*_k$ are disjoint by definition of $\mathsf{M}$ (see~\dualeqref{eq:m-def}), and the fact that for every $\a_0, \a_1\in \mathbb{A}$ one has $\rect(\I, \a_0)\cap \rect(\I, \a_1)=\emptyset$ when $\a_0\neq \a_1$. If $a=b$, then $\tau(x)=\Pi^*_a(\rho_a(x))$ and $\tau(y)=\Pi^*_a(\rho_a(y))$, where $\rho_a$ is a $(K-a, \r)$-densifying map, so the result follows by injectivity of $\Pi^*_k$, as well as the fact that $\rho$ is injective by Lemma~\dualref{lm:densification}.
\end{proof}


Similarly to the local (and therefore also global) permutation maps, $\tau$ performs sparse bounded shifts:

\begin{lemma}[Glueing map $\tau$ performs sparse bounded shifts]\duallabel{lm:tau-sparse}
For every $x\in S'_k$ if $y=\tau(x)$, then there exist integer coefficients $\{t_\i\}_{\i\in \I\cup \I'_k}$ such that 
$$
y=x+\sum_{\i\in \I\cup \I'_k} t_\i\cdot \i
$$
such that $\|t\|_\infty\leq 5M/w$.
\end{lemma}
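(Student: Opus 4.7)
The plan is to unpack the definition of $\tau$ and combine two shift bounds: one from the densifying map $\rho_k$ and one from the global permutation map $\Pi^*_k$. By Definition~\dualref{def:tau}, for $x\in S'_k$ one has $\tau(x)=\Pi^*_k(\rho_k(x))$, so I would first write $\rho_k(x)$ as an explicit shift of $x$, then apply $\Pi^*_k$ and express its action as another explicit shift, and finally collect the two shifts into a single linear combination with coefficients indexed by $\I\cup \I'_k$.

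For the first step, I will invoke Lemma~\dualref{lm:densification}, part~{\bf (3)}: since $\rho_k$ is the $(K-k,\q_k)$-densifying map, one has $\rho_k(x)=x+\lambda\cdot \q_k$ for some integer $\lambda$ with $|\lambda|\le M/w$. Note $\q_k\in \I'_k\subseteq \I\cup \I'_k$, so this shift already lies in the required span. For the second step, by Definition~\dualref{def:pi-global} the map $\Pi^*_k$ acts on $\rho_k(x)$ (when defined) as the local permutation $\Pi_{R'\to R}$ for cubes $R'=\rect(\I'_k,\d)$ and $R=\rect(\I,\mathsf{M}(\d))$. Lemma~\dualref{lm:t-infty} applied to this local permutation map gives
\[
\Pi_{R'\to R}(\rho_k(x))=\rho_k(x)+\sum_{\i\in \I\cup \I'_k} t'_\i\cdot \i
\]
with $\|t'\|_\infty\le 4M/w$.

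Combining the two identities yields
\[
\tau(x)=x+\lambda\cdot \q_k+\sum_{\i\in \I\cup \I'_k} t'_\i\cdot \i=x+\sum_{\i\in \I\cup \I'_k} t_\i\cdot \i,
\]
where $t_\i=t'_\i$ for $\i\ne \q_k$ and $t_{\q_k}=t'_{\q_k}+\lambda$. The triangle inequality then gives $|t_{\q_k}|\le 4M/w+M/w=5M/w$ and $|t_\i|\le 4M/w\le 5M/w$ for $\i\ne \q_k$, so $\|t\|_\infty\le 5M/w$ as required.

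There is essentially no obstacle here: the lemma is a pure bookkeeping statement that composes the shift bounds of its two ingredients ($\rho_k$ contributes at most $M/w$ along $\q_k$, and $\Pi_{R'\to R}$ contributes at most $4M/w$ along each vector in $\I\cup \I'_k$), and the only subtle point to state carefully is that, since $\tau(x)$ is assumed to be defined, both $\rho_k(x)$ and $\Pi_{R'\to R}(\rho_k(x))$ are defined, so Lemma~\dualref{lm:t-infty} and Lemma~\dualref{lm:densification}, {\bf (3)}, apply.
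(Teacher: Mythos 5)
Your proof is correct and follows essentially the same route as the paper: unpack $\tau=\Pi^*_k\circ\rho_k$, apply Lemma~\dualref{lm:densification}, \textbf{(3)}, to get $\rho_k(x)=x+\lambda\q_k$ with $|\lambda|\le M/w$, apply Lemma~\dualref{lm:t-infty} to the local permutation $\Pi_{R'\to R}$ underlying $\Pi^*_k$ to get a further shift of $\ell_\infty$-norm at most $4M/w$ over $\I\cup\I'_k$, and add. Your version is slightly more explicit than the paper's about why the combined coefficient at $\q_k$ is bounded by $5M/w$ while all others remain bounded by $4M/w$, and about the observation that ``$y=\tau(x)$'' already guarantees both component maps are defined on their respective arguments; both points are worth stating.
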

\begin{proof}
Let $z=\rho_k(x)$, and note that 
$$
z=x+\lambda\cdot \q_k
$$
for some integer $\lambda$ satisfying $|\lambda|\leq M/w$ by Lemma~\dualref{lm:densification}. Let $\d\in (\deltagrid)^{\I'_k}$ be such that $z\in R'$ with $R'=\rect(\I'_k, \d)$. Let $R:=\rect(\I, \mathsf{M}(\d))$. Note that $\Pi^*_k(z)=\Pi_{R'\to R}$, and one hence by Lemma~\dualref{lm:t-infty}  one has 
$$
\Pi^*_k(z)=z+\sum_{\i\in \I\cup \I'_k} s_\i\cdot \i,
$$
where $s$ satisfies $\| s\|_\infty\leq 4M/w$. These two facts give the result.
\end{proof}

Unlike the map $\tau$ defined in our toy construction from Section~\ref{sec:toy-construction}, the map $\tau$ is not quite a bijection. However, the range of $\tau$ covers almost all of $T_*$:
\begin{lemma}[$\tau$ maps almost all of $S'$ onto terminal subcube $T_*$]\duallabel{lm:t-star-minus-tau-sp}
We have $|T_*\setminus \tau(S')|\leq \delta^{1/4}\cdot |T_0|$. Furthermore, the map $\tau$ is defined on all but $\delta^{1/4} |S'|$ vertices in $S'$.
\end{lemma}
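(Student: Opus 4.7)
The plan is to prove the second statement first (the count of $x\in S'$ on which $\tau$ is undefined) and then deduce the first from it. Fix $k\in [K/2]$ and set $Q'_k := \{z\in T'_k : \langle z, \q_k\rangle \pmod M \in [0, 1/(K-k))\cdot M\}$, so that by Definition~\dualref{def:pi-global} the map $\Pi^*_k$ is defined only on points of $Q'_k$ lying in the $\delta$-interior of some subcube $\rect(\I'_k,\d)$ with $\d\in \mathbb{D}_k$. I partition the $x\in S'_k$ with $\tau(x)=\Pi^*_k(\rho_k(x))$ undefined into three classes: {\bf (i)} $x \in S'_k \setminus \Int_\delta(S'_k)$, for which Lemma~\dualref{lm:densification} does not guarantee $\rho_k(x)\in Q'_k$ (this also handles $S'_k\cap B$ since $\Int_\delta(S'_k)\subseteq [m]^n\setminus B$); {\bf (ii)} $x\in \Int_\delta(S'_k)$ with $\rho_k(x)\in Q'_k\setminus \bigcup_{\d}\Int_\delta(\rect(\I'_k,\d))$; and {\bf (iii)} $x\in \Int_\delta(S'_k)$ with $\rho_k(x)\in \Int_\delta(R')$ for some $R'=\rect(\I'_k,\d)$ but the local permutation map $\Pi_{R'\to R}$ (where $R=\rect(\I,\mathsf{M}(\d))$) undefined on $\rho_k(x)$.

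For {\bf (i)}, I use the inclusion $S'_k\setminus \Int_\delta(S'_k)\subseteq T'_k\setminus \Int_\delta(T'_k)$ together with Lemma~\dualref{lm:rect-int-size} applied to the rectangle $T'_k$ (cf.~\dualeqref{eq:def-tk-all-constraints}) to obtain $|S'_k\setminus \Int_\delta(S'_k)|\leq \sqrt{\delta}|T'_k|\leq \sqrt{\delta}|T_0|$. For {\bf (ii)}, Definition~\dualref{def:dk} together with Claim~\dualref{cl:subcubes-decomp} writes $Q'_k$ as a disjoint union of subcubes $\rect(\I'_k,\d)$, $\d\in \mathbb{D}_k$, and Lemma~\dualref{lm:rect-int-size} bounds $|\rect(\I'_k,\d)\setminus \Int_\delta(\rect(\I'_k,\d))|\leq \sqrt{\delta}|\rect(\I'_k,\d)|$; summing over $\d$ and using injectivity of $\rho_k$ (Lemma~\dualref{lm:densification}, {\bf (1)}) bounds the type-{\bf (ii)} count by $\sqrt{\delta}|Q'_k|$. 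For {\bf (iii)}, Lemma~\dualref{lm:tau-of-rect}, {\bf (2)} bounds the set inside each $R'$ on which $\Pi_{R'\to R}$ fails by $8\sqrt{\e}|R'|$; summing over $\d\in \mathbb{D}_k$ and again using injectivity of $\rho_k$ yields $8\sqrt{\e}|Q'_k|$. Summing over $k$ and using $|Q'_k|\leq (1+\sqrt{\e})|T_0|/K$ from Lemma~\dualref{lm:rect-size}, {\bf (1)}, the total is at most $K\sqrt{\delta}|T_0| + 8\sqrt{\e}|T_0|$, which in view of $\sqrt{\e}\leq \delta$ (from~\dualref{p6}) and $\delta \leq K^{-100K^2}$ (from~\dualref{p5} and~\dualref{p3}) is at most $\delta^{1/4}|S'|$, since Lemma~\dualref{lm:size-bounds} gives $|S'|\geq |T_0|/3$.

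The first claim then follows quickly. By Definition~\dualref{def:pi-global} and~\dualeqref{eq:m-a-def} one has $\Pi^*_k(\cdot)\in \bigcup_{\d\in \mathbb{D}_k}\rect(\I,\mathsf{M}(\d))\subseteq T_*$, so $\tau(S')\subseteq T_*$. By Claim~\dualref{cl:pi-star-injective} $\tau$ is injective, so $|\tau(S')|\geq |S'| - \delta^{1/4}|S'|$. Lemma~\dualref{lm:size-bounds} gives $|T_*|=(1\pm\sqrt{\e})|T_0|/2$ and $|S'|=(1\pm\sqrt{\e})|T_0|/2$, yielding $|T_*|-|S'|\leq 2\sqrt{\e}|T_0|$. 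Therefore $|T_*\setminus \tau(S')|=|T_*|-|\tau(S')|\leq 2\sqrt{\e}|T_0| + \delta^{1/4}|S'|\leq \delta^{1/4}|T_0|$, where the final inequality uses $\sqrt{\e}\leq \delta \leq \delta^{1/4}/4$, valid under the parameter bounds~\dualref{p5} and~\dualref{p6}.

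The main obstacle is coordinating the $\delta$-interior conditions across the densification $\rho_k$, the subcube decomposition of $Q'_k$ indexed by $\mathbb{D}_k$, and the subsequent local permutation map $\Pi_{R'\to R}$, so that failure counts from boundary effects and from the near-orthogonality of vectors in $\F$ combine additively and are dominated by $\delta^{1/4}$. The preconditions $|\I|=|\I'_k|$ and $\I\cap \I'_k=\emptyset$ needed by Lemma~\dualref{lm:tau-of-rect} hold automatically, because $\I\subseteq \B^{\ell-1}$, $\I'_k\subseteq \B^\ell$ with $\B^{\ell-1}\cap \B^\ell=\emptyset$, and each index set has size exactly $K/2+2$.
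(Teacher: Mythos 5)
Your proof is essentially correct and uses the same ingredients as the paper's proof (Lemmas~\ref{lm:densification-full}, \ref{lm:rect-int-size-full}, \ref{lm:tau-of-rect-full}, \ref{lm:size-bounds-full}, Claim~\ref{cl:pi-star-injective-full}), organized slightly differently: you first count the vertices of $S'$ on which $\tau$ is undefined via an explicit three-way partition of the failure set and then deduce the bound on $|T_*\setminus\tau(S')|$, whereas the paper directly lower-bounds $\sum_k|\tau_k(\Int_\delta(S'_k))|$. Your partition into classes (i)--(iii) is arguably cleaner than the paper's inequality chain~\eqref{eq:09239h9gh-full}--\eqref{eq:93h42gh903hg8yuwg832g8GFX-full}, and the logic is sound.

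One small caveat on the final arithmetic of the first claim: using the intermediate bound ``failure count $\leq\delta^{1/4}|S'|$'' is too lossy to land exactly at $\delta^{1/4}|T_0|$ with the constant ``$\delta\leq\delta^{1/4}/4$'' that you cite. With $|S'|\leq(1+\sqrt{\e})|T_0|/2$ one gets $2\sqrt{\e}|T_0|+\delta^{1/4}|S'|\leq\delta^{1/4}(1+\sqrt{\e}/2)|T_0|$, which overshoots by a factor $1+\sqrt{\e}/2$. The fix is immediate: either carry the raw failure bound $\big((K/2)\sqrt{\delta}+8\sqrt{\e}+\sqrt{\delta}\big)|T_0|$ forward (which is far below $\delta^{1/4}|T_0|/2$ under~\ref{p5-full}, \ref{p3-full}) rather than coarsening to $\delta^{1/4}|S'|$, or observe that what you actually need is $2\sqrt{\e}\leq\delta^{1/4}\cdot(1-\sqrt{\e})/2$, i.e.\ $\sqrt{\e}\lesssim\delta^{1/4}/4$, which holds comfortably since $\sqrt{\e}\leq\delta\leq K^{-100K^2}$. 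This is a presentation issue, not a gap; the parameter slack is enormous, but the inequality as written with your stated constants does not quite close.
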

\begin{proof}
We have
\begin{equation}\duallabel{eq:h4g90h349gh}
\begin{split}
|T_*\setminus \tau(S')|&\leq |T_*|-|\tau(S')|\text{~~~~~~~~~~~~~~~~~~~~~~~~~~~~~~~~~(since $\tau(S')\subseteq T_*$)}\\
&\leq |T_*|-\sum_{k\in [K/2]} |\tau_k(S'_k)|\text{~~~~~(since the images of $\tau_k$ are disjoint)}\\
&\leq |T_*|-\sum_{k\in [K/2]} |\tau_k(\Int_\delta(S'_k))|.
\end{split}
\end{equation}
The second transition used the fact that images of $\tau_k$ are disjoint for different $k$. This follows from the fact that $M:\bigcup_{k\in [K/2]} \mathbb{D}_k\to \mathbb{A}$ is a bijective mapping, together with the fact that by ~\dualeqref{eq:tau-def} for every $x\in S'_k$, if $\d\in \mathbb{D}_k$ is such that $\rho_k(x)\in R':=\rect(\I'_k, \d)$, let $R:=\rect(\I, \mathsf{M}(\d))$, then either $\tau(x)=\bot$ or $\tau(x)\in R$.

We let 
\begin{equation*}
Q'_k=\left\{x\in T'_k: \langle x, \q_k\rangle \pmod{M} \in \left[0, \frac1{K-k}\right)\cdot M \right\}
\end{equation*}
for convenience, and note that $Q'_k$ is exactly the rhs of~\dualeqref{eq:image-rho-sk}. We have,  using~\dualeqref{eq:image-rho-sk}
\begin{equation}\duallabel{eq:09239h9gh}
\begin{split}
|\tau_k(\Int_\delta(S'_k))|&\geq |\Pi^*_k(Q'_k)|-|Q'_k\setminus \Int_\delta(S'_k)|\\
&\geq |\Pi^*_k(Q'_k)|-|Q'_k|+|\Int_\delta(S'_k)|\\
&= |\Pi^*_k(Q'_k)|-|Q'_k|+|S'_k|-|S'_k\setminus \Int_\delta(S'_k)|\\
&\geq |\Pi^*_k(Q'_k)|-|Q'_k|+(1-\sqrt{\delta})|S'_k|,
\end{split}
\end{equation}
where the second transition uses the fact that $\rho_k$ is injective, and the last transition uses Lemma~\dualref{lm:rect-int-size}. We now lower bound the first term above. We have, letting $R'_\d:=\rect(\I'_k, \d)$ and $R_\d:=\rect(\I, \mathsf{M}(\d))$ for every $\d$ to simplify notation and recalling that 
$$
Q'_k=\bigcup_{\d\in \mathbb{D}_k} \rect(\I'_k, \d)=\bigcup_{\d\in \mathbb{D}_k} R'_\d
$$
 by Definition~\dualref{def:dk},
\begin{equation}\duallabel{eq:0932t8gGFYEGF}
\begin{split}
|\Pi^*_k(Q'_k)|&\geq \sum_{\substack{\d \in \mathbb{D}_k}} |\Pi^*_k(R'_\d)|\geq (1-8\sqrt{\e})\sum_{\substack{\d \in \mathbb{D}_k}} |R'_\d|\geq (1-8\sqrt{\e})|Q'_k|
\end{split}
\end{equation}
In the equation above we used the fact that by Lemma~\dualref{lm:tau-of-rect}, {\bf (2)},  $\Pi^*_k(R'_\d)$ is defined on all but a $8\sqrt{\e}\cdot |R'_\d|$ vertices of $R'_\d$.  This lower bounds the first term in~\dualeqref{eq:09239h9gh}. To upper bound the second term on the rhs of~\dualeqref{eq:09239h9gh}, we first note that by Lemma~\dualref{lm:rect-size}, {\bf (1)}, with $\gamma=\frac1{K-k}\cdot \prod_{i=0}^{k-1}(1-\frac{i}{K})=\frac1{K-K}(1-\frac{k}{K})=\frac1{K}$ one has 
$$
(1-\sqrt{\e})\frac1{K}\leq |Q'_k|/m^n\leq (1+\sqrt{\e})\frac1{K}
$$
and by Lemma~\dualref{lm:rect-size}, {\bf (2)}, one has 
\begin{equation}\duallabel{eq:082g38g832g8gZNZN}
(1-\sqrt{\e})\frac1{K}\leq |S'_k|/m^n\leq (1+\sqrt{\e})\frac1{K},
\end{equation}
so that 
\begin{equation}\duallabel{eq:0923t238gGFYGE}
(1-3\sqrt{\e})|Q'_k|\leq |S'_k|\leq (1+3\sqrt{\e})|Q'_k|.
\end{equation}

Substituting the above into~\dualeqref{eq:09239h9gh}, we get
\begin{equation}\duallabel{eq:93h42gh903hg8yuwg832g8GFX}
\begin{split}
|\tau_k(\Int_\delta(S'_k))|&\geq |\Pi^*_k(Q'_k)|-|Q'_k|+(1-\sqrt{\delta})|S'_k|\\
&\geq |\Pi^*_k(Q'_k)|-|Q'_k|+(1-\sqrt{\delta})(1-3\sqrt{\e})|Q'_k|\text{~~~~~(by~\dualeqref{eq:0923t238gGFYGE})}\\
&\geq |\Pi^*_k(Q'_k)|-(\sqrt{\delta}+3\sqrt{\e})|Q'_k|\\
&\geq (1-8\sqrt{\e})|Q'_k|-(\sqrt{\delta}+3\sqrt{\e})|Q'_k|\text{~~~~~~~~~~~~~~~~~~~~(by~\dualeqref{eq:0932t8gGFYEGF})}\\
&\geq (1-11\sqrt{\e}-\sqrt{\delta})|Q'_k|\\
&\geq (1-14\sqrt{\e}-\sqrt{\delta})|S'_k|\\
&\geq (1-2\sqrt{\delta})|S'_k|.
\end{split}
\end{equation}
In the last two transitions we used \dualref{p6} and the assumption that $K$ is larger than an absolute constant (so that $\e$ is smaller than an absolute constant).
.

\paragraph{Putting it together.} Noting that 
$$
(1-\sqrt{\e})\frac1{2}\leq |T_*|/m^n\leq (1+\sqrt{\e})\frac1{2}
$$
by Lemma~\dualref{lm:rect-size}, {\bf (1)}, and substituting~\dualeqref{eq:93h42gh903hg8yuwg832g8GFX} into~\dualeqref{eq:h4g90h349gh}, we get
\begin{equation*}
\begin{split}
|T_*\setminus \tau(S')|&\leq |T_*|-\sum_{k\in [K/2]} |\tau_k(\Int_\delta(S'_k))|\\
&\leq (1+\sqrt{\e})\frac1{2}|T_0|-\sum_{k\in [K/2]} (1-2\sqrt{\delta})|S'_k|\text{~~~~~~~~~~~(by~\dualeqref{eq:93h42gh903hg8yuwg832g8GFX})}\\
&\leq (1+\sqrt{\e})\frac1{2}|T_0|-(1-2\sqrt{\delta})\sum_{k\in [K/2]} |S'_k|\\
&\leq (1+\sqrt{\e})\frac1{2}|T_0|-(1-4\sqrt{\delta})(1-\sqrt{\e})\sum_{k\in [K/2]} \frac1{K}|T_0|\text{~~~~(by~\dualeqref{eq:082g38g832g8gZNZN})}\\
&\leq (1+\sqrt{\e})\frac1{2}|T_0|-(1-4\sqrt{\delta}-\sqrt{\e})\frac1{2}|T_0|\\
&\leq (\sqrt{\e}+4\sqrt{\delta}+\sqrt{\e})\frac1{2}|T_0|\\
&\leq \delta^{1/4} |T_0|\\
\end{split}
\end{equation*}
In the last two transitions we used \dualref{p6} and the assumption that $K$ is larger than an absolute constant (so that $\e$ and $\delta$ are smaller than an absolute constant). The second bound follows similarly.
\end{proof}

The next lemma shows that the inverse of $\tau$ maps two points from the same cube to the same set $S'_k$ for some $k\in [K/2]$.
\begin{lemma}[Inverse of $\tau$ on a cube]\duallabel{lm:same-rect}
For every $x, y\in T_*$ such that $x\in \tau(S')$ and $y\in \tau(S')$, if there exists $\d\in (\deltagridInt)^\I$ such that $x, y\in \rect(\I, \d)$, the following conditions hold: {\bf (1)} there exists $k\in [K/2]$,  $\wt{x}, \wt{y}\in S'_k$ such that $x\in \tau(\wt{x})$ and $y\in \tau(\wt{y})$, {\bf (2)}  there exists $\d'\in \I'=\J'_{<k}\cup \Ext_k\cup \{\q_k\}$ such that $\rho_k(\wt{x})\in \rect(\I', \d')$ and $\rho_k(\wt{y})\in \rect(\I', \d')$.
\end{lemma}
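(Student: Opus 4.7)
The plan is to recover unique preimages $\wt{x}, \wt{y} \in S'$ under $\tau$, then trace through the definitions of $\tau$ and $\Pi^*_k$ and use the bijection $\mathsf{M}$ from \dualeqref{eq:m-def} to force $\wt{x}$ and $\wt{y}$ into a common $S'_k$ and their images under $\rho_k$ into a common subcube. The driving observation is that the cube $\rect(\I, \d)$ of $T_*$ containing both $x$ and $y$ has, by injectivity of $\mathsf{M}$, a single predecessor in $\bigcup_k \mathbb{D}_k$.

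First, by Claim~\dualref{cl:pi-star-injective} the map $\tau$ is injective on its domain, so we obtain unique $\wt{x} \in S'_a$ and $\wt{y} \in S'_b$ with $\tau(\wt{x}) = x$ and $\tau(\wt{y}) = y$ for some $a, b \in [K/2]$. Unfolding Definition~\dualref{def:tau} and Definition~\dualref{def:pi-global}, there exist $\d_x \in \mathbb{D}_a$ and $\d_y \in \mathbb{D}_b$ with $\rho_a(\wt{x}) \in \Int_\delta(\rect(\I'_a, \d_x))$ and $x = \tau(\wt{x}) \in \rect(\I, \mathsf{M}(\d_x))$, and analogously for $\wt{y}$ via $\rho_b$, $\d_y$, and $\mathsf{M}(\d_y)$. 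Since $x$ lies in both $\rect(\I, \d)$ and $\rect(\I, \mathsf{M}(\d_x))$, and cubes indexed by distinct elements of $\mathbb{A}$ are disjoint, we conclude $\mathsf{M}(\d_x) = \d$; the same argument for $y$ gives $\mathsf{M}(\d_y) = \d$, so $\mathsf{M}(\d_x) = \mathsf{M}(\d_y)$.

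Finally, I would invoke the fact that $\mathsf{M}$ is a bijection from $\bigcup_{k \in [K/2]} \mathbb{D}_k$ to $\mathbb{A}$, where the union on the left is effectively disjoint: the index sets $\I'_k = \J'_{<k} \cup \Ext_k \cup \{\q_k\}$ differ across $k$ since $\Ext_k \cup \{\q_k\} \subseteq \B'_k$ and the blocks $\B'_k$ are pairwise disjoint by \dualeqref{eq:b-def}, so an element of $(\deltagridInt)^{\I'_a}$ cannot coincide with one of $(\deltagridInt)^{\I'_b}$ unless $a = b$. Therefore $\mathsf{M}(\d_x) = \mathsf{M}(\d_y)$ forces $a = b =: k$ and $\d_x = \d_y =: \d'$, which simultaneously yields conclusion (1) and conclusion (2), since by construction $\rho_k(\wt{x}) \in \rect(\I'_k, \d')$ and $\rho_k(\wt{y}) \in \rect(\I'_k, \d')$. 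I expect the only subtle point, which I would make explicit, is the disjointness of the domains $\mathbb{D}_k$ across $k$; once flagged, the rest is a clean application of injectivity of $\tau$ and of $\mathsf{M}$ together with the disjointness of distinct cubes in $\mathbb{A}$.
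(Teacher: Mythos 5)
Your argument is correct and follows essentially the same route as the paper's proof: recover the preimages under $\tau$, unfold $\tau = \Pi^*_k \circ \rho_k$ via Definitions~\dualref{def:tau} and \dualref{def:pi-global}, identify $\mathsf{M}(\d_x)=\mathsf{M}(\d_y)=\d$ from disjointness of the cubes in $\mathbb{A}$, and conclude $k_x=k_y$ and $\d_x=\d_y$ from bijectivity of $\mathsf{M}$. The one addition you make — spelling out why the domains $\mathbb{D}_k$ are pairwise disjoint (the index sets $\I'_k$ are distinct since $\Ext_k\cup\{\q_k\}\subset\B'_k$ and the $\B'_k$ are disjoint) — is implicit in the paper's appeal to injectivity of $\mathsf{M}$ and is worth having stated explicitly, but it is not a different strategy.
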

\begin{proof}
Let $\wt{x}, \wt{y}\in S'$ be such that $x\in \tau(\wt{x})$ and $y\in \tau(\wt{y})$ (such $\wt{x}$ and $\wt{y}$ exist by assumption of the lemma). Let $k_x, k_y\in [K/2]$ by such that $\wt{x}\in S'_{k_x}$ and $\wt{y}\in S'_{k_y}$. We will show that $k_x=k_y$.
Recall that by Definition~\dualref{def:pi-global} 
we let $\d_x\in \mathbb{D}_{k_x}$ be such that $\rho_{k_x}(\wt{x})\in \Int_\delta(\rect(\I'_{k_x}, \d_x))$, and  let $\d_y\in \mathbb{D}_{k_y}$ be such that $\rho_{k_y}(\wt{y})\in \Int_\delta(\rect(\I'_{k_y}, \d_y))$, where $\rho_{k_x}$ and $\rho_{k_y}$ are corresponding densification maps as per Definition~\dualref{def:rho} (along directions $\q_{k_x}$ and $\q_{k_y}$ respectively).  Then 
$$
x=\Pi^*_{k_x}(\rho_{k_x}(\wt{x})))\in \rect(\I, \mathsf{M}(\d_x))
$$
 and 
 $$
y=\Pi^*_{k_y}(\rho_{k_y}(\wt{y})))\in \rect(\I, \mathsf{M}(\d_y)).
$$

Since $x, y\in \rect(\I, \d)$ by assumption of the lemma, we get that $\mathsf{M}(\d_x)=\mathsf{M}(\d_y)=\d$. Since $\mathsf{M}$ is injective, this in particular implies that $k_x=k_y=k$ and $\d_x=\d_y$, as required. Letting $\I'=\J'_{<k}\cup \Ext_k\cup \{\q_k\}$, we thus get $\rho_k(\wt{x})\in \rect(\I', \d')$ and $\rho_k(\wt{y})\in \rect(\I', \d')$.
\end{proof}

\newcommand{\out}{\Xi}
\section{Predecessor map $\nu$ and its properties}\duallabel{sec:nu}
In this section we define the predecessor map $\nu$, which lets us define a good upper bound on the size of the maximum matching constructed by a low space algorithm later in Section~\dualref{sec:main-theorem} (specifically, see definition of the sets $A_P, A_Q, B_P, B_Q$ in~\dualeqref{eq:apm-def} and~\dualeqref{eq:bpm-def}). Intuitively, the predecessor map $\nu_{\ell, j}$ maps a subset of $T^\ell$ for some $\ell\in [L]$ through $j$ repeated applications of the glueing map $\tau^\ell$ interleaved with applications of the $\downset$ map. This is a natural object, since our construction is motivated by the fact that for appropriately defined `nice' subsets $U\subseteq T^\ell$, namely for appropriately defined rectangles (see Lemma~\ref{lm:key-intro} in Section~\ref{sec:tech-overview}), the edge boundary of the set $U\cup \downset^\ell(U)$ is very sparse, which is the basis of our hard input instance.

\begin{definition}[Predecessor map $\nu$]\duallabel{def:nu}
We define the map $\nu_{\ell, j}: 2^{T^\ell}\to 2^{T^{\ell-j}}$ (mapping subsets of $T^\ell$ to subsets of $T^{\ell-j}$) as follows. For $U\subseteq T^\ell$ we let
\begin{equation*}
\begin{split}
\nu_{\ell, 0}(U)&:=U\\
&\text{and}\\
\nu_{\ell, j}(U)&:=\tau^{\ell-(j-1)}(\downset^{\ell-(j-1)}(\nu_{\ell, j-1}(U)))\\
\end{split}
\end{equation*}
for $j=1,\ldots, \ell$. We define the {\em closure} map $\nu_{\ell, *}$ by
$$
\nu_{\ell, *}(U):=\bigcup_{\substack{j=0\\j \text{~even}}}^\ell \nu_{\ell, j}(U).
$$
Similarly, we define
$$
\mu_{\ell, j}(U):=\downset^{\ell-j}(\nu_{\ell, j}(U))
$$
for $j=0,\ldots, \ell$, and let
$$
\mu_{\ell, *}(U):=\bigcup_{\substack{j=0\\j \text{~even}}}^\ell \mu_{\ell, j}(U).
$$
\end{definition}
\begin{remark}
We note that $\nu_{\ell, j}$ can be viewed as mapping elements of $T^\ell$ to $T^{\ell-j}$: for $x\in T^\ell$ the image of $x$ under $\nu_{\ell, j}$ is naturally defined as $\nu_{\ell, j}(\{x\})$, i.e. the image of a singleton set containing $x$. This map, however, is not a one-to-one map because $\downset$ is not (see Definition~\dualref{def:downset} and Remark~\dualref{rm:downset}). This in particular is the reason why we prefer to define $\nu_{\ell, j}$ as mapping sets to sets in Definition~\dualref{def:nu}. On the other hand, $\nu_{\ell, j}$ maps every vertex to at most $K^j$ vertices, since $\downset$ maps every vertex to at most $K/2$ vertices, and $\tau$ is a one to (at most) one map.
\end{remark}

The main results of this section are the following two lemmas.   The first lemma bounds the size of the image of the non-terminal part of $T^\ell$, namely $T^\ell\setminus T_*^\ell$, under the predecessor map $\mu$:
\begin{lemma}\duallabel{lm:mu-ell-j}
For every $\ell \in [L]$, every $j=0,\ldots, \ell$, one has 
$$
(\ln 2-C/K)^j \frac1{2}(1-\ln 2)|T_0| \leq |\mu_{\ell, j}(T^\ell\setminus T_*^\ell)|\leq (\ln 2+C/K)^j \frac1{2}(1-\ln 2)|T_0|
$$
for an absolute constant $C>0$.
\end{lemma}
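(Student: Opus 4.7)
\begin{proofsketch}
The plan is to follow the structure of the proof of Lemma~\ref{lm:mu-ell-j} from the toy construction in Section~\ref{sec:toy-construction}, while carefully tracking the error terms that arise in the geometric setting. First, I will establish an analog of Lemma~\ref{lm:rect-nu-j}: for every $\ell\in [L]$, every $j=0,\ldots,\ell$, every $\d\in (\deltagridInt)^{\Sp(\B^\ell)}$ consistent with $T_*^\ell$ (in the sense that $\rect(\Sp(\B^\ell),\d)\subseteq T_*^\ell$), and every rectangle $R$ in $\Spt(\B^{>\ell})$, the rectangle $F=\rect(\Sp(\B^\ell)\cup \Spt(\B^{>\ell}),(\d,R))$ satisfies
$$(\ln 2-C/K)^j |F|\;\leq\; |\nu_{\ell,j}(F)|\;\leq\; (\ln 2+C/K)^j|F|.$$
The proof is by induction on $j$. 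The base case $j=0$ is trivial. For the inductive step, I will fix $k\in [K/2]$ and apply the $(K-k,\q^\ell_k)$-densifying map $\rho_k$ to $\downset_k(F)$; by Lemma~\dualref{lm:densification} this yields a subcube of $F$ of relative density $\frac{1}{K-k}$, up to interior corrections controlled by Lemma~\dualref{lm:rect-int-size}. Then I partition $\rho_k(\downset_k(F))$ into cubes $F(a)$ along $\Ext^\ell_k\cup\{\q^\ell_k\}$ and invoke the global permutation map's action on cubes (Definition~\dualref{def:pi-global} together with Lemma~\dualref{lm:tau-of-rect}, {\bf (1)}) to conclude that $\tau^\ell$ carries these cubes into cubes of $T_*^\ell$ consistent with the inductive hypothesis, losing only an $O(\sqrt{\e})$ fraction of points (Lemma~\dualref{lm:tau-of-rect}, {\bf (2)}, and Lemma~\dualref{lm:t-star-minus-tau-sp}). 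Summing over $k\in [K/2]$ with the bound $\sum_{k\in [K/2]}\frac{1}{K-k}=\ln 2\pm 1/K$ from Claim~\dualref{cl:sum-int} closes the induction.

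Given this rectangle bound, I will decompose $T^\ell\setminus T_*^\ell=\bigcup_{k\in[K/2]}(T_k^\ell\setminus T_{k+1}^\ell)$ as in the toy proof, and use the analog of Claim~\ref{cl:nu-prop-basic} to write
$$\mu_{\ell,j}(T_k^\ell\setminus T_{k+1}^\ell)=\bigcup_{s=0}^{k}\mu_{\ell-1,j-1}\bigl(\tau^\ell(\downset_s^\ell(T_k^\ell\setminus T_{k+1}^\ell))\bigr),$$
and further use the identity $|\mu_{\ell-1,j-1}(U)|=|\nu_{\ell-1,j}(U)|$ (analog of Claim~\ref{cl:mu-prop}). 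For each $s\leq k$, the densifying map $\rho_s$ together with $\Pi^*_s$ turns $\downset_s^\ell(T_k^\ell\setminus T_{k+1}^\ell)$ into a disjoint union of translated cubes in $T_*^{\ell-1}$, each of the form required by the rectangle lemma above. Invoking the rectangle bound with parameter $j-1$ on each piece, then summing, gives
$$|\mu_{\ell,j}(T_k^\ell\setminus T_{k+1}^\ell)|=(\ln 2\pm C/K)^{j-1}\cdot\Bigl(\sum_{s=0}^{k}\tfrac{1}{K-s}\Bigr)\cdot |T_k^\ell\setminus T_{k+1}^\ell|,$$
and summing over $k\in [K/2]$ using $|T_k^\ell\setminus T_{k+1}^\ell|=(1\pm\sqrt{\e})\frac{1}{K}|T_0|$ together with the algebraic identity $\frac{1}{K}\sum_{k\in [K/2]}\sum_{s=0}^{k}\frac{1}{K-s}=\frac{1}{2}(1-\ln 2)+O(1/K)$ (already computed at the end of the toy proof) yields the claimed bounds.

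The main obstacle will be the error accounting in the inductive step of the rectangle lemma. Unlike the toy construction, where rectangles are genuine product sets and $\tau$ is a bijection, here each application of $\tau^\ell$ introduces losses from: (i) the $O(\sqrt{\e})$ fraction of points of $R'$ on which $\Pi_{R'\to R}$ is undefined (Lemma~\dualref{lm:tau-of-rect}, {\bf (2)}); (ii) the $O(\sqrt{\delta})$ fraction of boundary points where $\Int_\delta$ truncation bites (Lemma~\dualref{lm:rect-int-size}); and (iii) the $O(\sqrt{\e})$ approximate equality $|R|/m^n\approx\Delta^{|\I|}$ in Lemma~\dualref{lm:rect-size}. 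Because the induction is iterated up to $j=\ell\leq L$ times, I must choose parameters so that the cumulative error at depth $L$ is absorbed into the $O(1/K)$ slack in the exponent base. By the parameter setting (\dualref{p4}--\dualref{p6}), one has $L=\sqrt{K}$ and $\sqrt{\e},\sqrt{\delta}\leq\Delta^{50K^2}\ll K^{-K}$, so $L$-fold accumulation of these errors yields at most an $O(1/K)$ perturbation of $\ln 2$, which is exactly what the lemma allows. I will package this parameter-setting check once in the inductive step and propagate through the rest of the argument.
\end{proofsketch}
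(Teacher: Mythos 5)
Your proposal mirrors the paper's own proof: reduce the $\mu$-bound to a rectangle bound on $\nu_{\ell,j}$, prove that rectangle bound by induction using the densifying maps $\rho_k$ and the global permutation maps $\Pi^*_k$ to transport cubes into $T_*^{\ell-1}$, then decompose $T^\ell\setminus T_*^\ell$ into the bands $T_k^\ell\setminus T_{k+1}^\ell$ and sum using the double-sum identity and Claim~\ref{cl:sum-int}. The only stylistic differences are cosmetic (you state the rectangle lemma with $\H\subseteq\Spt(\B^{>\ell})$ while the paper restricts to $\H\subseteq\Sp(\B^{>\ell})$, which is all that is actually needed in the application, and the paper handles the $|\mu_{\ell,j}|\approx|\nu_{\ell,j+1}|$ translation once at the outset via Lemma~\ref{lm:t-star-minus-tau-sp-full} rather than inside the decomposition), so this is the same argument.
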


The second lemma is Lemma~\dualref{lm:cut-structure}, which proves a key property (equivalent to Lemma~\ref{lm:key-intro} in Section~\ref{sec:tech-overview}) allowing us reason about the structure of the upper bounding vertex cover in Lemma~\dualref{lm:special-edges} of Section~\dualref{sec:main-theorem}.

\subsection{Basic properties of $\nu$ and $\mu$}

\begin{definition}[Injectivity for maps defined on sets]
A map $\nu: 2^{A}\to 2^{B}$ (mapping subsets of $A$ to subsets of $B$) is called injective if for every $x, y\in A, x\neq y$ one has $\nu(\{x\})\cap \nu(\{y\})=\emptyset$.
\end{definition}

The following properties of $\nu$ and $\mu$ will be useful:
\begin{claim}\duallabel{cl:nu-prop-basic}
For every $\ell\in [L]$, every $j\in 1,\ldots, \ell$ and every $U\subseteq T^\ell$ one has {\bf (1)} $\nu_{\ell, j}(U)=\nu_{\ell-1, j-1}(\tau^\ell(\downset^\ell(U))$ and {\bf (2)}  $\mu_{\ell, j}(U)=\mu_{\ell-1, j-1}(\tau^\ell(\downset^\ell(U)))$.
\end{claim}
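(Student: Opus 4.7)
The plan is to prove both identities by straightforward induction on $j$, using the recursive definitions of $\nu_{\ell, j}$ and $\mu_{\ell, j}$ from Definition~\dualref{def:nu}. The intuition is that $\nu_{\ell, j}$ simply applies $\tau$ and $\downset$ alternately $j$ times, starting from level $\ell$ and descending, so ``peeling off'' the outermost application (which is indexed by $\ell-(j-1)$) should be equivalent to first performing one step of $\tau\circ \downset$ at level $\ell$ and then running $\nu_{\ell-1, j-1}$ on the result. Since this is purely a syntactic manipulation of the definition, I do not expect any genuine obstacle.

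For part {\bf (1)}, I would set up induction on $j \ge 1$. For the base case $j=1$, both sides equal $\tau^\ell(\downset^\ell(U))$: the left-hand side by Definition~\dualref{def:nu} applied with $j=1$ together with $\nu_{\ell,0}(U)=U$, and the right-hand side because $\nu_{\ell-1,0}$ is the identity map on subsets. For the inductive step, assuming $\nu_{\ell, j-1}(U) = \nu_{\ell-1, j-2}(\tau^\ell(\downset^\ell(U)))$, I expand
\begin{equation*}
\nu_{\ell, j}(U) = \tau^{\ell-(j-1)}\bigl(\downset^{\ell-(j-1)}(\nu_{\ell, j-1}(U))\bigr) = \tau^{\ell-(j-1)}\bigl(\downset^{\ell-(j-1)}(\nu_{\ell-1, j-2}(\tau^\ell(\downset^\ell(U))))\bigr),
\end{equation*}
and observe that $\ell-(j-1) = (\ell-1)-((j-1)-1)$, so the outermost expression is exactly $\nu_{\ell-1, j-1}(\tau^\ell(\downset^\ell(U)))$ by the recursive clause of Definition~\dualref{def:nu}.

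For part {\bf (2)}, I would deduce it immediately from part {\bf (1)} by applying $\downset^{\ell-j}$ to both sides. Concretely,
\begin{equation*}
\mu_{\ell, j}(U) = \downset^{\ell-j}(\nu_{\ell, j}(U)) = \downset^{\ell-j}\bigl(\nu_{\ell-1, j-1}(\tau^\ell(\downset^\ell(U)))\bigr) = \downset^{(\ell-1)-(j-1)}\bigl(\nu_{\ell-1, j-1}(\tau^\ell(\downset^\ell(U)))\bigr),
\end{equation*}
which is exactly $\mu_{\ell-1, j-1}(\tau^\ell(\downset^\ell(U)))$ by the definition of $\mu$. This is entirely bookkeeping, so the claim follows.
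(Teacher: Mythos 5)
Your proposal is correct and takes essentially the same approach as the paper: the paper proves {\bf (1)} by informally unfolding the recursive definition of $\nu_{\ell,j}$ all the way down and refolding it as $\nu_{\ell-1,j-1}(\tau^\ell(\downset^\ell(U)))$, which is exactly what your induction on $j$ formalizes, and {\bf (2)} is deduced from {\bf (1)} in both arguments by applying $\downset^{\ell-j}$ and using $\ell-j=(\ell-1)-(j-1)$.
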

\begin{proof}
For {\bf (1)} we have by Definition~\dualref{def:nu} 
\begin{equation*}
\begin{split}
\nu_{\ell, j}(U)&=\tau^{\ell-(j-1)}(\downset^{\ell-(j-1)}(\nu_{\ell, j-1}(U)))\\
&=\tau^{\ell-(j-1)}(\downset^{\ell-(j-1)}(\tau^{\ell-(j-2)}(\downset^{\ell-(j-2)}(\nu_{\ell, j-2}(U)))))\\
&=\tau^{\ell-(j-1)}(\downset^{\ell-(j-1)}(\tau^{\ell-(j-2)}(\downset^{\ell-(j-2)}(\ldots \tau^{\ell}(\downset^\ell(U))\ldots))))\\
&=\nu_{\ell-1, j-1}(\tau^{\ell}(\downset^\ell(U)))\\
\end{split}
\end{equation*}

For {\bf (2)} we have by Definition~\dualref{def:nu} and using {\bf (1)}
\begin{equation*}
\begin{split}
\mu_{\ell, j}(U)&=\downset^{\ell-j}(\nu_{\ell, j}(U))\\
&=\downset^{\ell-j}(\nu_{\ell-1, j-1}(\tau^{\ell}(\downset^\ell(U))))\\
&=\mu_{\ell-1, j-1}(\tau^{\ell}(\downset^\ell(U))).
\end{split}
\end{equation*}
\end{proof}

We also need
\begin{lemma}[Basic properties of the maps $\nu_{\ell, j}$ and $\mu_{\ell, j}$]\duallabel{lm:nu-prop}
The following conditions hold for the maps $\nu$ and $\mu$ defined above:
\begin{description}
\item[(1)] for every $\ell\in [L]$ and every $0\leq j\leq \ell$ the maps $\nu_{\ell, j}$ and $\mu_{\ell, j}$ are injective;
\item[(2)] every $\ell, \ell'\in [L]$ every $0\leq j\leq \ell$, $0\leq j'\leq \ell'$, one has 
$$
\nu_{\ell, j}(T^\ell\setminus T_*^\ell)\cap \nu_{\ell', j'}(T^{\ell'}\setminus T_*^{\ell'})=\emptyset
$$ unless $\ell=\ell'$ and $j=j'$.
\item[(3)] every $\ell, \ell'\in [L]$ every $0\leq j\leq \ell$, $0\leq j'\leq \ell'$, one has 
$$
\mu_{\ell, j}(T^\ell\setminus T_*^\ell)\cap \mu_{\ell', j'}(T^{\ell'}\setminus T_*^{\ell'})=\emptyset
$$ unless $\ell=\ell'$ and $j=j'$.
\end{description}
\end{lemma}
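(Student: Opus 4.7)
The plan is to follow the structure of the analogous Lemma~\ref{lm:nu-prop} in the toy construction, with three modifications to account for the main-construction setting: (i) the glueing map $\tau^\ell$ is now only a partial map, (ii) injectivity of $\tau^\ell$ is supplied by Claim~\dualref{cl:pi-star-injective} rather than by Lemma~\ref{lm:tau-prop}, and (iii) we must first promote Claim~\dualref{cl:nu-prop-basic} to its iterated form. Let me treat (1), (2), (3) in order.

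For part (1), I would induct on $j$. The base case $j=0$ is immediate since $\nu_{\ell,0}$ is the identity map. For the inductive step, Definition~\dualref{def:nu} gives
$$
\nu_{\ell,j}(U) \;=\; \tau^{\ell-(j-1)}\bigl(\downset^{\ell-(j-1)}\bigl(\nu_{\ell,j-1}(U)\bigr)\bigr).
$$
Pick distinct $x,y \in T^\ell$. By the inductive hypothesis the sets $\nu_{\ell,j-1}(\{x\})$ and $\nu_{\ell,j-1}(\{y\})$ are disjoint. Since $\downset^{\ell-(j-1)}$ is injective directly from Definition~\dualref{def:downset} (points in $S^{\ell'}$ are uniquely identified by their label together with the index of the set $S^{\ell'}_k$ they belong to, cf.\ Remark~\dualref{rm:sk-vertex-identification}), the downsets remain disjoint. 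Finally, Claim~\dualref{cl:pi-star-injective} shows that $\tau^{\ell-(j-1)}$ is injective wherever it is defined, and undefined inputs contribute the empty set to the image; in either case the images stay disjoint. Injectivity of $\mu_{\ell,j}=\downset^{\ell-j}\circ \nu_{\ell,j}$ then follows by composing two injective maps.

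For part (2), I would first promote Claim~\dualref{cl:nu-prop-basic}(1) to its iterated form: for every $0\leq a\leq j$,
$$
\nu_{\ell,j}(U) \;=\; \nu_{\ell-a,\,j-a}\bigl(\nu_{\ell,a}(U)\bigr),
$$
which is a straightforward induction on $a$ using Definition~\dualref{def:nu}. Given this, I would split into cases. If $\ell-j\neq \ell'-j'$, the two images live in different sets $T^{\ell-j}$ and $T^{\ell'-j'}$, hence are disjoint. Otherwise, assume WLOG $\ell<\ell'$ (the case $\ell=\ell'$ forces $j=j'$). The key observation is that the range of $\tau^{\ell+1}$ is contained in $T_*^\ell\cup\{\bot\}$ by construction (Definition~\dualref{def:tau} together with Definition~\dualref{def:pi-global} and \dualeqref{eq:m-a-def}), so
$$
\nu_{\ell',\ell'-\ell}\bigl(T^{\ell'}\setminus T_*^{\ell'}\bigr) \;\subseteq\; T_*^\ell.
$$
Combining this with the iterated form (applied with $a=\ell'-\ell$, so $j'-a=j$) yields
$$
\nu_{\ell',j'}\bigl(T^{\ell'}\setminus T_*^{\ell'}\bigr) \;=\; \nu_{\ell,j}\bigl(\nu_{\ell',\ell'-\ell}(T^{\ell'}\setminus T_*^{\ell'})\bigr) \;\subseteq\; \nu_{\ell,j}(T_*^\ell).
$$
Since $T^\ell\setminus T_*^\ell$ and $T_*^\ell$ are disjoint and $\nu_{\ell,j}$ is injective by part (1), the intersection of $\nu_{\ell,j}(T^\ell\setminus T_*^\ell)$ with $\nu_{\ell,j}(T_*^\ell)$ is empty, giving (2).

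For part (3), I would reduce to (2): by Definition~\dualref{def:nu} the sets $\mu_{\ell,j}(U)=\downset^{\ell-j}(\nu_{\ell,j}(U))$ live in $S^{\ell-j}$, so the $\ell-j\neq \ell'-j'$ case is immediate. When $\ell-j=\ell'-j'$, injectivity of $\downset^{\ell-j}$ gives
$$
\mu_{\ell,j}(T^\ell\setminus T_*^\ell)\cap \mu_{\ell',j'}(T^{\ell'}\setminus T_*^{\ell'}) \;=\; \downset^{\ell-j}\bigl(\nu_{\ell,j}(T^\ell\setminus T_*^\ell)\cap \nu_{\ell',j'}(T^{\ell'}\setminus T_*^{\ell'})\bigr)\;=\;\emptyset
$$
by part (2). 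The main obstacle is only a bookkeeping one: verifying that partiality of $\tau$ (present in the main construction but absent in the toy) does not break the injectivity argument in (1); this is handled by interpreting $\tau(\{x\})$ as the empty set whenever $\tau$ is undefined at $x$, so disjointness of images is preserved trivially through undefined inputs.
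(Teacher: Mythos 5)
Your proposal is correct and follows essentially the same route as the paper's proof: part (1) from injectivity of $\tau$ (Claim~\dualref{cl:pi-star-injective}) and of $\downset$, part (2) by reducing to the case $\ell-j=\ell'-j'$ and using the iterated form of Claim~\dualref{cl:nu-prop-basic} together with the fact that the range of $\tau^{\ell+1}$ lies in $T_*^\ell$, and part (3) by reducing to (2) via injectivity of $\downset$. The only difference is that you spell out the induction for (1) and explicitly flag that partiality of $\tau$ is harmless (undefined inputs yield the empty set), which the paper handles implicitly.
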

\begin{proof}
\noindent {\bf (1)}  follows since $\tau$ is injective by Claim~\dualref{cl:pi-star-injective} and $\downset$ is injective by construction (Definition~\dualref{def:downset}).

We now show {\bf (2)}. First note that  $\nu_{\ell, j}(T^\ell\setminus T_*^\ell)\subseteq T^{\ell-j}$ and $\nu_{\ell', j'}(T^{\ell'}\setminus T_*^{\ell'})\subseteq T^{\ell'-j'}$, and hence the two sets are disjoint if $\ell-j\neq \ell'-j'$. Now suppose that $\ell-j=\ell'-j'$ and assume without loss of generality that $\ell\leq \ell'$. Furthermore, we can assume that $\ell<\ell'$, since if $\ell=\ell'$, one must have $j=j'$ as otherwise the sets are disjoint by the previous argument. Now note that 
\begin{equation*}
\begin{split}
\nu_{\ell', \ell'-\ell}(T^{\ell'}\setminus T_*^{\ell'})&=\tau^{\ell+1}(\downset^{\ell+1}(\nu_{\ell', \ell'-\ell-1}(T^{\ell'}\setminus T_*^{\ell'}))\subseteq T_*^\ell,
\end{split}
\end{equation*}
since the range of $\tau^{\ell+1}$ is $T_*^\ell$ (see~Definition~\dualref{def:tau}). 
This means that 
\begin{equation*}
\begin{split}
\nu_{\ell', j'}(T^{\ell'}\setminus T_*^{\ell'})&=\nu_{\ell, j}(\nu_{\ell', \ell'-\ell}(T^{\ell'}\setminus T_*^{\ell'}))\\
&\subseteq \nu_{\ell, j}(T_*^\ell),
\end{split}
\end{equation*}
and we get that
$$
\nu_{\ell, j}(T^\ell\setminus T_*^\ell)\cap  \nu_{\ell', j'}(T^{\ell'}\setminus T_*^{\ell'})\subseteq \nu_{\ell, j}(T^\ell\setminus T_*^\ell)\cap \nu_{\ell, j}(T_*^\ell)=\emptyset
$$
since $\nu_{\ell, j}$ is injective by {\bf (1)}.

We now prove {\bf (3)}. First note that by Definition~\dualref{def:nu}
$$
\mu_{\ell, j}(T^\ell\setminus T_*^\ell)\subseteq S^{\ell-j}
$$
and 
$$
\mu_{\ell', j'}(T^{\ell'}\setminus T_*^{\ell'})\subseteq S^{\ell'-j'},
$$
and hence similarly to above the two sets are disjoint unless $\ell-j=\ell'-j'$. {\bf (3)} now follows by noting that, again using Definition~\dualref{def:nu}, we get, since $\ell-j=\ell'-j'$ and $\downset$ is injective,
\begin{equation*}
\begin{split}
\mu_{\ell, j}(T^\ell\setminus T_*^\ell)\cap \mu_{\ell', j'}(T^{\ell'}\setminus T_*^{\ell'})&=\downset^{\ell-j}(\nu_{\ell, j}(T^\ell\setminus T_*^\ell))\cap \downset^{\ell'-j'}(\nu_{\ell', j'}(T^{\ell'}\setminus T_*^{\ell'}))\\
&=\downset^{\ell-j}(\nu_{\ell, j}(T^\ell\setminus T_*^\ell) \cap \nu_{\ell', j'}(T^{\ell'}\setminus T_*^{\ell'}))\\
&=\emptyset,
\end{split}
\end{equation*}
where we used {\bf (2)} in the last transition.
\end{proof}

While for a given $\ell$ the terminal subcube $T_*^\ell$ is almost entirely covered by the range of $\tau^{\ell+1}$, it will be useful to know that almost all of $T_*^\ell$ can be covered by the image of the non-terminal part of $T^{\ell+j}$ under  $\nu_{\ell+j, j}$:
\begin{lemma}\duallabel{lm:t-star-ell-rec}
For every $\ell\in [L]$ there exists $Z^\ell\subset T_*^\ell$ such that 
\begin{equation*}
T_*^\ell= Z^\ell\cup \left(\nu_{L-1, L-1-\ell}(T_*^{L-1})\cup \bigcup_{j=1}^{L-1-\ell} \nu_{\ell+j, j}(T^{\ell+j}\setminus T_*^{\ell+j})\right)
\end{equation*}
and $|Z^\ell|\leq K^{L}\delta^{1/4} \cdot |P|$.
\end{lemma}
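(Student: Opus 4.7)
\begin{proofsketch}
The plan is to mirror the proof of the toy-setting Lemma~\ref{lm:t-star-ell-rec} from Section~\ref{sec:toy-construction}, but accumulating a small error set $Z^\ell$ at each inductive step to account for the fact that the glueing map $\tau^\ell$ in the main construction is not surjective onto $T_*^{\ell-1}$, but merely covers all but a $\delta^{1/4}|T_0|$-fraction of it (Lemma~\dualref{lm:t-star-minus-tau-sp}). I will argue by downward induction on $\ell$, from $\ell=L-1$ down to $\ell=0$, and in parallel establish the quantitative bound $|Z^\ell|\leq \sum_{s=0}^{L-1-\ell} K^s\cdot \delta^{1/4}|T_0|\leq K^L\delta^{1/4}|T_0|\leq K^L\delta^{1/4}|P|$.

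For the base case $\ell=L-1$, set $Z^{L-1}=\emptyset$ and note that $T_*^{L-1}=\nu_{L-1,0}(T_*^{L-1})$ since $\nu_{L-1,0}$ is the identity. For the inductive step $\ell\to \ell-1$, start by writing
\[
T_*^{\ell-1}=\bigl(T_*^{\ell-1}\setminus \tau^\ell(S^\ell)\bigr)\cup \tau^\ell(S^\ell),
\]
where the first piece has cardinality at most $\delta^{1/4}|T_0|$ by Lemma~\dualref{lm:t-star-minus-tau-sp}. Since $S^\ell=\downset^\ell(T^\ell)$, the second piece equals $\tau^\ell(\downset^\ell(T^\ell\setminus T_*^\ell))\cup \tau^\ell(\downset^\ell(T_*^\ell))$. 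The first of these summands is exactly $\nu_{\ell,1}(T^\ell\setminus T_*^\ell)=\nu_{(\ell-1)+1,1}(T^{(\ell-1)+1}\setminus T_*^{(\ell-1)+1})$ by Definition~\dualref{def:nu}, providing the $j=1$ term of the target expression for $\ell-1$.

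For the second summand, apply the inductive hypothesis at $\ell$ and push it through $\tau^\ell(\downset^\ell(\cdot))$, using distributivity of these maps over unions together with Claim~\dualref{cl:nu-prop-basic}, {\bf (1)} (in the form $\nu_{\ell,j+1}=\tau^\ell(\downset^\ell(\nu_{\ell,j}(\cdot)))$, since $\ell-j\geq 1$). This converts the $\nu_{L-1,L-1-\ell}(T_*^{L-1})$ term into $\nu_{L-1,L-1-(\ell-1)}(T_*^{L-1})$ and shifts each $\nu_{\ell+j,j}(T^{\ell+j}\setminus T_*^{\ell+j})$, $j=1,\ldots,L-1-\ell$, into $\nu_{(\ell-1)+(j+1),j+1}(\cdots)$, filling in the indices $j=2,\ldots,L-\ell$ of the target expression. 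Combining these with the $j=1$ term obtained above yields the target union, up to the residual set
\[
Z^{\ell-1}:=\bigl(T_*^{\ell-1}\setminus \tau^\ell(S^\ell)\bigr)\cup \tau^\ell(\downset^\ell(Z^\ell)).
\]
Finally, since $\downset^\ell$ is a $K/2$-to-one map and $\tau^\ell$ is injective (Claim~\dualref{cl:pi-star-injective}), $|\tau^\ell(\downset^\ell(Z^\ell))|\leq K|Z^\ell|$, so the recursion $|Z^{\ell-1}|\leq \delta^{1/4}|T_0|+K|Z^\ell|$ together with $|Z^{L-1}|=0$ yields the claimed bound. The only technical care needed is to verify that no index on the right-hand side exceeds $L-1$ (so that all invocations of $\nu$ are well-defined), which is ensured because in the inductive step we pass from $j$-values in $\{1,\ldots,L-1-\ell\}$ to $\{2,\ldots,L-\ell\}$, and $L-\ell\leq L-1-(\ell-1)$, matching the range in the statement for $\ell-1$.
\end{proofsketch}
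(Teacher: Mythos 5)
Your proof is correct and follows essentially the same route as the paper: downward induction on $\ell$ with $Z^{L-1}=\emptyset$, decomposing $T_*^{\ell-1}$ into the $\tau^\ell$-miss set and $\tau^\ell(\downset^\ell(T^\ell))$, pushing the inductive hypothesis through $\tau^\ell\circ\downset^\ell$ via Claim~\ref{cl:nu-prop-basic-full}, and defining $Z^{\ell-1}=(T_*^{\ell-1}\setminus\tau^\ell(S^\ell))\cup\tau^\ell(\downset^\ell(Z^\ell))$. Your recursion constant is $K$ rather than the sharper $K/2$ and your geometric-sum index overshoots by one, but both slacks are absorbed comfortably into the stated $K^L$ bound, so the argument goes through.
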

\begin{proof}
We prove by induction on $\ell=L-1,\ldots, 0$ that there exists sets $Z^\ell\subset T_*^\ell$ such that 
\begin{equation}\duallabel{eq:t-star-2ug2438}
T_*^\ell=Z^\ell\cup \left(\nu_{L-1, L-1-\ell}(T_*^{L-1})\cup \bigcup_{j=1}^{L-1-\ell} \nu_{\ell+j, j}(T^{\ell+j}\setminus T_*^{\ell+j})\right)
\end{equation}
and $|Z^\ell|\leq K^{L-\ell} \delta^{1/4} |P|$.

\noindent{\bf Base: $\ell=L-1$.} One has $T_*^\ell=\nu_{L-1, 0}(T_*^{L-1})$, since $\nu_{L-1, 0}$ is the identity map by definition (see Definition~\dualref{def:nu}). We let $Z^{L-1}=\emptyset$, so that $T_*^{L-1}=Z^\ell\cup \nu_{L-1, 0}(T_*^{L-1})$.

\noindent{\bf Inductive step: $\ell\to \ell-1$.}   Let
$$
Z^\ell=T_*^\ell\setminus \left(\nu_{L-1, L-1-\ell}(T_*^{L-1})\cup\bigcup_{\substack{j\geq 1}} \nu_{\ell+j, j}(T^{\ell+j}\setminus T_*^{\ell+j})\right),
$$
and note that 
\begin{equation}\duallabel{eq:92DUIHE24}
T_*^\ell=Z^\ell\cup  \left(\nu_{L-1, L-1-\ell}(T_*^{L-1})\cup\bigcup_{\substack{j\geq 1}} \nu_{\ell+j, j}(T^{\ell+j}\setminus T_*^{\ell+j})\right).
\end{equation}

Applying $\tau^\ell(\downset^\ell(\cdot))$ to both sides of~\dualeqref{eq:92DUIHE24}, we get, letting $Q=\nu_{L-1, L-1-\ell}(T_*^{L-1})$ and 
\begin{equation}\duallabel{eq:z-prime}
Z'=\tau^\ell(\downset^\ell(Z))
\end{equation}
to simplify notation,
\begin{equation}\duallabel{eq:o4329hg394hg}
\begin{split}
\tau^\ell(\downset^\ell(T_*^\ell))&=Z'\cup \tau^\ell\left(\downset^\ell\left(Q\cup \bigcup_{\substack{j\geq 1}} \nu_{\ell+j, j}(T^{\ell+j}\setminus T_*^{\ell+j})\right)\right)\\
&= Z'\cup\tau^\ell\left(\downset^\ell\left(Q\right)\right)\cup \bigcup_{\substack{j\geq 1}} \tau^\ell\left(\downset^\ell\left(\nu_{\ell+j, j}(T^{\ell+j}\setminus T_*^{\ell+j})\right)\right)\\
&= Z'\cup\nu_{L-1, L-\ell}(T_*^{L-1})\cup \bigcup_{\substack{j\geq 1}} \nu_{\ell+j, j+1}(T^{\ell+j}\setminus T_*^{\ell+j})\\
&= Z'\cup\nu_{L-1, L-\ell}(T_*^{L-1})\cup \bigcup_{\substack{j\geq 1}} \nu_{(\ell-1)+(j+1), j+1}(T^{(\ell-1)+(j+1)}\setminus T_*^{(\ell-1)+(j+1)})\\
&= Z'\cup\nu_{L-1, L-1-(\ell-1)}(T_*^{L-1})\cup \bigcup_{\substack{j\geq 2}} \nu_{(\ell-1)+j, j}(T^{(\ell-1)+j}\setminus T_*^{(\ell-1)+j}),
\end{split}
\end{equation}
where the third transition is by Claim~\dualref{cl:nu-prop-basic}, {\bf (1)}.
At the same time we also have
\begin{equation}\duallabel{eq:9h429gh23g}
\tau^\ell(\downset^\ell(T^\ell\setminus T_*^\ell))=\nu_{\ell, 1}(T^\ell\setminus T_*^\ell)=\nu_{(\ell-1)+1, 1}(T^\ell\setminus T_*^\ell).
\end{equation}

Let $Z''=T_*^{\ell-1}\setminus \tau^\ell(S^\ell)$. We have
\begin{equation}\duallabel{eq:239hg92hg}
\begin{split}
T_*^{\ell-1}&=Z''\cup \tau^\ell(S^\ell)\\
&=Z''\cup \tau^\ell(\downset^\ell(T^\ell))\\
&=Z''\cup \tau^\ell(\downset^\ell(T^\ell\setminus T_*^\ell))\cup \tau^\ell(\downset^\ell(T_*^\ell))\\
\end{split}
\end{equation}

Substituting~\dualeqref{eq:o4329hg394hg} and~\dualeqref{eq:9h429gh23g} into~\dualeqref{eq:239hg92hg}, we get
\begin{equation}\duallabel{eq:92h393yh9h9HUFGDUG}
T_*^{\ell-1}=Z^{\ell-1}\cup \nu_{L-1, L-\ell}(T_*^{L-1})\cup\bigcup_{\substack{j\geq 1}} \nu_{\ell-1+j, j}(T^{\ell-1+j}\setminus T_*^{\ell-1+j}),
\end{equation}
where we let $Z^{\ell-1}:=Z'\cup Z''$, so that by~\dualeqref{eq:z-prime}
$$
Z^{\ell-1}=Z'\cup Z''=\tau^\ell(\downset^\ell(Z^\ell))\cup (T_*^{\ell-1}\setminus \tau^\ell(S^\ell)).
$$
Thus, in order to complete the proof of the inductive claim, we need to show that $|Z^{\ell-1}|\leq K^{L-1-(\ell-1)}\delta^{1/4}\cdot |P|$. 

We have $|Z''|\leq \delta^{1/4} |T_0|$ by Lemma~\dualref{lm:t-star-minus-tau-sp}, and hence
\begin{equation}
\begin{split}
|Z^\ell|&\leq |\tau^\ell(\downset^\ell(Z^\ell))|+|Z''|\\
&\leq (K/2)|Z^\ell|+|Z''|\\
&\leq (K/2)\cdot K^{L-1-\ell} \delta^{1/4} |P|+\delta^{1/4} |P|\\
&\leq  K^{L-1-(\ell-1)} \delta^{1/4} |P|,\\
\end{split}
\end{equation}
The second inequality is due to the fact that $\downset^\ell$ maps every vertex to at most $K/2$ vertices, and $\tau^\ell$ is a one-to-one map. The third inequality uses the inductive hypothesis and the bound $|Z''|\leq \delta^{1/4} |T_0|\leq \delta^{1/4} |P|$. This completes the proof of the inductive step.

Finally, to obtain the result of the lemma, we extend the union on the right hand side of~\dualeqref{eq:92h393yh9h9HUFGDUG} to include $j=0$, getting
$$
T^\ell=Z^{\ell-1}\cup \nu_{L-1, L-1-\ell}(T_*^{L-1})\cup\bigcup_{\substack{j\geq 0}} \nu_{\ell+j, j}(T^{\ell+j}\setminus T_*^{\ell+j}),
$$
as required. This completes the proof of the inductive step.
\end{proof}

\subsection{Image of non-terminal subsets $T^\ell\setminus T_*^\ell$ under $\mu$ (proof of Lemma~\dualref{lm:mu-ell-j})} 
In this section we prove Lemma~\dualref{lm:mu-ell-j}.  We start with two auxiliary lemmas, and a definition:

\begin{definition}[Vector consistent with the terminal subcube]\duallabel{def:cons}
For every $\ell\in [L]$, if $\J=\Sp(\B^\ell)$, we say that a vector $\f\in \deltagrid^{\J}$ is {\em consistent with the terminal subcube $T_*^\ell$} if for every $k\in [K/2]$ one has 
$\f_{\j_k}\in [0, 1-\frac1{K-k})$.
\end{definition}

The first lemma bounds the size of the image of a rectangle $F$ consistent with the terminal subcube under the predecessor map $\nu$:
\begin{lemma}\duallabel{lm:rect-nu-j}
There exists an absolute constant $C>0$ such that for every $\ell \in [L]$, every $j=0,\ldots, \ell$,  if $\I=\Sp(\B^\ell)$ and $\H\subseteq \Sp(\B^{>\ell})$, the following conditions hold. 

For every $\f\in \deltagrid^{\I}$ consistent with the terminal subcube $T_*^\ell$ (as per Definition~\dualref{def:cons}) and every $\c, \d\in \deltagrid^{\H}, \c<\d$, 
if $F=\rect(\I\cup \H, (\f, \c), (\f+\Delta\cdot \mathbf{1}, \d))$, then
$$
(\ln 2-C/K)^j |F|\leq |\nu_{\ell, j}(F)|\leq (\ln 2+C/K)^j |F|
$$
for an absolute constant $C>0$.
\end{lemma}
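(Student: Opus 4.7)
\begin{proofsketch}
The proof proceeds by induction on $j$, following the template of Lemma~\ref{lm:rect-nu-j} in the toy construction. The base case $j=0$ is immediate since $\nu_{\ell,0}$ is the identity and $|F|=|F|$. For the inductive step, I will use Claim~\dualref{cl:nu-prop-basic}, {\bf (1)}, to rewrite
$$\nu_{\ell,j}(F)=\nu_{\ell-1,j-1}(\tau^{\ell}(\downset^{\ell}(F))),$$
and decompose $\tau^\ell(\downset^\ell(F))$ as a disjoint union over $k\in[K/2]$ of the pieces $\tau^\ell(\downset_k(F))$. These are disjoint because $\tau$ is injective (Claim~\dualref{cl:pi-star-injective}) and the $\downset_k(F)$ live in disjoint vertex sets $S_k^\ell$. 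The strategy is to show that, for each $k$, the image $\tau^\ell(\downset_k(F))$ is (up to small error) a union of rectangles in $\I^{\ell-1}:=\Sp(\B^{\ell-1})$ extended by a rectangle in $\H\subseteq\Sp(\B^{>\ell})\subseteq\Sp(\B^{>\ell-1})$, each consistent with $T_*^{\ell-1}$, so that the inductive hypothesis can be applied to each piece.

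For a fixed $k\in[K/2]$, since $\f$ is consistent with $T_*^\ell$, we have $F\subseteq T_k^\ell$, and Lemma~\dualref{lm:densification} together with the fact that $\q_k\notin \I\cup\H$ ensures that $\rho_k$ maps $\downset_k(\Int_\delta(F))$ injectively into the rectangle
$$F'=\rect\bigl(\I\cup\H\cup\{\q_k\},(\f,\c,0),(\f+\Delta\cdot\mathbf{1},\d,\tfrac{1}{K-k})\bigr),$$
losing only an $O(\sqrt\delta)$ fraction by Lemma~\dualref{lm:rect-int-size}. Next I refine $F'$ into subcubes of side $\Delta$ in the index set $\I'_k=\J'_{<k}\cup\Ext_k\cup\{\q_k\}$ using Claim~\dualref{cl:subcubes-decomp}, noting that since $\f$ is consistent with $T_*^\ell$, every such subcube lies in some $\rect(\I'_k,\d')$ with $\d'\in\mathbb{D}_k$, so that $\Pi^*_k$ is precisely the local permutation map $\Pi_{R'\to R}$ with $R=\rect(\I,\mathsf{M}(\d'))$. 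Lemma~\dualref{lm:tau-of-rect}, {\bf (1)}, then shows that the $\delta$-interior of each extended subcube
$$\rect(\I'_k\cup\H,(\d',\c),(\d'+\Delta\cdot\mathbf{1},\d))$$
is mapped under $\Pi^*_k$ into the corresponding extended rectangle $\rect(\I\cup\H,(\mathsf{M}(\d'),\c),(\mathsf{M}(\d')+\Delta\cdot\mathbf{1},\d))$. Crucially, $\mathsf{M}(\d')$ is consistent with $T_*^{\ell-1}$ by Definition~\dualref{def:dk} and~\dualeqref{eq:m-a-def}, so the inductive hypothesis applies to each such image and gives
$$(\ln 2-C/K)^{j-1}|\widehat{F}(\d')|\le|\nu_{\ell-1,j-1}(\widehat{F}(\d'))|\le(\ln 2+C/K)^{j-1}|\widehat{F}(\d')|.$$

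Putting the pieces together, summing over the decomposition into subcubes (which, by Lemma~\dualref{lm:tau-of-rect}, {\bf (2)}, and injectivity of $\tau$ (Claim~\dualref{cl:pi-star-injective}), preserves cardinalities up to an $O(\sqrt\e)$ factor), and then summing over $k\in[K/2]$ with the identity
$$\sum_{k\in[K/2]}\frac{1}{K-k}=\ln 2\pm O(1/K)$$
from Claim~\ref{cl:sum-int}, I obtain $|\nu_{\ell,j}(F)|=(\ln 2\pm C/K)^j|F|$ for a slightly larger absolute constant $C$. The main obstacle is bookkeeping of the error terms: unlike the clean toy construction, each invocation of $\rho_k$ drops boundary points (accounted for by Lemma~\dualref{lm:rect-int-size}), each invocation of the permutation map $\Pi^*_k$ loses points (Lemma~\dualref{lm:tau-of-rect}, {\bf (2)}), and the inductive sizes are only approximately proportional to $|F|$. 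These multiplicative errors must be absorbed into the slack between $\ln 2$ and $\ln 2\pm C/K$; this works because by~\dualref{p5}, \dualref{p6} all these errors are polynomial in $\delta$ and $\e$, which are chosen much smaller than $\Delta$, and thus negligible compared to the $1/K$ slack.
\end{proofsketch}
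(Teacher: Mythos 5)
Your proposal follows the same inductive decomposition as the paper — decompose $\tau^\ell(\downset^\ell(F))$ by phase $k$, apply the densification map $\rho_k$ and local permutation $\Pi^*_k$ to pass to rectangles consistent with $T_*^{\ell-1}$, invoke the inductive hypothesis, and sum over $k$ via $\sum_{k\in[K/2]}\frac1{K-k}\approx\ln 2$. However, the extended rectangles you pass to the inductive hypothesis are written as $\rect(\I'_k\cup\H,\ldots)$ and $\rect(\I\cup\H,\ldots)$, which omits the coordinates $\J^\ell_{\geq k}\cup\{\r^\ell\}$. Since the original $F$ pins those coordinates to a $\Delta$-cube (via $\f$), every subcube of $F'$ genuinely inherits those constraints, and dropping them makes your $\wh{F}(\d')$ too large by a factor of roughly $\Delta^{-(K/2+2-k)}$, which is polynomial in $K$ and far outside the $\pm C/K$ slack — both the upper bound and the lower bound of the inductive step then fail. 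The correct set to feed the inductive hypothesis as the new $\H$ is $\J^\ell_{\geq k}\cup\{\r^\ell\}\cup\H$, which is still a subset of $\Sp(\B^{>(\ell-1)})$ as required (this is exactly the $\I_1=\I_1'$ set in the paper's proof). With that correction your argument matches the paper's; without it the chain of inclusions still holds but the cardinality bookkeeping does not close.
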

\begin{proof}
The proof is by induction on $j$. The inductive claim is that
for every $\ell \in [L]$,  if $\I=\Sp(\B^\ell)$ and $\H\subset \Sp(\B^{\geq \ell+1})$, the following conditions hold. For every $\f\in \deltagrid^{\I}$ and every $\c,  \d\in \deltagrid^\H$, if $F=(\I\cup \H, (\f, \c), (\f+\Delta\cdot \mathbf{1}, \d))$, then 
$$
(\ln 2-C/K)^j |F|\leq |\nu_{\ell, j}(F)|\leq (\ln 2+C/K)^j |F|
$$
where $C>0$ is the absolute constant.

\noindent{\bf Base: $j=0$.} We have $|\nu_{\ell, j}(F)|=|F|$ since $\nu_{\ell, 0}(F)=F$ for every $F$.

\noindent{\bf Inductive step: $j \to j+1$.} Fix $k\in [K/2]$, and let $\rho_k$ be the  $(K-k, \q_k)$-compressing map as per Definition~\dualref{def:rho}, where $\q_k=\q_k^\ell\in \F$ is the compression vector for the $k$-th phase of the graph $G^\ell$ (see Definition~\dualref{def:ext}). We let $\J':=\J^\ell$, $\J:=\J^{\ell-1}$,  $\B':=\B^\ell$, $\B:=\B^{\ell-1}$ to simplify notation. Similarly define $T':=T^\ell$, $T:=T^{\ell-1}$ and $S':=S^\ell$, $S:=S^{\ell-1}$ to simplify notation. Let $\r':=\r^\ell$ and $\r:=\r^{\ell-1}$ denote the $\ell$-th and the $(\ell-1)$-th compression index respectively. We define 
$$
Z_k:=\left\{x\in [m]^n: \weight(x) \in \left[0, \frac{1}{K-k}\right)\cdot W \pmod{ W}\right\},
$$
and let 
\begin{equation}\duallabel{eq:q-ext-k}
Q^{ext}_k:=\left\{x\in F: \langle x, \q_k\rangle \pmod{M} \in \left[0, \frac1{K-k}\right)\cdot M \right\}
\end{equation}
to simplify notation.  The proof of the inductive step proceeds in several steps. In {\bf Step 1} we show that for every $k\in [K/2]$ the image of the $k$-th downset of $F$ under the compression map $\rho_k$ is essentially the entire set $Q^{ext}_k$ (this is formally stated in~\dualeqref{eq:923hgh239gfdfdfF} below). Then in {\bf Step 2} we bound $|\nu_{\ell-1, j-1}(\Pi^*_k(Q^{ext}_k))|$ for $k\in [K/2]$ using the inductive hypothesis. Finally, in {\bf Step 3} we put our bounds together to obtain the result of the lemma.

We start by establishing some basic bounds relating $|Q^{ext}_k|$ to $|F|$ that will be useful throughout the proof.
 We let $\gamma=\Delta^{|\I|}\cdot \prod_{\i\in \H} (\d_\i-\c_\i)$ and invoke Lemma~\dualref{lm:rect-size}. By Lemma~\dualref{lm:rect-size}, {\bf (1)}, we get
\begin{equation}\duallabel{eq:923ty3yHGFGFDFF}
(1-\sqrt{\e})\gamma\leq \left|F\right|/m^n\leq (1+\sqrt{\e})\gamma,
\end{equation}
and by Lemma~\dualref{lm:rect-size}, {\bf (2)},
\begin{equation}\duallabel{eq:092ht98gt8g8GFDEdZ}
(1-\sqrt{\e})\frac1{K-k}\cdot \gamma\leq \left|F\cap Z_k\right|/m^n\leq (1+\sqrt{\e})\frac1{K-k}\cdot \gamma.
\end{equation}
Similarly, by Lemma~\dualref{lm:rect-size}, {\bf (1)}, $\gamma'=\frac1{K-k}\cdot \Delta^{|\I|}\cdot \prod_{\i\in \H} (\d_\i-\c_\i)=\frac1{K-k}\cdot \gamma$  we get
$$
(1-\sqrt{\e})\frac1{K-k}\cdot \gamma\leq \left|Q^{ext}_k\right|/m^n\leq (1+\sqrt{\e})\frac1{K-k}\cdot \gamma.
$$

The bounds above imply 
\begin{equation}\duallabel{eq:f-vs-qext}
(1-3\sqrt{\e}) \frac1{K-k}|F|\leq |Q^{ext}_k|\leq (1+3\sqrt{\e}) \frac1{K-k}|F|
\end{equation}
for every $k\in [K/2]$.  Recalling that 
$$
\downset_k(F)\delequal F\cap Z_k
$$ and putting the above bounds together, we get $\rho_k\left(\downset_k(\Int_\delta(F))\right)\subseteq Q^{ext}_k$ and
\begin{equation*}
\begin{split}
\left|\rho_k\left(\downset_k(F)\right)\cap Q^{ext}_k\right| &\geq \left|\rho_k\left(\downset_k(\Int_\delta(F))\right)\cap Q^{ext}_k\right|\\
&\geq \left|\downset_k(\Int_\delta(F))\right|\\
&\geq \left|\downset_k(F)\right|-|F\setminus \Int_\delta(F)|\\
&= \left|F\cap Z_k\right|-|F\setminus \Int_\delta(F)|\\
&\geq (1-\sqrt{\e})\frac1{K-k}\cdot \gamma\cdot m^n-\sqrt{\delta} |F|\\
&\geq (1-2K\sqrt{\delta}-3\sqrt{\e})|Q^{ext}_k|.\\
\end{split}
\end{equation*}
The first transition uses the fact that by Lemma~\dualref{lm:densification} one has that 
$$
\rho_k\left(\Int_\delta(F)\cap Z_k\right)\subseteq Q^{ext}_k
$$
and $\rho_k$ is injective on $\Int_\delta(F)\cap Z_k$.
The transition from line~4 to line~5 is by Lemma~\dualref{lm:rect-int-size} and~\dualeqref{eq:092ht98gt8g8GFDEdZ}. The transition from line~5 to line~6 is by~\dualeqref{eq:f-vs-qext}. Similarly, we get
\begin{equation*}
\begin{split}
\left|\rho_k\left(\downset_k(F)\right)\right| &=\left|\rho_k\left(\downset_k(F)\right)\cap Q^{ext}_k\right|+\left|\rho_k\left(\downset_k(F)\right)\setminus Q^{ext}_k\right|\\
&\leq \left|Q^{ext}_k\right|+\left|\rho_k\left(\downset_k(F)\right)\setminus \rho_k\left(\downset_k(\Int_\delta(F))\right)\right|\\
&\leq \left|Q^{ext}_k\right|+\left|F\setminus \Int_\delta(F)\right|\\
&\leq \left|Q^{ext}_k\right|+\sqrt{\delta}\left|F\right|\\
&\leq (1-2K\sqrt{\delta})|Q^{ext}_k|.\\
\end{split}
\end{equation*}
The second transition uses the fact that by Lemma~\dualref{lm:densification}, {\bf (2)}, one has 
$\rho_k\left(\Int_\delta(F)\cap Z_k\right)\subseteq Q^{ext}_k$
and $\rho_k$ is injective by Lemma~\dualref{lm:densification}, {\bf (1)}.
The transition from line~3 to line~4 is by Lemma~\dualref{lm:rect-int-size}. The transition from line~4 to line~5 is by~\dualeqref{eq:f-vs-qext}.

Noting that $2K\sqrt{\delta}+2\sqrt{\e}\leq \delta^{1/4}$ by ~\dualref{p4} and~\dualref{p5}, we record the above in the simpler form 
\begin{equation}\duallabel{eq:923hgh239gfdfdfF}
(1-\delta^{1/4})|Q^{ext}_k|\leq \left|\rho_k\left(\downset_k(F)\right)\cap Q^{ext}_k\right|\leq (1+\delta^{1/4})|Q^{ext}_k|. 
\end{equation}

\noindent{\bf Step 1.}  By Claim~\dualref{cl:nu-prop-basic}, {\bf (1)}, we have
\begin{equation*}
\begin{split}
\nu_{\ell, j}(F)&=\nu_{\ell-1, j-1}(\tau^\ell(\downset^{\ell}(F))\\
&=\bigcup_{k\in [K/2]} \nu_{\ell-1, j-1}(\tau^\ell_k(\downset^{\ell}_k(F))\\
&=\bigcup_{k\in [K/2]} \nu_{\ell-1, j-1}(\tau_k(\downset_k(F)),
\end{split}
\end{equation*}
where we dropped the superscipt $\ell$ in the last line to simplify notation.
For every $k\in [K/2]$ we have using~\dualeqref{eq:923hgh239gfdfdfF} and the fact that $\tau_k=\Pi^*_k\circ \rho_k$
\begin{equation}\duallabel{eq:92ythe8ghwe8hg8hf8Hffd}
\begin{split}
|\nu_{\ell-1, j-1}(\tau_k(\downset_k(F)))|&=|\nu_{\ell-1, j-1}(\Pi^*_k(\rho_k(F\cap Z_k)))|\\
&\geq \left|\nu_{\ell-1, j-1}\left(\Pi^*_k(Q^{ext}_k)\right)\right|-\max_{\substack{S\subseteq Q^{ext}_k\\|S|\leq \delta^{1/4} |Q^{ext}_k|}}\left|\nu_{\ell-1, j-1}\left(\Pi^*_k(S)\right)\right|\\
&\geq \left|\nu_{\ell-1, j-1}\left(\Pi^*_k(Q^{ext}_k)\right)\right|- K^{j-1}\delta^{1/4}\cdot |Q^{ext}_k|,
\end{split}
\end{equation}
since for every $S$ one has $|\nu_{\ell-1, j-1}(S)|\leq K^{j-1} |S|$.  For the upper bound we have for every $k\in [K/2]$ 
\begin{equation}\duallabel{eq:92ythe8ghwe8hg8hf8Hffd-upper}
\begin{split}
|\nu_{\ell-1, j-1}(\tau_k(\downset_k(F)))|&=|\nu_{\ell-1, j-1}(\Pi^*_k(\rho_k(F\cap Z_k)))|\\
&\leq \left|\nu_{\ell-1, j-1}\left(\Pi^*_k(Q^{ext}_k)\right)\right|+\left|\nu_{\ell-1, j-1}\left(\Pi^*_k(\rho_k(F\cap Z_k)\setminus Q^{ext}_k)\right)\right|\\
&\leq \left|\nu_{\ell-1, j-1}\left(\Pi^*_k(Q^{ext}_k)\right)\right|+ K^{j-1}\delta^{1/4}\cdot |Q^{ext}_k|.
\end{split}
\end{equation}
We used the second inequality in~\dualeqref{eq:923hgh239gfdfdfF} in the last transition, together with the fact that $\Pi^*_k$ maps every vertex to at most one vertex.

\noindent{\bf Step 2.} We now apply the inductive hypothesis to bound the first term in~\dualeqref{eq:92ythe8ghwe8hg8hf8Hffd} (which coincides with the first term on the rhs of~\dualeqref{eq:92ythe8ghwe8hg8hf8Hffd-upper}).
Define
\begin{equation}\duallabel{eq:i0-prime}
\begin{split}
\I_0'&=\J'_{< k}\cup \Ext_k\cup \{\q_k\}\\
\I_1'&=\J'_{\geq k}\cup \{\r'\} \cup \H,
\end{split}
\end{equation}
where we let $\Ext_k:=\Ext_k^\ell$ to simplify notation. Let $\f_0$ denote the restriction of $\f$ to coordinates in $\J'_{<k}$, and let $\f_1$ denote the restriction of $\f$ to coordinates in $\Sp(\B^\ell)\setminus \J'_{<k}=\J'_{\geq k}\cup \{\r'\}$. We also let 
\begin{equation}\duallabel{eq:i0}
\begin{split}
\I_0&=\J\cup \{\r\}\\
\I_1&=\I_1'.
\end{split}
\end{equation}

Now note that the definition of $\I_0'$ in~\dualeqref{eq:i0-prime} coincided with the definition of $\I'_k$ in~\dualeqref{eq:i-prime}, and the definition of $\I_0$ in~\dualeqref{eq:i0} coincides with the definition of $\I$ in~\dualeqref{eq:i}. Thus, by~Definition~\dualref{def:tau} the map $\tau_k:S'_k\to T_*$ is defined by letting 
$$
\tau_k(x)=\Pi^*_k(\rho_k(x)),
$$  
where $\Pi^*_k$ is defined as follows. For $z=\rho_k(x)\in [m]^n$ (leave $\tau_k$ undefined if $\rho_k(x)$ is not defined) one lets $\a_0\in \mathbb{D}_k$, $R'(\a_0):=(\I_0', \a_0, \a_0+\Delta\cdot \mathbf{1})$ be such that 
\begin{equation}\duallabel{eq:def-rd}
z\in \Int_\delta(R'(\a_0))
\end{equation}
if such an $\a_0\in \mathbb{D}_k$ exists (otherwise $\Pi^*(z)$ is left undefined). Then one lets
\begin{equation}\duallabel{eq:rp-d}
R(\a_0):=(\I_0, \mathsf{M}(\a_0), \mathsf{M}(\a_0)+\Delta\cdot \mathbf{1}),
\end{equation}
where $\mathsf{M}$ is as in~\dualeqref{eq:m-def}, and sets, as per Definition~\dualref{def:pi-global},
\begin{equation}\duallabel{eq:wigh0934h-thgdd}
\Pi^*_k(z):=\Pi_{R'(\a_0)\to R(\a_0)}(z). 
\end{equation}

We now show that $\tau_k(Q^{ext}_k)$ can be approximated by a union of rectangles consistent with the terminal subcube $T_*$, to which we can apply the inductive hypothesis. Recall that by~\dualeqref{eq:q-ext-k} one has
\begin{equation*}
Q^{ext}_k:=\left\{x\in F: \langle x, \q_k\rangle \pmod{M} \in \left[0, \frac1{K-k}\right)\cdot M \right\}
\end{equation*}
where 
$$
F=\rect(\I\cup \H, (\f, \c), (\f+\Delta\cdot \mathbf{1}, \d)).
$$

For $\a_0\in (\deltagrid)^{\I_0'}$ we define the extended rectangles 
$$
R_{ext}'(\a_0)=(\I_0'\cup \I_1', (\a_0, (\f_1, \c)), (\a_0+\Delta\cdot \mathbf{1}, (\f_1+\Delta\cdot \mathbf{1}, \d)))
$$
and 
$$
R_{ext}(\a_0)=(\I_0\cup \I_1, (\mathsf{M}(\a_0), (\f_1, \c)), (\mathsf{M}(\a_0)+\Delta\cdot \mathbf{1}, (\f_1+\Delta \cdot \mathbf{1}, \d))).
$$

We now recall the definition of $\mathbb{D}_k$ (see Definition~\dualref{def:dk}). Indeed, let $\u_{\j'_s}=0, \v_{\j'_s}=1-\frac1{K-s}$ for $s\in [k]$, let $\u_{\q_k}=0, \v_{\q_k}=\frac1{K-k},$ and $\u_\i=0, \v_\i=1$ for $\i\in \Ext_k$. Then, noting that $\I_0'$ as per~\dualeqref{eq:i0-prime} is equal to $\I'$ as per~\dualeqref{eq:i-prime}
one has as per~\dualeqref{eq:dk-def}
\begin{equation}\duallabel{eq:dk-def-tt}
\mathbb{D}_k=(\deltagridInt)^{\I'_0}\cap \prod_{\i\in \I'_0} [\u_\i, \v_\i).
\end{equation}
We start by noting  that $Q^{ext}_k$ is a rectangle in $\I_0'\cup \I'_1$.  
Indeed, let 
\begin{equation*}
\begin{split}
&\u^0_{\j'_s}=\f_{\j_s}, \v^0_{\j'_s}=\f_{\j_s}+\Delta\text{~~~~~ for~} s\in [k]\text{~~~~~~~~~~~~~~~~~~~~~~~~~~~~~~~~~~~~~}\\
&\u^0_{\q_k}=0, \v^0_{\q_k}=\frac1{K-k}\\
&\text{and}\\
&\u^0_\i=0, \v^0_\i=1\text{~~~~~for~} \i\in \Ext_k,
\end{split}
\end{equation*}
so that $\u^0, \v^0\in (\deltagrid)^{\I'_0}$ (note that $\u^0=\f_0$ and $\v^0=\f_0+\Delta\cdot \mathbf{1}$). Also let 
\begin{equation*}
\begin{split}
&\u^1_{\j'_s}=\f_{\j_s}, \v^1_{\j'_s}=\f_{\j_s}+\Delta\text{~~~~~ for~} s\in \{k, k+1,\ldots, K/2-1\},\\
&\u^1_{\r'_k}=0, \v^1_{\r'_k}=1,\\
&\text{and}\\
&\u^1_\i=\c_\i, \v^1_\i=\d_\i\text{~~~~~for~}\i\in \H,
\end{split}
\end{equation*}
so that $\u^1, \v^1\in (\deltagrid)^{\I'_1}$. We have $Q^{ext}_k=\rect(\I_0'\cup \I_1', (\u^0, \u^1), (\v^0, \v^1))$ by Claim~\dualref{cl:subcubes-decomp}
\begin{equation}\duallabel{eq:0923yt923y9hDHFISX}
Q^{ext}_k=\bigcup_{\a_0\in \mathbb{D}^{ext}_k} R'_{ext}(\a_0),
\end{equation}
where 
\begin{equation*}
\begin{split}
\mathbb{D}^{ext}_k&=(\deltagridInt)^{\I'_0}\cap \prod_{\i\in \I'_0} [\u^0_\i, \v^0_\i)\\
&\subseteq (\deltagridInt)^{\I'_0}\cap \prod_{\i\in \I'_0} [\u_\i, \v_\i)\\
&=\mathbb{D}_k,
\end{split}
\end{equation*}
as required. The first transition is by Claim~\dualeqref{cl:subcubes-decomp}. The second transition is due to the fact that $Q^{ext}_k\subseteq T'_*$ since $F$ is consistent with $T'_*$ by assumption, and hence $Q^{ext}_k\subseteq T'_k$. The last transition is by definition of $\mathbb{D}_k$.

\if 0
Let $\mathbb{D}_k\subseteq (\deltagrid)^{\I_0'}$ be as per Definition~\dualref{def:dk}. We first show that 
\begin{equation}\duallabel{eq:0923yt923y9hDHFISX}
Q^{ext}_k=\bigcup_{\a_0\in D^{ext}_k} R'_{ext}(\a_0).
\end{equation}
for a subset $D^{ext}_k\subseteq \mathbb{D}_k\subseteq (\deltagrid)^{\I_0'}$.   Indeed, recall that by Definition~\dualref{def:dk} and~\dualeqref{eq:q-ext-k} the set $\mathbb{D}_k\subseteq (\deltagrid)^{\I_0'}$ is such that
\begin{equation}\duallabel{eq:023t9g8gz8fg8sfg8gZ}
\left\{x\in T'_k: \langle x, \q_k\rangle \pmod{M} \in \left[0, \frac1{K-k}\right)\cdot M \right\}=\bigcup_{\d\in \mathbb{D}_k} \rect(\I_0', \d),
\end{equation}
since $\I'$ defined in~\dualeqref{eq:i-prime} is equal to $\I'_0$ defined in~\dualeqref{eq:i0-prime}.  Since $F$ is consistent with the terminal subcube (as per Definition~\dualref{def:cons}), 
we have $F\subseteq T_*'$ and in particular $F\subseteq T'_k$, so $Q'_{ext}$ is a subset of the lhs of~\dualeqref{eq:023t9g8gz8fg8sfg8gZ} above. 

\fi

We let $\gamma=\Delta^{|\I'_0|}\cdot \prod_{\i\in \I'_1} (\d_\i-\c_\i)=\Delta^{|\I_0|}\cdot \prod_{\i\in \I_1} (\d_\i-\c_\i)$ and invoke Lemma~\dualref{lm:rect-size}. By Lemma~\dualref{lm:rect-size}, {\bf (1)}, we get
\begin{equation}\duallabel{eq:r-ext-prime-size}
(1-\sqrt{\e})\gamma\leq \left|R'_{ext}(\a_0)\right|/m^n\leq (1+\sqrt{\e})\gamma,
\end{equation}
and 
\begin{equation}\duallabel{eq:r-ext-size}
(1-\sqrt{\e})\gamma\leq \left|R_{ext}(\a_0)\right|/m^n\leq (1+\sqrt{\e})\gamma.
\end{equation}

Fix some $\a_0\in \mathbb{D}_k$. We write $R'_{ext}$ and $R_{ext}$ to denote $R'_{ext}(\a_0)$ and $R_{ext}(\a_0)$, and write $R'$ and $R$ to denote $R'(\a_0)$ and $R(\a_0)$, omitting the dependence on $\a_0$ to simplify notation, when $\a_0$ is fixed. By Lemma~\dualref{lm:tau-of-rect}, {\bf (1)} we have 

\begin{equation}\duallabel{eq:HDOSJF}
\Pi^*_k(\Int_\delta(R'_{ext}))\subseteq R_{ext}.
\end{equation}
 At the same time by Lemma~\dualref{lm:rect-int-size} we have 
 \begin{equation}\duallabel{eq:923yt23ht23t}
 |\Int_\delta(R'_{ext})|\geq (1-\sqrt{\delta}) |R'_{ext}|,
 \end{equation}
  and by Lemma~\dualref{lm:rect-size} one has $|R_{ext}|\leq (1+3\sqrt{\e}) |R'_{ext}|$. 
Putting these bounds together with~\dualeqref{eq:HDOSJF} and using the fact that $\e$ is smaller than $\delta$ by a large enough absolute constant by ~\dualref{p6}, we get, writing $\text{Dom}(\Pi^*_k)$ to denote the domain of $\Pi^*_k$,

\begin{equation}\duallabel{eq:ihgSGF139h9HdxXV}
\begin{split}
|R_{ext}\setminus \Pi^*_k(\Int_\delta(R'_{ext}))|&\leq  |R_{ext}|-|R'_{ext}|+|R'_{ext}\setminus \Int_\delta(R'_{ext})|+|R'_{ext}\setminus \text{Dom}(\Pi^*_k)|\\
&\leq  3\sqrt{\e} |R'_{ext}|+|R'_{ext}\setminus \Int_\delta(R'_{ext})|+|R'_{ext}\setminus \text{Dom}(\Pi^*_k)|\\
&\leq  (3\sqrt{\e}+\sqrt{\delta}) |R'_{ext}|+|R'_{ext}\setminus \text{Dom}(\Pi^*_k)|.\\
\end{split}
\end{equation}
We now bound $|R'_{ext}\setminus \text{Dom}(\Pi^*_k)|$. By Lemma~\dualref{lm:tau-of-rect}, {\bf (2)} we have 
\begin{equation*}
\begin{split}
|R'_{ext}\setminus \text{Dom}(\Pi^*_k)|&\leq |R'\setminus \text{Dom}(\Pi^*_k)|\\
&\leq 8\sqrt{\e} |R'|\\
&\leq 8\sqrt{\e} \Delta^{-2K^2}|R'_{ext}|,\\
\end{split}
\end{equation*}
where we used the fact that $|R'_{ext}|\geq \Delta^{2K^2}\cdot |R'|$ by Lemma~\dualref{lm:rect-size}, {\bf (1)}, together with the fact that $|\Sp(\B^\ell)|\leq KL\leq K^2$. Substituting this into~\dualeqref{eq:ihgSGF139h9HdxXV}, we get
\begin{equation}\duallabel{eq:9h30ytar9hglkfS32gf-lb}
\begin{split}
|R_{ext}\setminus \Pi^*_k(\Int_\delta(R'_{ext}))|&\leq  (3\sqrt{\e}+\sqrt{\delta}+8\sqrt{\e} \Delta^{-2K^2}) |R'_{ext}|\\
&\leq  2\sqrt{\delta} |R'_{ext}|\\
\end{split}
\end{equation}
by \dualref{p5} and~\dualref{p6}.  At the same time, we have by Lemma~\dualref{lm:rect-int-size} and~\dualeqref{eq:HDOSJF}
\begin{equation}\duallabel{eq:9h30ytar9hglkfS32gf-ub}
\begin{split}
|\Pi^*_k(R'_{ext})\setminus R_{ext}|&\leq  |R'_{ext}\setminus \Int_\delta(R'_{ext})|\leq  \sqrt{\delta} |R'_{ext}|.\\
\end{split}
\end{equation}

We now note that $\I_0=\Sp(\B^\ell)$ as per~\dualeqref{eq:i0} and $\mathsf{M}(\a_0)$ is consistent with $T_*$ by definition of the map $\mathsf{M}$ (see~\dualeqref{eq:m-def} and~\dualeqref{eq:m-a-def}). Thus, the inductive hypothesis applies to the rectangle $R_{ext}$ (the rhs of ~\dualeqref{eq:HDOSJF}), and we get 
\begin{equation}\duallabel{eq:28h2f8hfdsfSDFF}
(\ln 2-C/K)^{j-1} |R_{ext}|\leq |\nu_{\ell-1, j-1}(R_{ext})|\leq (\ln 2+C/K)^{j-1} |R_{ext}|,
\end{equation}
and hence, using the first inequality above together with~\dualeqref{eq:9h30ytar9hglkfS32gf-lb},
\begin{equation}\duallabel{eq:3y823UGFUG-lb}
\begin{split}
|\nu_{\ell-1, j-1}(\Pi^*_k(R'_{ext}))|&\geq |\nu_{\ell-1, j-1}(R_{ext})|-|\nu_{\ell-1, j-1}(R_{ext}\setminus \Pi^*_k(\Int_\delta(R'_{ext})))|\\
&\geq |\nu_{\ell-1, j-1}(R_{ext})|-K^{j-1}\cdot 2\sqrt{\delta} \cdot |R'_{ext}|\\
&\geq (\ln 2-C/K)^{j-1}\cdot |R_{ext}|-\delta^{1/4} \cdot |R'_{ext}|\\
&\geq ((\ln 2-C/K)^{j-1} (1-3\sqrt{\e})-\delta^{1/4}) \cdot |R'_{ext}|,\\
\end{split}
\end{equation}
where the transition from the first line to the second is because for every $\ell'$ the map $\tau^{\ell'}$ maps no vertex in $S'$ to more than $K/2$ vertices in $T_*$, and in particular $\nu_{\ell-1, j-1}$ maps no vertex in $S^{\ell-1}$ to more than $(K/2)^{j-1}$ vertices in $T_*^{\ell-j}$ (note that we are using the looser bound of $K^j$ on the product of these two bounds to simplify notation). The transition to the second to last line uses the fact that $K^{j-1}2\sqrt{\delta}\leq \delta^{1/4}$ by ~\dualref{p5}.  The transition to the last line uses~\dualeqref{eq:r-ext-prime-size} and~\dualeqref{eq:r-ext-size}.  Using the second inequality in~\dualeqref{eq:28h2f8hfdsfSDFF} together with~\dualeqref{eq:9h30ytar9hglkfS32gf-ub}, we similarly get 
\begin{equation}\duallabel{eq:3y823UGFUG-ub}
\begin{split}
|\nu_{\ell-1, j-1}(\Pi^*_k(R'_{ext}))|&\leq |\nu_{\ell-1, j-1}(R_{ext})|+|\nu_{\ell-1, j-1}(\Pi^*_k(R'_{ext})\setminus R_{ext})|\\
&\leq |\nu_{\ell-1, j-1}(R_{ext})|+K^{j-1}\cdot 2\sqrt{\delta} \cdot |R'_{ext}|,\\
&\leq (\ln 2+C/K)^{j-1}\cdot |R_{ext}|+\delta^{1/4} \cdot |R'_{ext}|\\
&\leq ((\ln 2+C/K)^{j-1}\cdot (1+3\sqrt{\e})+\delta^{1/4}) \cdot |R'_{ext}|,\\
\end{split}
\end{equation}
where the transition from the first line to the second is because for any $\ell'$ the map $\tau^{\ell'}$ maps no vertex in $S'$ to more than $K/2$ vertices in $T_*$, and in particular $\nu_{\ell-1, j-1}$ maps no vertex in $S^{\ell-1}$ to more than $(K/2)^{j-1}$ vertices in $T_*^{\ell-j}$, as well as the fact that $\Pi^*_k(R'_{ext})\setminus R_{ext}\subseteq \Pi^*_k(R'_{ext}\setminus \Int_\delta(R'_{ext}))$ by~\dualeqref{eq:HDOSJF}. The transition to the second to last line uses the fact that $K^{j-1}2\sqrt{\delta}\leq \delta^{1/4}$ by~\dualref{p5}. The transition to the last line uses~\dualeqref{eq:r-ext-prime-size} and~\dualeqref{eq:r-ext-size}.


Putting~\dualeqref{eq:3y823UGFUG-lb} together with~\dualeqref{eq:0923yt923y9hDHFISX} and~\dualeqref{eq:923hgh239gfdfdfF}, and recalling that $R'_{ext}=R'_{ext}(\a_0)$, we get for the lower bound 
\begin{equation}\duallabel{eq:82g8t8GFYEGFY}
\begin{split}
|\nu_{\ell, j}(F))|&=\sum_{k\in [K/2]} |\nu_{\ell-1, j-1}(\tau(\downset_k(F)))|\\
&=\sum_{k\in [K/2]} |\nu_{\ell-1, j-1}(\Pi^*_k(\rho_k(\downset_k(F)))|\\
&\geq (1-\delta^{1/4}) \sum_{k\in [K/2]} |\nu_{\ell-1, j-1}(\Pi^*_k(Q^{ext}_k))|\\
&= (1-\delta^{1/4}) \sum_{k\in [K/2]} \sum_{\substack{\a_0\in \mathbb{D}^{ext}_k}} |\nu_{\ell-1, j-1}(\Pi^*_k(R'_{ext}(\a_0)))|\\
&\geq \sum_{k\in [K/2]} ((\ln 2-C/K)^{j-1}(1-3\sqrt{\e})-\delta^{1/4})\sum_{\substack{\a_0\in \mathbb{D}^{ext}_k}} |R'_{ext}(\a_0)|\\
&\geq ((\ln 2-C/K)^{j-1}(1-3\sqrt{\e})-3\delta^{1/4}) \sum_{k\in [K/2]} |Q^{ext}_k|.
\end{split}
\end{equation}
The first transition uses the fact that  $\tau$ is injective by Claim~\dualref{cl:pi-star-injective}, $\nu_{\ell-1, j-1}$ is injective by Lemma~\dualref{lm:nu-prop}, {\bf (1)}, and $\downset_k(F)\cap \downset_{k'}(F)=\emptyset$ for $k\neq k'$.  The second transition uses the definition of $\tau$ (see Definition~\dualref{def:tau}).
The third transition uses~\dualeqref{eq:923hgh239gfdfdfF} and the last transition uses ~\dualeqref{eq:0923yt923y9hDHFISX}. For the upper bound we get using~\dualeqref{eq:3y823UGFUG-ub}
\begin{equation}\duallabel{eq:82g8t8GFYEGFY-ub}
\begin{split}
|\nu_{\ell, j}(F)|&= \sum_{k\in [K/2]} |\nu_{\ell-1, j-1}(\tau(\downset_k(F)))|\\
&=\sum_{k\in [K/2]} |\nu_{\ell-1, j-1}(\Pi^*_k(\rho_k(\downset_k(F)))|\\
&\leq (1+\delta^{1/4}) \sum_{k\in [K/2]} |\nu_{\ell-1, j-1}(\Pi^*_k(Q^{ext}_k))|\\
&= (1+\delta^{1/4}) \sum_{k\in [K/2]} \sum_{\substack{\a_0\in \mathbb{D}^{ext}_k}} |\nu_{\ell-1, j-1}(\Pi^*_k(R'_{ext}(\a_0)))|\\
&\leq ((\ln 2+C/K)^{j-1}(1+3\sqrt{\e})+3\delta^{1/4})\sum_{k\in [K/2]}\sum_{\substack{\a_0\in \mathbb{D}^{ext}_k}} |R'_{ext}(\a_0)|\\
&= ((\ln 2+C/K)^{j-1}(1+3\sqrt{\e})+3\delta^{1/4})\sum_{k\in [K/2]}|Q^{ext}_k|,\\
\end{split}
\end{equation}
where the third transition uses~\dualeqref{eq:923hgh239gfdfdfF} and the last transition uses ~\dualeqref{eq:0923yt923y9hDHFISX}.

\paragraph{Step 3: putting it together.} 

For the lower bound we have by~\dualeqref{eq:82g8t8GFYEGFY} and the fact that $|Q^{ext}_k|\geq (1-3\sqrt{\e})\frac1{K-k}|F|$ by ~\dualeqref{eq:f-vs-qext}
\begin{equation}\duallabel{eq:92y8g8GFYDG}
\begin{split}
|\nu_{\ell, j}(F))|&\geq ((\ln 2-C/K)^{j-1}(1-3\sqrt{\e})-\delta^{1/4}) \sum_{k\in [K/2]} |Q^{ext}_k|\\
&\geq  ((\ln 2-C/K)^{j-1}(1-3\sqrt{\e})-\delta^{1/4})\cdot \sum_{k\in [K/2]} (1-3\sqrt{\e}) \frac1{K-k}\cdot |F|\\
&=((\ln 2-C/K)^{j-1}(1-3\sqrt{\e})-\delta^{1/4})\cdot (1-3\sqrt{\e})\cdot |F| \cdot \sum_{k\in [K/2]} \frac1{K-k}\\
&\geq ((\ln 2-C/K)^{j-1}(1-3\sqrt{\e})-\delta^{1/4})\cdot (1-3\sqrt{\e})(\ln 2-1/K)|F|\\
&\geq (\ln 2-C/K)^j|F|.
\end{split}
\end{equation}
The second to last transition is by Claim~\ref{cl:sum-int}. The last transition used the fact that $\sum_{k\in [K/2]} \frac1{K-k}\geq \ln 2-1/K$ by Claim~\ref{cl:sum-int} and our choice of $C$ as a sufficiently large absolute constant, as well as the fact that $\delta<K^{-10}$ and $\e<K^{-10}$ by~\dualref{p4} and~\dualref{p5} together with the fact that $\Delta\leq 1/K$ by~\dualref{p3} and the fact that $K$ is larger than an absolute constant.

For the upper bound we get using~\dualeqref{eq:3y823UGFUG-ub} and the fact that $|Q^{ext}_k|\leq (1+3\sqrt{\e})\frac1{K-k}|F|$ by ~\dualeqref{eq:f-vs-qext}
\begin{equation*}
\begin{split}
|\nu_{\ell, j}(F)| &\leq ((\ln 2+C/K)^{j-1}(1+3\sqrt{\e})+\delta^{1/4})\sum_{k\in [K/2]}|Q^{ext}_k|\\
&\leq ((\ln 2+C/K)^{j-1}(1+3\sqrt{\e})+\delta^{1/4})\sum_{k\in [K/2]}(1+3\sqrt{\e}) \frac1{K-k}\cdot |F|\\
&= ((\ln 2+C/K)^{j-1}(1+3\sqrt{\e})+\delta^{1/4})(1+3\sqrt{\e})\cdot |F|\cdot \sum_{k\in [K/2]}\frac{1}{K-k}\\
&\leq ((\ln 2+C/K)^{j-1}(1+3\sqrt{\e})+\delta^{1/4})(1+3\sqrt{\e})(\ln 2) \cdot |F|\\
&= (\ln 2+C/K)^j\cdot |F|\\
\end{split}
\end{equation*}
The second to last transition is by Claim~\ref{cl:sum-int}. The last transition used the fact that $\sum_{k\in [K/2]} \frac1{K-k}\leq \ln 2$ by Claim~\ref{cl:sum-int} and our choice of $C$, as well as the fact that $\delta<K^{-10}$ and $\e<K^{-10}$ by~\dualref{p4} and~\dualref{p5} together with the fact that $\Delta\leq 1/K$ by~\dualref{p3} and the fact that $K$ is larger than an absolute constant. 
This completes the proof of the inductive step, and establishes the claim of the lemma.
\end{proof}

\begin{corollary}\duallabel{cor:rect-nu-j}
There exists an absolute constant $C>0$ such that for every $\ell \in [L]$, every $j=0,\ldots, \ell$,  if $\I=\Sp(\B^\ell)$ and $\H\subseteq \Sp(\B^{>\ell})$, the following conditions hold. 

For every $\c, \d\in \deltagrid^{\I\cup \H}, \c<\d,$ such that $\rect(\I, \c_\I, \d_\I)\subseteq T_*^\ell$ the rectangle $F=\rect(\I\cup \H, \c, \d)$ satisfies
$$
(\ln 2-C/K)^j |F|\leq |\nu_{\ell, j}(F)|\leq (\ln 2+C/K)^j |F|
$$
for an absolute constant $C>0$.
\end{corollary}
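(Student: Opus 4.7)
The plan is to derive the corollary from Lemma~\dualref{lm:rect-nu-j} by decomposing the rectangle $F$ into $\Delta$-cubes in the $\I$-direction and applying the lemma to each piece separately. Concretely, with $\I=\Sp(\B^\ell)$ and $\H\subseteq \Sp(\B^{>\ell})$ disjoint, Claim~\dualref{cl:subcubes-decomp} (applied to the decomposition in the $\I$-coordinates only) lets us write
\[
F=\rect(\I\cup\H,\c,\d)=\bigcup_{\f\in Q} F_\f,
\qquad F_\f:=\rect\bigl(\I\cup\H,(\f,\c_\H),(\f+\Delta\cdot\mathbf{1}_\I,\d_\H)\bigr),
\]
where $Q=(\deltagridInt)^{\I}\cap\prod_{\i\in\I}[\c_\i,\d_\i)$. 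The sets $F_\f$ are pairwise disjoint and their union is $F$.

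Next I would verify that every $\f\in Q$ is consistent with $T_*^\ell$ in the sense of Definition~\dualref{def:cons}. By hypothesis $\rect(\I,\c_\I,\d_\I)\subseteq T_*^\ell$, and by~\dualeqref{eq:def-tk-all-constraints} membership in $T_*^\ell$ forces $\langle y,\j_k\rangle\pmod M\in[0,1-\tfrac{1}{K-k})\cdot M$ for every $k\in[K/2]$. Since $\rect(\I,\f,\f+\Delta\cdot\mathbf{1}_\I)\subseteq\rect(\I,\c_\I,\d_\I)\subseteq T_*^\ell$, each coordinate $\f_{\j_k}\in\deltagrid$ lies in $[0,1-\tfrac{1}{K-k})$, which is exactly the consistency condition in Definition~\dualref{def:cons}. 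The preconditions of Lemma~\dualref{lm:rect-nu-j} are therefore satisfied for every $F_\f$, giving
\[
(\ln 2-C/K)^j\,|F_\f|\ \le\ |\nu_{\ell,j}(F_\f)|\ \le\ (\ln 2+C/K)^j\,|F_\f|
\]
for each $\f\in Q$.

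Finally, I would sum these bounds over $\f\in Q$. Because $\nu_{\ell,j}$ is injective (Lemma~\dualref{lm:nu-prop}, {\bf (1)}) and the $F_\f$ are pairwise disjoint, the images $\nu_{\ell,j}(F_\f)$ are also pairwise disjoint, so $|\nu_{\ell,j}(F)|=\sum_{\f\in Q}|\nu_{\ell,j}(F_\f)|$ while $|F|=\sum_{\f\in Q}|F_\f|$. Multiplying the per-cube estimates by $|F_\f|$ and summing yields the desired sandwich
\[
(\ln 2-C/K)^j\,|F|\ \le\ |\nu_{\ell,j}(F)|\ \le\ (\ln 2+C/K)^j\,|F|.
\]
There is no real obstacle here: the only subtlety is checking the consistency condition for every $\f\in Q$, which follows immediately from the hypothesis $\rect(\I,\c_\I,\d_\I)\subseteq T_*^\ell$. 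The corollary is thus a direct ``integration'' of Lemma~\dualref{lm:rect-nu-j} over the cube decomposition of the $\I$-slice of $F$.
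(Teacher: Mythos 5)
Your proof matches the paper's almost exactly: both decompose $F$ into $\Delta$-cubes in the $\I$-coordinates via Claim~\dualref{cl:subcubes-decomp}, observe that $\rect(\I,\c_\I,\d_\I)\subseteq T_*^\ell$ forces every $\f\in Q$ to be consistent with $T_*^\ell$, apply Lemma~\dualref{lm:rect-nu-j} to each piece, and sum using disjointness of the $F_\f$ together with injectivity of $\nu_{\ell,j}$. No gaps.
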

\begin{proof}
One has by Claim~\dualref{cl:subcubes-decomp}
\begin{equation}\duallabel{eq:93h98DUSGDUHx}
F=\rect(\I\cup \H, \c, \d)=\bigcup_{\a\in Q} F(\a),
\end{equation}
where $Q= (\deltagrid)^\I\cap \rect(\I, \c_\I, \d_\I)$ and 
$$
F(\a)=\rect(\I\cup \H, (\a, \c_\H), (\a+\Delta\cdot \mathbf{1}, \d_\H)).
$$ 
Since $\rect(\I, \c_\I, \d_\I)\subseteq T_*^\ell$ by assumption, we get that every $\f\in Q$ is consistent with $T_*^\ell$. Thus, Lemma~\dualref{lm:rect-nu-j} applies, and we get

\begin{equation*}
\begin{split}
(\ln 2-C/K)^j |F(\a)|\leq |\nu_{\ell, j}(F(\a))|\leq (\ln 2+C/K)^j |F(\a)|.
\end{split}
\end{equation*}
Substituting the above into~\dualeqref{eq:93h98DUSGDUHx}, using the fact that $F(\a)\cap F(\a')=\emptyset$ for $\a\neq \a'$ as well as the fact that $\nu_{\ell, j}$ is injective by Lemma~\ref{lm:nu-prop}, {\bf (1)}, gives the result.
\end{proof}

Finally, we give

\begin{proofof}{Lemma~\dualref{lm:mu-ell-j}}
We start by noting that by Definition~\dualref{def:nu}
\begin{equation*}
\begin{split}
\nu_{\ell, j+1}(T^\ell\setminus T_*^\ell)&=\tau^{\ell-j}(\downset^{\ell-j}(\nu_{\ell, j}(T^\ell\setminus T_*^\ell)))=\tau^{\ell-j}(\mu_{\ell, j}(T^\ell\setminus T_*^\ell)).
\end{split}
\end{equation*}
This means, using injectivity of $\tau^{\ell-j}$, that
\begin{equation*}
\begin{split}
\left||\mu_{\ell, j}(T^\ell\setminus T_*^\ell)|-|\nu_{\ell, j+1}(T^\ell\setminus T_*^\ell)|\right|&\leq |\{y\in S^{\ell-j}: \tau^{\ell-j}(y)=\bot\}|\\
&\leq \delta^{1/4} |T|\\
\end{split}
\end{equation*}
by Lemma~\dualref{lm:t-star-minus-tau-sp}\footnote{Note that for convenience of notation in the corner case $j=\ell$ we imagine adding a pair of sets $(T^{-1}, S^{-1})$ and a corresponding map $\tau^0: S^0\to T^{-1}_*$ so that we can talk about $\nu_{\ell, j+1}$ for all $j=0,\ldots, \ell$.}.
We have $\delta\leq K^{-100 K^2}$ by~\dualref{p3} and~\dualref{p5}, and therefore $\delta^{1/4}\leq K^{-25K^2}\leq K^{-100}\cdot (\ln 2)^L$, as $L\leq \sqrt{K}$ by~\dualref{p4}. This means that the above contributes a low order term to the final bound, and it suffices to prove that for every $\ell \in [L]$, every $j=1,\ldots, \ell+1$, one has 
\begin{equation}\duallabel{eq:923yth9ihFHUFHHF}
\frac1{2} (1-\ln 2) (\ln 2-C/K)^{j-1}|T_0| \leq |\nu_{\ell, j}(T^\ell\setminus T_*^\ell)|\leq \frac1{2}(1-\ln 2)(\ln 2+C/K)^{j-1}|T_0|
\end{equation}
for an absolute constant $C>0$. Note the power of $j-1$ as opposed to $j$ (this comes from the fact that we are using $\nu_{\ell, j}$ as a proxy for $\mu_{\ell, j-1}$, as per the argument above).

We let $\J':=\J^\ell$, $\J:=\J^{\ell-1}$,  $\B':=\B^\ell$, $\B:=\B^{\ell-1}$ to simplify notation. Similarly define $T':=T^\ell$, $T:=T^{\ell-1}$ and $S':=S^\ell$, $S:=S^{\ell-1}$ to simplify notation. Let $\r':=\r^\ell, \r:=\r^{\ell-1}$ denote the $\ell$-th and the $(\ell-1)$-th compression indices respectively. We write 
$$
T'\setminus T'_*=\bigcup_{k=0}^{K/2-1}  T'_k\setminus T'_{k+1},
$$
and note that 
$$
\nu_{\ell, j}(T'\setminus T'_*)=\bigcup_{k=0}^{K/2-1}  \nu_{\ell, j}(T'_k\setminus T'_{k+1}).
$$
We now fix $k\in [K/2]$  and note that 
\begin{equation}\duallabel{eq:9w9gh9gh9HFfojef3FS}
\begin{split}
\nu_{\ell, j}(T_k\setminus T_{k+1})&=\bigcup_{s=0}^{k} \nu_{\ell-1, j-1}(\tau^\ell(\downset^\ell_s(T_k\setminus T_{k+1}))),
\end{split}
\end{equation}
where we used Definition~\dualref{def:downset} and Remark~\dualref{rm:truncated-downset}. Also note that the sets on the rhs are disjoint since $\nu_{\ell-1, j-1}$ is injective by Lemma~\dualref{lm:nu-prop}, {\bf (1)}, $\tau^\ell$ is injective by Claim~\dualref{cl:pi-star-injective} and $\downset^\ell_s$ is injective by construction (Definition~\dualref{def:downset}). In what follows we bound the cardinality of 
$$
\nu_{\ell-1, j-1}(\tau^\ell(\downset^\ell_s(T\setminus T_*))
$$ 
for fixed $k\in [K/2]$ and $s\in \{0, 1,\ldots, k\}$, and then put these bounds together to achieve the final result of the lemma.

\paragraph{Step 1.} Define 
\begin{equation}\duallabel{eq:z-def}
Z_s:=\left\{x\in [m]^n: \weight(x) \in \left[0, \frac{1}{K-s}\right)\cdot W \pmod{ W}\right\}.
\end{equation}
Also recall that by~\dualeqref{eq:def-tk}
\begin{equation}\duallabel{eq:tk-8ht8gtFUGF}
\begin{split}
T'_k\setminus T'_{k+1}&=\left\{x\in T'_k: \langle x, \j'_k\rangle \pmod{M} \in \left[1-\frac1{K-k}, 1\right)\cdot M\right\}\\
&=\left\{x\in [m]^n: \langle x, \j'_i\rangle \pmod{M} \in \left[0, 1-\frac1{K-i}\right)\cdot M \text{~for all~}i=0,\ldots, k-1\right.\\
&\left.\text{~~~~~~~~~~~~~~~~~~~and~}\langle x, \j'_k\rangle \pmod{M} \in \left[1-\frac1{K-k}, 1\right)\cdot M\right\}
\end{split}
\end{equation}

We let 
\begin{equation}\duallabel{eq:qk-def-23865gugfs}
Q_k:=T'_k\setminus T'_{k+1}
\end{equation}
and, writing $\q'_s:=\q_s^\ell$ for $s=0,1,\ldots, k,$ to simplify notation,  let
\begin{equation}\duallabel{eq:qks-def-23865gugfs}
Q_{k, s}:=\left\{x\in Q_k: \langle x, \q'_s\rangle \pmod{M} \in \left[0, \frac1{K-s}\right)\cdot M\right\}.
\end{equation}
Let $\rho_s$ be the $(K-s, \q'_s)$-densifying map as per Definition~\dualref{def:rho}.  Now by Lemma~\dualref{lm:densification} we have 
\begin{equation}\duallabel{eq:densification-29h9hzxsd8gDJD}
\rho_s\left(\Int_\delta\left(Q_k\cap Z_s\right)\right)\subseteq Q_{k, s}.
\end{equation}

We start by noting that 
\begin{equation}\duallabel{eq:092y3tyh943ty9034yt-lb}
\begin{split}
|\nu_{\ell-1, j-1}(\tau(\downset_s(Q_k)))|&=|\nu_{\ell-1, j-1}(\tau(Q_k\cap Z_s))|\\
&=|\nu_{\ell-1, j-1}(\Pi^*_s(\rho_s(Q_k\cap Z_s)))|\\
&\geq |\nu_{\ell-1, j-1}(\Pi^*_s(Q_{k, s}))|-|\nu_{\ell-1, j-1}(\Pi^*_s(Q_{k, s}\setminus \rho_s(Q_k\cap Z_s)))|\\
&\geq |\nu_{\ell-1, j-1}(\Pi^*_s(Q_{k, s}))|-K^{j-1}|Q_{k, s}\setminus \rho_s(Q_k\cap Z_s)|.\\
\end{split}
\end{equation}
The first transition above is by Definition~\dualref{def:downset}, the second transition is by Definition~\dualref{def:tau}, and the forth transition uses the fact that $\Pi_s^*$ maps every vertex to at most one vertex, as well as the fact that $\nu_{\ell-1, j-1}$ maps every vertex to at most $K^{j-1}$ vertices. Similarly,
\begin{equation}\duallabel{eq:092y3tyh943ty9034yt-ub}
\begin{split}
|\nu_{\ell-1, j-1}(\tau(\downset_s(Q_k)))|&=|\nu_{\ell-1, j-1}(\tau(Q_k\cap Z_s))|\\
&=|\nu_{\ell-1, j-1}(\Pi^*_s(\rho_s(Q_k\cap Z_s)))|\\
&\leq |\nu_{\ell-1, j-1}(\Pi^*_s(Q_{k, s}))|+|\nu_{\ell-1, j-1}(\Pi^*_s(\rho_s(Q_k\cap Z_s)\setminus Q_{k, s}))|\\
&\leq |\nu_{\ell-1, j-1}(\Pi^*_s(Q_{k, s}))|+K^{j-1}|\rho_s(Q_k\cap Z_s)\setminus Q_{k, s}|.\\
\end{split}
\end{equation}
The first transition above is by Definition~\dualref{def:downset}, the second transition is by Definition~\dualref{def:tau}, and the forth transition uses the fact that $\Pi_s^*$ maps every vertex to at most one vertex, as well as the fact that $\nu_{\ell-1, j-1}$ maps every vertex to at most $K^{j-1}$ vertices.

\noindent{\bf Step 1.} We now upper bound $Q_{k, s}\setminus \rho_s(Q_k\cap Z_s)$ and $\rho_s(Q_k\cap Z_s)\setminus Q_{k, s}$, which allows us to upper bound the error terms in~\dualeqref{eq:092y3tyh943ty9034yt-lb} and~\dualeqref{eq:092y3tyh943ty9034yt-ub} respectively. We first apply Lemma~\dualref{lm:rect-size}, {\bf (1)}, to $Q_k$ and $Q_{k, s}$, and Lemma~\dualref{lm:rect-size}, {\bf (2)}, to $Q_k\cap Z_k$ with 
$$
\gamma:=\left(\prod_{i=0}^{k-1} \left(1-\frac1{K-i}\right)\right)\frac1{K-k}=\frac1{K}.
$$
The resulting bounds are
\begin{equation}\duallabel{eq:24yt2yt-size-bounds}
\begin{split}
|Q_k|/m^n&=(1\pm \sqrt{\e})\gamma=(1\pm \sqrt{\e})\frac1{K}\\
|Q_{k, s}|/m^n&=(1\pm \sqrt{\e})\frac1{K-s}\gamma=(1\pm \sqrt{\e})\frac1{K-s}\cdot \frac1{K}\\
|Q_k\cap Z_s|/m^n&=(1\pm \sqrt{\e}) \frac1{K-s}\cdot \gamma=(1\pm \sqrt{\e})\frac1{K-s}\cdot \frac1{K}.
\end{split}
\end{equation}

We thus get, using~\dualeqref{eq:densification-29h9hzxsd8gDJD} together with the fact that $\rho_s$ is injective,
\begin{equation}
\begin{split}
|Q_{k, s}\setminus \rho_s(Q_k\cap Z_s)|&\leq |Q_{k, s}|-|\Int_\delta(Q_k)\cap Z_s|\\
&\leq |Q_{k, s}|-|Q_k\cap Z_s|+|Q_k\setminus \Int_\delta(Q_k)|\\
&\leq (1+3\sqrt{\e}) \frac1{K-s}|Q_k|-(1-\sqrt{\e}) \frac1{K-s} |Q_k|+\sqrt{\delta} |Q_k|\\
&\leq (4\sqrt{\e}+K\sqrt{\delta})|Q_{k, s}|\\
\end{split}
\end{equation}

Similarly, since $\rho_s$ is injective,
\begin{equation}
\begin{split}
|\rho_s(Q_k\cap Z_s)\setminus Q_{k, s}|&\leq |\rho_s(Q_k\setminus \Int_\delta(Q_k))|\\
&\leq \sqrt{\delta} |Q_k|\\
&\leq K\sqrt{\delta} |Q_{k, s}|,\\
\end{split}
\end{equation}
where we used Lemma~\dualref{lm:rect-int-size} in the second transition.

Substituting the two bounds above into~\dualeqref{eq:092y3tyh943ty9034yt-lb} and~\dualeqref{eq:092y3tyh943ty9034yt-ub} respectively and noting that $K^j\cdot \sqrt{\delta}\leq \delta^{1/4}$ by~\dualref{p5} together with~\dualref{p3} and the fact that $K$ is larger than an absolute constant, we get
\begin{equation}\duallabel{eq:3yt3yt2yDF-lb}
|\nu_{\ell-1, j-1}(\tau(\downset_s(Q_k))|\geq |\nu_{\ell-1, j-1}(\Pi^*_s(Q_{k, s}))|-\delta^{1/4} |Q_{k, s}|
\end{equation}
and 
\begin{equation}\duallabel{eq:3yt3yt2yDF-ub}
|\nu_{\ell-1, j-1}(\tau(\downset_s(Q_k))|\leq |\nu_{\ell-1, j-1}(\Pi^*_s(Q_{k, s}))|+\delta^{1/4} |Q_{k, s}|.
\end{equation}

\paragraph{Step 2.}
Define
\begin{equation*}
\begin{split}
\I_0'&=\J'_{< k} \cup \Ext'_k\cup \{\q'_k\}\\
\I_1'&=\J'_{\geq k}\cup \{\r'\}
\end{split}
\end{equation*}
and $\I'=\I_0'\cup \I_1'$, as well as
\begin{equation}\duallabel{eq:i0-sm}
\begin{split}
 \I_0&=\J\cup \{\r\}\\
 \I_1&=\I_1'
 \end{split}
 \end{equation}
 and $\I=\I_0\cup \I_1$. Recall that by Definition~\dualref{def:dk} together with \dualeqref{eq:i-prime} and~Definition~\dualref{def:dk} for $k\in [K/2]$ the set $\mathbb{D}_k$ is a subset of $(\deltagrid)^{\I'_0}$ such that 
\begin{equation*}
\begin{split}
\left\{x\in T'_k: \langle x, \q_k\rangle \pmod{M} \in \left[0, \frac1{K-k}\right)\cdot M \right\}&=\bigcup_{\d\in \mathbb{D}_k} \rect(\I'_0, \d).
\end{split}
\end{equation*}

We now recall the definition of $\mathbb{D}_k$ (see Definition~\dualref{def:dk}). Indeed, let $\u_{\j'_s}=0, \v_{\j'_s}=1-\frac1{K-s}$ for $s\in [k]$, let $\u_{\q_k}=0, \v_{\q_k}=\frac1{K-k},$ and $\u_\i=0, \v_\i=1$ for $\i\in \Ext_k$. Then, noting that $\I_0'$ as per~\dualeqref{eq:i0-prime} is equal to $\I'$ as per~\dualeqref{eq:i-prime}
one has as per~\dualeqref{eq:dk-def}
\begin{equation}\duallabel{eq:dk-def-tt-3258y8y8g}
\mathbb{D}_k=(\deltagridInt)^{\I'_0}\cap \prod_{\i\in \I'_0} [\u_\i, \v_\i).
\end{equation}
We now note  that $Q_{k, s}$ is a rectangle in $\I_0'\cup \I'_1$.  
Indeed, let 
\begin{equation*}
\begin{split}
&\u^0_{\j'_i}=0, \v^0_{\j'_i}=1-\frac1{K-i}\text{~~~~~ for~} i\in [s]\text{~~~~~~~~~~~~~~~~~~~~~~~~~~~~~~~~~~~~~}\\
&\u^0_{\q_s}=0, \v^0_{\q_s}=\frac1{K-k}\\
&\text{and}\\
&\u^0_\i=0, \v^0_\i=1\text{~~~~~for~} \i\in \Ext'_s,
\end{split}
\end{equation*}
so that $\u^0, \v^0\in (\deltagrid)^{\I'_0}$. Also let 
\begin{equation*}
\begin{split}
&\u^1_{\j'_i}=0, \v^1_{\j'_i}=1-\frac1{K-i}\text{~~~~~ for~} i\in \{s, s+1,\ldots, k-1\},\\
&\u^1_{\j'_k}=1-\frac1{K-k}, \v^1_{\j'_k}=1,\\
&\u^1_{\j'_i}=0, \v^1_{\j'_i}=1\text{~~~~~ for~} i\in \{k, k+1,\ldots, K/2\}\\
&\u^1_{\r'_k}=0, \v^1_{\r'_k}=1,\\
\end{split}
\end{equation*}
so that $\u^1, \v^1\in (\deltagrid)^{\I'_1}$. We have $Q_{k, s}=\rect(\I_0'\cup \I_1', (\u^0, \u^1), (\v^0, \v^1))$, and hence using Claim~\dualref{cl:subcubes-decomp} we get
\begin{equation}\duallabel{eq:0u23tggddxv}
Q_{k, s}=\bigcup_{\substack{\a=(\a_0, \a_1)\\ \a_0\in \mathbb{D}_{k, s}, \a_1\in \mathbb{D}^{ext}_{k, s}}} R'_{ext}(\a).
\end{equation}
We have
\begin{equation}\duallabel{eq:293ht8gbdgHFHS}
\begin{split}
\mathbb{D}_{k, s}&=(\deltagridInt)^{\I'_0}\cap \prod_{\i\in \I'_0} [\u^0_\i, \v^0_\i)\\
&\subseteq (\deltagridInt)^{\I'_0}\cap \prod_{\i\in \I'_0} [\u_\i, \v_\i)\\
&=\mathbb{D}_k,
\end{split}
\end{equation}
and 
\begin{equation*}
\begin{split}
\mathbb{D}^{ext}_{k, s}&=(\deltagridInt)^{\I'_0}\cap \prod_{\i\in \I'_1} [\u^1_\i, \v^1_\i).
\end{split}
\end{equation*}

The first transition in~\dualeqref{eq:293ht8gbdgHFHS} is by Claim~\dualeqref{cl:subcubes-decomp}. The second transition is due to the fact that $Q_{k, s}\subseteq T'_k$ by~\dualeqref{eq:qk-def-23865gugfs} and~\dualeqref{eq:qks-def-23865gugfs}. The last transition is by definition of $\mathbb{D}_k$.

For $\a\in Q_k$ we write $\a_0$ to denote the restriction of $\a$ to $\I_0'$, $\a_1$ to denote the restriction of $\a$ to $\I_1'$. Let $M: \mathbb{D}_k\to \mathbb{A}$ denote the map  that defines $\tau$ (see Definition~\dualref{def:tau}, Definition~\dualref{def:pi-global} and~\dualeqref{eq:m-def}).
For $\a_0\in \deltagrid^{\I'_0}$ let
\begin{equation*}
\begin{split}
R'(\a_0)&=\rect(\I'_0, \a_0, \a_0+\Delta\cdot \mathbf{1})\\
&\text{and}\\
R(\a_0)&=\rect(\I_0, \mathsf{M}(\a_0), \mathsf{M}(\a_0)+\Delta\cdot \mathbf{1}).
\end{split}
\end{equation*}

We also define extended rectangles by letting for $\a=(\a_0, \a_1)\in (\deltagrid)^{\I'}=(\deltagrid)^{\I_0'\cup \I_1'}$
\begin{equation}\duallabel{eq:rext-def-iwbg903g}
\begin{split}
R'_{ext}(\a)&=\rect(\I_0'\cup \I_1', ((\a_0, \a_1), (\a_0+\Delta\cdot \mathbf{1}, \a_1+\Delta\cdot \mathbf{1})))\\
&\text{and}\\
R_{ext}(\a)&=\rect(\I_0\cup \I_1, ((\mathsf{M}(\a_0), \a_1), (\mathsf{M}(\a_0)+\Delta\cdot \mathbf{1}, \a_1+\Delta\cdot \mathbf{1}))).
\end{split}
\end{equation}

By Lemma~\dualref{lm:tau-of-rect}, {\bf (1)} we have, omitting the dependence on $\a$ to simplify notation while $\a$ is fixed, 
\begin{equation}\duallabel{eq:HDOSJFdfwg}
\Pi^*_s(\Int_\delta(R'_{ext}))\subseteq R_{ext}.
\end{equation}
 At the same time by Lemma~\dualref{lm:rect-int-size} we have 
 \begin{equation}\duallabel{eq:923yt23ht23tdfwg}
 |\Int_\delta(R'_{ext})|\geq (1-\sqrt{\delta}) |R'_{ext}|,
 \end{equation}
and by Lemma~\dualref{lm:rect-size} one has $|R_{ext}|\leq (1+3\sqrt{\e}) |R'_{ext}|$. 
  
\begin{equation}\duallabel{eq:ihgSGF139h9HdxXV-iwhrg}
\begin{split}
|R_{ext}\setminus \Pi^*_s(\Int_\delta(R'_{ext}))|&\leq  |R_{ext}|-|R'_{ext}|+|R'_{ext}\setminus \Int_\delta(R'_{ext})|+|R'_{ext}\setminus \text{Dom}(\Pi^*_s)|\\
&\leq  3\sqrt{\e} |R'_{ext}|+|R'_{ext}\setminus \Int_\delta(R'_{ext})|+|R'_{ext}\setminus \text{Dom}(\Pi^*_s)|\\
&\leq  (3\sqrt{\e}+\sqrt{\delta}) |R'_{ext}|+|R'_{ext}\setminus \text{Dom}(\Pi^*_s)|\\
\end{split}
\end{equation}
We now bound $|R'_{ext}\setminus \text{Dom}(\Pi^*_s)|$. By Lemma~\dualref{lm:tau-of-rect}, {\bf (2)} we have 
\begin{equation*}
\begin{split}
|R'_{ext}\setminus \text{Dom}(\Pi^*_k)|&\leq |R'\setminus \text{Dom}(\Pi^*_s)|\\
&\leq 8\sqrt{\e} |R'|\\
&\leq 8\sqrt{\e} \Delta^{-2K^2}|R'_{ext}|,\\
\end{split}
\end{equation*}
where we used the fact that $|R'_{ext}|\geq \Delta^{2K^2}\cdot |R'|$ by Lemma~\dualref{lm:rect-size}, {\bf (1)}. Substituting this into~\dualeqref{eq:ihgSGF139h9HdxXV-iwhrg}, we get
\begin{equation}\duallabel{eq:9h30ytar9hglkfS32gf-iwhrg-lb}
\begin{split}
|R_{ext}\setminus \Pi^*_s(\Int_\delta(R'_{ext}))|&\leq  (3\sqrt{\e}+\sqrt{\delta}+8\sqrt{\e} \Delta^{-2K^2}) |R'_{ext}|\leq  2\sqrt{\delta} |R'_{ext}|,
\end{split}
\end{equation}
where we used the fact that 
\begin{equation*}
\begin{split}
8\sqrt{\e} \Delta^{-2K^2}&\leq 8\delta \Delta^{-2K^2}\text{~~~~~~~~~(by~\dualref{p6})}\\
&\leq 8\delta \Delta^{-2K^2}\text{~~~~~~~~~(by~\dualref{p5})}\\
&\leq 8\delta^{98/100}\text{~~~~~~~~~~(by~\dualref{p5})},
\end{split}
\end{equation*}
and therefore, since $\sqrt{\e}\leq \delta$ by~\dualref{p6},
$$
3\sqrt{\e}+\sqrt{\delta}+8\sqrt{\e} \Delta^{-2K^2}\leq 3\delta+\sqrt{\delta}+8\delta^{98/100}\leq 2\sqrt{\delta}
$$
since $K$ is larger than an absolute constant and $\delta<\Delta^{100K^2}\leq K^{100K^2}$ by~\dualref{p3} together with~\dualref{p4}.

At the same time, we have by Lemma~\dualref{lm:rect-int-size}
\begin{equation}\duallabel{eq:9h30ytar9hglkfS32gf-iwhrg-ub}
\begin{split}
|\Pi^*_s(R'_{ext})\setminus R_{ext}|&\leq  |R'_{ext}\setminus \Int_\delta(R'_{ext})|\leq  \sqrt{\delta} |R'_{ext}|.\\
\end{split}
\end{equation}

We summarize these bounds in
\begin{equation}\duallabel{eq:9h30ytar9hglkfS32gfdfwg}
\begin{split}
|R_{ext}\setminus \Pi^*_s(R'_{ext})|&\leq  2\sqrt{\delta} |R'_{ext}|\\
\text{~~~and~}\\
|\Pi^*_s(R'_{ext})\setminus R_{ext}|&\leq  2\sqrt{\delta} |R'_{ext}|\\
\end{split}
\end{equation}
We also note that 
\begin{equation}\duallabel{eq:934hh34t9gHUHUDHFXX}
(1-3\sqrt{\e})|R'_{ext}|\leq R_{ext}\leq (1+3\sqrt{\e})|R'_{ext}|.
\end{equation}
Indeed, to obtain the bound above we apply Lemma~\dualref{lm:rect-size}, {\bf (1)} to $R_{ext}$ and $R'_{ext}$. This gives
\begin{equation}\duallabel{eq:24yt2yt-size-bounds}
\begin{split}
|R'_{ext}|/m^n&=(1\pm \sqrt{\e})\Delta^{|\I'|}\\
|R_{ext}|/m^n&=(1\pm \sqrt{\e})\Delta^{|\I|}.
\end{split}
\end{equation}
Taking the ratio of the two bounds above and using the fact that $|\I'|=|\I|$  yields~\dualeqref{eq:934hh34t9gHUHUDHFXX}, as required.

We now note that $\I_0=\Sp(\B)$ as per~\dualeqref{eq:i0-sm} and $\mathsf{M}(\a_0)$ is consistent with $T_*$ by definition of the map $\mathsf{M}$ (see~\dualeqref{eq:m-def} and~\dualeqref{eq:m-a-def}). Thus, by Lemma~\dualref{lm:rect-nu-j} applied to the rectangle $R_{ext}$ from~\dualeqref{eq:rext-def-iwbg903g} we get 
\begin{equation}\duallabel{eq:28h2f8hfdsfSDFF12sf}
(\ln 2-C/K)^{j-1} |R_{ext}|\leq |\nu_{\ell-1, j-1}(R_{ext})|\leq (\ln 2+C/K)^{j-1} |R_{ext}|.
\end{equation}
Using the first inequality above together with the first bound in~\dualeqref{eq:9h30ytar9hglkfS32gfdfwg}, we get
\begin{equation}\duallabel{eq:3y823UGFUG3t34-lb}
\begin{split}
|\nu_{\ell-1, j-1}(\Pi^*_s(R'_{ext}))|&\geq |\nu_{\ell-1, j-1}(R_{ext})|-|\nu_{\ell-1, j-1}(R_{ext}\setminus \Pi^*_s(\Int_\delta(R'_{ext})))|\\
&\geq |\nu_{\ell-1, j-1}(R_{ext})|-K^{j-1}\cdot 2\sqrt{\delta} \cdot |R'_{ext}|\\
&\geq (\ln 2-C/K)^{j-1}\cdot |R_{ext}|-\delta^{1/4} \cdot |R'_{ext}|\\
&\geq ((\ln 2-C/K)^{j-1}(1-3\sqrt{\e})-\delta^{1/4})\cdot |R'_{ext}|,\\
\end{split}
\end{equation}
where the transition from the first line to the second is because for every $\ell'$ the map $\tau^{\ell'}$ maps no vertex in $S'$ to more than $K/2$ vertices in $T_*$, and in particular $\nu_{\ell-1, j-1}$ maps no vertex in $S^{\ell-1}$ to more than $(K/2)^{j-1}$ vertices in $T_*^{\ell-j}$ (note that we are using the looser bound of $K^j$ on the product of these two bounds to simplify notation). The penultimate transition uses the fact that $K^{j-1}2\sqrt{\delta}\leq \delta^{1/4}$ by ~\dualref{p5}, and the transition to the last line uses~\dualeqref{eq:934hh34t9gHUHUDHFXX}.  Using the second inequality in~\dualeqref{eq:28h2f8hfdsfSDFF12sf} together with the second bound in~\dualeqref{eq:9h30ytar9hglkfS32gfdfwg}, we similarly get 
\begin{equation}\duallabel{eq:3y823UGFUG3t34-ub}
\begin{split}
|\nu_{\ell-1, j-1}(\Pi^*_s(R'_{ext}))|&\leq |\nu_{\ell-1, j-1}(R_{ext})|+|\nu_{\ell-1, j-1}(\Pi^*_s(R'_{ext})\setminus R_{ext})|\\
&\leq |\nu_{\ell-1, j-1}(R_{ext})|+K^{j-1}\cdot 2\sqrt{\delta} \cdot |R'_{ext}|,\\
&\leq (\ln 2+C/K)^{j-1}\cdot |R_{ext}|+\delta^{1/4} \cdot |R'_{ext}|\\
&\leq ((\ln 2+C/K)^{j-1}(1+3\sqrt{\e})+\delta^{1/4})\cdot |R'_{ext}|
\end{split}
\end{equation}
where the transition from the first line to the second is because for any $\ell'$ the map $\tau^{\ell'}$ maps no vertex in $S'$ to more than $K/2$ vertices in $T_*$, and in particular $\nu_{\ell-1, j-1}$ maps no vertex in $S^{\ell-1}$ to more than $(K/2)^{j-1}$ vertices in $T_*^{\ell-j}$, as well as the fact that $\Pi^*_s(R'_{ext})\setminus R_{ext}\subseteq \Pi^*_s(R'_{ext}\setminus \Int_\delta(R'_{ext}))$ by~\dualeqref{eq:HDOSJF}. The transition to the last line uses the fact that $K^{j-1}2\sqrt{\delta}\leq \delta^{1/4}$ by~\dualref{p5}.

Substituting~\dualeqref{eq:3y823UGFUG3t34-lb} and~\dualeqref{eq:3y823UGFUG3t34-ub} respectively into~\dualeqref{eq:0u23tggddxv}, we get

\begin{equation}\duallabel{eq:0u23tggddxv-size-lb}
\begin{split}
|\nu_{\ell-1, j-1}(\Pi^*_s(Q_{k, s}))|&=\sum_{\substack{\a=(\a_0, \a_1)\\ \a_0\in \mathbb{D}_{k, s}, \a_1\in \mathbb{D}^{ext}_{k, s}}} |\nu_{\ell-1, j-1}(\Pi^*_s(R'_{ext}(\a)))|\\
&\geq \sum_{\substack{\a=(\a_0, \a_1)\\ \a_0\in \mathbb{D}_{k, s}, \a_1\in \mathbb{D}^{ext}_{k, s}}} ((\ln 2-C/K)^{j-1}(1-3\sqrt{\e})-\delta^{1/4})\cdot |R'_{ext}(\a)|\\
&= ((\ln 2-C/K)^{j-1}-2\delta^{1/4}) \sum_{\substack{\a=(\a_0, \a_1)\\ \a_0\in \mathbb{D}_{k, s}, \a_1\in \mathbb{D}^{ext}_{k, s}}} |R'_{ext}(\a)|\\
&= ((\ln 2-C/K)^{j-1}-2\delta^{1/4}) |Q_{k, s}|.
\end{split}
\end{equation}

and 

\begin{equation}\duallabel{eq:0u23tggddxv-size-ub}
\begin{split}
|\nu_{\ell-1, j-1}(\Pi^*_s(Q_{k, s}))|&=\sum_{\substack{\a=(\a_0, \a_1)\\ \a_0\in \mathbb{D}_{k, s}, \a_1\in \mathbb{D}^{ext}_{k, s}}} |R'_{ext}(\a)|\\
&\leq \sum_{\substack{\a=(\a_0, \a_1)\\ \a_0\in \mathbb{D}_{k, s}, \a_1\in \mathbb{D}^{ext}_{k, s}}} ((\ln 2-C/K)^{j-1}(1+3\sqrt{\e})+\delta^{1/4})\cdot |R'_{ext}(\a)|\\
&= ((\ln 2+C/K)^{j-1}+2\delta^{1/4}) \sum_{\substack{\a=(\a_0, \a_1)\\ \a_0\in \mathbb{D}_{k, s}, \a_1\in \mathbb{D}^{ext}_{k, s}}} |R'_{ext}(\a)|\\
&= ((\ln 2+C/K)^{j-1}+2\delta^{1/4}) |Q_{k, s}|.
\end{split}
\end{equation}

\paragraph{Step 3.} Substituting~\dualeqref{eq:0u23tggddxv-size-lb} and~\dualeqref{eq:0u23tggddxv-size-ub}  into~\dualeqref{eq:3yt3yt2yDF-lb} and~\dualeqref{eq:3yt3yt2yDF-ub} respectively, and using~\dualeqref{eq:24yt2yt-size-bounds}, we get
\begin{equation*}
\begin{split}
|\nu_{\ell-1, j-1}(\tau(\downset_s(Q_k))|&\geq |\nu_{\ell-1, j-1}(\Pi^*_s(Q_{k, s}))|-\delta^{1/4} |Q_{k, s}|\\
&\geq ((\ln 2-C/K)^{j-1}-2\delta^{1/4}) |Q_{k, s}|\\
&\geq ((\ln 2-C/K)^{j-1}-2\delta^{1/4}-2\sqrt{\e}) \frac1{K-s} |Q_k|\\
&\geq ((\ln 2-C/K)^{j-1}-4\delta^{1/4}) \frac1{K-s} |Q_k|\\
\end{split}
\end{equation*}
and 
\begin{equation}\duallabel{eq:3yt3yt2yDF-ub}
\begin{split}
|\nu_{\ell-1, j-1}(\tau(\downset_s(Q_k))|&\leq |\nu_{\ell-1, j-1}(\Pi^*_s(Q_{k, s}))|+\delta^{1/4} |Q_{k, s}|\\
&\leq ((\ln 2+C/K)^{j-1}+2\delta^{1/4}) |Q_{k, s}|\\
&\leq ((\ln 2+C/K)^{j-1}+2\delta^{1/4}+2\sqrt{\e}) \frac1{K-s}|Q_k|\\
&\leq ((\ln 2+C/K)^{j-1}+4\delta^{1/4}) \frac1{K-s}|Q_k|.
\end{split}
\end{equation}

We now get by~\dualeqref{eq:9w9gh9gh9HFfojef3FS}
\begin{equation}\duallabel{eq:9h3438FHEFH-lb}
\begin{split}
|\nu_{\ell, j}(T_k\setminus T_{k+1})|&=|\nu_{\ell, j}(Q_k)|\\
&\geq \sum_{s=0}^{k} |\nu_{\ell-1, j-1}(\tau^\ell(\downset^\ell_s(Q_k)))|\\
&\geq \sum_{s=0}^{k} ((\ln 2-C/K)^{j-1}-4\delta^{1/4}) \frac1{K-s} |Q_k|\\
&\geq ((\ln 2-C/K)^{j-1}-4\delta^{1/4}) |Q_k| \left(\sum_{s=0}^{k}  \frac1{K-s}\right)
\end{split}
\end{equation}
and 
\begin{equation}\duallabel{eq:9h3438FHEFH-ub}
\begin{split}
|\nu_{\ell, j}(T_k\setminus T_{k+1})|&=|\nu_{\ell, j}(Q_k)|\\
&\leq \sum_{s=0}^{k} |\nu_{\ell-1, j-1}(\tau^\ell(\downset^\ell_s(Q_k)))|\\
&\leq \sum_{s=0}^{k} ((\ln 2+C/K)^{j-1}+4\delta^{1/4}) \frac1{K-s} |Q_k|\\
&\leq ((\ln 2+C/K)^{j-1}+4\delta^{1/4}) |Q_k| \left(\sum_{s=0}^{k}  \frac1{K-s}\right).
\end{split}
\end{equation}

We now recall that by~\dualeqref{eq:24yt2yt-size-bounds} one has $|Q_k|/m^n=(1\pm \sqrt{\e})\frac1{K}$. At the same time
\begin{equation*}
\begin{split}
\frac1{K}\sum_{k\in [K/2]} \sum_{s=0}^{k}  \frac1{K-s}&=\frac1{K}\sum_{k=0}^{K/2-1} \sum_{s=0}^{k}  \frac1{K-s}\\
&=\frac1{K}\sum_{s=0}^{K/2-1} \frac{K/2-s}{K-s}\\
&=\frac1{K}\sum_{s=0}^{K/2-1} \frac{-K/2+K-s}{K-s}\\
&=\frac1{K}\sum_{s=0}^{K/2-1} \left(1-\frac{-K/2}{K-s}\right)\\
&=\frac1{2}-\frac1{2}\sum_{s=0}^{K/2-1} \frac1{K-s},\\
\end{split}
\end{equation*}
and hence by Claim~\ref{cl:sum-int}
$$
\frac1{2}-\frac1{2}\ln 2\leq \frac1{K}\sum_{k\in [K/2]} \sum_{s=0}^{k}  \frac1{K-s}\leq \frac1{2}-\frac1{2}(\ln 2+1/K).
$$

Substituting these bounds into~\dualeqref{eq:9h3438FHEFH-lb} and~\dualeqref{eq:9h3438FHEFH-ub}, we get 
\begin{equation*}
\begin{split}
|\nu_{\ell, j}(T_k\setminus T_{k+1})|&\geq ((\ln 2-C/K)^{j-1}-4\delta^{1/4}) |Q_k| \left(\sum_{s=0}^{k}  \frac1{K-s}\right)\\
&\geq ((\ln 2-C/K)^{j-1}-4\delta^{1/4}) \left(\frac1{2}-\frac1{2}\ln 2\right)\cdot m^n\\
&\geq \frac1{2}(1-\ln 2)(\ln 2-C/K)^{j-1}\cdot m^n
\end{split}
\end{equation*}
and 
\begin{equation*}
\begin{split}
|\nu_{\ell, j}(T_k\setminus T_{k+1})|&\leq ((\ln 2+C/K)^{j-1}+4\delta^{1/4}) |Q_k| \left(\sum_{s=0}^{k}  \frac1{K-s}\right)\\
&\leq ((\ln 2+C/K)^{j-1}+4\delta^{1/4}) (1+3/K)\left(\frac1{2}-\frac1{2}\ln 2\right)\cdot m^n\\
&\leq \frac1{2}(1-\ln 2)(\ln 2+C/K)^{j-1}\cdot m^n
\end{split}
\end{equation*}
This establishes~\dualeqref{eq:923yth9ihFHUFHHF} and completes the proof of the lemma.
\end{proofof}

\subsection{Key lemma: insensitivity of $\nu$ and $\mu$ to bounded near orthogonal shifts}

\paragraph{Outlier vertices.} We define sets of outlier vertices recursively for every $\ell\in [L]$.  First define $\out^L=\emptyset$ for convenience. Then for every $\ell\in [L]$ we define $\out^\ell$ in terms of $\out^{\ell'}, \ell'>\ell$ as follows.  First let
\begin{equation}\duallabel{eq:def-sets}
\I^\ell=\J^\ell\cup  \{\r^\ell\}\text{~and~}\I_k^\ell=\J^\ell_{<k}\cup \Ext^\ell_k\cup \{\q^\ell_k\}
\end{equation}
for simplicity of notation. Intuitively, the outlier vertices are simply vertices that are too close to boundaries of cubes in coordinates $\I^\ell$ and $\I_k^\ell$ for all $k\in [K/2]$, or vertices in $T^\ell$ that are not in the range of $\tau^{\ell+1}$. It is convenient to define, for a  $\H\subseteq \F$ and $\d \in (\deltagridInt)^\H$, the boundary of a cube as
$$
\partial \rect(\H, \d):=\rect(\H, \d)\setminus \Int_\delta(\rect(\H, \d)).
$$

First let, denoting $\I=\I^{L-1}$ and $\I_k=\I_k^{L-1}$ for convenience, 
\begin{equation}\duallabel{eq:o-ellm1-def}
\begin{split}
\out^{L-1}:=\left(\bigcup_{\d\in (\deltagridInt)^\I} \partial \rect(\I, \d)\right)\cup \left(\bigcup_{k\in [K/2]} \bigcup_{\d\in (\deltagridInt)^{\I_k}} \partial \rect(\I_k, \d)\right),
\end{split}
\end{equation}
and then for every $\ell\in [L], \ell<L-1,$ let, denoting $\I=\I^\ell$ and $\I_k=\I_k^\ell$ for convenience, 
\begin{equation}\duallabel{eq:o-def}
\begin{split}
\out^\ell:=\nu_{\ell+1,1}(\out^{\ell+1})\cup (T_*^\ell\setminus \tau^{\ell+1}(S^{\ell+1}))&\cup \left(\bigcup_{\d\in (\deltagridInt)^\I} \partial \rect(\I, \d)\right)\\
&\cup \left(\bigcup_{k\in [K/2]} \bigcup_{\d\in (\deltagridInt)^{\I_k}} \partial \rect(\I_k, \d)\right).
\end{split}
\end{equation}
Finally, let
\begin{equation}\duallabel{eq:o-union-def}
\out=\bigcup_{\ell\in [L]} \out^\ell.
\end{equation}
\begin{remark}
Abusing notation somewhat, we will think of the set $\out$ as the set of labels in $[m]^n$, and in particular will write $x\in \out$ for a vertex $x\in T^\ell$, as well as sometimes write $y\in \out$ for a vertex $y\in S^\ell$ for some $\ell\in [L]$.
\end{remark}
This set of outlier vertices is quite small, as the following claim shows:
\begin{claim}\duallabel{cl:o-size}
$|\out|\leq \delta^{1/8} |T^0|$.
\end{claim}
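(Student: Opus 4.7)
The plan is a downward induction on $\ell$ from $L-1$ to $0$, bounding $|\out^\ell|$ and then summing over $\ell$ to control $|\out|$. For the base case $\ell = L-1$, the set $\out^{L-1}$ is the union of the boundaries of two partitions of $[m]^n$ into cubes: one by the cubes $\rect(\I^{L-1},\d)$ as $\d$ ranges over $\deltagridInt^{\I^{L-1}}$, and, for each $k\in [K/2]$, one by the cubes $\rect(\I_k^{L-1},\d)$. Since these cubes do partition $[m]^n$, summing the bound $|R\setminus \Int_\delta(R)| \leq \sqrt{\delta}\,|R|$ from Lemma~\dualref{lm:rect-int-size} over all cubes in each partition yields total boundary mass at most $\sqrt{\delta}\,m^n=\sqrt{\delta}\,|T^0|$ per partition. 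Hence $|\out^{L-1}| \leq (1+K/2)\sqrt{\delta}\,|T^0| \leq K\sqrt{\delta}\,|T^0|$.

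For the inductive step, the two boundary contributions in~\dualeqref{eq:o-def} are bounded exactly as in the base case by $K\sqrt{\delta}\,|T^0|$. The term $T_*^\ell\setminus \tau^{\ell+1}(S^{\ell+1})$ is bounded by $\delta^{1/4}\,|T^0|$ via Lemma~\dualref{lm:t-star-minus-tau-sp}. Finally, for $\nu_{\ell+1,1}(\out^{\ell+1})$, Definition~\dualref{def:nu} gives $\nu_{\ell+1,1}=\tau^{\ell+1}\circ \downset^{\ell+1}$; since $\downset^{\ell+1}$ maps each vertex to at most $K/2$ vertices (Definition~\dualref{def:downset}) and $\tau^{\ell+1}$ is at most one-to-one (Claim~\dualref{cl:pi-star-injective}), we obtain $|\nu_{\ell+1,1}(\out^{\ell+1})|\leq (K/2)\,|\out^{\ell+1}|$. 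Absorbing the small term $K\sqrt{\delta}$ into $\delta^{1/4}$ (which holds by~\dualref{p3},~\dualref{p5}), the recursion becomes
$$
|\out^\ell| \leq (K/2)\,|\out^{\ell+1}| + 2\delta^{1/4}\,|T^0|.
$$

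Unrolling gives $|\out^\ell| \leq \sum_{j=0}^{L-1-\ell}(K/2)^j\cdot 2\delta^{1/4}\,|T^0| \leq K^L\,\delta^{1/4}\,|T^0|$, and summing over the $L$ values of $\ell$ yields $|\out| \leq L\cdot K^L\cdot \delta^{1/4}\,|T^0| \leq K^{2L}\,\delta^{1/4}\,|T^0|$. It remains to verify that the slack in $\delta$ swallows this factor: by~\dualref{p4} we have $L=\sqrt{K}$, and by~\dualref{p3} and~\dualref{p5} we have $\delta \leq \Delta^{100 K^2}\leq K^{-100 K^2}$, so $\delta^{-1/8} \geq K^{12.5 K^2}$, which dominates $K^{2L}=K^{2\sqrt{K}}$ for $K$ larger than an absolute constant. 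Therefore $K^{2L}\cdot \delta^{1/4}\leq \delta^{1/8}$, giving the claimed bound $|\out|\leq \delta^{1/8}\,|T^0|$.

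The only non-routine point is the parameter bookkeeping at the end: the blow-up factor $K^L$ from iterating the recursion $L$ times through the $(K/2)$-to-one map $\downset$ must be dominated by the slack between $\delta^{1/4}$ and $\delta^{1/8}$, and this is exactly why \dualref{p5} imposes the aggressive choice $\delta \leq \Delta^{100K^2}$.
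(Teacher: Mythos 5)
Your proof is correct and follows essentially the same strategy as the paper: unroll the recursive definition of $\out^\ell$, bound the boundary terms via Lemma~\dualref{lm:rect-int-size}, bound the $T_*^\ell\setminus\tau^{\ell+1}(S^{\ell+1})$ term via Lemma~\dualref{lm:t-star-minus-tau-sp}, use that $\nu_{\ell+1,1}$ blows up cardinality by at most a factor $K$ (you use the sharper $K/2$), and close with the same parameter check $\delta\leq K^{-100K^2}$. The only difference is cosmetic — you phrase the unrolling as a downward induction on $\ell$, while the paper tracks the accumulated contributions directly — but the bookkeeping and the role of \dualref{p5} are identical.
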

\begin{proof}
By Lemma~\dualref{lm:t-star-minus-tau-sp} we have for every $\ell\in [L], \ell<L-1,$ that $\left|T_*^\ell\setminus \tau^{\ell+1}(S^{\ell+1}))\right|\leq \delta^{1/4} \left|T^\ell\right|$, and therefore the total contribution of $T_*^\ell\setminus \tau^{\ell+1}(S^{\ell+1})$ due to recursive application of $\nu_{\ell+1, 1}$ in the first line of~\dualeqref{eq:o-def} contributes a set of size at most  $\sum_{i=0}^\ell K^i \delta^{1/4} \left|T^\ell\right|$, since $\nu_{\ell', 1}$ maps every point to at most $K$ points, for every $\ell'\in [L]$.

Now note that for every $\H\subset \F$ with $|\H|\leq K^2$ one has using Lemma~\dualref{lm:rect-int-size}
\begin{equation}\duallabel{eq:38gg3894g8g89gf89ahg}
\begin{split}
\left|\bigcup_{\d\in (\deltagridInt)^\H} \partial \rect(\H, \d)\cap T^\ell\right| &=\sum_{\d\in (\deltagridInt)^\H} |\partial\rect(\H, \d)\cap T^\ell|\\
&\leq \sum_{\d\in (\deltagridInt)^\H} \sqrt{\delta} |\rect(\H, \d)\cap T^\ell|\text{~~~~~~~(by Lemma~\dualref{lm:rect-int-size})}\\
&= \sqrt{\delta} \sum_{\d\in (\deltagridInt)^\H}  |\rect(\H, \d)\cap T^\ell|\\
&= \sqrt{\delta} |T^0|.
\end{split}
\end{equation}
Applying this to~\dualeqref{eq:o-def}, and using the fact that for every $\ell$ and $j$ the map $\nu_{\ell', 1}$ does not map any point to more than $K$ points, we get that the contribution of the set in~\dualeqref{eq:38gg3894g8g89gf89ahg} after all recursive applications of $\nu_{\ell, 1}$ in the first line of~\dualeqref{eq:o-def} contributes at most $\sum_{i=0}^\ell K^i \sqrt{\delta} \left|T^\ell\right|$. Summing these contributions over $\H=\I$ and $\H=\I_k, k\in [K/2]$, we get for every $\ell\in [L]$
\begin{equation}
|\out^\ell|\leq 2K\cdot L\cdot K^L\cdot \delta^{1/4} |T^0|
\end{equation}
and $|\out|\leq \sum_{\ell\in [L]} |\out^\ell|\leq 2K\cdot L^2\cdot K^L\cdot \delta^{1/4} |T^0|$.
Finally, it remains to note that 
\begin{equation*}
\begin{split}
2K\cdot L^2\cdot K^L\cdot \delta^{1/4}&\leq (K^{2K}\cdot \delta^{1/8})\cdot \delta^{1/8}\text{~~~~~~~~~~~~~~~~~~~~~(since $L\leq K$ by~\dualref{p4})}\\
&\leq (K^{2K}\cdot K^{-(1/8)100K^2})\cdot \delta^{1/8}\text{~~~~~(since $\delta\leq K^{-100K^2}$ by~\dualref{p3} and~\dualref{p5})}\\
&\leq \delta^{1/8},
\end{split}
\end{equation*}
where the last transition uses the fact that $K$ is larger than an absolute constant.
\end{proof}

\begin{lemma}\duallabel{lm:traceback}
For every $\ell\in [L]$, every $x\in T^\ell\setminus \out^\ell$ (where $\out^\ell$ is as in~\dualeqref{eq:o-def}) such that $x\in \nu_{\ell+z, z}(T^{\ell+z}\setminus T_*^{\ell+z})$ for some $z\in \{0, 1, \ldots, L-1-\ell\}$ the following conditions hold for every $y\in T^\ell\setminus \out^\ell$ satisfying $y=x+\lambda \cdot \u$ for some $\u\in \B^\ell\setminus \Spt(\B^\ell), |\lambda|\leq 2M/w$. 

For every $j=0,\ldots, z$ there exists $k\in [K/2]$ such that
\begin{description}
\item[(1)] there exist unique
 $\wt{x}, \wt{y}\in T_k^{\ell+j}$ such that  
\begin{equation*}
\begin{split}
x\in \nu_{\ell+j, j}(\wt{x})\text{~~and~~}y\in \nu_{\ell+j, j}(\wt{y})
\end{split}
\end{equation*}
\item[(2)] there exists a collection of vectors

\begin{equation}\duallabel{eq:ilj-def}
\I^{\ell+j}\subset \left(\J^{\ell+j}_{<k}\cup \Ext_k^{\ell+j}\cup \{\q_k^{\ell+j}\}\right) \cup \bigcup_{s=0}^{j-1} \Spt(\B^{\ell+s}).
\end{equation}

with $|\I^{\ell+j}|=j(K+1)$ together with integer coefficients $t_\i, \i\in \I^{\ell+j},$ satisfying $|t_\i|\leq 20M/w$ for $\i\in \B^{<\ell+j}$ and $|t_\i|\leq 10M/w$ for $\i\in \B^{\ell+j}$ such that
\begin{equation}\duallabel{eq:02ht9g84tgtFE3r}
\wt{y}=\wt{x}+\lambda \cdot \u+\sum_{\i\in \I^{\ell+j}} t_\i\cdot \i.
\end{equation}
\item[(3)] for $\I=\J^{\ell+j}\cup \{\r^{\ell+j}\}$ there exists a rectangle $R=\rect(\I, \d)$, $\d\in (\deltagridInt)^\I,$ such that 
$$
\wt{x}\in \Int_\delta(R)\text{~~and~~}\wt{y}\in \Int_\delta(R).
$$
\end{description}
\end{lemma}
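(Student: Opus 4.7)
The plan is to proceed by induction on $j$ from $0$ to $z$, with both uniqueness and the structural shift equation / cube containment carried through the induction. For the base case $j=0$, I set $\wt{x}=x$, $\wt{y}=y$, $k=0$, and $\I^\ell=\emptyset$, so (1) is trivial, (2) reduces to the hypothesis $y=x+\lambda\u$. For (3), since $x\notin \out^\ell$, the last line of~\dualeqref{eq:o-def} guarantees $x\in \Int_\delta(\rect(\I,\d))$ for some $\d\in (\deltagridInt)^\I$ with $\I=\J^\ell\cup\{\r^\ell\}$. Since every $\i'\in \I\subset \Spt(\B^\ell)$ whereas $\u\in \B^\ell\setminus \Spt(\B^\ell)$, we have $\langle \u,\i'\rangle\le \e w$ and hence $|\langle y-x,\i'\rangle|\le 2M\e\ll \delta M$ by~\dualref{p6}; combined with $y\notin \out^\ell$ (so $y$ lies in the $\delta$-interior of \emph{some} cube in $\I$), this forces $y$ into the same cube as $x$.

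For the inductive step $j\to j+1$ with $j+1\le z$, I first observe that the chain structure of $\nu_{\ell+z,z}$ together with $j+1\le z$ forces $\wt{x},\wt{y}\in T_*^{\ell+j}$, since the chain must continue through at least one more application of $\tau^{\ell+j+1}$ whose range is $T_*^{\ell+j}$. By the inductive hypothesis, $\wt{x},\wt{y}$ lie in a common cube $R=\rect(\J^{\ell+j}\cup\{\r^{\ell+j}\},\d)$, and by Lemma~\dualref{lm:t-star-minus-tau-sp} and the outlier construction they are in $\tau^{\ell+j+1}(S^{\ell+j+1})$. Lemma~\dualref{lm:same-rect} then yields a common $k'\in [K/2]$ and $u,v\in S^{\ell+j+1}_{k'}$ with $\tau^{\ell+j+1}(u)=\wt{x}$, $\tau^{\ell+j+1}(v)=\wt{y}$, and $\rho_{k'}(u),\rho_{k'}(v)\in \rect(\I'_{k'},\d')$ for a common $\d'$, where $\I'_{k'}=\J^{\ell+j+1}_{<k'}\cup \Ext_{k'}^{\ell+j+1}\cup \{\q_{k'}^{\ell+j+1}\}$. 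Let $\wt{x}',\wt{y}'\in T^{\ell+j+1}$ be the vertices with labels $u,v$; uniqueness at stage $j+1$ follows from injectivity of $\tau^{\ell+j+1}$ (Claim~\dualref{cl:pi-star-injective}) and distinctness of labels in $T^{\ell+j+1}$. Applying Lemma~\dualref{lm:tau-sparse} twice gives $\wt{x}=u+\sum_{\i\in \Sp(\B^{\ell+j})\cup \I'_{k'}}s_\i\i$ and $\wt{y}=v+\sum s'_\i\i$ with $\|s\|_\infty,\|s'\|_\infty\le 5M/w$; subtracting and combining with the inductive shift equation produces $\wt{y}'-\wt{x}'=\lambda\u+\sum_{\i\in \I^{\ell+j+1}}t'_\i\i$ where $\I^{\ell+j+1}\subset \I'_{k'}\cup \bigcup_{s=0}^{j}\Spt(\B^{\ell+s})$ (after trimming to the required cardinality). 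The coefficient bounds transfer directly: new coefficients in $\B^{\ell+j+1}$ inherit $\le 10M/w$; those in $\B^{\ell+j}$ combine the old $\le 10M/w$ with a new $\le 10M/w$ for a total $\le 20M/w$; and coefficients in $\B^{<\ell+j}$ are unchanged at $\le 20M/w$.

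For condition (3) at stage $j+1$, I use the recursive definition of $\out$ to note that $x\notin \out^\ell$ propagates backward along the chain (via $\nu_{\ell+1,1}(\out^{\ell+1})\subseteq \out^\ell$ and iteration) to give $\wt{x}',\wt{y}'\notin \out^{\ell+j+1}$, so each lies in the $\delta$-interior of some cube in $\I''=\J^{\ell+j+1}\cup\{\r^{\ell+j+1}\}$; the remaining task is to show these cubes coincide. By Definition~\dualref{def:ext} and the disjointness of $\Sp(\B^{\ell+j}), \Ext_{k'}^{\ell+j+1}, \{\q_{k'}^{\ell+j+1}\}$ from $\J^{\ell+j+1}$ and $\{\r^{\ell+j+1}\}$, the intersection $\I''\cap \I^{\ell+j+1}$ reduces to exactly $\J^{\ell+j+1}_{<k'}$. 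For $\i'\in \I''\setminus \I^{\ell+j+1}$ all relevant dot products are $\le \e w$, giving $|\langle \wt{y}'-\wt{x}',\i'\rangle|\le |\lambda|\e w+|\I^{\ell+j+1}|\cdot 20M\cdot \e\ll \delta M$ by~\dualref{p5}--\dualref{p6}, so the $\delta$-interior condition forces both into the same $\Delta$-block in direction $\i'$. The main obstacle is the case $\i'\in \J^{\ell+j+1}_{<k'}$, where $\langle \i',\i'\rangle =w$ defeats the dot-product argument; here the resolution is that Lemma~\dualref{lm:same-rect} places $\langle \rho_{k'}(u),\i'\rangle,\langle \rho_{k'}(v),\i'\rangle$ in the common block $[\d'_{\i'},\d'_{\i'}+\Delta)M$, and since $u-\rho_{k'}(u)=\lambda_1\q_{k'}^{\ell+j+1}$ with $|\lambda_1|\le M/w$ (Lemma~\dualref{lm:densification}, {\bf (3)}) and $\q_{k'}^{\ell+j+1}\ne \i'$, the quantities $\langle u,\i'\rangle,\langle v,\i'\rangle$ differ from this block by at most $\e M$; combined with $\wt{x}',\wt{y}'\notin \out^{\ell+j+1}$ (so both are at distance $\ge \delta M$ from any $\Delta$-boundary) and $\e\ll \delta$, this places both in the same $\Delta$-block in direction $\i'$. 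This finishes cube containment and concludes the induction.
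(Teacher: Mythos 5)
Your proof is correct and follows essentially the same inductive strategy as the paper: base case via the outlier exclusion and small-dot-product perturbation, inductive step via Lemma~\ref{lm:same-rect-full}, Lemma~\ref{lm:tau-sparse-full}, Lemma~\ref{lm:densification-full}, then the case split on $\i'\in\J^{\ell+j+1}_{<k'}$ versus $\i'\notin\I^{\ell+j+1}$ for the cube-coincidence argument. A few minor points: the citation of Lemma~\ref{lm:t-star-minus-tau-sp-full} in the inductive step is extraneous (the structural containment $T_*^{\ell+j}\setminus\tau^{\ell+j+1}(S^{\ell+j+1})\subseteq\out^{\ell+j}$ that you actually use comes directly from the outlier construction~\eqref{eq:o-def-full}, not from the size bound in that lemma); and your claim that $\I''\cap\I^{\ell+j+1}$ equals $\J^{\ell+j+1}_{<k'}$ \emph{exactly} is only guaranteed to be a subset, but that suffices since your case split is on $\i'\in\I^{\ell+j+1}$ versus not. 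You are also more careful than the paper in one respect: you explicitly carry the $\lambda\cdot\u$ term through the inductive shift equation, whereas the paper's restatement of the inductive hypothesis in~\eqref{eq:uv-inf-293y5y-full} silently drops it (even though the lemma statement~\eqref{eq:02ht9g84tgtFE3r-full} retains it).
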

\begin{proof}
We first note that the choice of $z$ is unique by Lemma~\dualref{lm:nu-prop}, {\bf (2)}. We establish properties {\bf (1)}, {\bf (2)} and {\bf (3)} above by induction on $j\in \{0, 1,\ldots, z\}$. 

\noindent{\bf Base: $j=0$.} Note that $\nu_{\ell, 0}$ is the identity map, so we can take $\wt{x}=x, \wt{y}=y$. Property {\bf (1)} follows by construction. Property {\bf (2)} follows taking $\I^\ell=\emptyset$. Property {\bf (3}) follows since $y\in T^\ell\setminus \out^\ell$ by assumption and $\out^\ell$ includes all points that are too close to boundaries of rectangles in $\J^\ell\cup \{\r^\ell\}$ by construction (see~\dualeqref{eq:o-def}).

\noindent{\bf Inductive step: $j-1\to j$.}  We write $\J=\J^{\ell+j-1}$ and write  $\r=\r^{\ell+j-1}$ to denote the $(\ell+j-1)$-th compression index. We write $\J'=\J^{\ell+j}$, and write $\Ext_k=\Ext_k^{\ell+j}$ and $\q_k=\q_k^{\ell+j}$ to denote the extension and compression indices of phase $k$ at stage $\ell+j$. We let $\r'=\r^{\ell+j}$ denote the $(\ell+j)$-th compression index.  We define
\begin{equation}\duallabel{eq:09823y8g-i-def}
\I:=\J\cup \{\r\}
\end{equation}
to simplify notation. By the inductive hypothesis for $x, y\in T^\ell\setminus \out^\ell$ there exist $u, v\in T^{\ell+j-1}$ such that 
\begin{equation*}
\begin{split}
x&\in \nu_{\ell+j-1, j-1}(u)\\
&\text{~~~~~~and}\\
y&\in \nu_{\ell+j-1, j-1}(v)
\end{split}
\end{equation*}
together with $\d'\in (\deltagridInt)^{\I}$ such that 
\begin{equation}\duallabel{eq:tilde-rect-398rg834}
u\in \Int_\delta(\rect(\I, \d'))\text{~~and~~}v\in \Int_\delta(\rect(\I, \d')),
\end{equation}
and 
\begin{equation}\duallabel{eq:uv-inf-293y5y}
v=u+\sum_{\i\in \I^{\ell+j-1}} t_\i\cdot \i
\end{equation}
for integer coefficients $t$ with $|t_\i|\leq 20M/w$ for $\i\in \B^{<\ell+j-1}$ and $|t_\i|\leq 10M/w$ for $\i\in \B^{\ell+j-1}$. 

We assume that $j-1<z$, as otherwise there is nothing to prove.  Otherwise, since $j-1<z$, we have $\rect(\I, \d)\subset T_*^{\ell+j-1}$, as $\rect(\I, \d)$ cannot intersect both $T_*^{\ell+j-1}$ and $T^{\ell+j-1}\setminus T_*^{\ell+j-1}$ and the choice of $z$ is unique (by Lemma~\dualref{lm:nu-prop}, {\bf (2)}, as noted above). Since
$$
\out^\ell\supset \nu_{\ell+j-1, j-1}(T_*^{\ell+j-1}\setminus \tau^{\ell+j}(S^{\ell+j})),
$$
there exist $x', y'\in S^{\ell+j}$ such that $\tau^{\ell+j}(x')=u$, $\tau^{\ell+j}(y')=v$. By~\dualeqref{eq:tilde-rect-398rg834} together with Lemma~\dualref{lm:same-rect}, {\bf (1)}  and {\bf (2)}, there exists $k\in [K/2]$ such that $x', y'\in S^{\ell+j}_k$ and $\d'\in \I'$ such that 
\begin{equation}\duallabel{eq:we8g8g2asf}
\rho_k(x')\in \rect(\I', \d')\text{~~and~~}\rho_k(y')\in \rect(\I', \d').
\end{equation}
We will use the above fact shortly.

By Lemma~\dualref{lm:tau-sparse} there exist integer coefficients $t^x_\i, t^y_\i$ such that 
$$
x'=u+\sum_{\i\in \I'\cup \I} t^x_\i\cdot \i \text{~~~~with~~~} \|t^x\|_\infty\leq 5M/w
$$
and 
$$
y'=v+\sum_{\i\in \I'\cup \I} t^y_\i\cdot \i\text{~~~~with~~~} \|t^y\|_\infty\leq 5M/w,
$$
where we define 
\begin{equation}\duallabel{eq:09823y8g-ip-def}
\I'=\J'_{<k}\cup \Ext_k\cup \{\q_k\}
\end{equation}
to simplify notation. Combining this with~\dualeqref{eq:uv-inf-293y5y}, we get
\begin{equation}\duallabel{eq:029h4t9h490gh9hDNJDBF}
y'=x'+\sum_{\i\in \I^{\ell+j}} s_\i \cdot \i,
\end{equation}
where we let $\I^{\ell+j}:=\I^{\ell+j-1}\cup \I'\cup \I$ and let 
$$
s_\i=t_\i+t^y_\i-t^x_\i,
$$
extending $t_\i$ to be zero for $\i\not \in \I^{\ell+j-1}$, $t^x_\i$ to be zero for $\i \not \in \I'\cup \I$ and $t^y_\i$ to be zero for $\i\not \in \I'\cup \I$. Note that
\begin{equation*}
\begin{split}
\I^{\ell+j}&=\I^{\ell+j-1}\cup (\I'\cup \I)\\
&\subset \left(\left(\J^{\ell+j-1}_{<k}\cup \Ext_k^{\ell+j-1}\cup \{\q_k^{\ell+j-1}\}\right) \cup \bigcup_{s=0}^{j-2} \Spt(\B^{\ell+s})\right)\cup (\I'\cup \I)\text{~~~~(by the inductive hypothesis)}\\ 
&\subset \left(\J^{\ell+j}_{<k}\cup \Ext_k^{\ell+j}\cup \{\q_k^{\ell+j}\}\right) \cup \bigcup_{s=0}^{j-1} \Spt(\B^{\ell+s}).
\end{split}
\end{equation*}
The last transition uses the fact that 
$$
\I'=\J'_{<k}\cup \Ext_k\cup \{\q_k\}=\J^{\ell+j}_{<k}\cup \Ext_k^{\ell+j}\cup \{\q_k^{\ell+j}\}
$$ 
by~\dualeqref{eq:09823y8g-ip-def} and 
$$
\I=\J\cup \{\r\} \subset \Spt(\B^{\ell+j-1})
$$ 
by~\dualeqref{eq:09823y8g-i-def}.

We now upper bound the magnitude of the coefficients $s_\i$.  First, for $\I^{\ell+j}\setminus \B^{<\ell+j}\subseteq \J^{\ell+j}_{<k}\cup \Ext^{\ell+j}_k\cup \{\q_k^{\ell+j}\}$ one has 
$$
|s_\i|=|t^y_\i-t^x_\i|\leq |t^y_\i|+|t^x_\i|\leq 5M/w+5M/w\leq 10M/w
$$
as required. Now consider $\i\in \I^{\ell+j}\cap \B^{<\ell+j}$. First note that if $\i\in \B^{<\ell+j-1}$ then one has $t^x_\i=t^y_\i=0$ since $\B^{<\ell+j-1}\cap (\I\cup \I')=\emptyset$, and thus one has $|s_\i|=|t_\i|\leq 20M/w$ by the inductive hypothesis. Now consider 
$$
\i\in \I^{\ell+j}\cap \B^{\ell+j-1}
$$
In that case one has $|t_\i|\leq  10M/w$ by the inductive hypothesis, so 
$$
|s_\i|=|t_\i+t^y_\i-t^x_\i|\leq |t_\i|+|t^y_\i|+|t^x_\i|\leq 10M/w+5M/w+5M/w=20M/w
$$
as required, establishing properties {\bf (1)} and {\bf (2)}. We now turn to property {\bf (3)}. Define
\begin{equation}\duallabel{eq:09823y8g-it-def}
\wt{\I}=\J'\cup \{\r'\}
\end{equation}
to simplify notation, and  let $\a, \b\in (\deltagridInt)^{\wt{\I}}$ be such that $x'\in \rect(\wt{\I}, \a)$ and $y'\in \rect(\wt{\I}, \b)$, i.e. 
 \begin{equation}\duallabel{eq:ab-293y52y23FSDF}
\langle x', \i\rangle \pmod{M}\in [\a_\i, \a_\i+\Delta)\cdot M\text{~~and~~}\langle y', \i\rangle \pmod{M}\in [\b_\i, \b_\i+\Delta)\cdot M.
\end{equation}

We now show that in fact $\a=\b$. We start by noting that
\begin{equation}
\begin{split}
\wt{\I}\cap \I^{\ell+j}&=(\J'\cup \{\r'\})\cap \I^{\ell+j}\\
&\subseteq (\J'\cup \{\r'\})\cap (\B^{<\ell+j}\cup \J'_{<k}\cup \Ext_k\cup \{\q_k\})\\
&\subseteq  \J'_{<k}\\
\end{split}
\end{equation}
By~\dualeqref{eq:we8g8g2asf} we have for every $\i\in \wt{\I}\cap \I^{\ell+j}\subseteq \J'$
\begin{equation}\duallabel{eq:intersection-0293yty23t-0}
\langle \rho_k(x'), \i\rangle \pmod{M}\in [\d'_\i, \d'_\i+\Delta)\cdot M\text{~~and~~}\langle \rho_k(y'), \i\rangle \pmod{M}\in [\d'_\i, \d'_\i+\Delta)\cdot M.
\end{equation}
At the same time by Lemma~\dualref{lm:densification}, {\bf (3)},
$$
\rho_k(x')=x'+\lambda_x \cdot \q_k\text{~and~}\rho_k(y')=y'+\lambda_y \cdot \q_k
$$
for some integers $\lambda_x, \lambda_y$ bounded by $M/w$ in absolute value, which implies, since for every $\i\in \wt{\I}\cap \I^{\ell+j}\subseteq \J'$ one has $\langle \q_k, \i\rangle<\e \cdot w$, that for every such $\i$
\begin{equation}\duallabel{eq:intersection-0293yty23t-1}
|\langle \rho_k(x'), \i\rangle-\langle x', \i\rangle|<\e M\text{~~and~~}|\langle \rho_k(y'), \i\rangle - \langle y', \i\rangle|<\e M.
\end{equation}
Finally, since $x, y\in T^\ell\setminus \out^\ell$  by assumption, we have  $x'\not \in \partial \rect(\I', \c)$ and $y'\not \in \partial \rect(\I', \c)$  for any $\c$, ~\dualeqref{eq:intersection-0293yty23t-0} and ~\dualeqref{eq:intersection-0293yty23t-1} above, together with the fact that $\e<\delta$ by ~\dualref{p6}, imply that for every $\i\in \wt{\I}\cap \I^{\ell+j}\subseteq \J'$
\begin{equation*}
\langle x', \i\rangle \pmod{M}\in [\d'_\i, \d'_\i+\Delta)\cdot M\text{~~and~~}\langle y', \i\rangle \pmod{M}\in [\d'_\i, \d'_\i+\Delta)\cdot M.
\end{equation*}
Thus, for $\a$ and $\b$ from~\dualeqref{eq:ab-293y52y23FSDF} we have $\a_\i=\b_\i$ for all $\i\in \wt{\I}\cap \I^{\ell+j}$. At the same time for every $\i\in \wt{\I}\setminus  \I^{\ell+j}$ one has by~\dualeqref{eq:029h4t9h490gh9hDNJDBF}
\begin{equation}\duallabel{eq:ig83gr8gwf3wrdSD}
\begin{split}
\left|\langle x', \i\rangle -\langle y', \i\rangle\right|&=\left|\sum_{\h\in \I^{\ell+j}} s_\h \cdot \langle \h, \i\rangle\right|\\
&\leq |\I^{\ell+j}| \cdot (20M/w)\cdot \max_{\h\in \I^{\ell+j}} \langle \h, \i\rangle\\
&\leq |\I^{\ell+j}| \cdot (20M/w)\cdot \e\cdot w\\
&\leq 40K^3 \e\cdot M\\
&< \delta \cdot M.
\end{split}
\end{equation}
The second transition above uses the fact that $\|s\|_\infty\leq 20M/w$, The forth transition uses the fact that 
$$
\left|\I^{\ell+j}\right|\leq 2K^2\cdot L\leq 2K^3,
$$
as 
$$
\I^{\ell+j}\subseteq \bigcup_{\ell\in [L]} \left(\J^\ell\cup \{\r^\ell\}\cup \bigcup_{k\in [K]} (\Ext^\ell_k\cup \{\q^\ell_k\})\right).
$$
The fifth transition in~\dualeqref{eq:ig83gr8gwf3wrdSD} uses the fact 
\begin{equation*}
\begin{split}
40K^3 \e&\leq (40 K^3 \delta)\cdot \delta\text{~~~~~~~~~~~~~~~~~~~~~~(by~\dualref{p6})}\\
&\leq (40 K^3 \cdot K^{-100K^2})\cdot \delta\text{~~~~~(by~\dualref{p3} and~\dualref{p5})}\\
&<\delta
\end{split}
\end{equation*}
since $K$ is larger than a constant.

Now recall that $x, y\in T^\ell\setminus \out^\ell$, and in particular by~\dualeqref{eq:o-def}
$$
x, y\not \in \bigcup_{\d\in (\deltagrid)^{\wt{\I}}} \partial \rect(\wt{\I}, \d).
$$
Combining this with~\dualeqref{eq:ab-293y52y23FSDF} and~\dualeqref{eq:ig83gr8gwf3wrdSD} yields $\a_\i=\b_\i$, as required. Thus, we get
$$
x', y'\in \rect(\wt{\I}, \a),
$$
which establishes property {\bf (3)} and completes the proof of the inductive step.
\end{proof}

\begin{lemma}\duallabel{lm:cut-structure}
For every $\ell \in [L]$, every $x\in T^\ell\setminus \out^\ell$, $y\in T^\ell\setminus \out^\ell$ (where $\out^\ell$ is defined in~\dualeqref{eq:o-def}) the following conditions hold. If 
$$
x\in \nu_{\ell+j, j}(T^{\ell+j}\setminus \Ext_\delta(T_*^{\ell+j})),
$$ 
and $y=x+\lambda \cdot\u, |\lambda|\leq 2M/w,$ for some $\u\in \B^\ell\setminus \Spt(\B^\ell)$, then 
$$
y\in \nu_{\ell+j, j}(T^{\ell+j}\setminus T_*^{\ell+j}).
$$
\end{lemma}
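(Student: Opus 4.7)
The plan is to apply Lemma~\dualref{lm:traceback} to trace $x$ and $y$ back to common predecessors $\wt x, \wt y \in T^{\ell+j}$ that sit in the $\delta$-interior of a single $\Delta$-cube $R = \rect(\I, \d)$ with $\I = \J^{\ell+j}\cup\{\r^{\ell+j}\}$, and then argue from the grid structure of this cube that $\wt y\in T^{\ell+j}\setminus T_*^{\ell+j}$; pushing $\wt y$ forward through $\nu_{\ell+j, j}$ then yields the claim.

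First I would invoke Lemma~\dualref{lm:traceback} with $z = j$. This is legitimate because $T^{\ell+j}\setminus \Ext_\delta(T_*^{\ell+j})\subseteq T^{\ell+j}\setminus T_*^{\ell+j}$, so the hypothesis on $x$ immediately yields $x\in \nu_{\ell+j, j}(T^{\ell+j}\setminus T_*^{\ell+j})$, while the assumptions $x, y\in T^\ell\setminus \out^\ell$, $y = x + \lambda\cdot\u$ with $|\lambda|\leq 2M/w$, and $\u\in \B^\ell\setminus \Spt(\B^\ell)$ are exactly those of our lemma. Property~{\bf (1)} of Lemma~\dualref{lm:traceback} produces the unique $\wt x, \wt y$, and property~{\bf (3)} produces the cube $R$ with $\wt x, \wt y\in \Int_\delta(R)$. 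Uniqueness of $\wt x$ is crucial: it promotes my hypothesis on $x$ to the sharper statement $\wt x\in T^{\ell+j}\setminus \Ext_\delta(T_*^{\ell+j})$, giving me a usable foothold on the preimage.

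The crux, and the step I expect to require the most care, is the following dichotomy for $R$: either $R\subseteq T_*^{\ell+j}$, or else $\Int_\delta(R)\cap \Ext_\delta(T_*^{\ell+j}) = \emptyset$. Recall that $T_*^{\ell+j}$ is cut out by the constraints $\langle y, \j_s^{\ell+j}\rangle \pmod M \in [0, 1-\frac{1}{K-s})\cdot M$ for $s\in [K/2]$, and all of $\j_0^{\ell+j}, \ldots, \j_{K/2-1}^{\ell+j}$ lie in $\I$. By~\dualref{p3} each threshold $1 - \frac{1}{K-s}$ is a multiple of $\Delta$, and $\d_{\j_s^{\ell+j}}\in \deltagridInt$ is also a multiple of $\Delta$; hence in each coordinate $\j_s^{\ell+j}$ the side $[\d_{\j_s^{\ell+j}}, \d_{\j_s^{\ell+j}}+\Delta)$ of $R$ lies entirely on one side of the threshold. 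Either every constraint of $T_*^{\ell+j}$ is satisfied throughout $R$, or some coordinate $s$ has $\d_{\j_s^{\ell+j}}\geq 1 - \frac{1}{K-s}$; in the latter case every $w\in \Int_\delta(R)$ satisfies $\langle w, \j_s^{\ell+j}\rangle \pmod M\in [1-\frac{1}{K-s}+\delta, 1-\delta)\cdot M$, using $\delta < \Delta$ from~\dualref{p5}, and the wraparound form of Definition~\dualref{def:exterior} then places $w$ outside $\Ext_\delta(T_*^{\ell+j})$.

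To finish, I combine the dichotomy with $\wt x\in \Int_\delta(R)\setminus \Ext_\delta(T_*^{\ell+j})$: the first alternative would give $\Int_\delta(R)\subseteq T_*^{\ell+j}\subseteq \Ext_\delta(T_*^{\ell+j})$ and is therefore ruled out, so $R\cap T_*^{\ell+j} = \emptyset$. In particular $\wt y\in R\subseteq T^{\ell+j}\setminus T_*^{\ell+j}$, and hence $y\in \nu_{\ell+j, j}(\wt y)\subseteq \nu_{\ell+j, j}(T^{\ell+j}\setminus T_*^{\ell+j})$, as required. Once the grid dichotomy is pinned down, everything else is immediate bookkeeping; the only nontrivial ingredient is the reduction to a common cube provided by Lemma~\dualref{lm:traceback}.
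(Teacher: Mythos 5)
Your proof is correct and, at the level of Lemma~\ref{lm:cut-structure-full} itself, a bit cleaner than the paper's. Both approaches lean on Lemma~\ref{lm:traceback-full} (with $z=j$) as the workhorse, but you diverge in the final deduction. The paper uses property~{\bf (2)} — the explicit sparse shift $\wt y = \wt x + \lambda\u + \sum_{\i\in\I^{\ell+j}}t_\i\i$ with $\|t\|_\infty = O(M/w)$ — to bound $\left|\langle\wt y, \j_s\rangle-\langle\wt x, \j_s\rangle\right| < \delta M$ for the specific coordinate $\j_s$ on which $\wt x$ is far from $T_*^{\ell+j}$, after checking that $s\geq k$ so that $\j_s\notin\I^{\ell+j}$ and the cross terms stay small. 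You instead use property~{\bf (3)} — $\wt x, \wt y$ land in the $\delta$-interior of the same $\Delta$-cube $R=\rect(\I,\d)$ — and argue from the $\Delta$-grid alignment of both $\d$ and the $T_*$-thresholds $1-\tfrac{1}{K-s}$ (via~\dualref{p3}) that $R$ either lies entirely inside $T_*^{\ell+j}$ or misses it (and even misses $\Ext_\delta(T_*^{\ell+j})$ on the interior). Since $\wt x\in\Int_\delta(R)\setminus\Ext_\delta(T_*^{\ell+j})$, the first alternative is impossible, so $\wt y\in R\subseteq T^{\ell+j}\setminus T_*^{\ell+j}$ and pushing forward through $\nu_{\ell+j,j}$ finishes. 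Your route avoids the index bookkeeping ($s\geq k$, $\j_s\notin\I^{\ell+j}$, the inner-product bound) because the common-cube statement already encapsulates it; the trade-off is that you are relying on property~{\bf (3)}, which is itself proved in Lemma~\ref{lm:traceback-full} using a variant of the same inner-product estimate, so the work is repackaged rather than truly saved. One small wording point: your dichotomy as stated, ``$R\subseteq T_*^{\ell+j}$ or $\Int_\delta(R)\cap\Ext_\delta(T_*^{\ell+j})=\emptyset$,'' does not by itself yield the $R\cap T_*^{\ell+j}=\emptyset$ you use at the end; that stronger conclusion comes from the underlying grid observation ($\d_{\j_{s_0}}\geq 1-\tfrac1{K-s_0}$ for some $s_0$), which your derivation does establish — just make the final sentence cite the grid fact rather than the exterior-intersection claim.
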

\begin{proof} 
We invoke Lemma~\dualref{lm:traceback} with $z=j$, and let $\wt{x}$ and $\wt{y}$ denote the resulting points (we use $z=j$ in what follows), and let $k\in [K/2]$ be such that  
\begin{equation}\duallabel{eq:x-tilde-239y8ggfdfugxufGF}
\wt{x}\in (T^{\ell+z}\setminus \Ext_\delta(T^{\ell+z}_*))\cap T_k^{\ell+z}=T_k^{\ell+z}\setminus \Ext_\delta(T^{\ell+z}_*).
\end{equation}
Also recall that 
\begin{equation*}
T^{\ell+z}\setminus \Ext_\delta(T_*^{\ell+z})=\left\{ y\in [m]^n: \langle y, \j_s\rangle \pmod{M} \in \left(1-\frac{1}{K-s}+\delta, 1-\delta\right]\cdot M \text{~for~some~}s\in [K/2+1]\right\},
\end{equation*}
where we let $\J:=\J^{\ell+z}$ to simplify notation.  Since 
\begin{equation*}
T_k^{\ell+z}=\left\{ y\in [m]^n: \langle y, \j_s\rangle \pmod{M} \in \left[0, 1-\frac{1}{K-s}\right)\cdot M \text{~for~all~}s\in [k]\right\},
\end{equation*}
we have 
\begin{equation*}
\begin{split}
T_k^{\ell+z}\setminus \Ext_\delta(T_*^{\ell+z})&=\left\{ y\in T_k^{\ell+z}: \langle y, \j_s\rangle \pmod{M} \in \left(1-\frac{1}{K-s}+\delta, 1-\delta\right]\cdot M \right.\\
&\left.\text{~~~~~~~~~~~~~~~~~~~~~~~~~for~some~}s\in \{k, k+1,\ldots, K/2\}\right\}.
\end{split}
\end{equation*}
Since $\wt{x}\in T_k^{\ell+z}\setminus \Ext_\delta(T^{\ell+z}_*)$ by~\dualeqref{eq:x-tilde-239y8ggfdfugxufGF}, there exists $s\in \{k, k+1,\ldots, K/2\}$ such that 
$$
\langle \wt{x}, \j_s\rangle \pmod{M} \in \left(1-\frac{1}{K-s}+\delta, 1-\delta\right]\cdot M.
$$

At the same time using~\dualeqref{eq:ilj-def} and~\dualeqref{eq:02ht9g84tgtFE3r} we have for every $\j\in \J^{\ell+z}_{\geq k}$ and in particular for $\j=\j_s$
\begin{equation}\duallabel{eq:yprime-092yh99weggGEF}
\begin{split}
\left|\langle \wt{y}, \j\rangle-\langle \wt{x}, \j\rangle\right|&=\left|\lambda\cdot \langle \u, \j\rangle+\sum_{\i\in \I^{\ell+z}} t_\i\cdot \langle \i, \j\rangle\right|\\
&\leq \lambda\cdot \e \cdot w+\sum_{\i\in \I^{\ell+z}} t_\i\cdot \e \cdot w\\
&\leq \e \cdot M+ \e \|t\|_\infty \cdot |\I| \cdot w\\
&\leq \e (1+L\cdot K) \cdot M\\
&< \delta M.\\
\end{split}
\end{equation}
In the derivation above we first used the fact that $\u\neq \j$ since $\j\in \J^{\ell+z}\subset \Spt(\B^{\ell+z})$ and $\u\in \B^\ell\setminus \Spt(\B^\ell)$ (note that the two sets are disjoint regardless of the value of $z$). We also used the fact that $\j\not \in \I^{\ell+z}$, since  $\j\in \J^{\ell+z}_{\geq k}$ (note the crucial subindex $\geq k$) and by Lemma~\dualref{lm:traceback}, {\bf (2)} one has 
\begin{equation*}
\begin{split}
\I^{\ell+z}\cap \J^{\ell+z}_{\geq k} &\subset \left(\left(\J^{\ell+z}_{<k}\cup \Ext_k^{\ell+z}\cup \{\q_k^{\ell+z}\}\right) \cup \bigcup_{s=0}^{z-1} \Spt(\B^{\ell+s})\right)\cap \J^{\ell+z}_{\geq k}\\
&\subseteq \left(\J^{\ell+z}_{<k}\cup \Ext_k^{\ell+z}\cup \{\q_k^{\ell+z}\}\right)\cap \J^{\ell+z}_{\geq k}\\
&=\emptyset.\\
\end{split}
\end{equation*}

We then
used the fact that 
\begin{equation*}
\begin{split}
\e (1+L\cdot K)&\leq \e (1+K^2)\text{~~~~~~~~~~~~~~~~~~~~(by~\dualref{p4})}\\
&\leq \delta^2 (1+K^2)\text{~~~~~~~~~~~~~~~~~~(by~\dualref{p6})}\\
&\leq \delta \Delta^{100K^2}(1+K^2)\text{~~~~~~~(by~\dualref{p5})}\\
&\leq \delta K^{-100K^2}(1+K^2)\text{~~~~(by~\dualref{p3})}\\
&\leq \delta,
\end{split}
\end{equation*}
where the last transition is due to the fact that $K$ is larger than a constant. Then we have by~\dualeqref{eq:yprime-092yh99weggGEF} that
$$
\langle \wt{y}, \j_s\rangle \pmod{M} \in \left(1-\frac{1}{K-s}, 1\right]\cdot M.
$$
Thus, since $\wt{y}\in T_k^{\ell+z}$ by assumption, we have $\wt{y}\in T_k^{\ell+z}\setminus T_*^{\ell+z}\subseteq T^{\ell+z}\setminus T_*^{\ell+z}$ and therefore 
\begin{equation}\duallabel{eq:23ytttGYGFFkjft6246}
y=\nu_{\ell+j, j}(\wt{y})\in \nu_{\ell+j, j}(T^{\ell+z}\setminus T^{\ell+z}_*),
\end{equation}
as required.  \footnote{We note that the stronger implication that $\wt{y}\in T_k^{\ell+z}\setminus T_*^{\ell+z}$ as opposed to just $\wt{y}\in T^{\ell+z}\setminus T_*^{\ell+z}$ is not needed to conclude~\dualeqref{eq:23ytttGYGFFkjft6246}.}
\end{proof}

\begin{corollary}\duallabel{cor:cut-structure}
For every $\ell \in [L]$, every $k\in [K/2]$, $x\in T^\ell_k\setminus \out^\ell$, $y\in S^\ell_k\setminus \out^\ell$ (where $\out^\ell$ is defined in~\dualeqref{eq:o-def}) the following conditions hold. If 
$$
x\in \nu_{\ell+j, j}(T^{\ell+j}\setminus \Ext_\delta(T_*^{\ell+j})),
$$ 
and $y=x+\lambda \cdot\u, |\lambda|\leq 2M/w,$ for some $\u\in \B^\ell\setminus \Spt(\B^\ell)$, then 
$$
y\in \mu_{\ell+j, j}(T^{\ell+j}\setminus T_*^{\ell+j}).
$$
\end{corollary}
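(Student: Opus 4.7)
The plan is to lift $y$ to a vertex $y'\in T^\ell$ sharing its label, apply Lemma~\dualref{lm:cut-structure} to the pair $(x,y')$, and then push the conclusion back down through the $\downset$ map. This mirrors the reduction from Corollary~\ref{cor:cut-structure} to Lemma~\ref{lm:cut-structure} already carried out in the toy construction of Section~\ref{sec:toy-construction}, so the main task is to check that each ingredient of that reduction survives the transition to the nearly-orthogonal setting of the main construction.

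First I would use the defining relation~\dualeqref{eq:def-sk} of $S^\ell_k$ to produce a vertex $y'\in T^\ell$ with $y\delequal y'$. By the convention declared in the Remark following~\dualeqref{eq:o-union-def} we are free to view $\out^\ell$ as a subset of labels in $[m]^n$; the hypothesis $y\in S^\ell_k\setminus \out^\ell$ therefore transfers directly to $y'\in T^\ell\setminus \out^\ell$.

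Next I would verify the preconditions of Lemma~\dualref{lm:cut-structure} for the pair $(x,y')$. The relation $y=x+\lambda\cdot \u$ is a statement about labels in $[m]^n$, and since $y\delequal y'$ it is equivalent to $y'=x+\lambda\cdot \u$. Combining this with $|\lambda|\leq 2M/w$, $\u\in \B^\ell\setminus \Spt(\B^\ell)$, the established membership $x,y'\in T^\ell\setminus \out^\ell$, and the hypothesis $x\in \nu_{\ell+j,j}(T^{\ell+j}\setminus \Ext_\delta(T_*^{\ell+j}))$, every hypothesis of Lemma~\dualref{lm:cut-structure} is in place. Invoking the lemma gives
$$
y'\in \nu_{\ell+j,j}(T^{\ell+j}\setminus T_*^{\ell+j}).
$$
To finish I would push this back to the $S$ side. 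Since $y\in S^\ell$ and $y\delequal y'$, Definition~\dualref{def:downset} yields $y\in \downset^\ell(\{y'\})$, and monotonicity of $\downset^\ell$ together with Definition~\dualref{def:nu} gives
$$
y\in \downset^\ell\!\left(\nu_{\ell+j,j}(T^{\ell+j}\setminus T_*^{\ell+j})\right)=\mu_{\ell+j,j}(T^{\ell+j}\setminus T_*^{\ell+j}),
$$
which is the desired conclusion.

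The reduction itself is entirely mechanical once the label-based identification of $\out^\ell$ across the $T$ and $S$ sides of the bipartition is adopted, so there is no real obstacle at this stage: all of the substance sits in Lemma~\dualref{lm:cut-structure} (which in turn rests on the traceback Lemma~\dualref{lm:traceback}). The only point worth being explicit about is that the label identification must be applied twice — once to transfer the $\out^\ell$ hypothesis from $y$ to $y'$, and once to transfer the shift equation $y=x+\lambda\cdot\u$ from $y$ to $y'$.
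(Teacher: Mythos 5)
Your proof is correct and takes essentially the same route as the paper's: lift $y$ to $y'\in T^\ell$ with $y\delequal y'$, invoke Lemma~\dualref{lm:cut-structure} for the pair $(x,y')$, and descend to $y$ via $\downset^\ell$ and the identity $\mu_{\ell+j,j}=\downset^\ell\circ\nu_{\ell+j,j}$. The only difference is that you spell out the transfer of the $\out^\ell$ hypothesis and the shift relation $y=x+\lambda\cdot\u$ from $y$ to $y'$ through the label identification, which the paper's one-line proof leaves implicit.
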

\begin{proof}
Let $y'\in T^\ell_k$ be such that $y'\delequal y$ -- such a $y'$ exists by definition of $S_k$ (see~\dualeqref{eq:def-sk}).Then by Lemma~\dualref{lm:cut-structure} we have $y'\in \nu_{\ell+j, j}(T^{\ell+j}\setminus T_*^{\ell+j})$. Since
$$
\mu_{\ell+j, j}(T^{\ell+j}\setminus T_*^{\ell+j})=\downset^\ell(\nu_{\ell+j, j}(T^{\ell+j}\setminus T_*^{\ell+j})) 
$$
by Definition~\dualref{def:nu}, we get, again using the fact that $y'\delequal y$, that $y\in \mu_{\ell+j, j}(T^{\ell+j}\setminus T_*^{\ell+j})$, as required.
\end{proof}

\if 0
We prove 
\begin{lemma}
Any algorithm that commits to a subset of $O(|P|)$ edges after every round $\ell \in [L]$ and phase $k\in [K/2]$ and outputs a matching in the resulting subset of edges can output a matching of size upper bounded by 
$$
\left(\frac1{1+\ln 2}+o(1)\right) |P|
$$
in expectation.
\end{lemma}
\begin{proof} We have by Lemma~\dualref{lm:max-flow-min-cut} that the size of the maximum matching $M_{ALG}$ in $E'$ is upper bounded by  
\begin{equation}\duallabel{eq:m-alg-bound}
|M_{ALG}|\leq |E'\cap A_P\times A_Q|+\left(\frac1{1+\ln 2}+o(1)\right)|P|,
\end{equation} where $E'$ is the set of edges maintained by the algorithm.

For $\ell \in [L]$ and $k\in [K/2]$ let $E^\ell_k$ denote the subset of edges that ALG commits to after the edges induced by $T^\ell_k\cup S^\ell_k$ are presented. Recall that per~\dualeqref{eq:apm-def} and~\dualeqref{eq:bpm-def} one has
\begin{equation*}
\begin{split}
A_P&=\bigcup_{\substack{\ell\in [L]\\ \ell \text{~even}}} \nu_{\ell, *}(T^\ell\setminus \Ext_\delta(T_*^\ell))\\
B_Q&=\bigcup_{\substack{\ell\in [L]\\ \ell \text{~even}}} \mu_{\ell, *}(T^\ell\setminus \Ext_\delta(T_*^\ell)).
\end{split}
\end{equation*}

Furthermore, recall that 
$$
\nu_{\ell, *}(T^\ell\setminus T_*^\ell)=\bigcup_{\substack{0\leq j\leq \ell\\j\text{~even}}} \nu_{\ell, j}(T^\ell\setminus T_*^\ell),
$$ 
and 
\begin{equation}
\begin{split}
\mu_{\ell, *}(T^\ell\setminus T_*^\ell)&=\bigcup_{\substack{0\leq j\leq \ell\\j\text{~even}}} \mu_{\ell, j}(T^\ell\setminus T_*^\ell)\\
&=\bigcup_{\substack{0\leq j\leq \ell\\j\text{~even}}} \downset^{\ell-j}(\nu_{\ell, j}(T^\ell\setminus T_*^\ell)).
\end{split}
\end{equation}

By Lemma~\dualref{lm:cut-structure} we have for every $\ell \in [L]$, every $j=0,\ldots, \ell$ for every 
$$
x\in \nu_{\ell+j, j}(T^{\ell+j}\setminus \Ext_\delta(T_*^{\ell+j}))\subset T^\ell,
$$ 
if $y\in T^\ell$ coincides with $x$ on all special coordinates, then 
$$
y\in \nu_{\ell+j, j}(T^{\ell+j}\setminus T_*^{\ell+j}).
$$
This means in particular that for every $k\in [K/2]$ if $y\in S^{\ell+j}_k$, then 
$$
y\in \mu_{\ell+j, j}(T^{\ell+j}\setminus T_*^{\ell+j}).
$$

We now recall that the edge set $E^\ell_k$ contains edges in all directions in $\B^\ell_k\setminus \bigcup_{k=0}^{K/2} \Ext^\ell_k\cup \{q^\ell_k\}$. This means that the only special direction that edges in $\E^\ell_k$ modify is direction $\J_k$. At the same time, $\J_k$ is uniformly random given $E^\ell_k$, and hence the expected number of edges in direction $\J_k$ that belong to $E'$ is bounded by 	
$$
|E'|/|\B^\ell_k|\leq \frac{LK}{n} |E'|=O\left(\frac{LK}{n}\cdot |E'|\right).
$$
Since $L$ and $K$ are constants and $|E'|=O(|P|)$ by assumption, we get $O\left(\frac{LK}{n}\cdot |E'|\right)=O(\frac{LK}{n})\cdot |P|=o(1)\cdot |P|$. Putting this together with~\dualeqref{eq:m-alg-bound}, we get the result.
\end{proof}
\fi 

\section{Proof of main theorem (Theorem~\ref{thm:main})}\duallabel{sec:main-theorem}
We prove the main theorem (Theorem~\ref{thm:main}) in this section. First we define a hard distribution $\mathcal{D}$ on input graphs $\wh{G}$ in Section~\dualref{sec:distribution}. We then prove a lower bound on the size of the maximum matching in $\wh{G}$ and design a good upper bound on the size of the matching constructed by a small space algorithm in Section~\dualref{sec:malg}. Finally, we prove the main theorem in Section~\dualref{sec:main-proof}.

\subsection{Input distribution on graphs}\duallabel{sec:distribution}
We now define the hard distribution $\mathcal{D}$ on input graphs.  
First for $\ell\in [L]$ and $k\in [K/2]$ select the compression vector $\q^\ell_k$ arbitrarily from $\B^\ell_k$ and select the extension indices $\Ext^\ell_k$ arbitrarily from $\B^\ell_k$. Recall that for $k\in [K/2]$ we define
$$
\ac{\B}^\ell_k=\B^\ell_k\setminus (\{\q^\ell_k\}\cup \Ext^\ell_k)
$$
and let
$$
\ac{\B}^\ell_{K/2}=\B^\ell_{K/2}\setminus \{\r^\ell\}.
$$

\paragraph{Input distribution $\mathcal{D}$.} For every $\ell\in [L]$ and every $k\in [K/2+1]$ sample
$$
\J^\ell\sim \text{UNIF}\left(\ac{\B}^\ell_0\times \ac{\B}^\ell_1 \times \ldots\times \ac{\B}^\ell_{K/2}\right),
$$
i.e. for each $k\in [K/2+1]$ sample $\j^\ell_k$ independently and uniformly at random from $\ac{\B}^\ell_k$. Let $G^\ell=(S^\ell, T^\ell, E^\ell)$ be basic gadget graphs as defined in Section~\dualref{sec:g-ell}, and for every $\ell\in [L], \ell>0$, let 
$$
\tau^\ell: S^\ell\to T_*^{\ell-1}
$$ 
be the $\ell$-th glueing map as defined in Section~\dualref{sec:tau-ell}.

\if 0
Recall that as per~\dualeqref{eq:edges-ei-def} one has for every $k\in [K/2]$
\begin{equation*}
E^\ell_k=\bigcup_{\j\in \B^\ell_k} E^\ell_{k, \j},
\end{equation*}
where 
$$
E^\ell_{k, \j}=\bigcup_{y\in C_\j} (\text{line}_\j(y)\cap \Int_\delta(S_k^\j)) \times (\text{line}_\j(y) \cap (T_k\setminus T_k^\j)),
$$
and we write $S=S^\ell$ and $T=T^\ell$ to simplify notation.
\fi

\paragraph{Subsamplings $\wt{G}^\ell$ of individual gadgets $G^\ell$.} We now fix $\ell\in [L]$ and write $S=S^\ell$ and $T=T^\ell$ to simplify notation. For every $k\in [K/2]$, $\j\in \B^\ell_k$ and $y\in S_k$ let
\begin{equation}\duallabel{eq:x-ell-def}
X^\ell_{k, \j}(y)=\text{Bernoulli}(1-1/K)
\end{equation}
denote independent Bernoulli random variables conditioned on $\sum_{y\in S_k} X^\ell_{k, \j}(y)=\lceil (1-\frac1{K})|S_k|\rceil$ for all $k$ and $\j$.  We use these variables to sample edges of the graphs $G^\ell$ as follows. Define
\begin{equation*}
\wt{E}^\ell_{k, \j}=\bigcup_{y\in C_\j} \left\{u\in \text{line}_\j(y)\cap \Int_\delta(S_k^\j): X^\ell_{k, \j}(u)=1\right\} \times (\text{line}_\j(y) \cap (T_k\setminus T_k^\j)),
\end{equation*}
where $C_\j$ is a minimal $\j$-line cover, and let 
$$
\wt{E}^\ell_k=\bigcup_{\j\in \ac{\B}^\ell_k}  \wt{E}^\ell_{k, \j}.
$$
Comparing this to the definition of the edge set of $G^\ell$ in~\dualeqref{eq:edges-ei-def},  one observes that we subsample edges of $G^\ell$ in a somewhat dependent way -- the set $\wt{E}^\ell_k$ contains, for every direction $\j\in \B^\ell_k$ and $y\in C_\j$, a complete bipartite graph between vertices $u$ in $\text{line}_\j(y)\cap \Int_\delta(S_k^\j)$ that were sampled by $X^\ell_{k, \j}(u)$ and $\text{line}_\j(y) \cap (T_k\setminus T_k^\j)$.  The fact that randomness is provided by the vertices $u\in S_k$ as opposed to edges themselves will not be a problem since we are interested in concentration of matching size in $G^\ell$ and do not need to reason about arbitrary edge sets -- see proof of Lemma~\dualref{lm:subsampling-matchings-gl} below. Let 
$$
\wt{G}^\ell=(S^\ell, T^\ell, \wt{E}^\ell).
$$ 
As discussed above, $\wt{G}^\ell$ is a slightly subsampled version of $G^\ell$. This operation has the desired effect of making it hard to store edges of $\wt{G}^\ell$ (since the algorithm intuitively must remember which edge of $G^\ell$ was included and which was not), but at the same time barely changes matching size in $G^\ell$, as we now show.  

\begin{lemma}[Large matchings in subsampled gadgets $\wt{G}^\ell$]\duallabel{lm:subsampling-matchings-gl}
With probability at least $1-1/N$ for every $\ell\in [L], \ell>0$, there exists a matching of $S^\ell$ to $T^\ell\setminus T_*^\ell$ of size at least $(1-O(1/K))|S^\ell|$.
\end{lemma}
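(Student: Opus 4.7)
My plan is to deduce the lemma essentially for free from Lemma~\dualref{lm:large-matching}, by showing that the vertex-level subsampling only deletes a controllable number of edges from the matching it produces. Indeed, inspection of the proof of Lemma~\dualref{lm:large-matching} shows that the matching produced there decomposes as a disjoint union
$$
M^\ell \;=\; \bigsqcup_{k\in [K/2]} M^\ell_k,
$$
where each $M^\ell_k \subseteq E^\ell_{k,\J^\ell_k}$ uses \emph{only} edges in the single direction $\j=\J^\ell_k$, matches vertices of $\text{line}_{\J^\ell_k}(y)\cap\Int_\delta(S^{\ell,\J^\ell_k}_k)$ into $\text{line}_{\J^\ell_k}(y)\cap(T^\ell_k\setminus T^{\ell,\J^\ell_k}_k)$ line by line, and has size $|M^\ell_k|\ge (1-O(1/K))|S^\ell_k|$. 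In particular, each vertex of $S^\ell_k$ is used at most once by $M^\ell_k$.

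Next I would restrict to the subsampled graph by setting
$$
\wt{M}^\ell \;:=\; \bigsqcup_{k\in[K/2]} \bigl\{ (u,v)\in M^\ell_k \;:\; X^\ell_{k,\J^\ell_k}(u)=1 \bigr\}.
$$
By the definition of $\wt{E}^\ell_{k,\J^\ell_k}$, this is a subset of $\wt{E}^\ell$; by the previous paragraph it is a matching. The number of edges of $M^\ell_k$ that we discard is at most the number of vertices $u\in S^\ell_k$ with $X^\ell_{k,\J^\ell_k}(u)=0$, and by the conditioning built into~\dualeqref{eq:x-ell-def} this quantity is \emph{deterministically} equal to $\lfloor |S^\ell_k|/K\rfloor$, hence at most $|S^\ell_k|/K$.

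Summing over the $K/2$ phases, we obtain
$$
|\wt{M}^\ell| \;\ge\; |M^\ell| - \sum_{k\in [K/2]} \frac{|S^\ell_k|}{K} \;\ge\; (1-O(1/K))|S^\ell| - \frac{|S^\ell|}{K} \;=\; (1-O(1/K))|S^\ell|,
$$
which gives the claimed lower bound. Since the argument is deterministic conditionally on the $X^\ell_{k,\j}$'s obeying the prescribed coordinate sums, the conclusion actually holds with probability $1$, and in particular with probability at least $1-1/N$ uniformly over $\ell\in [L]$.

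\textbf{Main obstacle.} There is no real obstacle beyond bookkeeping: the only thing one must verify carefully is that the matching exhibited by Lemma~\dualref{lm:large-matching} uses edges in a single direction $\J^\ell_k$ per phase $k$ (so that a single Bernoulli variable $X^\ell_{k,\J^\ell_k}(u)$ controls whether a given matched vertex $u\in S^\ell_k$ survives). This is already explicit in the proof of Lemma~\dualref{lm:large-matching}, so the argument collapses to the one-line counting bound above.
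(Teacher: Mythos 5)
Your proof is correct, and it takes a genuinely different route from the paper's. The paper constructs the large matching $M^\ell$ from Lemma~\dualref{lm:large-matching}, sets $\wt{M}^\ell := M^\ell \cap \wt{E}^\ell$, observes that the indicators $\{Z_e\}_{e\in M^\ell}$ are negatively dependent (since each vertex appears at most once in $M^\ell$), and applies a Chernoff bound for negatively associated random variables to conclude $|\wt{M}^\ell|\geq (1-2/K)|M^\ell|$ with probability $1-N^{-2}$. You instead exploit two structural facts: (i) the matching from Lemma~\dualref{lm:large-matching} decomposes into phase-$k$ pieces $M^\ell_k$, each lying inside a single edge class $E^\ell_{k,\J^\ell_k}$, so that each matched vertex $u\in S^\ell_k$ survives into $\wt{E}^\ell$ iff the single indicator $X^\ell_{k,\J^\ell_k}(u)$ equals $1$; and (ii) the conditioning in~\dualeqref{eq:x-ell-def} fixes $\sum_{y\in S_k} X^\ell_{k,\j}(y)$ exactly, so the number of zeroes is deterministically $\lfloor |S^\ell_k|/K\rfloor$. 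This replaces the concentration argument with a clean deterministic count and yields the lemma's conclusion with probability $1$, which is stronger than the $1-1/N$ bound the paper proves. The paper's Chernoff route would work even if the matching did not decompose phase-by-phase (or even if the Bernoullis were not conditioned to have fixed Hamming weight), so it is more robust to changes in the construction; your route buys a sharper and more elementary argument by reading off the structure already guaranteed by Lemma~\dualref{lm:large-matching}. Both are valid.
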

\begin{proof}
Fix $\ell\in [L]$ (we will apply a union bound over all $\ell\in [L]$ later). We write $G=G^\ell, S=S^\ell, T=T^\ell, E=E^\ell, \wt{G}=\wt{G}^\ell, \wt{E}=\wt{E}^\ell$ to simplify notation. For every edge 
$e\in E$ define the random variable 
\begin{equation}\duallabel{eq:z-def}
Z_e=\left\{
\begin{array}{ll}
1&\text{~if~}e\in \wt{E}\\
0&\text{~o.w.}\\
\end{array}
\right.
\end{equation}
Note that for every matching $M\subseteq E$ random variables $\{Z_e\}_{e\in M}$ are negatively dependent, since a matching $M$ touches every vertex at most once.

By Lemma~\dualref{lm:large-matching}  applied to $G=(S, T, E)$ there exists a matching of a $(1-O(1/K))$ fraction of vertices in $S$ to $T\setminus T_*$ -- denote this matching by $M$.  Let 
$$
\wt{M}:=M\cap \wt{E}=\{e\in M: Z_e=1\}
$$
denote the subset of the edges of $M$ that are included in $\wt{E}$. Note that $\wt{M}$ is a matching between a subset of $S$ and a subset of $T\setminus T_*$, and we have 
$$
\expect[|\wt{M}|]=\sum_{e\in M} \prob[e\in \wt{E}]=\sum_{e\in M} \expect[Z_e]=(1-1/K) |M|
$$
by definition of $Z_e$ in~\dualeqref{eq:z-def} and the fact that every edge in $E$ is included in $\wt{E}$ with probability $1-1/K$ by~\dualeqref{eq:x-ell-def}.
Since the random variables $\{Z_e\}_{e\in M}$ are negatively dependent, we have by an application to the Chernoff bound (for negatively associated random variables)
$$
\prob[|\wt{M}|<(1-2/K)|M|]\leq \exp(-\Omega(|M|/K)).
$$
Since $M$ matches at least a constant fraction of $S$, we get that $|M|=\Omega(N/(KL))$, and therefore 
$$
\prob[|\wt{M}|<(1-2/K)|M|]\leq \exp(-\Omega(N/(KL)))\leq N^{-2},
$$
where $N$ is the number of vertices in our graph instance. Thus, for every fixed $\ell\in [L], \ell>0$, with probability at least $1-N^{-2}$ there exists a matching of at least a $1-O(1/K)$ fraction of $S^\ell$ to $T^\ell\setminus T_*^\ell$ in $\wt{G}^\ell$. The result of the lemma follows by a union bound over $\ell$.
\end{proof}

\paragraph{Defining the input graph $\wh{G}$.} We now define the graph $\wh{G}=(P, Q, \wh{E})$ arriving in the stream and specify the order of arrival. We have 
\begin{equation}\duallabel{eq:p-def}
P=\left(\bigcup_{\text{even~}\ell\in [L]} T^\ell\right)\cup \Upsilon_{odd}
\end{equation}
and 
\begin{equation}\duallabel{eq:q-def}
Q=S^0\cup \left(\bigcup_{\text{odd~}\ell\in [L]} T^\ell\right)\cup \Upsilon_{even},
\end{equation}
where 
\begin{equation}\duallabel{eq:upsilon-def}
\begin{split}
\Upsilon_{even}&=\left(\bigcup_{\text{even~}\ell\in [L], \ell>0} \{s\in S^\ell: \tau^\ell(s)\text{~is not defined}\}\right)\\
&\text{and}\\
\Upsilon_{odd}&=\left(\bigcup_{\text{odd~}\ell\in [L]} \{s\in S^\ell: \tau^\ell(s)\text{~is not defined}\}\right)
\end{split}
\end{equation}

We will show below that $|\Upsilon_{even}\cup \Upsilon_{odd}|=o(|P|)$.

\paragraph{Edge set $\wh{E}$ of $\wh{G}$.} 

\if 0
We first define
\begin{equation}\label{eq:tau-star-def}
\tau_*(x)=\left\lbrace
\begin{array}{ll}
\tau^\ell(x)&\text{~if~}x\in S^\ell\text{~for~}\ell>0\text{~and~}\tau^\ell(x)\text{~is defined}\\
x&\text{o.w.}.
\end{array}
\right.
\end{equation}
and define for every edge $e=(u, v)\in E^\ell, u\in S^\ell, v\in T^\ell,\ell\in [L]$
\begin{equation}\label{eq:tau-of-edge}
\tau_*(e)=(\tau_*(u), v).
\end{equation}
\fi

Before defining the edge set $\wh{E}$, it is useful to define  a natural extension of the glueing maps $\tau^\ell$, $\ell\in [L], \ell>0$, from vertices in $S^\ell$ to edges in $E^\ell$. For an edge $(s, t)\in E^\ell, s\in S^\ell, t\in T^\ell$ we define 
\begin{equation*}
\tau^\ell(e)=\left\lbrace
\begin{array}{ll}
(\tau^\ell(s), t)&\text{~if~}\tau^\ell(s)\neq \bot\\
(s, t)&\text{~o.w.}
\end{array}
\right.
\end{equation*}
Note that $\tau^\ell$ is injective on edges since it is injective on vertices in $S^\ell$ (by Claim~\dualref{cl:pi-star-injective}). The edge set $\wh{E}$ of $\wh{G}$ is defined as 
\begin{equation}\duallabel{eq:ehat-def}
\wh{E}=\bigcup_{\ell\in [L], \ell>0} \bigcup_{k\in [K/2]} \bigcup_{\j\in \ac{\B}^\ell_k} \wh{E}^\ell_{k, \j},
\end{equation}
where 
\begin{equation}\duallabel{eq:ehat-def-local}
\wh{E}^\ell_{k, \j}:=\tau^\ell(\wt{E}^\ell_{k, \j}).
\end{equation}

In other words,  for every edge $(s, t)\in \wt{E}^\ell_{k, \j}$  where $s\in S^\ell$ and $t\in T^\ell$, $\ell>0$: 
\begin{enumerate}
\item if $\tau^\ell(s)$ is not defined, add the edge $(s, t)$ to  $\wh{E}^\ell_{k, \j}$;
\item if $\tau^\ell(s)$ is defined, then add the edge $(\tau^\ell(s), t)$ to $\wh{E}^\ell_{k, \j}$.
\end{enumerate}

Note that we do not include the edges from $S^0$ to $T^0$ for convenience (since $\tau^0$ is not defined, this would complicate notation somewhat).

\paragraph{Ordering of edges of $\wh{G}$ in the stream.} The graph $\wt{G}$ is presented in the stream over $L$ {\em rounds} and $K/2$ {\em phases} as follows. For every $\ell\in \{0, 1, \ldots, L-1\}$, for every $k\in [K/2]$, the edges in $\tau_*(E^\ell_k)$ are presented in the stream; the ordering within $\tau^\ell(E^\ell_k)$ is arbitrary.
 
\begin{definition}[Ordering on $(\ell, k)$ pairs]
For $\ell\in [L]$ and $k\in [K/2+1]$ we write $(\ell', k')<(\ell, k)$ iff $\ell'<\ell$ or $\ell'=\ell$ but $k'<k$. 
\end{definition}

\begin{definition}\duallabel{def:glk}
For $\ell\in [L]$ and $k\in [K/2]$ we write 
$$
\wh{G}_{(\ell, k)}=(P, Q, \wh{E}^\ell_k),
$$ 
and define $\wh{G}_{(\ell, K/2)}=(P, Q, \emptyset)$, for convenience. We define
$$
\wh{G}_{<(\ell, k)}=\left(P, Q, \bigcup_{\substack{\ell'\in [L], k'\in [K/2]\\ (\ell', k')<(\ell, k)}} \wh{E}^{\ell'}_{k'}\right).
$$
\end{definition}

\begin{definition}\duallabel{def:lambda}
For every $\ell\in [L], k\in [K/2]$ define 
$\Lambda_{\ell, k}$ as follows. For $k\in [K/2]$ let $\Lambda_{(\ell, k)}=(X^\ell_k, \j^\ell_k)$. For $k=K/2$ let $\Lambda_{\ell, K/2}:=(\j^\ell_{K/2})$.  We write $\Lambda_{<(\ell, k)}=\left(\Lambda_{\ell', k'}\right)_{(\ell', k')<(\ell, k)}$.
\end{definition}

\begin{remark}\duallabel{rm:lambda}
Note that $\wh{G}_{\leq (\ell, k)}$ is fully determined by $\Lambda_{<(\ell, k)}$ and $X^\ell_k$, and $\j^\ell_k$ is uniformly random in $\ac{\B}^\ell_k$ conditioned on $\Lambda_{<(\ell, k)}$ and $X^\ell_k$. It is important to note here that the restriction of the glueing map $\tau^\ell$ to $S^\ell_{\leq k}$ (which we need to fully determine $\wh{G}_{\leq (\ell, k)}$) is, crucially, determined by $\J^\ell_{<k}$ -- see Remark~\dualref{rm:tau-prefix}.
\end{remark}

\subsection{Upper and lower bounds on matchings in $\wh{G}$}\duallabel{sec:malg}
We first prove
\begin{lemma}[Large matching in $\wh{G}$]\duallabel{lm:large-matching-g-prime}
With probability at least $1-1/N$ there exists a matching in $\wh{G}$ of size at least $(1-O(1/L))|P|$.
\end{lemma}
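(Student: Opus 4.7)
The plan is to build the matching in $\wh{G}$ by concatenating matchings obtained from each subsampled gadget $\wt{G}^\ell$ for $\ell>0$, using the glueing maps $\tau^\ell$ to transplant them into $\wh{G}$.

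First, I would apply Lemma~\dualref{lm:subsampling-matchings-gl} simultaneously to all $\ell\in[L], \ell>0$: with probability at least $1-L/N\geq 1-1/N$ (after adjusting the probability bound, as $L$ is constant), for each such $\ell$ there exists a matching $M^\ell\subseteq \wt{E}^\ell$ of size at least $(1-O(1/K))|S^\ell|$ between $S^\ell$ and $T^\ell\setminus T_*^\ell$. By Lemma~\dualref{lm:t-star-minus-tau-sp}, the glueing map $\tau^\ell$ is defined on all but $\delta^{1/4}|S^\ell|$ vertices of $S^\ell$, so discarding the edges of $M^\ell$ incident on vertices where $\tau^\ell$ is undefined gives a matching $\wh{M}^\ell\subseteq M^\ell$ of size at least $(1-O(1/K)-\delta^{1/4})|S^\ell|\geq (1-O(1/K))|S^\ell|$ (using $\delta\ll K^{-10}$ via \dualref{p5}). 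For each remaining edge $(s,t)\in \wh{M}^\ell$, the definition of $\wh{E}$ in~\dualeqref{eq:ehat-def} places the edge $(\tau^\ell(s),t)$ into $\wh{E}$, so $\tau^\ell(\wh{M}^\ell)\subseteq \wh{E}$.

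Next, I would verify that $\bigcup_{\ell>0} \tau^\ell(\wh{M}^\ell)$ is a matching in $\wh{G}$. Each individual $\tau^\ell(\wh{M}^\ell)$ is a matching since $\tau^\ell$ is injective (Claim~\dualref{cl:pi-star-injective}). For disjointness across different $\ell$, note that the image $\tau^\ell(S^\ell)\subseteq T_*^{\ell-1}$ while the $T$-side of $\wh{M}^\ell$ lies in $T^\ell\setminus T_*^\ell$. Hence the $S$-side of $\tau^{\ell+1}(\wh{M}^{\ell+1})$ sits in $T_*^\ell$, disjoint from the $T$-side of $\tau^\ell(\wh{M}^\ell)\subseteq T^\ell\setminus T_*^\ell$, and the $T$-sides themselves live in disjoint $T^\ell$'s. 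A quick parity check using~\dualeqref{eq:p-def} and~\dualeqref{eq:q-def} confirms that each such edge properly crosses the $P/Q$ bipartition of $\wh{G}$.

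Finally I would sum the sizes: by Lemma~\dualref{lm:size-bounds}, {\bf (2)}, $|S^\ell|\geq (1-\sqrt{\e})\cdot \tfrac12 N$, so
\[
\left|\bigcup_{\ell>0} \tau^\ell(\wh{M}^\ell)\right|=\sum_{\ell=1}^{L-1} |\wh{M}^\ell|\geq (L-1)(1-O(1/K))\tfrac{1}{2}N\geq (1-O(1/L))\tfrac{L}{2}N,
\]
where we use $L=\sqrt{K}$ from \dualref{p4} so that $O(1/K)\leq O(1/L)$. Since $|P|\leq \tfrac{L}{2}N+|\Upsilon_{odd}|=(1+o(1))\tfrac{L}{2}N$ (the $\Upsilon$ term is bounded by $\delta^{1/4}\cdot (L/2)N$ via Lemma~\dualref{lm:t-star-minus-tau-sp}), the constructed matching has size at least $(1-O(1/L))|P|$, as required. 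The main (very mild) obstacle is bookkeeping the various error terms $\sqrt{\e}, \delta^{1/4}, 1/K$ and confirming they are all absorbed into $O(1/L)$; this follows from the parameter setting~\dualref{p3}--\dualref{p6}.
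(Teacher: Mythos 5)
Your proposal is correct and follows essentially the same route as the paper: invoke Lemma~\dualref{lm:subsampling-matchings-gl}, push each matching $\wt{M}^\ell$ through $\tau^\ell$ into $\wh{G}$, argue disjointness across $\ell$ via the ranges $T_*^{\ell-1}$ versus $T^\ell\setminus T_*^\ell$, and compare against $|P|$. Two cosmetic remarks: Lemma~\dualref{lm:subsampling-matchings-gl} already gives the guarantee simultaneously for all $\ell$ with probability $1-1/N$, so the extra union bound over $\ell$ is redundant; and the paper does not discard edges where $\tau^\ell$ is undefined but instead keeps them (they land in $\Upsilon_{even}\cup\Upsilon_{odd}$ and still contribute to the matching), which is slightly less lossy but makes no difference to the final $O(1/L)$ bound.
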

\begin{proof}
By Lemma~\dualref{lm:subsampling-matchings-gl} with probability at least $1-N^{-1}$ for every $\ell\in [L], \ell>0$, there exists a matching of a $1-O(1/K)$ fraction of $S^\ell$ to $T^\ell\setminus T_*^\ell$ in $\wt{G}^\ell$.  We condition on this event.


Denote the corresponding matching in $\wt{G}^\ell$ by $\wt{M}^\ell$. Now recall that by construction of the graph $\wh{G}$ for every $(s, t)\in \wt{M}^\ell, s\in S^\ell, t\in T^\ell\setminus T_*^\ell$ one of the following two cases holds:
\begin{enumerate}
\item if $\tau^\ell(s)$ is not defined, and $(s, t)\in \wh{E}$;
\item if $\tau^\ell(s)$ is defined, and $(\tau^\ell(s), t)\in \wh{E}$.
\end{enumerate}
Since $\tau^\ell$ is injective, 
$\tau^\ell(\wt{M}^\ell)$ is a matching for every $\ell\in [L], \ell>0$. Now recall that $\wt{M}^\ell$ matches $S^\ell$ to $T^\ell\setminus T_*^\ell$. At the same time $\tau^\ell$ and maps $S^\ell$ to $T_*^{\ell-1}$, and whenever $\tau^\ell(s)$ is not defined, an edge $(s, t)\in S^\ell\times T^\ell$ is mapped to a separate set of vertices (see $\Upsilon_{even}$ and $\Upsilon_{odd}$ in~\dualeqref{eq:p-def} and~\dualeqref{eq:q-def}) in $\wh{G}$. so the union of these matchings still forms a matching in $\wh{G}$. 

We thus get that $\bigcup_{\ell\in [L], \ell>0} \tau^\ell(\wt{M}^\ell)$ is a matching of size at least
$$
(1-O(1/K))\sum_{\ell\in [L], \ell>0} |S^\ell|=(1-O(1/K))(L-1)\cdot (N/2)=(1-O(1/L))(L/2)\cdot N.
$$
In the above we used the fact that $|S^\ell|\geq \sum_{k\in [K/2]} (1-\sqrt{\e}) |T_0|/K\geq (1-\sqrt{\e}) N/2$ by Lemma~\dualref{lm:size-bounds}, {\bf (2)}, together with~\dualref{p5} and~\dualref{p6}, as well as the fact that $L\leq K$ by~\dualref{p4}. 

We now upper bound $|P|$. By~\dualeqref{eq:p-def}
\begin{equation*}
\begin{split}
|P|&=\left|\left(\bigcup_{\text{even~}\ell\in [L]} T^\ell\right)\cup \left(\bigcup_{\text{odd~}\ell\in [L], \ell>0} \{s\in S^\ell: \tau^\ell(s)\text{~is not defined}\}\right)\right|\\
&=\sum_{\text{even~}\ell\in [L]} |T^\ell|+\sum_{\text{odd~}\ell\in [L], \ell>0} |\{s\in S^\ell: \tau^\ell(s)\text{~is not defined}\}|\\
&\leq  (1+\e^{1/2}) (L/2) N+\sum_{\text{odd~}\ell\in [L], \ell>0} |\{s\in S^\ell: \tau^\ell(s)\text{~is not defined}\}|\\
&\leq  (1+\e^{1/2}) (L/2) N+\delta^{1/4} N\\
&\leq (1+O(1/K))(L/2) N.
\end{split}
\end{equation*}
The third transition is by Lemma~\dualref{lm:size-bounds}  the forth transition is by Lemma~\dualref{lm:t-star-minus-tau-sp} and the final transition is by~\dualref{p5} and~\dualref{p6}. Putting the two bounds together, we get  that there exists a matching of size at least
$(1-O(1/L))(L/2)\cdot N\geq (1-O(1/L)) |P|$ with probability at least $1-N^{-1}$,
as required.
\end{proof}

We now turn to upper bounding the performance of a small space streaming algorithm on our input distribution $\mathcal{D}$. Since the input is sampled from a distribution, we may assume by Yao's minimax principle that the streaming algorithm ALG is deterministic. Let ALG denote a deterministic streaming algorithm that uses $s$ bits of space and at the end of the stream outputs a matching $M_{ALG}$ in $\wh{G}$ such that 
\begin{equation*}
\prob_{\wh{G}\sim \mathcal{D}} \left[|M_{ALG}|\geq \left(\frac1{1+\ln 2}+\eta\right)|M_{OPT}|\right]\geq 3/4
\end{equation*}
for some positive $\eta\in (0, 1)$, where $M_{OPT}$ is a maximum matching in $\wh{G}$. Note that we are assuming that with probability at least $3/4$ both $M_{ALG}$ is a matching in $\wh{G}$ (i.e., in particular, the algorithm does not output edges that are not in $\wh{G}$) and the size of $M_{ALG}$ is large as above. At the same time by Lemma~\dualref{lm:large-matching-g-prime} one has
$$
\prob_{\wh{G}\sim \mathcal{D}} \left[|M_{OPT}|< (1-O(1/L))|P|\right]\leq N^{-1}.
$$ 
Putting the two bounds above together, we get
\begin{equation}\duallabel{eq:malg-large}
\begin{split}
\prob_{\wh{G}\sim \mathcal{D}} \left[|M_{ALG}|\geq \left(\frac1{1+\ln 2}+\eta-O(1/L)\right)|P|\right]\geq 1/2.
\end{split}
\end{equation}
In what follows we show that any algorithm that achieves~\dualeqref{eq:malg-large} must essentially remember, for many edges of $G^\ell, \ell\in [L]$, whether they were included in $\wt{G}^\ell$ and therefore in $\wh{G}$.

\paragraph{Upper bounding $|M_{ALG}|$.}  Let sets $\out^\ell\subset T^\ell$ of `outlier' vertices as defined in~\dualeqref{eq:o-def}, and let $\out=\bigcup_{\ell\in [L]} \out^\ell$ as in~\dualeqref{eq:o-union-def}.  
Define
\begin{equation}\duallabel{eq:apm-def}
\begin{split}
A_P&=\left(\bigcup_{\substack{\ell\in [L]\\ \ell \text{~even}}} \nu_{\ell, *}(T^\ell\setminus \Ext_\delta(T_*^\ell))\right)\setminus \out\\
A_Q&=\left(\bigcup_{\substack{\ell\in [L]\\ \ell\text{~odd}}} \nu_{\ell, *}(T^\ell\setminus \Ext_\delta(T_*^\ell))\right)\setminus \out.
\end{split}
\end{equation}
We define intermediate sets
\begin{equation}\duallabel{eq:bpm-def-prime}
\begin{split}
B'_Q&=\left(\bigcup_{\substack{\ell\in [L]\\ \ell\text{~even}}} \mu_{\ell, *}(T^\ell\setminus T_*^\ell)\right)\cup \out\\
B'_P&=\left(\bigcup_{\substack{\ell\in [L]\\ \ell \text{~odd}}} \mu_{\ell, *}(T^\ell\setminus T_*^\ell)\right)\cup \out,
\end{split}
\end{equation}
and then let
\begin{equation}\duallabel{eq:bpm-def}
\begin{split}
B_Q&=\bigcup_{\substack{\ell\in [L]\\ \ell\text{~even}}} \tau^\ell(B'_Q\cap S^\ell)\\
B_P&=\bigcup_{\substack{\ell\in [L]\\ \ell \text{~odd}}} \tau^\ell(B'_P\cap S^\ell).
\end{split}
\end{equation}

We have 
\begin{claim}\duallabel{cl:ab-disjoint}
$A_P\cap B_P=\emptyset$ and $A_Q\cap B_Q=\emptyset$.
\end{claim}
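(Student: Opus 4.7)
The proof will follow the pattern of the toy construction (Claim~\dualref{cl:ab-disjoint}'s analogue in Section~\ref{sec:toy-construction}) but with care taken for the outlier set $\out$ and the replacement of $T_*^\ell$ by $\Ext_\delta(T_*^\ell)$. I prove only the first statement $A_P\cap B_P=\emptyset$; the argument for $A_Q\cap B_Q=\emptyset$ is identical with the parities swapped.

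The first step is to rewrite $B_P$ in a form compatible with $A_P$. For an odd $\ell$, Definition~\dualref{def:nu} gives $\mu_{\ell,*}(T^\ell\setminus T_*^\ell)=\bigcup_{j\text{ even},0\le j\le\ell}\mu_{\ell,j}(T^\ell\setminus T_*^\ell)\subseteq\bigcup_{j\text{ even}}S^{\ell-j}$, so the non-outlier elements of $B'_P\cap S^{\ell'}$ (for odd $\ell'$) come exactly from $\mu_{\ell,j}(T^\ell\setminus T_*^\ell)$ with $\ell$ odd, $j$ even, and $\ell-j=\ell'$. Applying $\tau^{\ell'}$ and using Claim~\dualref{cl:nu-prop-basic} together with $\nu_{\ell,j+1}(U)=\tau^{\ell-j}(\mu_{\ell,j}(U))$, each such summand becomes $\nu_{\ell,j+1}(T^\ell\setminus T_*^\ell)$ with $\ell$ odd and $j+1$ odd. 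For the outlier part, I will check that $\tau^{\ell'}(\out\cap S^{\ell'})\subseteq \out$: indeed, if $x\in\out\cap S^{\ell'}$ and $x'\in T^{\ell'}$ is the (unique) point with $x\delequal x'$, then $x'\in\out^{\ell'}$ by the label convention, so $\tau^{\ell'}(x)\in \tau^{\ell'}(\downset^{\ell'}(\{x'\}))=\nu_{\ell',1}(\{x'\})\subseteq \nu_{\ell',1}(\out^{\ell'})\subseteq \out^{\ell'-1}\subseteq\out$, by the recursive definition~\dualeqref{eq:o-def}. Combining these observations yields
\begin{equation*}
B_P\subseteq \Bigl(\bigcup_{\substack{\ell\in[L]\\\ell\text{ odd}}}\bigcup_{\substack{1\le j\le\ell\\j\text{ odd}}}\nu_{\ell,j}(T^\ell\setminus T_*^\ell)\Bigr)\cup \out.
\end{equation*}

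The second step is to bound $A_P$ similarly. Since $T_*^\ell\subseteq \Ext_\delta(T_*^\ell)$ by Definition~\dualref{def:exterior}, we have $T^\ell\setminus\Ext_\delta(T_*^\ell)\subseteq T^\ell\setminus T_*^\ell$, and therefore
\begin{equation*}
A_P\subseteq \Bigl(\bigcup_{\substack{\ell\in[L]\\\ell\text{ even}}}\bigcup_{\substack{0\le j\le\ell\\j\text{ even}}}\nu_{\ell,j}(T^\ell\setminus T_*^\ell)\Bigr)\setminus \out.
\end{equation*}

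The final step is a direct comparison. In the expression for $A_P$ the summation is over pairs $(\ell,j)$ with both coordinates even, while in the expression for $B_P$ it is over pairs with both coordinates odd; in particular the two index sets are disjoint. Lemma~\dualref{lm:nu-prop}, {\bf (2)}, therefore gives that every pair of summands from the two unions has empty intersection, so the $\nu$-parts of $A_P$ and $B_P$ are disjoint. The remaining part of $B_P$ is contained in $\out$, which is explicitly removed from $A_P$. Hence $A_P\cap B_P=\emptyset$. The argument for $A_Q\cap B_Q=\emptyset$ proceeds verbatim after swapping the parities of $\ell$ throughout.

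The only mildly delicate step is verifying $\tau^\ell(\out\cap S^\ell)\subseteq \out$; everything else is bookkeeping and an appeal to the pairwise-disjointness Lemma~\dualref{lm:nu-prop}. The parity-based disjointness is exactly what motivated defining $\nu_{\ell,*}$ and $\mu_{\ell,*}$ as unions over \emph{even} $j$ in Definition~\dualref{def:nu}.
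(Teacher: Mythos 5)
Your proof follows the same approach as the paper's: decompose $A_P$ as a union of $\nu_{\ell,j}$ terms with $\ell,j$ both even (minus $\out$), decompose $B_P$ via the identity $\tau^{\ell-j}\circ\mu_{\ell,j}=\nu_{\ell,j+1}$ into $\nu_{\ell,j}$ terms with $\ell,j$ both odd (plus $\out$), and invoke Lemma~\ref{lm:nu-prop-full}, (2), with the parity observation. You are in fact more explicit than the paper about the two bookkeeping details the paper's proof silently elides: the inclusion $T^\ell\setminus\Ext_\delta(T_*^\ell)\subseteq T^\ell\setminus T_*^\ell$, and the inclusion $\tau^{\ell'}(\out\cap S^{\ell'})\subseteq\out$. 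I will flag one weak link in your justification of the latter: you write ``$x'\in\out^{\ell'}$ by the label convention,'' but $\out=\bigcup_{\ell''\in[L]}\out^{\ell''}$ is a single set of labels, so $\lbl(x)\in\out$ only gives $\lbl(x)\in\out^{\ell''}$ for \emph{some} $\ell''$, which need not equal $\ell'$; your chain $\tau^{\ell'}(\downset^{\ell'}(\{x'\}))\subseteq\nu_{\ell',1}(\out^{\ell'})\subseteq\out^{\ell'-1}$ only applies when $\ell''=\ell'$. The paper's own proof simply writes ``$=\ldots\cup\out$'' on the second line of its display for $B_P$ without addressing this, so this imprecision is inherited rather than introduced; still, it is worth being aware that the containment as stated would need a cleaner argument (or a sharper interpretation of how $\out$ is intersected with $S^{\ell'}$) to be fully airtight.
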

\begin{proof}
We prove the first claim (the proof of the second is analogous).
One has by ~\dualeqref{eq:apm-def}
\begin{equation}\duallabel{eq:9239ggdsgpffdf}
\begin{split}
A_P&=\left(\bigcup_{\substack{\ell\in [L]\\ \ell \text{~even}}} \nu_{\ell, *}(T^\ell\setminus T_*^\ell)\right)\setminus \out\\
&=\left(\bigcup_{\substack{\ell\in [L]\\ \ell \text{~even}}} \bigcup_{\substack{j=0\\j\text{~even}}}^\ell \nu_{\ell, j}(T^\ell\setminus T_*^\ell)\right)\setminus \out
\end{split}
\end{equation}
and by~\dualeqref{eq:bpm-def} and~\dualeqref{eq:bpm-def-prime} 
\begin{equation}\duallabel{eq:9239ggdsgpffdfsf}
\begin{split}
B_P&=\bigcup_{\substack{\ell\in [L]\\ \ell\text{~odd}}} \tau^\ell(B'_P\cap S^\ell)\\
&=\left(\bigcup_{\substack{\ell\in [L]\\ \ell\text{~odd}}} \bigcup_{\substack{j=0\\j\text{~even}}}^\ell\tau^{\ell-j}(\mu_{\ell, j}(T^\ell\setminus T_*^\ell))\right)\cup \out\\
&\subseteq \left(\bigcup_{\substack{\ell\in [L]\\ \ell\text{~odd}}} \bigcup_{\substack{j=0\\j\text{~even}}}^\ell\nu_{\ell, j+1}(T^\ell\setminus T_*^\ell)\right)\cup \out\\
\end{split}
\end{equation}
Disjointness now follows by Lemma~\dualref{lm:nu-prop}, {\bf (2)}, since the range of $(\ell, j)$ pairs in~\eqref{eq:9239ggdsgpffdf} is disjoint from the range in~\eqref{eq:9239ggdsgpffdfsf}.
\end{proof}

\begin{lemma}[Almost partition of $P$ and $Q$]\duallabel{lm:partition}
One has  
$$
|P\setminus (A_P\cup B_P)|=O(N)
$$
and 
$$
|Q\setminus (A_Q\cup B_Q)|=O(N)
$$
for sets $A_P, A_Q, B_P, B_Q$ defined in~\dualeqref{eq:apm-def} and~\dualeqref{eq:bpm-def}.
\end{lemma}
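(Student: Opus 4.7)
\begin{proofsketch}
The plan is to follow the structure of the analogous Lemma~\ref{lm:partition} from the toy construction of Section~\ref{sec:toy-construction}, with additional bookkeeping to handle the three sources of slack that are specific to the main construction: the $\Ext_\delta$--gap in the definition of $A_P,A_Q$, the set $\out$ of outlier vertices subtracted from $A_P,A_Q$ (resp.\ added to $B'_P,B'_Q$), and the auxiliary vertex sets $\Upsilon_{even},\Upsilon_{odd}$ introduced in~\dualeqref{eq:p-def} and~\dualeqref{eq:q-def}. We focus on the bound $|P\setminus(A_P\cup B_P)|=O(N)$; the bound for $Q$ is symmetric.

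Starting point: by~\dualeqref{eq:p-def} we have $P\subseteq\Upsilon_{odd}\cup\bigcup_{\text{even }\ell}T^\ell$, and $|\Upsilon_{odd}|\leq \delta^{1/4}N$ by Lemma~\dualref{lm:t-star-minus-tau-sp}, so it suffices to show that $T^{\ell'}\subseteq A_P\cup B_P$ up to an $O(N/L)$ leftover for every even $\ell'\in[L]$. For each such $\ell'$, apply Lemma~\dualref{lm:t-star-ell-rec}; adjoining $T^{\ell'}\setminus T_*^{\ell'}=\nu_{\ell'+0,0}(T^{\ell'+0}\setminus T_*^{\ell'+0})$ yields
\[
T^{\ell'}=Z^{\ell'}\cup\nu_{L-1,L-1-\ell'}(T_*^{L-1})\cup\bigcup_{j=0}^{L-1-\ell'}\nu_{\ell'+j,j}(T^{\ell'+j}\setminus T_*^{\ell'+j}).
\]
By Claim~\dualref{cl:o-size} we have $|Z^{\ell'}|\leq K^L\delta^{1/4}|P|$, and the term $\nu_{L-1,L-1-\ell'}(T_*^{L-1})$ has size at most $(\ln 2)^{L-1-\ell'}\cdot|T^{L-1}|$ (by Corollary~\dualref{cor:rect-nu-j} with $\H=\emptyset$); summing over $\ell'$ these contribute $O(N)$ in total.

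For the remaining terms, I will argue case by case on the parity of $j$, exactly as in the toy proof. If $j$ is even, then $\ell'+j$ is even and $\nu_{\ell'+j,j}(T^{\ell'+j}\setminus T_*^{\ell'+j})$ appears as a summand inside the definition of $\nu_{\ell,*}(T^\ell\setminus\Ext_\delta(T_*^\ell))$ with $\ell=\ell'+j$, except for the discrepancy between $T^\ell\setminus T_*^\ell$ and $T^\ell\setminus\Ext_\delta(T_*^\ell)$ and the subtraction of $\out$. The discrepancy is covered by $\nu_{\ell,j}(\Ext_\delta(T_*^\ell)\setminus T_*^\ell)$, which has size at most $K^j\cdot|\Ext_\delta(T_*^\ell)\setminus T_*^\ell|\leq K^L\sqrt\delta\cdot|T^\ell|$ by Lemma~\dualref{lm:rect-int-size} and the fact that $\nu_{\ell,j}$ increases cardinality by at most $K^j$; and the $\out$ contribution is bounded by $|\out|\leq\delta^{1/8}|T^0|$ by Claim~\dualref{cl:o-size}. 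If $j$ is odd, write $\ell=\ell'+j$ (odd) and use Claim~\dualref{cl:nu-prop-basic}, {\bf (1)}, iterated, to rewrite
\[
\nu_{\ell,j}(T^\ell\setminus T_*^\ell)=\tau^{\ell'+1}\bigl(\mu_{\ell,j-1}(T^\ell\setminus T_*^\ell)\bigr).
\]
Here $j-1$ is even and $\ell$ is odd, so $\mu_{\ell,j-1}(T^\ell\setminus T_*^\ell)\subseteq B'_P\cap S^{\ell'+1}$ by definition~\dualeqref{eq:bpm-def-prime}; applying $\tau^{\ell'+1}$ and invoking definition~\dualeqref{eq:bpm-def} shows that, apart from vertices of $S^{\ell'+1}$ on which $\tau^{\ell'+1}$ is undefined, this set lies inside $B_P$. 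The discarded vertices are a subset of $\{s\in S^{\ell'+1}:\tau^{\ell'+1}(s)=\bot\}$, of cardinality $\leq\delta^{1/4}|T^{\ell'+1}|$ by Lemma~\dualref{lm:t-star-minus-tau-sp}; by injectivity of $\nu_{\ell,j}$ (Lemma~\dualref{lm:nu-prop}, {\bf (1)}) these propagate forward as an $O(\delta^{1/4}N)$ loss.

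Collecting all error terms across $\ell'\in[L]$ and $j\in\{0,\dots,L-1-\ell'\}$: the $Z^{\ell'}$ terms contribute $O(L\cdot K^L\delta^{1/4}N)$, the innermost $\nu_{L-1,L-1-\ell'}(T_*^{L-1})$ terms contribute $O(N)$ (geometric series in $\ln 2$), the $\Ext_\delta$--shell discrepancies contribute $O(L^2\cdot K^L\sqrt{\delta}N)$, the $\out$ subtraction contributes $O(L\cdot\delta^{1/8}N)$, the undefined-$\tau$ vertices contribute $O(L^2\cdot\delta^{1/4}N)$, and $|\Upsilon_{odd}|=O(\delta^{1/4}N)$. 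By the parameter setting $L\leq\sqrt{K}$ (property~\dualref{p4}) together with $\delta\leq\Delta^{100K^2}\leq K^{-100K^2}$ (properties~\dualref{p3}, \dualref{p5}), every such combination $L^aK^L\delta^{c}$ with $c\geq 1/8$ and $a=O(1)$ is at most $K^{-1}$, so the total error is $O(N)$. The main obstacle is merely careful accounting; the genuinely new ingredient beyond the toy proof is absorbing the three error sources uniformly, but since each one was already bounded during the construction (Lemma~\dualref{lm:rect-int-size}, Lemma~\dualref{lm:t-star-minus-tau-sp}, Claim~\dualref{cl:o-size}), the verification reduces to the parameter inequalities listed above.
\end{proofsketch}
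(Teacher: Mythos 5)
Your proof is correct and follows essentially the same route as the paper's Appendix~\dualref{app:partition-lemma}: decompose each $T^\ell$ via Lemma~\dualref{lm:t-star-ell-rec}, identify the even-$j$ terms with $A_P$ (modulo the $\Ext_\delta$ shell $\Delta_0$ and $\out$) and the odd-$j$ terms with $B_P$, then absorb $Z^\ell$, the geometric tail $\nu_{L-1,\cdot}(T_*^{L-1})$, $\out$, and $\Upsilon$ into the $O(N)$ slack. You organize the double union by fixing the target $\ell'$ and splitting on the parity of $j$, whereas the paper fixes the source $\ell$ and splits on the parity of $\ell$, but these are the same partition of index pairs. Two small slips worth fixing: the bound $|Z^{\ell'}|\leq K^L\delta^{1/4}|P|$ comes from Lemma~\dualref{lm:t-star-ell-rec} itself, not from Claim~\dualref{cl:o-size} (which bounds $|\out|$); and in the odd-$j$ case the worry about ``discarded vertices'' is vacuous, since the identity $\nu_{\ell,j}(U)=\tau^{\ell'+1}(\mu_{\ell,j-1}(U))$ already applies $\tau^{\ell'+1}$ as a partial map on both sides, so the inclusion $\nu_{\ell,j}(T^\ell\setminus T_*^\ell)\subseteq B_P$ is exact with no loss term.
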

The proof of the lemma is given in Appendix~\dualref{app:partition-lemma}.

The following lemma is key to bounding the size of the vertex cover that we construct in Lemma~\dualref{lm:vertex-cover} to upper bound the size of $M_{ALG}$.
\begin{lemma}\duallabel{lm:sizeof-ab}
One has 
\begin{equation*}
\begin{split}
|B_Q|&\leq  \frac{L}{2}\cdot \frac{N}{2}\cdot\frac1{1+\ln 2}(1+O(1/K))\\
&\text{and}\\
|B_P|&\leq \frac{L}{2}\cdot \frac{N}{2}\cdot\frac1{1+\ln 2}(1+O(1/K))\\
\end{split}
\end{equation*}
\end{lemma}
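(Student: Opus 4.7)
\begin{proofsketch}
The plan is to essentially mirror the proof of the analogous Lemma~\ref{lm:sizeof-ab} from the toy construction in Section~\ref{sec:toy-construction}, with the one new wrinkle being that we must absorb the outlier set $\out$ into lower order terms via Claim~\dualref{cl:o-size}. I will only prove the bound on $|B_Q|$; the bound on $|B_P|$ is analogous.

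First I would peel off the glueing maps. By~\dualeqref{eq:bpm-def} one has $B_Q=\bigcup_{\ell\in [L], \ell\text{ even}} \tau^\ell(B'_Q\cap S^\ell)$. Since for even $\ell$ the range of $\tau^\ell$ is contained in $T_*^{\ell-1}\subseteq T^{\ell-1}$, and the sets $T^{\ell-1}$ are pairwise disjoint across $\ell$, while $\tau^\ell$ itself is injective by Claim~\dualref{cl:pi-star-injective}, we may write
$$
|B_Q|=\sum_{\ell \text{ even}}|\tau^\ell(B'_Q\cap S^\ell)|\leq \sum_{\ell \text{ even}}|B'_Q\cap S^\ell|\leq |B'_Q|,
$$
where the last inequality uses that the sets $S^\ell$ are pairwise disjoint across $\ell$. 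Then by~\dualeqref{eq:bpm-def-prime} and the union bound,
$$
|B'_Q|\leq \sum_{\ell \text{ even}} |\mu_{\ell,*}(T^\ell\setminus T_*^\ell)|+|\out|.
$$

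Next I would unfold the closure. By Definition~\dualref{def:nu},
$$
|\mu_{\ell,*}(T^\ell\setminus T_*^\ell)|\leq \sum_{\substack{0\leq j\leq \ell\\ j\text{ even}}} |\mu_{\ell,j}(T^\ell\setminus T_*^\ell)|,
$$
and Lemma~\dualref{lm:mu-ell-j} gives $|\mu_{\ell,j}(T^\ell\setminus T_*^\ell)|\leq (\ln 2+C/K)^{j}\cdot \tfrac12(1-\ln 2)\,|T_0|$. Summing the geometric series in $j$ with common ratio $(\ln 2+C/K)^2$ yields
\begin{equation*}
|\mu_{\ell,*}(T^\ell\setminus T_*^\ell)| \leq \frac{\tfrac12(1-\ln 2)}{1-(\ln 2+C/K)^2}\,|T_0|
= (1+O(1/K))\cdot \frac{\tfrac12(1-\ln 2)}{(1-\ln 2)(1+\ln 2)}\,N
= (1+O(1/K))\cdot \frac{N}{2}\cdot \frac{1}{1+\ln 2},
\end{equation*}
using the identity $1-(\ln 2)^2=(1-\ln 2)(1+\ln 2)$ together with $\e,\delta,1/L$ all being $O(1/K)$ by \dualref{p4}--\dualref{p6}.

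Finally I would assemble everything. Since there are exactly $L/2$ even values of $\ell$ in $[L]$, and by Claim~\dualref{cl:o-size} one has $|\out|\leq \delta^{1/8}|T_0|\leq K^{-L}\cdot N$ which is negligible against the main term,
$$
|B_Q|\leq \frac{L}{2}\cdot (1+O(1/K))\cdot \frac{N}{2}\cdot \frac{1}{1+\ln 2}+|\out|\leq \frac{L}{2}\cdot \frac{N}{2}\cdot \frac{1}{1+\ln 2}\cdot (1+O(1/K)),
$$
as required. The bound on $|B_P|$ is obtained by repeating the argument with the parity of $\ell$ swapped. There is no serious obstacle: all the work is packed into Lemma~\dualref{lm:mu-ell-j} and Claim~\dualref{cl:o-size}, and what remains is a bookkeeping exercise of peeling off $\tau^\ell$, applying a union bound, and summing a geometric series whose closed form produces precisely the $1/(1+\ln 2)$ factor.
\end{proofsketch}
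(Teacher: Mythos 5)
Your proof is correct and follows essentially the same route as the paper: peel off the $\tau^\ell$ maps since they do not increase cardinality, apply the union bound to reduce to the $\mu_{\ell,*}$ terms plus $|\out|$, invoke Lemma~\ref{lm:mu-ell-j-full} and sum the geometric series, and absorb $|\out|$ via Claim~\ref{cl:o-size-full}. The only cosmetic difference is that you frame the first step as an equality via disjointness of the $\tau^\ell$ images rather than a plain union bound, but since you only use the $\leq$ direction this is immaterial.
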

\begin{proof} We prove the bound for $B_Q$ (the proof for $B_P$ is analogous). By ~\dualeqref{eq:bpm-def} and~\dualeqref{eq:bpm-def-prime} we have
\begin{equation*}
\begin{split}
|B_Q|&=\left|\bigcup_{\substack{\ell\in [L]\\ \ell\text{~even}}} \tau^\ell(B'_P\cap S^\ell)\right|\\
&\leq \left|\bigcup_{\substack{\ell\in [L]\\ \ell\text{~even}}} B'_P\cap S^\ell\right|\\
&=\left|B'_P\right|\\
&\leq \left|\left(\bigcup_{\substack{\ell\in [L]\\ \ell\text{~even}}} \mu_{\ell, *}(T^\ell\setminus T_*^\ell)\right)\cup \out\right|\\
&\leq |\out|+\sum_{\substack{\ell\in [L]\\ \ell\text{~even}}} |\mu_{\ell, *}(T^\ell\setminus T_*^\ell)|\\
&\leq \delta^{1/8} |P|+\sum_{\substack{\ell\in [L]\\ \ell\text{~even}}} |\mu_{\ell, *}(T^\ell\setminus T_*^\ell)|,
\end{split}
\end{equation*}
where the last transition is by Claim~\dualref{cl:o-size}. Thus, it suffices to upper bound $|\mu_{\ell, *}(T^\ell\setminus T_*^\ell)|$ for $\ell\in [L]$.

Using Lemma~\dualref{lm:mu-ell-j} one has for every $\ell \in [L]$
\begin{equation*}
\begin{split}
\left|\mu_{\ell, *}(T^\ell\setminus T_*^\ell)\right|&=\left|\bigcup_{\substack{0\leq j\leq \ell\\j\text~{even}}} \mu_{\ell, j}(T^\ell\setminus T_*^\ell)\right|\\
&=\bigcup_{\substack{0\leq j\leq \ell\\j\text~{even}}} \left|\mu_{\ell, j}(T^\ell\setminus T_*^\ell)\right|\text{~~~~~~~~~~~~~~(since $\mu_{\ell, j}(T^\ell\setminus T_*^\ell)$ are disjoint for different $j$ by Lemma~\dualref{cl:pi-star-injective}, {\bf (3)})}\\
&\leq \sum_{\substack{0\leq j\leq \ell\\j\text~{even}}} (\ln 2+C/K)^j \frac1{2}(1-\ln 2)|T^\ell|\text{~~~~~~~(by Lemma~\dualref{lm:mu-ell-j})}\\
&\leq \frac1{2}(1-\ln 2)|T^\ell|\sum_{j\geq 0} (\ln 2+C/K)^{2j} \\
&= \frac{1-\ln 2}{2}|T^\ell|\cdot \frac1{1-(\ln 2+C/K)^2}\\
&=\frac{1}{2(1+\ln 2)}(1+O(1/K))|T^\ell|\\
\end{split}
\end{equation*}

We now note that $|T^\ell|=N$ for all $\ell\in [L]$. Summing the above over all even $\ell$ between $0$ and $L-1$ gives the required upper bound.
\end{proof}

\begin{lemma}\duallabel{lm:vertex-cover}
For every matching $M\subseteq \wh{E}$ one has 
$$
|M|\leq |M\cap (A_P\times (Q\setminus B_Q))|+\frac1{1+\ln 2}|P|+O(|P|/L).
$$
\end{lemma}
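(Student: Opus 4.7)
My plan is to essentially mirror the vertex-cover construction used in the proof of the toy Lemma~\ref{lm:vertex-cover}, adapted to the definitions of $A_P, A_Q, B_P, B_Q$ from \dualeqref{eq:apm-def} and \dualeqref{eq:bpm-def}. Concretely, I would build a vertex cover $C$ of the matching $M$ consisting of (i) one endpoint of each edge in $M\cap (A_P\times (Q\setminus B_Q))$, (ii) all vertices in $P\setminus A_P$, and (iii) all vertices in $B_Q$. Since $|M|$ is bounded by any vertex cover, the lemma will follow from bounding $|C|$.

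The first step is to verify that $C$ is indeed a vertex cover for $M$. Take an arbitrary edge $(p,q)\in M$ with $p\in P, q\in Q$. If $p\notin A_P$, it is covered by part (ii). Otherwise $p\in A_P$, and we split on whether $q\in B_Q$: if $q\in B_Q$ it is covered by part (iii); if $q\notin B_Q$ the edge lies in $A_P\times (Q\setminus B_Q)$ and is covered by part (i). The critical consistency check here is that $A_P$ and $B_P$ are disjoint (so putting $P\setminus A_P$ in the cover does not ``double count'' against the $A_P$ branch); this is exactly Claim~\dualref{cl:ab-disjoint}.

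The second step is the cardinality estimate. We have $|C|\leq |M\cap(A_P\times(Q\setminus B_Q))|+|P\setminus A_P|+|B_Q|$. To control $|P\setminus A_P|$ I would invoke Lemma~\dualref{lm:partition}, which gives $|P\setminus(A_P\cup B_P)|=O(N)$, hence $|P\setminus A_P|\leq |B_P|+O(N)$. Then Lemma~\dualref{lm:sizeof-ab} yields
\begin{equation*}
|B_P|+|B_Q|\leq 2\cdot \frac{L}{2}\cdot \frac{N}{2}\cdot \frac{1}{1+\ln 2}(1+O(1/K))=\frac{LN}{2}\cdot \frac{1}{1+\ln 2}(1+O(1/K)).
\end{equation*}
Finally, from \dualeqref{eq:p-def} and Lemma~\dualref{lm:size-bounds} one has $|P|=(1\pm O(1/K))\cdot LN/2$ (the boundary contribution from $\Upsilon_{odd}$ is at most $\delta^{1/4}N$ by Lemma~\dualref{lm:t-star-minus-tau-sp}, which is negligible), so $LN/2\leq (1+O(1/K))|P|$, and combining gives $|B_P|+|B_Q|\leq \frac{1}{1+\ln 2}|P|+O(|P|/K)$. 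The remaining additive $O(N)$ term is $O(|P|/L)$ since $|P|=\Theta(LN)$, and finally $O(1/K)=O(1/L^2)$ by \dualref{p4} so the error is absorbed into $O(|P|/L)$.

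I don't anticipate a genuine obstacle: the three ingredients (disjointness, almost-partition, size bound) are already established, and the argument is bookkeeping. The only mild subtlety is tracking the replacement of $|P|$ by $LN/2$ and the replacement of $O(1/K)$ by $O(1/L)$; this uses $L=\sqrt{K}$ from \dualref{p4}, so the constants line up cleanly. I would present the proof in two short blocks—``$C$ is a vertex cover'' and ``$|C|$ is small''—and otherwise keep it as compact as the toy version.
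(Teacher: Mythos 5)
Your proposal is correct and follows essentially the same approach as the paper's proof: the identical three-part vertex cover (one endpoint of each edge in $M\cap(A_P\times(Q\setminus B_Q))$, plus $P\setminus A_P$, plus $B_Q$) and the identical combination of Lemma~\ref{lm:partition-full} to bound $|P\setminus A_P|\le |B_P|+O(N)$ and Lemma~\ref{lm:sizeof-ab-full} to bound $|B_P|+|B_Q|$. The only cosmetic difference is that you explicitly track the $\Upsilon_{odd}$ contribution to $|P|$, which the paper glosses over — harmlessly, since $|P|\ge LN/2$ so the final inequality runs in the right direction; also note that the disjointness from Claim~\ref{cl:ab-disjoint-full} is not actually load-bearing in the case analysis, since a vertex cover may overlap without harm.
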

\begin{proof}
We exhibit a vertex cover of appropriate size for $M$. Specifically, we add to the vertex cover one endpoint of every edge in
$$
M\cap (A_P\times (Q\setminus B_Q)),
$$
as well as all vertices in $P\setminus A_P\approx B_P$ and $B_Q$. Note that this is indeed a vertex cover: 
every edge of $M$ either has an endpoint in $P\setminus A_P$, or belongs to $A_P\times (Q\setminus B_Q)$, or belongs to $A_P\times B_Q$, in which case it has an endpoint in $B_Q$.

The size of the vertex cover is 
\begin{equation}\duallabel{eq:23t77Gy8uGBBYVF}
\begin{split}
&|M\cap (A_P\times (Q\setminus B_Q))|+|P\setminus A_P|+|B_Q|\\
\leq &|M\cap (A_P\times (Q\setminus B_Q))|+|B_P|+|B_Q|+O(N),
\end{split}
\end{equation}
where we used Lemma~\dualref{lm:partition} to conclude that 
$$
|P\setminus A_P|\leq |B_P|+|P\setminus (A_P\cup B_P)|=|B_P|+O(N).
$$ 

By Lemma~\dualref{lm:sizeof-ab} we have 
\begin{equation*}
\begin{split}
|B_P|&\leq \frac{L}{2}\cdot \frac{N}{2}\frac1{1+\ln 2}(1+O(1/K))\\
&\text{and}\\
|B_Q|&\leq \frac{L}{2}\cdot \frac{N}{2}\frac1{1+\ln 2}(1+O(1/K)).
\end{split}
\end{equation*}
Putting the above together with~\dualeqref{eq:23t77Gy8uGBBYVF} and recalling that $L\leq \sqrt{K}$ by~\dualref{p4} and that
$$
|P|=\left|\bigcup_{\text{even~} \ell\in [L]} T^\ell\right|=L\cdot N/2
$$
gives the result.
\end{proof}

We now prove 
\begin{lemma}\duallabel{lm:special-edges}
For every matching $M\subseteq \wh{E}$ one has 
$$
M\cap (A_P\times (Q\setminus B_Q))\subseteq \bigcup_{\ell\in [L], k\in [K/2]} \tau^\ell(E^\ell_{k, \j^\ell_k}).
$$
\end{lemma}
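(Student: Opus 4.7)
The plan is to follow the structure of the toy analog (the corresponding Lemma~\ref{lm:special-edges} in Section~\ref{sec:toy-construction}): fix an arbitrary edge $(u,v) \in M \cap (A_P \times (Q \setminus B_Q))$, use the construction of $\wh{E}$ (see~\dualeqref{eq:ehat-def} and~\dualeqref{eq:ehat-def-local}) to write $(u,v) = \tau^{\ell'}((s,t))$ for some gadget-level edge $(s,t) \in \wt{E}^{\ell'}_{k,\j} \subseteq E^{\ell'}_{k,\j}$, and then argue that $\j = \j^{\ell'}_k$. By definition of $A_P$ (see~\dualeqref{eq:apm-def}) we have $u \notin \out$, and there is a unique even $\ell_u \in [L]$ with $u \in T^{\ell_u}$ and a unique even $j \geq 0$ such that $u \in \nu_{\ell_u+j, j}(T^{\ell_u+j} \setminus \Ext_\delta(T_*^{\ell_u+j}))$; uniqueness follows from Lemma~\dualref{lm:nu-prop}. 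Since the $P$-endpoint $u$ must sit either in $T^{\ell'}$ (when $\ell'$ is even) or in $T_*^{\ell'-1}$ (when $\ell'$ is odd), I split into Case~(b) $\ell' = \ell_u$ and Case~(a) $\ell' = \ell_u+1$. Supposing for contradiction that $\j \neq \j^{\ell'}_k$, Definition~\dualref{def:ext} combined with the disjointness of the blocks $\B^{\ell'}_0,\ldots,\B^{\ell'}_{K/2}$ gives $\j \in \B^{\ell'} \setminus \Spt(\B^{\ell'})$, exactly the hypothesis required to invoke Lemma~\dualref{lm:cut-structure}.

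In Case~(b), $u = t$ and $v = \tau^{\ell_u}(s)$ with $s \delequal x$ for a unique $x \in T^{\ell_u}$; Claim~\dualref{cl:line-size}, {\bf (1)}, applied to the line-based construction~\dualeqref{eq:edges-eij-def} gives $x = u + \lambda \j$ with $|\lambda| \leq 2M/w$. The outlier hypotheses of Lemma~\dualref{lm:cut-structure} hold because $u \in A_P$ gives $u \notin \out^{\ell_u}$, while $v \notin B_Q$ combined with injectivity of $\tau^{\ell_u}$ (Claim~\dualref{cl:pi-star-injective}) and the containment $\out \subseteq B'_Q$ from~\dualeqref{eq:bpm-def-prime} forces the label of $s$ (hence of $x$) out of $\out$. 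The lemma then places $x \in \nu_{\ell_u+j, j}(T^{\ell_u+j}\setminus T_*^{\ell_u+j})$, whence $s \in \mu_{\ell_u+j, j}(T^{\ell_u+j}\setminus T_*^{\ell_u+j}) \subseteq B'_Q$ via $s\delequal x$ and Definition~\dualref{def:nu}, so $v = \tau^{\ell_u}(s) \in B_Q$, contradicting $v \notin B_Q$. Case~(a) is analogous but starts one step earlier in the $\nu$-recursion: unfolding $\nu_{\ell_u+j,j}(U) = \tau^{\ell_u+1}(\downset^{\ell_u+1}(\nu_{\ell_u+j,j-1}(U)))$ and using injectivity of $\tau^{\ell_u+1}$ places the unique $x \in T^{\ell_u+1}$ with $x\delequal s$ in $\nu_{(\ell_u+1)+(j-1), j-1}(T^{(\ell_u+1)+(j-1)}\setminus\Ext_\delta(T_*^{(\ell_u+1)+(j-1)}))$; the condition $x \notin \out^{\ell_u+1}$ follows from $u \notin \out^{\ell_u}$ via $\out^{\ell_u}\supseteq\nu_{\ell_u+1,1}(\out^{\ell_u+1})$ from~\dualeqref{eq:o-def}; applying Lemma~\dualref{lm:cut-structure} to $x$ and $v=t$ puts $t \in \nu_{\ell_u+j, j-1}(T^{\ell_u+j}\setminus T_*^{\ell_u+j})$, and a further unfolding produces $s' \in \mu_{\ell_u+j, j-2}(T^{\ell_u+j}\setminus T_*^{\ell_u+j})\cap S^{\ell_u+2} \subseteq B'_Q$ with $v = \tau^{\ell_u+2}(s')$, so again $v \in B_Q$.

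The principal obstacle in this plan is the hypothesis of Lemma~\dualref{lm:cut-structure} that the image vertex lies outside $\out^{\ell_u+1}$: in Case~(a) this applies to $v=t$, and unlike Case~(b) the assumption $v \notin B_Q$ does not directly yield $v \notin \out^{\ell_u+1}$, because $B_Q \cap T^{\ell_u+1}$ is contained in $\tau^{\ell_u+2}(S^{\ell_u+2})$ whereas $\out^{\ell_u+1}$ additionally includes the residual $T_*^{\ell_u+1}\setminus \tau^{\ell_u+2}(S^{\ell_u+2})$ and the union of cube boundaries appearing in~\dualeqref{eq:o-def}. I will handle this by a finer case split on which term of $\out^{\ell_u+1}$ contains $v$: if $v \in \nu_{\ell_u+2, 1}(\out^{\ell_u+2})$, then $v = \tau^{\ell_u+2}(s')$ for some $s'$ whose label lies in $\out$, so $s' \in B'_Q \cap S^{\ell_u+2}$ and already $v \in B_Q$, ruling out this subcase; the remaining boundary and gap pieces of $\out^{\ell_u+1}$ have total size $O(\delta^{1/8}|P|)$ by Claim~\dualref{cl:o-size} and Lemma~\dualref{lm:t-star-minus-tau-sp}, so they can either be absorbed by enlarging $B_Q$ (the bound of Lemma~\dualref{lm:sizeof-ab} is robust to this perturbation) or swept into a small residual on the right-hand side of the inclusion that contributes only lower-order terms in the downstream cardinality bound driving the proof of Theorem~\ref{thm:main}.
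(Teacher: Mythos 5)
Your proposal traces the paper's own argument exactly — the same two-case split according to whether $v$ sits in $T^{\ell-1}$ or in $T^{\ell+1}$, the same contradiction via Corollary~\ref{cor:cut-structure-full} (resp.\ Lemma~\ref{lm:cut-structure-full}) once $\j\neq\j^{\ell'}_k$ is assumed so that $\j\in\B^{\ell'}\setminus\Spt(\B^{\ell'})$, and the same one-step unfolding of the $\nu$-recursion in the second case. What you correctly flag, and the paper glosses over, is the outlier hypothesis of Lemma~\ref{lm:cut-structure-full} on the $T^{\ell+1}$-side vertex in the second case. The paper dismisses the subcase $v\in\out$ with the remark ``if $y\in\out$, we have $y\in B_Q$,'' but this implication is only supported by the definitions when $y$ is an $S$-side vertex, as in the first case, where $y\in\out\cap S^\ell\subseteq B'_Q\cap S^\ell$ forces $v=\tau^\ell(y)\in B_Q$ directly. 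In the second case $y=v\in T^{\ell+1}$, and by~\eqref{eq:bpm-def-full} one has $B_Q\cap T^{\ell+1}\subseteq\tau^{\ell+2}(S^{\ell+2})\subseteq T_*^{\ell+1}$; meanwhile $\out^{\ell+1}$ also contains the cube-boundary pieces of~\eqref{eq:o-def-full} and the gap $T_*^{\ell+1}\setminus\tau^{\ell+2}(S^{\ell+2})$, neither of which need lie in that range. You correctly observe that the subpiece $v\in\nu_{\ell_u+2,1}(\out^{\ell_u+2})$ does close by the same argument as the first case (its $\tau^{\ell_u+2}$-preimage $s'$ has label in $\out$, hence $s'\in B'_Q\cap S^{\ell_u+2}$ and $v\in B_Q$), while the remaining terms escape. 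Your fix — enlarge $B_Q$ (resp.\ $B_P$) to additionally contain $\out\cap Q$ (resp.\ $\out\cap P$), or equivalently allow an $O(\delta^{1/8}|P|)$ residual on the right-hand side, bounded since $M$ is a matching — is exactly right and harmless downstream: $A_P$ and $A_Q$ already exclude $\out$ by~\eqref{eq:apm-def-full}, so Claim~\ref{cl:ab-disjoint-full} is unaffected, and $|\out|\le\delta^{1/8}|T^0|$ by Claim~\ref{cl:o-size-full}, so Lemma~\ref{lm:sizeof-ab-full} absorbs the enlargement into its $O(1/K)$ slack.
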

\begin{proof}
Suppose that $u\in A_P, v\in Q\setminus B_Q$ and $(u, v)\in M$. Let $\ell\in [L]$ be an even integer such that  $u\in T^\ell$. Such an $\ell$ exists because by~\dualeqref{eq:apm-def} one has
\begin{equation*}
A_P=\left(\bigcup_{\substack{\ell\in [L]\\ \ell \text{~even}}} \nu_{\ell, *}(T^\ell\setminus \Ext_\delta(T_*^\ell))\right)\setminus \out\\
\end{equation*}
and by Definition~\dualref{def:nu} one has 
$$
\nu_{\ell, *}(T^\ell\setminus \Ext_\delta(T_*^\ell)):=\bigcup_{\substack{j=0\\j \text{~even}}}^\ell \nu_{\ell, j}(T^\ell\setminus \Ext_\delta(T_*^\ell)),
$$
so that 
$$
\nu_{\ell, *}(T^\ell\setminus \Ext_\delta(T_*^\ell))\subseteq \bigcup_{\text{even~}\ell\in [L]} T^\ell.
$$
Uniqueness of $\ell$ follows by Lemma~\dualref{lm:nu-prop}, {\bf (2)}.  Furthermore, we get that 
\begin{equation}\duallabel{eq:u-apm}
u\in \nu_{\ell+j, j}(T^{\ell+j}\setminus \Ext_\delta(T_*^{\ell+j}))\setminus \out
\end{equation}
for some even $\ell\in [L]$ and even $j$.

We now consider two cases: depending on whether $v\in T^{\ell-1}$ ({\bf case 1}) or $v\in T^{\ell+1}$ ({\bf case 2}).

\paragraph{\bf Case 1.} In this case there exists a unique $y\in S^\ell$ such that $\tau^\ell(y)=v$. Indeed, otherwise the edge $(u, v)$ would not be in the graph $\wh{G}$ as per~\dualeqref{eq:ehat-def} and~\dualeqref{eq:ehat-def-local}; uniqueness follows from injectivity of $\tau^\ell$. Letting $x=u$, we now show using Corollary~\dualref{cor:cut-structure} that $y\in \mu_{\ell+j, j}(T^{\ell+j}\setminus T_*^{\ell+j})$, which in turn 
by~\dualeqref{eq:bpm-def} together with the definition of $\mu_{\ell, *}$ (Definition~\dualref{def:nu}) implies that $v=\tau^\ell(y)\in B_Q$, as required. We now provide the details.

Let $k\in [K/2]$ be the unique index such that $x\in T^\ell_k$ and $y\in S^\ell_k$.  Uniqueness follows since the edge sets in~\dualeqref{eq:edges-ei-def} are disjoint by Lemma~\dualref{lm:disjoint}. Note that $x\not \in \out$ since we excluded this set in~\dualeqref{eq:apm-def}. If $y\in \out$, we have $y\in B_Q$ and there is nothing to prove. Thus, it suffices to consider the case $y\not \in \out$.

We thus have $x\in T^\ell_k\setminus \out$ and $y\in S^\ell_k\setminus \out$.  Furthermore, since $(u, v)=\tau^\ell((y, x))$, the assumption that $(u, v)\in \wh{E}$ implies that $(x, y)\in E^\ell$, and therefore 
$$
y=x+\lambda \cdot\u
$$ 
for some $\u\in \ac{\B}^\ell_k$. Furthermore, it follows by Lemma~\dualref{cl:line-size} that $|\lambda|\leq 2M/w$. We assume towards a contradiction that $\u\neq \j^\ell_k$. Since 
$$
\Spt(\B^\ell)\cap \ac{\B}^\ell_k=\{\j^\ell_k\},
$$
this means that $\u\not \in \Spt(\B^\ell)$ (see~\dualeqref{eq:sp-ext} for the definition of $\Spt$). This means, since $\ac{\B}^\ell\subset \B^\ell$, that the preconditions of Corollary~\dualref{cor:cut-structure} are satisfied and we get that 
$$
x\in \nu_{\ell+j, j}(T^{\ell+j}\setminus \Ext_\delta(T_*^{\ell+j})),
$$ 
implies
$$
y\in \mu_{\ell+j, j}(T^{\ell+j}\setminus T_*^{\ell+j}).
$$ 
At the same time by Definition~\dualref{def:nu} for every $\ell\in [L]$ 
$$
\mu_{\ell, *}(T^\ell\setminus T_*^\ell):=\bigcup_{\substack{j=0\\j \text{~even}}}^\ell \mu_{\ell, j}(T^\ell\setminus T_*^\ell),
$$
which means that $y\in \mu_{\ell, *}(T^\ell\setminus T_*^\ell)\subseteq B'_Q$ as per~\dualeqref{eq:bpm-def-prime}. Therefore, $v=\tau^\ell(y)\in B_Q$, as required.

\paragraph{Case 2.} In this case there exists a unique $x'\in S^{\ell+1}$ such that $\tau^{\ell+1}(x')=u$. Indeed, otherwise the edge $(u, v)$ would not be in the graph $\wh{G}$ as per~\dualeqref{eq:ehat-def} and~\dualeqref{eq:ehat-def-local}; uniqueness follows from injectivity of $\tau^{\ell+1}$. Let $x\in T^{\ell+1}$ be such that $x\delequal x'$. Let $y=v$. Let $k\in [K/2]$ be the unique index such that $y\in T^{\ell+1}_k$ and $x'\in S^{\ell+1}_k$.  Uniqueness follows since the edge sets in~\dualeqref{eq:edges-ei-def} are disjoint by Lemma~\dualref{lm:disjoint}.  Note that $x\not \in \out$ since we excluded this set in~\dualeqref{eq:apm-def}. If $y\in \out$, we have $y\in B_Q$ and there is nothing to prove. Thus, it suffices to consider the case $y\not \in \out$. We thus have $y\in T^{\ell+1}_k\setminus \out$ and $x'\in S^{\ell+1}_k\setminus \out$.

Since $(u, v)=\tau^{\ell+1}((x', y))$, the assumption that $(u, v)\in \wh{E}$ implies that $(x', y)\in E^{\ell+1}$, and therefore, since $x\delequal x'$,
$$
y=x+\lambda \cdot\u
$$ for some $\u\in \ac{\B}^{\ell+1}_k$. Furthermore, it follows by Lemma~\dualref{cl:line-size} that $|\lambda|\leq 2M/w$. We assume towards a contradiction that $\u\neq \j^{\ell+1}_k$. Since 
$$
\Spt(\B^{\ell+1})\cap \ac{\B}^{\ell+1}_k=\{\j^{\ell+1}_k\},
$$
this means that $\u\not \in \Spt(\B^{\ell+1})$, and therefore, since $\ac{\B}^{\ell+1}\subseteq \B^{\ell+1}$, the preconditions of Lemma~\dualref{lm:cut-structure} are satisfied for $x$, $y$ and $\ell+1$. Furthermore, by~\dualeqref{eq:u-apm} we have
\begin{equation*}
\begin{split}
u&=\tau^{\ell+1}(x')\\
&\in \nu_{\ell+j, j}(T^{\ell+j}\setminus \Ext_\delta(T_*^{\ell+j}))\\
&\subseteq \tau^{\ell+1}\left(\mu_{\ell+j, j-1}(T^{\ell+j}\setminus \Ext_\delta(T_*^{\ell+j}))\right),
\end{split}
\end{equation*}
and therefore $x'\in \mu_{\ell+j, j-1}(T^{\ell+j}\setminus \Ext_\delta(T_*^{\ell+j}))$.  
Since 
$$
\mu_{\ell+j, j-1}(T^{\ell+j}\setminus \Ext_\delta(T_*^{\ell+j}))=\downset^{\ell+1}(\nu_{\ell+j, j-1}(T^{\ell+j}\setminus \Ext_\delta(T_*^{\ell+j}))),
$$
we have, since $x\delequal x'$,
$$
x\in \nu_{\ell+j, j-1}(T^{\ell+j}\setminus \Ext_\delta(T_*^{\ell+j}))=\nu_{(\ell+1)+(j-1), j-1}(T^{(\ell+1)+(j-1)}\setminus \Ext_\delta(T_*^{(\ell+1)+(j-1)})).
$$ 

This means that the preconditions of Lemma~\dualref{lm:cut-structure} are satisfied, and we have \footnote{When $\ell+1=L-1$, we have $\ell+2=L$, which does not technically correspond to a gadget in our input graph. However, we think of artifically adding such a gadget here to handle this corner case for simplicity.}
\begin{equation*}
\begin{split}
y\in \nu_{(\ell+1)+(j-1), j-1}(T^{(\ell+1)+(j-1)}\setminus T_*^{(\ell-1)+(j-1)})&=\nu_{\ell+j, j-1}(T^{\ell+j}\setminus T_*^{\ell+j})\\
&=\tau^{\ell+2}(\mu_{\ell+j, j-2}(T^{\ell+j}\setminus T_*^{\ell+j})).
\end{split}
\end{equation*}

At the same time by Definition~\dualref{def:nu} for every $\ell\in [L]$ 
$$
\mu_{\ell, *}(T^\ell\setminus T_*^\ell):=\bigcup_{\substack{i=0\\i \text{~even}}}^\ell \mu_{\ell, i}(T^\ell\setminus T_*^\ell),
$$
which means that $y\in \tau^{\ell+2}(\mu_{\ell+j, *}(T^{\ell+j}\setminus T_*^{\ell+j})\cap S^{\ell+2})\subseteq B_Q$ as per~\dualeqref{eq:bpm-def-prime}, as required.
\end{proof}

\subsection{Proof of Theorem~\ref{thm:main}}\duallabel{sec:main-proof}
We now give

\begin{proofof}{Theorem~\ref{thm:main}} Now putting~\dualeqref{eq:malg-large} together with~Lemma~\dualref{lm:vertex-cover}, we get
\begin{equation*}
\begin{split}
|M_{ALG}\cap (A_P\times (Q\setminus B_Q))|&\geq |M_{ALG}|-\left(\frac1{1+\ln 2}|P|+O(|P|/L)\right)\\
&\geq \left(\frac1{1+\ln 2}+\eta-O(1/K)\right)|P|-\left(\frac1{1+\ln 2}|P|+O(|P|/L)\right)\\
&\geq (\eta-O(1/L))|P|\\
&\geq (\eta/2)|P|,
\end{split}
\end{equation*}
where we used~\dualeqref{eq:malg-large} in the third transition and the fact that $L\leq \sqrt{K}$ by~\dualref{p3} in the forth transition, and assumed that $L=\sqrt{K}$ is larger than an absolute constant that depends on $\eta$ in the last transition. Thus,
\begin{equation}\duallabel{eq:success-prob}
\prob_{\wh{G}\sim \mathcal{D}}\left[|M_{ALG}\cap (A_P\times (Q\setminus B_Q))|\geq (\eta/4)|P|\text{~and~} M_{ALG}\subseteq \wh{E}\right]\geq 1/2.
\end{equation}
Note that the second condition above, namely $M_{ALG}\subseteq \wh{E}$ enforces the constraint that the algorithm does not output non-edges\footnote{The analysis generalizes easily to the setting where the algorithm is allowed to output a small fraction of non-edges, but this is a rather non-standard assumption, and we prefer to operate under the more standard model where $M_{ALG}$ must be a subset of $\wh{E}$ with a good probability.}. We do not add this condition explicitly in calculations below to simplify notation (one can think of $|M_{ALG}|$ as being defined as zero when $M_{ALG}$ contains non-edges). Now recall that by Lemma~\dualref{lm:special-edges} we have
\begin{equation*}
\begin{split}
M_{ALG}\cap (A_P\times (Q\setminus B_Q))&\subseteq \bigcup_{\ell\in [L], k\in [K/2]} \tau^\ell(E^\ell_{k, \J^\ell_k})=\bigcup_{\ell\in [L], k\in [K/2]} \wh{E}^\ell_{k, \J^\ell_k}.
\end{split}
\end{equation*}
Thus, there exist $\ell^*\in [L], k^*\in [K/2]$ such that
\begin{equation}\duallabel{eq:special-lk}
\begin{split}
\prob\left[|M_{ALG}\cap \wh{E}^{\ell^*}_{k^*, \j^\ell_{k^*}}|\geq \frac{\eta}{2KL} |P|\right]\geq \frac1{KL}.
\end{split}
\end{equation}
Indeed, otherwise one would have
\begin{equation*}
\begin{split}
\prob[|M_{ALG}\cap (A_P\times (Q\setminus B_Q))|&\geq (\eta/4)|P|]\\
&\leq  \prob\left[\text{exist~} \ell\in [L]\text{~and~} k\in [K/2] \text{~such that~}|M_{ALG}\cap E^\ell_{k, \j^\ell_k}|\geq \frac{\eta}{2LK}|P|\right]\\
&\leq \sum_{\ell\in [L]} \sum_{k\in [K/2]} \prob\left[|M_{ALG}\cap E^\ell_{k, \j^\ell_k}|\geq \frac{\eta}{2KL}|P|\right]\\
&< \sum_{\ell\in [L]} \sum_{k\in [K/2]} \frac1{KL}\\
&= (KL/2)\cdot \frac1{KL}\\
&= 1/2, 
\end{split}
\end{equation*}
a contradiction with~\dualeqref{eq:success-prob}.

To simplify notation, we let $\ell=\ell^*, k=k^*$. Let that by Definition~\dualref{def:glk} we write $\wh{G}_{<(\ell, k)}$ to denote the subgraph of $\wh{G}$ that arrives up to the $k$-th phase of the $\ell$-th round. Also recall that {\bf (a)} $\wh{G}_{\leq (\ell, k)}$ is fully determined by $\Lambda_{<(\ell, k)}$ and $X^\ell_k$ (see Definition~\dualref{def:lambda}) and {\bf (b)} conditioned on $\Lambda_{<(\ell, k)}$ and $X^\ell_k$ one has $\j^\ell_k\sim UNIF(\ac{\B}^\ell_k)$.  For simplicity of notation we write 
$$ 
\B=\B^\ell_k\text{,~~}\ac{\B}=\ac{\B}^\ell_k\text{~~and~~}\j=\j^\ell_k.
$$
Recall that $\ac{\B}^\ell_k=\B^\ell_k\setminus \Ext^\ell_k\cup \{\q^\ell_k, \r^\ell\}$. We also let
$$
S_k=S^\ell_k\text{~~and~~}X:=X^\ell_k.
$$

\paragraph{Lower bounding the space usage of ALG.} In what follows we show that since $M_{ALG}$ often returns many edges from $\wh{G}_{(\ell, k)}$ as per~\dualeqref{eq:special-lk}, the conditional entropy of $X^\ell_k$ given $\Pi$ and $\Lambda_{\leq(\ell, k)}$ is low, which gives the desired lower bound on $s$. Let $\Pi\in \{0, 1\}^s$ denote the state of ALG after it has been presented with $\wh{G}_{\leq (\ell, k)}$. \if 0 Sample 
$$
\j^\ell_k\sim UNIF(\ac{\B}^\ell_k),
$$
then sample $\Lambda_{>(\ell, k)}$ conditioned on $\Lambda_{\leq (\ell, k)}$ thereby fixing $\wh{G}_{>(\ell, k)}$. \fi Then finish running ALG on $\wh{G}_{> (\ell, k)}$ starting with state $\Pi$. Let $M_{ALG}$ denote the matching output by ALG. We have 
\begin{equation}\duallabel{eq:93t9239hdsagasasaaaa}
\begin{split}
s=|\Pi|&\geq H(\Pi)\\
&\geq H(\Pi| \Lambda_{<(\ell, k)})\\
&\geq I(\Pi; X | \Lambda_{<(\ell, k)})\\
&\geq \sum_{\i\in \ac{\B}} I(\Pi; X_\i | \Lambda_{<(\ell, k)})\\
&=\sum_{\i\in \ac{\B}} I(\Pi; X_\i | \Lambda_{<(\ell, k)}, \{\j=\i\})\\
&\geq \sum_{\i\in \ac{\B}} I(M_{ALG}; X_\i | \Lambda_{<(\ell, k)}, \{\j=\i\})\\
\end{split}
\end{equation}
The second transition uses the fact that conditioning does not increase entropy, the forth transition uses the fact that $X_\i$'s are independent conditioned on $\Lambda_{<(\ell, k)}$, the forth transition uses the fact that $\j$ is independent of $\Pi$ and $X_\i$ conditioned on $\Lambda_{<(\ell, k)}$.   The final transition is by the data processing inequality:
\begin{lemma} \emph{(Data Processing Inequality)} \label{thm:dpi}
    For any random variables $(X,Y,Z)$ such that $X \to Y \to Z$ forms a Markov chain, we have $I(X;Z) \le I(X;Y)$.
\end{lemma}

Recall that we let $\ell=\ell^*$ and $k=k^*$, where $\ell^*$ and $k^*$ satisfy~\dualeqref{eq:special-lk}, and let $\j=\j^\ell_k$, to simplify notation.
We now lower bound 
\begin{equation}\duallabel{eq:20iabibafihf0ASAS}
\begin{split}
\sum_{\i\in \ac{\B}} I(M_{ALG}; X_\i | \Lambda_{<(\ell, k)}, \{\j=\i\})&=\sum_{\i\in \ac{\B}} H(X_\i| \Lambda_{<(\ell, k)}, \{\j=\i\})-H(X_\i | M_{ALG}, \Lambda_{<(\ell, k)}, \{\j=\i\})\\
&=\sum_{\i\in \ac{\B}} H(X_\i)-H(X_\i | M_{ALG}, \Lambda_{<(\ell, k)}, \{\j=\i\}).
\end{split}
\end{equation}

We now upper bound $H(X_\i | M_{ALG}, \Lambda_{<(\ell, k)}, \{\j=\i\})$ on the rhs of~\dualeqref{eq:20iabibafihf0ASAS}. Let 
\begin{equation}\duallabel{eq:8923yty9y239ruyURadF}
\begin{split}
\E:=&\left\{|M_{ALG}\cap E^\ell_{k, \j}|\geq \frac{\eta}{2KL} |P|\text{~and~}M_{ALG}\subseteq \wh{E}\right\}
\end{split}
\end{equation}
and  let $Z$ denote the indicator of $\E$. Note that $\expect[Z]=\prob[\E]\geq \frac1{KL}$ by~\dualeqref{eq:special-lk}. We have
\begin{equation}\duallabel{eq:plus-z-198yugudfg}
\begin{split}
H(X_\i| M_{ALG}, \Lambda_{<(\ell, k)}, \{\j=\i\})&\leq H(X_\i, Z| M_{ALG}, \Lambda_{<(\ell, k)}, \{\j=\i\})\\
&\leq H(Z)+H(X_\i| M_{ALG}, \Lambda_{<(\ell, k)}, \{\j=\i\}, Z)\\
&\leq 1+H(X_\i| M_{ALG}, \Lambda_{<(\ell, k)}, \{\j=\i\}, Z),\\
\end{split}
\end{equation}
where we used the fact that $H(Z)\leq 1$, as $Z$ is a binary variable. At the same time, since 
$\expect[Z]=\expect_{\i\sim UNIF(\ac{\B})}\left[Z|\{\j=\i\}\right]\geq \frac1{KL}$ by~\dualeqref{eq:special-lk},
and $\j\sim UNIF(\ac{\B})$, there exists a subset $\mathcal{J}\subseteq \ac{\B}$ such that $|\mathcal{J}|\geq \frac1{KL}|\ac{\B}|$ and for every $\i\in \mathcal{J}$ one has 
$\expect[Z| \{\j=\i\}]\geq \frac1{KL}.$ For every $\i\in \mathcal{J}$ one has
\begin{equation}\duallabel{eq:93g9g7g89gFGYFGSA}
\begin{split}
H(X_\i| M_{ALG}, &\Lambda_{<(\ell, k)}, \{\j=\i\}, Z)\\
&=H(X_\i| M_{ALG}, \Lambda_{<(\ell, k)}, \{\j=\i\wedge Z=1\})\cdot \prob[Z=1| \{\j=\i\}]\\
&+H(X_\i| M_{ALG}, \Lambda_{<(\ell, k)}, \{\j=\i\wedge Z=0\})\cdot \prob[Z=0| \{\j=\i\}]\\
\end{split}
\end{equation}

We now bound both terms on the rhs in~\dualeqref{eq:93g9g7g89gFGYFGSA}. For the second term we have
\begin{equation}
\begin{split}
H(X_\i| M_{ALG}, \Lambda_{<(\ell, k)}, \{\j=\i\wedge Z=0\})&\leq \expect_{\Lambda_{<(\ell, k)}}[|S_k|] \cdot H_2(1-1/K)\\
&\leq (1+\sqrt{\e})\frac1{K}|T| \cdot H_2(1-1/K)\\
\end{split}
\end{equation}
where the first transition is because $\sum_{y\in S_k} X_\i(y)=\lceil (1-\frac1{K}) |S_k|\rceil$ by definition of $X_\i$ and the second transition is by Lemma~\dualref{lm:size-bounds}, {\bf (2)}.

\if 0 
We thus have
\begin{equation}\duallabel{eq:923tg8AHSHDLD}
\begin{split}
\sum_{y\in S_k} &H(X_\i(y)| M_{ALG}, \Lambda_{\leq(\ell, k)}, \{\j=\i\wedge Z=1\})\\
&\leq \sum_{y\in S_k\setminus M_{ALG}} H(X_\i(y)| M_{ALG}, \Lambda_{\leq(\ell, k)}, \{\j=\i\wedge Z=1\})\\
&\leq \sum_{y\in S_k\setminus M_{ALG}} H(X_\i(y))\\
& \leq |S_k\setminus M_{ALG}|\cdot H_2(1-1/K).
\end{split}
\end{equation}
\fi

For the first term on the rhs in~\dualeqref{eq:93g9g7g89gFGYFGSA}  we note that since $M_{ALG}\subseteq \wh{E}$ as we are conditioning on the event $\mathcal{E}$ (by conditioning on $\{Z=1\}$) for every $y\in S_k$ that is matched by $M_{ALG}$ one has $X_\i(y)=1$. 
By conditioning on $\{Z=1\wedge \j=\i\}$, we get by~\dualeqref{eq:8923yty9y239ruyURadF} $|M_{ALG}\cap E^\ell_{k, \i}|\geq \frac{\eta |P|}{2KL}$,
and hence 
$$
\gamma:=\frac{|M_{ALG}^\i|}{|S_k|}\geq \frac{\eta |P|}{2KL |S_k|}\geq \frac{\eta |T|}{4K |S_k|}\geq \eta/8,
$$
where we let $M_{ALG}^\i=M_{ALG}\cap E^\ell_{k, \i}$ for convenience.
For every fixing $\lambda$ of $\Lambda_{<(\ell, k)}$ one has, 
$$
H(X_\i| M_{ALG}, \{\Lambda_{<(\ell, k)}=\lambda \wedge \j=\i\wedge Z=1\})\leq (1-\gamma) |S_k| H_2\left(1-\frac1{K(1-\gamma)}\right),
$$
since conditioned on $M_{ALG}$, $\lambda, \j=\i$ and the success event $Z=1$ there are exactly $(1-\gamma)|S_k|$ values of $y\in S_k\setminus M_{ALG}^\i$ such that $X_\i(y)=1$, and hence the conditional entropy of $X_\i$ is bounded by
\begin{equation*}
\begin{split}
\log_2 { |S_k\setminus M_{ALG}| \choose (1-\frac1{K}-\gamma) |S_k|}&=\log_2 { (1-\gamma) |S_k| \choose (1-\frac1{K}-\gamma) |S_k|}\\
&=\log_2 { (1-\gamma) |S_k| \choose (1-\frac1{K(1-\gamma)}) (1-\gamma) |S_k|}\\
&\leq (1-\gamma) |S_k| H_2\left(1-\frac1{K(1-\gamma)}\right),
\end{split}
\end{equation*}
where the last transition is by subadditivity of entropy. Recalling that $\gamma\geq \eta/8$ and $\eta>0$ is a small constant we bound the rhs above by
\begin{equation}\duallabel{eq:8g823ggfaib}
\begin{split}
(1-\gamma) |S_k| H_2\left(1-\frac1{K(1-\gamma)}\right)&\leq (1-\eta/8) |S_k| H_2\left(1-\frac1{K(1-\eta/8)}\right)\\
&\leq (1+\sqrt{\e}) \frac1{K}|T|\cdot  (1-\eta/8) H_2\left(1-\frac1{K(1-\eta/8)}\right),
\end{split}
\end{equation}
where in the second transition we also used the fact that by Lemma~\dualref{lm:size-bounds}, {\bf (2)}, we have $|S_k|\leq (1+\sqrt{\e})\frac1{K}|T|$.  At this point we also note that 
\begin{equation*}
\begin{split}
(1-\eta/8) H_2\left(1-\frac1{K(1-\eta/8)}\right)&=\frac1{K}\log_2 K+\frac1{K\ln 2}-\frac1{K} \log\frac1{1-8/\eta}+O(1/K^2)\\
&\leq H_2(1-1/K)-\frac1{K} \log\frac1{1-\eta/8}+O(1/K^2).
\end{split}
\end{equation*}
since $H_2(1-1/K)=\frac1{K}\log_2 K+\frac{1}{K \ln 2}+O(1/K^2)$ and $K$ is larger than a constant. Putting the above bounds together, we get, assuming that $K$ is larger than $1/\eta$ by a large constant factor,
 \begin{equation*}
\begin{split}
H(X_\i| M_{ALG}, \Lambda_{<(\ell, k)}, \{\j=\i\wedge Z=1\})\leq (1+\sqrt{\e})\frac1{K}|T|\cdot H_2(1-1/K)-\Omega(\eta/K) |T|.
\end{split}
\end{equation*}
for every $\i\in \mathcal{J}$, which by~\dualeqref{eq:93g9g7g89gFGYFGSA} implies for $\i\in \mathcal{J}$
\begin{equation}\duallabel{eq:mathcalj}
\begin{split}
H(X_\i| M_{ALG}, &\Lambda_{<(\ell, k)}, \{\j=\i\}, Z)\leq (1+\sqrt{\e})\frac1{K}|T|\cdot H_2(1-1/K)-\Omega\left(\frac{\eta}{K^2 L^2}\right) |P|\\
\end{split}
\end{equation}

Finally, for $\i\in \ac{\B}\setminus \mathcal{J}$ we have the bound 
\begin{equation}\duallabel{eq:9h329ty3utjbbfsf}
\begin{split}
H(X_\i| M_{ALG}, \Lambda_{<(\ell, k)}, \{\j=\i\}, Z)\leq (1+\sqrt{\e})\frac1{K}|T|\cdot H_2(1-1/K),
\end{split}
\end{equation}
since the number of nonzeros in $X_\i$ is exactly $\lceil (1-1/K) |S_k\rceil$. Putting ~\dualeqref{eq:mathcalj} and~\dualeqref{eq:9h329ty3utjbbfsf} together with~\dualeqref{eq:20iabibafihf0ASAS} and using~\dualeqref{eq:plus-z-198yugudfg}, we get
\begin{equation*}
\begin{split}
H(X| \Pi, \Lambda_{<(\ell, k)})&\leq \sum_{\i\in \ac{\B}} H(X_\i|  M_{ALG}, \Lambda_{<(\ell, k)}, \{\j=\i\})\\
&\leq \sum_{\i\in \ac{\B}} (1+H(X_\i|  M_{ALG}, \Lambda_{<(\ell, k)}, \{\j=\i\}, Z))\\
&\leq \sum_{\i\in \mathcal{J}} \left(H(X_\i|\Lambda_{<(\ell, k)})-\Omega\left(\frac{\eta}{K^2 L^2}\right) |P|\right)+\sum_{\i\in \ac{\B}\setminus \mathcal{J}} H(X_\i|\Lambda_{<(\ell, k)})\\
&=\sum_{\i\in \ac{\B}} H(X_\i|\Lambda_{<(\ell, k)})- |\mathcal{J}| \cdot  \Omega\left(\frac{\eta}{K^2 L^2}\right) |P|.\\
\end{split}
\end{equation*}
On the other hand, since $|S_k|\geq (1-\sqrt{\e})\frac1{K}|T|$ for all choices of $\Lambda_{<(\ell, k)}$ by Lemma~\dualref{lm:size-bounds}, {\bf (2)}, we get, since the nonzeros of $X_\i$ are a uniformly random set of size $\lceil (1-1/K)|S_k|\rceil$, that
$$
H(X|\Lambda_{<(\ell, k)})\geq (1-\sqrt{\e})\frac1{K}|T|\cdot |\ac{\B}|\cdot (1-o_N(1)) H_2(1-1/K).
$$

Substituting this into~\dualeqref{eq:20iabibafihf0ASAS}, we get
\begin{equation*}
\begin{split}
s=|\Pi|&\geq \Omega\left(\frac{\eta}{K^2L^2}\right) |\mathcal{J}| \cdot  |P|-O(\sqrt{\e})\frac1{K}|T|\cdot |\ac{\B}|\cdot H_2(1-1/K)\\
&\geq \Omega\left(\frac{\eta}{K^2L^2}\right) |\mathcal{J}| \cdot  |P| \text{~~~~~~~(since $\e<K^{-100K^2}$ by ~\dualref{p6},~\dualref{p5} and \dualref{p3})}\\
&\geq \Omega\left(\frac{\eta}{K^3L^3}\right) |\ac{\B}| \cdot  |P|\text{~~~~~~~(since $|\mathcal{J}|\geq |\ac{\B}|/(KL)$)}\\
&\geq \Omega_K(|\ac{\B}| \cdot  |P|).\\
\end{split}
\end{equation*}
Now note that $|\ac{\B}|\geq (1/2)|\B|$ since $|\Ext^\ell_k|\leq K$ and $n$ is sufficiently large as a function of $K$.  Finally, recall that by~\dualref{p0}
$$
N=m^n=n^{20n},
$$
and therefore
$$
|\B|\geq |\F|/(KL)=2^{\Omega(\e^2 n)}=N^{\Omega_\e(1/\log\log N)}.
$$ 
To summarize, since $|P|=O(L) N$ and $L$ is an absolute constant, we get a lower bound of
$$
s=\Omega_K(|\B| \cdot  |P|)=|P|^{1+\Omega(1/\log\log |P|)},
$$
as required. 

\end{proofof}

\section*{Acknowledgements}
This project has received funding from the European Research Council (ERC) under the European Union?s Horizon 2020 research and innovation programme (grant agreement No 759471).

\begin{appendix}
\section{Proof of Lemma~\ref{lm:alg-gen-online-intro}}\label{app:gen-online}

\begin{proofof}{Lemma~\ref{lm:alg-gen-online-intro}}
Fix $\ell\in [L]$, and let $G^\ell=(S^\ell, T^\ell, E^\ell)$ denote the $\ell$-th gadget graph.  Let $(E')^\ell$ denote a subset of $E^\ell$ that contains every edge independently with probability $C/(\e^2 n)$ for an absolute constant $C>0$. We show that with high probability over the choice of $(E')^\ell$ the edge set $(E')^\ell$ contains a matching of at least a $1-\e$ fraction of $S^\ell$ to $T^\ell\setminus T_*^\ell$. We drop the superscript $\ell$ to simplify notation.

Now note that for every subset $A\subseteq S$ and $B\subseteq T\setminus T_*$ such that $|A|\geq |B|-\e n$ one has 
\begin{equation}\label{eq:8328tg32t}
|E\cap (A \times (T\setminus (T_*\cup B)))|\geq (\e n/2)^2.
\end{equation}
Indeed, sort elements of $A=\{a_1,\ldots, a_r\}, r=|A|,$ so that $\pi(a_1)\leq \pi(a_2)\leq \ldots, \pi(a_r)$. We have for every $i=1,\ldots, r$ that $\pi(a_i)\geq n/2-r+i$. Since $a_i$ has an edge to every $j\in T$ such that $j\geq \pi(a_i)$, we have that the degree of $a_i$ in $E$ is lower bounded by $n/2+r-i$. At most $|T_*\cup B|=|T_*|+|B|\leq n/2+(r-\e n)$ of these edges go to $T_*\cup B$ (this is where we use that $|B|\leq |A|+\e n$), and therefore the $i$-th vertex in $A$ contributes at least $(n/2+r-i)-(n/2+(r-\e n))\geq \e n-i$. Thus, the first $\e n/2$ vertices in $A$ have degree at least $\e n/2$ outside of $T_*\cup B$, which proves~\eqref{eq:8328tg32t}. The probability that none of these edges are included in the sample $E'$ is bounded by 
$$
\left(1-\frac{C}{\e^2 n}\right)^{(\e n/2)^2}=\left(1-\frac{C}{\e^2 n}\right)^{\e^2 n^2/4}\leq \exp(-C n/4)\leq 2^{-4n}.
$$
Taking a union bound over all choices of $A\subseteq S, B\subseteq T\setminus T_*$ (at most $2^{2n}$ choices), we get that with high probability for every $A\subseteq S$, every $B\subseteq T\setminus T_*$ such that $|A|\geq |B|+\e n$ one has
$$
E'\cap (A \times (T\setminus (T_*\cup B)))\neq \emptyset.
$$
This precludes the existence of a vertex cover in $E'\cap (S\times (T\setminus T_*))$ of size smaller than $|S|-\e n$, and thus there exists a matching of all but $\e n$ vertices in $S$ to $T\setminus T_*$, as required. Combining these matchings over all gadgets gives a $1-O(\e)$-approximation to the maximum matching in $\wh{G}=(P, Q, \wh{E})$.
\end{proofof}


\section{Proofs omitted from Section~\ref{sec:toy-construction}}

\subsection{Proof of Lemma~\ref{lm:line-properties}}\label{app:line-properties}

\begin{proof}
We start by proving {\bf (1)}. Due to the assumption that $y\in T_k$ we have
\begin{equation*}
\begin{split}
\text{line}_j(y)&=\left\lbrace y'\in [m]^n: (y'-y)_s=0\text{~for all~}s\neq j \right\rbrace\\
&=\left\lbrace y'\in [m]^n: (y'-y)_s=0\text{~for all~}s\neq j\right\rbrace. \\
\end{split}
\end{equation*}
Since there are exactly $m$ possible values for $y'_j$ one has $|\text{line}_j(y)|=m$. Also  note that $j\not \in J_{<k}$, since $j\in \B_k$, $J_{<k}\in \B_{<k}$ and $\B_{<k}\cap \B_k=\emptyset$. Thus, every $y'\in \text{line}_j(y)$ coincides with $y$ on all coordinates $s\in J_{<k}$, so $y'_{j_s}/m\in \left[0, 1-\frac{1}{K-s}\right)$ for all $s\in \{0, 1, \ldots, k-1\}$ per~\eqref{eq:def-tk-allconstraints}, and hence we have $y'\in T_k$ and $\text{line}_j(y)\subseteq T_k$.

For {\bf (2}), we note that since $(K-s)|m$ for every $s\in [K/2]$ by~\ref{p0} and~\ref{p1}, there are exactly $m/(K-k)$ values for $y'_j$, namely $\{m/(K-k), m/(K-k)+1,\ldots, m-1\}$, that result in  $y'\not \in T_k^j$, by definition of $T_k^j$ (see \eqref{eq:def-tkj}).

For {\bf (3}), we recall that by~\eqref{eq:def-sk} 
$$
S_k=\left\{x\in T_k: \weight(x) \in \left[0, \frac{1}{K-k}\right)\cdot W \pmod{ W}\right\}.
$$
For every $x\in T_k$ one has
\begin{equation*}
\begin{split}
\text{line}_j(x)\cap S_k&=\left\lbrace x'\in [m]^n: x'_{-j}=x_{-j}\text{~and~}\weight(x') \in \left[0, \frac{1}{K-k}\right)\cdot W \pmod{ W} \right\rbrace\\
\end{split}
\end{equation*}
Write $x'=(x'_{-j} x'_j)$, where $x'_{-j}\in [m]^{[n]\setminus \{j\}}$. Note that by definition of $\weight(x')$ (Definition~\ref{def:weight})
$$
\weight(x')=\weight(x'_{-j})+x'_j.
$$
We thus get
\begin{equation*}
\begin{split}
|\text{line}_j(x)\cap S_k|&=\left|\left\lbrace x'\in [m]^n: x'_{-j}=x_{-j}\text{~and~}\weight(x') \in \left[0, \frac{1}{K-k}\right)\cdot W \pmod{ W} \right\rbrace\right|\\
&=\left|\left\lbrace x'\in [m]^n: x'_{-j}=x_{-j}\text{~and~}\weight(x'_{-j})+x'_j \in \left[0, \frac{1}{K-k}\right)\cdot W \pmod{ W} \right\rbrace\right|\\
&=\frac1{K-k}\left|\left\lbrace x'\in [m]^n: x'_{-j}=x_{-j}\right\rbrace\right|\\
&=\frac1{K-k}\left|\text{line}_j(x)\right|,
\end{split}
\end{equation*}
where the last equality uses the fact that since $W \mid m$ and $(K-k) \mid W$ by~\ref{p0} and~\ref{p1}, exactly a $\frac1{K-k}$ fraction of settings of $x_j\in [m]$ result in 
$$
\weight(x')=\weight(x'_{-j})+x'_j \in \left[0, \frac{1}{K-k}\right)\cdot W \pmod{ W}.
$$
This establishes {\bf (3)}.

For {\bf (4)}, we first recall that by~\eqref{eq:def-tkj} 
$$
S_k^j=\left\{x\in S_k: x_j/m\in \left[0, 1-\frac1{K-k}\right)\right\}.
$$
Thus for every $x\in S_k^j$ one has
\begin{equation*}
\begin{split}
\left|\text{line}_j(x)\cap S_k^j\right|&=\left|\left\lbrace x'\in [m]^n: x'_{-j}=x_{-j}\text{~and~}\weight(x') \in \left[0, \frac{1}{K-k}\right)\cdot W \pmod{ W}\right.\right.\\
&\text{~~~~~~~~~~~~~~~~~~~~~~~~~~~and~}\\
&\left.\left.\text{~~~~~~~~~~~~~~~~~~~~~~~~~~}x'_j/m\in \left[0, 1-\frac1{K-k}\right) \right\rbrace\right|\\
&=\frac1{K-k}\left|\left\lbrace x'\in [m]^n: x'_{-j}=x_{-j}\text{~and~}x'_j/m\in \left[0, 1-\frac1{K-k}\right)\right\rbrace\right|
\end{split}
\end{equation*}
where the last equality uses the fact that exactly $\frac1{K-k}$ fraction of settings of $x'_j\in \left\{0, 1,\ldots, (1-\frac1{K-k})m-1\right\}$ lead to 
$$
\weight(x')=\weight(x'_{-j})+x'_j \in \left[0, \frac{1}{K-k}\right)\cdot W \pmod{ W}.
$$
since $(K-k) \mid m$ and $W \mid m/(K-k)$ by~\ref{p0} and~\ref{p1}. We now note that since $K-k\mid m$, we get
$$
\left|\left\lbrace x'\in [m]^n: x'_{-j}=x_{-j}\text{~and~}x'_j/m\in \left[0, 1-\frac1{K-k}\right)\right\rbrace\right|=\left(1-\frac1{K-k}\right)|\text{line}_j(x)|.
$$
Putting the two bounds together yields the result. \end{proof}

\subsection{Proof of Lemma~\ref{lm:size-bounds}}\label{app:size-bounds}

\begin{proofof}{Lemma~\ref{lm:size-bounds}}
We start by proving {\bf (1)}:
\begin{equation*}
\begin{split}
|T_k|&=|T_0|\cdot \prob_{y\sim UNIF([m]^n)}\left[y_{j_s/m}\in \left[0, 1-\frac{1}{K-s}\right)\text{~for every~}s=0,\ldots, k-1\right]\\
&=|T_0|\cdot \prod_{s=0}^{k-1} \prob_{y\sim UNIF([m]^n)}\left[y_{j_s}/m\in \left[0, 1-\frac{1}{K-s}\right)\right]\\
&=|T_0|\cdot \prod_{s=0}^{k-1} \left(1-\frac{1}{K-s}\right)\text{~~~~~(since $K-s$ divides $m$ for all $s\in [K/2]$ by assumption)}\\
&=|T_0|\cdot \prod_{s=0}^{k-1} \frac{K-s-1}{K-s}\\
&=|T_0|\cdot  \frac{K-(k-1)-1}{K}\\
&=|T_0|\cdot  (1-k/K),\\
\end{split}
\end{equation*}
as required.

We now prove {\bf (2)}. Pick any coordinate $r \in \B_k$, and recall that $T_k$ does not depend on $r$, i.e. for every $x_{-r}\in [m]^{[n]\setminus \{r\}}$ such that $(x_r, x_{-r}) \in T_k$ for some $x_r\in [m]$ one has $(x_r, x_{-r})$ for every $x_r\in [m]$. This is because by~\eqref{eq:def-tk-allconstraints} $T_k$ only depends on coordinates in $\B_{<k}$. This means that
\begin{equation}\label{eq:923hgefC}
\begin{split}
|S_k|&=\prob_{x\sim UNIF(T_k)}\left[\sum_{s\in [n]} x_s \in \left[0, \frac{1}{K-k}\right)\cdot W \pmod{ W} \right]\\
&=\expect_{x_{-r}\sim UNIF(T_k)}\left[\prob_{x_r\sim UNIF([m])}\left[\sum_{s\in [n]} x_s \in \left[0, \frac{1}{K-k}\right)\cdot W \pmod{ W} \right]\right],
\end{split}
\end{equation}
where we used the fact that $T_k$ is independent of $r$ to conclude that $x_r\sim UNIF([m])$ in the inner probability regardless of the choice of $x_{-r}$. For the inner probability we get
\begin{equation*}
\begin{split}
&\prob_{x_r\sim UNIF([m])}\left[\sum_{s\in [n]} x_s \in \left[0, \frac{1}{K-k}\right)\cdot W \pmod{ W} \right]\\
&=\prob_{x_r\sim UNIF([m])}\left[x_r \in \left[\left[0, \frac{1}{K-k}\right)\cdot W -\sum_{s\in [n]\setminus \{r\}} x_s \right) \pmod{ W} \right]\\
&=\frac{1}{K-k},
\end{split}
\end{equation*}
where the last line uses the assumption that $K-k \mid W$ and $W \mid m$. Substituting this into~\eqref{eq:923hgefC}, we get
$|S_k|=\frac{1}{K-k}|T_k|$. Since by {\bf (1)} one has $|T_k|=|T_0|\cdot  (1-k/K)$, this implies that 
\begin{equation*}
\begin{split}
|S_k|&=\frac{1}{K-k}|T_k|=\frac{1}{K-k}\cdot (1-k/K)|T_0|=\frac1{K}|T_0|,
\end{split}
\end{equation*}
as required.

\end{proofof}


\section{Proofs omitted from Section~\dualref{sec:main-construction}}\label{app:size-bounds-full}

\subsection{Construction of the set $\F$}\duallabel{app:F-construction}

\begin{lemma}\label{lem:code}
For any $\e\in (0, 1)$, any integers  $m\geq 1$ and $w=(\e/2)m$, there exists a collection $\mathcal F_{m, w, \e} \subset \bool^m$  of vectors of Hamming weight $w$ with $\log |\mathcal{F}_{m, w,\e}| = \Omega(\e^2 m)$ such that for all $\u\neq \u'\in \mathcal F_{w,\e}$, $(\u, \u') < \e w$.
\end{lemma}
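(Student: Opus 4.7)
I would prove this by a standard probabilistic method / random coding argument, combined with a deletion step.

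\textbf{Plan.} Sample $N = 2^{c\e^2 m}$ vectors $\u^{(1)}, \ldots, \u^{(N)}$ independently and uniformly at random from the set $W_{m,w} \subset \{0,1\}^m$ of all vectors of Hamming weight exactly $w = (\e/2) m$, where $c > 0$ is a sufficiently small constant to be chosen. The goal is to show that with positive probability, at least $N/2$ of them have pairwise inner products strictly less than $\e w$, after which we keep those as $\mathcal{F}_{m,w,\e}$.

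\textbf{Key estimate: concentration for a single pair.} Fix a pair $i \neq j$ and, conditioning on $\u^{(i)}$, view $X := \langle \u^{(i)}, \u^{(j)} \rangle$ as the random variable counting how many of the $w$ ones of $\u^{(j)}$ fall inside the support of $\u^{(i)}$ (a set of size $w$). Then $X$ is hypergeometric with mean
\[
\mu = \expect[X] = \frac{w \cdot w}{m} = \frac{\e w}{2}.
\]
I want to bound $\Pr[X \ge \e w] = \Pr[X \ge 2\mu]$. Using Hoeffding's observation that a hypergeometric random variable is majorized in the convex order by the binomial with the same mean, the standard multiplicative Chernoff bound applies and gives
\[
\Pr[X \ge 2\mu] \le \exp(-\mu/3) = \exp(-\e^2 m / 12).
\]
This is the quantitatively crucial step: the multiplicative (rather than additive) Chernoff bound gives a $\exp(-\Omega(\e^2 m))$ bound, which is what the lemma requires. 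An additive Hoeffding bound would only give $\exp(-\Omega(\e^3 m))$ and would not suffice.

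\textbf{Union bound and deletion.} Call a pair $(i,j)$ \emph{bad} if $\langle \u^{(i)}, \u^{(j)}\rangle \ge \e w$, and let $B$ be the number of bad pairs. By linearity,
\[
\expect[B] \le \binom{N}{2} \cdot \exp(-\e^2 m / 12) \le N^2 \cdot \exp(-\e^2 m / 12).
\]
Choose $c = 1/100$ so that $N = 2^{c \e^2 m}$ satisfies $N^2 \cdot \exp(-\e^2 m / 12) \le N/4$ for all sufficiently large $m$ (and for small $m$ the statement is trivial). Hence by Markov's inequality there is a realization with $B \le N/2$. Delete one endpoint of every bad pair to remove all of them; at least $N/2 = 2^{\Omega(\e^2 m)}$ vectors survive, and all surviving pairs have inner product strictly less than $\e w$. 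These vectors form $\mathcal{F}_{m,w,\e}$, yielding $\log|\mathcal{F}_{m,w,\e}| = \Omega(\e^2 m)$.

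\textbf{Main obstacle.} There is no real obstacle — this is a textbook random construction. The only point that requires a little care is making sure one uses the multiplicative Chernoff bound (valid for hypergeometric by Hoeffding's convex-order comparison to the binomial) rather than the additive one, since the target bound $\log|\mathcal{F}| = \Omega(\e^2 m)$ precisely matches the multiplicative Chernoff exponent for the event $\{X \ge 2\expect X\}$.
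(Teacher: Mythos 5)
Your proof is correct and follows the same high-level probabilistic-method template as the paper's Lemma~\ref{lem:code}, but the two differ in all three sub-steps in a genuinely interesting way. The paper samples each $\u_j$ by partitioning $[m]$ into $w$ blocks of size $m/w$ and choosing one uniformly random coordinate from each block; this makes the summands $X_s$ in $\langle \u_i, \u_j\rangle = \sum_{s=1}^w X_s$ \emph{independent}, so the vanilla multiplicative Chernoff bound applies directly. You instead sample each $\u_j$ uniformly from all weight-$w$ vectors, which makes the inner product hypergeometric, and you invoke the (equally standard, but slightly heavier) fact that the hypergeometric distribution is dominated in the convex order by the corresponding binomial, so multiplicative Chernoff still applies. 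Both routes yield $\prob[X\ge 2\mu] \le e^{-\Omega(\mu)} = e^{-\Omega(\e^2 m)}$, and you correctly flag that this multiplicative exponent (rather than the additive/Hoeffding $\e^3 m$) is the quantitative crux. The final step also differs: the paper sets $N$ small enough that $\binom{N}{2} e^{-c\e^2 m} < 1$ and concludes by a direct union bound that with positive probability \emph{all} pairs are good, whereas you set $N$ slightly larger, bound $\expect[B]$, apply Markov, and delete one endpoint of each bad pair. Your deletion route is a hair more robust (it tolerates a constant fraction of bad pairs) at the cost of an extra step; the paper's route is shorter. Both give $\log|\mathcal F| = \Omega(\e^2 m)$ and are fully correct. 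The only thing I would add for completeness is a one-line justification for why Chernoff applies to the hypergeometric (citing Hoeffding 1963 or negative association), which you do gesture at but do not name precisely.
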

\begin{proof}
The proof is via the probabilistic method. Partition $[m]$ into $w$ subsets $I_1, \ldots, I_w$, with $|I_s|=m/w$ for $s=1,\ldots, w$. We pick $\u_1,\ldots,\u_N$ independently as follows. For every $j=1,\ldots, N$, the vector $\u_j$ includes exactly one random element of $I_s$ for each $s=1,\ldots, w$. This ensures that the Hamming weight of each $\u_j$ is exactly $w$.  

We now show that the vectors have small intersection size with high probability. Fix $i\neq j\in[N]$. Imagine $\u_i$ being fixed and picking the $w$ elements of $\u_j$ one by one. Let $X_s$ denote the indicator random variable for the event that the $s$th element of $\u_j$ (picked from $I_s$) is also in $S_i$. Then $(\u_i, \u_j) = \sum_{s=1}^w X_k$, and we set $\mu:= \expect[(\u_i, \u_j)]$. Note that $\mu=(w/m)\cdot w$, since for every $s=1,\ldots, w$ the vector $\u_i$ has exactly one nonzero coordinate in $I_s$, and the probability that $\u_j$ chooses the same coordinate is $1/|I_s|=w/m$. 
We have $\prob[(\u_i, \u_j) \ge \e w] = \prob[\sum_{s=1}^w X_s \ge 2\mu]$  The random variables $X_s$ are independent and thus the Chernoff bound yields
$$
\prob[(\u_i, \u_j) \geq  2\mu) \le \left(\frac{e}{4}\right)^\mu \le e^{-\Omega((w/m) w)}\leq e^{-c\e^2 m}
$$
for a constant $c>0$. 
Setting $N = 2^{(\ln_2 e) c \e^2 m/2}$ so that ${N \choose 2}<N^2=2^{(\ln_2 e) c \e^2 m}=e^{c \e^2 m}$, by a union bound with positive probability $|\u_i\cap \u_j| < \e w$ for all $i\neq j$, simultaneously, as desired. Note for this choice of $N$, we have $\log|\mathcal \F_{m, w,\e}| = \log N = \Theta(\e^2 m)$.
\end{proof}

\subsection{Proofs of Lemma~\dualref{lm:rect-size} and Lemma~\dualref{lm:rect-subspace-size}}

\begin{claim}\duallabel{cl:divisibility}
For every $x\in [m]^n$, every $\j\in \F$, every pair of integers $c, d$, $c\leq d$ such that $(W/w)\mid (d-c)$, if $\lambda$ divides $W/w$, 
\begin{equation*}
\begin{split}
&\left|\left\{c\leq t<d:  \weight(x+t\cdot \j) \pmod{W}\in [0, 1/\lambda)\cdot W\right\}\right|=\frac1{\lambda}\cdot (d-c).
\end{split}
\end{equation*}
\end{claim}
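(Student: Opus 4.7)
The plan is to reduce the claim to a one-dimensional counting problem about residues of $tw$ modulo $W$. Since $\j\in \F\subseteq \{0,1\}^n$ has Hamming weight $w$, for any integer $t$ we have $\weight(x+t\cdot \j)=\weight(x)+t\cdot \weight(\j)=\weight(x)+tw$. Hence the set in question is
$$
\{c\leq t<d:\ (\weight(x)+tw)\bmod W\in [0,W/\lambda)\}.
$$
So the claim reduces to showing that, as $t$ ranges over $[c,d)$ with $(W/w)\mid (d-c)$, the value $(\weight(x)+tw)\bmod W$ lies in $[0,W/\lambda)$ for exactly a $1/\lambda$ fraction of $t$.

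Next I would break $[c,d)$ into $(d-c)/(W/w)$ disjoint intervals of length $W/w$ each, and count inside one such interval, say $[c_0, c_0+W/w)$. As $t$ runs over $W/w$ consecutive integers, $tw\bmod W$ takes each value of the additive subgroup $H=\{0,w,2w,\ldots,W-w\}$ of $\mathbb{Z}/W\mathbb{Z}$ exactly once (here I use $w\mid W$, which follows from \dualref{p1}). Therefore the multiset $\{(\weight(x)+tw)\bmod W : t\in [c_0,c_0+W/w)\}$ coincides with the coset $\weight(x)+H$, and counting elements of this coset in $[0,W/\lambda)$ is purely a question about a single coset of $H$.

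To count the elements of $\weight(x)+H$ in $[0,W/\lambda)$, write $r=\weight(x)\bmod w\in [0,w)$, so the coset equals $\{r,r+w,r+2w,\ldots,r+W-w\}$. Since $\lambda\mid W/w$, we may set $q=W/(w\lambda)$, so $W/\lambda=qw$. For $j=0,1,\ldots,q-1$ we have $r+jw<w+(q-1)w=qw=W/\lambda$, so these $q$ values lie in $[0,W/\lambda)$; for $j\geq q$ we have $r+jw\geq qw=W/\lambda$, so they do not. Therefore each interval of length $W/w$ contributes exactly $q=W/(w\lambda)$ valid values of $t$.

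Finally, summing over the $(d-c)/(W/w)=(d-c)w/W$ disjoint intervals gives a total of
$$
\frac{(d-c)w}{W}\cdot \frac{W}{w\lambda}=\frac{d-c}{\lambda}
$$
valid $t$, which is the claim. The only mild subtlety is to ensure that the divisibility hypotheses are used correctly: $w\mid W$ (to get $H$ as a subgroup), $(W/w)\mid (d-c)$ (to partition into whole periods), and $\lambda\mid W/w$ (so that $W/\lambda$ is an integer multiple of $w$, making the coset count exact rather than off by one). No other properties of $\j$ beyond $\weight(\j)=w$ are used, and no properties of $x$ beyond its weight modulo $w$.
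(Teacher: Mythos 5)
Your proof is correct and takes essentially the same approach as the paper: reduce to counting, over a full period of length $W/w$, how many residues $(\weight(x)+tw)\bmod W$ fall in $[0, W/\lambda)$, and then multiply by the number of periods. Your direct partition of $[c,d)$ into $(d-c)/(W/w)$ consecutive blocks is a minor cleanup of the paper's floor-based decomposition (which must separately handle the two partial boundary blocks), and you spell out the coset-counting step that the paper leaves implicit.
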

\begin{proof} First, we write 
$$
t=u\cdot (W/w)+v,
$$ where $u=\lfloor t/(W/w)\rfloor$ and $v=t\pmod{W/w}$, so that 
\begin{equation}\duallabel{eq:394hgh23NCKFkd}
\begin{split}
\weight(x+t\cdot \j) \pmod{W}&=(\weight(x)+t\cdot w)\pmod{W}\\
&=(\weight(x)+(u\cdot (W/w)+v)\cdot w)\pmod{W}\\
&=(\weight(x) \pmod{W}+v\cdot w)\pmod{W}.\\
\end{split}
\end{equation}

Similarly, write
\begin{equation*}
\begin{split}
c&=f\cdot (W/w)+e\\
d&=g\cdot (W/w)+e,
\end{split}
\end{equation*} 
where $f=\lfloor c/(W/w)\rfloor$, $g=\lfloor c/(W/w)\rfloor$ and $e=c\pmod{W/w}=d\pmod{W/w}$ (the last equality is justified by the assumption that $ (W/w) \mid (d-c)$). With this notation in place, using~\dualeqref{eq:394hgh23NCKFkd}, we can express the set in question conveniently as
\begin{equation}
\begin{split}
&\left\{c\leq t<d:  \weight(x+t\cdot \j) \pmod{W}\in [0, 1/\lambda)\cdot W\right\}\\
&=\left\{f\cdot (W/w)+e\leq t<g\cdot (W/w)+e:\right.\\
&\left. \text{~~~~~~~~~~~~~~~~~~~~~~~~~~~~~~~~~~~~}(\weight(x) \pmod{W}+v\cdot w) \pmod{W}\in [0, 1/\lambda)\cdot W\right\}\\
=&\left\{f\cdot (W/w)\leq u\cdot (W/w)+v-e<g\cdot (W/w):\right.\\
&\left. \text{~~~~~~~~~~~~~~~~~~~~~~~~~~~~~~~~~~~~}(\weight(x) \pmod{W}+v\cdot w) \pmod{W}\in [0, 1/\lambda)\cdot W\right\}\\
\end{split}
\end{equation}
Note that for every $u$ such that 
\begin{equation}\duallabel{eq:u-bound-wh93hg3g}
f+1\leq u<g
\end{equation}
one has
\begin{equation}\duallabel{eq:923yth9dhfHF}
f\cdot (W/w)\leq u\cdot (W/w)+v-e<g\cdot (W/w)
\end{equation}
for all $v\in [W/w]=\{0, 1,\ldots, W/w-1\}$, since $e\in [W/w]=\{0, 1,\ldots, W/w-1\}$ by definition of $e$. Now since $\lambda \mid W/w$ by assumption, using~\dualeqref{eq:394hgh23NCKFkd} we get that for every $u$ that satisfies~\dualeqref{eq:u-bound-wh93hg3g} exactly $\frac1{\lambda}\cdot (W/w)$ choices for $v\in [W/w]$ lead to
\begin{equation}\duallabel{eq:8g3tgBABFBJFxmL}
(\weight(x) \pmod{W}+v\cdot w)\pmod{W}\in [0, 1/\lambda)\cdot W.
\end{equation}

It remains to note that for $u=f$ the condition in~\dualeqref{eq:923yth9dhfHF} is satisfied if and only if $e\leq v< W/w$, and for 
$u=g$ the condition in~\dualeqref{eq:923yth9dhfHF} is satisfied if and only if $0\leq v< e$ . Since $\{e, e+1,\ldots, W/w-1\}\cup \{0,1,\ldots, e-1\}=[W/w]$, we again get that overall exactly $\frac1{\lambda}\cdot (W/w)$ choices of $v$ satisfy~\dualeqref{eq:8g3tgBABFBJFxmL}.  This establishes the claim.
\end{proof}

\begin{lemma}[Intersection of a cube with a subspace]\duallabel{lm:cube-subspace-size}
For every $\I, \J\subset \F, |\I|, |\J|\leq K^2$, every $\a\in \deltagridInt^\J$, if $R=\rect(\J, \a)$, the following conditions hold.

\begin{description}
\item[(1)] For every $x\in [m]^n\setminus B$ one has
$$
(1-\e^{2/3}) \cdot \Delta^{|\I\cap \J|}\cdot G \leq \left|\subspace_\I(x)\cap R\right|\leq (1+\e^{2/3})\cdot \Delta^{|\I\cap \J|}\cdot G,
$$
where $G=(M/w)^{|\I|}$. 

\item[(2)] For every positive integer $\lambda\leq K$, if 
$$
R'=\{x\in R: \weight(x)\pmod{W}\in [0, 1/\lambda)\cdot W\},
$$
one has for every $x\in [m]^n\setminus B$
$$
(1-\e^{2/3})\cdot \frac1{\lambda}\cdot \Delta^{|\I\cap \J|}\cdot G \leq \left|\subspace_\I(x)\cap R'\right|\leq (1+\e^{2/3})\cdot \frac1{\lambda}\cdot \Delta^{|\I\cap \J|}\cdot G,
$$
where $G=(M/w)^{|\I|}$. 
\end{description}
\end{lemma}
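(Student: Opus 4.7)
The plan is to parameterize $\subspace_\I(x)$ by integer vectors $t=(t_\i)_{\i\in\I}\in\mathbb{Z}^\I$ via $y=x+\sum_{\i\in\I} t_\i\cdot \i$ and count the number of admissible $t$'s satisfying the subspace constraints (the $\ell_\infty$ bound $\|t\|_\infty\leq 2M/w$ and the block constraint $\block_\I(y)=\block_\I(x)$) together with the cube membership $y\in R$. The key tool is near-orthogonality of $\F$, namely $\langle \i,\i'\rangle\leq \e w$ for $\i\neq \i'$: a change in any single $t_{\i'}$ affects $\langle y,\i\rangle$ by at most $2(M/w)\cdot \e w=2\e M$, so over all $|\I|\leq K^2$ coordinates the cumulative cross-talk in any one direction is at most $2\e K^2 M$, which is much smaller than $\Delta M$ by~\dualref{p3},\dualref{p5},\dualref{p6}. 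Injectivity of the parameterization $t\mapsto y$ is a standard Gram-matrix argument: since $\|\sum_\i(t_\i-t'_\i)\i\|^2=w\|t-t'\|_2^2+\sum_{\i\neq\i'}(t_\i-t'_\i)(t_{\i'}-t'_{\i'})\langle \i,\i'\rangle$ and the diagonal $w$-term dominates the $O(\e w)$ off-diagonal terms when $\|t-t'\|_\infty\leq 4M/w$, we have $t=t'$ whenever $y=y'$.

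For part (1), I would first count the number of admissible $t$ satisfying only the block constraint. Fixing $(t_{\i'})_{\i'\neq\i}$, the condition $\lfloor\langle y,\i\rangle/M\rfloor=\lfloor\langle x,\i\rangle/M\rfloor$ is, by the proof of Claim~\dualref{cl:line-size}, {\bf (1)}, satisfied by an interval of exactly $M/w$ consecutive integer values of $t_\i$, up to a shift of at most $O(K^2\e M/w)$ positions caused by the cross-terms; at most an $O(K^3\e)$-fraction of $t_\i$ choices are affected. Iterating over the $|\I|\leq K^2$ directions compounds the error multiplicatively to at most $\tfrac12\e^{2/3}$ using $\e\leq\Delta^{200K^2}$ by~\dualref{p5},\dualref{p6}, giving $|\subspace_\I(x)|=(1\pm\tfrac12\e^{2/3})G$. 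Intersecting with $R$ adds, for each $\i\in \J\cap\I$, a further restriction $\langle y,\i\rangle\pmod M\in [\a_\i,\a_\i+\Delta)\cdot M$; since $\langle y,\i\rangle=\langle x,\i\rangle+t_\i w+O(K^2\e M)$, the analogue of Claim~\dualref{cl:line-size}, {\bf (2)} restricts $t_\i$ to a $\Delta$-fraction of its $M/w$ admissible values, up to an $O(K^3\e)$ error per such $\i$. In the usage of the lemma (e.g.\ Lemma~\dualref{lm:tau-of-rect}) one has $\J\subseteq \I$, so no further constraints arise in directions $\J\setminus \I$; multiplying per-coordinate factors and aggregating errors yields the target $\Delta^{|\I\cap\J|}\cdot G$ with overall $(1\pm\e^{2/3})$ error.

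For part (2), the additional weight condition $\weight(y)\pmod W\in [0,W/\lambda)\cdot W$ translates to a condition on $\sum_{\i\in\I}t_\i\pmod{W/w}$, since $\weight(y)=\weight(x)+w\sum_\i t_\i$ (using $\weight(\i)=w$ for all $\i\in\F$). Because $\lambda\mid W/w$ by~\dualref{p1}, Claim~\dualref{cl:divisibility} (applied coordinate-wise along the direction $\i^*$ of any chosen $\i^*\in\I$, with the other $t_{\i'}$ held fixed) shows that exactly a $1/\lambda$-fraction of the admissible $t$-vectors from part (1) satisfy the weight condition. The main technical obstacle will be the bookkeeping of how the near-orthogonality-induced boundary shifts compound: on each of the $|\I|+|\J|\leq 2K^2$ constraints we pay an $O(K^3\e)$ multiplicative slack, and we need to sum these slacks and absorb them into the claimed $(1\pm\e^{2/3})$ envelope. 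This is achieved using $\e\leq\Delta^{100K^2}\leq K^{-100K^2}$ by~\dualref{p3},\dualref{p5},\dualref{p6}, which gives $2K^2\cdot O(K^3\e)\ll \e^{2/3}$, so the compounded error is dominated by $\e^{2/3}$ in both directions of the bound.
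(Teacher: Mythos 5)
Your approach is fundamentally the same as the paper's: parameterize $\subspace_\I(x)$ by the coefficient vector $t\in\mathbb{Z}^\I$, exploit near-orthogonality ($\langle\i,\i'\rangle\leq\e w$) to show that the cross-talk contribution to $\langle y,\k\rangle$ is $O(\e|\I|M)\ll\delta M$, count admissible $t$'s interval-by-interval, and use Claim~\dualref{cl:divisibility} for part (2) along a single chosen direction with the remaining $t$'s held fixed. The paper implements this counting with explicit $\eta$-margins (eventually setting $\eta=\e K^3$) to get matching upper and lower bounds on the number of admissible $t$'s; your error bookkeeping is looser but lands in the same $\e^{2/3}$ envelope. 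One genuine improvement in your sketch is the explicit Gram-matrix argument for injectivity of $t\mapsto x+\sum_\i t_\i\cdot\i$ — the paper equates the count of admissible $t$'s with the cardinality of $\subspace_\I(x)\cap R$ without stating why the map is one-to-one on the relevant range.

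The one place where your argument is incomplete is the same place the paper's own proof is shaky: directions $\k\in\J\setminus\I$. You acknowledge the issue but escape by appealing to how the lemma is \emph{used} (``in the usage of the lemma \ldots one has $\J\subseteq\I$''); that is not a proof of the lemma as stated, which quantifies over arbitrary $\I,\J$ with $|\I|,|\J|\leq K^2$. For $\k\in\J\setminus\I$ no coefficient $t_\k$ exists, so $\langle y,\k\rangle$ is pinned to within $2\e|\I|M$ of $\langle x,\k\rangle$ over the entire subspace; the constraint $\langle y,\k\rangle\pmod M\in[\a_\k,\a_\k+\Delta)\cdot M$ is therefore satisfied by essentially all of $\subspace_\I(x)$ or essentially none of it, depending only on $x$, and cannot contribute a factor of $\Delta$ to the count. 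The lemma's bound $\Delta^{|\I\cap\J|}\cdot G>0$ for \emph{every} $x\in[m]^n\setminus B$ therefore cannot hold unless $x$ already satisfies (with margin) the $\J\setminus\I$ constraints. The paper's own verification of {\bf (a)} for $\k\in\J\setminus\I$ simply invokes~\dualeqref{eq:3hg9034hgHDH23ihf} (which bounds the deviation $|\langle x',\k\rangle-\langle x,\k\rangle|$) and asserts the constraint holds — the same unjustified step. So the gap you flagged is real, but it is inherited from the statement rather than introduced by you; if you want to close it, state the additional hypothesis that $\langle x,\k\rangle\pmod{M}\in[\a_\k+\delta,\a_\k+\Delta-\delta)\cdot M$ for all $\k\in\J\setminus\I$ (or simply $\J\subseteq\I$), which is what every downstream invocation in the paper actually supplies.

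Finally, for part (2): you should pick the special direction $\i^*\in\I\cap\J$ when $\I\cap\J\neq\emptyset$ and $\i^*\in\I$ arbitrary otherwise, as the paper does, because the range over which $t_{\i^*}$ varies is a $\Delta$-fraction of $M/w$ in the first case and all of $M/w$ in the second; conflating the two would give the wrong normalization when you invoke Claim~\dualref{cl:divisibility}.
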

\begin{proof}
For $t\in \mathbb{Z}^\I$ with $||t||_\infty\leq 2M/w$ consider
\begin{equation}\duallabel{eq:0923hg9hg94FEGF}
x'=x+\sum_{\i\in \I} t_\i \cdot \i,
\end{equation}
and note that every vertex in $\subspace_\I(x)$ can be written in this form by Definition~\dualref{def:subspace}. Since $x$ is not a boundary point, i.e. $x\in [m]^n\setminus B$ (see Definition~\dualref{def:boundary}), one has $x'\in [m]^n$ for every such $t$. Thus, it suffices to bound the number of choices of such coefficients $t$ that result in both $\block_\I(x')=\block_\I(x)$ and $x'\in R$ to prove {\bf (1)} and similarly bound the number of choices of $t$ that result in both $\block_\I(x')=\block_\I(x)$ and $x'\in R'$ to prove {\bf (2)}. We do this in what follows.

\paragraph{Notation and basic properties of $x'$.} We start by noting some basic properties of $x'$. First note that for every $\k\in \I\cap \J$
\begin{equation}\duallabel{eq:3hg9034hgHDH}
\begin{split}
\left|\langle x', \k\rangle-(\langle x, \k\rangle+t_\k\cdot w)\right|&=\left|\langle x+\sum_{\i\in \I} t_\i\cdot \i, \k\rangle-(\langle x, \k\rangle+t_\k\cdot w)\right|\\
&=\left|\sum_{\i\in \I\setminus \{\k\}} t_\i\cdot \langle \i, \k\rangle\right|\\
&\leq \sum_{\i\in \I\setminus \{\k\}} t_\i\cdot |\langle \i, \k\rangle|\\
&\leq \e |\I|\cdot ||t||_\infty\cdot w\\
&\leq (2\e |\I|)\cdot M.\\
\end{split}
\end{equation}
and for every $\k\in \J\setminus \I$ 
\begin{equation}\duallabel{eq:3hg9034hgHDH23ihf}
\begin{split}
\left|\langle x', \k\rangle-\langle x, \k\rangle\right|&=\left|\langle x+\sum_{\i\in \I} t_\i\cdot \i, \k\rangle-\langle x, \k\rangle\right|\\
&=\left|\sum_{\i\in \I} t_\i\cdot \langle \i, \k\rangle\right|\\
&\leq \sum_{\i\in \I} t_\i\cdot |\langle \i, \k\rangle|\\
&\leq \e |\I|\cdot ||t||_\infty\cdot w\\
&\leq (2 \e |\I|) M.\\
\end{split}
\end{equation}

For every $\k\in \I$ define 
\begin{equation}\duallabel{eq:qk-def-249gt9243}
q_\k=\left\lfloor \frac{1}{w}\left(\langle x, \k\rangle \pmod{M}\right)\right\rfloor
\end{equation}
for convenience, and note that 
\begin{equation}\duallabel{eq:9024ht92htas4D}
0\leq \langle x, \k\rangle \pmod{M}-w\left\lfloor \frac{1}{w}\left(\langle x, \k\rangle \pmod{M}\right)\right\rfloor< w.
\end{equation}

Fix $\eta\in (0, 1/10)$, and assume that $\eta$ satisfies
\begin{equation}\duallabel{eq:92cnnjdbfsfs}
\eta>2w/M\text{~~and~~}\eta\geq 5 \e |\I|.
\end{equation}

\paragraph{Lower bound.}  We now prove that any  $t$ such that
\begin{equation}\duallabel{eq:t-range}
-q_\k+\frac{M}{w}\cdot (\a_\k+\eta) \leq t_\k<-q_\k+\frac{M}{w}\cdot (\a_\k+\Delta-\eta)
\end{equation}
for all $\k\in \I\cap \J$ and 
\begin{equation}\duallabel{eq:t-range-unconstrained}
-q_\k+\frac{M}{w}\cdot \eta \leq t_\k<-q_\k+\frac{M}{w}\cdot (1-\eta)
\end{equation}
for $\k\in \I\setminus \J$ satisfies 
\begin{description}
\item[(a)] $x'=x+\sum_{\i\in \I} t_\i \cdot \i\in \subspace_\I(x)\cap R$ as long as $\eta$ is not too small (recall that $q_\k$ is defined in~\dualeqref{eq:qk-def-249gt9243});
\item[(b)] $\block_\I(x')=\block_\I(x)$.
\end{description}
The two bounds above show that any $t$ that satisfies both~\dualeqref{eq:t-range} and~\dualeqref{eq:t-range-unconstrained} leads to $x'\in \subspace_\I(x)\cap R$. Counting the number of settings of $t$ that satisfy these constraints, we get
\begin{equation}\duallabel{eq:lb-8gf8gfsf}
\begin{split}
|\subspace_\I(x)\cap R|&\geq (M/w)^{|\I|} (1-4\eta)^{|\I\setminus \J|} (\Delta-4\eta)^{|\I\cap \J|}\\
&\geq (M/w)^{|\I|} \Delta^{|\I\cap \J|}(1-4\eta/\Delta)^{|\I|},
\end{split}
\end{equation}
where we used the fact that since $\eta>2w/M$ by assumption, we have $\lceil \eta M/w\rceil\leq 2\eta M/w$.

We start with {\bf (a)}. We verify that the dot product of every $x'$ as above with $\k\in \I\cap \J$ satisfies
\begin{equation}\duallabel{eq:8t8g38tg8gt893gt89ga8g8fg8gf}
\langle x',\k\rangle \pmod{M}\in  [\a_k, \a_k+\Delta)\cdot M.
\end{equation} 
First, for $k\in \I\cap \J$, using~\dualeqref{eq:3hg9034hgHDH}, it suffices to show that 
$$
\langle x, \k\rangle \pmod{M}+t_\k\cdot w\in [\a_k, \a_k+\Delta)\cdot M,
$$
as well as show that the quantity on the lhs above does not fall too close to the boundary of the interval on the rhs (to ensure that the error terms in~\dualeqref{eq:3hg9034hgHDH} can be absorbed). 

We have using the upper bound on $t_\k$ from~\dualeqref{eq:t-range} as well as~\dualeqref{eq:9024ht92htas4D}
\begin{equation*}
\begin{split}
\langle x, \k\rangle\pmod{M}+t_\k\cdot w&\leq \langle x, \k\rangle \pmod{M}-w\cdot \q_\k+(\a_\k+\Delta-\eta)\cdot M\text{~~~~~~~~~~~(by~\dualeqref{eq:t-range})}\\
&= (\langle x, \k\rangle \pmod{M}-w\cdot \q_\k)+(\a_\k+\Delta-\eta)\cdot M\\
&\leq  w+(\a_\k+\Delta-\eta)\cdot M\text{~~~~~~~~~~~~~~~~~~~~~~~~~~~~~~~~~~~~~~~~~~~~~~~~~~~~~~~(by~\dualeqref{eq:9024ht92htas4D})}\\
&= (\a_\k+\Delta+\frac{w}{M}-\eta)\cdot M.\\
\end{split}
\end{equation*}
We also have using the lower bound on $t_\k$ from~\dualeqref{eq:t-range}as well as~\dualeqref{eq:9024ht92htas4D}
\begin{equation*}
\begin{split}
\langle x, \k\rangle\pmod{M}+t_\k\cdot w&\geq  \langle x, \k\rangle \pmod{M}-w\cdot \q_\k+(\a_\k+\eta)\cdot M\text{~~~~~~~~~~~~~~~~~~~~~~~~~~~(by~\dualeqref{eq:t-range})}\\
&= (\langle x, \k\rangle \pmod{M}-w\cdot \q_\k)+(\a_\k+\eta)\cdot M\\
&\geq (\a_\k+\eta)\cdot M.\text{~~~~~~~~~~~~~~~~~~~~~~~~~~~~~~~~~~~~~~~~~~~~~~~~~~~~~~~~~~~~~~~~~~~~~~~~~~(by~\dualeqref{eq:9024ht92htas4D})}\\
\end{split}
\end{equation*}
The two bounds together imply that for all $t$ satisfying ~\dualeqref{eq:t-range} one has
$$
(\a_\k+\eta)\cdot M \leq (\langle x, \k\rangle +t_\k\cdot w)\pmod{M}\leq (\a_\k+\Delta+\frac{w}{M}-\eta)\cdot M.
$$
We also note that $\q_\k\in [0, M/w)$, implying that one has $|t_\k|\leq 2M/w$ for all $\k\in \I$ for every $t$ satisfying~\dualeqref{eq:t-range} and~\dualeqref{eq:t-range-unconstrained} as long as $\eta<1$. Combining this with~\dualeqref{eq:3hg9034hgHDH}, we get that for every $t$ satisfying~\dualeqref{eq:t-range} the point $x'=x+\sum_{\i\in \I} t_\i\cdot \i$ satisfies ~\dualeqref{eq:8t8g38tg8gt893gt89ga8g8fg8gf} for $t\in \I\cap \J$ by~\dualeqref{eq:92cnnjdbfsfs}.
Similarly, we get using~\dualeqref{eq:3hg9034hgHDH23ihf} that for every $t$ satisfying~\dualeqref{eq:t-range} and~\dualeqref{eq:t-range-unconstrained} the point $x'=x+\sum_{\i\in \I} t_\i\cdot \i$ satisfies ~\dualeqref{eq:8t8g38tg8gt893gt89ga8g8fg8gf} for $t\in \J\setminus \I$ as long as 
~\dualeqref{eq:92cnnjdbfsfs} holds.

We now establish {\bf (b)}. Note that for every $t$ satisfying~\dualeqref{eq:t-range} and~\dualeqref{eq:t-range-unconstrained} and every $\k\in \I$ one has, using~\dualeqref{eq:3hg9034hgHDH} and~\dualeqref{eq:3hg9034hgHDH23ihf} that
\begin{equation*}
\begin{split}
-q_\k\cdot w+\eta M-(2\e |\I|)M\leq \langle x', \k\rangle-\langle x, k\rangle&\leq -q_\k\cdot w+(1-\eta)M+(2\e \I)M.
\end{split}
\end{equation*}
Indeed, this follows directly from~\dualeqref{eq:t-range-unconstrained} for $\k\in \I\setminus \J$, and follows from~\dualeqref{eq:t-range} for $\k\in \I\cap \J$ by recalling that $\a_\k\in \deltagridInt\subseteq [0, 1-\Delta]$. Rearranging the terms and using~\dualeqref{eq:92cnnjdbfsfs} , we get
\begin{equation*}
\begin{split}
\langle x, \k\rangle-q_\k\cdot w< \langle x', \k\rangle&< \langle x, \k\rangle-q_\k\cdot w+M.
\end{split}
\end{equation*}
By definition of $q_\k$ (see~\dualeqref{eq:qk-def-249gt9243}) we have $0\leq q_\k\cdot w< (\langle x, \k\rangle \pmod{M})$. Thus, the above implies
\begin{equation*}
\begin{split}
\left\lfloor \frac1{M}\langle x', \k\rangle\right \rfloor&=\left\lfloor \frac1{M}\langle x, \k\rangle\right \rfloor
\end{split}
\end{equation*}
for all $\k\in \I$, and therefore $\block_\I(x')=\block_\I(x)$. Since $||t||_\infty\leq 2M/w$, we get that $x'\in \subspace_\I(x)$.

\paragraph{Upper bound.} We now upper bound the number of choices for $t$ such that $x'$ as in~\dualeqref{eq:0923hg9hg94FEGF} belongs to $\subspace_\I(x)\cap R$. We first note that every such $t$ that leads to $x'\in \subspace_\I(x)\cap R$ must satisfy
\begin{description}
\item[(a)] for all $\k\in \I$
\begin{equation}\duallabel{eq:t-box}
-\q_\k-\eta \frac{M}{w}\leq t_\k\leq -q_\k+(1+\eta) \frac{M}{w}
\end{equation}

\item[(b)] for all $k\in \I\cap \J$
\begin{equation}\duallabel{eq:t-range-ex-ij}
t_\k\not \in \left[-q_\k+\eta \frac{M}{w}, -q_\k+\frac{M}{w}\cdot (\a_\k-\eta)\right)\cup \left(-q_\k+\frac{M}{w}\cdot (\a_\k+\Delta+\eta), -q_\k+(1-\eta)\frac{M}{w}\right]
\end{equation}
\end{description}

We start by proving {\bf (a)}. Suppose that~\dualeqref{eq:t-box} is not true for some $\k\in \I$. We assume that $t_\k\leq -\q_\k-\eta \frac{M}{w}$ (the other case is analogous). Then one has
\begin{equation*}
\begin{split}
\langle x', \k\rangle&\leq \langle x, \k\rangle-\q_\k-\eta M+\sum_{\i\in \I\setminus \{\k\}} |t_\i|\cdot \langle \i, \k\rangle\\
&\leq (w/M-\eta+(2\e |\I|)) M,
\end{split}
\end{equation*}
where we upper bounded the difference of the first two terms on the rhs by $w$ as per~\dualeqref{eq:9024ht92htas4D}, and used the fact that $||t||_\infty\leq 2M/w$ to upper bound the last term.  Since $\eta>2w/M$ and $\eta\geq 5\e |\I|$,  we get
$\lfloor \langle x', \k\rangle/M\rfloor<\lfloor \langle x, \k\rangle/M\rfloor$, and hence $\block_\I(x')\neq \block_\I(x)$.

We now prove {\bf (b)}. We consider two cases. 

\noindent{\bf Case 1:} Suppose that $-\q_\k+\eta \frac{M}{w}\leq t_\k<-\q_\k+\frac{M}{w}\cdot (\a_\k-\eta)$. Since
\begin{equation*}
\begin{split}
\langle x, \k\rangle \pmod{M}-w\cdot \q_\k+(\a_\k-\eta)\cdot M&\leq (\a_\k+\frac{w}{M}-\eta)\cdot M\text{~~~~~~~~~~~~~(by~\dualeqref{eq:9024ht92htas4D})}\\
&\text{~and}\\
\langle x, \k\rangle \pmod{M}-w\cdot \q_\k+\eta\cdot M&\geq \eta\cdot M,\\
\end{split}
\end{equation*}
we get, using~\dualeqref{eq:3hg9034hgHDH} and ~\dualeqref{eq:92cnnjdbfsfs} together with the assumption that $\eta<1/10$ and the fact that $\Delta\leq 1/2$ by~\dualref{p3}, that $\langle x', \k\rangle\pmod{M}\not \in [\a_\k, \a_\k+\Delta)\cdot M$.\

\noindent{\bf Case 2:} Suppose that $-q_\k+\frac{M}{w}\cdot (\a_\k+\Delta+\eta)< t_\k \leq -q_\k+(1-\eta) \frac{M}{w}$. Then we have
\begin{equation*}
\begin{split}
\langle x, \k\rangle \pmod{M}-w\cdot \q_\k+(\a_\k+\Delta+\eta)\cdot M&\geq (\a_\k+\Delta+\eta)M\text{~~~~~~~~~~~~~(by~\dualeqref{eq:9024ht92htas4D})}\\
&\text{~and}\\
\langle x, \k\rangle \pmod{M}-w\cdot \q_\k+(1+\eta)\cdot M&\leq (1+w/M-\eta)\cdot M,
\end{split}
\end{equation*}
and hence using~\dualeqref{eq:3hg9034hgHDH} and~\dualeqref{eq:92cnnjdbfsfs} together with the assumption that $\eta<1/10$ we get $\langle x', \k\rangle\not \in [\a_\k, \a_\k+\Delta)\cdot M$.

Counting the number of settings for $t$ that satisfy both {\bf (a)} and {\bf (b)}, we get
\begin{equation}\duallabel{eq:lb-8gf8gfsf}
|\subspace_\I(x)\cap R|\leq (M/w)^{|\I|}(1+4\eta)^{|\I\setminus \J|}(\Delta+4\eta)^{|\I\cap \J|}\leq (M/w)^{|\I|} \Delta^{|\I\cap \J|}(1+4\eta/\Delta)^{|\I|}.
\end{equation}

\paragraph{Gathering bounds and setting the parameter $\eta$.}
We now let
\begin{equation}\duallabel{eq:setting-of-eta}
\eta=\e\cdot K^3,
\end{equation} 
so that
\begin{equation}\duallabel{eq:3904h9h9thewgFF}
\begin{split}
(1+4\eta/\Delta)^{|\I|}&\leq (1+4\eta/\Delta)^{K^2}\text{~~~~~~~~~~~~~~~~~~~~~(since $|\I|\leq K^2$)}\\
&\leq (1+4\e K^3/\Delta)^{K^2}\text{~~~~~~~~~~~~~~~~(by setting of $\eta$)}\\
&\leq (1+4\e K^3 \cdot K^K)^{K^2}\text{~~~~~~~~~~~~(since $\Delta\geq K^{-K}$ by~\dualref{p3})}\\
&\leq 1+8\e K^{K+5}\text{~~~~~~~~~~~~~~~~~~~~~~~~(since $4\e K^5 \cdot K^K<1$ by~\dualref{p6}, ~\dualref{p5} and~\dualref{p3})}\\
&\leq 1+\e^{2/3} (8\e^{1/3} K^{K+5})\\
&\leq 1+\e^{2/3}/3\text{~~~~~~~~~~~~~~~~~~~~~~~~~~~~(by~\dualref{p6})}.
\end{split}
\end{equation}
The last transition uses the fact that 
\begin{equation*}
\begin{split}
8\e^{1/3} K^{K+5}&\leq 8\delta^{2/3} K^{K+5}\text{~~~~~~~~~~~~~~~~~~~~~~~~~~~~(by~\dualref{p5})}\\
&\leq 8K^{-50K^2} K^{K+5}\text{~~~~~~~~~~~~~~~~~~~~~(by~\dualref{p3} and~\dualref{p5})}\\
&\leq 1/3,
\end{split}
\end{equation*}
since $K$ is larger than an absolute constant.
Similarly we have $(1-4\eta/\Delta)^{|\I|}\geq 1-\e^{2/3}/3$. We also verify that our setting of $\eta$ in~\dualeqref{eq:setting-of-eta} satisfies conditions in~\dualeqref{eq:92cnnjdbfsfs}. First, we have $\eta=\e K^3\geq 5\e |\I|$ since $|\I|\leq K^2$ by assumption and $K$ is larger than an absolute constant. We also have $\eta>2w/M$ by~\dualref{p7}. This completes the proof of {\bf (1)}.

We now prove {\bf (2)}, the second bound of the lemma. We consider two cases, depending on whether $\I\cap \J\neq \emptyset$.

\paragraph{Case 1: $\I\cap \J\neq \emptyset$.} Consider any choice of $t$ that satisfies~\dualeqref{eq:t-range} and~\dualeqref{eq:t-range-unconstrained}, which by our analysis above leads to $x'\in \subspace_\I(x)$. Now select $\k_*\in \I\cap \J$ arbitrarily, and let $t_{\k_*}$ vary in the range
\begin{equation}\duallabel{eq:t-constraints}
-q_\k+\frac{M}{w}\cdot \a_\k \leq t_{\k_*}<-q_\k+\frac{M}{w}\cdot (\a_\k+\Delta).
\end{equation}
We have per~\dualeqref{eq:0923hg9hg94FEGF} together with the fact that $|\u|=w$ for all $\u\in \F$ that  
$$
\weight(x')=\weight(x)+\sum_{\k\in \I} w\cdot t_\k.
$$ 
Letting $z=x'+\sum_{\k\in \I\setminus \{\k_*\}} \k\cdot t_\k$, we get by Claim~\dualref{cl:divisibility}
\begin{equation*}
\begin{split}
&|\{-q_{\k_*}+\a_{\k_*}\cdot \frac{M}{w}\leq t_{\k_*}<-q_{\k_*}+(\a_{\k_*}+\Delta)\cdot \frac{M}{w}:\\
&~~~~~~~~~~~~~~~~~~~~~~~~~~~~~~~~~~~\left.\left. \weight(z+t\cdot \k_*) \pmod{W}\in [0, 1/\lambda)\cdot W\right\}\right|=\frac1{\lambda}\cdot \Delta\cdot \frac{M}{W}.
\end{split}
\end{equation*}
Note that the preconditions of Claim~\dualref{cl:divisibility} are satisfied since $\frac{W}{w} \mid \Delta \cdot \frac{M}{w}$  by~\dualref{p2} and $\lambda \mid \frac{W}{w}$ since $\lambda \leq K$ is a positive integer by assumption of the lemma as well as~\dualref{p1}.

Since our analysis above shows that every $t$ that satisfies~\dualeqref{eq:t-range} and~\dualref{eq:t-range-unconstrained} leads to $x' \in R$, we get
\begin{equation}\duallabel{eq:823g8gasgfu}
\begin{split}
\left|\subspace_\I(x)\cap R'\right|\geq \left(\frac1{\lambda}\Delta-4\eta\right)\cdot (M/w)^{|\I|}(1-4\eta)^{|\I\setminus \J|}(\Delta+4\eta)^{|\I\cap \J|-1},
\end{split}
\end{equation}
where the $\frac1{\lambda}\cdot \Delta-4\eta$ term above is due to the fact that by~\dualeqref{eq:823g8gasgfu} one has
\begin{equation*}
\begin{split}
&|\{-q_{\k_*}+(\a_{\k_*}+\eta)\cdot \frac{M}{w}\leq t_{\k_*}<-q_{\k_*}+(\a_{\k_*}+\Delta-\eta)\cdot \frac{M}{w}:\\
&~~~~~~~~~~~~~~~~~~~~~~~~~~~~~~~~~~~\left.\left. \weight(z+t\cdot \k_*) \pmod{W}\in [0, 1/\lambda)\cdot W\right\}\right|\\
\geq &|\{-q_{\k_*}+\a_{\k_*}\cdot \frac{M}{w}\leq t_{\k_*}<-q_{\k_*}+(\a_{\k_*}+\Delta)\cdot \frac{M}{w}:\\
&~~~~~~~~~~~~~~~~~~~~~~~~~~~~~~~~~~~\left.\left. \weight(z+t\cdot \k_*) \pmod{W}\in [0, 1/\lambda)\cdot W\right\}\right|-4\eta M/w\\
&\geq \left(\frac1{\lambda}\cdot \Delta-4\eta\right)\cdot \frac{M}{w},
\end{split}
\end{equation*}
as  the assumption $\eta>2w/M$ implies that $\lceil \eta M/w\rceil\leq 2\eta M/w$. Similarly, we get
\begin{equation*}
\begin{split}
\left|\subspace_\I(x)\cap R'\right|\leq \left(\frac1{\lambda}\cdot \Delta+4\eta\right)\cdot (M/w)^{|\I|}(1+4\eta)^{|\I\setminus \J|}(\Delta+4\eta)^{|\I\cap \J|-1}.
\end{split}
\end{equation*}
Similarly to~\dualeqref{eq:3904h9h9thewgFF} we get
$$
\frac1{\lambda}\cdot \Delta+4\eta\leq (1+\e^{2/3}/4)\frac1{\lambda} \text{~~~~and~~~}\frac1{\lambda}\cdot \Delta-4\eta\geq (1-\e^{2/3}/4)\frac1{\lambda} 
$$
since $\lambda$ is a positive integer bounded by $K$ by assumption. Thus,
\begin{equation*}
\begin{split}
(1-\e^{2/3})\frac1{\lambda}\cdot (M/w)^{|\I|} \Delta^{|\I\cap \J|} \leq \left|\subspace_\I(x)\cap R'\right|\leq (1+\e^{2/3}) \frac1{\lambda}\cdot (M/w)^{|\I|}\Delta^{|\I\cap \J|}
\end{split}
\end{equation*}
as required.

\paragraph{Case 2: $\I\cap \J=\emptyset$.} The proof is similar to {\bf Case 1} above. Consider any choice of $t$ that satisfies~\dualeqref{eq:t-range} and~\dualeqref{eq:t-range-unconstrained}, which by our analysis above leads to $x'\in \subspace_\I(x)$. Now select $\k_*\in \I$ arbitrarily, and let $t_{\k_*}$ vary in the range
\begin{equation}\duallabel{eq:t-constraints}
-q_\k \leq t_{\k_*}<-q_\k+\frac{M}{w}. 
\end{equation}
We have per~\dualeqref{eq:0923hg9hg94FEGF} that  $\weight(x')=\weight(x)+\sum_{\k\in \I} w\cdot t_\k$. Letting $z=x'+\sum_{\k\in \I\setminus \{\k_*\}} \k\cdot t_\k$, we get by Claim~\dualref{cl:divisibility}
\begin{equation*}
\begin{split}
&|\{-q_{\k_*}\leq t_{\k_*}<-q_{\k_*}+\frac{M}{w}:\\
&~~~~~~~~~~~~~~~~~~~~~~~~~~~~~~~~~~~\left.\left. \weight(z+t\cdot \k_*) \pmod{W}\in [0, 1/\lambda)\cdot W\right\}\right|=\frac1{\lambda}\cdot \frac{M}{W}.
\end{split}
\end{equation*}
Note that the preconditions of Claim~\dualref{cl:divisibility} are satisfied since $\frac{W}{w} \mid  \frac{M}{w}$  by~\dualref{p2} and $\lambda \mid \frac{W}{w}$ since $\lambda \leq K$ is a positive integer by assumption of the lemma as well as~\dualref{p1}.

Similarly to {\bf Case 1}, we now get
\begin{equation*}
\begin{split}
\left|\subspace_\I(x)\cap R'\right|&\leq \left(\frac1{\lambda}+4\eta\right)\cdot (M/w)^{|\I|}(1+4\eta/\Delta)^{|\I|}\\
\left|\subspace_\I(x)\cap R'\right|&\geq \left(\frac1{\lambda}-4\eta\right)\cdot (M/w)^{|\I|}(1-4\eta/\Delta)^{|\I|},
\end{split}
\end{equation*}
and
\begin{equation*}
\begin{split}
(1-\e^{2/3})\frac1{\lambda}\cdot (M/w)^{|\I|} \leq \left|\subspace_\I(x)\cap R'\right|\leq (1+\e^{2/3}) \frac1{\lambda}\cdot (M/w)^{|\I|}
\end{split}
\end{equation*}
as required.
\end{proof}

We now give a proof of Lemma~\dualref{lm:rect-subspace-size}, restated here for convenience of the reader:

\noindent{\em {\bf Lemma~\dualref{lm:rect-subspace-size}} (Restated)

For every $\I, \J\subset \F, |\I|, |\J|\leq K^2$, every $\a, \b\in \deltagrid^\J, \a<\b$, if $R=\rect(\J, \a, \b)$ is a rectangle such that $\gamma:=\prod_{\i\in \I\cap \J} (\b_\i-\a_\i)$, the following conditions hold.

\begin{description}
\item[(1)] For every $x\in [m]^n\setminus B$ one has
$$
(1-\e^{2/3}) \cdot \gamma\cdot G \leq \left|\subspace_\I(x)\cap R\right|\leq (1+\e^{2/3})\cdot \gamma\cdot G,
$$
where $G=(M/w)^{|\I|}$. 

\item[(2)] For every positive integer $\lambda\leq K$ such that $\lambda \mid W/w$, if 
$$
R'=\{x\in R: \weight(x)\pmod{W}\in [0, 1/\lambda)\cdot W\},
$$
one has for every $x\in [m]^n\setminus B$
$$
(1-\e^{2/3})\cdot \frac1{\lambda}\cdot \gamma\cdot G \leq \left|\subspace_\I(x)\cap R'\right|\leq (1+\e^{2/3})\cdot \frac1{\lambda}\cdot \gamma\cdot G,
$$
where $G=(M/w)^{|\I|}$. 
\end{description}}
\begin{proof}
We have $R=\rect(\I, \a, \b)=\bigcup_{\q\in Q} \rect(\I, \q)$ for a subset $Q$ of $\deltagrid^\I$ by Claim~\dualref{cl:subcubes-decomp}, and hence by Lemma~\dualref{lm:cube-subspace-size} one has
\begin{equation*}
\begin{split}
\left|\rect(\I, \a, \b)\cap \subspace_\I(x)\right|&=\left|\bigcup_{\q\in Q} \rect(\I, \q)\cap \subspace_\I(x)\right|\\
&=\sum_{\q\in Q} |\rect(\I, \q)\cap \subspace_\I(x)|.
\end{split}
\end{equation*}
The result now follows by Lemma~\dualref{lm:cube-subspace-size}. 
\end{proof}

We now give a proof of Lemma~\dualref{lm:rect-size}, restated here for convenience of the reader:

\noindent{\em {\bf Lemma~\dualref{lm:rect-size}} (Bounds on sizes of rectangles)
For every $\I\subseteq \F$ such that $|\I|\leq K^2$, for every $\c, \d\in (\deltagrid)^\I, \c< \d$, $\gamma:=\prod_{\i\in \I} (\d_\i-\c_\i)$, $R=\rect(\I, \c, \d)$,  for every positive integer $\lambda\leq K$ such that $\lambda \mid W/w$, if 
$$
R'=\{x\in R: \weight(x)\pmod{W}\in [0, 1/\lambda)\cdot W\},
$$
the following conditions hold:

\begin{description}
\item[(1)] the cardinality of $R$ is bounded as
$$
(1-\sqrt{\e})\gamma \leq |R|/m^n\leq (1+\sqrt{\e})\cdot \gamma
$$
\item[(2)] the cardinality of $R'$ is bounded as
$$
\frac1{\lambda}\cdot (1-\sqrt{\e})\gamma \leq \left|R'\right|/m^n\leq \frac1{\lambda}\cdot (1+\sqrt{\e}) \gamma.
$$
\end{description}
}

\begin{proof} Let $C\subset [m]^n$ denote a minimial $\I$-subspace cover as per Definition~\dualref{def:subspace-cover}, i.e. a collection of $x$ such that $\bigcup_{x\in C} \subspace_\I(x)=[m]^n$ and $\subspace_\I(x)\cap \subspace_\I(x')=\emptyset$ for $x, x'\in C$, $x\neq x'$. We have 
\begin{equation}\duallabel{eq:90yt9h9h9h9fehHF}
\begin{split}
|\rect(\I, \c, \d)|&=\sum_{x\in C} |\rect(\I, \c, \d)\cap \subspace_\I(x)|\\
&=\sum_{x\in C\setminus B} |\rect(\I, \c, \d)\cap \subspace_\I(x)|+\sum_{x\in B} |\rect(\I, \c, \d)\cap \subspace_\I(x)|\\
\end{split}
\end{equation}
First note that the second term above is upper bounded by
\begin{equation}\duallabel{eq:0923hg9h2gifhDIHds}
|B| \cdot (5M/w)^{|\I|}\leq \frac1{n^{10}}\cdot m^n\cdot (5M/w)^{|\I|}\leq \e^{2/3}\cdot \gamma \cdot m^n,
\end{equation}
where we used that fact that for every $x\in [m]^n$ and every $\I\subseteq \F$ one has $|\subspace_\I(x)|\leq (5M/w)^{|\I|}$ due to the fact that the integer vector $t$ of coefficients in the definition of $\subspace_\I(x)$ is constrained to be bounded by $2M/w$ coordinatewise, as well as the fact that $n$ is sufficiently large as a function of $M, W, K, L, \Delta, \delta$ and $\e$.

For $x\in C\setminus B$ we have  by Lemma~\dualref{lm:rect-subspace-size}
$$
G\cdot (1-\e^{2/3}) \leq \left|\subspace_\I(x)\cap \rect(\I, \c, \d)\right|\leq G\cdot (1+\e^{2/3}),
$$
where $G=(M/w)^{|\I|} \prod_{\i\in \I} (\d_\i-\c_\i)$. Summing over all $x\in C$ as per~\dualeqref{eq:90yt9h9h9h9fehHF} and using the upper bound on the second term of~\dualeqref{eq:90yt9h9h9h9fehHF} provided by~\dualeqref{eq:0923hg9h2gifhDIHds} gives the result.
\end{proof}

\subsection{Proof of Lemma~\dualref{lm:partition}}\duallabel{app:partition-lemma}

\begin{proofof}{Lemma~\dualref{lm:partition}}
We start by noting that by~\dualeqref{eq:p-def} and~\dualeqref{eq:q-def}
\begin{equation}\duallabel{eq:pqs}
P\cup Q=S^0\cup \left(\bigcup_{\ell\geq 0} T^\ell\right)\cup \Upsilon_{even}\cup \Upsilon_{odd}.
\end{equation}
For every $\ell\in [L]$ let $Z^\ell$ be as in Lemma~\dualref{lm:t-star-ell-rec}, so that
\begin{equation}\duallabel{eq:923yt9haIEG}
T_*^\ell=Z^\ell\cup \left(\nu_{L-1, L-1-\ell}(T_*^{L-1})\cup \bigcup_{j=1}^{L-1-\ell} \nu_{\ell+j, j}(T^{\ell+j}\setminus T_*^{\ell+j})\right),
\end{equation}
where
$$
Z^\ell=T_*^\ell\setminus \left(\nu_{L-1, L-1-\ell}(T_*^{L-1})\cup \bigcup_{j=1}^{L-1-\ell} \nu_{\ell+j, j}(T^{\ell+j}\setminus T_*^{\ell+j})\right).
$$
and
\begin{equation}\duallabel{eq:8ttFTxjiHFHFX}
|Z^\ell|\leq K^L \delta^{1/4} \cdot |P|
\end{equation}
by Lemma~\dualref{lm:t-star-ell-rec}. Adding the $j=0$ term to the rhs of~\dualeqref{eq:923yt9haIEG} and $T^\ell\setminus T_*^\ell$ to the lhs, we get, recalling that $\nu_{\ell, 0}$ is the identity map,
\begin{equation}\duallabel{eq:923yt9haIEGwiehg}
T^\ell=Z^\ell\cup \left(\nu_{L-1, L-1-\ell}(T_*^{L-1})\cup \bigcup_{j=0}^{L-1-\ell} \nu_{\ell+j, j}(T^{\ell+j}\setminus T_*^{\ell+j})\right).
\end{equation}

Putting~\dualeqref{eq:pqs} and ~\dualeqref{eq:923yt9haIEGwiehg} together and letting $D=\bigcup_{\ell=0}^{L-1} \nu_{L-1, L-1-\ell}(T_*^{L-1})$  to simplify notation, we get
\begin{equation}\duallabel{eq:49h9g4g}
\begin{split}
P\cup Q&=S^0\cup \left(\bigcup_{\ell\in [L]} T^\ell\right)\cup \Upsilon_{even}\cup \Upsilon_{odd}\\
&=S^0\cup D\cup \left( \bigcup_{\ell\in [L]} \bigcup_{\substack{j=0}}^\ell \nu_{\ell+j, j}(T^{\ell+j}\setminus T_*^{\ell+j})\right)\cup \left(\bigcup_{\ell\in [L]} Z^\ell\right)\cup \Upsilon_{even}\cup \Upsilon_{odd}\\
&=S^0\cup D\cup \left( \bigcup_{\ell\in [L]} \bigcup_{\substack{j=0}}^\ell \nu_{\ell, j}(T^\ell\setminus T_*^\ell)\right)\cup \left(\bigcup_{\ell\in [L]} Z^\ell\right)\cup \Upsilon_{even}\cup \Upsilon_{odd}\\
&=S^0\cup D\cup  \left(\bigcup_{\substack{\ell\in [L]\\ \ell\text{~even}}} \bigcup_{\substack{j=0}}^\ell \nu_{\ell, j}(T^\ell\setminus T_*^\ell)\right)\cup \left(\bigcup_{\substack{\ell\in [L]\\ \ell\text{~odd}}} \bigcup_{\substack{j=0}}^\ell \nu_{\ell, j}(T^\ell\setminus T_*^\ell)\right)\cup \left(\bigcup_{\ell\in [L]} Z^\ell\right)\cup \Upsilon_{even}\\
&\cup \Upsilon_{odd}.
\end{split}
\end{equation}

We first upper bound $|\Upsilon_{even}|$ and $|\Upsilon_{odd}|$. By Lemma~\dualref{lm:t-star-minus-tau-sp} we have that for every $\ell\in [L], \ell>0$, $|T_*^{\ell-1}\setminus \tau^\ell(S^\ell)|\leq \delta^{1/4} |T_0^\ell|=\delta^{1/4}N$. At the same time, $\tau^\ell$ is injective by Claim~\dualref{cl:pi-star-injective} and 
$$
|S^\ell|=\sum_{k\in [K/2]} |S^\ell_k|=(1\pm \e^{1/4}) N/2
$$ Lemma~\dualref{lm:size-bounds}, {\bf (2)},  and $|T^{\ell-1}_*|=(1\pm \e^{1/2}) N/2$ by Lemma~\dualref{lm:size-bounds}, {\bf (1)}, together with the fact that $T_*^\ell=T_{K/2}^\ell$ by Definition~\dualref{def:terminal-subcube}. Putting these bounds together, we get 
$$
\left|\{s\in S^\ell: \tau^\ell(s)\text{~is not defined}\}\right|=O(\e^{1/4}+\delta^{1/4}) N.
$$
Thus,
\begin{equation}\duallabel{eq:upsilon-small}
|\Upsilon_{even}\cup \Upsilon_{odd}|=O(L\cdot (\e^{1/4}+\delta^{1/4})) N=O(N),
\end{equation}
since
\begin{equation*}
\begin{split}
L(\e^{1/4}+\delta^{1/4})&\leq L \delta^{1/4}\text{~~~~~~~~(by~\dualref{p6})}\\
&\leq L\cdot \Delta^{(100/4)K^2}\text{~~~~~~~~(by~\dualref{p5})}\\
&\leq L\cdot K^{-25K^2}\text{~~~~~~~~(since $\Delta\leq 1/K$ by~\dualref{p4})}\\
&\leq K\cdot K^{-25K^2}\text{~~~~~~~~(since $L\leq K$ by~\dualref{p3})}\\
&\leq  1,
\end{split}
\end{equation*}
where the last transition uses the fact that $K$ is larger than an absolute constant. Using Corollary~\dualref{cor:rect-nu-j} we have  
\begin{equation}\duallabel{eq:dn92ugYFGYGDF}
\begin{split}
|D|&=\left|\bigcup_{\ell=0}^{L-1} \nu_{L-1, L-1-\ell}(T_*^{L-1})\right|\\
&\leq \sum_{\ell=0}^{L-1} \left|\nu_{L-1, L-1-\ell}(T_*^{L-1})\right|\\
&\leq \sum_{\ell=0}^{L-1} (\ln 2+C/K)^\ell\cdot |T_*^{L-1}|\text{~~~~~(by Corollary~\dualref{cor:rect-nu-j})}\\
&\leq |T_*^{L-1}| \cdot \sum_{\ell=0}^{\infty} (\ln 2+0.0001)^\ell\\
&=O(N)
\end{split}
\end{equation}
so since $|S^0|=N/2$ by definition, it suffices to show that the union of the third and the forth terms on the last line of~\dualeqref{eq:49h9g4g} above equals $A_P\cup A_Q\cup B_P\cup B_Q$.

To that effect we recall that by Definition~\dualref{def:nu} for every $\ell=0,\ldots, L-1$ and $j=0,\ldots, \ell$
$\nu_{\ell, j+1}(T^\ell\setminus T_*^\ell)=\tau^{\ell-j}(\downset^{\ell-j}(\nu_{\ell, j}(T^\ell\setminus T_*^\ell))).$ Thus the union of the first and the second terms on the last line of~\dualeqref{eq:49h9g4g} can be rewritten as
\begin{equation}\duallabel{eq:9239h92ghIBFISBF}
\begin{split}
&\bigcup_{\substack{\ell\in [L]\\ \ell\text{~even}}} \bigcup_{\substack{j=0}}^\ell \nu_{\ell, j}(T^\ell\setminus T_*^\ell)\\
&=\bigcup_{\substack{\ell\in [L]\\ \ell\text{~even}}} \bigcup_{\substack{j=0\\j~\text{even}}}^\ell \left(\nu_{\ell, j}(T^\ell\setminus T_*^\ell)\cup \tau^{\ell-j}(\downset^{\ell-j}(\nu_{\ell, j}(T^\ell\setminus T_*^\ell)))\right)\\
&= \left(\bigcup_{\substack{\ell\in [L]\\ \ell\text{~even}}} \bigcup_{\substack{j=0\\j~\text{even}}}^\ell \nu_{\ell, j}(T^\ell\setminus T_*^\ell)\right)\cup \left(\bigcup_{\substack{\ell\in [L]\\ \ell\text{~even}}}  \bigcup_{\substack{j=0\\j~\text{even}}}^\ell \tau^{\ell-j}(\downset^{\ell-j}(\nu_{\ell, j}(T^\ell\setminus T_*^\ell))))\right)\\
&=A_P\cup B_Q\cup \Delta_0,
\end{split}
\end{equation}
where the last transition is by~\dualeqref{eq:apm-def} and~\dualeqref{eq:bpm-def} and we let
$$
\Delta_0=\bigcup_{\substack{\ell\in [L]\\ \ell\text{~even}}} \nu_{\ell, *}(\Ext_\delta(T_*^\ell)\setminus T_*^\ell).
$$
By Lemma~\dualref{lm:rect-int-size} one has 
$$
|\Ext_\delta(T_*^\ell)\setminus T_*^\ell|\leq \sqrt{\delta} |T_*^\ell|,
$$
which implies, since $\nu_{\ell, *}$ maps every vertex to at most $K^{L+1}\leq K^K$ vertices, that 
\begin{equation}\duallabel{eq:HUHFUGEFXVV0}
|\Delta_0|\leq 2L\cdot K^K\cdot \sqrt{\delta} |T_*^\ell|\leq \delta^{1/4} N,
\end{equation}
where we used~\dualref{p5} to conclude that $2L\cdot K^K\cdot \sqrt{\delta}\leq \delta^{1/4}$.

Similarly, we get
\begin{equation}\duallabel{eq:9823ht8g8gALSND}
\begin{split}
&\bigcup_{\substack{\ell\in [L]\\ \ell\text{~odd}}} \bigcup_{\substack{j=0}}^\ell \nu_{\ell, j}(T^\ell\setminus T_*^\ell)\\
&=\left(\bigcup_{\substack{\ell\in [L]\\ \ell\text{~odd}}} \bigcup_{\substack{j=0\\j~\text{even}}}^\ell \nu_{\ell, j}(T^\ell\setminus T_*^\ell)\right)\cup \left(\bigcup_{\substack{j=0\\j~\text{odd}}}^\ell \nu_{\ell, j}(T^\ell\setminus T_*^\ell)\right)\\
&=A_Q\cup B_P\cup \Delta_1,
\end{split}
\end{equation}
where we used~\dualeqref{eq:apm-def} and~\dualeqref{eq:bpm-def}, and let
$$
\Delta_1=\bigcup_{\substack{\ell\in [L]\\ \ell\text{~odd}}} \nu_{\ell, *}(\Ext_\delta(T_*^\ell)\setminus T_*^\ell).
$$
As before, by Lemma~\dualref{lm:rect-int-size} one has 
$$
|\Ext_\delta(T_*^\ell)\setminus T_*^\ell|\leq \sqrt{\delta} |T_*^\ell|,
$$
which implies, since $\nu_{\ell, *}$ maps every vertex to at most $K^{L+1}\leq K^K$ vertices, that 
\begin{equation}\duallabel{eq:HUHFUGEFXVV1}
|\Delta_1|\leq 2L\cdot K^K\cdot \sqrt{\delta} |T_*^\ell|\leq \delta^{1/4} N,
\end{equation}
where we used~\dualref{p5} to conclude that $2L\cdot K^K\cdot \sqrt{\delta}\leq \delta^{1/4}$. Putting~\dualeqref{eq:8ttFTxjiHFHFX}, ~\dualeqref{eq:HUHFUGEFXVV0} and~\dualeqref{eq:HUHFUGEFXVV1} together with~\dualeqref{eq:dn92ugYFGYGDF} gives the result.
\end{proofof}

\renewcommand{\duallabel}[1]{\label{#1-app-d}}
\renewcommand{\dualref}[1]{\ref{#1-app-d}}
\renewcommand{\dualeqref}[1]{\eqref{#1-app-d}}
\section{Lower bound of $1-e^{-1}$ using basic gadgets}\label{app:D}

We now outline how the $1-e^{-1}+\Omega(1)$ hardness result of~\cite{Kapralov13} can be obtained using our basic gadgets above.  The bound is somewhat weaker in that it does not prove, for every $K\geq 2$, hardness of $(1-(1-1/K)^K+\Omega(1))$-approximation when the input graph is shared by $K$ parties, as the bound of~\cite{Kapralov13} does. Our construction is a slight simplification, and gets  $(1-e^{-1}+O(1/K))$-hardness when the number of parties is $K$, which still converges to $1-e^{-1}$ with $K$ getting large. One also notes that the sets $S_0, S_1,\ldots$ in~\cite{Kapralov13} have geometrically decreasing size, whereas in our case they are all of size about $|T|/K$ -- this is due to a reparameterization, which is more convenient for our main result, i.e. the $\frac1{1+\ln 2}$ lower bound. 

First, we partition $\mathcal{F}$ into disjoint subsets  of equal size, letting 
$$
\mathcal{F}=\B_0\cup \B_1\cup \ldots \cup \B_{\wt{K}},
$$ 
where $\B_i\cap \B_j=\emptyset$ if $i\neq j$, and $\wt{K}:=\lfloor (1-e^{-1})K\rfloor$. We let
\begin{equation}\duallabel{eq:j-def}
\J\in \B_0\times \ldots\times \B_{\wt{K}-1},
\end{equation}
i.e., $\j_k\in \B_k$ for $k\in [\wt{K}]$. We extend the definition of $T_k$ and $S_k$ for $k\in [\wt{K}]$ (as opposed to just $k\in [K/2+1]$). Let $T_0=T$, and for every $k\in [\wt{K}]$ let 
\begin{equation}\duallabel{eq:def-tk}
\begin{split}
T_{k+1}:=\left\{y\in T_k: \langle y, \j_k\rangle \pmod{M}\in \left[0, 1-\frac{1}{K-k}\right)\cdot M\right\}.\\
\end{split}
\end{equation}
Note that, as above, $T_0\supset T_1\supset\ldots\supset T_{\wt{K}}$ form a nested sequence, and for every $k\in [\wt{K}+1]$ one has 
\begin{equation}\duallabel{eq:def-tk-all-constraints}
\begin{split}
T_k:=\left\{y\in T_0: \langle y, \j_s\rangle \pmod{M}\in \left[0, 1-\frac{1}{K-s}\right)\cdot M\text{~for all~}s\in \{0, 1,\ldots, k-1\}\right\}.\\
\end{split}
\end{equation}
The innermost set in this sequence is again a central object of our construction:
\begin{definition}[Terminal subcube]\duallabel{def:terminal-subcube}
We refer to $T_*:=T_{\wt{K}}$ as the {\em terminal subcube}.
\end{definition}

Define
\begin{equation}\duallabel{eq:def-sk}
\begin{split}
S_k&:\delequal \left\{x\in T_k: \weight(x) \in \left[0, \frac{1}{K-k}\right)\cdot W \pmod{ W} \right\},
\end{split}
\end{equation}
The above stands for $S_k$ being a set of vertices such that $S_k\delequal \wt{T}_k$, where
\begin{equation*}
\wt{T}_k:=\left\{x\in T_k: \weight(x) \in \left[0, \frac{1}{K-k}\right)\cdot W \pmod{ W} \right\}
\end{equation*}
is the set of vertices in $T_k$ whose weight modulo $W$ belongs to a certain range. We stress here that unlike the collection of sets $T_k$, the sets $S_k$ are disjoint. We also let, for every $k\in [\wt{K}]$ and $\i\in \B_k$ 
\begin{equation}\duallabel{eq:skj}
\begin{split}
T_k^\j&=\left\{y\in T_k: \langle y, \j\rangle \pmod{M}\in \left[0, 1-\frac{1}{K-k}\right)\cdot M\right\}\\
S_k^\j&=\left\{x\in S_k: \langle x, \j\rangle \pmod{M}\in \left[0, 1-\frac{1}{K-k}\right)\cdot M\right\}.\\
\end{split}
\end{equation}

First, we note that size bounds in Lemma~\ref{lm:size-bounds-full} and Lemma~\ref{lm:large-matching-full} extend for all $k=0,1,\ldots, \wt{K}$, i.e. apply to the sets defined above (the changes to the proof amount to extending the range of $k$ appropriately). We state them here for convenience of the reader.

\begin{lemma}\duallabel{lm:size-bounds}
One has
\begin{itemize}
\item[{\bf (1)}] For every $k\in [\wt{K}+1]$ one has $|T_k|=(1\pm \sqrt{\e})\cdot |T_0|(1-k/K)$;
\item[{\bf (2)}] For every $k\in [\wt{K}]$ one has  $|S_k|=(1\pm \sqrt{\e}) \cdot |T_0|/K$;
\item[{\bf (3)}] For every $k\in [\wt{K}]$, every $\i\in \B_k$ one has $|S^\j_k|=(1\pm \sqrt{\e})(1-\frac1{K-k})|T_0|/K$.
\item[{\bf (4)}] For every $k\in [\wt{K}]$, every $\i\in \B_k$ one has $|T^\j_k|=(1\pm \sqrt{\e})(1-\frac{k+1}{K})|T_0|$.
\end{itemize}
\end{lemma}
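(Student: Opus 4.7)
The plan is to observe that this lemma is a direct extension of Lemma~\ref{lm:size-bounds-full} to the range $k\in [\wt{K}+1]$ rather than $k\in [K/2+1]$, and that the proof structure carries over essentially unchanged because all relevant preconditions remain satisfied in the extended range. The key observation is that $\wt{K}=\lfloor (1-e^{-1})K\rfloor<K$, so for every $k\in[\wt{K}+1]$ and every $s\in\{0,1,\ldots,k-1\}$ one has $K-s\leq K$, which means the divisibility conditions $(K-s)\mid (W/w)$ from~\dualref{p1} still apply, and the cardinality bound $|\J_{<k}|\leq \wt{K}\leq K\leq K^2$ required by Lemma~\ref{lm:rect-size-full} is still satisfied.

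First I would handle part {\bf (1)}. The set $T_k$ is, by~\dualeqref{eq:def-tk-all-constraints}, exactly the rectangle $\rect(\J_{<k},\c,\d)$ with $\c_{\j_s}=0$ and $\d_{\j_s}=1-\frac{1}{K-s}$. Applying Lemma~\ref{lm:rect-size-full}, {\bf (1)}, with $\gamma=\prod_{s=0}^{k-1}(1-1/(K-s))$ and evaluating the telescoping product
\[
\prod_{s=0}^{k-1}\frac{K-s-1}{K-s}=\frac{K-k}{K}=1-k/K
\]
yields the desired bound $|T_k|/m^n=(1\pm\sqrt{\e})(1-k/K)$ for all $k\in[\wt{K}+1]$. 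Part {\bf (4)} is analogous: $T_k^\j$ is the rectangle $\rect(\J_{<k}\cup\{\j\},\c,\d)$ with the additional constraint $\c_\j=0$, $\d_\j=1-\frac{1}{K-k}$, giving $\gamma=(1-k/K)(1-\frac{1}{K-k})=1-(k+1)/K$ after another telescoping simplification.

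For parts {\bf (2)} and {\bf (3)}, I invoke Lemma~\ref{lm:rect-size-full}, {\bf (2)}, with $\lambda=K-k$. Here we need to check $\lambda\mid(W/w)$, which is exactly condition~\dualref{p1}, and $\lambda\leq K$, which holds since $k\geq 0$. For $S_k$ we use $R=T_k$ and the subsampling set $R'=S_k$ by definition~\dualeqref{eq:def-sk}, obtaining
\[
|S_k|/m^n=(1\pm\sqrt{\e})\cdot\frac{1}{K-k}\cdot(1-k/K)=(1\pm\sqrt{\e})\cdot\frac{1}{K}.
\]
For $S_k^\j$ we use $R=T_k^\j$ so that $\gamma=(1-k/K)(1-\frac{1}{K-k})$, and the bound becomes
\[
|S_k^\j|/m^n=(1\pm\sqrt{\e})\cdot\frac{1}{K-k}\cdot\left(1-\frac{1}{K-k}\right)\left(1-\frac{k}{K}\right)=(1\pm\sqrt{\e})\cdot\frac{1}{K}\left(1-\frac{1}{K-k}\right).
\]

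The main (and indeed only) subtlety is verifying that nothing in the parameter setting breaks down at the larger range $k\leq \wt{K}$. Since $\wt{K}<K$, we have $K-k\geq K-\wt{K}\geq \lceil K/e\rceil\geq 2$ for $K$ larger than a constant, so the divisors $K-s$ occurring in the telescoping products and in the $\lambda$-condition of Lemma~\ref{lm:rect-size-full} are always valid positive integers at most $K$, and the relative error $\sqrt{\e}$ inherited from Lemma~\ref{lm:rect-size-full} is unaffected by the value of $k$. Thus there is no genuine obstacle: the proof is essentially a line-by-line reproduction of the proof of Lemma~\ref{lm:size-bounds-full} with the range of $k$ replaced by $[\wt{K}+1]$ (for parts {\bf (1)}) or $[\wt{K}]$ (for parts {\bf (2)}--{\bf (4)}).
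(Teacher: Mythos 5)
Your proposal is correct and takes essentially the same approach as the paper: the paper explicitly remarks that the bounds of Lemma~\ref{lm:size-bounds-full} extend verbatim to the range $k\in[\wt{K}+1]$, and your reconstruction correctly reproduces the telescoping-product computation of $\gamma$ and the applications of Lemma~\ref{lm:rect-size-full}~\textbf{(1)} and~\textbf{(2)} with $\lambda=K-k$. Your check that $\wt{K}<K$ keeps $|\J_{<k}|\leq K^2$, $\lambda\leq K$, and the divisibility $(K-s)\mid(W/w)$ all valid is exactly the observation needed to justify the extension.
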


\begin{lemma}\duallabel{lm:large-matching}
There exists a matching of a $(1-O(1/K))$ fraction of vertices in $S$ to $T\setminus T_*$.
\end{lemma}

We now define the edges of $G=(S, T, E)$ incident on $S_k$ for every $k\in [\wt{K}]$. For every $\i\in \B_k$ 
let 
\begin{equation}\duallabel{eq:c-j-def}
C_\j\subset [m]^n
\end{equation}
be a minimal $\j$-line cover as per Definition~\ref{def:line-cover-full}.
For every $y\in C$, we include a complete bipartite graph between $\text{line}_\j(y)\cap \Int_\delta(S_k^\j)$ and $\text{line}_\j(y)\cap (T_k\setminus T_k^\j)$: let
\begin{equation}\duallabel{eq:edges-ei-def}
E_k=\bigcup_{\i\in \B_k} E_{k, \j},
\end{equation}
where
\begin{equation}\duallabel{eq:edges-eij-def}
E_{k, \j}=\bigcup_{y\in C_\j} (\text{line}_\j(y)\cap \Int_\delta(S_k^\j)) \times (\text{line}_\j(y) \cap (T_k\setminus T_k^\j)).
\end{equation}
We let $E=\bigcup_{k\in [\wt{K}]} E_k$.

\begin{remark}
Note that the edge set $E_k$ is fully defined by the prefix $\J_{<k}$.
\end{remark}

\begin{remark}
We note that the edge set defined in~\dualeqref{eq:edges-eij-def} does not depend on the specific choice of a cover $C_\j$ used, i.e. any minimal $\j$-line cover produces the same edge set as per~\dualeqref{eq:edges-eij-def}.
\end{remark}

\paragraph{Input distribution $\mathcal{D}$.} For every $k\in [\wt{K}]$ sample $\j_k$, the $k$-th element of $\J$, independently and uniformly from $\B_k$, so that 
$$
\J\sim \text{UNIF}\left(\B_0\times \ldots\times \B_{\wt{K}-1}\right).
$$

For every $k\in [\wt{K}]$, $\i\in \B_k$ and $y\in S_k$ let
\begin{equation}\duallabel{eq:x-ell-def}
X_{k, \j}(y)=\text{Bernoulli}(1-1/K)
\end{equation}
denote independent Bernoulli random variables conditioned on $\sum_{y\in S_k} X_{k, \j}(y)=\lceil (1-\frac1{K})|S_k|\rceil$ for all $k$ and $\j$.  We use these variables to sample edges of the graph $G$ as follows. Define
\begin{equation*}
\wt{E}_{k, \j}=\bigcup_{y\in C_\j} \left\{u\in \text{line}_\j(y)\cap \Int_\delta(S_k^\j): X_{k, \j}(u)=1\right\} \times (\text{line}_\j(y) \cap (T_k\setminus T_k^\j)),
\end{equation*}
where $C_\j$ is a minimal $\j$-line cover, and let 
$$
\wt{E}_k=\bigcup_{\i\in \B_k}  \wt{E}_{k, \j}.
$$
Comparing this to the definition of the edge set of $G$ in~\dualeqref{eq:edges-ei-def},  one observes that we subsample edges of $G$ in a somewhat dependent way -- the set $\wt{E}_k$ contains, for every direction $\i\in \B_k$ and $y\in C_\j$, a complete bipartite graph between vertices $u$ in $\text{line}_\j(y)\cap \Int_\delta(S_k^\j)$ that were sampled by $X_{k, \j}(u)$ and $\text{line}_\j(y) \cap (T_k\setminus T_k^\j)$.  The fact that randomness is provided by the vertices $u\in S_k$ as opposed to edges themselves will not be a problem since we are interested in concentration of matching size in $G$ and do not need to reason about arbitrary edge sets -- see proof of Lemma~\dualref{lm:subsampling-matchings-gl} below. Let 
$$
\wt{G}=(S\cup S_*, T, \wt{E}\cup M_*),
$$ 
where $S_*$ is a disjoint set of nodes of size equal to the size of $T_*$, and $M_*$ is a perfect matching between $T_*$ and $S_*$.
Note that the subsampling operation used to produce $\wt{E}$ from $E$ has the effect of making it hard to store edges of $\wt{G}$ (since the algorithm intuitively must remember which edge of $G$ was included and which was not), but at the same time ensures that $\wt{G}$ contains a nearly perfect matching.

\begin{lemma}[Large matching in $\wt{G}$]\duallabel{lm:subsampling-matchings-gl}
With probability at least $1-1/N$ there exists a matching of $S\cup S_*$ to $T$ of size at least $(1-O(1/K))|T|$.
\end{lemma}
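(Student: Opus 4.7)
The plan is to apply Lemma~\dualref{lm:large-matching} to the non-subsampled basic gadget $G=(S,T,E)$ to obtain a large matching of $S$ into $T\setminus T_*$, then argue that the edge-subsampling defined by the variables $X_{k,\j}$ can only destroy an $O(1/K)$ fraction of this matching, and finally concatenate the surviving matching with the perfect matching $M_*$ between $T_*$ and $S_*$. The size bookkeeping will work out because $|S|+|T_*|\approx |T|$ by Lemma~\dualref{lm:size-bounds}.

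First I would invoke Lemma~\dualref{lm:large-matching} (extended to the range $k\in[\wt K]$) to obtain a matching $M\subseteq E$ of a $(1-O(1/K))$ fraction of $S$ to $T\setminus T_*$. Inspection of the proof of that lemma shows that every $e\in M$ has the form $(u,v)$ with $u\in S_k$ and $v\in \text{line}_{\j_k}(u)\cap(T_k\setminus T_k^{\j_k})$ for some $k\in[\wt K]$, and that $e\in \wt E$ precisely when $X_{k,\j_k}(u)=1$. Define $Z_e:=\mathbf{1}[e\in\wt E]$. Since $M$ is a matching, each vertex $u\in S$ participates in at most one such edge, so the family $\{Z_e\}_{e\in M}$ is a function of distinct entries of the arrays $X_{k,\j_k}(\cdot)$; entries within a single $(k,\j)$ are negatively associated (they are Bernoulli$(1-1/K)$ variables conditioned on a fixed sum), and entries across distinct $(k,\j)$ are independent, hence $\{Z_e\}_{e\in M}$ are negatively associated.

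Let $\wt M:=M\cap\wt E$. Then $\expect[|\wt M|]=(1-1/K)|M|$, and the Chernoff bound for negatively associated random variables yields
\[
\prob\bigl[|\wt M|<(1-2/K)|M|\bigr]\le \exp(-\Omega(|M|/K))\le \exp(-\Omega(N/K^2))\le 1/N^2,
\]
where we used that $|M|=\Omega(|S|)=\Omega(N/K)$ by Lemma~\dualref{lm:size-bounds}, {\bf (2)}, together with the fact that $K$ is constant and $N$ is sufficiently large. Hence with probability at least $1-1/N$, $\wt M$ is a matching in $\wt E$ that matches a $(1-O(1/K))$ fraction of $S$ into $T\setminus T_*$.

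To conclude, observe that $M_*$ saturates $T_*$ on the $T$-side and $S_*$ on the other, while $\wt M$ only touches vertices in $S$ and in $T\setminus T_*$; therefore $\wt M\cup M_*$ is a matching in $\wt G$ of size
\[
|\wt M|+|M_*|\ge (1-O(1/K))|S|+|T_*|.
\]
By Lemma~\dualref{lm:size-bounds}, {\bf (1)} and {\bf (2)}, one has $|S|=(1\pm\sqrt{\e})(\wt K/K)|T_0|$ and $|T_*|=(1\pm\sqrt{\e})(1-\wt K/K)|T_0|$, so $|S|+|T_*|=(1\pm\sqrt{\e})|T|$ and the total matching size is $(1-O(1/K))|T|$, as claimed. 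The only non-routine point is the negative-association argument for $\{Z_e\}_{e\in M}$; everything else is a direct application of existing lemmas and arithmetic with the constants. I do not anticipate a serious obstacle here, since the construction of the randomness on $S$-vertices was explicitly designed so that each matching edge's inclusion status depends on a distinct coordinate.
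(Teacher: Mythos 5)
Your proof is correct and takes essentially the same route as the paper: apply Lemma~\ref{lm:large-matching-app-d} to get a matching $M$, treat the indicators $Z_e = \mathbf{1}[e\in\wt{E}]$ for $e\in M$ as negatively associated, apply a Chernoff bound, and append the perfect matching $M_*$ on $T_*$. If anything you are a bit more explicit than the paper about why the $\{Z_e\}_{e\in M}$ are negatively associated (each edge's inclusion is governed by a distinct $X_{k,\j_k}(u)$ because $M$ is a matching, with independence across $k$), and you carry out the $M_*$ concatenation and the $|S|+|T_*|\approx|T|$ arithmetic which the paper defers to the companion Lemma~\ref{lm:large-matching-g-prime-app-d}.
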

\begin{proof}
For every edge  $e\in E$ define the random variable 
\begin{equation}\duallabel{eq:z-def}
Z_e=\left\{
\begin{array}{ll}
1&\text{~if~}e\in \wt{E}\\
0&\text{~o.w.}\\
\end{array}
\right.
\end{equation}
Note that for every matching $M\subseteq E$ random variables $\{Z_e\}_{e\in M}$ are negatively dependent, since a matching $M$ touches every vertex at most once.

By Lemma~\dualref{lm:large-matching}  applied to $G=(S, T, E)$ there exists a matching of a $(1-O(1/K))$ fraction of vertices in $S$ to $T\setminus T_*$ -- denote this matching by $M$.  Let 
$$
\wt{M}:=M\cap \wt{E}=\{e\in M: Z_e=1\}
$$
denote the subset of the edges of $M$ that are included in $\wt{E}$. Note that $\wt{M}$ is a matching between a subset of $S$ and a subset of $T\setminus T_*$, and we have 
$$
\expect[|\wt{M}|]=\sum_{e\in M} \prob[e\in \wt{E}]=\sum_{e\in M} \expect[Z_e]=(1-1/K) |M|
$$
by definition of $Z_e$ in~\dualeqref{eq:z-def} and the fact that every edge in $E$ is included in $\wt{E}$ with probability $1-1/K$ by~\dualeqref{eq:x-ell-def}.
Since the random variables $\{Z_e\}_{e\in M}$ are negatively dependent, we have by an application to the Chernoff bound (for negatively dependent random variables)
$$
\prob[|\wt{M}|<(1-2/K)|M|]\leq \exp(-\Omega(|M|/K)).
$$
Since $M$ matches at least a constant fraction of $S$, we get that $|M|=\Omega(N)$, and therefore 
$$
\prob[|\wt{M}|<(1-2/K)|M|]\leq \exp(-\Omega(N/K))\leq N^{-2},
$$
where $N$ is the number of vertices in our graph instance. 
\end{proof}

\paragraph{Ordering of edges of $\wt{G}$ in the stream.} The graph $\wt{G}$ is presented in the stream over $\wt{K}+1$ {\em phases} as follows. For every $k\in [\wt{K}]$, the edges in $\wt{E}_k=\wt{E}\cap E_k$ are presented in the stream (the ordering of edges within a phase is arbitrary). Finally a perfect matching between $T_*$ and a disjoint set of nodes $S_*$ is presented in the stream.
 
\begin{definition}\duallabel{def:gk}
For $k\in [\wt{K}]$ we write $\wt{G}_{<k}=(T, S_0\cup \ldots \cup S_{k-1}, \wt{E}_{<k})$, where $\wt{E}_{<k}=\bigcup_{s\in [k]} \wt{E}_s$.
\end{definition}

\begin{definition}\duallabel{def:lambda}
For every $k\in [\wt{K}]$ let $\Lambda_k=(X_k, \J_k)$. We write $\Lambda_{<k}=\left(\Lambda_{s}\right)_{0\leq s<k}$.
\end{definition}

\begin{remark}\duallabel{rm:lambda}
Note that $\wt{G}_{\leq k}$ is fully determined by $\Lambda_{<k}$ and $X_k$, and $\j_k$ is uniformly random in $\B_k$ conditioned on $\Lambda_{<k}$ and $X_k$.
\end{remark}

\subsection{Upper and lower bounds on matchings in $\wt{G}$}\duallabel{sec:malg}
We first prove
\begin{lemma}[Large matching in $\wt{G}$]\duallabel{lm:large-matching-g-prime}
With probability at least $1-1/N$ there exists a matching in $\wt{G}$ of size at least $(1-O(1/K))|T|$.
\end{lemma}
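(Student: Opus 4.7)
\begin{proofof}{Lemma~\dualref{lm:large-matching-g-prime}}
\textbf{Proof plan.} The plan is to combine the subsampled matching guaranteed by Lemma~\dualref{lm:subsampling-matchings-gl} with the ``free'' perfect matching $M_*$ that pairs the terminal subcube $T_*$ with the auxiliary set $S_*$. Since the construction carefully keeps $M_*$'s endpoints disjoint from everything the subsampled matching already uses, the two will join into a single matching of $\wt G$, and a quick accounting using Lemma~\dualref{lm:size-bounds} will show that together they cover all but an $O(1/K)$ fraction of $T$.

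First, I would invoke Lemma~\dualref{lm:subsampling-matchings-gl}, which with probability at least $1-1/N$ produces a matching $\wt M \subseteq \wt E$ saturating a $(1-O(1/K))$ fraction of $S$ on the $S$-side and living entirely inside $T \setminus T_*$ on the $T$-side (this is the content of the proof of Lemma~\dualref{lm:subsampling-matchings-gl}, which starts from a matching $M$ of $(1-O(1/K))|S|$ vertices of $S$ to $T\setminus T_*$ guaranteed by Lemma~\dualref{lm:large-matching} and then retains a $(1-O(1/K))$ fraction of its edges via a Chernoff bound for negatively associated Bernoullis). Condition on this high-probability event. Then set $M_{\wt G} := \wt M \cup M_*$ and observe the disjointness of endpoints: $\wt M$ only touches vertices of $S$ (on the $Q$-side) and $T\setminus T_*$ (on the $P$-side), whereas $M_*$ only touches $S_*$ and $T_*$, and $S \cap S_* = \emptyset$ and $T_* \cap (T\setminus T_*) = \emptyset$ by construction of $\wt G$. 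Hence $M_{\wt G}$ is a matching in $\wt G$.

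Next I would count. Using Lemma~\dualref{lm:size-bounds}\,{\bf (2)} we have $|S| = \sum_{k\in[\wt K]} |S_k| = (1\pm\sqrt{\e})\,\wt K\,|T|/K$, and using Lemma~\dualref{lm:size-bounds}\,{\bf (1)} with $k=\wt K$ we have $|T_*| = (1\pm\sqrt{\e})(1 - \wt K/K)|T|$. Therefore
\[
|M_{\wt G}| \;=\; |\wt M| + |M_*| \;\geq\; (1-O(1/K))|S| + |T_*| \;=\; (1-O(1/K))|T|,
\]
using $\wt K = \lfloor (1-e^{-1})K\rfloor$ (so both terms in the sum are $\Theta(|T|)$ and the error rates add up to $O(1/K)$, absorbing the $\sqrt{\e}$ slack via \dualref{p6} and~\dualref{p5}). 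This gives the desired bound with the same $1-1/N$ probability inherited from Lemma~\dualref{lm:subsampling-matchings-gl}.

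\textbf{Anticipated difficulties.} There is essentially no technical obstacle: the structural work has been done in Lemma~\dualref{lm:large-matching} (construction of the matching of $S$ to $T\setminus T_*$ in the deterministic graph $G$) and Lemma~\dualref{lm:subsampling-matchings-gl} (survival of most of it under the Bernoulli subsampling). The only thing to be careful about is the bookkeeping between the ``fraction of $|S|$'' guarantee and the ``fraction of $|T|$'' guarantee we want, which forces the use of Lemma~\dualref{lm:size-bounds} to convert. The role of the parameter $\wt K = \lfloor (1-e^{-1})K\rfloor$ is exactly that it makes $|S|+|T_*|=(1\pm O(1/K))|T|$; with this choice the two summands fit together perfectly and no slack is lost.
\end{proofof}
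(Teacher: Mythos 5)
Your proof is correct and follows the same route as the paper: take the Chernoff-surviving matching from Lemma~\ref{lm:subsampling-matchings-gl-app-d} of $S$ into $T\setminus T_*$, adjoin the perfect matching $M_*$ between $T_*$ and the disjoint set $S_*$, and verify that the two parts of $T$ add up using Lemma~\ref{lm:size-bounds-app-d}. The paper states this in two terse sentences; your write-up just makes the endpoint-disjointness and the $|S|+|T_*|=(1\pm\sqrt{\e})|T|$ accounting explicit, which is exactly what the paper leaves implicit.
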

\begin{proof}
By Lemma~\dualref{lm:subsampling-matchings-gl} with probability at least $1-N^{-1}$ there exists a matching of a $1-O(1/K)$ fraction of $S$ to $T\setminus T_*$ in $\wt{G}$. Since $\wt{G}$ also contains a perfect matching of $T_*$ to a disjoint set of vertices $S_*$, this gives the result.
\end{proof}

We now turn to upper bounding the performance of a small space streaming algorithm on our input distribution $\mathcal{D}$. Since the input is sampled from a distribution, we may assume by Yao's minimax principle that the streaming algorithm ALG is deterministic. Let ALG denote a deterministic streaming algorithm that uses $s$ bits of space and at the end of the stream outputs a matching $M_{ALG}$ in $\wt{G}$ such that 
\begin{equation*}
\prob_{\wt{G}\sim \mathcal{D}} \left[|M_{ALG}|\geq \left(1-e^{-1}+\eta\right)|M_{OPT}|\right]\geq 3/4
\end{equation*}
for some positive $\eta\in (0, 1)$, where $M_{OPT}$ is a maximum matching in $\wt{G}$. Note that we are assuming that with probability at least $3/4$ both $M_{ALG}$ is a matching in $\wt{G}$ (i.e., in particular, the algorithm does not output edges that are not in $\wt{G}$) and the size of $M_{ALG}$ is large as above. At the same time by Lemma~\dualref{lm:large-matching-g-prime} one has
$$
\prob_{\wt{G}\sim \mathcal{D}} \left[|M_{OPT}|< (1-O(1/K))|T|\right]\leq N^{-1}.
$$ 
Putting the two bounds above together, we get
\begin{equation}\duallabel{eq:malg-large}
\begin{split}
\prob_{\wt{G}\sim \mathcal{D}} \left[|M_{ALG}|\geq \left(1-e^{-1}+\eta-O(1/K)\right)|T|\right]\geq 1/2.
\end{split}
\end{equation}
In what follows we show that any algorithm that achieves~\dualeqref{eq:malg-large} must essentially remember, for many edges of $G=(S, T, E)$ whether they were included in $\wt{G}$.

\paragraph{Upper bounding $|M_{ALG}|$.}  

\begin{lemma}\duallabel{lm:vertex-cover}
For every matching $M\subseteq \wt{E}$ one has 
$$
|M|\leq |M\cap ((T\setminus \Ext_\delta(T_*))\times \downset(T_*))|+(1-e^{-1})|T|+O(|T|/K).
$$
\end{lemma}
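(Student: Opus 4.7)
The plan is to exhibit an explicit vertex cover of $M$ and bound its cardinality, following the pattern of Lemma~\ref{lm:vertex-cover-simple} in the toy construction, with the key adjustment that $T_*$ is replaced by its $\delta$-exterior $\Ext_\delta(T_*)$ to accommodate the loss of exact product structure. Concretely, I would let
\[
C := \{\text{one endpoint of each edge in } M\cap ((T\setminus \Ext_\delta(T_*))\times \downset(T_*))\} \;\cup\; (\Ext_\delta(T_*)\cap T) \;\cup\; (S\setminus \downset(T_*)).
\]
The verification that $C$ is a vertex cover of $M$ is immediate by case analysis on an edge $(t,s)\in M$ with $t\in T, s\in S$: either $t\in \Ext_\delta(T_*)$, or $s\notin \downset(T_*)$, or the edge already lies in $(T\setminus\Ext_\delta(T_*))\times \downset(T_*)$ and we added one of its endpoints. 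Hence $|M|\leq |C|\leq |M\cap ((T\setminus \Ext_\delta(T_*))\times \downset(T_*))| + |\Ext_\delta(T_*)\cap T| + |S\setminus \downset(T_*)|$, so it remains only to show that the last two terms sum to $(1-e^{-1})|T|+O(|T|/K)$.

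The size computations proceed in the natural way from the tools already in the paper. First, Lemma~\ref{lm:size-bounds} gives $|T_*|=|T_{\wt K}|=(1\pm\sqrt{\e})(1-\wt K/K)|T_0|=(e^{-1}+O(1/K))|T|$, where I use $\wt K=\lfloor(1-e^{-1})K\rfloor$ so that $1-\wt K/K=e^{-1}+O(1/K)$. Since $T_*$ is a rectangle in $\Sp(\B^\ell)$, Lemma~\ref{lm:rect-int-size} yields $|\Ext_\delta(T_*)\setminus T_*|\leq \sqrt{\delta}|T_*|$, so $|\Ext_\delta(T_*)|\leq(1+\sqrt{\delta})|T_*|=(e^{-1}+O(1/K))|T|$, using~\ref{p5} to fold $\sqrt{\delta}$ into the $O(1/K)$ slack. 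Next, by Lemma~\ref{lm:size-bounds}, {\bf (2)}, $|S|=\sum_{k\in[\wt K]}|S_k|=(1\pm\sqrt{\e})\wt K\cdot|T|/K=(1-e^{-1}+O(1/K))|T|$. For the downset, Lemma~\ref{lm:rect-size}, {\bf (2)}, applied to the rectangle $T_*$ with $\lambda=K-k$ (which is a valid divisor of $W/w$ by~\ref{p1}) gives $|\downset_k(T_*)|=\frac{1}{K-k}|T_*|(1\pm\sqrt{\e})$, and summing
\[
|\downset(T_*)| = \sum_{k=0}^{\wt K-1}\frac{1}{K-k}|T_*|(1\pm\sqrt{\e})=|T_*|\Bigl(\ln\tfrac{K}{K-\wt K}\pm O(1/K)\Bigr)=(e^{-1}+O(1/K))|T|,
\]
where the middle step is the integral comparison of Claim~\ref{cl:sum-int}, adapted to the range $[0,\wt K)$, and the last step uses $\ln(K/(K-\wt K))=\ln(1/e^{-1})+O(1/K)=1+O(1/K)$.

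Combining these estimates,
\[
|\Ext_\delta(T_*)\cap T|+|S\setminus\downset(T_*)| = (e^{-1}+O(1/K))|T| + (1-e^{-1}-e^{-1}+O(1/K))|T| = (1-e^{-1})|T|+O(|T|/K),
\]
which yields the claimed bound. The only nontrivial obstacle is to check carefully that the approximations in Lemma~\ref{lm:rect-size} and Lemma~\ref{lm:rect-int-size} are applicable with the parameters in play, and in particular that the slack $\sqrt{\e}$ and $\sqrt{\delta}$ coming from those lemmas are dominated by $O(1/K)$; this follows from~\ref{p3},~\ref{p5}, and~\ref{p6}, which imply $\sqrt{\e},\sqrt{\delta}\ll 1/K$. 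No more delicate structural argument is needed here, because the $(T\setminus\Ext_\delta(T_*))\times\downset(T_*)$ term absorbs the error incurred by passing from $T_*$ to $\Ext_\delta(T_*)$, and will later be shown to be $o(|T|)$ for any small-space algorithm using a Hall-type argument analogous to Lemma~\ref{lm:special-edges}.
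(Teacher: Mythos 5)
Your proof is correct and follows essentially the same route as the paper's: you construct the identical vertex cover (one endpoint per edge of $M\cap ((T\setminus \Ext_\delta(T_*))\times \downset(T_*))$ together with $\Ext_\delta(T_*)$ and $S\setminus\downset(T_*)$), and you bound its size using the same size lemmas (Lemma~\ref{lm:size-bounds-app-d} for $|S|$ and $|T_*|$, Lemma~\ref{lm:rect-int-size-full} for $|\Ext_\delta(T_*)\setminus T_*|$, and Lemma~\ref{lm:rect-size-full}~\textbf{(2)} plus the harmonic sum for $|\downset(T_*)|$), folding the $\sqrt{\e}$, $\sqrt{\delta}$ slack into the $O(1/K)$ term via~\dualref{p3}--\dualref{p6}. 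The arithmetic matches the paper's: $|S\setminus\downset(T_*)|+|\Ext_\delta(T_*)|=(1-e^{-1})|T|+O(|T|/K)$.
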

\begin{proof}
We exhibit a vertex cover of appropriate size for $M$. Specifically, we add to the vertex cover one endpoint of every edge in
$$
M\cap ((T\setminus \Ext_\delta(T_*))\times \downset(T_*)),
$$
as well as all vertices in  $S\setminus \downset(T_*)$ and $\Ext_\delta(T_*)$. Note that this is indeed a vertex cover for $\wt{G}$. The size of the vertex cover is 
\begin{equation}\duallabel{eq:23t77Gy8uGBBYVF}
\begin{split}
&|M\cap ((T\setminus \Ext_\delta(T_*))\times \downset(T_*))|+|S\setminus \downset(T_*)|+|\Ext_\delta(T_*)|.
\end{split}
\end{equation}
We now bound the second and third terms above. First, by Lemma~\dualref{lm:size-bounds}, {\bf (2)}, we have
$$
|S|=\sum_{k\in [\wt{K}]} |S_k|\leq (1+\sqrt{\e})\cdot \wt{K}\cdot |T|/K\leq (1+O(1/K)) (1-e^{-1})|T|,
$$
since $\wt{K}=\lfloor (1-e^{-1})K\rfloor\leq (1-e^{-1})K$ and $\sqrt{\e}=O(1/K)$ by~\ref{p3-full},~\ref{p5-full} and~\ref{p6-full}. At the same time we have 
$$
|\downset(T_*)|=\left|\bigcup_{k\in [\wt{K}]} \downset_k(T_*)\right|=\sum_{k\in [\wt{K}]} |\downset_k(T_*)|.
$$
For every $k\in [\wt{K}]$ we now apply Lemma~\ref{lm:rect-size-full}, {\bf (2)}, to lower bound $|\downset_k(T_*)|$ (noting, crucially, that $T_*\subseteq T_k$ for all $k\in [\wt{K}]$).  For that note that $T_*=\rect(\I, \c, \d)$, where $\I=\{\j_k\}_{k\in [\wt{K}]}$, 
$\c_{\j_k}=0$ and $\d_{\j_k}=1-\frac{1}{K-k}$ for $k\in [\wt{K}]$. We thus apply Lemma~\ref{lm:rect-size-full}, {\bf (2)} with $\lambda=K-k$ and 
\begin{equation*}
\begin{split}
\gamma&=\prod_{\i\in \I} (\d_\i-\c_\i)=\prod_{k=0}^{\wt{K}-1} \left(1-\frac1{K-k}\right)=\prod_{k=0}^{\wt{K}-1} \frac{K-k-1}{K-k}=\frac{K-\wt{K}}{K}= e^{-1}+O(1/K),
\end{split}
\end{equation*}
since $\wt{K}=\lfloor (1-e^{-1})K\rfloor$. We thus get, since $\sqrt{\e}=O(1/K)$ by~\ref{p3-full},~\ref{p5-full} and~\ref{p6-full}, that
$$
\left|\downset_k(T_*)\right|\geq \frac1{\lambda}\cdot (1-\sqrt{\e})\gamma |T|\geq e^{-1}(1-O(1/K))\frac1{K-k}\cdot |T|.
$$
Summing over $k\in [\wt{K}]$, we get
$$
|\downset(T_*)|\geq \sum_{k\in [\wt{K}]} |\downset_k(T_*)|\geq e^{-1}(1-O(1/K)) \left(\sum_{k\in [\wt{K}]} \frac1{K-k}\right)\cdot |T|.
$$
Since 
$$
\sum_{k\in [\wt{K}]} \frac1{K-k}\geq \sum_{k=1}^{\lfloor (1-e^{-1})K\rfloor} \frac1{K-k}\geq \int_0^{1-e^{-1}-O(1/K)} \frac1{1-x} dx=1-O(1/K),
$$
we get 
$$
|\downset(T_*)|\geq e^{-1}(1-O(1/K))\cdot |T|.
$$ Finally, we have by Lemma~\ref{lm:rect-int-size-full} that $|\Ext_\delta(T_*)\setminus T_*|\leq \sqrt{\delta} |T_*|$, and by Lemma~\ref{lm:rect-size-full}, {\bf (1)}, using the calculation for $\gamma$ above, we have $|T_*|=(1+O(1/K))e^{-1} \cdot |T|$, and therefore $|\Ext_\delta(T_*)|=(1+\sqrt{\delta}) (1+O(1/K))e^{-1} \cdot |T|=(1+O(1/K))e^{-1} \cdot |T|$ by~\ref{p3-full} and~\ref{p5-full}. Putting these bounds together, we get 
\begin{equation*}
\begin{split}
|S\setminus \downset(T_*)|+|\Ext_\delta(T_*)|&\leq (1+O(1/K)) (1-e^{-1})|T|-e^{-1}(1-O(1/K))\cdot |T|\\
&+(1+O(1/K))e^{-1} \cdot |T|\\
&\leq (1+O(1/K)) (1-e^{-1}) |T|,
\end{split}
\end{equation*}
as required. This together with~\dualeqref{eq:23t77Gy8uGBBYVF} gives the result of the lemma. \end{proof}

We now prove 
\begin{lemma}\duallabel{lm:special-edges}
For every matching $M\subseteq \wt{E}$ one has 
$$
M\cap ((T\setminus \Ext_\delta(T_*))\times \downset(T_*)) \subseteq \bigcup_{k\in [\wt{K}]} E_{k, \j_k}.
$$
\end{lemma}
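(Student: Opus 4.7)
The plan is to argue by contradiction, mirroring the proof of Lemma~\dualref{lm:key} from the toy construction: the hypothesis $u\in T\setminus \Ext_\delta(T_*)$ means $u$ is far from $T_*$ in at least one special direction $\j_s$, while $v\in \downset(T_*)$ means $v$ inherits all the defining inner-product constraints of $T_*$. Combined with the fact that $(u,v)$ lies on a single $\j$-line (for some $\j\in\B_k$), near-orthogonality of vectors in $\F$ forces $\j$ to be special, i.e., $\j=\j_k$.

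In detail, I would proceed as follows. First, pick $(u,v)\in M$ with $u\in T\setminus\Ext_\delta(T_*)$ and $v\in \downset(T_*)$. Since $v\in\downset(T_*)$, there is a unique $k\in[\wt K]$ with $v\in S_k$, and since $(u,v)\in\wt E\subseteq E_k$, there exists $\j\in\B_k$ with $(u,v)\in E_{k,\j}$ (uniqueness of the index comes from Lemma~\ref{lm:disjoint-full}). By the definition~\dualeqref{eq:edges-eij-def}, $u,v\in\text{line}_\j(y)$ for some $y\in C_\j$, so Claim~\dualref{cl:line-size}, {\bf (1)}, gives $u=v+\lambda\cdot\j$ for some integer $\lambda$ with $|\lambda|\leq 2M/w$. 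Moreover, $v\in\downset_k(T_*)$ means $v\delequal y'$ for some $y'\in T_*$, so $\langle v,\j_s\rangle\pmod M\in[0,1-\tfrac1{K-s})\cdot M$ for every $s\in[\wt K]$.

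Now suppose toward contradiction that $\j\neq\j_k$. Since the blocks $\B_0,\ldots,\B_{\wt K-1}$ are disjoint and $\j\in\B_k$, we have $\j\neq\j_s$ for every $s\in[\wt K]$, hence $\langle\j,\j_s\rangle\leq\e w$ by the near-orthogonality property of $\F$. Therefore, for every $s\in[\wt K]$,
\[
\bigl|\langle u,\j_s\rangle-\langle v,\j_s\rangle\bigr|=|\lambda|\cdot\langle\j,\j_s\rangle\leq (2M/w)\cdot\e w=2\e M.
\]
Combining this with the range for $\langle v,\j_s\rangle\pmod M$ above, we get
\[
\langle u,\j_s\rangle\pmod M\in [-2\e,\,1-\tfrac1{K-s}+2\e)\cdot M\pmod M
\]
for every $s\in[\wt K]$. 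Since $2\e<\delta$ by \ref{p6-full}, this range is contained in $[0,1-\tfrac1{K-s}+\delta)\cdot M\cup[1-\delta,1)\cdot M$, which is exactly the defining range of $\Ext_\delta(T_*)$ in direction $\j_s$ (using the $\c_{\j_s}=0$ case of Definition~\ref{def:exterior-full}). Holding across all $s\in[\wt K]$, this yields $u\in\Ext_\delta(T_*)$, contradicting the hypothesis $u\in T\setminus\Ext_\delta(T_*)$. Therefore $\j=\j_k$ and $(u,v)\in E_{k,\j_k}$, as claimed.

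The one slightly delicate point is handling the wraparound at $0$ in the definition of $\Ext_\delta$: the shift by $2\e M$ may push $\langle v,\j_s\rangle$ just below zero modulo $M$, so the interval $[-2\e,1-\tfrac1{K-s}+2\e)$ must be interpreted cyclically and shown to lie inside the two-piece exterior set from Definition~\ref{def:exterior-full}. This is purely a bookkeeping matter, resolved by invoking both pieces of the definition ($[0,1-\tfrac1{K-s}+\delta)\cdot M$ and $[1-\delta,1)\cdot M$) and using $2\e<\delta$.
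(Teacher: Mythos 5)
Your proof is correct and is essentially the contrapositive twin of the paper's argument. The paper fixes the $T$-side endpoint $x\in T_k\setminus\Ext_\delta(T_*)$, extracts a single coordinate $s\in\{k,\dots,\wt K-1\}$ where $\langle x,\j_s\rangle\pmod M$ lies in the strip $[1-\tfrac1{K-s}+\delta,\,1-\delta)\cdot M$ (using $x\in T_k$ to rule out $s<k$), and then shows via near-orthogonality that the line shift carries this to $\langle y,\j_s\rangle\pmod M\notin[0,1-\tfrac1{K-s})\cdot M$, contradicting $y\in\downset(T_*)$. You instead start from the $S$-side endpoint $v\in\downset_k(T_*)$, which satisfies the $T_*$-range constraint in every special direction, and push all of these forward along the line to conclude $u\in\Ext_\delta(T_*)$, contradicting the hypothesis on $u$. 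Both rely on the same three ingredients: $|\lambda|\le 2M/w$ from Claim~\ref{cl:line-size-full}, near-orthogonality $\langle\j,\j_s\rangle\le\e w$ for $\j\ne\j_s$, and $2\e<\delta$ from the parameter setting. Your version checks all $s\in[\wt K]$ rather than isolating one, and correspondingly has to handle the cyclic wraparound via the two-piece form of Definition~\ref{def:exterior-full}, which you do correctly; the paper's choice of a single $s$ with a $\delta$-margin on both sides avoids the wraparound discussion entirely. These are cosmetic differences in bookkeeping, not in substance.
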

\begin{proof}
Fix $k\in [\wt{K}]$. Consider $(x, y)\in E_k$, where $x\in T_k, y\in S_k$, such that $(x, y)\in E\cap ((T\setminus \Ext_\delta(T_*))\times \downset(T_*))$. Since $x\in T_k\cap (T\setminus \Ext_\delta(T_*))=T_k\setminus \Ext_\delta(T_*)$ (see Definition~\ref{def:exterior-full}), there exists $s\in \{k,\ldots, \wt{K}-1\}$ such that
\begin{equation}\duallabel{eq:0834gh3g9jdkoasjd}
\langle x, \j_s\rangle \pmod{M}\in \left[1-\frac{1}{K-s}+\delta, 1-\delta\right)\cdot M.
\end{equation}

Since $(x, y)\in E_k$, one has $y=x+\lambda \cdot \u$ for some $\u\in \B_k$ and integer $\lambda$ satisfying $|\lambda|\leq 2M/w$. Suppose towards a contradiction that $\u\neq \j_k$. In that case one has 
\begin{equation*}
\begin{split}
\left|\langle y, \j_s\rangle -\langle x, \j_s\rangle\right|&=\left|\langle x+\lambda \cdot \u, \j_s\rangle -\langle x, \j_s\rangle\right|\\
&=|\lambda|\cdot \langle \u, \j_s\rangle\\
&\leq |\lambda|\cdot \e\cdot w\\
&\leq 2\e\cdot M\\
&< \delta,\\
\end{split}
\end{equation*}
where we used the fact that $\u\neq \j_s$, since $\u\in \B_k$, $\B_k\cap \J=\{\j_k\}$ and $\u\neq \j_k$ by assumption. The last transition is by~\ref{p6-full}. We thus get by combining the above with~\dualeqref{eq:0834gh3g9jdkoasjd} that  
\begin{equation*}
\langle y, \j_s\rangle \pmod{M}\not \in \left[0, 1-\frac{1}{K-s}\right)\cdot M,
\end{equation*}
and therefore $y\not \in \downset(T_*)$. Thus, we have $\u=\j_k$, and therefore $(x, y)\in E_{k, \j_k}$, as required.
\end{proof}

\subsection{Proof of Theorem~\ref{thm:main}}\duallabel{sec:main-proof}
We now give

\begin{proofof}{Theorem~\ref{thm:main}} Now putting~\dualeqref{eq:malg-large} together with~Lemma~\dualref{lm:vertex-cover}, we get
\begin{equation*}
\begin{split}
|M\cap ((T\setminus \Ext_\delta(T_*))\times \downset(T_*))|&\geq |M_{ALG}|-\left((1-e^{-1})|T|+O(|T|/K)\right)\\
&\geq \left(1-e^{-1}+\eta-O(1/K)\right)|T|-\left((1-e^{-1})|T|+O(|T|/K)\right)\\
&\geq (\eta-O(1/K))|T|\\
&\geq (\eta/2)|T|,
\end{split}
\end{equation*}
with probability at least $1/2$, where we assumed that $K$ is larger than an absolute constant that depends on $\eta$ in the last transition. Thus,
\begin{equation}\duallabel{eq:success-prob}
\prob_{\wh{G}\sim \mathcal{D}}\left[|M_{ALG}\cap ((T\setminus \Ext_\delta(T_*))\times \downset(T_*))|\geq (\eta/2)|T|\text{~and~} M_{ALG}\subseteq \wt{E}\right]\geq 1/2.
\end{equation}
Note that the second condition above, namely $M_{ALG}\subseteq \wt{E}$ enforces the constraint that the algorithm does not output non-edges\footnote{The analysis generalizes easily to the setting where the algorithm is allowed to output a small fraction of non-edges, but this is a rather non-standard assumption, and we prefer to operate under the more standard model where $M_{ALG}$ must be a subset of $\wh{E}$ with a good probability.}. We do not add this condition explicitly in calculations below to simplify notation (one can think of $|M_{ALG}|$ as being defined as zero when $M_{ALG}$ contains non-edges). Now recall that by Lemma~\dualref{lm:special-edges} we have
\begin{equation*}
\begin{split}
M_{ALG}\cap ((T\setminus \Ext_\delta(T_*))\times \downset(T_*))&\subseteq \bigcup_{k\in [\wt{K}]} E_{k, \j_k}
\end{split}
\end{equation*}
Thus, there exists $k^*\in [\wt{K}]$ such that 
\begin{equation}\duallabel{eq:special-lk}
\begin{split}
\prob\left[|M_{ALG}\cap \wt{E}_{k^*, \j_{k^*}}|\geq \frac{\eta}{2K} |T|\right]\geq \frac1{2K}.
\end{split}
\end{equation}
Indeed, otherwise one would have
\begin{equation*}
\begin{split}
&\prob[|M_{ALG}\cap ((T\setminus \Ext_\delta(T_*))\times \downset(T_*))|\geq (\eta/2)|T|]\\
&\leq  \prob\left[\text{exists~} k\in [\wt{K}] \text{~such that~}|M_{ALG}\cap E_{k, \j_k}|\geq \frac{\eta}{2K}|T|\right]\\
&\leq \sum_{k\in [\wt{K}]} \prob\left[|M_{ALG}\cap E_{k, \j_k}|\geq \frac{\eta}{2K}|P|\right]\\
&< \sum_{k\in [\wt{K}]} \frac1{2K}\\
&\leq K\cdot \frac1{2K}\\
&= 1/2, 
\end{split}
\end{equation*}
a contradiction with~\dualeqref{eq:success-prob}.

To simplify notation, we let $k=k^*$. Recall that {\bf (a)} $\wt{G}_{\leq k}$ is fully determined by $\Lambda_{<k}$ and $X_k$ (see Definition~\dualref{def:lambda}) and {\bf (b)} conditioned on $\Lambda_{<k}$  and $X_k$ one has $\j_k\sim UNIF(\B_k)$.  For simplicity of notation we write 
$ \B=\B_k$,  $\j=\j_k$ and $X=X_k$. 

\paragraph{Lower bounding the space usage of ALG.} In what follows we show that since $M_{ALG}$ often returns many edges from $\wt{E}_k$ as per~\dualeqref{eq:special-lk}, the conditional entropy of $X_k$ given $\Pi$ and $\Lambda_{\leq k}$ is low, which gives the desired lower bound on $s$. Let $\Pi\in \{0, 1\}^s$ denote the state of ALG after it has been presented with $\wh{G}_{\leq k}$. \if 0 Sample 
$$
\J_k\sim UNIF(\B_k),
$$
then sample $\Lambda_{>k}$ conditioned on $\Lambda_{\leq k}$ thereby fixing $\wh{G}_{>k}$. \fi Then finish running ALG on $\wh{G}_{> k}$ starting with state $\Pi$. Let $M_{ALG}$ denote the matching output by ALG. We have 
\begin{equation}\duallabel{eq:93t9239hdsagasasaaaa}
\begin{split}
s=|\Pi|&\geq H(\Pi)\\
&\geq H(\Pi| \Lambda_{<k})\\
&\geq I(\Pi; X | \Lambda_{<k})\\
&\geq \sum_{\i\in \B} I(\Pi; X_\i | \Lambda_{<k})\\
&=\sum_{\i\in \B} I(\Pi; X_\i | \Lambda_{<k}, \{\j=\i\})\\
&\geq \sum_{\i\in \B} I(M_{ALG}; X_\i | \Lambda_{<k}, \{\j=\i\})\\
\end{split}
\end{equation}
The second transition uses the fact that conditioning does not increase entropy, the forth transition uses the fact that $X_\i$'s are independent conditioned on $\Lambda_{<k}$, the forth transition uses the fact that $\J$ is independent of $\Pi$ and $X_\i$ conditioned on $\Lambda_{<k}$.   The final transition is by the data processing inequality:
\begin{lemma} \emph{(Data Processing Inequality)} \label{thm:dpi}
    For any random variables $(X,Y,Z)$ such that $X \to Y \to Z$ forms a Markov chain, we have $I(X;Z) \le I(X;Y)$.
\end{lemma}

We now lower bound 
\begin{equation}\duallabel{eq:20iabibafihf0ASAS}
\begin{split}
\sum_{\i\in \B} I(M_{ALG}; X_\i | \Lambda_{<k}, \{\j=\i\})&=\sum_{\i\in \B} H(X_\i| \Lambda_{<k}, \{\j=\i\})-H(X_\i | M_{ALG}, \Lambda_{<k}, \{\j=\i\})\\
&=\sum_{\i\in \B} H(X_\i)-H(X_\i | M_{ALG}, \Lambda_{<k}, \{\j=\i\}).
\end{split}
\end{equation}

We now upper bound $H(X_\i | M_{ALG}, \Lambda_{<k}, \{\j=\i\})$ on the rhs of~\dualeqref{eq:20iabibafihf0ASAS}. Let 
\begin{equation}\duallabel{eq:8923yty9y239ruyURadF}
\begin{split}
\E:=&\left\{|M_{ALG}\cap E_{k, \j}|\geq \frac{\eta}{2K} |P|\text{~and~}M_{ALG}\subseteq \wh{E}\right\}
\end{split}
\end{equation}
and  let $Z$ denote the indicator of $\E$. Note that $\expect[Z]=\prob[\E]\geq \frac1{2K}$ by~\dualeqref{eq:special-lk}. We have
\begin{equation}\duallabel{eq:plus-z-198yugudfg}
\begin{split}
H(X_\i| M_{ALG}, \Lambda_{<k}, \{\j=\i\})&\leq H(X_\i, Z| M_{ALG}, \Lambda_{<k}, \{\j=\i\})\\
&\leq H(Z)+H(X_\i| M_{ALG}, \Lambda_{<k}, \{\j=\i\}, Z)\\
&\leq 1+H(X_\i| M_{ALG}, \Lambda_{<k}, \{\j=\i\}, Z),\\
\end{split}
\end{equation}
where we used the fact that $H(Z)\leq 1$, as $Z$ is a binary variable. At the same time, since 
$\expect[Z]=\expect_{\i\sim UNIF(\B)}\left[Z|\{\j=\i\}\right]\geq \frac1{2K}$ by~\dualeqref{eq:special-lk},
and $\J\sim UNIF(\B)$, there exists a subset $\mathcal{J}\subseteq \B$ such that $|\mathcal{J}|\geq \frac1{4K}|\B|$ and for every $\i\in \mathcal{J}$ one has 
$\expect[Z| \{\j=\i\}]\geq \frac1{4K}.$ For every $\j\in \mathcal{J}$ one has
\begin{equation}\duallabel{eq:93g9g7g89gFGYFGSA}
\begin{split}
H(X_\i| M_{ALG}, &\Lambda_{<k}, \{\j=\i\}, Z)\\
&=H(X_\i| M_{ALG}, \Lambda_{<k}, \{\j=\i\wedge Z=1\})\cdot \prob[Z=1| \{\j=\i\}]\\
&+H(X_\i| M_{ALG}, \Lambda_{<k}, \{\j=\i\wedge Z=0\})\cdot \prob[Z=0| \{\j=\i\}]\\
\end{split}
\end{equation}

We now bound both terms on the rhs in~\dualeqref{eq:93g9g7g89gFGYFGSA}. For the second term we have
\begin{equation}
\begin{split}
H(X_\i| M_{ALG}, \Lambda_{<k}, \{\j=\i\wedge Z=0\})&\leq \expect_{\Lambda_{<k}}[|S_k|] \cdot H_2(1-1/K)\\
&\leq (1+\sqrt{\e})\frac1{K}|T| \cdot H_2(1-1/K)\\
\end{split}
\end{equation}
where the first transition is because $\sum_{y\in S_k} X_\i(y)=\lceil (1-\frac1{K}) |S_k|\rceil$ by definition of $X_\i$ and the second transition is by Lemma~\dualref{lm:size-bounds}, {\bf (2)}.

\if 0 
We thus have
\begin{equation}\duallabel{eq:923tg8AHSHDLD}
\begin{split}
\sum_{y\in S_k} &H(X_\i(y)| M_{ALG}, \Lambda_{\leq k}, \{\j=\i\wedge Z=1\})\\
&\leq \sum_{y\in S_k\setminus M_{ALG}} H(X_\i(y)| M_{ALG}, \Lambda_{\leq k}, \{\j=\i\wedge Z=1\})\\
&\leq \sum_{y\in S_k\setminus M_{ALG}} H(X_\i(y))\\
& \leq |S_k\setminus M_{ALG}|\cdot H_2(1-1/K).
\end{split}
\end{equation}
\fi

For the first term on the rhs in~\dualeqref{eq:93g9g7g89gFGYFGSA}  we note that since $M_{ALG}\subseteq \wt{E}$ as we are conditioning on the event $\mathcal{E}$ (by conditioning on $\{Z=1\}$) for every $y\in S_k$ that is matched by $M_{ALG}$ one has $X_\i(y)=1$. 
By conditioning on $\{Z=1\wedge \j=\i\}$, we get by~\dualeqref{eq:8923yty9y239ruyURadF} $|M_{ALG}\cap E_{k, \j}|\geq \frac{\eta |P|}{2K}$,
and hence 
$$
\gamma:=\frac{|M_{ALG}^\j|}{|S_k|}\geq \frac{\eta |P|}{2K |S_k|}\geq \frac{\eta |T|}{4K |S_k|}\geq \eta/8,
$$
where we let $M_{ALG}^\j:=M_{ALG}\cap E_{k, \j}$ to simplify notation.
For every fixing $\lambda$ of $\Lambda_{<k}$ one has, 
$$
H(X_\i| M_{ALG}, \{\Lambda_{<k}=\lambda \wedge \j=\i\wedge Z=1\})\leq (1-\gamma) |S_k| H_2\left(1-\frac1{K(1-\gamma)}\right),
$$
since conditioned on $M_{ALG}$, $\lambda, \j=\i$ and the success event $Z=1$ there are exactly $(1-\gamma)|S_k|$ values of $y\in S_k\setminus M_{ALG}$ such that $X_\i(y)=1$, and hence the conditional entropy of $X_\i$ is bounded by
\begin{equation*}
\begin{split}
\log_2 { |S_k\setminus M_{ALG}^\j| \choose (1-\frac1{K}-\gamma) |S_k|}&=\log_2 { (1-\gamma) |S_k| \choose (1-\frac1{K}-\gamma) |S_k|}\\
&=\log_2 { (1-\gamma) |S_k| \choose (1-\frac1{K(1-\gamma)}) (1-\gamma) |S_k|}\\
&\leq (1-\gamma) |S_k| H_2\left(1-\frac1{K(1-\gamma)}\right),
\end{split}
\end{equation*}
where the last transition is by subadditivity of entropy. Recalling that $\gamma\geq \eta/8$ and $\eta>0$ is a small constant we bound the rhs above by
\begin{equation}\duallabel{eq:8g823ggfaib}
\begin{split}
(1-\gamma) |S_k| H_2\left(1-\frac1{K(1-\gamma)}\right)&\leq (1-\eta/8) |S_k| H_2\left(1-\frac1{K(1-\eta/8)}\right)\\
&\leq (1+\sqrt{\e}) \frac1{K}|T|\cdot  (1-\eta/8) H_2\left(1-\frac1{K(1-\eta/8)}\right),
\end{split}
\end{equation}
where in the second transition we also used the fact that by Lemma~\dualref{lm:size-bounds}, {\bf (2)}, we have $|S_k|\leq (1+\sqrt{\e})\frac1{K}|T|$.  At this point we also note that 
\begin{equation*}
\begin{split}
(1-\eta/8) H_2\left(1-\frac1{K(1-\eta/8)}\right)&=\frac1{K}\log_2 K+\frac1{K \ln 2}-\frac1{K} \log\frac1{1-8/\eta}+O(1/K^2)\\
&\leq H_2(1-1/K)-\frac1{K} \log\frac1{1-\eta/8}+O(1/K^2).
\end{split}
\end{equation*}
since $H_2(1-1/K)=\frac1{K}\log_2 K+\frac1{K\ln 2}+O(1/K^2)$ and $K$ is larger than a constant. Putting the above bounds together, we get, assuming that $K$ is larger than $1/\eta$ by a large constant factor,
 \begin{equation*}
\begin{split}
H(X_\i| M_{ALG}, \Lambda_{<k}, \{\j=\i\wedge Z=1\})\leq (1+\sqrt{\e})\frac1{K}|T|\cdot H_2(1-1/K)-\Omega(\eta/K) |T|.
\end{split}
\end{equation*}
for every $\i\in \mathcal{J}$, which by~\dualeqref{eq:93g9g7g89gFGYFGSA} implies for $\i\in \mathcal{J}$
\begin{equation}\duallabel{eq:mathcalj}
\begin{split}
H(X_\i| M_{ALG}, &\Lambda_{<k}, \{\j=\i\}, Z)\leq (1+\sqrt{\e})\frac1{K}|T|\cdot H_2(1-1/K)-\Omega\left(\frac{\eta}{K^2}\right) |T|\\
\end{split}
\end{equation}

Finally, for $\i\in \B\setminus \mathcal{J}$ we have the bound 
\begin{equation}\duallabel{eq:9h329ty3utjbbfsf}
\begin{split}
H(X_\i| M_{ALG}, \Lambda_{<k}, \{\j=\i\}, Z)\leq (1+\sqrt{\e})\frac1{K}|T|\cdot H_2(1-1/K),
\end{split}
\end{equation}
since the number of nonzeros in $X_\i$ is exactly $\lceil (1-1/K) |S_k\rceil$. Putting ~\dualeqref{eq:mathcalj} and~\dualeqref{eq:9h329ty3utjbbfsf} together with~\dualeqref{eq:20iabibafihf0ASAS} and using~\dualeqref{eq:plus-z-198yugudfg}, we get
\begin{equation*}
\begin{split}
H(X| \Pi, \Lambda_{<k})&\leq \sum_{\i\in \B} H(X_\i|  M_{ALG}, \Lambda_{<k}, \{\j=\i\})\\
&\leq \sum_{\i\in \B} (1+H(X_\i|  M_{ALG}, \Lambda_{<k}, \{\j=\i\}, Z))\\
&\leq \sum_{\i\in \mathcal{J}} \left(H(X_\i|\Lambda_{<k})-\Omega\left(\frac{\eta}{K^2}\right) |T|\right)+\sum_{\i\in \B\setminus \mathcal{J}} H(X_\i|\Lambda_{<k})\\
&=\sum_{\i\in \B} H(X_\i|\Lambda_{<k})- |\mathcal{J}| \cdot  \Omega\left(\frac{\eta}{K^2}\right) |T|.\\
\end{split}
\end{equation*}
On the other hand, since $|S_k|\geq (1-\sqrt{\e})\frac1{K}|T|$ for all choices of $\Lambda_{<k}$ by Lemma~\dualref{lm:size-bounds}, {\bf (2)}, we get, since the nonzeros of $X_\i$ are a uniformly random set of size $\lceil (1-1/K)|S_k|\rceil$, that
$$
H(X|\Lambda_{<k})\geq (1-\sqrt{\e})\frac1{K}|T|\cdot |\B|\cdot (1-o_N(1)) H_2(1-1/K).
$$

Substituting this into~\dualeqref{eq:20iabibafihf0ASAS}, we get
\begin{equation*}
\begin{split}
s=|\Pi|&\geq \Omega\left(\frac{\eta}{K^2}\right) |\mathcal{J}| \cdot  |T|-O(\sqrt{\e})\frac1{K}|T|\cdot |\B|\cdot H_2(1-1/K)\\
&\geq \Omega\left(\frac{\eta}{K^2}\right) |\mathcal{J}| \cdot  |T| \text{~~~~~~~(since $\e<K^{-100K^2}$ by ~\ref{p6-full},~\ref{p5-full} and \ref{p3-full})}\\
&\geq \Omega\left(\frac{\eta}{K^3}\right) |\B| \cdot  |T|\text{~~~~~~~(since $|\mathcal{J}|\geq |\B|/(4K)$)}\\
&\geq \Omega_K(|\B| \cdot  |T|).\\
\end{split}
\end{equation*}
Finally, recall that by~\ref{p0-full}
$$
N=m^n=n^{20n},
$$
and therefore
$$
|\B|\geq |\F|/K=2^{\Omega(\e^2 n)}=N^{\Omega_\e(1/\log\log N)}.
$$ 
To summarize, we get a lower bound of
$$
s=\Omega_K(|\B| \cdot  |T|)=|T|^{1+\Omega(1/\log\log |T|)},
$$
as required. 

\end{proofof}

\end{appendix}
\newcommand{\etalchar}[1]{$^{#1}$}

\end{document}